%% file: main.tex
\documentclass[11pt]{article}

\usepackage{fullpage}

\usepackage[dvipsnames]{xcolor}
\usepackage[colorinlistoftodos,prependcaption,textsize=scriptsize,disable]{todonotes}

\newcommand{\revise}[2]{
#2%
}

\definecolor{uecolor}{HTML}{00BFFF}
\newcommand{\uezatocomment}[1]{\todo[linecolor=uecolor,backgroundcolor=uecolor!25,bordercolor=uecolor]{#1}}

\newcommand{\snote}[1]{\textcolor{blue}{(Soh: #1)}}

\usepackage{graphicx} 
\usepackage{wrapfig}

\usepackage{tikz}
\usetikzlibrary{automata,positioning,calc,intersections,arrows.meta}
\usepackage{tikz-cd}

\usepackage{amsmath,amssymb}
\usepackage{ascmac}
\usepackage{comment}
\usepackage{nicematrix}

\usepackage{amsthm}

\usepackage{thmtools}
\usepackage{thm-restate}

\newtheorem{theorem}{Theorem}
\newtheorem{lemma}[theorem]{Lemma}

\newtheorem{definition}[theorem]{Definition}
\newtheorem{proposition}[theorem]{Proposition}

\newtheorem*{hypothesis*}{Hypothesis}
\newtheorem*{problem*}{Problem}
\newtheorem*{theorem*}{Theorem}
\newtheorem*{claim*}{Claim}

\PassOptionsToPackage{only,fatsemi,llparenthesis,rrparenthesis,lbag,rbag,shortleftarrow,shortrightarrow,llbracket,rrbracket,Mapsto}{stmaryrd}
\usepackage{extpfeil} 

\usepackage{forest}

\newcommand{\hpath}{\pi} 

\usepackage{mathrsfs}
\newcommand{\light}{\mathscr{L}}
\newcommand{\heavy}{\mathscr{H}}

\usepackage{algpseudocode}
\usepackage[linesnumbered,ruled,vlined]{algorithm2e}
\SetKwProg{Procedure}{Procedure}{}{}
\SetKwInput{KwRequire}{Require}

\usepackage{enumitem}
\usepackage{soul}
\usepackage[tikz]{bclogo}

\usepackage[normalem]{ulem}

\newcommand{\set}[1]{\{ #1 \}}
\newcommand{\tuple}[1]{\langle #1 \rangle}
\newcommand{\readv}[1]{\mathord{\backslash #1}}
\newcommand{\lang}[1]{\mathbb{L}\left(#1\right)}
\newcommand{\sem}[1]{\llbracket #1 \rrbracket}
\newcommand{\comp}{\odot}

\newcommand{\regex}{{\textbf Regex}}
\newcommand{\rewb}{{\textbf REWB}}
\newcommand{\redos}{{\textbf ReDoS}}

\newcommand{\seth}{{\textbf SETH}}

\newcommand{\wone}{\textbf{W[1]}}

\newcommand{\transm}[1]{\mathcal{T}_{#1}}
\newcommand{\trans}{\delta}

\newcommand{\evalc}[1]{\Delta_C(\,w#1\,)}
\newcommand{\evalcc}[1]{\Delta_C(#1)}

\newcommand{\evald}[1]{\Delta_D(\,#1\,)}

\newcommand{\btw}{{\!\>..\!\>}}

\newcommand{\stree}{\ensuremath{\mathsf{ST}}}
\newcommand{\fforest}{\ensuremath{\mathsf{FF}}}

\newcommand{\rightenumds}{\textsc{RightEnumerator}}
\newcommand{\leftenumds}{\textsc{LeftEnumerator}}
\newcommand{\auxds}{\textsc{AuxiliaryEnumerator}}

\newcommand{\que}{Q}
\newcommand{\fna}{f}
\newcommand{\fnb}{g}

\newcommand{\mini}{\mathrm{minidx}}

\newcommand{\man}{m}
\newcommand{\eval}[1]{\Delta_C(\,w#1\,)}

\newcommand{\GetA}{\ensuremath{\textsc{Get}_{\mathrm{A}}}}
\newcommand{\ShrinkA}{\ensuremath{\textsc{Shrink}_{\mathrm{A}}}}
\newcommand{\AddA}{\ensuremath{\textsc{Add}_{\mathrm{A}}}}
\newcommand{\UpdateA}{\ensuremath{\textsc{Update}_{\mathrm{A}}}}
\newcommand{\UpdateIdemA}{\ensuremath{\textsc{UpdateIdem}_{\mathrm{A}}}}
\newcommand{\InitializeA}{\ensuremath{\textsc{Initialize}_{\mathrm{A}}}}

\newcommand{\GetL}{\textsc{Get}_{\mathrm{L}}}
\newcommand{\ShrinkL}{\textsc{Shrink}_{\mathrm{L}}}
\newcommand{\AddL}{\textsc{Add}_{\mathrm{L}}}

\newcommand{\lazyD}{\mathcal{D}^{\mathrm{L}}}
\newcommand{\lazymini}{\mathrm{minidx}^{\mathrm{L}}}

\SetKwProg{DefClass}{Class}{:}{}

\SetKwProg{DefMethod}{Method}{:}{}

\SetKwProg{DefInitialize}{Initialize}{:}{}

\SetKwProg{Initialize}{Initialize}{}{}

\SetKwProg{Method}{method}{}{}

\SetKwProg{Data}{(Shared) Data}{}{}

\SetKwProg{GlobalData}{Global Data}{}{}

\newcommand{\makearray}[1]{%
  \def\N{#1}%
  \foreach \k in {1,...,#1}{%
    \node[leaf] (a\k) at (\k,0) {}; 
  }%
  \node[anchor=east,font=\footnotesize] at (0.5, 0){$w$};
}

\newcommand{\putcell}[2]{%
  \node[inner sep=0pt] at (a#1.center) {#2};
}

\newcommand{\cover}[5]{
  \path coordinate (L#4) at ({#1-0.5},{#3})
        coordinate (R#4) at ({#2+0.5},{#3});
  \draw[box] ($(L#4)+(0.1,0)$) rectangle ($(R#4)+(-0.1,0.6)$) ;
  \node (#4) at ($ (L#4)!0.5!(R#4) + (0,0.3)$) {#5};
}

\newcommand{\colcover}[6]{
  \path coordinate (L#4) at ({#1-0.5},{#3})
        coordinate (R#4) at ({#2+0.5},{#3});
  \draw[box, fill=#6] ($(L#4)+(0.1,0)$) rectangle ($(R#4)+(-0.1,0.6)$) ;
  \node (#4) at ($ (L#4)!0.5!(R#4) + (0,0.3)$) {#5};
}

\newcommand{\northhit}[3][]{%
  \draw[#1] (#2.north) -- ($(#2.north |- #3.south)$);
}

\newcommand{\northhitCover}[3][]{%
  \path[name path=up-#2-#3]  (#2.north) -- (#2.north |- L#3.north);
  \path[name path=bot-#3]    (L#3) -- (R#3);
  \path[name intersections={of=up-#2-#3 and bot-#3, by=t-#2-#3}];
  \draw[#1] (#2.north) -- (t-#2-#3);
}

\usepackage[%
 setpagesize=false,%
 bookmarks=true,%
 bookmarksnumbered=true,%
 colorlinks=false,%
 pdftitle={},%
 pdfsubject={},%
 pdfauthor={},%
 pdfkeywords={}%
]{hyperref}

\usepackage[mathlines]{lineno} 

\usepackage{etoolbox} 

\newcommand*\linenomathpatch[1]{%
  \cspreto{#1}{\linenomath}%
  \cspreto{#1*}{\linenomath}%
  \csappto{end#1}{\endlinenomath}%
  \csappto{end#1*}{\endlinenomath}%
}
\newcommand*\linenomathpatchAMS[1]{%
  \cspreto{#1}{\linenomathAMS}%
  \cspreto{#1*}{\linenomathAMS}%
  \csappto{end#1}{\endlinenomath}%
  \csappto{end#1*}{\endlinenomath}%
}

\expandafter\ifx\linenomath\linenomathWithnumbers
  \let\linenomathAMS\linenomathWithnumbers
  \patchcmd\linenomathAMS{\advance\postdisplaypenalty\linenopenalty}{}{}{}
\else
  \let\linenomathAMS\linenomathNonumbers
\fi

\linenomathpatch{equation}
\linenomathpatchAMS{gather}
\linenomathpatchAMS{multline}
\linenomathpatchAMS{align}
\linenomathpatchAMS{alignat}
\linenomathpatchAMS{flalign}

\usepackage{authblk}

\title{On the Complexity of the Matching Problem of \\Regular Expressions with Backreferences\thanks{This is the full version of the paper accepted to ICALP 2026.}}

\author[1]{Soh Kumabe\thanks{\texttt{kumabe\_soh@cyberagent.co.jp}}}
\author[1,2]{Yuya Uezato\thanks{\texttt{uezato\_yuya@cyberagent.co.jp}, \texttt{uezato@nii.ac.jp}. \par~~Yuya Uezato was supported by JST, CREST Grant Number JPMJCR21M3.}}
\affil[1]{CyberAgent, Inc., Japan.}
\affil[2]{National Institute of Informatics, Japan.}

\date{}

\begin{document}

\maketitle

\begin{abstract}
Regular Expression Denial of Service (ReDoS) is a well-known type of algorithmic complexity attack, where an adversary supplies maliciously crafted strings to a regular expression matching engine,
aiming to exhaust computational resources of systems.
Even quadratic-time behavior in matching engines has been exploited in successful attacks,
as exemplified by major outages at Stack Overflow (2016) and Cloudflare (2019).
These incidents motivate a fundamental question: Is it possible to construct matching engines that are provably efficient, running in linear or near-linear time in the length of the input string?

For classical regular expressions (REGEX), Thompson's construction yields a linear-time algorithm for fixed expressions.
However, practical engines support powerful features such as backreferences, which allow capturing a substring and reusing it later.
This feature strictly extends the expressive power of REGEX but unfortunately increases the risk of ReDoS attacks.

This paper investigates the fine-grained complexity of the string matching problem for regular expressions with backreferences (REWBs).
Specifically, we consider $r$-use $k$-REWBs,
i.e., REWBs with $k$ variables such that, in any computation,
the total number of backreference executions is at most $r$.

On the hardness side, we show that the string matching problem for $k$-REWBs cannot be solved in $O(n^{2k-\epsilon})$ time for any $\epsilon > 0$ under the Strong Exponential Time Hypothesis (SETH), where $n$ is the length of the input string.
We also prove that this problem is \textbf{W[2]}-hard when parameterized by the length of the REWB expression, strengthening the previous \textbf{W[1]}-hardness result.
Moreover, we prove that this problem for $2$-use $2$-REWBs cannot be solved in $n^{1+o(1)}$ time unless the triangle detection problem can be solved in that time.

On the algorithmic side, we present an $O(n \log^2 n)$-time algorithm for $1$-use REWBs.
In particular, we focus on the \emph{ABCBD problem}, 
which is the REWB matching problem for the form $A\,(B)_x\,C\,\backslash x\,D$
where $A$, $B$, $C$, and $D$ are fixed REGEXes.
We also show that every $1$-use REWB can be transformed into this canonical form.
Our algorithm significantly improves upon the recent $O(n^2)$-time algorithm for the ABCBD problem by Nogami and Terauchi (MFCS, 2025).
Our algorithm is highly non-trivial and employs several techniques including suffix trees, transition monoids of REGEXes, factorization forest data structures, and periodicity of strings.
\end{abstract}




\newpage

\input{onerewb/introduction.tex}

\input{onerewb/tools.tex}

\input{onerewb/hardness.tex}

\input{onerewb/sanitize.tex}

\input{onerewb/normalization.tex}

\input{onerewb/xyyz.tex}


 









\input{onerewb/uezato-overview.tex}





\input{onerewb/new-right.tex}

\input{onerewb/new-right-ds.tex}

\input{onerewb/leftenumeration.tex}

\input{onerewb/new-near-transitions.tex}

\bibliographystyle{abbrvurl}
\bibliography{onerewb/bib}

\newpage
\appendix

\input{onerewb/ultimately-periodic.tex}

\input{onerewb/new-normalizing-1useREWB}

\input{onerewb/hidden-constant.tex}


\end{document}

%% file: onerewb/introduction.tex

\newcommand{\JwithA}[1]{J^{A}_{#1}}

\section{Introduction}

\newcommand{\fixO}[2]{\ensuremath{O_{#1}(#2)}}


\paragraph{Background.}
Regular expressions are widely used in web services to validate and filter user-provided text.
However, depending on the engine implementation and the specific expressions used,
the worst-case regular expression matching time can be prohibitive.
In most web-service settings, expressions are fixed by the application,
while the input string is provided by potentially untrusted users.
In such cases,
an adversary can craft inputs that trigger the worst-case behavior, potentially exhausting computational resources (CPU and memory) and causing service outages.

Attacks that exploit such worst-case algorithmic behavior are known as
\emph{algorithmic complexity attacks}~\cite{Crosby:2003:1}.
Those specifically involving regular expressions are called \textbf{\redos}{} (Regular Expression Denial of Service)~\cite{Crosby:2003:2}.
\redos{} is a major vulnerability and is cataloged in the Common Weakness Enumeration (CWE, the de facto standard for vulnerability classification) as CWE-1333~\cite{CWE1333}.

\revise{R1-1}{We highlight two well-documented service outages caused by \redos,
together with the real-world regular-expression that triggered the
worst-case behavior in the deployed engines.}%
\footnote{%
\revise{R1-1}{
These two examples are written in the syntax of the corresponding real-world
regex engines, not in the formal syntax introduced below.
For example, \texttt{\textbackslash s} is a whitespace character class and
\texttt{\textbackslash u200c} is the Unicode escape
for U+200C.
These examples are used only to demonstrate that backtracking engines can already cause quadratic-time \redos{} without backreferences.}
}
\begin{enumerate}
\item The 2016 Stack Overflow outage, caused by the expression \verb@^[\s\u200c]+|[\s\u200c]+$@~\cite{stackstatus:2016}.
\item The 2019 Cloudflare outage, caused by an expression whose core part is \verb@.*.*=.*@~\cite{Cloudflare:2019}.
\end{enumerate}
According to their reports~\cite{stackstatus:2016,Cloudflare:2019},
these outages occurred because the employed regular expression engines relied on \emph{backtracking} algorithms.
Consequently, the above expressions exhibited quadratic-time computation on crafted inputs.\footnote{
Readers can reproduce the quadratic-time behavior of these expressions using the online demos in~\cite{playstackstatus:2016,playCloudflare:2019}.}
It is well known that backtracking-based engines can cause catastrophic \emph{exponential} backtracking~~\cite{Jeffrey:2006,Cox:2007,Goyvaerts:2021},
and it is widely believed that avoiding exponential behavior would be sufficient.
However, these incidents were surprising in showing that even quadratic-time behavior can lead to a system outage. This is stated explicitly in the post-mortem~\cite{stackstatus:2016}, which we quote verbatim below:
\begin{quote}
\itshape
So the Regex engine has to perform a “character belongs to a certain character class” check (plus some additional things) 20,000+19,999+19,998+…+3+2+1 = 199,990,000 [sic] times, and that takes a while. This is not classic catastrophic backtracking (\textbf{performance is $O(n^2)$, not exponential, in length}), but it was enough.
\end{quote}

Motivated by this background, we propose the following criterion:
\begin{quote}
\textbf{A regular expression engine is \redos-\emph{safe} if, for each fixed expression, it runs in linear or near-linear time in the length of the input string.}
\end{quote}

In this paper, we address the following fundamental questions: {\itshape
Can we construct a \redos-safe regular expression engine in the above sense?
If so, which practical extensions can such an engine still support?}

\paragraph{Complexity of \regex{} matching problems.}
%
The term ``regular expression'' is used for several different formalisms.
In this paper, following classical formal language theory,
we use the standard (Kleene) regular expressions, denoted by \regex{}.
\begin{definition}[Syntax of \regex{}]
The syntax of \regex{} over an alphabet $\Sigma$ is given by the grammar:
\[
    E ::= \emptyset \mid \epsilon \mid \sigma \mid E + E \mid E \cdot E \mid E^* \qquad \text{where $\sigma\in\Sigma$}.
\]
\end{definition}
We give the formal semantics, including the language $\lang{\cdot}$, in Section~\ref{section:tools}.
For \regex{}, we consider the following computational problem.
\begin{itembox}[l]{\regex{} Matching Problem}
\noindent{}\textbf{Fixed Object}: A \regex{} $E$. \\
\noindent{}\textbf{Input}: A string $w$. \\
\noindent{}\textbf{Task}: Deciding if $w \in \lang{E}$ where $\lang{E}$ is the language of $E$.
\end{itembox}

%
%
%
Is it possible to design a \redos-safe engine for the \regex{} matching problem?
The answer is \textbf{Yes}.
Using Thompson's construction~\cite{Thompson:1968}
and the standard efficient NFA simulation~\cite{Cox:2007}\cite[Sec~3.7]{Aho:1986},
we can solve this problem in $O(|E|\,|w|)$ time,
which is linear in $|w|$ for each fixed $E$.
%



\paragraph{From \regex{} to \rewb.}
Does the linear-time algorithm for \regex{} matching fully resolve \redos{} \emph{in practice}?
The answer is \textbf{No}.
This is because most real-world regular expression engines support extensions beyond \regex{}. 
One of the most prominent extensions is backreferences.
We use \rewb{} to denote the class of \emph{regular expressions with backreferences}.
Unlike \regex, \rewb{} allows \emph{variables} that capture substrings matched by a subexpression and refer back to that captured string later.
Backreferences are widely supported in the standard regular expression libraries of major programming languages (e.g., C\texttt{++}, Java, JavaScript, Python, Ruby, Perl, PHP, and OCaml).
The syntax of \rewb{} is given as follows.
\begin{definition}[Syntax of \rewb{}]
The syntax of \rewb{} is given by the following grammar:
\[
    E ::= \emptyset \mid \epsilon \mid \sigma \mid E + E \mid E \cdot E \mid E^*  \mid (E)_x \mid \readv{x},
\]
where $\sigma \in \Sigma$ and $x$ is a variable.
The expression $(E)_x$ captures the substring matched by $E$ and stores it in $x$, while $\readv{x}$ refers to the content stored in $x$.
\end{definition}
\noindent{}As with \regex{}, we give the formal semantics of \rewb{} in Section~\ref{section:tools}.

We note that \rewb{} is strictly more expressive than \regex.
Consider the expression $E := ((0+1)^*)_x\:\readv{x}$.
Intuitively, the subexpression $((0+1)^*)_x$ matches an arbitrary string $w \in \set{0,1}^*$ and binds it to $x$.
Then, $\readv{x}$ matches the captured substring stored in $x$.
Hence, $E$ defines the non-regular language $ \set{ w w : w \in \set{0, 1}^* }$.

\smallskip

This paper focuses on the following problem.
\begin{itembox}[l]{\rewb{} Matching Problem}
\noindent{}\textbf{Fixed Object}: A \rewb{} $E$. \\
\noindent{}\textbf{Input}: A string $w$. \\
\noindent{}\textbf{Task}: Deciding if $w \in \lang{E}$.
\end{itembox}


This raises a natural question:
Is it possible to design a \redos-safe engine for the \rewb{} matching problem?
Unfortunately, \textbf{No}.
If we view $E$ as part of the input and parameterize the problem by $c$, the number of variable occurrences in $E$,
then the \rewb{} matching problem is \wone-hard~\cite{Stephan:2012,Fernau:2016}.
In particular, under the standard assumption $\mathbf{FPT}\neq \wone$,
there is no uniform algorithm running in $f(c) \cdot |w|^{O(1)}$ for all inputs $(w, E)$.
%
However, this worst-case hardness is too pessimistic for practice.
In practice, many real-world \rewb{} expressions use backreferences in restricted ways.
To identify which \rewb{} subclasses are \redos-safe and which remain vulnerable,
we need a fine-grained analysis of the \rewb{} matching problem.

\paragraph{Classification of \rewb{}.}

We write $k$-\rewb{} to denote the set of \rewb{s} that have at most $k$ distinct variables.
\revise{R2-1}{
We say that an expression $E$ is $r$-use if, for every input string $w$ and
for all computations of $E$ on $w$, the total number of executed backreferences
(i.e., executed occurrences of $\readv{x}$) is at most $r$.
Intuitively, a computation is one possible run of $E$ on $w$;
different runs may arise from different choices at the union and Kleene-star
subexpressions of $E$, and may therefore execute different numbers of backreferences.
}
If no such finite $r$ exists, we call $E$ \emph{$\omega$-use}.
We illustrate these notions with the following examples:
\begin{itemize}
    \item $(\Sigma^*)_x \readv{x}$ is a $1$-use $1$-\rewb{} because there is only one variable $x$, which is referenced once.
    \item $(\Sigma^*)_x\: \readv{x} \: \readv{x}$ is
    a $2$-use $1$-\rewb{} because there is only one variable $x$, which is referenced twice.
    \item $(\Sigma^*)_x\: \readv{x} \: (\Sigma^*)_x\: \readv{x}$ is
    also a $2$-use $1$-\rewb{} that reuses the single variable $x$.
    \item $(\Sigma^*)_x \: (\readv{x})^*$ is
    an \emph{$\omega$-use} $1$-\rewb{}, as $x$ can be referenced an unbounded number of times.
    \item $(\Sigma^*)_x\ (\Sigma^*)_y \readv{y} \readv{x}$ is
    a $2$-use $2$-\rewb{} because it contains two variables $x$ and $y$, which are referenced a total of two times.
\end{itemize}
\revise{R2-2}{
We allow both capture groups and backreferences to appear inside capture groups.
That is, the body of a capture group $(\cdots)_x$ may contain subexpressions
of the form $\readv{y}$ or $(\cdots)_y$, where $y$ may be equal to $x$.
}
For example, $(\readv{x}\,\readv{x})_x$ (``doubling the content of $x$'') and $(0\,\readv{x}\:0)_x$ (``wrapping $x$ with $0$s'') are well-formed.

\paragraph{Known algorithmic results for \rewb{} matching.}

The current best-known algorithm for $k$-\rewb{} matching runs in $O(|E| \cdot |w|^{1+2k})$ time~\cite[\S IX.~C]{Davis:2021}\cite{Nogami:2025} based on standard dynamic programming.
Recently, Nogami and Terauchi studied the \emph{ABCBD problem}, a subclass of the $1$-\rewb{} matching problem, and gave an $O(|E|^2|w|^2)$-time algorithm~\cite[Thm.~5]{Nogami:2025}.
Although their result improved upon the standard DP-based $O(|E|\,|w|^3)$-time algorithm,
it is not \redos-safe due to its quadratic dependence on $|w|$.

\subsection{Our Contribution}

We summarize our main hardness and algorithmic results.

\subsubsection{Hardness Results for the \rewb{} Matching Problem} 


We strengthen the previously known $\wone$-hardness results~\cite{Stephan:2012,Fernau:2016} by showing  \textbf{W[2]}-hardness.

\begin{restatable}{theorem}{Hardnesswtwo}\label{thm:hardness_w2}%
When the expression $E$ is part of the input,
the \rewb{} matching problem is $\mathbf{W[2]}$-hard parameterized by the expression size $|E|$.
\end{restatable}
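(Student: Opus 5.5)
An FPT reduction from \textsc{Dominating Set}, parameterized by solution size $k$, is the natural route; it is the canonical $\mathbf{W[2]}$-complete problem. Given a graph $G=(V,\mathcal{E})$ with $V=\{1,\dots,n\}$ and an integer $k$, I will build in polynomial time a string $w$ and a \rewb{} $E$ with $|E|=O(f(k))$, independent of $n$, such that $w\in\lang{E}$ iff $G$ has a dominating set of size at most $k$. The idea is that $k$ variables $x_1,\dots,x_k$ each capture the encoding of one chosen vertex. The string $w$ then contains, for every vertex $v$, a ``gadget'' listing the encodings of the closed neighbourhood $N[v]$. The expression must check that each gadget contains at least one of the captured encodings.

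\emph{Encoding.} I will use a unary encoding with delimiters: vertex $u$ is encoded as $a^u$, and I use fresh separators such as $\#,\$,\%$. The prefix of $w$ is $\$\,(\#a^1\#a^2\#\cdots\#a^n\#)$, repeated $k$ times. The expression begins with a selection part
\[
\bigl(\Sigma'^*\#(a^*)_{x_1}\#\Sigma'^*\bigr)\$\cdots\bigl(\Sigma'^*\#(a^*)_{x_k}\#\Sigma'^*\bigr)\$,
\]
where $\Sigma'$ excludes $\$$. This forces each $x_i$ to capture some $a^{u_i}$ with $1\le u_i\le n$. Because of the delimiters, the capture must be a whole block, so it genuinely names a vertex.

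\emph{Verification.} The suffix of $w$ is $\prod_{v\in V} \%\,\#a^{u}\#\cdots\#a^{u'}\#\,\%$, listing $N[v]$ inside each $\%$-block. The expression ends with
\[
\Bigl(\%\,\Sigma''^*\,\#\bigl(\readv{x_1}+\cdots+\readv{x_k}\bigr)\#\,\Sigma''^*\,\%\Bigr)^*,
\]
where $\Sigma''$ excludes $\%$. Each $\%$-block must then contain a delimited occurrence of some $a^{u_i}$, which means $u_i\in N[v]$. The $\#$ delimiters on both sides are essential: they ensure that $\readv{x_i}$ matches exactly a whole encoding rather than a substring of a longer one.

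\emph{Size and correctness.} The expression has size $O(k)$ over a constant alphabet, and $w$ has length $O(kn^2+n\cdot\sum_v|N[v]|\cdot n)$, which is polynomial. The forward direction is immediate: choose the dominators as the captures. For the converse, the captures determine $k$ vertices. Since every block must match, every $v$ has a neighbour among them (duplicates are allowed, which only gives size at most $k$).

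The main obstacle is the parsing-rigidity argument. I must show that the starred verification part is forced to consume the $\%$-blocks one at a time, and that no $\readv{x_i}$ can straddle blocks or match partial encodings. I will handle this by making $\%$ and $\#$ excluded from the wildcard classes as appropriate, and by arguing block by block.

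One caveat is that this $E$ is $\omega$-use. That is acceptable, since the theorem concerns general \rewb{} with $E$ in the input and parameter $|E|$; the $\omega$-use nature is precisely what lets a constant-size expression check $n$ constraints. The $\mathbf{W[2]}$-hardness then follows because this is an FPT reduction.
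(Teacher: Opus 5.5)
Your proposal is correct and is essentially the paper's reduction from $k$-Dominating Set: $k$ captures select vertex codes, and a Kleene star over per-vertex adjacency blocks checks each block with a delimited union $\readv{x_1}+\cdots+\readv{x_k}$, giving $|E|=O(k)$. The differences are cosmetic: unary codes; $k$ separate copies of the vertex list, which yields ``at most $k$'' rather than the paper's distinct-vertex ``exactly $k$'' selection; and closed neighbourhoods in place of the paper's separate $\psi_{\text{itself},i}$ and $\psi_{\text{neigh},i}$ gadgets. The only fix needed is the placement of $\$$: as written, your string puts $\$$ before each copy while your expression expects it after each copy, so take the prefix to be $(\#a^1\#\cdots\#a^n\#\,\$)^k$.
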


Assuming that Triangle Detection on an $m$-edge graph does not admit an $m^{1+\delta-o(1)}$-time algorithm~\cite[Conjecture~3]{Abboud:2014:focs},
we derive a conditional lower bound even for the highly restricted class, \emph{$2$-use} 2-\rewb{}.

\begin{restatable}{theorem}{Hardnesstworewb}\label{thm:hardness_2rewb}%
Assuming the no-almost-linear-time hypothesis for Triangle Detection,
there exists a \emph{fixed} $2$-use $2$-\rewb{} expression $\Psi$ such that the matching problem for $\Psi$ cannot be solved in $|w|^{1+\delta-o(1)}$ time for some constant $\delta > 0$.
\end{restatable}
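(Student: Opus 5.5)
The plan is a direct fine-grained reduction from Triangle Detection to matching a single fixed $2$-use $2$-\rewb{} $\Psi$. The input string will encode the graph in two parts: an edge list, and the adjacency lists. The two variables will capture the endpoints of one edge $\{a,b\}$, and the two backreferences will check that some vertex $c$ has both $a$ and $b$ among its neighbours. This gives exactly three pairwise adjacencies with only two captures and two reads.

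\textbf{Encoding.} Given a graph $G=(V,E)$ with $m$ edges, delete isolated vertices and give each vertex $v$ a binary identifier $\langle v\rangle \in \set{0,1}^{L}$ of common length $L=\lceil \log_2 (2m+1)\rceil$. Over the alphabet $\set{0,1,\$,\#,:,;,|}$ the string is
\[
w \;=\; \prod_{(u,v)} \$\,\langle u\rangle\,;\,\langle v\rangle\,\$ \;\;\#\;\; \prod_{c\in V} |\,\langle c\rangle : \langle u_1\rangle ; \langle u_2\rangle ; \cdots ; \langle u_{d(c)}\rangle\,|,
\]
where the first product ranges over all ordered pairs with $\{u,v\}\in E$ and $u<v$ in the identifier order. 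Each neighbour list $u_1<\dots<u_{d(c)}$ is sorted. Then $|w| = O(m\log m)$, and $w$ is built in $O(m\log m)$ time.

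\textbf{The expression.} Let $B=(0+1)^*$ and let $\Gamma$ be the sum of all symbols except $|$ and $\#$. Put
\[
\Psi \;=\; \Sigma^*\,\$\,(B)_x\,;\,(B)_y\,\$\,\Sigma^*\,\#\,\Sigma^*\,|\,B\,:\,\Gamma^*\,\readv{x}\,;\,\Gamma^*\,\readv{y}\,|\,\Sigma^*,
\]
adjusting separators so that $\readv{x}$ can be preceded by either $:$ or $;$. Every computation executes exactly two backreferences, so $\Psi$ is $2$-use, and it clearly has two variables.

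\textbf{Correctness.*} The delimiters force $(B)_x$ and $(B)_y$ to capture exactly two full identifiers $\langle a\rangle,\langle b\rangle$ of some listed edge with $a<b$. This uses the fact that $\$$ brackets each edge and that identifiers contain only $0,1$. Because $\Gamma$ excludes $|$, both reads land inside a single adjacency block of some vertex $c$. The main care is token alignment: a read of $\langle a\rangle$ must be a whole neighbour entry and not straddle a separator or match part of $\langle c\rangle$. This holds because all identifiers have the same length $L$ and each read is preceded by $:$ or $;$ and followed by $;$ or $|$. So a match exists iff some $c$ is adjacent to both $a$ and $b$ with $\{a,b\}\in E$, i.e.\ iff $G$ has a triangle. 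Sortedness guarantees that whenever $a,b\in N(c)$, the order $\langle a\rangle$ before $\langle b\rangle$ actually occurs in the block.

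\textbf{Conclusion.} Suppose matching $\Psi$ could be done in $|w|^{1+\delta-o(1)}$ time for every $\delta>0$. Since $|w|=O(m\log m)$, Triangle Detection would then run in $m^{1+\delta-o(1)}$ time for every $\delta>0$, contradicting the hypothesis; this contradiction yields the claimed constant $\delta$. The step I expect to need the most care is the formal argument that the fixed delimiters and fixed-width identifiers rule out every spurious alignment of the captures and backreferences.
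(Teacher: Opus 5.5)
Your reduction is correct once one slip in the displayed $\Psi$ is fixed. The letter $|$ placed immediately after $\readv{y}$ forces $\langle b\rangle$ to be the \emph{last} entry of $c$'s neighbour list. So, as written, $\Psi$ can reject graphs that do contain a triangle, e.g.\ a triangle each of whose vertices also has a pendant neighbour with a larger identifier. The adjustment you mention before $\readv{x}$ is unnecessary, since the $\Gamma^*$ after $:$ may be empty or end in $;$. What is needed is a $\Gamma^*$ after $\readv{y}$, i.e.\ the tail $\ldots :\,\Gamma^*\,\readv{x}\,;\,\Gamma^*\,\readv{y}\,\Gamma^*\,|\,\Sigma^*$. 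With this change your alignment argument goes through, and it does not even need separators written around the reads. After the $:$ of a block, every maximal run of binary letters is exactly one neighbour entry of length $L$, so a binary factor of length $L$ there is a whole entry. The fixed expression is still $2$-use with two variables.

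Your route differs from the paper's. The paper uses $\widehat{G}=w_{\mathcal{E}}\,@\,w_{\mathcal{E}}$, two copies of the adjacency-list encoding and no edge list. In the first copy it captures two entries $b,c$ of the list of a single vertex $a$, i.e.\ a path $b$--$a$--$c$. In the second copy it uses $\readv{x}$ as a pointer, matching the header $\langle b\rangle$ to enter $b$'s block, and then $\psi_{\text{skip}}\,\readv{y}\,\#$ as a membership test $c\in N(b)$. So the paper captures a length-two path and closes it with one locating read and one membership read, whereas you capture an edge and make two membership reads inside the block of a common neighbour.

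The paper's version needs no orientation or sorting convention: the capture order comes from $a$'s list, and the closing check is a single read. Its alignment comes from delimiters ($\langle\cdot\rangle$ around the header, $\psi_{\text{skip}}$ consuming whole entries, the $\#$ after $\readv{y}$) rather than from fixed width. Your version needs the $u<v$ orientation together with sorted lists, because both reads occur in a fixed order within one block; listing every edge in both orientations would remove the need for sorting. Both encodings have length $O(m\log m)$, and the final contradiction with the hypothesis is the same.
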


Under the \emph{$k$-Orthogonal Vectors hypothesis} (\textbf{$k$-OV})~\cite{Williams:2019}, which is a weaker assumption than the \emph{strong exponential time hypothesis} (\textbf{\seth{}}), we have the following result.

\begin{restatable}{theorem}{HardnessBykOV}\label{thm:REWB hardness by kOV}%
Assuming the \textbf{$k$-OV} hypothesis,
for each fixed integer $k \geq 1$,
there exists a fixed $\omega$-use $k$-\rewb{} expression $\Psi_k$ such that
the matching problem for $\Psi_k$ cannot be solved in $O(|w|^{2k - \epsilon})$ time for any $\epsilon > 0$.
\end{restatable}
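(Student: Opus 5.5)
Fix $k$ and a $2k$-OV instance: sets $V_1,\dots,V_{2k}\subseteq\{0,1\}^d$, each of size $N$, with $d = c\log N$. The $2k$-OV hypothesis says that deciding whether some $v_1\in V_1,\dots,v_{2k}\in V_{2k}$ satisfy $\sum_{j}\prod_i v_i[j]=0$ requires $N^{2k-o(1)}$ time. The plan is to encode this instance into a string $w$ of length $\tilde O(N)$ (that is, $N\cdot\mathrm{poly}(d)$), so that $w\in\lang{\Psi_k}$ iff the instance is a yes-instance. An $O(|w|^{2k-\epsilon})$ matcher would then solve $2k$-OV in $N^{2k-\epsilon}\mathrm{poly}(d)$ time, a contradiction. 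The whole point is that each of the $k$ variables must ``choose'' a \emph{pair} of vectors, contributing two degrees of freedom, and $\omega$-use is what lets us check all $d$ coordinates with a fixed expression.

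\textbf{Encoding.} Write each vector as a block $\#\,\mathrm{code}(v)\,\#$, where coordinate $j$ is encoded by a short gadget. Use a separator-delimited list format so that a capture of the form $(\ldots)_x$ over a regular pattern can select a contiguous region of $w$. Variable $x_t$ ($t=1,\dots,k$) captures a substring spanning from a chosen vector $a\in V_{2t-1}$ in one list to a chosen vector $b\in V_{2t}$ in a companion list. The capture's start position picks $a$ and its end position picks $b$, so each variable ranges over $N^2$ choices, giving $N^{2k}$ joint choices overall. Since $\Psi_k$ is fixed, the number of coordinates cannot appear in it. Instead, the verification part of $\Psi_k$ is a Kleene star that iterates once per coordinate $j$. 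In each iteration it reads a fixed-width portion of a ``checking region'' of $w$, and inside the loop it uses $\readv{x_t}$ to extract the $j$-th bit of $a_t$ and of $b_t$. To make this work, I would lay out the captured material so that the tail of $x_t$ rotates appropriately. The more robust alternative is to arrange the vector codes coordinate-interleaved, so that a capture starting at $a$'s code and ending at $b$'s code, when re-read repeatedly, aligns the $j$-th bits of both vectors with the $j$-th check slot. The expression must then demand that at every coordinate at least one of the $2k$ selected bits is $0$; this is a regular condition on the tuple of bits read.

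\textbf{Steps, in order.} First I would design the per-variable gadget for $k=1$, where $\Psi_1$ uses one variable to pick $(a,b)$ and must certify $a\perp b$ with $|w|=\tilde O(N)$. This already beats the $1$-\rewb{} quadratic barrier and is the core idea. Second, I would prove both directions of correctness for that gadget: a chosen orthogonal pair yields a run, and any run forces the capture boundaries to lie on vector delimiters (via separator symbols that cannot appear inside codes). Third, I would compose $k$ gadgets with disjoint alphabets or markers and take their conjunction coordinatewise. The checking loop handles coordinate $j$ for all $k$ variables simultaneously and accepts iff some of the $2k$ bits is $0$; a union over which variable supplies the zero keeps the expression fixed. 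Finally, I would verify $|w|=O(Nd)$ and that $\Psi_k$ depends only on $k$.

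\textbf{Main obstacle.} The hard part is the $k=1$ gadget: extracting coordinate $j$ of \emph{both} endpoints of a single captured substring using repeated backreferences, while keeping $w$ near-linear. A naive capture contains all the vectors between $a$ and $b$, making its content unwieldy. I would first try a layout in which the two lists are written in opposite orders around a central marker. Then the captured string is of the form $\mathrm{code}(a)\cdots\$\cdots\mathrm{code}(b)$, and the checking loop can match $\readv{x}$ against a region that consumes one coordinate of each end per iteration. If that fails, I would shift the difficulty into the string by making the capture essentially only an \emph{index} pair. For example, $x$ could capture a unary-like offset, and repeated uses of $\readv{x}$ would jump by that offset inside a periodic table of bits, so that the $j$-th iteration lands on coordinate $j$ of the chosen vectors.
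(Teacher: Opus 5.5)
Your high-level architecture matches the paper's. You reduce from $2k$-OV, and each $x_t$ captures a $\#$-delimited window whose two ends select a vector of $A_{2t-1}$ and one of $A_{2t}$. The first list is written reversed around a unique central marker, and an $\omega$-use Kleene star checks the $d$ coordinates, with a union over which pair supplies the zero.

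\textbf{The gap.} The step you call the main obstacle is the heart of the proof, and you leave it open. If $x$ is bound once and only read afterwards, every $\readv{x}$ reproduces the same text. Because the window contains the central marker, each occurrence of $\readv{x}$ in a copy of the arena is pinned to the same offset. Neither the fixed content of $x$ nor the finite control can tell iteration $\ell$ where coordinate $\ell$ sits. Marking or rotating coordinate $\ell$ in the $\ell$-th copy does not help: the windows contain the intermediate vectors, so any such change alters the text that $\readv{x}$ must reproduce. Your unary-offset fallback has the same problem. $\readv{x}$ consumes text literally equal to $x$, so it cannot ``jump'' over data-dependent bits. A unary content also gives each variable only $O(|w|)$ possible values, not the two degrees of freedom you need. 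Finally, each $\readv{x}$ consumes $|x|$ symbols, so an iteration cannot read a fixed-width slot.

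\textbf{The missing idea.} A capture group may contain a backreference to its own variable, so the window can be \emph{re-bound} and grown in every iteration. For $k=1$, the paper's loop body is the union of $\Gamma^*\#\mathcal{B}^*(0\,\readv{x}\,\mathcal{B})_x\mathcal{B}^*\#\Gamma^*\natural$ and $\Gamma^*\#\mathcal{B}^*(\mathcal{B}\,\readv{x}\,0)_x\mathcal{B}^*\#\Gamma^*\natural$, where $\mathcal{B}=0+1$ and $\Gamma=0+1+\#$. Each iteration is applied to a fresh copy of the arena $\#\overleftarrow{v_1}\#\cdots\#\overleftarrow{v_N}\#@\#u_1\#\cdots\#u_N\#\natural$. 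It re-matches the current window at the offset forced by $@$ and rebinds $x$ to that window extended by one symbol on each side. Since $v_i$ is written reversed immediately left of the window and $u_j$ forward immediately right of it, the two new symbols in iteration $\ell$ are exactly $v_i[\ell]$ and $u_j[\ell]$. The union then demands that one of them be $0$.

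\textbf{General $k$.} The arena becomes a concatenation of $k$ such blocks, one per pair. Every variable must be extended in every iteration to keep all $k$ windows on the same coordinate; the pairs not supplying the zero are extended with $(\mathcal{B}\,\readv{x_s}\,\mathcal{B})_{x_s}$. The input is $d+1$ copies of the arena, so $|w|=O(Nd^2)$ rather than $O(Nd)$. This still suffices, since $d=(\log N)^{O(1)}$.
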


%
%
Very recently and independently of our work,
Nogami and Terauchi~\cite[Theorem~1]{Nogami:2026:arxiv}  proved the following lower bound:
assuming the $2$-\textbf{OV} hypothesis, no algorithm solves the $1$-\rewb{} matching problem in $O(|w|^{2-\epsilon}\,\text{poly}(|E|))$ time.
Our result strengthens theirs in the following two ways:
(1) our hardness holds for general $k$; and
(2) as our hardness holds for a fixed expression $\Psi_k$, it rules out any speedup gained by allowing arbitrary dependence on $|E|$.



\subsubsection{Algorithmic Results}

We have shown that the \rewb{} matching problem remains hard even for $2$-use $2$-\rewb{s} (Theorem~\ref{thm:hardness_2rewb})
and for ($\omega$-use) $1$-\rewb{s} (Theorem~\ref{thm:REWB hardness by kOV} with $k=1$)
when seeking a near-linear-time algorithm.
This naturally raises the question: What is the complexity of $1$-use $1$-\rewb{} matching?
Our main algorithmic result is a near-linear time algorithm for the following special case:

\begin{itembox}[l]{ABCBD Problem}
\noindent{}\textbf{Fixed Objects}: \regex{es} $A$, $B$, $C$, and $D$. \\
\noindent{}\textbf{Input}: A string $w$. \\
\noindent{}\textbf{Task}: Deciding if we can decompose $w$ into $w = w_A\, w_B\, w_C\, w_B\, w_D$ so that $w_A\in \lang{A}$, $w_B\in \lang{B}$, $w_C\in \lang{C}$, and $w_D\in \lang{D}$.
Equivalently, decide whether $w \in \lang{A\,(B)_x\,C\,\readv{x}\,D}$.
\end{itembox}

At first glance, this syntax may appear highly restricted.
However, the ABCBD problem captures the general case of $1$-use \rewb{} in the following sense:
\begin{restatable}{theorem}{Singleuse}\label{theorem:abcbd-singleuse}
For any 1-use \rewb{} $E$,
there exists an equivalent \rewb{} of the following form:
\[
(A_1 \, (B_1)_{x} \, C_1 \, \readv{x} \, D_1) +
(A_2 \, (B_2)_{x} \, C_2 \, \readv{x} \, D_2) + \cdots +
(A_{K} \, (B_{K})_{x}\,  C_{K} \readv{x} \, D_{K}) + E'
\]
where $A_i$, $B_i$, $C_i$, $D_i$, $E'$ are \regex{es}. 
We can construct the above representation in $O(|E|^4)$ time.
Furthermore, the following holds for their size:
\[
K = O(|E|^2),\ 
|A_i| = O(|E|^2),\ 
|B_i| = O(|E|),\ 
|C_i| = O(|E|^2),\ 
|D_i| = O(|E|),\ 
|E'| = O(|E|^2).
\]
\end{restatable}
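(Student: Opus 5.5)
The plan is to normalize $E$ structurally in three stages: first eliminate the capture groups whose content is never referenced, then pin down the single executed backreference by a linear-size decomposition of $E$, and finally split each resulting term at its last capture of $x$ before that backreference.

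\textbf{Stage 1: sanitization.} First I would rename so that only one variable matters. Since $E$ is 1-use, every computation executes at most one backreference. Any capture $(F)_y$ whose content is never subsequently read can be replaced by $F$ with its variable bookkeeping erased. Nested captures and backreferences inside a capture body also need care. Inside a body, a $\readv{y}$ counts toward the single use. A capture inside a capture only matters if it is the one read later. I would formalize this with a ``dereferencing'' transformation that tracks, along the syntax tree, whether a backreference has already been executed.

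\textbf{Stage 2: decomposing around the executed backreference.} Let $R$ be the set of occurrences of $\readv{x}$ in $E$; there are at most $|E|$ of them. I would build, by structural induction, a derivative-like decomposition of $E$ into a regex $E'$ (computations with no backreference, obtained by replacing each $\readv{\cdot}$ with $\emptyset$ and erasing captures) plus a sum over occurrences $\rho\in R$ of terms $L_\rho \cdot \readv{x} \cdot R_\rho$. Here $L_\rho$ is a 0-use expression that may still contain captures, and $R_\rho$ is a plain regex. The induction proceeds constructor by constructor:
\begin{itemize}
\item For a union, take the union of the decompositions.
\item For a concatenation $E_1E_2$, the use lies in $E_1$ or in $E_2$. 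Use in $E_2$ forces $E_1$ to be used backreference-free.
\item For a star $F^*$, 1-use-ness implies that $F$ can be iterated with a backreference at most once in total. So a term splits as $F_0^*\,(L\readv{x}R)\,F_0^*$, where $F_0$ is the regex obtained from $F$ with its backreferences removed.
\end{itemize}
This keeps each $R_\rho$ of size $O(|E|)$ and each $L_\rho$ of size $O(|E|)$.

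\textbf{Stage 3: splitting at the last capture.} In $L_\rho$, the value read is the one from the \emph{last} executed capture $(B)_x$ before $\rho$. This capture might itself sit inside a star or inside another capture body. So I would decompose $L_\rho$ again over capture occurrences $\kappa$, yielding terms $A_{\rho,\kappa}(B_\kappa)_x C_{\rho,\kappa}$. Here $C$ is a regex required to execute no further capture of $x$, obtained by stripping captures; later captures of $x$ would overwrite the value, so such computations are dropped and counted under their own $\kappa$. The term $A$ is everything before, with captures erased, and $B_\kappa$ is the body, a regex of size $O(|E|)$.

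Stars need the same trick as before: write $F^*$ as $F^*\,(\text{last iteration containing }\kappa)\,F_{\neg x}^*$. This is where $A$ and $C$ can grow to $O(|E|^2)$, because the prefix and suffix contexts of $\kappa$ and $\rho$ are each copied along a path of depth $O(|E|)$. Setting $D=R_\rho$ gives $K\le|R|\cdot\#\text{captures}=O(|E|^2)$. Computations with no capture of $x$ before the read make $\readv{x}$ match $\epsilon$, and these go into $E'$. Each term costs $O(|E|^2)$ to write down, which gives $O(|E|^4)$ total time.

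\textbf{Main obstacle.} The hardest step is the semantics of captures nested inside other captures and inside stars. The body $B_\kappa$ itself may contain captures or backreferences. By 1-use-ness, a backreference inside $B_\kappa$ that executes before $\rho$ would count toward the single use, so such backreferences must be replaced by $\emptyset$. If instead the read of $x$ happens inside the body, the captured value is the previous one, not the current capture. I would handle this by a careful case analysis on the relative positions of $\kappa$ and $\rho$ in the syntax tree, distinguishing whether they are in disjoint subtrees or $\rho$ lies inside the body of $\kappa$. I would prove correctness by induction on computations in the formal semantics.
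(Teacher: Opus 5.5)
Your proposal is correct and takes essentially the same route as the paper. You first decompose over which backreference occurrence is executed, after discarding captures whose body contains a reference and captures of other variables. You then decompose the prefix over which capture of $x$ is executed last, unfolding stars as $\mathit{elim}(F)^*\cdot(\text{iteration containing }\kappa)\cdot\mathit{nobind}(F)^*$. This gives the same $O(|E|^2)$ term count, the same $O(|E|^2)$ bound on $A_i,C_i$, and $O(|E|^4)$ time.

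One simplification: your Stage~2 star case is vacuous. Once $\emptyset$ has been eliminated, a backreference under a star could be executed in two iterations, so a 1-use expression has no backreference inside a star, and the paper simply notes this instead of splitting $F^*$.
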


The main result of this paper is as follows.
\begin{restatable}{theorem}{ABCBD}\label{thm:ABCBD}
The ABCBD problem admits an $O(|w|\log^2 |w|)$-time algorithm. The hidden constant is $2^{O((|A|+|B|+|C|+|D|)^2)}$.
\end{restatable}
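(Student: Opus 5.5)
Throughout, write $n=|w|$. The plan is to reduce the problem to searching over pairs of occurrences of equal substrings and to organize that search with the suffix tree of $w$. By the abstract's list of tools, everything else is handled through the transition monoid of the fixed \regex{es}. Let $M$ be the product of the transition monoids of $A,B,C,D$. It is finite, of size $2^{O((|A|+|B|+|C|+|D|)^2)}$, which accounts for the hidden constant.

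A decomposition is determined by positions $i<j\le k<l$ such that $w[i..j)=w[k..l)$. We need $w[0..i)\in\lang{A}$, $w[i..j)\in\lang{B}$, $w[j..k)\in\lang{C}$ and $w[l..n)\in\lang{D}$. Prefix membership in $A$ and suffix membership in $D$ can be precomputed for all positions in $O(n)$ time by NFA simulation.

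Next, build a factorization forest (Simon's theorem) over $w$ for the morphism into $M$. This structure answers ``monoid element of $w[p..q)$'' queries in $O(1)$ or $O(\log n)$ time. The remaining difficulty is to avoid enumerating the $\Theta(n^2)$ candidate pairs $(i,k)$ with a long common extension.

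The key structural step is a case split on the length $\ell=j-i$ of the copied factor against the gap $k-j$.

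\textbf{Case 1: non-overlapping-periodic case.} Fix the left endpoint $i$ of the first copy. We want some $k\ge i+\ell$ with $\mathrm{LCE}(i,k)\ge\ell$ such that the $B$-condition and the $C$-condition on $w[i+\ell..k)$ hold. In the suffix tree, the valid $k$ for a given $\ell$ are exactly the leaves below the ancestor of leaf $i$ at string depth $\ell$. I would process this with heavy-path decomposition (the macros \verb|\light|, \verb|\heavy| and \verb|\hpath| suggest this). Each leaf changes light subtrees only $O(\log n)$ times. For each heavy path, maintain a data structure keyed by monoid state, sweeping $\ell$ along the path. The data structure answers: ``is there a $k$ in this subtree with $\mathrm{suffD}(k+\ell)$ true and $C$-state of $w[i+\ell..k)$ accepting?'' The $C$-state of $w[i+\ell..k)$ splits via the factorization forest into $O(\log n)$ canonical pieces. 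The two logarithmic factors then come from the heavy-path decomposition and the canonical decomposition, giving $O(n\log^2 n)$.

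\textbf{Case 2: periodic case.} This covers the situation where occurrences are close together, namely where $k-i<\ell$ or where there are many occurrences within a short window. Then $w[i..l)$ has period $k-i$, and we use periodicity lemmas (runs). Over a run with period $p$, the monoid elements of $u^t$ become ultimately periodic in $t$ with threshold and period bounded by $|M|$; the appendix on ultimately periodic sequences presumably supplies this. So each run contributes only $O(|M|)$ essentially distinct choices per endpoint. Since the total length of runs is $O(n\log n)$ or better, this case costs $O(n\log n)$.

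The main obstacle is Case 1: making the sweep along a heavy path handle the coupling between $\ell$ (which shifts both $j=i+\ell$ and $l=k+\ell$) and the arbitrary $k$ in the subtree. I would first try to decouple the conditions. For each candidate $k$, store the set of monoid states $s$ such that $s\cdot\mathrm{state}(w[k'..k))$ is accepting for $C$, for suitable anchor points $k'$. These sets would be kept as bitmasks over $M$ in a structure supporting insertion of light-subtree leaves and ``exists'' queries. Handling the monotone change of $\ell$ lazily (an \verb|\AddL|/\verb|\ShrinkL|-style interface) is where I expect most of the technical work. The left-enumeration variant (fixing $k$ and searching over $i$) would be handled symmetrically.
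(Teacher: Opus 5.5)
Your toolkit matches the paper's, but Case~1 has a genuine gap.

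\textbf{The main gap.} A heavy-path sweep with small-to-large only ever pairs an index in the light subtree of a node $v$ with one in its heavy subtree, at copy length $\ell=|\Gamma(v)|$. Every such pair has $\mathrm{LCE}(i,k)=\ell$ exactly. A solution may instead have $\mathrm{LCE}(i,k)>\ell$, which happens exactly when the letters following the two copies coincide. Then, at the locus of $w[i..i+\ell)$, both $i$ and $k$ lie below the same child, so your sweep never examines the pair. You cannot replace $\ell$ by $\mathrm{LCE}(i,k)$, since membership in $\lang{B}$, $\lang{C}$ and $\lang{D}$ changes. Querying every $\ell$ on the root-to-leaf path of each $i$ is quadratic.

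The missing idea is the paper's reduction (Lemma~\ref{lem:two_subcases}). Let $u$ be the longest common prefix of $w_C$ and $w_D$, and absorb it into the copy: $w=w_A\,(w_Bu)\,w'_C\,(w_Bu)\,w'_D$. The modified expressions are $\lang{B_{q_C,q_D}}=\lang{B}\cdot\bigl(\lang{C^{q^0_C\to\{q_C\}}}\cap\lang{D^{q^0_D\to\{q_D\}}}\bigr)$, $C^{q_C\to F_C}$ and $D^{q_D\to F_D}$, over constantly many state pairs.
\begin{itemize}
\item If $w'_C\neq\epsilon$, the letters after the two copies differ. The copy is then $\Gamma(v)$ for the branching node $v$ of the two suffixes, and over a binary alphabet one index is heavy and the other light. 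Only here does your sweep become complete.
\item If $w'_C=\epsilon$, then $w$ contains the square $(w_Bw_C)(w_Bw_C)$. This XYYZ case is solved separately: enumerate Crochemore's $O(n\log n)$ maximal local powers and test each in $O(1)$ with ultimately periodic exponent sets.
\end{itemize}
The input sanitization (binary alphabet, plus a unique end marker appended to $w$ and to $D$) guarantees $w'_D\neq\epsilon$, so the second ``next letter'' exists. Your Case~2 does not fill this role. Under your convention $j\le k$ the condition $k-i<\ell$ never holds, and the period $k-i$ of $w[i..l)$ is trivial unless $k-i=\ell$.

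\textbf{Open steps inside the branching case.} The two steps you leave open carry the real work, and they are not symmetric.

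For $i$ light and $k$ heavy, the $D$-test at $k+\ell$ changes as $\ell$ shrinks. The paper writes $m_v=|\Gamma(v)|$ and uses that all heavy indices below $v$ share the prefix $\Gamma(v)$. Moving to the parent $v'$ turns each $\Delta_D(w[k+m_v..|w|])$ into $\Delta_D(w[k+m_{v'}..k+m_v))\odot\Delta_D(w[k+m_v..|w|])$, and the left factor does not depend on $k$. So the heavy indices are kept in at most $|\mathcal{T}_D|$ bags indexed by $D$-transition, merged small-to-large with a DSU. That merging is what supplies the second logarithm, not an $O(\log n)$-piece decomposition. Constant-height factorization forests with the idempotent \emph{maxidx} trick give $O(1)$ queries, whereas a segment tree would give $O(n\log^3 n)$.

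For $k$ light and $i$ heavy, the $C$-range $w[i+\ell..k)$ and the constraint $i+\ell\le k$ both move with $\ell$. The paper needs three ingredients here:
\begin{itemize}
\item lazy updates in the underlying data structure;
\item re-registration of an interval each time $\ell$ halves, at amortized $O(\log n)$ per insertion;
\item a separate near case $i+\ell\le k<i+2\ell$, where Fine--Wilf makes the admissible $i$'s an arithmetic progression, decided with ultimately periodic sets.
\end{itemize}

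\textbf{Hidden constant.} Once the reduction is in place, $B_{q_C,q_D}$ has an NFA with $O(|B|+|C||D|)$ states. A product-monoid argument therefore only gives $2^{O(s^4)}$, where $s=|A|+|B|+|C|+|D|$. Reaching the stated $2^{O(s^2)}$ requires avoiding the transition monoid of $B_{q_C,q_D}$ altogether.
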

This significantly improves the $O(|w|^2)$-time algorithm of Nogami and Terauchi~\cite{Nogami:2025}.
Our algorithm is highly nontrivial; for a more detailed overview, see Section~\ref{sec:short_tech_overview}. 
%
Note that our algorithm can also be viewed as a \emph{near-linear-time fixed-parameter} algorithm for the ABCBD matching problem, parameterized by the (combined) size of \regex{es} $A$, $B$, $C$, and $D$.
Combining Theorems~\ref{theorem:abcbd-singleuse}~and~\ref{thm:ABCBD}, we obtain the following.
\begin{restatable}{theorem}{1use1rewb}\label{thm:oneuseonerewb}
The $1$-use \rewb{} matching problem admits an $O(|w|\log^2 |w|)$-time algorithm. The hidden constant is $2^{O(\,|E|^4\,)}$.
\end{restatable}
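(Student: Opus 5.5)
The plan is to combine Theorem~\ref{theorem:abcbd-singleuse} with Theorem~\ref{thm:ABCBD} directly. Given a fixed $1$-use \rewb{} $E$, first apply Theorem~\ref{theorem:abcbd-singleuse} to obtain an equivalent expression
\[
(A_1 \, (B_1)_{x} \, C_1 \, \readv{x} \, D_1) + \cdots + (A_{K} \, (B_{K})_{x}\, C_{K}\, \readv{x} \, D_{K}) + E'.
\]
Here $K = O(|E|^2)$, and all components are \regex{es} of size $O(|E|^2)$. This preprocessing takes $O(|E|^4)$ time. It is independent of $w$, so it contributes only to the constant.

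Since the language of a union is the union of the languages, $w \in \lang{E}$ iff $w \in \lang{E'}$ or $w \in \lang{A_i\,(B_i)_x\,C_i\,\readv{x}\,D_i}$ for some $i$.

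\begin{itemize}
\item For the membership test $w \in \lang{E'}$, use the Thompson construction with NFA simulation. This runs in $O(|E'|\,|w|) = O(|E|^2 |w|)$ time.
\item For each $i$, run the algorithm of Theorem~\ref{thm:ABCBD} on the instance $(A_i,B_i,C_i,D_i)$. It runs in $O(|w|\log^2|w|)$ time with hidden constant $2^{O((|A_i|+|B_i|+|C_i|+|D_i|)^2)}$.
\end{itemize}

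Accept iff one of these $K+1$ tests accepts. Correctness is immediate from the equivalence in Theorem~\ref{theorem:abcbd-singleuse} and the union semantics. One small point needs checking: the variable $x$ is shared syntactically across summands. This is harmless because in the semantics each run chooses a single branch of the outer $+$, so the branches do not interact.

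The remaining work is the constant bookkeeping. Each summand has $|A_i|+|B_i|+|C_i|+|D_i| = O(|E|^2)$, so each call costs $2^{O(|E|^4)}\,|w|\log^2|w|$. Summing over $K = O(|E|^2)$ summands gives $O(|E|^2)\cdot 2^{O(|E|^4)} = 2^{O(|E|^4)}$. The \regex{} test and the $O(|E|^4)$ preprocessing are absorbed into this bound.

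The total is therefore $2^{O(|E|^4)}\cdot |w|\log^2|w|$, as claimed. I do not expect a real obstacle here. The only delicate step is making sure the hidden constant of Theorem~\ref{thm:ABCBD} is quadratic in the combined size of $A,B,C,D$. Since that size is $O(|E|^2)$, this is exactly what produces the exponent $|E|^4$.
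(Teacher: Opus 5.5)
Your proposal is correct and follows the paper's argument. The paper also obtains this theorem by combining the normal form of Theorem~\ref{theorem:abcbd-singleuse} with the algorithm of Theorem~\ref{thm:ABCBD}. It gets the constant $2^{O(|E|^4)}$ the same way you do, by substituting the $O(|E|^2)$ component sizes into the $2^{O((|A|+|B|+|C|+|D|)^2)}$ constant. Your additional accounting for the $K=O(|E|^2)$ summands, the pure regular-expression part $E'$, and the $O(|E|^4)$ preprocessing is correctly absorbed into that bound.
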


Note that the hidden constant $2^{O(\,|E|^4\,)}$ is larger than in Theorem~\ref{thm:ABCBD} because Theorem~\ref{theorem:abcbd-singleuse} may yield expressions of quadratic length.

\paragraph{Remark: Uniformity}
\revise{R2-3}{
The algorithms in Theorems~\ref{thm:ABCBD} and~\ref{thm:oneuseonerewb}
are uniform.
Namely, they take the relevant expressions
($A,B,C,D$ in Theorem~\ref{thm:ABCBD},
and $E$ in Theorem~\ref{thm:oneuseonerewb})
together with the input string $w$,
perform all preprocessing depending on these expressions, and then run the main procedures.
The fixed-object formulation only emphasizes the dependence on $|w|$;
no non-uniform advice depending on the fixed expressions is assumed.
}





\medskip
As a key subroutine for solving the ABCBD problem,
we have also developed an efficient algorithm for the following problem.

\begin{itembox}[l]{$XY^{\alpha}Z$ Problem}
\noindent{}\textbf{Fixed Objects}: An integer $\alpha\geq 2$ and \regex{es} $X$, $Y$, and $Z$. \\
\noindent{}\textbf{Input}: A string $w$. \\
\noindent{}\textbf{Task}: Deciding if we can decompose $w$ into $w = w_X\, \smash{\overbrace{w_Y\, \cdots\,  w_Y}^{\alpha \text{ times}}}\, w_Z$ so that $w_X\in \lang{X}$, $w_Y\in \lang{Y}$, and $w_Z\in \lang{Z}$.
Equivalently, decide whether $w \in \lang{X\,(Y)_x\, (\readv{x})^{\alpha-1}\,Z}$.
\end{itembox}

Our theorem for this problem is the following.
\begin{restatable}{theorem}{XYYZ}\label{thm:XYYYYYZ}
The $XY^\alpha Z$ problem admits an $O(|w|\log |w|)$-time algorithm.
The hidden constant is $2^{O(s^2)} + \alpha \cdot 2^{O(s)}$, where $s := |X|+|Y|+|Z|$.
Alternatively, at the cost of an additional $\log|w|$ factor (i.e., time $O(|w| \log^2 |w|)$),
we can achieve a polynomial dependence on $s + \alpha$.
\end{restatable}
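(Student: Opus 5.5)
Fix $X,Y,Z,\alpha$ and the input $w$ with $n:=|w|$. A decomposition is determined by a start position $i$ and a length $\ell$. The conditions are:
\begin{itemize}
\item $w[1..i]\in\lang{X}$;
\item the block $u:=w[i+1..i+\ell]\in\lang{Y}$;
\item $w[i+1..i+\alpha\ell]=u^{\alpha}$;
\item $w[i+\alpha\ell+1..n]\in\lang{Z}$.
\end{itemize}

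The key structural point is that $w[i+1..i+\alpha\ell]$ has period $\ell$ and length at least $2\ell$ (since $\alpha\ge 2$). So this factor lies inside a maximal repetition (run) of $w$ whose minimal period $p$ divides $\ell$. This is the standard fact that a factor of length at least twice a period has, by Fine--Wilf, the run's primitive period as a divisor of that period.

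I would therefore use the runs theorem: $w$ has $O(n)$ runs, with total sum of exponents $O(n)$, and they can be computed in $O(n)$ time. Each candidate $(i,\ell)$ is then charged to a unique run $R=w[a..b]$ with primitive period $p$, primitive root $v$ of length $p$, and $\ell=jp$. Writing $i+1=a+t$, the block $u$ is the rotation $\mathrm{rot}_t(v)^j$, where the relevant rotations are $t\bmod p$. The constraints become $a+t+\alpha jp-1\le b$. Prefix membership in $X$ is precomputed as a bit array $P$ by a forward DFA scan, and suffix membership in $Z$ as a bit array $S$ by a backward scan, both in $O(n)$ time.

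For membership of $\mathrm{rot}_t(v)^j$ in $\lang{Y}$, I use the transition monoid $M$ of $Y$'s DFA, which has size $2^{O(s^2)}$ or $2^{O(s\log s)}$. Let $m_t$ be the element of $\mathrm{rot}_t(v)$. Then $m_t^j$ is ultimately periodic in $j$, with index and period bounded by $|M|$. So the set of good $j$ is a union of at most $|M|$ arithmetic progressions, possibly including some explicit small values.

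The elements $m_t$ for all $t<p$ are obtained from monoid prefix products of $vv$ in $O(p)$ products. For a fixed $t$ and residue class of $j$, I must decide whether there exists $j$ in the progression with $j\le J_t:=\lfloor (b-a-t+1)/(\alpha p)\rfloor$ such that $P[a+t-1]$ holds and $S[a+t+\alpha jp]$ holds. This is a query of the form: does $S$ contain a 1 among positions $c, c+d, c+2d,\dots$ up to a bound, where $d=\alpha p\cdot q$ for the progression period $q\le|M|$.

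Answering these queries efficiently is the main obstacle. Enumerating $j$ naively costs $\sum_t J_t\approx$ (run length)$^2/(\alpha p^2)$ per run, which is not linear. I would first try to use the periodicity of the run a second time. Inside the run, the letters are $p$-periodic, so the DFA state of the backward $Z$-scan at position $x$ inside the run is determined by the state at $b+1$ and the distance $b-x$. That state evolves by iterating the monoid element of $v$ (read backwards), and is therefore ultimately periodic in steps of $p$ with preperiod and period at most $|M|$. Hence, for positions $x$ in the run with $b-x$ beyond a preperiod of $O(|M|p)$, $S[x]$ depends only on $(x\bmod p,\ \lfloor(b-x)/p\rfloor\bmod \mathrm{per})$.

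Since each query's progression has step a multiple of $p$, it visits a fixed residue mod $p$. Along it, $S$ is therefore eventually periodic with period dividing $\mathrm{per}$. So the existence query reduces to checking $O(|M|)$ terms, namely the preperiod terms plus one period, in $O(|M|)$ time. The preperiod terms near $b$ number only $O(|M|)$ per $(t,\text{progression})$ since the step is at least $p$.

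This gives $2^{O(s^2)}\cdot p$ work per run, or with the additional factor $\alpha$ in the bounds $2^{O(s)}\alpha$ when handling the $\alpha$-dependence in the index arithmetic. Summed over runs ($\sum p\le\sum|R|=O(n\log n)$ in a crude bound), this yields $O(n\log n)$ total, matching the theorem.

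For the polynomial-dependence variant, I would replace the explicit monoid by NFA-simulation with $s\times s$ Boolean matrices. The ultimate periodicity of powers would then be handled via repeated squaring and binary search over $j$, which introduces the extra $\log n$ factor.
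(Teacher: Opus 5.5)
Your charging of each candidate $(i,\ell)$ to the unique run of minimal period $p\mid\ell$ is correct, and so are the ultimate-periodicity arguments for $Y$ and for the backward $Z$-scan inside a run. The gap is the running-time accounting. You spend $\Theta(p_R)$ on every run $R$, since you compute $m_t$ for all $t<p_R$. You then bound the total by $\sum_R p_R\le\sum_R|R|=O(n\log n)$, and that bound is false.

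Take $w=(ab)^m a\,(ab)^m a$, so $n=4m+2$. Fix any odd $p\le 2m+1$ and consider the length-$2p$ factor straddling the boundary between the two copies of $(ab)^m a$. It is a square, because the length-$p$ prefix and suffix of $(ab)^m a$ coincide. Its unique occurrence of $aa$ forces minimal period $p$. The flanking letters ($b$ against $a$) block any extension, so this factor is a run. Hence $\sum_R p_R\ge (m+1)^2=\Theta(n^2)$, and your algorithm as written is quadratic on this input. This holds even though, for each of these runs, the only feasible choice is $t=0$, $j=1$: since $\alpha\ge 2$, only offsets $t\le |R|-\alpha p_R$ are ever relevant.

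The repair is to restrict to offsets $0\le t\le |R|-\alpha p_R$.
\begin{itemize}
\item Their total over all runs is at most the number of occurrences of primitively rooted squares in $w$. That number is $O(n\log n)$, because at most $O(\log n)$ such squares start at any position (the Crochemore--Rytter three-prefix-squares lemma).
\item The rotations you actually need, $t<\min(p_R,|R|-\alpha p_R+1)$, each correspond to a distinct maximal local power in Crochemore's sense. This is again $O(n\log n)$ in total, and these are exactly the objects the paper enumerates.
\item Obtain $\Delta_Y$ of each rotated root by an $O(1)$ factorization-forest range query (Lemma~\ref{lemma:constant-eval-by-ff}) instead of an $O(p_R)$ scan.
\end{itemize}
With these changes your $O(|M|^2)$ work per offset gives $O(|M|^2\,n\log n)$ overall. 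The paper goes one step further: it also folds the $X$-constraint into an ultimately periodic exponent set, so it spends $O(1)$ per maximal local power by testing $e\in\mathcal{S}_X+\alpha\cdot\mathcal{S}_Y+\mathcal{S}_Z$.

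Separately, your sketch of the polynomial-dependence variant does not work as stated. The set of good $j$ is neither an interval nor monotone, so binary search over $j$ does not apply. Moreover, the good exponent sets can have period superpolynomial in $s$ (for example $Y=(a^{p_1})^*+\cdots+(a^{p_r})^*$ with distinct primes $p_i$), so checking ``one period'' is not polynomial. What is needed is the paper's construction. Over the unary letter $\theta$, chain three automata:
\begin{itemize}
\item the $X$-automaton started from a state reached after $w_1$;
\item the $Y$-automaton with every edge subdivided into $\alpha$ steps;
\item the $Z$-automaton accepting in the states from which $w_2$ reaches $F_Z$.
\end{itemize}
Then $e\in\mathcal{S}$ iff this NFA accepts the word of length $e$. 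Boolean matrix powering decides this in $O(|N_{\mathrm{all}}|^3\log e)$ time, with the transition matrices obtained from a segment tree in $O(\mathrm{poly}(s)\log n)$.
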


\subsubsection{Open Problems}
Several problems remain to be investigated.
The first direction is to specify classes of \rewb{} that admit near-linear-time algorithms.
We have shown that $1$-use \rewb{s} admit a near-linear-time algorithm, whereas $2$-use $2$-\rewb{s} and $\omega$-use $1$-\rewb{s} are unlikely to admit such algorithms.
However, the matching complexity of $r$-use $1$-\rewb{} for any constant $r \ge 2$ remains open.

The second direction is to improve our algorithm for the $ABCBD$ problem.
The most direct question is whether we can eliminate the remaining logarithmic factor and obtain a truly linear-time algorithm: reducing $O(|w| \log^2 |w|)$ to $O(|w|)$.
Moreover, when \rewb{} $E$ is given as input, 
our algorithm has an exponential dependency on the expression length $|E|$.
Specifically, it is open whether it is possible to reduce the dependence on $|E|$ to polynomial while preserving near-linear dependence on $|w|$.
Note that, if we allow quadratic dependence on $|w|$, Nogami and Terauchi~\cite{Nogami:2025} already achieves $O(|E|^2 |w|^2)$-time.
Furthermore, as we describe in Section~\ref{sec:short_tech_overview}, our algorithm consists of many building blocks and is therefore quite complex.
Simplifying it is a meaningful task both for advancing theoretical understanding and for improving practical implementability.

The third direction is to broaden the scope.
Real-world regular expression engines support many features beyond backreferences, such as lookaround~\cite{Jeffrey:2006,Goyvaerts:lookaround},
greedy and lazy Kleene stars~\cite{Jeffrey:2006,Goyvaerts:kleenestar},
and atomic grouping~\cite{Jeffrey:2006,Goyvaerts:atomic}.
%
Therefore, investigating the complexity of matching regular expressions augmented with such features is important in practice.
We believe that these problems are also theoretically interesting, in the same way that the \rewb{} matching problem admits deep algorithmic study, as demonstrated in this paper.


\input{onerewb/relatedwork}
\subsection{Technical Overview of Theorem~\ref{thm:ABCBD}}\label{sec:short_tech_overview}

\input{onerewb/to_zu.tex} 

\subsection{Organization}

The rest of this paper is organized as follows.
%
In Section~\ref{section:tools}, we introduce the necessary preliminaries and algorithmic tools we use in this paper, while we give some proofs in Appendix~\ref{appendix:closure-of-UP} for the sake of self-containment.
%
In Section~\ref{section:hardness}, we establish our hardness results by proving Theorems~\ref{thm:hardness_w2},~\ref{thm:hardness_2rewb},~and~\ref{thm:REWB hardness by kOV}.
As discussed in Section~\ref{sec:short_tech_overview}, Sections~\ref{section:sanitize}--\ref{section:near-transitions2} concentrate on proving Theorem~\ref{thm:ABCBD}.
Theorem~\ref{thm:XYYYYYZ} is proved in Section~\ref{section:XYYZ problem} as a corollary.
In Appendix~\ref{appendix:1use1rewb-to-abcbd}, we prove Theorem~\ref{theorem:abcbd-singleuse}.
In Appendix~\ref{sec:hidden_constant}, we analyze the hidden constants of Theorems~\ref{thm:ABCBD},~\ref{thm:oneuseonerewb},~and~\ref{thm:XYYYYYZ}.

%% file: onerewb/relatedwork.tex
\subsection{Related Work}\label{section:related_work}

\subsubsection{\redos}

Regular Expression Denial of Service (\redos) has been recognized as an important instance of \emph{algorithmic complexity attacks}~\cite{Crosby:2003:1} since at least 2003~\cite{Crosby:2003:2}.
Outages at Stack Overflow (2016)~\cite{stackstatus:2016} and Cloudflare (2019)~\cite{Cloudflare:2019} are representative examples, and notably, both incidents were triggered by quadratic behavior of regular expression engines, highlighting that even \textbf{polynomial} complexity can result in severe real-world failures.
\redos{} is a common type of {\textbf DoS} attack that is officially cataloged as CWE-1333: Inefficient Regular Expression Complexity~\cite{CWE1333}.
Vulnerabilities continue to be discovered and reported in practice;
high-risk examples include \cite{CVE-2021-21240,CVE-2023-39663,CVE-2023-6159,CVE-2024-26142,CVE-2024-28865} from 2021 to 2024.
In 2025, notable entries in the CVE (Common Vulnerabilities and Exposures, which maintains a public record of disclosed security issues) include \cite{CVE-2025-25200,CVE-2025-29907,CVE-2025-5197}.

At its core, \redos{} arises from backtracking-based regex engines, which may exhibit catastrophic (possibly exponential) backtracking~\cite{Goyvaerts:2021}.
This behavior has long been known to automata theorists, and influential books~\cite{Jeffrey:2006} helped disseminate the problem more widely to practitioners.
For classical \regex{} matching, Cox~\cite{Cox:2007,Cox:2010} recommended that practitioners prefer Thompson NFA-based implementations over backtracking-based engines, as the former run in $O(|E|\,|w|)$ time.
In contrast, it is unlikely that \rewb{} matching admits such a ``silver bullet'' solution like the Thompson construction, as this problem is \textbf{NP}-hard in general~\cite{Aho:1991,BerglundM23}.
This has motivated complexity-theoretic research into the \rewb{} matching problem, including the identification of tractable classes~\cite{FreydenbergerS19,Nogami:2025,Schmid24}.

As an example of \redos{} involving expressions with backreferences, we focus on \texttt{CVE-2017-16114}~\cite{CVE-2017-16114} and \texttt{CVE-2019-25103}~\cite{CVE-2019-25103}.
Both cases are triggered by an expression that can be abstracted as $E := (a^*)_x\ b^*\,((a+b)^*\,b)\,b^*\ \readv{x}$, on which backtracking-based engines take cubic time in the worst case.
In \texttt{CVE-2019-25103}, the corresponding patch~\cite{simple-markdown-redos-patch}
rewrote $E$ to $E_1 := (a^*)_x\ ((a+b)^*\,b)\ \readv{x}$; it modified only the ``\regex{}-part'' without touching the ``backreference-part''.
Consequently,
although
the \redos{} vulnerability is mitigated,
potential vulnerabilities remain, as even the patched version exhibits quadratic time complexity; this issue seems to be an inevitable overhead of backtracking-based engines, as they try all possible prefixes that the variable $x$ captures.
In \texttt{CVE-2017-16114}, the corresponding patch~\cite{marked-redos-patch}
rewrote $E$ to $E_2 := (a^*)_x\,\bigl(b+(b(a+b)b)\bigr)^*\,\readv{x}$.
Although $E_2$ admits linear-time matching, it alters the accepting language: for example, $E$ and $E_1$ accept $a\,ab\,a$ but $E_2$ does not.
In contrast, the original expression belongs to our $ABCBD$ problem; thus, it can be evaluated in almost linear time by our algorithm without modifying the semantics.

\subsubsection{Hardness of Regular Language Matching Problems}

%
\paragraph{On \regex{}.}
While Thompson's construction~\cite{Thompson:1968} solves \regex{} matching in $O(\,|E||w|\,)$ time,
Backurs and Indyk~{\cite[Thm.3]{Backurs:2016} proved that $O(\,(|E||w|)^{1-\epsilon}\,)$ time is unachievable under \seth{}.

Recently, Bringmann et al.~introduced the NFA Acceptance Hypothesis~\cite{Bringmann:2024}:
intuitively, for a dense NFA $A$,
the membership problem $w \in \lang{A}$ cannot be solved in $O(\,(|A| |w|)^{1-\epsilon}\,)$ time for any $\epsilon > 0$.
Bille and G{\o}rtz~\cite{Bille:2024} introduced a parameter called \emph{density} $\Delta$ (the total number of active states during the simulation on the input $w$)
and proved that matching for sparse NFAs (or \regex{es}) also cannot be solved in $O(\Delta^{1- \epsilon})$ time under \seth.
These results suggest that improving upon the textbook $O(|E| |w|)$-time algorithm for \regex{} matching is  unlikely at present.


%

\paragraph{On \rewb{}.}
The \rewb{} matching problem is known to be \textbf{NP}-complete~\cite[Thm~6.2]{Aho:1991}
when the \rewb{} expression is a part of the input, even over a unary alphabet~\cite{BerglundM23}.
Moreover, when parameterized by the number of variables $k$, the $k$-\rewb{} matching problem is \wone-hard~\cite[Thm.~1]{Fernau:2016}.
%
In fact, \wone-hardness already holds for a more restricted fragment of \rewb{} called \emph{pattern languages}~\cite[Thm.~1]{Fernau:2016}\cite[Thm.~1]{Stephan:2012}.
We briefly recall pattern languages below to relate them to \rewb{} and summarize known algorithmic results.


\subsubsection{Algorithms for Regular Expression Matching Problems}

\paragraph{On \rewb{}.}
The current best-known algorithm for the $k$-\rewb{} matching problem is based on dynamic programming and runs in $O(|E| \cdot |w|^{1+2k})$ time~\cite[\S IX.\,C]{Davis:2021}.
Applying this algorithm to the ABCBD problem yields an $O(|E|\cdot |w|^3)$-time algorithm.
Nogami and Terauchi gave an $O(|E|^2 |w|^2)$-time algorithm, and thereby improved the dependence on $|w|$ from cubic to quadratic.
As noted earlier, our Theorem~\ref{thm:ABCBD} further improves the dependence on $|w|$ to near-linear.

Freydenberger and Schmid~\cite{FreydenbergerS19} considered the matching problem for the class of \emph{deterministic} \rewb{s}, which is a subclass of \rewb{s} that can deterministically match strings when reading from the left. They gave an $O(k|w|)$-time algorithm for $k$-\rewb{}.
Schmid~\cite{Schmid24} extended this result to the wider class of \emph{memory-deterministic} \rewb{s} and gave an $O(|E|^c|w|)$-time algorithm for some constant $c$.
They also refined the time complexity of the \rewb{} matching problem to $O(|E|\cdot |w|^{O(\alpha)})$ for the cases of small $\alpha$, where $\alpha\,(\leq k)$ denotes the \emph{active variable degree}, which represents the maximum number of variables that are ``active'' at the same time.

\vspace{-5pt}
\paragraph{Pattern Language and Word Equation.}
%

A \emph{pattern language} is a class of languages generated by \emph{patterns}, that is, strings that may also contain variables that can be matched to strings.
For example, the pattern $P := 01 X 10 X$ over $\Sigma = \set{ 0, 1 }$ with a variable $X$ generates the language $\set{\,01 w 10 w : w \in \Sigma^* }$.
This concept was introduced by Angluin in the late 1970s in the context of learning theory and has since been studied extensively~\cite{Angluin:1979,Angluin:1980,Mateescu:1997,Shinohara:1983}.

Pattern languages can be seen as a very restricted subclass of \rewb. Specifically, a pattern language can be expressed as a \rewb{} that:
(1) uses a capture group $(\ldots)_x$ for the first occurrence of each variable $x$,
(2) uses a backreference $\readv{x}$ for all subsequent occurrences, and
(3) only allows concatenation as the outermost operator (no union or Kleene star over captures/backreferences).
For example, the above pattern $P$ equals the \rewb{} $0 1 (\Sigma^*)_X 1 0 \readv{X}$.



The matching problem of pattern languages is NP-complete~\cite{Angluin:1979,Angluin:1980,Ehrenfeucht:1979,Jiang:1994} even for the case with alphabet size $2$, each variable occurs at most twice, and each variable can be replaced with a string of length in $\{0,1\}$ or $\{1,2,3\}$~\cite{FernauS15}.
Ibarra et al.~gave an $O\left(\frac{(|P|+|w|)|w|^{k-1}}{(k-1)!}\right)$-time algorithm for the membership problem of $k$-variable pattern languages, where $P$ is the given pattern expression~\cite{Ibarra:1995}.
In particular, it runs in linear time when $k=1$.
Fernau et al.~\cite{Fernau:2015} gave algorithms for several classes of pattern languages with good properties. Particularly, they showed that for the case of \emph{non-crossing} (i.e., scopes of different variables do not overlap) pattern languages, the matching problem admits an $O(m |w| \log |w|)$-time algorithm, where $m$ is the number of one-variable blocks occurring in a given pattern~\cite[Thm~8]{Fernau:2015}\cite[Thm~4.16]{Fernau:2020}.
Observe that our $XY^{\alpha}Z$ problem generalizes the case $m=1$ by allowing arbitrary \regex{es} for $X$, $Y$, and $Z$.

Given two pattern languages (possibly sharing variables), the \emph{word equation} problem asks whether there is a string that matches both of them.
The case that one of the languages contains no variables is equal to the pattern language matching problem.
In general, the word equation problem is known to be in PSPACE~\cite{Plandowsk:2004}.
For the $1$-variable case,
D\k{a}browski and Plandowski~\cite{Dabrowski:2011} gave an $O(m \: + \: \#_X \log m)$-time algorithm where $m$ is the total size of given patterns and $\#_X$ is the total number of variable occurrences. 
Later, in the RAM-model setting,
Je\.{z}~\cite{Jez:2016:onevariable} gave an $O(m)$-time algorithm using the \emph{recompression} technique introduced by Je\.{z}~\cite{Jez:2016}.
D\k{a}browski and Plandowski~\cite{Dabrowski:2004} gave an $O(m^5)$-time algorithm for the $2$-variable case.
In contrast, the problem becomes significantly harder for $k \geq 3$ variables; even for the $3$-variable case, it is not even known whether the problem is in NP~\cite{Jez:2020}.
Schulz~\cite{Schulz90} considered a variant of the word equation problem that each variable has a \regex{} constraint.
They proved this variant is decidable;
it was later proved to be PSPACE-complete by Diekert et al.~\cite{DiekertGH05}.

\paragraph{On \regex{} with Lookaround.}
\emph{Lookaround} is one of the features that real-world regular expression engines support, which checks whether a specified pattern exists around the current position, without consuming characters.
It is folklore that the language class of \regex{} equipped with lookaround is regular\footnote{We can convert it to a \emph{two-way alternating finite automaton} (2AFA)~\cite{Chandra:1981}, whose language class is regular~\cite{Ladner:1984}.}; thus, it admits $O(|w|)$ time matching by Thompson's construction when we only consider the dependence on $|w|$.
Therefore, research concerns the dependence on the expression length $|E|$, and particularly, focuses on $O(|E|\cdot |w|)$ time algorithms.

Mamouras and Chattopadhyay~\cite{Mamouras:2024} provided an $O(|E|\cdot |w|)$ time membership algorithm.
Chattopadhyay~et~al.~\cite{Chattopadhyay:2025} mechanized an algorithm of the same complexity based on~\cite{Mamouras:2024} and verified its correctness using the proof assistant \emph{Rocq} (formerly known as Coq).
Barri\`{e}re and Pit-Claudel~\cite{Barriere:2024} provided a matching algorithm of the same complexity that is faithful to JavaScript \regex{} matching semantics.
Fujinami~and~Hasuo \cite{Fujinami:2024} provided a matching algorithm of the same complexity, which also supports \emph{atomic} grouping, which is a practical extension that does not enlarge the language expressiveness~\cite{Jeffrey:2006, Goyvaerts:atomic}.
Uezato~\cite{uezato:2024} proved that the membership problem of \rewb{} equipped with \emph{lookahead} (lookaround that only examines after the current position) is PSPACE-complete.
}

\paragraph{Gapped Patterns.}
\revise{R1-3}{
Another well-studied regular-language formalism, motivated by practical pattern matching applications (rather than \rewb),
is that of \emph{gapped patterns}.
A typical gapped pattern $G$ has the form
\[
  G \equiv p_0\,\Sigma^{[\alpha_1,\,\beta_1]}\,p_1
  \Sigma^{[\alpha_2,\,\beta_2]}\,\cdots\,
  \Sigma^{[\alpha_m,\,\beta_m]}\,p_m,
\]
where each $p_i \in \Sigma^*$ is a fixed string and
$\Sigma^{[\alpha,\beta]} := \set{ w \in \Sigma^* : \alpha \leq |w| \leq \beta }$ is the set of substring whose length lies between $\alpha$ and $\beta$.
Thus every gapped pattern denotes a regular language and can be expressed as a
classical \regex.
Matching such patterns, including dictionary variants in which a
finite set of gapped patterns is matched against a text, has been studied
extensively; see, e.g.,
\cite{NavarroRaffinot03,FredrikssonGrabowski08,BilleGVW12,HaapasaloSSSS11,HonLSTTY18}
and the survey~\cite{LevyS22}.
Particularly relevant from a fine-grained perspective is the work of Amir~et~al.~\cite{AmirKLPPS19} on dictionary matching with one gap.
As the uniformly bounded case,
a dictionary $\mathcal{D} = \set{ G_1, G_2, \ldots, G_n}$ consist of patterns $G_i$ forms $p^i_1\,\Sigma^{[\alpha,\,\beta]}\,p^i_2$ with common bounds $\alpha, \beta$.
The task is to report their occurrences for a input text $w$.
Under the 3SUM conjecture~\cite{GajentaanO95,Patrascu10,KopelowitzPP16},
for suitable dictionaries,
they proved a lower bound of
$\Omega(|w|(\beta-\alpha)^{1-o(1)} + \mathit{op})$
where $\mathit{op}$ is the output size~\cite[Thm.~5 and Table~2]{AmirKLPPS19}.
Complementing this, they also gave an online reporting algorithm whose running time depends on structural parameters of dictionaries,
in particular the degeneracy of the associated bipartite graph~\cite[Thm.~9]{AmirKLPPS19}.
}

\newcommand{\poslk}{?_=}
\newcommand{\neglk}{?_!}
\newcommand{\reglk}{\textsc{RegLK}}

%% file: onerewb/to_zu.tex
We provide an overview of our algorithm for Theorem~\ref{thm:ABCBD},
which finds a decomposition $w = w_A\,w_B\,w_C\,w_B\,w_D$ with $w_\xi \in \lang{\xi}$ for $\xi \in \set{A, B, C, D}$.
The following figure summarizes the overall structure.

\begin{center}
{\tikzset{>=latex}
\hspace{-30pt}
\scalebox{0.95}{
\begin{tikzpicture}
\node[draw, rounded corners, inner sep=4pt] {
\begin{forest} for tree={align=center,edge=->}
[ABCBD Problem~\S\ref{section:sanitize}
    [ABCBD Problem (Normalized)~\S\ref{section:two_subproblems}
        [XYYZ Problem~\S\ref{section:XYYZ problem}
            [Maximal Local Power \\ Enumeration~\S\ref{section:MR-decomposition}]
            [Ultimately Periodic \\ Constraint \\ Solving~\S\ref{section:solving semi-linear constraints}]
        ]  
        [Branching ABCBD Problem~\S\ref{section:branchingABCBD-to-twosubproblems}
            [\textsc{Right-$B$} \\ \textsc{Transition} \\ \textsc{Enumeration}~\S\ref{section:solve:rightbenum}
                [\textsc{Right Enumeration}\\ Data Structure~\S\ref{sec:data_structures_right}]
            ]
            [\textsc{Left-$B$} \\ \textsc{Transition} \\ \textsc{Enumeration}~\S\ref{new:section:left-B-overview}
                [\textsc{Left Enumeration}\\ Data Structure~\S\ref{section:implementing-left-B}
                    [Auxiliary\\ Data Structure~\S\ref{sec:auxdatastructure}]
                    [Near Transition\\ Enumeration~\S\ref{section:near-transitions2}]
                ]
            ]
        ]
    ]
]
\end{forest}
};
\end{tikzpicture}
}}
\end{center}

To streamline our algorithm,
we first normalize the input $w$ and expressions so that $\epsilon \not\in \lang{D}$ and the alphabet is binary $\Sigma = \set{0, 1}$ in Section~\ref{section:sanitize}.
Then, we can assume $w_D \neq \epsilon$.
%
In Section~\ref{section:two_subproblems}, we reduce the ABCBD problem to two subproblems: the \emph{XYYZ problem} (Section~\ref{section:XYYZ problem}) and the \emph{branching ABCBD problem} (Sections~\ref{section:branchingABCBD-to-twosubproblems}--\ref{section:near-transitions2}).
In the former problem, we consider the case $w_C = \epsilon$, i.e., $w = w_A\,w_B\ w_B\,w_D$. 
In the latter problem, we consider the case where
two suffixes $w_C\,w_B\,w_D$ and $w_D$ satisfy a \emph{branching condition}; that is, their first letters are distinct.
To show that this decomposition covers all cases,
we rely on the closure properties of \regex{} and the equivalence with \emph{nondeterministic finite automata} (NFAs), as detailed in Section~\ref{subsection:ABCBD to XYYZ and Branching ABCBD}.

The key structure of the former (XYYZ) problem (our $X\,Y^{\alpha}\,Z$ problem with $\alpha = 2$, Section~\ref{section:XYYZ problem}) is that, since $w_C = \epsilon$, two occurrences of $w_B$ appear consecutively in $w$ as $w = \ldots w_B\,w_B \ldots$.
We crucially exploit the periodicity of $w$ induced by this immediate repetition $w_B w_B$.
We use Crochemore's maximal local power enumeration~\cite[Sec.~9.2]{Crochemore:2007}\cite{Crochemore:1981} (Section~\ref{section:MR-decomposition}), in combination with the periodic properties of \regex{} (Section~\ref{section:solving semi-linear constraints}).
This part also yields Theorem~\ref{thm:XYYYYYZ}.

To solve the latter (branching ABCBD) problem (Sections~\ref{section:branchingABCBD-to-twosubproblems}--\ref{section:near-transitions2}), we work with the \emph{suffix tree} of $w$, $\stree_w$, or simply $\stree$ (see Section~\ref{section:tools} for the definition).
Each leaf $\ell$ has an index $i_{\ell}$ of $w$ and corresponds to the suffix $w[i_{\ell}\,\btw\,|w|]$
($w = w[1]\,w[2]\,\ldots,w[|w|]$.)
Each inner node $v$ represents the substring given by the path-label from the root.
For each $v$, we maintain the set $\mathscr{I}_v$ of all indices $i_{\ell}$ of leaves $\ell$ in the subtree rooted at $v$.
We note that our suffix tree $\stree$ is binary because $\Sigma = \set{0, 1}$.

The key property of the problem is the branching condition.
It means that
the two paths from the root to $\ell_{w_B w_C w_B w_D}$ and to $\ell_{w_B w_D}$ \emph{branch} at the node $v$ corresponding to $w_B$ (as illustrated in the below figure).
Using this property, it suffices to consider the following task:\\
$(\clubsuit)$: {\itshape For each node $v$, find a pair of suffixes branching at $v$ that induces a feasible decomposition.}

\begin{wrapfigure}{l}{0.31\textwidth}%
\hspace{-10pt}
  \centering
  \begin{tikzpicture}[
      >=Latex,
      node distance=5mm,
      every node/.style={font=\small},
      edge/.style={-Latex, line width=0.4pt},
      inner/.style={inner sep=1.4pt},
      leaf/.style={draw, rounded corners, inner sep=2.2pt},      
    ]

    \node[inner] (root) {root};
    \node[inner, below=of root] (x) {$v$};
    \node[leaf, below left=of x] (xy) {$w_B\,w_D$};
    \node[leaf, below right=of x] (xz) {$w_B\,w_C\,w_B\,w_D$};

    \draw[edge] (root) -- node[right, inner sep=1pt] {$w_B$} (x);
    \draw[edge] (x) -- node[left, inner sep=1pt, yshift=2pt, xshift=-2pt] {$w_D$} (xy);
    \draw[edge] (x) -- node[right, inner sep=1pt, yshift=2pt, xshift=1pt] {$w_C\,w_B\,w_D$} (xz);

    \node[overlay] at (root) [xshift=50pt, yshift=0pt] {\scriptsize \begin{tabular}{c}Illustration \\ of $(\clubsuit)$
    \end{tabular}};
  \end{tikzpicture}
  \vspace{-5pt} 
\end{wrapfigure}

Let $\mathscr{H}_v := \mathscr{I}_u$, where $u$ is a child of $v$ maximizing the size $|\mathscr{I}_u|$, and define
$\mathscr{L}_v := \mathscr{I}_v \setminus \mathscr{H}_v$.
We employ a classical \emph{small-to-large}
(or heavy-light decomposition) strategy:
for each $i \in \mathscr{L}_v$
we determine whether there exists
$j \in \mathscr{H}_v$ such that the pair $(i,j)$ induces a feasible decomposition.
%
Since $\sum_{v : \text{node}} |\mathscr{L}_v| = O(|w| \log |w|)$,
the remaining task of $(\clubsuit)$ is to find such an index $j \in \mathscr{H}_v$ in $O(\log |w|)$ time for each $i \in \mathscr{L}_v$.
We then split the task $(\clubsuit)$ into the two cases \emph{Right-$B$} and \emph{Left-$B$} in Section~\ref{section:branchingABCBD-to-twosubproblems}.

In \emph{Right-$B$}, for each $i \in \mathscr{L}_v$,
we consider the case $i < j$: i.e., the shorter suffix $w_B w_D$ starts at $j$.
Namely, we search for $j \in \mathscr{H}_v$ such that $w[j \btw |w|] = w_B w_D$, illustrated as follows:
\[
(\clubsuit_R) \quad 
    w=\underbrace{w[1 \btw i)}_{w_A}\, \underbrace{w[i \btw i+|w_B|)}_{w_B}\, \underbrace{w[i+|w_B| \btw j)}_{w_C}\, \underbrace{w[j \btw j+|w_B|)}_{w_B}\, \underbrace{w[j+|w_B| \btw |w|]}_{w_D}.
\]
In \emph{Left-$B$}, we consider the case $j < i$ where $j$ is the starting position of the longer suffix $w_B\,w_C\,w_B\,w_D$.
We search for $j \in \mathscr{H}_v$ such that $w[j \btw |w|] = w_B w_C w_B w_D$, illustrated as follows:
\[
(\clubsuit_L) \quad
    w=\underbrace{w[1 \btw j)}_{w_A}\, \underbrace{w[j \btw j+|w_B|)}_{w_B}\, \underbrace{w[j+|w_B| \btw i)}_{w_C}\, \underbrace{w[i \btw i+|w_B|)}_{w_B}\, \underbrace{w[i+|w_B| \btw |w|]}_{w_D}.
\]

To handle both cases, we use the same high-level approach: traverse $\stree$ bottom-up and maintain the sets $\mathscr{H}_v$ using suitable data structures.
Since the concrete algorithms and data structures differ between the two cases, we present them separately.

\paragraph{Right-$B$ Case (Sections~\ref{section:solve:rightbenum}~and~\ref{sec:data_structures_right}).}
Let $v$ be a node of $\stree$ and
$w_B$ be the string represented by $v$.
Let $i\in \light_v$, where $w[i \btw |w|]$ corresponds to $w_B w_C w_B w_D$. 
We assume $w[1 \btw i) \in \lang{A}$ and $w_B \in \lang{B}$ because otherwise there is no feasible solution.
Our task $(\clubsuit_R)$ is to find an index $j \in \heavy_v$ such that $w[j \btw |w|]$ corresponds to $w_B w_D$.
To treat the constraint $w_C\in \lang{C}$,
we manage $\heavy_v$ using a data structure that answers whether there is an index $j\in \heavy_v$ with $i < j$
such that $w[i+|w_B|\,\btw\,j)\in \lang{C}$.
We call this structure \rightenumds{}~(Section~\ref{sec:data_structures_right}).

However, this is not yet sufficient, as we must also enforce the constraint $w_D \in \lang{D}$.
Crucially, whether a fixed index $j$ satisfies this constraint depends on the current node $v$.
This is because the substring corresponding to $w_D$ begins at $j + |w_B|$, and this starting position shifts as the length $|w_B|$ varies with $v$.
Consequently, we must handle the constraints for both $C$ and $D$ simultaneously. To address this, we partition $\mathcal{H}_v$ into subsets such that all indices in the same subset share the same validity regarding $D$ (Section~\ref{section:solve:rightbenum}).
Leveraging the fact that \regex{} admits a finite transition monoid (see Section~\ref{section:tools} for definition),
we can ensure that we need to maintain only a constant number of subsets.
By maintaining a separate instance of \rightenumds{} for each subset and merging them using a disjoint set union data structure, we can resolve the issue and handle this case.
A simple way of constructing \rightenumds{} is to use a \emph{segment tree}, where each node is equipped with the set of elements of transition monoids (Section~\ref{sec:data_structures_right}).
This yields an $O(|w| \log^3 |w|)$-time algorithm.
In our algorithm, employing a \emph{factorization forest} (see Section~\ref{section:tools} for definition),
we further eliminate one logarithmic factor and achieve $O(|w| \log^2 |w|)$ time.

\vspace{-10pt}
\paragraph{Left-$B$ Case (Sections~\ref{new:section:left-B-overview}--\ref{section:near-transitions2}).}

Let $v$ be a node of $\stree$ and $w_B$ be the substring represented by $v$.
Let $i \in \light_v$, where $w[i \btw |w|]$ corresponds to $w_B w_D$. 
We assume $w_B \in \lang{B}$ and $w_D \in \lang{D}$ because otherwise there is no feasible solution.
Our task $(\clubsuit_L)$ is to find an index $j \in \heavy_v$ with $j < i$ such that $w[j \btw |w|]$ corresponds to $w_B w_C w_B w_D$.
Moreover, we can handle the constraint for $A$ simply by managing only indices $j\in \heavy_v$ with $w[1\btw j)\in \lang{A}$
(recall that, unlike the Right-$B$ case, it is independent of $v$).
The remaining task is to find $j$ satisfying $j + |w_B| < i$ 
and $w[j+|w_B|\,\btw\,i)\in \lang{C}$.
We treat it as a data structure problem (Section~\ref{section:implementing-left-B}).
We note that \rightenumds{} is insufficient for our purpose,
since the constraint $j + |w_B| < i$ depends on $|w_B|$,
which varies with the current node $v$.

To address the problem, we consider the following two subcases separately:
\[
    \text{\textbf{near-$j$:}}~~\text{$j + |w_B| \leq i < j+2|w_B|$}, \qquad
    \text{\textbf{far-$j$:}}~~\text{$j+2|w_B| \leq i$}.
\]


In the near-$j$ case (Section~\ref{section:near-transitions2}), we derive a periodicity for $w_B$ because, for any two indices $j,j'$ in the near-$j$ case, the substrings $w[j\,\btw\,j+|w_B|)$ and $w[j' \,\btw\, j'+|w_B|)$ overlap.
With the periodicity inherent to \regex{}, we can efficiently solve the near-$j$ case in $O(\log |w|)$ time.
On the other hand, in the far-$j$ case, the index $j$ is ``sufficiently far'' from $i$.
Technically, the condition $j + 2|w_B| \leq i$ implies that
the distance between $j+|w_B|$ and $i$ is at least $|w_B|$; we use this margin for the design of our data structure.

To solve the far-$j$ case, we adopt the following strategy.
We first build a data structure \auxds{} (Section~\ref{sec:auxdatastructure}), 
that maintains intervals $[j \btw k)$ as the item $[j,\,k)$ while ensuring $j + |w_B| \leq k$.
Given $i$, it answers whether there is an interval $[j, k)$ such that $w[j+|w_B|\,\btw\,i) \in \lang{C}$ and $k \leq i$.
For the condition $j + 2|w_B| \leq i$,
it suffices to ensure $k \leq j+2|w_B|$ holds at any moment. 
%
%
The issue is that as we traverse $\stree$, the node $v$ changes (thus, $|w_B|$ decreases), so the condition $k \leq j + 2|w_B|$ will be violated.
To resolve this issue, when the violation occurs
(the margin $k - (j+|w_B|)$ becomes at least $|w_B|$),
we ``re-insert'' the new interval $[j,\,k')$ to the data structure, where $k' := j + |w_B|$.
The total overhead cost is $O(\log |w|)$ time because ``re-insertion'' happens only after $|w_B|$ is halved;
this is ensured by the ``far-$j$'' condition.

The remaining task is to construct \auxds{} (Section~\ref{sec:auxdatastructure}).
This is similar to \rightenumds{}, but requires additional mechanisms to quickly answer queries of $w[j+|w_B|\,\btw\,i) \in \lang{C}$,
which depends on the current $v$ (and thus, $w_B$).
We adopt a \emph{lazy update} strategy for this purpose.
As in \rightenumds{}, using the segment tree leads to the time complexity $O(|w|\log ^3 |w|)$, which can be accelerated to $O(|w|\log^2 |w|)$ using the factorization forest.

%% file: onerewb/tools.tex
\newcommand{\integers}{\mathbb{Z}}
\newcommand{\naturals}{\mathbb{Z}_{\geq 0}}

\section{Tools and Preliminaries}\label{section:tools}


\paragraph{Integers.}

We write $\integers$ for the set of all integers and
write $\naturals$ for $\set{ i \in \integers : i \geq 0 }$.
For an integer $n \geq 1$, we write $[n]$ to denote the set $\set{1, 2, \ldots, n}$.

\paragraph{Strings.}
Let $\Sigma$ be a finite alphabet.
We write $\Sigma^*$ for the set of all finite strings over $\Sigma$ and $\epsilon$ for the empty string.
For $w \in \Sigma^*$, $|w|$ is its length.
We use 1-based indexing: for an index $i \in \set{1, \ldots, |w|}$, $w[i]$ is the $i$-th character of $w$.

For indices $i, j \in \set{1, \ldots, |w|}$ with $i \leq j$, $w[i \btw j] := w[i]\,w[i+1] \ldots w[j]$ is the substring from $i$ to $j$, inclusive.
We define $w[i \btw j] := \epsilon$ if $i > j$ for convenience.
For example, if $w = abcd$, then $|w| = 4$, $w[1] = a$, $w[4] = d$, $w[2 \btw 2] = b$, $w[2 \btw 4] = bcd$, and $w[2 \btw 1] = \epsilon$.
We also write $w[i \btw j)$ for the shorthand for $w[i\,\btw\,j-1]$. It is convenient to extract the $\ell$-length substring starting from $i$.
For example, if $w = abcd$, then $w[1 \btw 1+3) = abc$, $w[2 \btw 2+3) = bcd$, and $w[1 \btw 1+0) = \epsilon$.
We similarly define $w(i\btw j]$ for $w[i+1\btw j]$.

For a string $u$ and an integer $k \in \naturals$, we write $u^k$ for the string obtained by $k$-power of $u$.
For example, $(abaa)^2 = abaa\,abaa$.
We naturally extend this notation to $u^{r/|u|}$ with a rational $r/|u| \geq 0$.
For example, $(abaa)^{\frac{3}{4}} = aba$ and $(abaa)^{2+\frac{2}{4}} = abaa\,abaa\,ab$.

If $w[i] = w[p+i]$ for all $i \in \set{1, 2, \ldots, |w|-p}$, then we say that $w$ has a \emph{period} $p$.
Alternatively, if a prefix $u$ of $w$ satisfies $w = u^r$ for some $r \geq 1$, $w$ also has a period $|u|$.
For example, $w = abaa\,abaa\,ab$ has three periods:
(i) $4$ for $w = (abaa)^{2+\frac{2}{4}}$;
(ii) $8$ for $w = (abaaabaa)^{1+\frac{2}{8}}$;
(iii) $10 = |w|$ for $w = w^1$.

The following theorem is well-known and quite useful property on periodic strings.
It is sometimes called Fine and Wilf's theorem~\cite{FineWilf:1965}.
\begin{lemma}[{\cite[Cor.~6.1]{Choffrut:1997}}]\label{lemma:FineWilf}
Let $w$ be a non-empty word.
If $w$ has two periods $p$ and $q$ and $|w| \geq p + q - \mathit{gcd}(p, q)$,
then $w$ also has a period $\mathit{gcd}(p, q)$.
\end{lemma}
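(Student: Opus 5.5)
The plan is the classical Euclid-style induction on $p+q$ (Fine and Wilf). Without loss of generality $p \le q$. If $p = q$, or more generally $p \mid q$, the claim is immediate since $\gcd(p,q) = p$. So assume $p < q$ and $p \nmid q$. Put $d := q - p > 0$ and observe that $\gcd(p,d) = \gcd(p,q)$.

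\textbf{Step 1: $w$ has period $d$.} We must show $w[i] = w[i+d]$ for every $i$ with $1 \le i \le |w| - d$. There are two cases.
\begin{itemize}
\item If $i + q \le |w|$, then $w[i] = w[i+q] = w[i+q-p] = w[i+d]$. The first equality uses period $q$. The second uses period $p$, which applies because $i + q - p \ge 1$.
\item Otherwise $i + q > |w|$. Then $i > |w| - q \ge p - \gcd(p,q) \ge 0$. We want $i - p \ge 1$ so that we can step backward: $w[i] = w[i-p] = w[i-p+q] = w[i+d]$. The last step needs $i - p + q \le |w|$, which holds because $i + d \le |w|$.
\end{itemize}
The bound $i - p \ge 1$ is the delicate point, since $|w| \ge p+q-g$ (with $g = \gcd(p,q)$) is exactly the tight hypothesis. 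Handle it by first trying the forward step whenever $i + p \le |w|$: then $w[i] = w[i+p]$, and $w[i+p] = w[i+p-q] = w[i-d]$ needs $i - d \ge 1$. So the index set splits into the cases "$i \le |w| - q$" and "$i > p$". If some $i$ falls in neither case, then $|w| - q < i \le p$, which forces $|w| < p + q$.

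\textbf{Step 2: close the gap in the tight range.} When $|w| < p + q$, the two cases above need not cover every index, and this is the main obstacle. The cleanest fix is the standard strengthening: prove the statement for $|w| \ge p + q - g$ directly by induction via the claim "$w$ has period $d$ on the prefix of length $|w| - p$." The induction then runs on the word $u = w[1 \btw |w|-p]$. This prefix has periods $p$ and $d$, and
\[
|u| = |w| - p \ge d + p - \gcd(d,p).
\]
Concretely:
\begin{itemize}
\item Check that $u$ has period $d$. For $i \le |u| - d = |w| - q$ use the first case; these indices lie entirely in the range where the forward $q$-step is legal.
\item $u$ has period $p$ because it is a factor of $w$.
\item Apply the induction hypothesis (on $p + d < p + q$) to get that $u$ has period $g$.
\item Lift to $w$. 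Since $|u| \ge p$ (as $d \ge g$) and $g \mid p$, the prefix $w[1 \btw p]$ has period $g$. Because $w$ has period $p$, $w$ is a prefix of $(w[1 \btw p])^{\infty}$, and therefore $w$ itself has period $g$.
\end{itemize}

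\textbf{Base case.} The base case is $p \mid q$ (including $p = q$), handled at the start. If one prefers to avoid the induction entirely, an alternative is the cited combinatorics-on-words argument \cite[Cor.~6.1]{Choffrut:1997}. Since the paper simply quotes this lemma, invoking that reference is acceptable, but the sketch above is the self-contained route I would write.
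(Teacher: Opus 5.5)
The paper does not prove this lemma. It quotes it from \cite[Cor.~6.1]{Choffrut:1997} and uses it as a black box, both for uniqueness of primitive roots in the $XYYZ$ section and in Lemma~\ref{lem:near_periodic}. Your proposal instead gives a self-contained proof, and it is correct.

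The part that actually carries the argument is the divisibility base case together with Step~2, which is the standard Euclid-style induction:
\begin{itemize}
\item On $u = w[1..|w|-p]$, the forward step $w[i]=w[i+q]=w[i+d]$ for $i \le |w|-q = |u|-d$ gives period $d$.
\item $u$ inherits period $p$ as a factor of $w$.
\item $|u| = |w|-p \ge p+d-\gcd(p,d)$, with $\gcd(p,d)=g$.
\item The induction hypothesis for the smaller sum $p+d=q$ gives $u$ period $g$.
\item Since $|u| \ge p$ and $g \mid p$, the block $w[1..p]$ is a power of $w[1..g]$, and period $p$ propagates this to all of $w$.
\end{itemize}
What the citation buys the paper is brevity. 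What your route buys is self-containment, using nothing beyond elementary index bookkeeping.

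Two small edits would tighten the write-up.
\begin{itemize}
\item State explicitly that $|u| \ge p \ge 1$. The lemma, and hence the induction hypothesis, is stated only for non-empty words.
\item Delete Step~1 apart from its first bullet. Its heading claims that $w$ itself has period $d$, which you never prove directly; it only follows afterwards, because $g \mid d$. The ``forward step'' sentence is also a slip. The chain $w[i]=w[i+p]=w[i+p-q]$ yields $w[i]=w[i-d]$, which is the period-$d$ relation at index $i-d$. That relation is already covered by the first bullet and says nothing new about index $i$.
\end{itemize}
Step~2 uses none of the deleted material, so it is noise rather than a gap.
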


\paragraph{\regex: Classical Regular Expressions.}

Let $\Sigma$ be a fixed finite alphabet.

The calculus of (classical) regular expressions, \regex, is defined by the following grammar:
\[
E ::= \emptyset \mid \epsilon \mid \sigma \mid E_1 + E_2 \mid E_1 \cdot E_2 \mid E^* .
\]
where $\sigma \in \Sigma$.


The language of \regex{}, $\lang{\cdot}$, is defined as follows:
\[
\begin{array}{lcl}
\lang{\emptyset} & = & \emptyset, \\
\lang{e} & = & \set{ e }  \quad e \in \set{\epsilon, \sigma}, \\
\lang{ E_1 + E_2 } & = & \lang{E_1} \cup \lang{E_2}, \\
\lang{ E_1 \cdot E_2} & = & \lang{E_1} \cdot \lang{E_2}, \qquad (\text{\small where $X \cdot Y = \set{ x y : x \in X, y \in Y}$}) \\
\lang{ E^* } & = & \bigcup_{i \geq 0} (\lang{E})^i
\end{array}
\]
where $L^0 = \set{ \epsilon }$ and $L^{k+1} = L^k \cdot L$ for $k \geq 0$; i.e., $L^k$ is the $k$-th power of $L$.

We define the size of a regular expression $E$, denoted $|E|$, as the total number of occurrences of operators (such as union, concatenation) and operands (alphabet symbols, $\epsilon$, and $\emptyset$) that appear in $E$.
The length of the string representation of $E$ is then clearly bounded by $O(|E|)$.

\paragraph{\rewb: Regular Expressions with Backreferences.}
The calculus of extended regular expressions \rewb{} is an extension of \regex{} with \emph{backreferences}.
Let $\mathcal{V}$ be a finite set of \emph{variables}.
Then, the grammar of \rewb{} over $\mathcal{V}$ is the following:
\[
E ::= \emptyset \mid \epsilon \mid \sigma \mid E_1 + E_2 \mid E_1 \cdot E_2 \mid E^* \mid ( E )_x \mid \readv{x}
\]
where $x \in \mathcal{V}$.
We write $k$-\rewb{} to denote the set of \rewb{s} that have $k$ variables.

We define the semantics of \rewb{} via the semantic function $\sem{E}_{\lambda}$ and then define the language.
For a \rewb{} $E$ and a valuation $\lambda$ from $\mathcal{V}$ to $\Sigma^*$,
$\sem{E}_\lambda$ returns a set of pairs $\tuple{w, \lambda'}$ consisting of a matched string and an updated valuation:
\[
\begin{array}{lcl}
\sem{ \epsilon }_{\lambda} & := & \set{\,\tuple{\epsilon, \lambda}\,}, \\
\sem{ \sigma }_{\lambda} & := & \set{\,\tuple{\sigma, \lambda}\,}, \\[3pt]
\sem{ \emptyset }_{\lambda} & := & \emptyset, \\[3pt]
\sem{ E_1 \cdot E_2 }_{\lambda} & := & \set{ \tuple{w_1 w_2, \nu} : \tuple{w_1, \mu} \in \sem{E_1}_{\lambda}, \tuple{w_2, \nu} \in \sem{ E_2 }_{\mu} }, \\[3pt]
\sem{ E_1 + E_2 }_{\lambda} & := & \sem{ E_1 }_{\lambda} \cup \sem{ E_2 }_{\lambda}, \\[3pt]
\sem{ E^* }_{\lambda} & := & \bigcup_{k\ge 0} \sem{E^k}_\lambda, \qquad \text{where}\ E^0 := \epsilon,\; E^{k+1}:=E^k E, \\[3pt]
\sem{\,(E)_x\,}_{\lambda} & := & \set{ \tuple{w, \mu[x := w]} : \tuple{w, \mu} \in \sem{E}_{\lambda} }, \\[3pt]
\sem{ \readv{x} }_{\lambda} & := & \set{ \tuple{\lambda(x), \lambda} }.
\end{array}
\]
This function is well-founded for the height of Kleene stars.
This definition follows the formalization in~\cite{uezato:2024}.

Let $\iota$ be the initialized valuation that maps every variable to the empty string $\epsilon$: i.e., $\iota(v) = \emptyset$ for all $v \in \mathcal{V}$.
We simply write $\sem{E}$ to denote $\sem{E}_{\iota}$.

We define the language of a \rewb{} $E$, $\lang{E}$, as follows:
\[
\lang{E} := \set{ w : \tuple{w, \lambda} \in \sem{E} }.
\]

If every computation generated by an expression $E$ reads from variables at most $R$ times in total,
we say $E$ is \emph{$R$-use}.


\smallskip
\noindent{}\emph{\rewb{} Example.}
For example, the following expression $E_1$ is a 1-\rewb{} because there only exists a variable $x$:
\[
E_1 = (\Sigma^*)_x \ \# \ \Sigma^* \ \readv{x} \ \Sigma^* .
\]
The language of $E_1$ is $\lang{E_1} = \set{ w_{\text{pat}} \# w_1 w_{\text{pat}} w_2 : w_{\text{pat}}, w_1, w_2 \in \Sigma^* }$, and it is called a \emph{pattern searching language} because, if $w_{\text{pat}} \# w$ is matched with $E_1$, then $w_{\text{pat}}$ appears in $w$.
The above $E_1$ is $1$-use because every computation read $x$ once by $\readv{x}$.

\newcommand{\Epsilon}{\mathcal{E}}

\paragraph{$\epsilon$-NFA: Nondeterministic finite automata with $\epsilon$-transitions.}
An $\epsilon$-NFA $A$ is a tuple $A = (Q, \Sigma, \Delta, \Epsilon, q_0, F)$ where
$Q$ is a finite set of states,
$\Sigma$ is an input alphabet,
$\Delta \subseteq Q \times \Sigma \times Q$ is a set of transition rules, 
$\Epsilon \subseteq Q \times Q$ is a set of $\epsilon$ transitions,
$q_0 \in Q$ is the initial state,
and $F \subseteq Q$ is a set of final (accepting) states.
We write $p \xrightarrow{\sigma} q$ for each transition element $(p, \sigma, q)$ of $\Delta$ and $p \xrightarrow{\epsilon} q$ for $(p, q) \in \mathcal{E}$.
A rule $p \xrightarrow{\sigma} q$ means that we can move to $q$ from $p$ by consuming a letter $\sigma$. 
We write $p \xRightarrow{\epsilon} q$ if we can reach $q$ from $p$ only using $\epsilon$-transitions $p \xrightarrow{\epsilon} \cdots \xrightarrow{\epsilon} q$.
We also write $p \xRightarrow{\sigma} q$ if we can move to $q$ from $p$ by only consuming a letter $\sigma$: i.e., we have a sequence of transitions of the form
$p \xRightarrow{\epsilon} p' \xrightarrow{\sigma} p''  \xRightarrow{\epsilon} q$.

If $\Epsilon = \emptyset$, we call $A$ an $\epsilon$-free NFA.
We write $\Delta(p, \sigma)$ to denote the set of states reachable from $p$ by $\sigma$: i.e., $\Delta(p, \sigma) :=  \set{ q : p \xrightarrow{\sigma} q }$.
If $\Epsilon = \emptyset$ and
$|\Delta(p, \sigma)| = 1$ for all $(p, \sigma) \in Q \times \Sigma$,
we call $A$ a \emph{deterministic} automaton (DFA).

The language of $A$, $\lang{A}$, is the set of words accepted by $A$ and formally defined as follows:
\[
\lang{A} := \set{ \sigma_1 \sigma_2 \ldots \sigma_n : q_0 \xRightarrow{\sigma_1} q_1 \xRightarrow{\sigma_2} \cdots \xRightarrow{\sigma_n} q_n, \ q_n \in F }.
\]
If $q_0 \xRightarrow{\epsilon} q_F$, the empty string $\epsilon$ belongs to $\lang{A}$.

By the well-known Thompson's construction,
every regular expression $E$ can be translated in linear time into a sparse $\epsilon$-NFA $N_E$.
\begin{proposition}[Thompson's construction{~\cite{Thompson:1968}}{\cite[Sec.~5]{Sakarovitch:2009}}]\label{prop:Thompson}
Let $E$ be a \regex{} over an alphabet $\Sigma$.
We can translate $E$ to a corresponding sparse $\epsilon$-NFA $N_E = (Q, \Sigma, \Delta, \Epsilon, q_0, F)$ in $O(|E|)$-time where $\lang{E} = \lang{N_E}$ and $|Q| = |\Delta| = |\Epsilon| = O(|E|)$.
\end{proposition}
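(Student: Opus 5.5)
We prove Proposition~\ref{prop:Thompson} by structural induction on $E$, building $N_E$ with the standard Thompson invariant: $N_E$ has a unique initial state $q_0$ with no incoming transitions, a unique final state $q_f$ (so $F=\set{q_f}$) with no outgoing transitions, every state has at most two outgoing transitions, and
\[
|Q| \le 2|E|, \qquad |\Delta| \le |E|, \qquad |\Epsilon| \le 4|E|.
\]
Maintaining this stronger invariant is what makes the inductive step go through. It also immediately yields the required bounds $|Q|=|\Delta|=|\Epsilon|=O(|E|)$.

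\emph{Base cases.} For $\emptyset$ we take two states and no transitions. For $\epsilon$ we take two states joined by one $\epsilon$-transition. For $\sigma$ we take two states joined by one $\sigma$-transition. The invariant and the language equality are immediate in each case.

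\emph{Inductive cases.} Let $N_1, N_2$ be the automata for $E_1, E_2$, with disjoint state sets.
\begin{itemize}
\item For $E_1+E_2$, we add a fresh initial state with $\epsilon$-edges to the two old initial states, and a fresh final state with $\epsilon$-edges from the two old final states.
\item For $E_1\cdot E_2$, we add one $\epsilon$-edge from the final state of $N_1$ to the initial state of $N_2$.
\item For $E_1^*$, we add fresh states $q_0',q_f'$ and $\epsilon$-edges $q_0'\to q_0$, $q_f\to q_f'$, $q_0'\to q_f'$, and $q_f\to q_0$.
\end{itemize}
Each operator adds at most two states and at most four $\epsilon$-edges, and it contributes $1$ to the size. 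Hence the counting bounds follow by summing over the operator occurrences.

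\emph{Correctness.} We show $\lang{N_E}=\lang{E}$ case by case. The key point is that, because initial states have no incoming edges and final states have no outgoing edges, an accepting run of the composite automaton cannot re-enter or leave a component except through the new glue edges.
\begin{itemize}
\item For union, an accepting run lies entirely in one branch.
\item For concatenation, the run crosses the single glue edge exactly once, splitting the word as $w_1w_2$ with $w_i\in\lang{N_i}$.
\item For star, we decompose the run at its visits to the old $q_f\to q_0$ loop edge. This gives $w=w_1\cdots w_k$ with each $w_i\in\lang{N_1}$, and $k=0$ corresponds to the bypass edge. Conversely, such a factorization assembles into an accepting run.
\end{itemize}
I expect the star case to be the only step needing care. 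The no-incoming and no-outgoing invariant is exactly what prevents spurious runs, for instance a run that jumps from the middle of one iteration into another.

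\emph{Running time.} The construction is performed by a single post-order traversal of the parse tree of $E$. Each node does $O(1)$ work: it allocates constant-many states and edges and stores pointers to the initial and final states of its children. This gives $O(|E|)$ time overall, assuming $E$ is given as, or parsed in linear time into, its syntax tree.
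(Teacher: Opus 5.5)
Your proposal is correct. It is the standard Thompson construction that the paper invokes only by citation (to Thompson and Sakarovitch) without giving a proof of its own, and your explicit bounds $|Q|\le 2|E|$, $|\Delta|\le |E|$, $|\Epsilon|\le 4|E|$ together with the linear-time post-order traversal establish the statement.

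One small correction to your commentary: with fresh start and final states for union and star, and a one-way glue edge for concatenation, correctness does not depend on the no-incoming/no-outgoing invariant. Every segment of an accepting run between glue edges is a run of the component regardless, so that invariant is only essential in state-merging variants of the construction.
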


\begin{proposition}[Product Construction of $\epsilon$-NFA]
\label{prop:epsilon-nfa:product}
Let
$A = (Q_A, \Sigma, \Delta_A, \Epsilon_A, q^A_0, F_A)$ and
$B = (Q_B, \Sigma, \Delta_B, \Epsilon_B, q^B_0, F_B)$ be $\epsilon$-NFAs, where $\Delta_A, \Epsilon_A, \Delta_B, \Epsilon_B$ are represented by adjacency lists.
Then, there exists an $\epsilon$-NFA $C$ such that
$\lang{C} = \lang{A} \cap \lang{B}$.
In particular, we can construct $C$ in $O(|Q_A| \cdot |Q_B| + |\Delta_A| \cdot |\Delta_B| + |\Epsilon_A|\cdot|Q_B|+ |Q_A|\cdot|\Epsilon_B|)$ time.
\end{proposition}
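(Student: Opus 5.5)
We use the standard synchronized product. The state set is $Q_C := Q_A \times Q_B$, the initial state is $(q^A_0, q^B_0)$, and the final states are $F_C := F_A \times F_B$. For each pair of letter transitions $p \xrightarrow{\sigma} p'$ in $\Delta_A$ and $q \xrightarrow{\sigma} q'$ in $\Delta_B$ with the same letter $\sigma$, we add the letter transition $(p,q) \xrightarrow{\sigma} (p',q')$. The $\epsilon$-transitions are interleaved asynchronously: for each $p \xrightarrow{\epsilon} p'$ in $\Epsilon_A$ and each $q \in Q_B$ we add $(p,q) \xrightarrow{\epsilon} (p',q)$, and symmetrically, for each $q \xrightarrow{\epsilon} q'$ in $\Epsilon_B$ and each $p \in Q_A$ we add $(p,q) \xrightarrow{\epsilon} (p,q')$.

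For the running time, we enumerate $Q_A \times Q_B$ in $O(|Q_A||Q_B|)$ time and allocate an adjacency list for each product state. We enumerate all pairs of letter transitions in $O(|\Delta_A||\Delta_B|)$ time, keeping exactly the pairs whose letters agree; each kept pair costs constant time to append. The two families of $\epsilon$-transitions are enumerated directly in $O(|\Epsilon_A||Q_B|)$ and $O(|Q_A||\Epsilon_B|)$ time. Together these give the stated bound.

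For correctness we prove two inclusions.

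\emph{$\lang{C} \subseteq \lang{A} \cap \lang{B}$.} Take an accepting run of $C$ on $w = \sigma_1 \cdots \sigma_n$. Projecting it onto the first coordinate yields a walk in $A$. Each $\epsilon$-edge of the second family becomes a stay in place, which is harmless because $\xRightarrow{\epsilon}$ is reflexive–transitive. Each letter edge projects to a letter edge of $A$ with the same letter. Hence each step $(p,q) \xRightarrow{\sigma_k} (p',q')$ of $C$ projects to $p \xRightarrow{\sigma_k} p'$ in $A$. The projected run ends in $F_A$, so $w \in \lang{A}$. The same argument on the second coordinate gives $w \in \lang{B}$.

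\emph{$\lang{A} \cap \lang{B} \subseteq \lang{C}$.} Suppose $w \in \lang{A} \cap \lang{B}$, with accepting runs in which each step has the shape $p_{k-1} \xRightarrow{\epsilon} \hat p \xrightarrow{\sigma_k} \hat p' \xRightarrow{\epsilon} p_k$ in $A$, and similarly with states $q_{k-1}, \hat q, \hat q', q_k$ in $B$. We build a run of $C$ step by step:
\begin{itemize}
\item first execute $A$'s leading $\epsilon$-moves with the second coordinate frozen at $q_{k-1}$;
\item then execute $B$'s leading $\epsilon$-moves with the first coordinate frozen at $\hat p$;
\item take the synchronized letter edge $(\hat p, \hat q) \xrightarrow{\sigma_k} (\hat p', \hat q')$;
\item finally perform the two trailing $\epsilon$-sequences, one coordinate at a time.
\end{itemize}
This gives $(p_{k-1}, q_{k-1}) \xRightarrow{\sigma_k} (p_k, q_k)$ in $C$. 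For $n = 0$ only the $\epsilon$-closures are involved, and the same two-phase interleaving handles them.

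The only delicate point is the asynchronous treatment of $\epsilon$-moves, which the two-phase interleaving resolves.
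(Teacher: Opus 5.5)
Your proposal is correct and is essentially the paper's proof: the same synchronized product $Q_A\times Q_B$ with letter moves paired and $\epsilon$-moves interleaved one coordinate at a time, and the same cost accounting for the $O(|Q_A||Q_B| + |\Delta_A||\Delta_B| + |\Epsilon_A||Q_B| + |Q_A||\Epsilon_B|)$ bound. The paper only asserts that $\lang{C}=\lang{A}\cap\lang{B}$ ``trivially holds''. Your two inclusions, the projection argument and the two-phase interleaving of $\epsilon$-closures (including the $n=0$ case), are a valid filling-in of that claim.
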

%
\begin{proof}
This follows from the classical product construction for DFA and ($\epsilon$-free) NFA (c.f. {\cite[Sec.~5]{Sakarovitch:2009}}).
For completeness, we define the construction for our product $\epsilon$-NFA $C = (Q_C, \Sigma, \Delta_C, \Epsilon_C, q^C_0, F_C)$ as follows:
\[
\begin{array}{lcl}
Q_C & := & Q_A \times Q_B, \\
\Delta_C & := & \set{ (p, q) \xrightarrow{\sigma} (p', q') : p \xrightarrow{\sigma} p',\ q \xrightarrow{\sigma} q' }, \\
\Epsilon_C & := & \set{ (p, q) \xrightarrow{\epsilon} (p', q) : p \xrightarrow{\epsilon} p' } \cup
\set{ (p, q) \xrightarrow{\epsilon} (p, q') : q \xrightarrow{\epsilon} q' }, \\
q^C_0 & := & (q^A_0,\ q^B_0), \\
F_C & := & F_A \times F_B.
\end{array}
\]
By the construction, $\lang{C} = \lang{A} \cap \lang{B}$ trivially holds.
Note that
adding $\set{ (p, q) \xrightarrow{\epsilon} (p', q') : p \xrightarrow{\epsilon} p',\ q \xrightarrow{\epsilon} q' }$ to $\Epsilon_C$
does not change $\lang{C}$, since such a move is simulated by two $\epsilon$-steps.

Regarding the time complexity, we can construct $Q_C$ in $O(|Q_A| \cdot |Q_B|)$ time.
Since the transitions are represented by adjacency lists,
we can construct $\Delta_C$ in $O(|\Delta_A| \cdot |\Delta_B|)$ time by iterating over matching transitions.
Similarly, we can build $\Epsilon_C$ in $O(|\Epsilon_A|\cdot|Q_B| + |\Epsilon_B|\cdot|Q_A|)$ time.
\end{proof}

\begin{proposition}[Removal of the $\epsilon$ word from $\epsilon$-NFA's language]
\label{prop:epsilon-nfa:epsilon-remove}
Let
$A = (Q_A, \Sigma, \Delta_A, \Epsilon_A, q^A_0, F_A)$ be an $\epsilon$-NFA.
Then, there exists an $\epsilon$-NFA $B$ such that $\lang{B} = \lang{A} \setminus \set{\,\epsilon\,}$.
In particular, we can construct $B$ in $O(|Q_A| + |\Delta_A| + |\Epsilon_A|)$ time.
\end{proposition}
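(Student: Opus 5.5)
We build $B$ by adding a fresh initial state that is not accepting and that forces at least one letter to be read before control passes into a copy of $A$. The simplest concrete version is a two-copy construction over $Q_A \times \set{0,1}$, where the bit records whether at least one letter has been consumed. Concretely, we set $Q_B := Q_A \times \set{0,1}$ and keep the other components as follows.
\[
\begin{array}{lcl}
\Epsilon_B & := & \set{ (p,b) \xrightarrow{\epsilon} (p',b) : p \xrightarrow{\epsilon} p' \in \Epsilon_A,\ b \in \set{0,1} }, \\
\Delta_B & := & \set{ (p,b) \xrightarrow{\sigma} (p',1) : p \xrightarrow{\sigma} p' \in \Delta_A,\ b \in \set{0,1} }, \\
q^B_0 & := & (q^A_0, 0), \\
F_B & := & F_A \times \set{1}.
\end{array}
\]
Each state and each transition of $A$ is copied at most twice, so $B$ can be written out by one pass over the adjacency lists. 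The running time is therefore $O(|Q_A| + |\Delta_A| + |\Epsilon_A|)$.

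For correctness we relate runs of $A$ to runs of $B$ via the projection onto the first coordinate. In $B$, the bit of the current state equals $1$ exactly when at least one $\Delta_B$-transition has been taken. This holds because $\epsilon$-moves preserve the bit and every letter move sets it to $1$.

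For the inclusion $\lang{B} \subseteq \lang{A} \setminus \set{\epsilon}$, take an accepting run of $B$ on $u$. Projecting it onto the first coordinate gives an accepting run of $A$ on $u$, so $u \in \lang{A}$. Since the run ends in $F_A \times \set{1}$, it read at least one letter, so $u \neq \epsilon$.

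For the reverse inclusion, take $u = \sigma_1\cdots\sigma_n \in \lang{A}$ with $n \ge 1$ and fix an accepting run $q^A_0 \xRightarrow{\sigma_1} q_1 \cdots \xRightarrow{\sigma_n} q_n \in F_A$. We lift it to $B$: the $\epsilon$-steps before $\sigma_1$ run in copy $0$, and the $\sigma_1$-transition moves into copy $1$. All later steps stay in copy $1$, so the lifted run ends in $(q_n,1) \in F_B$.

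No step here is a real obstacle. The only point needing care is $\epsilon$-closures: an accepting run of $A$ on the empty word must not yield acceptance in $B$. This is exactly why copy $0$ has no final states and $\epsilon$-transitions never change the bit.
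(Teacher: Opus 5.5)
Your proposal is correct and is essentially the paper's own proof: the same two-copy construction on $Q_A \times \set{0,1}$, with bit-preserving $\epsilon$-moves, letter moves into copy $1$, initial state $(q^A_0,0)$ and final states $F_A \times \set{1}$. You also use the same argument as the paper (lift accepting runs for one inclusion, project them for the other) and the same count giving $O(|Q_A| + |\Delta_A| + |\Epsilon_A|)$ time.
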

\begin{proof}
We construct an $\epsilon$-NFA $B = (Q_B, \Sigma, \Delta_B, \Epsilon_B, q^B_0, F_B)$ step-by-step.
First, we define the set $Q_B$ as follows:
\[
\begin{array}{lcl}
Q_B & := & \set{ (q, 0),\ (q, 1) : q \in Q_A}, 
\end{array}
\]
We define $\Delta_B$ and $\Epsilon_B$ to satisfy the following condition:
\begin{itemize}
\item A state $(q, 0)$ represents that the automaton has reached $q$ from the initial state by using only $\epsilon$-transitions.
\item A state $(q, 1)$ means that we can reach $q$ from $p$ by using some non-$\epsilon$-transitions.
\end{itemize}
Therefore,
\[
\begin{array}{lcl}
\Delta_B & := & \set{ (p, i) \xrightarrow{\sigma} (q, 1) : p \xrightarrow{\sigma} q,\ i \in \set{0, 1} }, \\
\Epsilon_B & := & \set{ (p, i) \xrightarrow{\epsilon} (q, i) : p \xrightarrow{\epsilon} q,\ i \in \set{0, 1} }.
\end{array}
\]
We define $q^B_0 := (q^A_0, 0)$ and $F^B := F^A \times \set{ 1 }$.

If $w \in \lang{A}$ and $w = \sigma w'$, then we have a decomposition of the following form:
\[
q^A_0 \xRightarrow{\epsilon} q \xrightarrow{\sigma} q' \xRightarrow{w'} r \in F^A.
\]
This computation is simulated in $B$ as follows:
\[
q^B_0 = (q^A_0, 0) \xRightarrow{\epsilon} (q, 0) \xrightarrow{\sigma} (q', 1) \xRightarrow{w'} (r, 1) \in F^B.
\]
By this observation, $\lang{A} \setminus \set{ \epsilon } \subseteq \lang{B}$ clearly holds.
Conversely, if $B$ accepts $w$, then the accepting run must reach some state in $F_A \times \set{ 1 }$,
hence it uses at least one labeled transition; projecting the run to the first component yields an accepting run of $A$ on $w$, so $w \neq \epsilon$ and $w \in \lang{A}$.

Regarding the time complexity, we can construct $Q_B$ in $O(|Q_A|$ time.
Since the transitions are represented by adjacency lists, 
we can construct $\Delta_B$ in $O(|\Delta_A|)$ time and $\Epsilon_B$ in $O(|\Epsilon_A|)$ time.
\end{proof}

\noindent{}\textbf{Remark:}
The standard $\epsilon$-elimination procedure first computes $\epsilon$-closures (the reflexive transitive closure) of $\Epsilon$ and then uses them to redirect labeled transitions and accepting states so as to remove $\epsilon$-transitions.
When $\Epsilon$ is given as adjacency lists,
computing all $\epsilon$-closures by running a BFS graph search from every state takes
$O(|Q|\,(|Q|+|\Epsilon|))$ time in the worst case.
Alternatively, representing $\Epsilon$ as an $|Q| \times |Q|$ Boolean adjacency matrix allows one to compute its transitive closure in $O(|Q|^3)$ time via the Floyd-Warshall algorithm,
or in $O(|Q|^\omega \log|Q|)$ time via fast Boolean matrix multiplication ~\cite{Furman:1970,Munro:1971,Fischer:1971}.
In contrast, Proposition~\ref{prop:epsilon-nfa:epsilon-remove} addresses the weaker task of removing only the empty word $\epsilon$
from the language (without eliminating $\epsilon$-transitions),
and it runs in linear $O(|Q|+|\Delta|+|\Epsilon|)$ time without computing the $\epsilon$-closure.

\paragraph{Ultimately Periodic Set.}

A set $P$ of non-negative integers is \emph{ultimately periodic}
if there is 4-tuple $(\mu, M, \lambda, T)$ where
\begin{itemize}
\item $\mu$ is a non-negative integer and $M \subseteq \set{0, 1, \ldots, \mu-1 }$; and
\item $\lambda$ is a positive integer and $T\subseteq \set{ 0, 1, \ldots, \lambda-1 }$
\end{itemize}
and the following holds:
\[
n \in P \iff
\begin{cases}
n \in M & \text{if } n < \mu, \\
(n \bmod \lambda) \in T & \text{otherwise}.
\end{cases}
\]


The class of ultimately periodic sets enjoys several closure properties.
In this paper, we will use the following ones in~\rm{\cite{Ginsburg:1966}}, for which we give a proof in Appendix~\ref{appendix:closure-of-UP} for self-containedness.
\begin{restatable}{lemma}{PropertiesOfUltimatelyPeriodic}\label{lem:ult_operations}
Let $P$ and $Q$ be two ultimately periodic sets. Then, we have the following.
\begin{itemize}
    \item The union $P \cup Q$ is ultimately periodic.
    \item The Minkowski sum $P + Q := \set{ p + q : p \in P, q \in Q}$ is ultimately periodic.
    \item For any positive integer $e$,
the set $e \cdot P := \set{ e p : p \in P }$ is ultimately periodic.
\end{itemize}
\end{restatable}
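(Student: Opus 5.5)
We prove the three closure properties directly from the definition by exhibiting explicit tuples $(\mu, M, \lambda, T)$. One observation is used repeatedly: if $P$ is ultimately periodic with $(\mu,\lambda)$, then for any $\mu' \ge \mu$ and any positive multiple $\lambda'$ of $\lambda$, $P$ is also described by a tuple with parameters $(\mu',\lambda')$. Indeed, take $M' := P \cap \set{0,\ldots,\mu'-1}$ and $T' := \set{ t < \lambda' : t \bmod \lambda \in T }$. For $n \ge \mu'$ we have $(n \bmod \lambda') \bmod \lambda = n \bmod \lambda$, so the description is unchanged. Equivalently, $P$ is ultimately periodic if and only if there exist $\mu \ge 0$ and $\lambda \ge 1$ such that $n \in P \iff n+\lambda \in P$ for all $n \ge \mu$. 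We will work with this characterization throughout.

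For the union, let $(\mu_P,\lambda_P)$ and $(\mu_Q,\lambda_Q)$ be the parameters of $P$ and $Q$. Set $\mu := \max(\mu_P,\mu_Q)$ and $\lambda := \mathrm{lcm}(\lambda_P,\lambda_Q)$. For $n \ge \mu$, membership of $n$ in $P$ and in $Q$ is each $\lambda$-periodic, so membership in $P \cup Q$ is $\lambda$-periodic as well. For scaling by $e$, $e\cdot P$ consists of multiples of $e$ only. For $n \ge e\mu_P$ we claim $n \in eP \iff n + e\lambda_P \in eP$. If $e \nmid n$, both sides are false. If $n = em$, then $m \ge \mu_P$ and $m \in P \iff m+\lambda_P \in P$. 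So the parameters $(e\mu_P, e\lambda_P)$ work.

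The Minkowski sum is the main step. If $P$ or $Q$ is empty, the sum is empty and trivially ultimately periodic, so assume both are nonempty. Write $P = P_0 \cup P_1$ and $Q = Q_0 \cup Q_1$, where $P_0 := P \cap [0,\mu_P)$ is finite and $P_1 := P \cap [\mu_P,\infty)$ is a finite union of arithmetic progressions $\set{ r + k\lambda_P : k \ge 0 }$, one for each $r \in [\mu_P, \mu_P+\lambda_P)$ with $r \bmod \lambda_P \in T_P$. Decompose $Q$ in the same way. Minkowski sum distributes over union, so $P+Q$ is a finite union of sums of the following kinds: a singleton plus a singleton (finite), a singleton plus a progression (a shifted progression), and a progression plus a progression. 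By the union case, it then suffices to show that $\set{a + k\lambda_P + l\lambda_Q : k,l\ge 0}$ is ultimately periodic. Let $g := \gcd(\lambda_P,\lambda_Q)$. The set of values $k\lambda_P + l\lambda_Q$ with $k,l \ge 0$ consists of multiples of $g$ and contains every multiple of $g$ beyond the Frobenius bound $\lambda_P\lambda_Q/g$ (the coin problem applied to $\lambda_P/g$ and $\lambda_Q/g$). Hence this set is ultimately periodic with period $g$ and threshold $\lambda_P\lambda_Q$. Shifting by $a$ preserves ultimate periodicity: take threshold $\mu + a$, the same period, and shift $M$ and $T$ accordingly.

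The only nonroutine ingredient is the Frobenius/coin-problem fact. We can either cite it or prove it quickly: the residues $k\lambda_P \bmod \lambda_Q$ for $0 \le k < \lambda_Q/g$ cover every multiple of $g$ modulo $\lambda_Q$, so each multiple of $g$ that is at least $\lambda_P\lambda_Q/g$ can be written with some $k < \lambda_Q/g$ and some $l \ge 0$. All the other steps are bookkeeping with the refinement observation from the first paragraph.
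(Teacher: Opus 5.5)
Your proof is correct. For the union and for $e\cdot P$ you do what the paper does: threshold $\max(\mu_P,\mu_Q)$ with period $\mathrm{lcm}(\lambda_P,\lambda_Q)$ for the union, and parameters $(e\mu_P, e\lambda_P)$ for scaling. Your Minkowski-sum argument, however, takes a different route.

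The paper's proof does not decompose the sets at all. With $\bar\lambda := \mathrm{lcm}(\lambda_P,\lambda_Q)$, it checks directly that $i \in P+Q \iff i+\bar\lambda \in P+Q$ for all $i \geq \mu_P+\mu_Q+\bar\lambda$.
\begin{itemize}
\item If $i=j+k$ with $j\in P$ and $k\in Q$, then $j\geq\mu_P$ or $k\geq\mu_Q$, so that summand can be shifted up by $\bar\lambda$.
\item Conversely, if $i+\bar\lambda = j'+k'$, then $(j'-\bar\lambda)+(k'-\bar\lambda) = i-\bar\lambda \geq \mu_P+\mu_Q$. So one summand can be shifted down by $\bar\lambda$ while staying above its threshold.
\end{itemize}
This is shorter, uses no number theory, and gives explicit parameters (period $\bar\lambda$, threshold $\mu_P+\mu_Q+\bar\lambda$) in one step.

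You instead split each set into a finite part and finitely many arithmetic progressions, then use distributivity and the union case. You handle the sum of two progressions via the coin problem. Your proof of that fact is fine: choosing $k<\lambda_Q/g$ with $k\lambda_P\equiv n \pmod{\lambda_Q}$ leaves $n-k\lambda_P\geq\lambda_P>0$.

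Your route costs more bookkeeping and gives a somewhat larger threshold, of order $\mu_P+\mu_Q+\lambda_P\lambda_Q$. That is harmless for how the lemma is used later in the paper. In exchange, it reveals more structure: for example, a sum of two progressions is eventually periodic with period $\gcd(\lambda_P,\lambda_Q)$. It is essentially the one-dimensional case of the usual semilinear-set argument.
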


\paragraph{Monoid.}

A \emph{monoid} $\mathcal{M}$ is a tuple $(M, \odot, 1_M)$ where $M$ is a base set, $\odot : M \times M \to M$ is an associative binary function --- $(m_1 \odot m_2) \odot m_3 = m_1 \odot (m_2 \odot m_3)$ for any $m_1, m_2, m_3 \in M$ --- and $1_M$ is the identity element --- $m \odot 1_M = 1_M \odot m = m$ for any $m \in M$.
An element $e\in M$ is \emph{idempotent} if $e\comp e = e$ holds. 
By definition, the identity element $1_M$ is idempotent, but the opposite direction does not necessarily hold.

The following objects are typical examples of monoids:
(1) The set of all strings of $\Sigma^*$ forms an infinite monoid $(\Sigma^*, \bullet, \epsilon)$ where$\_\bullet\_$ is concatenation: i.e.,  $w_1 \bullet w_2 = w_1 w_2$.
(2) The set of all mappings $X^X = \set{ f : X \to X }$ on a set $X$ forms a monoid $(X^X, \circ, \text{id}_X)$ where $\circ$ is the function composition and $\text{id}_X$ is the identity function.
(3) The set of non-negative integers with addition $(\naturals, +, 0)$ forms an infinite monoid.

Let $\mathcal{M} = (M, \comp, 1_M)$ be a monoid, $m \in M$ be an element of $M$, and $X, Y \subseteq M$ be subsets of $M$.
We naturally extend the monoid operator $\comp$ to sets as follows:
\[
m \comp X := \set{ m \comp x : x \in X}, \quad
X \comp m := \set{ x \comp m : x \in X},  \quad
X \comp Y := \set{ x \comp y : x \in X, y \in Y}.
\]

Let $\mathcal{M}_1 = (M_1, \comp_1, 1_{M_1})$ and $\mathcal{M}_2 = (M_2, \comp_2, 1_{M_2})$ be monoids.
If a function $\varphi : \mathcal{M}_1 \to \mathcal{M}_2$ has the following property, it is a \emph{monoid homomorphism}:
\[
\varphi(a \comp_1 b) = \varphi(a) \comp_2 \varphi(b).
\]

\noindent{}\emph{Example.}
Let $\mathcal{M_1}$ be the monoid formed by the set of all strings $\Sigma^*$ with concatenation and $\mathcal{M}_2$ be the infinite monoid of the non-negative integers $\naturals$ with addition $+$.
Then, the length function $\ell : \mathcal{M}_1 (= \Sigma^*) \to \mathcal{M}_2 (= \naturals)$ is a monoid homomorphism, as $\ell(w_1 w_2) = \ell(w_1) + \ell(w_2)$ for any string $w_1$ and $w_2$.

\newcommand{\mhom}{\alpha}

\paragraph{Transition Monoids.}
Let $\mathcal{M} = (M, \odot, 1_M)$ be a finite monoid,
$S\,(\subseteq M)$ be a subset of $M$, and $\mhom: \Sigma^* \to \mathcal{M}$ be a monoid homomorphism.
A language $L \subseteq \Sigma^*$ is \emph{recognized} by the triple $(\mathcal{M}, S, \mhom)$ if $L = \alpha^{-1}(S) = \set{ w : \alpha(w) \in S }$.
In this case, we refer to $\mathcal{M}$ as a \emph{transition monoid} of $L$.
\footnote{
Traditionally, a transition monoid is a transformation monoid $T \subseteq X^X$ (under function composition) on some set $X$.
$T$ is typically generated by the transition maps of an automaton (particularly, of a DFA).
Given a recognizing triple $(\mathcal{M}, S, \mhom)$,
we obtain a transformation monoid on $X = M$ via the right action $\rho_w(m) := m \odot \mhom(w)$.
Indeed, $w \in L \iff \alpha(w) \in S \iff \rho_w(1_M) \in S$.
Thus, we use ``transition monoid'' for $\mathcal{M}$ as a mild abuse of terminology.
}
The following classical result connecting regular languages and
recognizability by finite monoids is fundamental in automata theory.
\begin{proposition}[{\cite{Pin:1997,Pin:1995}\cite[Chap.~II]{Sakarovitch:2009}}]
Let $L \subseteq \Sigma^*$ be a language.
The following conditions are equivalent:
\begin{enumerate}
\item $L$ is regular: i.e., there exists a \regex{} $A$ or an $\epsilon$-NFA $A$ such that $L = \lang{A}$.
\item $L$ is recognized by some triple $(\mathcal{M}, S, \mhom)$.
\end{enumerate}
\end{proposition}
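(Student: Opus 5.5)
The plan is to prove the two implications separately, treating the direction (1)$\Rightarrow$(2) as the substantive one.

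\emph{(1)$\Rightarrow$(2).} Let $L=\lang{A}$ for a \regex{} $A$; by Proposition~\ref{prop:Thompson} we may take an $\epsilon$-NFA $N=(Q,\Sigma,\Delta,\Epsilon,q_0,F)$ with $\lang{N}=L$. For each word $u$ define the Boolean relation $R_u\subseteq Q\times Q$ by $(p,q)\in R_u$ iff $p\xRightarrow{u}q$. Here $p\xRightarrow{\sigma_1\cdots\sigma_n}q$ means a chain $p\xRightarrow{\sigma_1}\cdots\xRightarrow{\sigma_n}q$, and $R_\epsilon$ is the $\epsilon$-closure relation.

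Take $\mathcal{M}$ to be the set of relations $\set{R_u : u\in\Sigma^*}$, with composition defined by $R\odot R' := \set{(p,r) : \exists q,\ (p,q)\in R,\ (q,r)\in R'}$. The identity is $1_M := R_\epsilon$. The key check is that $u\mapsto R_u$ is a homomorphism, i.e.\ $R_{uv}=R_u\odot R_v$. The main (minor) obstacle is the interaction with $\epsilon$-moves. Because $\xRightarrow{\sigma}$ already absorbs $\epsilon$-closure on both sides, a run on $uv$ splits at the state reached after $u$. We choose the split point immediately after the last labeled step of $u$ (or at $p$ when $u=\epsilon$), and let any trailing $\epsilon$-moves be absorbed into the first step of the run on $v$; this is legitimate because $R_\epsilon$ is reflexive and transitive. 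For the same reason $R_\epsilon\odot R_u=R_u=R_u\odot R_\epsilon$, so $R_\epsilon$ really is the identity. Associativity is inherited from relational composition. The set $\mathcal{M}$ is finite, with at most $2^{|Q|^2}$ elements, and it is closed under $\odot$ by the homomorphism identity. Finally set $S:=\set{R\in M : \exists q\in F,\ (q_0,q)\in R}$; then $w\in L\iff R_w\in S$, so $(\mathcal{M},S,\mhom)$ with $\mhom(u)=R_u$ recognizes $L$.

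\emph{(2)$\Rightarrow$(1).} Given $(\mathcal{M},S,\mhom)$, build the DFA with state set $M$, initial state $1_M$, transitions $m\xrightarrow{\sigma}m\odot\mhom(\sigma)$, and final states $S$. A homomorphism from the free monoid $\Sigma^*$ is determined by its values on letters and must send $\epsilon$ to $1_M$ (taking monoid homomorphisms to preserve the unit, as is standard). Hence, by induction on $|w|$, the run on $w$ ends in $\mhom(w)$, and the DFA accepts exactly $\mhom^{-1}(S)=L$. Since DFAs are a special case of $\epsilon$-NFAs, $L$ is regular. If a \regex{} is required rather than an automaton, we invoke the classical Kleene conversion (state elimination) from automata to regular expressions, which we cite rather than reprove.
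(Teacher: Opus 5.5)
Your proof is correct. The paper only cites this proposition, but your (1)$\Rightarrow$(2) direction is essentially the construction in Lemma~\ref{lemma:NFA-transition-monoid}. The one difference is in how $\epsilon$-moves are handled. The paper first eliminates $\epsilon$-transitions and takes the Boolean-matrix monoid generated by the letter matrices, with unit $\mathbf{I}$. You instead fold the $\epsilon$-closure into every $R_u$ and use $R_\epsilon$ as the unit. This avoids the elimination step and handles $\epsilon\in L$ automatically; the only side effect, which is harmless, is that the unit of your $\mathcal{M}$ is generally not the identity matrix.

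One adjustment is needed in (2)$\Rightarrow$(1). The paper defines a monoid homomorphism by multiplicativity alone, so $\mhom(\epsilon)$ is only guaranteed to be idempotent. Starting your DFA in $1_M$ can then misclassify $\epsilon$: take $M=\{0,1\}$ under multiplication, $\mhom\equiv 0$, and $S=\{1\}$. Start the DFA in $\mhom(\epsilon)$ instead. Then $\mhom(\epsilon)\odot\mhom(\sigma_1)\odot\cdots\odot\mhom(\sigma_n)=\mhom(\sigma_1\cdots\sigma_n)$ follows from multiplicativity alone, and your unit-preservation convention is no longer needed.
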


For a \regex{} $E$, the standard construction---translating $E$ to an NFA, determinizing it to a DFA, and then taking the transition monoid---yields a transition monoid of size at most $N^N$ with $N = 2^{O(|E|)}$, and hence at most $ 2^{2^{O(|E|)}}$.

The following explicit construction yields a smaller transition monoid of size $2^{O(\,|E|^2\,)}$.
\begin{lemma}[\rm{\cite{Bojanczyk:2010,Holzer:2023}}]
\label{lemma:NFA-transition-monoid}
Let $E$ be a \regex{} over an alphabet $\Sigma$.
There exists a triple $(\mathcal{M}, S, \mhom)$ that recognizes $\lang{E}$.
In particular, the size of $\mathcal{M}$ is $2^{O(|E|^2)}$ and we can construct it in $2^{O(|E|^2)}$ time.
\end{lemma}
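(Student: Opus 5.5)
The plan is to use the Boolean-matrix transition monoid of a Thompson NFA. By Proposition~\ref{prop:Thompson}, translate $E$ in $O(|E|)$ time into an $\epsilon$-NFA $N_E=(Q,\Sigma,\Delta,\Epsilon,q_0,F)$ with $|Q|=O(|E|)$. Let $M$ be the set of all binary relations on $Q$, i.e.\ $|Q|\times|Q|$ Boolean matrices. Equip $M$ with Boolean matrix product $R_1\odot R_2:=\{(p,r):\exists q,\ (p,q)\in R_1,(q,r)\in R_2\}$. As identity take the $\epsilon$-closure relation $1_M:=\{(p,q):p\xRightarrow{\epsilon}q\}$, restricted to a suitable submonoid as explained below. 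The cardinality is immediately $|M|\le 2^{|Q|^2}=2^{O(|E|^2)}$, which gives the size bound.

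To avoid fiddling with $\epsilon$-closures in the identity, I would normalize relations through the closure $\bar{\ }$. For each letter set
\[
\alpha(\sigma):=\{(p,q):p\xRightarrow{\sigma}q\},
\]
which is already closed on both sides under $\xRightarrow{\epsilon}$ (it is $\epsilon^*\,\sigma\,\epsilon^*$). Extend by $\alpha(\sigma_1\cdots\sigma_n):=\alpha(\sigma_1)\odot\cdots\odot\alpha(\sigma_n)$ and $\alpha(\epsilon):=1_M$. Now take $\mathcal{M}$ to be the submonoid generated by $1_M$ and the $\alpha(\sigma)$. Every generator $R$ satisfies $1_M\odot R=R\odot 1_M=R$, because $1_M$ is reflexive and transitive and the $\alpha(\sigma)$ are $\epsilon$-closed. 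Hence $1_M$ is a genuine identity on $\mathcal{M}$, associativity is inherited from relational composition, and $\alpha$ is a homomorphism by construction.

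Correctness comes from an induction on $|w|$: $(p,q)\in\alpha(w)$ iff $p\xRightarrow{\sigma_1}\cdots\xRightarrow{\sigma_n}q$, or iff $p\xRightarrow{\epsilon}q$ when $w=\epsilon$. So with $S:=\{R\in\mathcal{M}:\exists f\in F,\ (q_0,f)\in R\}$ we get $w\in\lang{E}\iff\alpha(w)\in S$, and the triple $(\mathcal{M},S,\alpha)$ recognizes $\lang{E}$.

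For the construction time, compute $1_M$ by a BFS from each state in $O(|Q|(|Q|+|\Epsilon|))$ time, and compute the $|\Sigma|$ letter matrices in polynomial time. Then generate $\mathcal{M}$ by a closure search, multiplying each discovered element by each generator until no new element appears. This takes $|\mathcal{M}|\cdot|\Sigma|\cdot\mathrm{poly}(|E|)$ time, with a hash or trie of matrices for membership. Storing the full multiplication table costs $|\mathcal{M}|^2\mathrm{poly}(|E|)=2^{O(|E|^2)}$, as does computing $S$. The only point that needs care is the identity/$\epsilon$-closure subtlety, which the normalization above handles; everything else is routine. Since $\Sigma$ is fixed, the total is $2^{O(|E|^2)}$.
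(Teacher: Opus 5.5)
Your proposal is correct and follows essentially the same route as the paper. Both take Boolean matrices over the $O(|E|)$ states of the Thompson NFA, bound $|\mathcal{M}|\le 2^{|Q|^2}$, and generate $\mathcal{M}$ by closing the letter matrices under multiplication. The only difference is cosmetic: the paper first removes $\epsilon$-transitions (keeping the same state set) so the identity is the ordinary identity matrix, whereas you keep them and use the $\epsilon$-closure relation as the identity of the generated submonoid, which you correctly check is a two-sided identity there.
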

\begin{proof}
%
By Proposition~\ref{prop:Thompson}, we can construct an $\epsilon$-NFA $A$ for $E$
with $O(|E|)$ states and transitions in time $O(|E|)$.
Moreover, by computing the $\epsilon$-closure of $A$,
we can construct an $\epsilon$-free NFA $B$ such that $\lang{E} = \lang{B}$ in time polynomial in the size of $A$.
In particular, $A$ and $B$ have the same set of states.

From the $\epsilon$-free NFA $B$,
we construct our Boolean matrix monoid $\mathcal{M}$ as follows.
We write $B = (Q, \Sigma, \Delta, \Epsilon = \emptyset, q_0, F)$.
For each $\sigma\in \Sigma$, let $M^{\sigma}$ be the $|Q| \times |Q|$ Boolean matrix defined by $M^{\sigma}_{p,q} = 1$ iff $(p, \sigma, q) \in \Delta$.
Let $\mathcal{M}$ be the submonoid of $\set{0, 1}^{Q \times Q}$
generated by $\set{ M^{\sigma} : \sigma \in \Sigma }$ under Boolean matrix multiplication --- using the Boolean OR $\vee$ as addition and the Boolean AND $\wedge$ as multiplication --- with the identity element given by the identity matrix $\mathbf{I}$.

Now we define our monoid homomorphism $\alpha : \Sigma^* \to \mathcal{M}$ as follows:
\[
\alpha(\sigma) := M^{\sigma}.
\]
Observe that $w \in \lang{B}$ iff there exists $q_f \in F$
with $(\alpha(w))_{q_0, q_f} = 1$.
Therefore, letting
\[
S= \set{ X \in \mathcal{M} : \exists q \in F.\, X_{q_0, q} = 1 },
\]
the triple $(\mathcal{M}, S, \alpha)$ recognizes $\lang{B}$ and thus $\lang{E}$.

Finally, since $\mathcal{M} \subseteq \set{0,1}^{Q\times Q}$,
we have $|\mathcal{M}|\leq 2^{|Q|^2}$.
Since $|Q|=O(|E|)$, this yields $|\mathcal{M}| = 2^{O(|E|^2)}$,
and $\mathcal M$ can be constructed by
closing $\set{ \mathbf{I} } \cup \set{ M^\sigma : \sigma \in \Sigma }$ under multiplication, in time polynomial in $|\mathcal{M}|$ and $|Q|$.
\end{proof}



\paragraph{Relational Transition Monoid $\transm{E}$ and Associated Homomorphism $\Delta_E$ of \regex{} $E$.}\mbox{}

Let $E$ be a \regex{}, and let $(\mathcal{M}, S, \alpha)$ be the recognizing triple constructed in Lemma~\ref{lemma:NFA-transition-monoid} for $E$.

Hereafter, we refer to the Boolean matrix monoid $\mathcal{M}$ as the \emph{(relational) transition monoid} of $E$
and write $\transm{E}$ to denote $\mathcal{M}$.
Moreover, we write $\Delta_{E}$ to denote the homomorphism $\mhom$ and set $S_E := S$.
With this notation, we have the following:
\[
w \in \mathbb{L}(E) \iff \Delta_E(w) \in S_E.
\]

\paragraph{Factorization Forest.}


Let $E$ be a \regex{}. Consider the following range evaluation problem over a fixed domain.
\begin{enumerate}
\item We receive a domain (string) $w$ and perform some pre-processing for $w$.
\item For each query $(i, j)$ with $1 \leq i < j \leq |w|$, we decide if $w[i \btw j] \in_? \lang{E}$ in $O(1)$ time for fixed $E$.
\end{enumerate}
To answer every query in $O(1)$ time for fixed $E$, we employ \emph{factorization forests}~\cite{Bojanczyk:2020,Colcombet:2021,Simon:1990}.

Let $\Sigma$ be a finite alphabet, $\mathcal{M}$ be a finite monoid, and $\varphi : \Sigma^* \to \mathcal{M}$ be a homomorphism from the set of strings $\Sigma^*$.
Let $w = \sigma_1 \sigma_2 \ldots \sigma_N \in \Sigma^+$ be a non-empty string.
A $\varphi$-factorization forest of $w$ (Ramseyan factorization tree), denoted by $\fforest_\varphi(w)$, is an ordered rooted tree in which a letter is assigned to each leaf, and reading those letters on leaves from left to right yields $w$.
Each node $v$ is annotated by a value $\nu(v) \in \mathcal{M}$ and corresponds to a contiguous factor of $w$.
The annotation $\nu(\cdot)$ satisfies the following condition:
\begin{description}
\item[Leaf] If $v$ is a leaf with letter $\sigma_i$, then $\nu(v) = \varphi(\sigma_i)$.
\item[Binary] If $v$ has two children $c_1$ and $c_2$, then $\nu(v) = c_1 \odot c_2$.
\item[Idempotent] If $v$ has at least three children $c_1, c_2, \ldots, c_m$, then $\nu(c_1) = \nu(c_2) = \cdots = \nu(c_m) = e$ is an idempotent element.
It also holds that $\nu(v) = e$ due to the idempotency of $e$.
\end{description}

For each node $v$ of the factorization forest,
we write $[\ell_v, r_v)$ to denote the interval exactly covered by $v$, $\set{\ell_v, \ell_v + 1, \ldots, r_v - 1}$,
where $\ell_v$ refers to the leftmost index managed by the node $v$ and
$r_v$ refers to the index immediately to the right of the rightmost index managed by the node $v$.

\medskip
\noindent{}\emph{Example of Factorization Forest.}
Consider the following NFA $N$ over $\Sigma = \set{ a, b }$.
\begin{center}
\scalebox{0.8}{
\begin{tikzpicture}[
  ->, >=Stealth,
  node distance=2.2cm,
  stsize/.store in=\stsize, stsize=11mm,
  every state/.style={circle, draw, thick, fill=gray!10, minimum size=\stsize, inner sep=0pt},
  accept ring/.style={draw, double, circle, minimum size=\stsize, inner sep=0pt, fill=none},
  initial text={},
  shorten >=1pt, shorten <=1pt
]

\node[state, initial]   (q0) {$q_0$};
\node[state]            (q1) [right of=q0] {$q_1$};
\node[state, accept ring] (q2) [right of=q1] {$q_2$};
\node[state]            (qd) [right of=q2] {$\bot$};

\path[every loop/.style={looseness=2}]
(q0) edge[loop above] node{$b$} ()
     edge[bend left]  node[auto]{$a$} (q1)
(q1) edge[loop above] node{$b$} ()
     edge[bend left]  node[auto]{$a$} (q2)
(q2) edge[loop above] node{$b$} ()
     edge[bend left]  node[auto]{$a$} (qd)
(qd) edge[loop above] node{$a,b$} ();

\end{tikzpicture}
}
\end{center}

This automaton is actually a DFA and has four states $q_0$, $q_1$, $q_2$, and $\bot$.
This DFA recognizes the language $b^* a b^* a b^*$: i.e., the set of words that contain exactly two occurrences of $a$.

There are two base monoid elements (matrices) $\trans_a$ and $\trans_b$ of the relational transition monoid of $N$, which correspond to the sets of transitions by the letters $a$ and $b$:
\[
\delta_a =
\scalebox{0.8}{$
\begin{pNiceMatrix}[first-row,first-col]
& q_0 & q_1 & q_2 & \bot \\
q_0  & 0 & 1 & 0 & 0 \\
q_1  & 0 & 0 & 1 & 0 \\
q_2  & 0 & 0 & 0 & 1 \\
\bot & 0 & 0 & 0 & 1 
\end{pNiceMatrix}
$}, \qquad
\delta_b =
\scalebox{0.8}{$
\begin{pNiceMatrix}[first-row,first-col]
& q_0 & q_1 & q_2 & \bot \\
q_0  & 1 & 0 & 0 & 0 \\
q_1  & 0 & 1 & 0 & 0 \\
q_2  & 0 & 0 & 1 & 0 \\
\bot & 0 & 0 & 0 & 1
\end{pNiceMatrix}
$}.
\]
By composing these two matrices by using the Boolean matrix multiplication $(\odot)$,
we obtain the following two elements:
\[
\delta_2 =
\scalebox{0.8}{$
\begin{pNiceMatrix}[first-row,first-col]
& q_0 & q_1 & q_2 & \bot \\
q_0  & 0 & 0 & 1 & 0 \\
q_1  & 0 & 0 & 0 & 1 \\
q_2  & 0 & 0 & 0 & 1 \\
\bot & 0 & 0 & 0 & 1 
\end{pNiceMatrix}
$}, \qquad
\delta_3 =
\scalebox{0.8}{$
\begin{pNiceMatrix}[first-row,first-col]
& q_0 & q_1 & q_2 & \bot \\
q_0  & 0 & 0 & 0 & 1 \\
q_1  & 0 & 0 & 0 & 1 \\
q_2  & 0 & 0 & 0 & 1 \\
\bot & 0 & 0 & 0 & 1
\end{pNiceMatrix}
$}.
\]
For example, $\trans_2 = \trans_a \comp \trans_b$ and $\trans_3 = \trans_2 \comp \trans_a$.
There are no other matrices in the transition monoid; therefore,
$\transm{N} = (\set{ \trans_a, \trans_b=\mathbf{1}, \trans_2, \trans_3}, \odot, \mathbf{1})$ where $\mathbf{1}$ is the identity matrix.

Let us consider the following factorization forest over a string $aaababaabaaabbbbb$:
\begin{center}
\scalebox{0.8}{
\input{onerewb/ffzu2.tex}
}
\end{center}
For simplicity, we do not display the values for the leaf nodes. Assume every leaf (i.e., letter $\sigma$) has $\nu(\sigma) = \delta_\sigma$.

Then, we can easily check
(1) that every binary node $n$ has the correct value $\nu(c_1) \comp \nu(c_2)$ using its child $c_1$ and $c_2$;
and also
(2) for the two idempotent nodes, $\delta_3 = \delta_3 \comp \delta_3 \comp \delta_3$ holds and also $\delta_b = \delta_b \comp \delta_b \comp \delta_b \comp \delta_b$ holds.

The (red) idempotent node $v$ labeled with $\delta_3$ covers the interval $[1, 2, \ldots, 12]$. Thus, we write $[\ell_v, r_v) = [1, 13)$.
Similarly, the idempotent node $v'$ labeled with $\delta_b$ covers the interval $[14, 15, 16, 17]$; thus, $[\ell_{v'}, r_{v'}) = [14, 18)$.
Every leaf node $\ell_i$ of the index $i$ exactly covers $[i]$; thus, $[\ell_i, r_i) = [i, i+1)$.

There are multiple factorization forests for a single string.
For every monoid homomorphism $\varphi : \Sigma^* \to \mathcal{M}$ from $\Sigma^*$ to a \textbf{finite} monoid $\mathcal{M}$,
the following result ensures that we can effectively compute a factorization forest $\fforest_\varphi(w)$ for a given string $w$ in linear time $O(|w|)$.
Moreover, the height of the factorization forest does not depend on $w$; indeed, it just depends on $\mathcal{M}$.

\begin{lemma}[Factorization Forest{\cite{Bojanczyk:2012,Colcombet:2021,Kufleitner:2007,Kufleitner:2008,Simon:1990}}]
Let $\mathcal{M}=(M, \comp, 1_{M})$ be a finite monoid and
$\lambda$ be the computational cost of performing $\comp$ once.
For every word $w \in \Sigma^*$, there is a $\fforest_{\varphi}(w)$ of height $\leq 3|M|$, which depends only on $M$, but not on $w$ or $\varphi$.
Moreover, we can compute $\fforest_{\varphi}(w)$ in $O(|M|^3 \cdot \lambda + |M||w|)$ time\footnote{
Although Kufleitner does not mention the detailed time complexity in~\cite{Kufleitner:2007,Kufleitner:2008},
we can obtain the time complexity $O(|M|^3 \cdot \lambda + |M| \cdot |w|)$ as follows.
We first compute the data structure for $\mathcal{J}$-class (Green's relation), which is a basic tool to build factorization forests, in $O(|M|^3 \cdot \lambda)$.
Then we can build $\fforest_\varphi(w)$ in $O(|M| \cdot |w|)$ time using the recursive procedure in~\cite{Kufleitner:2007}.}.
\end{lemma}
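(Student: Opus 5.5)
We prove the statement (the Factorization Forest lemma) via Simon's construction, organized along the $\mathcal{J}$-class structure of $\mathcal{M}$ as in Kufleitner's proof. The argument is an induction on the position of $\mathcal{J}$-classes in the $\mathcal{J}$-order, with height bounded in terms of the number of classes.

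\emph{Preprocessing.} We tabulate the multiplication table of $\mathcal{M}$ using $|M|^2$ products, at cost $O(|M|^2\lambda)$. We then compute Green's relations $\mathcal{R},\mathcal{L},\mathcal{J}$: for each $m$ we form $m\comp M$, $M\comp m$ and $M\comp m\comp M$ by table lookups, in $O(|M|^3)$. We also list the idempotents. This stays within the $O(|M|^3\lambda)$ budget.

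\emph{Structural claim.} For a word $w$, let $J$ be the $\mathcal{J}$-class of $\varphi(w)$; by induction we may assume every $\mathcal{J}$-class strictly above $J$ is already handled. We cut $w$ at those positions where the running prefix value first drops into $J$; more precisely, we factor $w=u_0 a_1 u_1 a_2\cdots$, where each block $u_i a_i$ is a minimal factor whose value falls into $J$ or below. Inside each block the proper prefixes lie strictly above $J$, so the induction hypothesis gives them trees of height $h(\text{above})$. Consecutive block values lying in $J$ are then combined through the standard Rees-matrix/$\mathcal{R}$--$\mathcal{L}$ argument: whenever $x,y\in J$ and $x\comp y\in J$, the class $\mathcal{R}(x)\cap\mathcal{L}(y)$ contains an idempotent. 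Grouping the blocks by their $(\mathcal{R},\mathcal{L})$ coordinates turns each run into a left binary spine, an idempotent node over the middle $\mathcal{H}$-type pieces, and a right binary spine. This adds a constant number of levels (at most $3$) per $\mathcal{J}$-class. Summing over the at most $|M|$ classes gives height $\le 3|M|$, independent of $w$ and $\varphi$.

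\emph{Algorithm.} We implement the recursion as a left-to-right scan. We maintain a stack of partial trees, one per active $\mathcal{J}$-level, each annotated with its value, so every letter is pushed and then merged at most $O(|M|)$ times, each merge being an $O(1)$ table lookup. This gives $O(|M|\,|w|)$ total.

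The main obstacle is the combination step within a single $\mathcal{J}$-class. The idempotent node needs \emph{all} children to have the same idempotent value, not merely values in the same $\mathcal{H}$-class. Achieving this requires regrouping, so that each child spans from one $\mathcal{L}$-boundary to the next $\mathcal{R}$-boundary, and then invoking stability of finite monoids ($x\comp y\in J$ with $x\in J$ forces $x\comp y\,\mathcal{R}\,x$). Keeping this regrouping linear-time in the streaming scan is the delicate part; if it fails, we would instead fall back on the inductive proof of~\cite{Kufleitner:2007} and only analyze its running time as in the footnote.
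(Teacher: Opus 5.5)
Your height argument has a genuine gap, at the step you yourself flagged. The claim that each $\mathcal{J}$-class costs at most $3$ levels is false, so no regrouping can rescue the accounting ``$3$ levels per class, at most $|M|$ classes''.

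Take $M=\mathbb{Z}/n\mathbb{Z}$ under addition, $\varphi(a)=1$, and $w=a^{n-1}$. Since $M$ is a group, it is a single $\mathcal{J}$-class. Every nonempty factor $a^k$ with $1\le k\le n-1$ has value $k\neq 0$, and $0$ is the only idempotent. So no node can be idempotent, every inner node is binary, and the height is at least $\log_2(n-1)$, which is unbounded.

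In this example your regrouping by $\mathcal{L}$/$\mathcal{R}$-boundaries plus stability does nothing, because there is only one $\mathcal{H}$-class. The ``binary spines'' over unboundedly long runs would themselves have unbounded height. What the within-class step actually needs is an inner induction with two parts:
\begin{itemize}
\item Simon's group lemma: inside a maximal subgroup, cut where the running product equals a fixed element, so that the blocks evaluate to the identity, and induct on the set of prefix values that occur.
\item An induction over the boundary $\mathcal{H}$-classes $\mathcal{L}(x_i)\cap\mathcal{R}(x_{i+1})$ of consecutive block values $x_i,x_{i+1}$.
\end{itemize}
The depth of this induction depends on the $\mathcal{R}$-classes, $\mathcal{L}$-classes and subgroup orders of $J$. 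The bound $3|M|$ in the literature comes from a size-dependent accounting, not from a constant per class.

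A smaller slip: Clifford--Miller puts the idempotent in $\mathcal{L}(x)\cap\mathcal{R}(y)$. The class $\mathcal{R}(x)\cap\mathcal{L}(y)$ that you name is the $\mathcal{H}$-class of $x\comp y$, and it need not be a group.

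The paper does not reprove this lemma. Existence and the height bound are imported from Simon, Kufleitner and Colcombet. The footnote only accounts for the running time: Green's relations in $O(|M|^3\lambda)$, which your preprocessing reproduces, and then Kufleitner's recursive construction in $O(|M|\,|w|)$. Your fallback is exactly this route, and as things stand it is the only part of your proposal that actually establishes the statement.

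Your single-pass stack algorithm is also not justified. The construction is top-down and must know global data before it cuts: the $\mathcal{J}$-class of $\varphi(w)$, and which boundary class or prefix value to cut at. The straightforward implementation instead processes the recursion level by level, computing prefix values of the current pieces by table lookup. Pieces on one level are disjoint, so each level costs $O(|w|)$. With $O(|M|)$ levels this gives the claimed $O(|M|\,|w|)$.
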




We also refer to the following application of factorization forests.
\begin{lemma}[Range Evaluation{\cite{Bojanczyk:2012,Colcombet:2021}}]\label{lemma:constant-eval-by-ff}
%

Let $w \in \Sigma^*$ be a string, $\varphi : \Sigma^* \to \mathcal{M}$ be a monoid homomorphism, and $\fforest_\varphi(w)$ be a factorization forest of $w$ and $\varphi$.
Then, we can compute $\varphi(w[i \btw j])$ in $O(h \cdot \lambda)$-time, where $h$ is the height of $\fforest_\varphi(w)$ and $\lambda$ is the computational complexity of the monoid operation cost of $\mathcal{M}$.
\end{lemma}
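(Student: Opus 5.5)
The plan is to walk the factorization forest from the leaves of positions $i$ and $j$ up to their lowest common ancestor. At each level we only need a constant number of monoid products, provided each node has precomputed aggregates of its children. So in preprocessing we compute and store, for every node $v$ with children $c_1,\dots,c_m$, the value $\nu(v)$. For every child $c_t$ we also store the prefix product $P_{v,t}:=\nu(c_1)\comp\cdots\comp\nu(c_t)$ and the suffix product $S_{v,t}:=\nu(c_t)\comp\cdots\comp\nu(c_m)$.

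For an idempotent node, every child value equals the idempotent $e$, so every prefix or suffix product of at least one child equals $e$. These aggregates therefore cost nothing to store beyond $\nu(v)$. For a binary node they are just the two children's values. Preprocessing hence takes time linear in the size of the forest times $\lambda$, which is allowed since the statement bounds only the query time.

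Given a query $(i,j)$, let $u$ be the lowest common ancestor of leaves $\ell_i$ and $\ell_j$. We compute it in $O(h)$ time by climbing parent pointers, after aligning depths. Let $a$ and $b$ be the children of $u$ on the paths to $\ell_i$ and $\ell_j$ respectively. Then $a$ lies strictly left of $b$, and
\[
\varphi(w[i\btw j]) = R_i \comp M \comp L_j ,
\]
where the three factors are as follows. $R_i$ is the value of the suffix of $[\ell_a,r_a)$ starting at $i$. $M$ is the product of the children strictly between $a$ and $b$. $L_j$ is the value of the prefix of $[\ell_b,r_b)$ ending at $j$.

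The middle factor $M$ takes $O(1)$ operations. If $u$ is binary, $M=1_M$. If $u$ is idempotent, $M$ is $e$ when there is at least one child between $a$ and $b$, and $1_M$ otherwise.

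The factor $R_i$ is computed by walking up from $\ell_i$ to $a$. We maintain the value $x$ of the suffix of the current node's interval that starts at $i$, beginning with $x=\varphi(w[i])$. When we move from a node $c$ (the $t$-th child) to its parent $p$, we update $x\gets x\comp S_{p,t+1}$, where $S_{p,t+1}$ is the product of the siblings to the right of $c$. This product is $1_M$ if there are none, and is read off in $O(1)$ time as above. Symmetrically, $L_j$ is computed from $\ell_j$ up to $b$ using prefix products of the left siblings. Each step costs one monoid operation, and there are at most $h$ steps on each side.

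The total query time is therefore $O(h\lambda)$. Correctness follows by associativity, since the concatenated factors tile $[i,j]$ exactly.

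The only delicate point is to obtain the sibling products in $O(1)$ for high-arity nodes. This is exactly where idempotency enters: a product of $k\ge1$ copies of $e$ is $e$, so no range structure over the children is needed. For this we need the position index $t$ of each child within its parent, which we store during preprocessing.
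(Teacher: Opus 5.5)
Your proof is correct and follows essentially the same idea as the paper's sketch. Both cover $w[i\,..\,j]$ by $O(h)$ pieces lying along the two boundary paths below the lowest common ancestor, and collapse every run of consecutive children of an idempotent node into the single value $e$. The paper phrases this as a top-down search, while you climb bottom-up from $\ell_i$ and $\ell_j$ using stored child indices. Your direction is equivalent and slightly more convenient, since it avoids locating the child containing $i$ at a high-arity node.
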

\begin{proof}[(Proof Sketch)]
We search nodes in $\fforest_\varphi(w)$ top-down to exactly cover $w[i \btw j]$.
Efficiently using idempotent nodes, we can achieve $O(|M|)$ monoid operations; therefore, we totally need $O(|M| \cdot \lambda)$-time.
For more detailed construction, refer to~\cite[Section~4]{Bojanczyk:2012}~and~\cite{Colcombet:2021}.
\end{proof}

\paragraph{Suffix Tree.}

\newcommand{\nword}[1]{\Gamma(#1)}
\newcommand{\nlen}[1]{\gamma(#1)}
\def\ndepth{\mathit{depth}}
\newcommand{\isSuffix}[1]{\mathit{isSuffix}(#1)}
\newcommand{\Occ}[1]{\mathit{Occ}(#1)}

The \emph{suffix tree}~\cite{Crochemore:2007,Jewels:Book,Gusfield:Book} $\stree_w$ of a string $w$ is an edge-labeled rooted tree such that 
\begin{itemize}
    \item each edge is equipped with a substring of $w$, and
    \item for each index $i\in [|w|]$, there is a node $v$ such that the string obtained by concatenating the strings on the edges on the root-$v$ path (in this order) coincides with the suffix $w[i\btw |w|]$.
\end{itemize}
For a given string $w$, $\stree_w$ can be computed in linear time $O(|w|)$, where labels on edges are represented by intervals.\todo{定義が書いてなかったので書き換えました}

We use the following notation:
\begin{itemize}
\item $\nword{v}$: The path label of $v$: i.e., the string obtained by concatenating the edge labels on the path from the root to $v$.
%
%
\item $\isSuffix{v}$:
A binary flag indicating whether $\nword{v}$ is a suffix of $w$.
Specifically, $\isSuffix{v} = 1$ if and only if $\nword{v}$ is a suffix of $w$.
\end{itemize}
For a suffix node $v$ (i.e., $\isSuffix{v}=1$),
we define $\Occ{v}$ as the starting position of the suffix $\nword{v}$ in $w$: $\Occ{v} := |w| - |\nword{v}| + 1$.
For a node $v$, we write $\stree_w[v]$ for the subtree of $\stree_w$ rooted at $v$.

\smallskip
\noindent{}\textbf{Remark.}
Throughout this paper, we consider suffix trees without appending the special end-marker symbol \$.
In particular, some suffixes may end at internal nodes; we mark such nodes with a Boolean flag $\isSuffix{v}$, which is 1 iff the path label $\nword{v}$ is a suffix of $w$.

\medskip
\noindent{}\emph{Example of Suffix Tree.}
Let us consider the suffix tree for the following string of length 10:
\[
w =
\begin{array}{ll}
\text{index} & 1\,2\,3\,4\,5\,6\,7\,8\,9\,0 \\\hline
\text{letter} & a\,a\,b\,a\,a\,b\,a\,b\,b\,a
\end{array}
\]
\begin{center}
    \input{onerewb/streezu2.tex}
\end{center}

This figure puts an integer $i$ to a suffix node $v$ (i.e., $\isSuffix{v} = 1$) where $i = \Occ{v}$.
As the figure shows, every leaf node is labeled with an integer $i$ where $1 \leq i \leq |w|$.
A leaf labeled with $i$ corresponds to the suffix of $w$ starting at position $i$.
For example, the leaf $\ell_5$ labeled 5 represents the suffix starting from $5$, i.e., $w[5 \btw |w|] = ababba$,
and its path-label is $\nword{\ell_5} = a\,b\,a\,bba$.

In addition to leaves, some inner nodes are also labeled with integers.  
For example, the nodes labeled $9$ and $10$ in the figure correspond to suffixes of $w$.
Let $n_9$ and $n_{10}$ denote these nodes. 
Then $\nword{n_9} = ba$ and $\nword{n_{10}} = a$, both of which are suffixes of $w$.  
It follows that $\Occ{n_9} = 9$ and $\Occ{n_{10}} = 10$.

The subtree $\stree_w[n_{10}]$ contains all suffixes of $w$ that start with the letter $a = \nword{n_{10}}$.
Similarly, the subtree $\stree_w[n_{9}]$ contains all suffixes of $w$ that start with the string $ba = \nword{n_{9}}$

\smallskip
\noindent{}\textbf{Remark:} Unlike the above illustration, every actual edge from a node $v$ to its child $u$, denoted $v \xrightarrow{(l, r)} u$, is labeled a pair of integers $1 \leq l \leq r \leq |w|$.
The pair $(l, r)$ represents the substring $w[l \btw r]$.
This path compression is crucial to build the suffix tree in linear time.
It is noteworthy that on the paths from a node $v$ to two of its children, $u_1$ and $u_2$, 
where $v \xrightarrow{(l_1, r_1)} u_1$ and $v \xrightarrow{(l_2, r_2)} u_w$,
the first characters of the edge labels must be different, i.e., $w[l_1] \neq w[l_2]$ holds.

\paragraph{Heavy-Light Decomposition (of Suffix Tree).}

\newcommand{\subt}[1]{\stree[#1]}
\newcommand{\heavysym}{\sigma_{\!\heavy}}

Let $\stree$ be the suffix tree of a string $w$.
We classify every edge as either heavy or light as follows.  
For each non-leaf node $u \in \stree$ with a left child $\ell$ and a right child $r$:
\begin{itemize}
\item If $|\,\subt{\ell}\,| \geq |\,\subt{r}\,|$ (i.e., the number of nodes in the left subtree $\geq$ the number of nodes in the right subtree),
then the edge  $u \to \ell$ is \emph{heavy} and the edge $u \to r$ is \emph{light}.
In case of equality, the left edge is chosen as heavy by convention.
\item Otherwise, the edge $u \to \ell$ is \emph{light} and the edge $u \to r$ is \emph{heavy}.
\end{itemize}
For each non-leaf node $u \in \stree$ with a single child $v$, the edge $u \to v$ is a \emph{heavy} edge.
A heavy path $\hpath$ is a maximal path such that every edge is heavy.
%
%
%
The collection of all heavy paths forms a partition of the nodes of $\stree$.
We refer to this partition as the \emph{heavy-light decomposition} of $\stree$.

\todo{使ってない定義を消しました（もし使ってたら復活させておいてください）}

\medskip
\noindent{}\emph{Example of Heavy Light Decomposition.}
The left-hand and right-hand sides of the following figure illustrate the heavy-light decomposition of the suffix trees for $aabaababba$ and $babbabbba$, respectively.

\begin{center}
\includegraphics[scale=0.45]{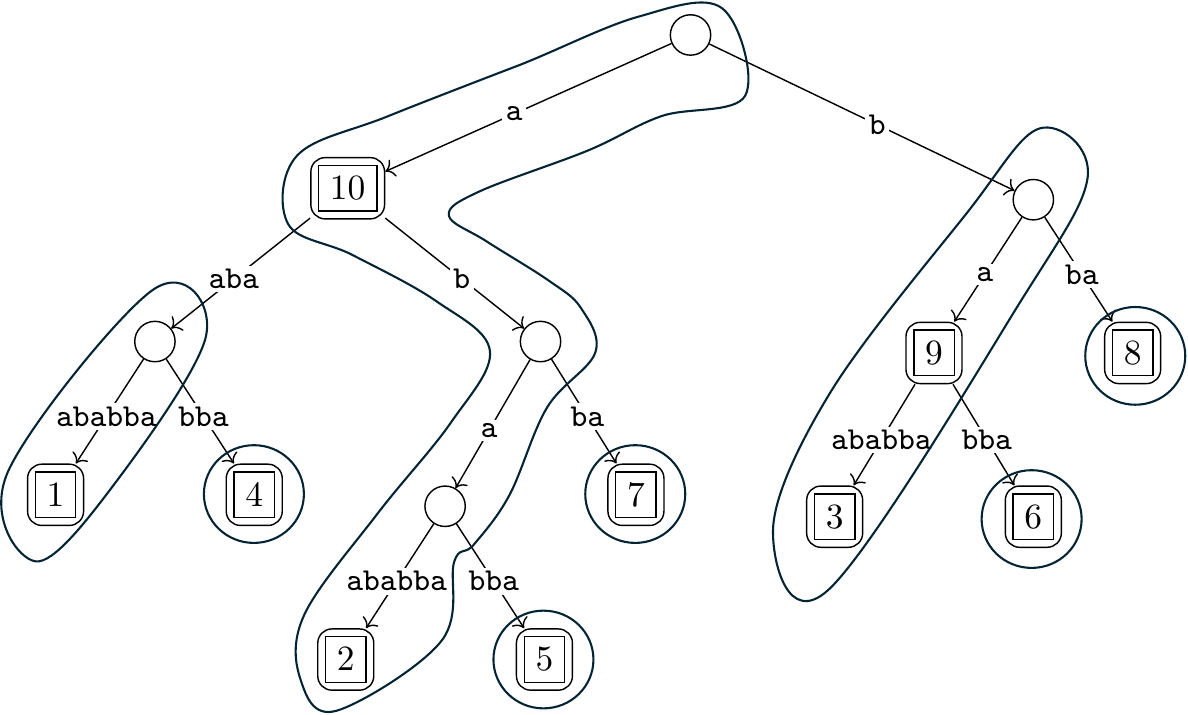}%
\includegraphics[scale=0.45]{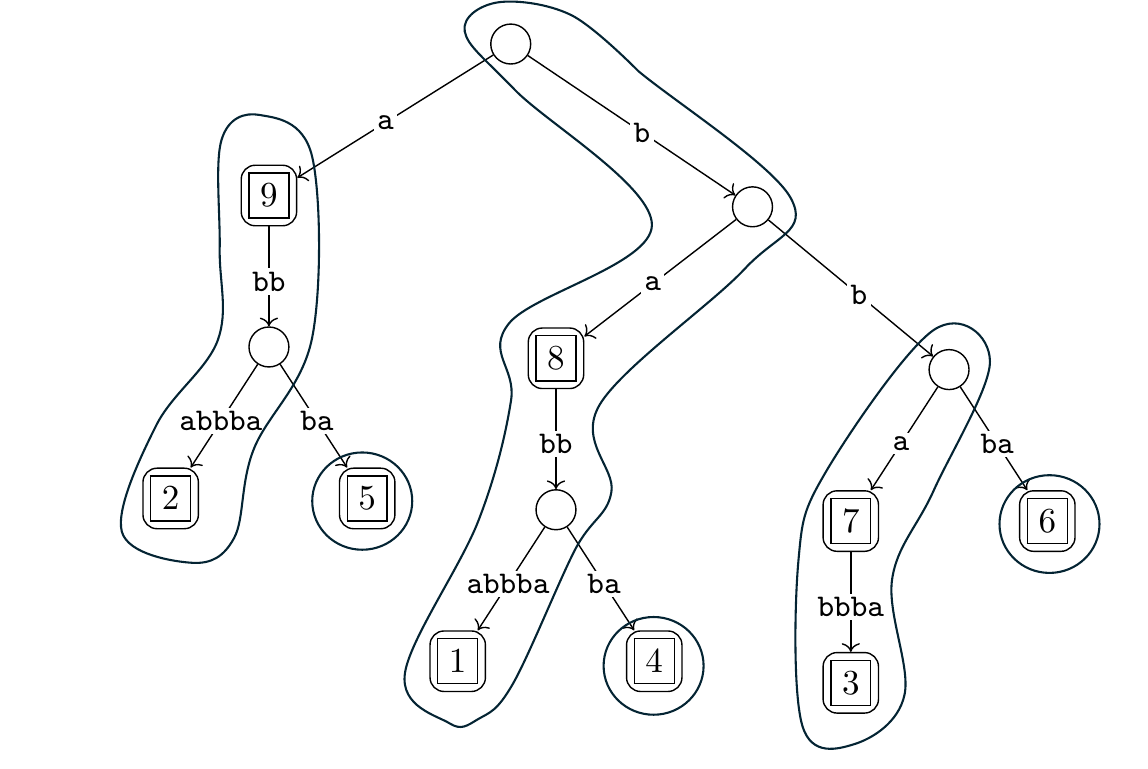}
\end{center}

As the figures illustrate, a leaf node may form a heavy path consisting of a single node. All leaf nodes are the bottommost node of some heavy-path and vice versa. Furthermore, if an inner node has only one child, the edge to that child is heavy by definition.

%

The heavy-light decomposition guarantees the following basic and well-known property.
\begin{proposition}
Let $\stree_w$ be a suffix tree of a string $w$, and consider its heavy-light path decomposition.  
For any node $v \in \stree$,
the path from the root to $v$ contains $O(\log |w|)$ light edges.
\end{proposition}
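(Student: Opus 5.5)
The plan is the standard small-to-large counting argument. Fix a node $v$ and let $n(v) := |\stree[v]|$ denote the number of nodes in its subtree. The whole proof reduces to one inequality: whenever $u \to c$ is a light edge, $n(c) \le n(u)/2$. Once this holds, walk from the root down to $v$. Let the light edges on this path be $u_1 \to c_1, \ldots, u_t \to c_t$, ordered top-down. Subtree sizes never increase as we descend, since each subtree is contained in its parent's. So $n(c_t) \le n(\text{root})/2^t$, and also $n(c_t) \ge 1$. Hence $t \le \log_2 n(\text{root})$.

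The inequality itself comes from the definition of light edges. A light edge only arises at a node $u$ with two children $\ell$ and $r$; edges to a single child are heavy by definition. The light child is the one whose subtree is not larger. So if $c$ is the light child and $c'$ the heavy one, then $n(c) \le n(c')$. We also have $n(u) = 1 + n(c) + n(c') \ge 2n(c)$, which gives $n(c) \le n(u)/2$.

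Finally, bound $n(\text{root})$ by $O(|w|)$. The suffix tree has at most $|w|$ leaves, one per suffix; suffixes ending at internal nodes only reduce the leaf count. Every internal node except possibly the root is branching, because path compression merges unary chains, and the suffix tree is binary over $\Sigma=\set{0,1}$. A tree with at most $|w|$ leaves in which all non-root internal nodes have two children has at most $2|w|+O(1)$ nodes. Therefore $t \le \log_2(2|w|+O(1)) = O(\log|w|)$.

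The only step that needs care is this node-count bound. It relies on the suffix tree having no non-root unary internal nodes. This holds because internal nodes are either branching or, without the \$ end-marker, mark a suffix. To be safe, I would instead argue through leaf counts. A node with $k$ children has at least as many leaves in its subtree as the sum over its children. Heavy/light is defined via node counts, but the halving argument goes through with any monotone, additive-plus-one size measure. So directly bounding $n(\text{root}) \le 2|w|$ is the cleanest route: each node either is a leaf, has two children, or is a suffix node. There are at most $|w|$ suffix nodes and at most $|w|$ leaves, and the branching nodes number fewer than the leaves. This gives $n(\text{root}) \le 3|w|$.
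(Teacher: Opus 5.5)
Your argument is correct. The paper gives no proof of its own and simply calls the property well known, so your small-to-large halving argument is exactly the standard reasoning it leaves implicit. The one thing to clean up is your third paragraph: without the end-marker, non-root internal nodes need not be branching (for $w=aa$ the node with path label $a$ is a unary suffix node), so drop that claim and keep only your final count. In that count, every node is a suffix node, a branching node, or the root; leaves are suffix nodes; and branching nodes are fewer than leaves. This gives at most $2|w|$ nodes and hence $t \le \log_2(2|w|)$.
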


From this property, the following useful fact can be derived.
Suppose that, for every inner node $v \in \stree_w$, we collect all vertices in the subtree $\stree_w[u]$ rooted at the light child $u$ of $v$ where $v \to u$ is a light edge.
Then the total cost of this collecting operation, summed over all nodes $v$, is \(O(|w| \log |w|)\).

%% file: onerewb/ffzu2.tex
\begin{tikzpicture}[x=0.95cm,y=1.05cm,
  leaf/.style={draw,rectangle,minimum width=0.95cm,minimum height=0.65cm,
               inner sep=1pt,align=center},
  box/.style={draw,rounded corners=4pt,fill=gray!12},
  guide/.style={thin},
]

\makearray{17} 

\putcell{1}{\small $a$}
\putcell{2}{\small $a$}
\putcell{3}{\small $a$}
\putcell{4}{\small $b$}

\putcell{5}{\small $a$}
\putcell{6}{\small $b$}
\putcell{7}{\small $a$}
\putcell{8}{\small $a$}

\putcell{9}{\small $b$}
\putcell{10}{\small $a$}
\putcell{11}{\small $a$}
\putcell{12}{\small $a$}

\putcell{13}{\small $b$}
\putcell{14}{\small $b$}
\putcell{15}{\small $b$}
\putcell{16}{\small $b$}
\putcell{17}{\small $b$}

\def\ha{0.7}
\def\hb{1.7}
\def\hc{2.7}
\def\hd{3.7}
\def\he{4.7}

\cover{1}{2}{\ha}{n1}{$\trans_2$}
\cover{3}{4}{\ha}{n2}{$\trans_1$}

\cover{6}{7}{\ha}{n3}{$\trans_1$}
\cover{5}{7}{\hb}{n4}{$\trans_2$}

\northhitCover{a5}{n4}
\northhitCover{a6}{n3}
\northhitCover{a7}{n3}
\northhitCover{n3}{n4}

\cover{9}{10}{\ha}{n5}{$\trans_1$}
\northhitCover{a9}{n5}
\northhitCover{a10}{n5}

\cover{11}{12}{\ha}{n6}{$\trans_2$}
\northhitCover{a11}{n6}
\northhitCover{a12}{n6}

\cover{1}{4}{\hb}{n7}{$\trans_3$}
\northhitCover{n1}{n7}
\northhitCover{n2}{n7}

\cover{5}{8}{\hc}{n8}{$\trans_3$}
\northhitCover{a8}{n8}
\northhitCover{n4}{n8}

\cover{9}{12}{\hb}{n9}{$\trans_3$}
\northhitCover{n5}{n9}
\northhitCover{n6}{n9}

\colcover{14}{17}{\ha}{n10}{$\trans_b$}{red!20}
\northhitCover{a14}{n10}
\northhitCover{a15}{n10}
\northhitCover{a16}{n10}
\northhitCover{a17}{n10}

\cover{13}{17}{\hb}{n11}{$\trans_b$}
\northhitCover{a13}{n11}
\northhitCover{n10}{n11}



\colcover{1}{12}{\hd}{n12}{$\trans_3$}{red!20}

\northhitCover[-]{n7}{n12}
\northhitCover[-]{n9}{n12}
\northhitCover[-]{n8}{n12}

\cover{1}{17}{\he}{n13}{$\trans_3$}

\northhitCover[-]{n11}{n13}

\northhitCover[-]{a1}{n1}
\northhitCover[-]{a2}{n1}

\northhitCover[-]{a3}{n2}
\northhitCover[-]{a4}{n2}

\northhitCover{n12}{n13}

\end{tikzpicture}

%% file: onerewb/streezu2.tex
\tikzset{
  stlabelover/.style={
    pos=.55,
    inner sep=1pt, outer sep=0pt,
    font=\ttfamily\small,
    fill=white, draw=none, text=black
  },
  stlabelabove/.style={stlabelover, above}
}

\forestset{
  circlenode/.style={
    circle, draw, minimum size=4mm, inner sep=0pt
  },
  rootnode/.style={
    circle, draw, minimum size=4mm, inner sep=0pt,
  }
}

\begin{forest}
for tree={
  draw, rounded corners, inner sep=2pt,
  edge=->,
  s sep=14mm, l sep=10mm
}
[ , rootnode 
  [\boxed{10}, edge label={node[stlabelover]{\texttt{a}}}
    [ , circlenode, edge label={node[stlabelover]{\texttt{aba}}}
      [\boxed{1}, edge label={node[stlabelover]{\texttt{ababba}}}]
      [\boxed{4}, edge label={node[stlabelover]{\texttt{bba}}}]
    ]
    [ , circlenode, edge label={node[stlabelover]{\texttt{b}}}
      [ , circlenode, edge label={node[stlabelover]{\texttt{a}}}
        [\boxed{2}, edge label={node[stlabelover]{\texttt{ababba}}}]
        [\boxed{5}, edge label={node[stlabelover]{\texttt{bba}}}]
      ]
      [\boxed{7}, edge label={node[stlabelover]{\texttt{ba}}}]
    ]
  ]
  [ , circlenode, edge label={node[stlabelover]{\texttt{b}}}
    [\boxed{9}, edge label={node[stlabelover]{\texttt{a}}}
      [\boxed{3}, edge label={node[stlabelover]{\texttt{ababba}}}]
      [\boxed{6}, edge label={node[stlabelover]{\texttt{bba}}}]
    ]
    [\boxed{8}, edge label={node[stlabelover]{\texttt{ba}}}]
  ]
]
\end{forest}

%% file: onerewb/hardness.tex

\newcommand{\vertices}{\mathcal{V}}
\newcommand{\edges}{\mathcal{E}}
\newcommand{\vconv}{\widetilde}
\newcommand{\adj}{\alpha}
\newcommand{\openv}[1]{\langle #1 \rangle}
\newcommand{\closev}[1]{\langle /#1 \rangle}
\newcommand{\graphr}[1]{\widetilde{#1}}
\newcommand{\rev}[1]{\overleftarrow{#1}}
\newcommand{\graphrr}[1]{\widehat{#1}}

\section{Hardness}\label{section:hardness}

We show our three hardness results in this section:
\begin{enumerate}
\item The \rewb{} matching problem is \textbf{W[2]}-hard (Theorem~\ref{thm:hardness_w2}) in Section~\ref{subsection:REWB-W2-hardness}.
%
\item Under a hypothesis for the triangle detection problem, 2-use 2-\rewb{} matching problem cannot be solved in almost linear time (Theorem~\ref{thm:hardness_2rewb}) in Section~\ref{sec:triangle-by-2use2rewb}.
\item Under \seth{} (more technically, \textbf{k-OV} Hypothesis),
$k$-\rewb{} matching problem cannot be solved in $O(n^{2k - \epsilon})$ time (Theorem~\ref{thm:REWB hardness by kOV}) in Section~\ref{section:REWB hardness by kOV}.
\end{enumerate}

In Section~\ref{subsection:REWB-W2-hardness}~and~\ref{sec:triangle-by-2use2rewb}, we use the following notation.
Let $G = (\vertices, \edges)$ be an undirected graph without self-loops where $\vertices$ is a set of vertices and $\edges$ is a set of edges.
For our constructions, we assume that each vertex $v$ is uniquely encoded as a binary string of length $O(\log n)$; that is, $v \in (0 + 1)^+$, where $E^+ := E\,E^*$ denotes one or more repetitions of a \rewb{} $E$.

For readability, in this Section~\ref{section:hardness}, we sometimes write $[E]_x$ instead of $(E)_x$.

\subsection{W[2]-hardness for \rewb{}: Reduction from $k$-Dominating Set Problem}
\label{subsection:REWB-W2-hardness}

\newcommand{\kDSP}{\ensuremath{k\text{-}\textbf{DSP}}}

We prove Theorem~\ref{thm:hardness_w2} by reduction from the following $k$-dominating set problem (\kDSP). 

\begin{itembox}[l]{\kDSP{}: $k$-Dominating Set Problem} 
\textbf{Input}: An undirected graph $G = (\vertices, \edges)$ and an integer $k$.\\
\textbf{Task}: Deciding if there is a $k$-dominating set: i.e., an (exactly) $k$-vertex subset of $\mathcal{D}$ that satisfies the following condition:
\[
\forall v \in \vertices.\: ((v \in \mathcal{D}) \lor (\exists u \in \mathcal{D}. (u, v) \in \edges)).
\]
\end{itembox}
We note that, since $G$ is undirected, $(u, v) \in E \iff (v, u) \in E$.
It is known that this problem is $\mathbf{W[2]}$-complete~\cite{Downey:1995:1} when parameterized by $k$.
Furthermore, the following is known.
\begin{lemma}[\cite{Patrascu:2010}{\cite[Thm 14.42]{Cygan:2015}}]\label{lem:dominating_set_hypothesis}
Assuming \seth, there is no algorithm running in $O(n^{k - \epsilon})$ time solving \kDSP{} for any $\epsilon > 0$.
\end{lemma}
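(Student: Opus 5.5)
The statement is the known SETH lower bound of P\u{a}tra\c{s}cu and Williams, and I would reprove it by the standard fine-grained reduction from CNF-SAT to \kDSP{}. Fix $k\ge 2$ and suppose \kDSP{} can be solved in $O(n^{k-\epsilon})$ time. Take a $q$-CNF formula $\varphi$ with $N$ variables and $M$ clauses, where $q$ is arbitrary but fixed. Then $M=O(N^q)$, so $M$ is polynomial in $N$. Partition the variables into $k$ groups $X_1,\dots,X_k$, each of size about $N/k$.

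The graph $G_\varphi$ is built as follows.
\begin{itemize}
\item For each group $X_i$, put one vertex for every partial assignment $X_i\to\{0,1\}$. These $2^{N/k}$ vertices form a clique $V_i$.
\item For each group $i$, add $k+1$ guard vertices $g_{i,1},\dots,g_{i,k+1}$. Each guard is adjacent to all of $V_i$ and to nothing else.
\item For each clause $c_j$, add a vertex $c_j$. It is adjacent to exactly those partial-assignment vertices, in any group, that satisfy $c_j$, meaning they set some literal of $c_j$ to true.
\end{itemize}
This gives $n=k2^{N/k}+k(k+1)+M$ vertices. The graph can be built in $2^{N/k}\cdot\mathrm{poly}(N)$ time.

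Correctness has two directions.
\begin{itemize}
\item If $\varphi$ is satisfied by an assignment $\beta$, take $\mathcal{D}$ to be the $k$ restrictions $\beta|_{X_i}$. The clique $V_i$ and the guards of group $i$ are dominated by $\beta|_{X_i}$. Each clause vertex is dominated by the restriction to the group containing a satisfied literal of that clause.
\item Conversely, let $\mathcal{D}$ be a dominating set of size $k$. For each group $i$, the guards $g_{i,1},\dots,g_{i,k+1}$ are pairwise non-adjacent and have common neighbourhood $V_i$. Dominating all $k+1$ of them using only guards would need $k+1$ vertices, which is too many. So $\mathcal{D}$ must contain a vertex of $V_i$. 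There are $k$ groups and only $k$ vertices, so $\mathcal{D}$ contains exactly one vertex of each $V_i$ and nothing else. These $k$ vertices combine into a full assignment. Every clause vertex $c_j$ is not itself in $\mathcal{D}$, so it must be adjacent to some chosen partial assignment, which means that assignment satisfies $c_j$.
\end{itemize}

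The requirement that the dominating set have exactly $k$ vertices causes no difficulty, since $n\ge k$ and any superset of a dominating set is still dominating. The guard gadget is the one step that needs care. A single guard per group would fail, because a solution could pick the guard itself instead of a vertex of $V_i$. Using $k+1$ twin guards rules this out.

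For the running time, solving the \kDSP{} instance takes
\[
O\bigl((k2^{N/k}+M+k^2)^{k-\epsilon}\bigr)=2^{N(1-\epsilon/k)}\cdot \mathrm{poly}(N)
\]
time, because $M$ is polynomial in $N$. This is $O(2^{(1-\delta)N})$ for $\delta=\epsilon/(2k)$, uniformly for every clause width $q$. That contradicts \seth{}, which asserts that for every $\delta>0$ there is a $q$ such that $q$-SAT cannot be solved in $O(2^{(1-\delta)N})$ time.
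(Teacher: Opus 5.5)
The paper only cites this result. Your reduction is the standard P\u{a}tra\c{s}cu--Williams construction, and the gadget and both correctness directions are fine. The running-time accounting, however, contains an error that is fatal for $k=2$.

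You claim $G_\varphi$ can be built in $2^{N/k}\cdot\mathrm{poly}(N)$ time. But each $V_i$ is a clique on $2^{N/k}$ vertices, so $G_\varphi$ has $\Theta(k\,2^{2N/k})$ edges. Writing down an explicit instance (adjacency lists or a matrix) therefore takes $\Theta(2^{2N/k})$ time.

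For $k\ge 3$ this is harmless. The total time becomes $O\bigl((2^{2N/k}+2^{N(1-\epsilon/k)})\,\mathrm{poly}(N)\bigr)$. This is $O(2^{(1-\delta)N})$ for, e.g., $\delta=\min\{1/4,\epsilon/(2k)\}$, so your contradiction with SETH goes through once the bound is corrected. For $k=2$, however, building the graph alone costs $\Theta(2^{N})$. The reduction then gives nothing better than brute force, and your opening ``Fix $k\ge 2$'' is not supported by the argument.

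No better gadget can repair this within the same framework. Suppose every $k$-tuple of representatives, one from each $V_i$, must dominate $V_1\cup\cdots\cup V_k$. Then each $v\in V_i$ must be adjacent to all of $V_i\setminus\{v\}$ or to all of some $V_{i'}$. Otherwise, choosing a non-neighbour of $v$ in every group leaves $v$ undominated. Hence $\Omega(2^{2N/k})$ edges are unavoidable.

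This is why P\u{a}tra\c{s}cu and Williams state the theorem for $k\ge 3$; the paper's statement leaves the range of $k$ implicit. You should restrict your claim accordingly, since $k=2$ would need an entirely different argument.

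A minor aside: your remark that a single guard per group would fail is inaccurate. We have $N[g_i]=V_i\cup\{g_i\}\subseteq N[v]$ for every $v\in V_i$, so a solution containing $g_i$ can be exchanged for one containing any $v\in V_i$. Your $k+1$ twin guards are nevertheless correct.
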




Our main task is to provide the following reduction from \kDSP{} to $k$-\rewb{} matching problem.

\begin{lemma}[Reduction from \kDSP{} to $k$-\rewb{} Matching Problem]
\label{lemma:reduction:dominating to rewb}
Let $k \geq 1$ be an integer.
In time $O(k)$, we can construct a $k$-\rewb{} $\Psi_k$ of length $O(k)$ satisfying the following.

Let $G = (\vertices, \edges)$ be an undirected graph.
There exists a string $\widetilde{G}$ such that the following holds:
\[
\widetilde{G} \in \lang{\Psi_k} \iff \text{ there is a $k$-Dominating Set $\mathcal{D}$ of $G$}.
\]
In particular, we can construct the string $\widetilde{G}$ in $O((n + m) \log n)$ time with $|\widetilde{G}| = O((n + m) \log n)$
where $n := |\vertices|$ is the number of vertices and $m := |\edges|$ is the number of edges.
\end{lemma}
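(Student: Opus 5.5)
We will design the string $\widetilde{G}$ as a header, which lists the candidate vertices, followed by one block per vertex $v \in \vertices$. Each block lists the closed neighbourhood $N[v] = \set{v} \cup \set{u : (u,v)\in\edges}$. The expression $\Psi_k$ first guesses $k$ vertices by capture, and then checks, block by block, that some captured vertex appears in the block. Concretely, let $\#, \$, |$ be fresh separators; they can later be binary-encoded with a constant-size code so that the alphabet stays $\set{0,1}$. We set
\[
\widetilde{G} := \#\, \underbrace{\mathsf{L}\ \mathsf{L} \cdots \mathsf{L}}_{k}\ \#\ \prod_{v\in\vertices} \bigl(\$\, \prod_{u \in N[v]} (u\,|)\bigr),
\]
where $\mathsf{L} := \prod_{u \in \vertices} (u\,|)$ is the full vertex list followed by a delimiter $;$. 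The expression is
\[
\Psi_k := \#\ \prod_{t=1}^{k} \bigl( ((0+1)^+|)^* \,[(0+1)^+]_{x_t}\, |\, ((0+1)^+|)^*\,;\bigr)\ \#\ \Bigl(\$\,((0+1)^+|)^*\,\bigl(\readv{x_1}+\cdots+\readv{x_k}\bigr)\,|\,((0+1)^+|)^*\Bigr)^*.
\]
This is a $k$-\rewb{} of length $O(k)$ and can be written down in $O(k)$ time. The string has length $O((kn + n + m)\log n)$. To meet the stated bound $O((n+m)\log n)$ exactly, we will either absorb $k \le n$ into the bound as a constant-parameter factor, or share a single list. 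The single-list alternative is to capture the $x_t$ in order from one copy of $\mathsf{L}$, using $((0+1)^+|)^*$ gaps between captures; this forces distinct vertices and in increasing order, which is harmless for dominating sets. We prefer this version since it also yields exactly $k$ distinct vertices, matching ``exactly $k$''. The case $k > n$ is trivially negative and can be handled by a fixed non-matching string.

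Correctness will be argued in the two directions. ($\Leftarrow$) Given a dominating set $\mathcal{D}=\set{d_1<\dots<d_k}$, we capture $x_t := d_t$ from the header. For each block $v$, pick $d_t \in \mathcal{D}\cap N[v]$ and match $\readv{x_t}$ against the occurrence of $d_t$ in block $v$; the surrounding starred gaps consume the rest. ($\Rightarrow$) Given a match, the separators $\#$, $\$$, $;$ and $|$ force each $x_t$ to be exactly one full codeword between delimiters, hence a vertex, and the gaps force these captures to be distinct. Each $\$$-block is consumed by exactly one iteration of the outer star, and in it some $\readv{x_t}$ aligns with a full codeword of $N[v]$, because it is bounded by $|$ on the right and by $|$ or $\$$ on the left. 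Hence $x_t \in N[v]$, so $\set{x_1,\dots,x_k}$ dominates.

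The main obstacle is the alignment argument in ($\Rightarrow$). With binary codewords, a backreference could match across codeword boundaries, or match a proper substring of a codeword. We will prevent both. First, all vertex codes will have the same length $\lceil\log n\rceil$. Second, the separators will be encoded with a code that cannot occur inside or across vertex codes; for example, map $0\mapsto 00$, $1\mapsto 01$, and each separator to a distinct word beginning with $11$, so that parsing is unambiguous at even positions. Finally, we must check that the starred gaps $((0+1)^+|)^*$ match exactly unions of whole entries once they are rewritten in the encoded alphabet. This is routine but must be stated carefully, since it is what makes the reduction sound.
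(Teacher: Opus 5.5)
Your proposal is correct and follows essentially the paper's construction. You select $k$ distinct vertices in order from a single delimited vertex list by captures (the paper's $\psi_{\text{pick}}$), then run a Kleene star over per-vertex blocks in which the union $\readv{x_1}+\cdots+\readv{x_k}$ must align with a whole delimited entry. The differences are cosmetic: listing the closed neighbourhood $N[v]$ in each block replaces the paper's separate $\psi_{\text{itself},i}$ and $\psi_{\text{neigh},i}$ gadgets, and the binary re-encoding and equal-length codes you worry about are unnecessary, since the lemma permits a fixed finite alphabet and distinct separator letters already force every capture and backreference to coincide with a full codeword.
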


From this lemma, we obtain the following \textbf{W[2]}-hardness result.
\Hardnesswtwo*
\begin{proof}
We give an FPT-reduction from \kDSP.

Given an instance $(G, k)$,
we output the matching instance $(\Psi_k, \widetilde{G})$,
where $\Psi_k$ is constructed in $O(k)$ time
and $\widetilde{G}$ is constructed in $O((n+m)\log n)$ time as in Lemma~\ref{lemma:reduction:dominating to rewb},
such that $\widetilde{G}\in \lang{\Psi_k}$ iff $G$ has a $k$-dominating set of size exactly $k$.

Moreover, the parameter of the output instance satisfies $|\Psi_k|=O(k)$.
Hence, this is an FPT-reduction, and the claim follows
since \kDSP{} is \textbf{W[2]}-hard.
\end{proof}
We note that this result is stronger than \textbf{W[2]}-hardness parameterized by the number of variables $k$,
since the number of variables $k$ is at most the expression length $|\Psi_k|$.

\subsubsection{Proof of Lemma~\ref{lemma:reduction:dominating to rewb}}

To reduce the \kDSP{} to $k$-\rewb{} matching problem, we first encode the given graph $G = (\vertices, \edges)$ into a string.
For $\vertices =\set{v_1,\ldots,v_n}$, we assume that each vertex $v_i \in \set{0,1}^+$ is a distinct $\lceil\, \log n \rceil$-bit encoding.

For a vertex $v$ with its adjacency list $\text{Adj}(v) = \set{ u \in \vertices :  (v, u) \in \edges } =\set{ u_1, \ldots, u_t }$,
we define
\[
    \adj(v):= \openv{v}\:u_1\,\#\,u_2\,\#\,\cdots\,\#\,u_t\,\#\:\closev{v}.
\]
For example, if $\text{Adj}(10) = \set{ 1111, 00, 01, 100 }$, $\adj(10) = \openv{10}\,1111\,\#\,00\,\#\,01\,\#\,100\#\,\closev{10}$.
It is clear that, for each vertex $v$, $|\alpha(v)| = O(e_v \log n)$ where $e_v := |\text{Adj}(v)|$ is the number of edges connected to $v$.

Then, we define $\graphr{G}$ as a string over an alphabet $\Sigma = \set{ 0,1,\#,\natural, @, \openv{,},/ }$ as follows:
\[
w_{\vertices} :=  v_1 \,\#\, v_2 \,\#\, \cdots \,\#\, v_n \,\#,
\qquad
w_{\edges} := \adj(v_1)\,\natural\,\adj(v_2)\,\natural \cdots \natural\,\adj(v_n)\,\natural,
\qquad 
\graphr{G} := w_{\vertices}\:@\:w_{\edges},
\]
where $\vertices = \set{ v_1, v_2, \ldots, v_n }$ with $v_i \in (0+1)^+$.
It is clear that $|\graphr{G}| = O(n \log n) + \sum_{v \in \vertices} |\alpha(v)| = O(n \log n + m \log n)$, where $n$ is the number of vertices and $m$ is the number of edges.

Hereafter, we translate the dominating set constraint into a \rewb{} expression.
To this end, we introduce our building blocks (expressions) $\psi_{\text{pick}}, \psi_{\text{itself}}, \psi_{\text{neigh}}$.

We first define $\psi_{\text{pick}}$ as follows:
\[
\psi_{\text{pick}} :=
\psi_{\text{skip}}\:
(\psi_{\vertices})_{x_1}\:\#\:
\psi_{\text{skip}}\:
(\psi_{\vertices})_{x_2}\:\#\:\cdots\:\#
\psi_{\text{skip}}\:
(\psi_{\vertices})_{x_k}\:\#\:
\psi_{\text{skip}}
\]
where $\psi_{\vertices} := (0 + 1)^+$ and $\psi_{\text{skip}} := (\psi_{\vertices}\,\#)^*$.
Each vertex code is delimited by $\#$,
hence $\psi_{\text{skip}}$ can skip any number of vertices before binding the next one.

The expression $\psi_{\text{pick}}$ is applied to $w_{\vertices}$ to nondeterministically select $k$ vertices.
For example, we consider the following $w_{\vertices}$:
\[
w_{\vertices} = 0 \# 1 \# 01 \# 10 \# 11\# 100 \#
\]
When $k = 2$, applying $\psi_{\text{pick}}$ generates the following computation, for example:
\[
\overbracket{\phantom{ho}}^{\psi_{\text{skip}}}
\overbracket{\, 0 \,}^{(\psi_{\vertices})_{x_1}}
\overbracket{\,\#\,}^{\#}\,
\overbracket{1 \#\,01 \#\, 10 \#}^{\psi_{\text{skip}}}
\overbracket{\,11\,}^{(\psi_{\vertices})_{x_2}} \overbracket{\,\#\,}^{\#}\,
\overbracket{\,100 \#\,}^{\psi_{\text{skip}}}.
\]
In particular, after this computation, $x_1$ is bound to $0$ and $x_2$ to $11$.

From the construction and this explanation,
it is clear that, by applying $\psi_{\text{pick}}$ to $w_{\vertices}$,
we can nondeterministically select all $k$-vertex  combinations and store each combination into $k$ variables $x_1, x_2, \ldots, x_k$.
This property of $\psi_{\text{pick}}$ is formalized as follows.
\begin{proposition}\label{lem:dominating_pick}
For each $k$-subset $U = \set{u_1,\dots, u_k} \subseteq \vertices$,
there exists a match of $w_{\vertices}$ with $\psi_{\text{pick}}$ such that the variables $x_1, \ldots, x_k$ are bound to the elements of $U$ in the order they appear in $w_{\vertices}$.

Conversely, in any match of $w_{\vertices}$ with $\psi_{\text{pick}}$,
the variables $x_1, \ldots, x_k$ are bound to $k$ distinct vertices, hence represent a $k$-subset of $\vertices$.
\end{proposition}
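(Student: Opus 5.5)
Both directions come from tracking how the regular parts of $\psi_{\text{pick}}$ consume the tokens of $w_{\vertices}=v_1\#v_2\#\cdots\#v_n\#$. A string matches $\psi_{\vertices}\,\#$ exactly when it is a word of $(0+1)^+$ followed by a single $\#$. In $w_{\vertices}$ every maximal $\#$-free factor is one vertex code $v_i$. The key observation is therefore a \emph{token alignment} property. Consider any match and any factor consumed by $\psi_{\text{skip}}$, or by $(\psi_{\vertices})_{x_\ell}\,\#$. Inductively from the left end, each such factor starts right after a $\#$ (or at position $1$) and ends at a $\#$. Consequently, each application of $\psi_{\vertices}\#$ consumes exactly one whole block $v_i\#$. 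This holds because $\psi_{\vertices}$ contains no $\#$, so it cannot cross a delimiter, and it must be followed by a literal $\#$, so it cannot stop inside a code.

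For the forward direction, take a $k$-subset $U$ and list its elements in order of appearance as $v_{i_1},\dots,v_{i_k}$ with $i_1<\dots<i_k$. Build the run explicitly. The first $\psi_{\text{skip}}$ takes $i_1-1$ iterations covering $v_1\#\cdots v_{i_1-1}\#$. Then $(\psi_{\vertices})_{x_1}$ matches $v_{i_1}$ and the literal $\#$ follows. The next $\psi_{\text{skip}}$ takes $i_2-i_1-1$ iterations, and so on. The final $\psi_{\text{skip}}$ consumes the remaining $n-i_k$ blocks. Concatenating these pieces gives $w_{\vertices}$. By the semantics of $(\cdot)_x$, each $x_\ell$ is bound to $v_{i_\ell}$. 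No variable is rebound, because each $x_\ell$ occurs once.

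For the converse, fix any match. By the alignment property, the run decomposes $w_{\vertices}$ into whole blocks. Each of the $k$ capture groups consumes exactly one block $v_{i_\ell}\#$, and it binds $x_\ell:=v_{i_\ell}$. The groups occur left to right in the expression and consume disjoint, consecutive parts of the string, so $i_1<i_2<\dots<i_k$. The vertex codes are distinct $\lceil\log n\rceil$-bit strings, so the bound values are $k$ distinct vertices, and they form a $k$-subset listed in order of appearance.

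The main point needing care is the alignment induction. The rest is bookkeeping. I would state the claim as: after reading any prefix of the expression up to and including a $\#$ literal or a complete $\psi_{\text{skip}}$, the current position is a block boundary. I would prove it by induction on the structure of the run, using the facts that $\lang{\psi_{\vertices}}\subseteq(0+1)^+$ has no $\#$ and that codes are nonempty.
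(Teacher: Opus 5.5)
Your proof is correct and uses the same idea the paper relies on. The paper gives no formal proof, only the worked example with $k=2$ and the remark that the property is clear from the construction. Your token-alignment induction makes that remark rigorous: an application of $\psi_{\mathcal{V}}\,\#$ that starts at a block boundary must consume exactly one block $v_i\#$, because $\psi_{\mathcal{V}}$ cannot read $\#$ and must be followed by one.
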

%
In particular, in any match of $w_{\vertices}$ with $\psi_{\text{pick}}$, the string stored in each variable $x_i$ is a vertex in $V$.
Hereafter, we write $v_{x_i}$ to denote the vertex captured by $x_i$.

We next define the following two expressions for each $i\in \{1,\dots, k\}$:
\[
    \psi_{\text{itself},i} := 
    \openv{\readv{x_i}}\, 
    \psi_{\text{skip}}\,
    \closev{\readv{x_i}}  ,
    \qquad
    \psi_{\text{neigh},i} := \openv{\psi_{\vertices}}\, 
    \psi_{\text{skip}}\,
    \readv{x_i}\,\#\,
    \psi_{\text{skip}}\,
    \closev{\psi_{\vertices}}.
\]
To explain these two expressions,
suppose that the input head is positioned at the first symbol of $\alpha(v_i)$ for some vertex $v_i$ within $w_{\edges} = \cdots \natural\  \alpha(v_i)\ \natural \cdots$.\todo{Are we use $\triangledown$ in the remaining part?}
Applying $\psi_{\text{itself}, j}$ in this situation succeeds iff $v_i = v_{x_j}$; otherwise the match fails.
Similarly, applying $\psi_{\text{neigh}, j}$ succeeds iff $v_{x_j}$ is adjacent to $v_i$.

We formalize the above argument for $\psi_{\text{itself}}$ and $\psi_{\text{neigh}}$ as the following proposition.
\begin{proposition}\label{lem:dominating_set_inc_dom}
Let $i\in \{1,\dots, k\}$ and $v\in \vertices$.
Let $u_i$ be the vertex stored in the variable $x_i$.
Then, 
\begin{itemize}
    \item $\adj(v)$ matches $\psi_{\text{itself},i}$ if and only if $u_i=v$, and
    \item $\adj(v)$ matches $\psi_{\text{neigh},i}$ if and only if $(u_i,v)\in \edges$.
\end{itemize}
\end{proposition}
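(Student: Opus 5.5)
The plan is a direct structural argument: unfold the semantics of $\psi_{\text{itself},i}$ and $\psi_{\text{neigh},i}$ on the fixed word
\[
\adj(v)=\openv{v}\,u_1\#u_2\#\cdots\#u_t\#\,\closev{v},
\]
and use the fact that the delimiters $\openv{\,}$, $/$, $\#$ cannot appear inside vertex codes. Throughout, we take the valuation to be one produced by $\psi_{\text{pick}}$. By Proposition~\ref{lem:dominating_pick}, $u_i=\lambda(x_i)\in\vertices$ is a $\lceil\log n\rceil$-bit code over $\set{0,1}$.

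\textbf{First item.} A match of $\openv{\readv{x_i}}\,\psi_{\text{skip}}\,\closev{\readv{x_i}}$ must consume the letter $\langle$ and then exactly the string $u_i$. The next letter must be $\rangle$. Since in $\adj(v)$ the letter after $\langle$ is followed by the code $v$ and then $\rangle$, we need $u_i$ to equal the maximal $\set{0,1}$-block $v$. Conversely, if $u_i=v$, this prefix matches. The middle part $u_1\#\cdots u_t\#$ lies in $\lang{\psi_{\text{skip}}}$. The suffix $\langle/v\rangle$ matches $\closev{\readv{x_i}}$. For the only-if direction I will use that all codes have the same length. This also rules out a proper prefix of $v$ being matched, although the following $\rangle$ already forces this. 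The suffix constraint is symmetric.

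\textbf{Second item.} The prefix $\openv{\psi_{\vertices}}$ must consume exactly $\langle v\rangle$, because $\psi_{\vertices}=(0+1)^+$ is followed by $\rangle$. Similarly, the suffix $\closev{\psi_{\vertices}}$ consumes $\langle/v\rangle$. So the middle $u_1\#\cdots u_t\#$ must match $\psi_{\text{skip}}\,\readv{x_i}\,\#\,\psi_{\text{skip}}$.

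The key step is to show that this holds iff $u_i=u_s$ for some $s$. Every word of $\lang{\psi_{\text{skip}}}$ is a concatenation of blocks of the form $(0+1)^+\#$. Hence the first $\psi_{\text{skip}}$ consumes a prefix $u_1\#\cdots u_{s-1}\#$ ending at a $\#$-boundary, or nothing. The reason is that a $\psi_{\text{skip}}$-word ends with $\#$ or is empty, and $\#$ appears only as a delimiter. Then $\readv{x_i}\,\#$ must read a nonempty $\set{0,1}$-string followed by $\#$. This forces $u_i=u_s$ exactly. The remaining suffix $u_{s+1}\#\cdots u_t\#$ lies in $\lang{\psi_{\text{skip}}}$. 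Conversely, any $s$ with $u_s=u_i$ gives such a decomposition. Finally, $u_i\in\text{Adj}(v)$ iff $(u_i,v)\in\edges$, by definition and because the graph is undirected.

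I expect the main obstacle to be only bookkeeping: justifying rigorously that every match aligns with the delimiters. I will handle it with a small lemma. For any word $y\in(\set{0,1}^+\#)^*$, every decomposition $y=y_1y_2$ with $y_1\in\lang{\psi_{\text{skip}}}$ splits $y$ at a $\#$-boundary. This follows by induction on the number of star iterations, since each iteration consumes a maximal $\set{0,1}$-block and its $\#$.
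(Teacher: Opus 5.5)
Your proof is correct. It is the same direct unfolding of $\psi_{\text{itself},i}$ and $\psi_{\text{neigh},i}$ against $\adj(v)$ that the paper has in mind; the paper gives only the informal explanation preceding the proposition, and your observation that a $\psi_{\text{skip}}$-prefix is empty or ends in $\#$, and so stops at a block boundary, supplies the alignment detail it leaves implicit. One cosmetic point: rename the neighbors in $\adj(v)$ (say to $z_1,\dots,z_t$), since $u_i$ already denotes the content of $x_i$ and ``$u_i=u_s$'' is otherwise ambiguous.
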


Using $\psi_{\text{itself}}$ and $\psi_{\text{neigh}}$, we define the following expression $\psi_{\text{dominated}}$, which checks whether the selected vertices
$\set{ v_{x_1}, v_{x_2}, \ldots, v_{x_k} }$
form a dominating set, when applied to $w_{\edges}$:
\todo{Is $\diamondsuit$ necessary?}
\[
    \psi_{\text{dominated}} := \left(\,(\psi_{\text{itself},1} +
    \cdots +
    \psi_{\text{itself},k} + \psi_{\text{neigh},1} +
    \cdots + \psi_{\text{neigh},k})\,\natural\right)^*~.
\]

We show the following property of $\psi_{\text{dominated}}$.
\begin{proposition}\label{prop:psi-dominated}
$w_{\edges}$ matches $\psi_{\text{dominated}}$ if and only if $\set{ v_{x_1}, \ldots, v_{x_k} }$ is a $k$-dominating set of $G$.
\end{proposition}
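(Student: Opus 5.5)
We prove the two directions separately, both by reading $w_{\edges}$ block by block. The key structural fact is that the Kleene star in $\psi_{\text{dominated}}$ must consume $w_{\edges}$ as $\alpha(v_1)\,\natural\,\alpha(v_2)\,\natural\cdots\alpha(v_n)\,\natural$, one iteration per vertex. So the first step is a parsing lemma. The symbol $\natural$ occurs in $w_{\edges}$ only as the separator after each $\alpha(v_i)$, and it occurs in none of the iterated subexpressions $\psi_{\text{itself},j}$, $\psi_{\text{neigh},j}$ except as the trailing $\natural$. Moreover, none of these subexpressions captures anything (they contain only backreferences $\readv{x_j}$ and \regex{} parts), so the valuation of $x_1,\ldots,x_k$ is unchanged throughout. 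Hence in any match of $w_{\edges}$ against $\psi_{\text{dominated}}$, the $t$-th iteration matches exactly $\alpha(v_t)\,\natural$, and within it the alternative chosen matches exactly $\alpha(v_t)$. Conversely, any choice of an alternative matching $\alpha(v_t)$, for each $t$, assembles into a match of the whole word.

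With the parsing lemma in hand, $w_{\edges}$ matches $\psi_{\text{dominated}}$ iff for every $v \in \vertices$ some index $j\in\{1,\dots,k\}$ satisfies that $\alpha(v)$ matches $\psi_{\text{itself},j}$ or $\psi_{\text{neigh},j}$. By Proposition~\ref{lem:dominating_set_inc_dom}, this holds iff for every $v$ there is a $j$ with $v_{x_j}=v$ or $(v_{x_j},v)\in\edges$. This is exactly the domination condition for $\mathcal{D}=\set{v_{x_1},\dots,v_{x_k}}$. The set has exactly $k$ elements because the $x_i$ are bound to distinct vertices by Proposition~\ref{lem:dominating_pick}, which is the context in which the statement is asserted.

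The main obstacle is making the parsing lemma rigorous. We must rule out an iteration whose $\psi_{\text{skip}}$ or $\psi_{\vertices}$ part runs past a $\natural$ or past a closing tag into the next block. To do this we check the alphabets: $\psi_{\text{skip}}$ and $\psi_{\vertices}$ generate only words over $\{0,1,\#\}$, while $\openv{\cdot}$, $\closev{\cdot}$ and $\natural$ are distinct symbols. Therefore each alternative begins with the symbol $\langle$ and ends at the first following $\natural$ with a closing tag immediately before it. That closing tag is the end of the current $\alpha(v_t)$, since every tag $\closev{v}$ is followed by $\natural$ in $w_{\edges}$.

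A secondary subtlety concerns $\psi_{\text{neigh},j}$. Its backreference $\readv{x_j}\#$ must align with a whole neighbour code. This holds because the neighbour codes are $\#$-delimited, the preceding $\psi_{\text{skip}}$ consumes whole codes, and $\readv{x_j}$ lies in $(0+1)^+$. This is the content of Proposition~\ref{lem:dominating_set_inc_dom}, which we take as given.
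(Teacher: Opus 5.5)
Your proof is correct and takes the same route as the paper: in both directions you reduce, block by block over $w_{\edges}$, to Proposition~\ref{lem:dominating_set_inc_dom}. The paper simply asserts that each $\adj(v)$ is consumed by one alternative of the star, whereas you justify this parsing explicitly, using that $\natural$ occurs only as block separators and that $\psi_{\text{dominated}}$ performs no captures. That justification is a welcome but inessential addition.
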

\begin{proof}
($\Rightarrow$)
Assume that $\set{ v_{x_1}, \ldots, v_{x_k} }$ is a $k$-dominating set of $G$.
Then, for all $v \in \vertices$,
either $v = v_{x_i}$ for some $i$ or $(v, v_{x_i}) \in \edges$ for some $i$.
Therefore, from Proposition~\ref{lem:dominating_set_inc_dom}, for each $v\in \vertices$,
$\adj(v)$ matches
$\psi_{\text{itself},1} +\cdots + \psi_{\text{itself},k} + \psi_{\text{neigh},1} +\cdots + \psi_{\text{neigh},k}$, which is a subexpression of $\psi_{\text{dominated}}$.
Combining such matching for all $v\in \vertices$ yields a matching of $w_{\edges}$ to $\psi_{\text{dominated}}$.

($\Leftarrow$)
Conversely, assume $w_{\edges}$ matches $\psi_{\text{dominated}}$.
The assumption means that,
for each $v\in \vertices$, the string $\adj(v)$ in $w_{\edges}$ is consumed by  $\psi_{\text{itself},i}$ or $\psi_{\text{neigh},i}$ for some $i$.
%
From Proposition~\ref{lem:dominating_set_inc_dom}, $v = v_{x_i}$ holds if $\adj(v)$ matches $\psi_{\text{itself},i}$ and
$(v, v_{x_i}) \in \edges$ holds if $\adj(v)$ matches $\psi_{\text{neigh}, i}$.
Since it holds for all $v \in \vertices$, $\set{v_{x_1}, \ldots, v_{x_k} }$ is a $k$-dominating set of $G$.
\end{proof}

Finally, we combine $\psi_{\text{pick}}$ and $\psi_{\text{dominated}}$ and obtain our main expression $\Psi$ as follows:
\[
\Psi_k := \psi_{\text{pick}}\ @\ \psi_{\text{dominated}} .
\]
Note that $\Psi_k$ is a fixed \rewb{} expression depending only on $k$.
The following property of $\Psi$ clearly holds.
\begin{proposition}\label{lem:dominating_at}
In any match of $\graphr{G}$ against $\Psi_k$,
the prefix $w_{\vertices}$ is matched by $\psi_{\text{pick}}$
and the suffix $w_{\edges}$ is matched by $\psi_{\text{dominated}}$.
\end{proposition}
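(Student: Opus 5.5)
The plan is to use the unique occurrence of the symbol $@$ in $\graphr{G}$ and in $\Psi_k$, and to check which parts of the expression can read $@$. Write $\Psi_k = \psi_{\text{pick}}\ @\ \psi_{\text{dominated}}$. By the semantics of concatenation, any match of $\graphr{G}$ against $\Psi_k$ decomposes $\graphr{G} = w_1\, w_2\, w_3$. Here $w_1$ is matched by $\psi_{\text{pick}}$, $w_2$ is matched by the literal $@$ (so $w_2 = @$), and $w_3$ is matched by $\psi_{\text{dominated}}$ under the valuation produced by $\psi_{\text{pick}}$.

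The key step is to show that neither $\psi_{\text{pick}}$ nor $\psi_{\text{dominated}}$ can consume the letter $@$. For $\psi_{\text{pick}}$ this is syntactic: the only letters it contains are $0,1,\#$, and it uses no backreferences, so every string it matches lies in $\set{0,1,\#}^*$. For $\psi_{\text{dominated}}$ one more observation is needed, because it contains backreferences $\readv{x_i}$. These read the contents of $x_i$, which were bound inside $\psi_{\text{pick}}$ to strings matched by $\psi_{\vertices} = (0+1)^+$, so the contents lie in $\set{0,1}^+$. The explicit letters of $\psi_{\text{dominated}}$ are among $0,1,\#,\natural,\langle,\rangle,/$. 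Hence every string matched by $\psi_{\text{dominated}}$, under any valuation arising from $\psi_{\text{pick}}$, avoids $@$. I would phrase this as a short induction over the semantic clauses: if every variable value avoids $@$ and $@$ does not occur syntactically in an expression $F$, then every matched string of $F$ avoids $@$.

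Since $\graphr{G} = w_{\vertices}\,@\,w_{\edges}$ contains exactly one $@$ (neither $w_{\vertices}$ nor $w_{\edges}$ contains it), the literal $@$ must match precisely that occurrence. This forces $w_1 = w_{\vertices}$ and $w_3 = w_{\edges}$, which is the claim. The only mild obstacle is the backreference case in the induction, and it is handled by tracking that the captured values lie in $\set{0,1}^+$. That is immediate from the clause for $(E)_x$ and from Proposition~\ref{lem:dominating_pick}.
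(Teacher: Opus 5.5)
Your proof is correct and takes the same route as the paper: the literal $@$ in $\Psi_k$ must consume the unique $@$ of $\graphr{G}$, which forces $w_1 = w_{\vertices}$ and $w_3 = w_{\edges}$. Your intermediate step showing that $\psi_{\text{pick}}$ and $\psi_{\text{dominated}}$ cannot read $@$ is not needed, since the uniqueness of $@$ in the input already fixes the split. If you keep that step, note that the captured values lie in $\set{0,1}^+$ by the $(E)_x$ clause applied to $\psi_{\vertices}$ alone; Proposition~\ref{lem:dominating_pick} presupposes that $\psi_{\text{pick}}$ reads exactly $w_{\vertices}$, which is what you are proving.
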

\begin{proof}
The string $\graphr{G} = w_{\vertices}\,@\,w_{\edges}$ contains exactly one occurrence of $@$.
Hence, in any match with $\Psi_k$,
this symbol must be consumed by the $@$ in $\Psi_k$,
which separates the part matched by $\psi_{\text{pick}}$ from the part matched by $\psi_{\text{dominated}}$.
\end{proof}

Combining Propositions~\ref{lem:dominating_pick},~\ref{prop:psi-dominated},~and~\ref{lem:dominating_at} directly yields the following property, which concludes Lemma \ref{lemma:reduction:dominating to rewb}:
\begin{quote}
$G$ has a $k$-dominating set iff $\graphr{G}$ matches the $k$-\rewb{} $\Psi_k$ (equivalently, $\graphr{G} \in \lang{\Psi_k}$).
\end{quote}

\subsection{Hardness for $2$-use $2$-\rewb{}: Reduction from Triangle Detection Problem}

\label{sec:triangle-by-2use2rewb}

We prove Theorem~\ref{thm:hardness_2rewb} by reduction from the following triangle detection problem.
\begin{itembox}[l]{Triangle Detection Problem}
\noindent{}\textbf{Input}: An undirected graph $G = (\vertices, \edges)$ with $n := |\vertices|$ vertices and $m := |\edges|$ edges. \\
\noindent{}\textbf{Task}: Deciding whether $G$ contains a triangle, that is, whether there is a set of three distinct vertices $\set{ a, b, c }$ with $(a, b), (b, c), (c, a) \in \edges$.
\end{itembox}
The following is conjectured about the triangle detection problem.
\begin{hypothesis*}[\textbf{No-Almost-Linear-Time Hypothesis}~{\cite{Abboud:2014,Abboud:2014:focs}}]
There is a constant $\delta > 0$,
such that in the Word RAM model with words of $O(\log n)$ bits,
any algorithm requires $m^{1 + \delta - o(1)}$ time in expectation to detect whether a graph with $m$ edges graph contains a triangle.
\end{hypothesis*}

In this section, our main task is giving the following reduction from the triangle detection problem to the $2$-use $2$-\rewb{} matching problem.

\begin{lemma}[Reduction from Triangle Detection Problem to $2$-use $2$-\rewb{} Matching Problem]
\label{lemma:reduction:triangle detection to rewb}
\mbox{}

There exists a fixed $2$-use $2$-\rewb{} expression $\Psi$ that solves the triangle detection problem in the following sense:

Let $G = (\vertices, \edges)$ be an undirected graph with $n$ vertices and $m$ edges.
There exists a string $\widetilde{G}$ such that the following holds:
\[
\graphrr{G} \in \lang{\Psi} \iff \text{ there is a triangle in $G$}.
\]
In particular, we can construct the string $\graphrr{G}$ in $O(m \log n)$ time, and the size of $\graphrr{G}$ is $|\graphrr{G}| = O(m \log n)$.
\end{lemma}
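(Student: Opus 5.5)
We reduce directly, reusing the encoding gadgets of Section~\ref{subsection:REWB-W2-hardness}: each vertex is a $\lceil \log n\rceil$-bit code, $\psi_{\vertices} = (0+1)^+$, $\psi_{\text{skip}} = (\psi_{\vertices}\#)^*$, and $\adj(v) = \openv{v}\,u_1\#\cdots u_t\#\,\closev{v}$.

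The idea is as follows. A triangle $\set{a,b,c}$ is witnessed by an edge $(a,b)$ together with a vertex $c$ whose adjacency list contains both $a$ and $b$. We capture the endpoints $a$ and $b$ of one edge in $x$ and $y$. We then check with exactly two backreferences, $\readv{x}$ and $\readv{y}$, that some adjacency list $\adj(c)$ contains both. The vertex $c$ itself never needs to be stored: it is determined implicitly by which block $\adj(c)$ the expression lands in.

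Concretely, the string has two parts, an edge list and the adjacency lists.
\begin{itemize}
\item Let $w_{\edges}'$ list every edge in \emph{both} orientations as $u\#v\,;$ (with a fresh separator $;$).
\item Let $w_{\text{adj}} := \adj(v_1)\natural\cdots\natural\adj(v_n)\natural$.
\end{itemize}
We set $\graphrr{G} := w_{\edges}'\,@\,w_{\text{adj}}$, whose length is clearly $O(m\log n)$ and which is built in that time. Writing $\psi_e := \psi_{\vertices}\#\psi_{\vertices}\,;$ and $\psi_{\text{blk}} := \openv{\psi_{\vertices}}\,\psi_{\text{skip}}\,\closev{\psi_{\vertices}}\,\natural$, the fixed expression is
\[
\Psi := (\psi_e)^*\,[\psi_{\vertices}]_x\,\#\,[\psi_{\vertices}]_y\,;\,(\psi_e)^*\ @\ (\psi_{\text{blk}})^*\ \openv{\psi_{\vertices}}\,\psi_{\text{skip}}\,\readv{x}\,\#\,\psi_{\text{skip}}\,\readv{y}\,\#\,\psi_{\text{skip}}\,\closev{\psi_{\vertices}}\,\natural\,(\psi_{\text{blk}})^*.
\]
It has two variables, and every computation executes exactly two backreferences, so $\Psi$ is 2-use.

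The proof then proceeds in four steps.
\begin{enumerate}
\item \emph{Parsing.} As in Proposition~\ref{lem:dominating_at}, the unique $@$ splits the match. Because the delimiters $;$, $\#$, $\natural$, $\openv{}$ and $\closev{}$ are all distinct, the starred parts can only consume whole edge records or whole blocks. Hence $x,y$ are bound to the two endpoints $a,b$ of some oriented edge, and the middle part is matched against a single block $\adj(c)$.
\item \emph{Membership test.} Since codes are $\#$-delimited and all have equal length, $\psi_{\text{skip}}\readv{x}\#$ succeeds inside $\adj(c)$ iff $a$ occurs as an entry of the list. So the middle part matches iff $a$ and $b$ both occur in $\text{Adj}(c)$ with $a$ listed before $b$.
\item \emph{Correctness.}
\begin{itemize}
\item ($\Leftarrow$) Given a triangle $\set{a,b,c}$, choose the orientation of the edge between $a$ and $b$ so that the endpoint appearing first in $\adj(c)$ is bound to $x$. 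Both orientations are present in $w_{\edges}'$, so this is possible.
\item ($\Rightarrow$) A match yields $(a,b)\in\edges$ and $(c,a),(c,b)\in\edges$. Since $G$ has no self-loops, $a\neq b$, $c\neq a$ and $c\neq b$, so $\set{a,b,c}$ is a triangle.
\end{itemize}
\item \emph{Deriving Theorem~\ref{thm:hardness_2rewb}.} A matching algorithm for $\Psi$ running in $|w|^{1+\delta-o(1)}$ time would detect triangles in $(m\log n)^{1+\delta-o(1)} = m^{1+\delta-o(1)}$ time, contradicting the hypothesis.
\end{enumerate}

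The main obstacle is the parsing claim in the first two steps. We must rule out matches in which $\psi_{\vertices}$ or $\psi_{\text{skip}}$ straddles record boundaries, or in which $\readv{x}$ matches a proper substring of a longer code. Both are excluded because every code has the fixed length $\lceil\log n\rceil$ and is immediately followed by a delimiter outside $\set{0,1}$. Step~2 relies on this: $\readv{x}\#$ must then align with an entire entry of the list. The orientation issue ($a$ before $b$ in $\adj(c)$) is the other subtle point, and it is handled by listing both orientations of each edge.
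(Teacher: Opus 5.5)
Your reduction is correct, but it is organized differently from the paper's.

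The paper sets $\graphrr{G} := w_{\edges}\,@\,w_{\edges}$, which is two copies of the adjacency-list encoding from the W[2] reduction. On the first copy, $\varphi_{\text{pick}}$ enters some block $\adj(a)$ and captures two of its entries $b$ and $c$ into $x$ and $y$, so $a$ is the implicit vertex. On the second copy, $\varphi_{\text{find}}$ uses $\readv{x}$ to locate the header $\openv{\readv{x}}$ of $b$'s block, and uses $\readv{y}$ to test that $c$ is an entry of that block.

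You instead read an edge $(a,b)$ off an explicit edge list. You then use both references as membership tests inside a single nondeterministically chosen block $\adj(c)$, with $c$ implicit. Because both tests live in the same block, you need $a$ to precede $b$ there. That is why you must list every edge in both orientations; a union over the two orders of $\readv{x}$ and $\readv{y}$ would work equally well.

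The paper's version needs neither the extra edge list nor a new separator, and it has no orientation issue. Whichever of $b,c$ comes first in $\adj(a)$ is bound to $x$, and symmetry of adjacency makes the second test succeed either way. In exchange, your version parses with the structured skips $(\psi_e)^*$ and $(\psi_{\text{blk}})^*$ rather than $\Upsilon^*$. This makes the alignment argument more explicit: no match straddles records, and $\readv{x}\#$ must match a whole entry.

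One caveat you share with the paper: $w_{\text{adj}}$ has a block for every vertex, so its length is $O((n+m)\log n)$. This is $O(m\log n)$ only once isolated vertices are discarded, as the paper does in the proof of Theorem~\ref{thm:hardness_2rewb}.
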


We first show our hardness result from Lemma~\ref{lemma:reduction:triangle detection to rewb}.
While the proof follows a standard reduction argument, we provide the details below for completeness.

\Hardnesstworewb*
\begin{proof}
Assume the No-Almost-Linear-Time Hypothesis holds.
Let $\delta_\Delta > 0$ be the constant specified in the hypothesis; that is, any algorithm requires time $m^{1+\delta_\Delta-o(1)}$.

Let $\Psi$ be the fixed $2$-use $2$-\rewb{} expression from Lemma~\ref{lemma:reduction:dominating to rewb}.
For every graph $G$ with $n$ vertices and $m$ edges, we can construct $\widehat{G}$ such that $\widehat{G} \in \mathbb{L}(\Psi) \iff G \text{ has a triangle}$.
We construct a triangle detection algorithm $\mathcal{T}$ as follows: Given $G$, remove isolated vertices so that $n \le 2m$, construct $\widehat{G}$, and run the matching algorithm $\mathcal{A}$ on $\widehat{G}$.

Suppose for contradiction that there is an algorithm $\mathcal{A}$ for $\Psi$ running in time $|w|^{1+\delta-o(1)}$ for every $\delta > 0$.
Specifically, fix $\gamma := \delta_\Delta / 2$ and assume $\mathcal{A}$ runs in time $O(|w|^{1+\gamma})$.

The construction of $\widehat{G}$ takes $O(m \log n)$ time and yields $|\widehat{G}| = O(m \log n)$. Since $n \le 2m$, we have $|\widehat{G}| = O(m \log m)$.
Running $\mathcal{A}$ on $\widehat{G}$ takes the following time $T_{\mathcal{A}}(\widehat{G})$:
\[
    T_{\mathcal{A}}(\widehat{G}) \le O\left( (m \log m)^{1+\gamma} \right) 
    = O\left( m^{1+\gamma} (\log m)^{1+\gamma} \right).
\]
Since $(\log m)^{1+\gamma} = m^{o(1)}$, the total running time of $\mathcal{T}$ is:
\[
    T_{\mathcal{T}}(G) = O(m \log m) + m^{1+\gamma + o(1)} = m^{1+\gamma+o(1)}.
\]
Substituting $\gamma = \delta_\Delta / 2$, we get:
\[
    T_{\mathcal{T}}(G) = m^{1 + \frac{\delta_\Delta}{2} + o(1)}.
\]
For sufficiently large $m$, $1 + \frac{\delta_\Delta}{2} + o(1) < 1 + \delta_\Delta - o(1)$.
Thus, $\mathcal{T}$ solves the triangle detection problem strictly faster than the hypothesized lower bound $m^{1+\delta_\Delta-o(1)}$, which is a contradiction.
\end{proof}

\subsubsection{Proof of Lemma~\ref{lemma:reduction:triangle detection to rewb}}

We here encode the given graph $G$ into the following string $\graphrr{G}$ over an alphabet $\Sigma = \set{ 0, 1, \#, \natural, @, \langle, \rangle, / }$:
\[
\graphrr{G} := w_{\edges}\ @\ w_{\edges}~,
\]
where the string representation of the adjacency lists of $G$, $w_{\edges}$, is defined in the previous section.
Therefore, it is clear that the size of $|\graphrr{G}|$ satisfies $|\graphrr{G}| = O(m \log n)$ where $n$ and $m$ are the number of vertices and edges of $G$, respectively.
Moreover, we can construct $\graphrr{G}$ in $O(m \log n)$ time.

If $G$ has a triangle $\set{a, b, c}$,
then the following pattern must appear in $\graphrr{G}$:
\[
\cdots \openv{a} \cdots \# b \# \cdots \# c \# \cdots \closev{a} \cdots\ @ \cdots \openv{b} \cdots \# c \# \cdots \closev{b} \cdots~~.
\]
Conversely, if this pattern appears in $\graphrr{G}$, then $G$ has a triangle $\set{a, b, c}$.

We now prepare two expressions $\varphi_{\text{pick}}$ and $\varphi_{\text{check}}$ where
\begin{itemize}
\item $\varphi_{\text{pick}}$ nondeterministically chooses one vertex $v$ and selects two neighbors and save them to the variables $x$ and $y$.
\item $\varphi_{\text{find}}$ searches the adjacency list of the vertex stored in the variable $x$ and then searches the vertex stored in the variable $y$.
\end{itemize}

We first define $\varphi_{\text{pick}}$ as follows:
\[
\varphi_{\text{pick}}
 :=
 \Upsilon^*\:
 \langle \psi_{\vertices} \rangle \ 
 \psi_{\text{skip}}\ 
 (\psi_\vertices)_x\ \#\ 
 \psi_{\text{skip}}\ 
 (\psi_\vertices)_y\ \#\ 
 \psi_{\text{skip}}\ 
 \langle / \psi_{\vertices} \rangle\ 
 \Upsilon^*\:@
\]
where
$\psi_{\vertices} := (0 + 1)^+$, $\psi_{\text{skip}} := (\psi_{\vertices}\,\#)^*$ (as with Section~\ref{subsection:REWB-W2-hardness}),
and $\Upsilon := \Sigma \setminus \set{ @ }$.

We explain how this expression $\varphi_{\text{pick}}$ nondeterministically selects and saves two vertices in an adjacency list step-by-step.
The part $\Upsilon^*\,\openv{\psi_\vertices}$ nondeterministically consumes a prefix and proceeds to the following part
\begin{flalign*}
& \qquad
\overbracket{\cdots}^{\Upsilon^*} \overbracket{\openv{a}}^{\openv{\psi_\vertices}} \cdots \# b \# \cdots \# c \# \cdots \closev{a} \cdots\ @ \cdots \openv{b} \cdots \# c \# \cdots \closev{b} \cdots~~.
&&
\end{flalign*}
Then the following subexpression $\psi_{\text{skip}}$ can consume the substring before $b$ as follows:
\begin{flalign*}
& \qquad
\overbracket{\cdots}^{\Upsilon^*}\,
\overbracket{\openv{a}}^{\openv{\psi_\vertices}}\ 
\overbracket{\cdots \#}^{\psi_{\text{skip}}}\ 
b \# \cdots \# c \# \cdots \closev{a} \cdots\ @ \cdots \openv{b} \cdots \# c \# \cdots \closev{b} \cdots~~.
&&
\end{flalign*}
Now the subexpression $(\psi_\vertices)_x \#$ captures $b$ to $x$ successfully:
\begin{flalign*}
& \qquad
\overbracket{\cdots}^{\Upsilon^*}
\overbracket{\openv{a}}^{\openv{\psi_\vertices}}\ 
\overbracket{\cdots \#}^{\psi_{\text{skip}}}\ 
\overbracket{\,b\,}^{(\psi_\vertices)_x}
\overbracket{\#}^{\#} \cdots \# c \# \cdots \closev{a} \cdots\ @ \cdots \openv{b} \cdots \# c \# \cdots \closev{b} \cdots~~.
&&
\end{flalign*}
Similarly, the subexpression $\psi_{\text{skip}}\,(\psi_\vertices)_y\,\# \psi_{\text{skip}}\,\closev{\psi_\vertices}$ can capture the vertex $c$ to $y$ and successfully reach to the end of the adjacency list $\closev{a}$ as follows:
\begin{flalign*}
& \qquad
\overbracket{\cdots}^{\Upsilon^*}
\overbracket{\openv{a}}^{\openv{\psi_\vertices}}\ 
\overbracket{\cdots \#}^{\psi_{\text{skip}}}\ 
\overbracket{\,b\,}^{(\psi_\vertices)_x}\ 
\overbracket{\#}^{\#}\ 
\overbracket{\cdots \#}^{\psi_{\text{skip}}}\ 
\overbracket{\,c\,}^{(\psi_\vertices)_y} 
\overbracket{\#}^{\#}\ 
\overbracket{\cdots}^{\psi_{\text{skip}}}\ 
\overbracket{\closev{a}}^{\closev{\psi_\vertices}}
\cdots\ @ \cdots \openv{b} \cdots \# c \# \cdots \closev{b} \cdots~~.
&&
\end{flalign*}
Finally, the part $\Upsilon^* @$ successfully consumes the first $w_\edges$ as follows:
\begin{flalign*}
& \qquad
\overbracket{\cdots}^{\Upsilon^*}
\overbracket{\openv{a}}^{\openv{\psi_\vertices}}\ 
\overbracket{\cdots \#}^{\psi_{\text{skip}}}\ 
\overbracket{\,b\,}^{(\psi_\vertices)_x}\ 
\overbracket{\#}^{\#}\ 
\overbracket{\cdots \#}^{\psi_{\text{skip}}}\ 
\overbracket{\,c\,}^{(\psi_\vertices)_y} 
\overbracket{\#}^{\#}\ 
\overbracket{\cdots}^{\psi_{\text{skip}}}\ 
\overbracket{\closev{a}}^{\closev{\psi_\vertices}}
\overbracket{\cdots}^{\Upsilon^*}\ 
\overbracket{@}^{@}
\cdots \openv{b} \cdots \# c \# \cdots \closev{b} \cdots~~.
&&
\end{flalign*}
$\Upsilon^*$ also never goes beyond $@$ because $@ \notin \Upsilon$.

This argument immediately leads to the following proposition about $\varphi_{\text{pick}}$.
\begin{proposition}\label{prop:pickbc-triangle}
Applying $\varphi_{\text{pick}}$ to $\graphrr{G} = w_{\edges } \,@\, w_{\edges}$,
it consumes the prefix including $@$ (i.e., $w_\edges @$)
and save two vertices $b$ and $c$ to $x$ and $y$, respectively, where they are neighbors of some vertex $a$: that is, $a{-}b$ and $a{-}c$.
Moreover, $\varphi_{\text{pick}}$ can select all such triples $(a, b, c)$.
\end{proposition}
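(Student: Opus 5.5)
The plan is a direct two-direction argument about the structure of $\graphrr{G} = w_{\edges}\,@\,w_{\edges}$ and the shape of $\varphi_{\text{pick}}$.

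The first step is to locate the match. Since $@ \notin \Upsilon$ and $\graphrr{G}$ contains exactly one $@$, the only way $\varphi_{\text{pick}}$ can succeed on a prefix of $\graphrr{G}$ is if its final $@$ consumes that unique $@$. Hence the consumed prefix is exactly $w_{\edges}\,@$, and the middle part
\[
\langle \psi_{\vertices} \rangle\ \psi_{\text{skip}}\ (\psi_\vertices)_x\ \#\ \psi_{\text{skip}}\ (\psi_\vertices)_y\ \#\ \psi_{\text{skip}}\ \langle / \psi_{\vertices} \rangle
\]
matches a factor of the first copy of $w_{\edges}$.

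The second step, and the main point of care, is to show that this factor is exactly one whole block $\adj(a)$. The middle part begins with $\langle$ followed by a bit string and $\rangle$, and such a factor occurs in $w_{\edges}$ only as the opening tag $\openv{a}$ of some block $\adj(a)$. The reason is that inside $w_{\edges}$ the symbol $\langle$ appears only in tags, and a closing tag has $/$ right after $\langle$, which $\psi_{\vertices}$ cannot match. After $\openv{a}$, the expressions $\psi_{\text{skip}}$, $(\psi_\vertices)_x\#$, $\psi_{\text{skip}}$ and $(\psi_\vertices)_y\#$ use only the symbols $\{0,1,\#\}$. Their concatenation is therefore confined to the run $u_1\#u_2\#\cdots u_t\#$ of that same block, because the next symbol outside this run is the $\langle$ of $\closev{a}$. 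Then $\langle/\psi_{\vertices}\rangle$ must match $\closev{a}$. Here I must also check that $\psi_{\text{skip}}$ is $\#$-aligned: it consumes whole vertex codes each terminated by $\#$, so each capture $(\psi_\vertices)_x$ and $(\psi_\vertices)_y$ starts right after a $\#$ (or right after $\rangle$) and ends right before a $\#$. Consequently each capture is exactly one entry $u_p$ of $\text{Adj}(a)$, not a proper substring of one. Since the second capture comes strictly after the first, we get $x = u_p$ and $y = u_q$ with $p<q$. In particular $b,c$ are distinct neighbors of $a$, so $a{-}b$ and $a{-}c$.

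The third step is the converse. Given any $a$ and any two neighbors $b = u_p$ and $c = u_q$ listed in that order in $\adj(a)$, I construct the run explicitly, exactly as in the overbracket illustration. The first $\Upsilon^*$ consumes everything before $\openv{a}$, which is $@$-free. Then the first $\psi_{\text{skip}}$ takes $u_1\#\cdots u_{p-1}\#$, the capture takes $u_p$, and the second $\psi_{\text{skip}}$ takes $u_{p+1}\#\cdots u_{q-1}\#$. Next the capture takes $u_q$, and the last $\psi_{\text{skip}}$ takes the remaining entries. Finally $\closev{a}$ is matched and the last $\Upsilon^*@$ consumes the rest of the first copy of $w_{\edges}$ together with the $@$. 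The triangle is undirected, so the order of $b$ and $c$ within the list is immaterial for later use. Thus all such triples are realized, up to swapping $b$ and $c$, which is what the statement ``select all such triples'' requires.
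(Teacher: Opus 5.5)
Your proposal is correct and takes essentially the paper's route: the paper justifies the proposition by the explicit overbracket run, which is your third step, together with the remark that $\Upsilon^*$ cannot pass the unique $@$. You go further by proving the soundness direction the paper leaves implicit (the $\langle\psi_{\mathcal{V}}\rangle$ match must be an opening tag, and $\psi_{\text{skip}}$ is $\#$-aligned, so the captures are two distinct, in-order entries of a single adjacency list); your caveat that only pairs with $b$ listed before $c$ are selectable is accurate and harmless, since the later check needs only one orientation of an undirected triangle.
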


Our remaining task is designing a subexpression that can find the following pattern when $b$ and $c$ are stored in the variables:
\[
 \cdots \openv{b} \cdots \# c \# \cdots \closev{b} \cdots ~~~.
\] 
To this end, we define the following expression $\varphi_{\text{find}}$:
\[
  \varphi_{\text{find}} := 
  \Upsilon^*\ 
  \langle \readv{x} \rangle\ 
  \psi_{\text{skip}}\ 
  \readv{y}\ \#\ \Sigma^*~.
\]
\todo{Is there $\diamondsuit$?}
We apply this expression to the rest of the input consumed by $\varphi_{\text{pick}}$.
The first part $\Upsilon^* \openv{\readv{x}}$ successfully finds the adjacency list of the vertex stored in the variable $x$ as follows:
\begin{flalign*}
& \qquad
\overbracket{\cdots}^{\Upsilon^*}
\overbracket{\openv{a}}^{\openv{\psi_\vertices}}\ 
\overbracket{\cdots \#}^{\psi_{\text{skip}}}\ 
\overbracket{\,b\,}^{(\psi_\vertices)_x}\ 
\overbracket{\#}^{\#}\ 
\overbracket{\cdots \#}^{\psi_{\text{skip}}}\ 
\overbracket{\,c\,}^{(\psi_\vertices)_y}\ 
\overbracket{\#}^{\#}\ 
\overbracket{\cdots}^{\psi_{\text{skip}}}\ 
\overbracket{\closev{a}}^{\closev{\psi_\vertices}}\ 
\overbracket{\cdots}^{\Upsilon^*}\ 
\overbracket{@}^{@}~~~
\underbracket{\cdots}_{\Upsilon^*}
\underbracket{\openv{b}}_{\openv{\readv{x}}}
\cdots \# c \# \cdots \closev{b} \cdots~~
&&
\end{flalign*}
where the variable $x$ now has the vertex $b$.
After entering the adjacency list of $b$,
the part $\psi_{\text{skip}}\ \readv{y}\ \#$ successfully finds the neighbor stored in the variable $y$ as follows:
\begin{flalign*}
& \qquad
\overbracket{\cdots}^{\Upsilon^*}
\overbracket{\openv{a}}^{\openv{\psi_\vertices}}\ 
\overbracket{\cdots \#}^{\psi_{\text{skip}}}\ 
\overbracket{\,b\,}^{(\psi_\vertices)_x}\ 
\overbracket{\#}^{\#}\ 
\overbracket{\cdots \#}^{\psi_{\text{skip}}}\ 
\overbracket{\,c\,}^{(\psi_\vertices)_y}\ 
\overbracket{\#}^{\#}\ 
\overbracket{\cdots}^{\psi_{\text{skip}}}\ 
\overbracket{\closev{a}}^{\closev{\psi_\vertices}}\ 
\overbracket{\cdots}^{\Upsilon^*}\ 
\overbracket{@}^{@}~~~
\underbracket{\cdots}_{\Upsilon^*}
\underbracket{\openv{b}}_{\openv{\readv{x}}}
\underbracket{\cdots \#}_{\psi_{\text{skip}}}\ 
\underbracket{~c~}_{\readv{y}}
\underbracket{\#}_{\#}
\cdots \closev{b} \cdots~~ &&
\end{flalign*}
The success of applying the last part means that we find the triangle $\set{a, b, c}$; thus, we finally consume the rest of the input by $\Sigma^*$ as follows:
\begin{flalign*}
& \qquad
\overbracket{\cdots}^{\Upsilon^*}
\overbracket{\openv{a}}^{\openv{\psi_\vertices}}\ 
\overbracket{\cdots \#}^{\psi_{\text{skip}}}\ 
\overbracket{\,b\,}^{(\psi_\vertices)_x}\ 
\overbracket{\#}^{\#}\ 
\overbracket{\cdots \#}^{\psi_{\text{skip}}}\ 
\overbracket{\,c\,}^{(\psi_\vertices)_y}\ 
\overbracket{\#}^{\#}\ 
\overbracket{\cdots}^{\psi_{\text{skip}}}\ 
\overbracket{\closev{a}}^{\closev{\psi_\vertices}}\ 
\overbracket{\cdots}^{\Upsilon^*}\ 
\overbracket{@}^{@}~~~
\underbracket{\cdots}_{\Upsilon^*}
\underbracket{\openv{b}}_{\openv{\readv{x}}}
\underbracket{\cdots \#}_{\psi_{\text{skip}}}\ 
\underbracket{~c~}_{\readv{y}}
\underbracket{\#}_{\#}
\underbracket{\cdots \closev{b} \cdots}_{\Sigma^*}~. &&
\end{flalign*}

This argument immediately leads to the following proposition about $\varphi_{\text{find}}$.
\begin{proposition}\label{prop:findbc-triangle}
Assume that the variable $x$ and $y$ stores a vertex $b$ and $c$, respectively.
On the application of $\varphi_{\text{find}}$ to $w_\edges$,
the following conditions are equivalent:
\begin{itemize}
\item the application results in success with consuming all the input
\item the vertices $b$ and $c$ are neighbors
\end{itemize}
\end{proposition}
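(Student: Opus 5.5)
We prove the two directions separately, working with the structure of $w_{\edges} = \adj(v_1)\,\natural\,\cdots\,\natural\,\adj(v_n)\,\natural$ and the fact that $x$ and $y$ hold vertex codes $b, c \in (0+1)^+$ (guaranteed by Proposition~\ref{prop:pickbc-triangle}). Recall $\varphi_{\text{find}} = \Upsilon^*\,\langle \readv{x} \rangle\,\psi_{\text{skip}}\,\readv{y}\,\#\,\Sigma^*$, and note that $w_{\edges}$ contains no $@$.

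For the direction from adjacency to success, suppose $(b,c) \in \edges$. Then $c$ occurs in the adjacency list $\adj(b) = \openv{b}\,u_1\#\cdots\#u_t\#\,\closev{b}$, say $c = u_s$. We exhibit the run explicitly. Let $\Upsilon^*$ consume everything in $w_{\edges}$ before $\adj(b)$; this is possible because that prefix contains no $@$. Let $\langle\readv{x}\rangle$ consume $\openv{b}$, and let $\psi_{\text{skip}}$ consume $u_1\#\cdots u_{s-1}\#$, which lies in $(\psi_{\vertices}\#)^*$. Then $\readv{y}\,\#$ consumes $c\#$, and $\Sigma^*$ consumes the rest.

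For the converse, suppose some run consumes all of $w_{\edges}$. The key step is to locate the match of $\langle\readv{x}\rangle$, i.e.\ the literal string $\langle b\rangle$. In $w_{\edges}$, the symbol $\langle$ appears only at the start of some $\openv{v}$ or of some $\closev{v}$. The string $\langle b\rangle$ cannot match at a $\closev{v}$, since the character after $\langle$ there is $/$, whereas $b \in (0+1)^+$. At an $\openv{v}$ it matches exactly when $b$ is a prefix of $v$ and the next character is $\rangle$, which forces $v = b$. Hence the run enters $\adj(b)$ immediately after $\openv{b}$.

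Next we pin down what $\psi_{\text{skip}}\,\readv{y}\,\#$ consumes. After $\openv{b}$ the text has the form $u_1\#u_2\#\cdots u_t\#\closev{b}\cdots$. The expression $\psi_{\text{skip}}$ matches only strings of the form $(0+1)^+\#$ repeated, so it must stop at a $\#$-boundary inside this list; it cannot pass $\closev{b}$, because $\langle\notin\{0,1,\#\}$. It therefore consumes $u_1\#\cdots u_{s-1}\#$ for some $s$. If $s \le t$, then $\readv{y}\,\#$ must match a prefix of $u_s\#\cdots$; since $c$ and $u_s$ contain no $\#$, this forces $c = u_s$. If $s = t+1$, then $\readv{y}$ would have to match a prefix of $\closev{b}$, which is impossible because $c$ is nonempty and binary.

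Consequently $c \in \text{Adj}(b)$, i.e.\ $b$ and $c$ are neighbors. The trailing $\Sigma^*$ places no constraint on this conclusion. The main care needed in the whole argument is the delimiter case analysis just described. The fixed-length encoding (all codes have $\lceil\log n\rceil$ bits) is not essential here, because the closing $\rangle$ in $\openv{b}$ already forces $v = b$ exactly.
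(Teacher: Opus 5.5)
Your proof is correct and takes the same route as the paper. The forward direction is exactly the run the paper traces: $\Upsilon^*$ up to $\openv{b}$, then $\psi_{\text{skip}}$ up to $c$, then $\readv{y}\,\#$, then $\Sigma^*$. The paper treats the converse as immediate from that walk-through, and your delimiter analysis supplies it faithfully by forcing the match of $\langle\readv{x}\rangle$ onto $\openv{b}$ and the match of $\readv{y}\,\#$ onto some $u_s\#$ inside $\adj(b)$.
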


Combining $\varphi_{\text{pick}}$ and $\varphi_{\text{find}}$, we define our expression $\Psi$ as follows:
\[
\Psi := \varphi_{\text{pick}}\ \varphi_{\text{find}}.
\]

Propositions~\ref{prop:pickbc-triangle}~and~\ref{prop:findbc-triangle} lead to our main property, which concludes Lemma~\ref{lemma:reduction:triangle detection to rewb}:
\[
w_\edges\ @\ w_\edges \in \lang{\Psi}
\iff
\text{there is a triangle in $G$}.
\]

\subsection{Hardness from $k$-Orthogonal Vectors Problem Hypothesis}

\label{section:REWB hardness by kOV}
\newcommand{\ip}[1]{\mathbf{IP}(#1)}
\newcommand{\kip}[2]{\mathbf{IP}_{#1}(#2)}
\newcommand{\kOV}{\ensuremath{\textbf{$k$-OV}}}
\newcommand{\OV}{\textbf{OV}}

We here show our $O(n^{2q-\epsilon})$-time hardness for the $q$-\rewb{} matching problem where $n$ is the length of input strings under \seth{}, or more technically, under the $k$-orthogonal vectors (\kOV) Hypothesis~\cite{Duraj:2019,Williams:2019}, which is weaker than \seth{} and indeed is implied by \seth{}.
To use $k$ for the \kOV{} and its hypothesis, we use ``$q$'' to denote the number of variables of \rewb{} in this section.



Let $v_1, v_2, \ldots, v_k \in \set{0, 1}^d$ be $d$-dimensional 0-1 vectors over the ring $\mathbb{Z}$.
The $k$ inner-product is defined as follows:
\[
\kip{k}{ v_1, v_2, \ldots, v_k } := \sum^d_{i = 1} \left(\ \prod^k_{j = 1} v_j[i]\,\right). 
\]
These vectors are \emph{$k$-orthogonal} if $\kip{k}{ v_1, v_2, \ldots, v_k } = 0$.
In our construction, we use the following equivalent condition:
$\displaystyle\bigwedge^d_{i = 1} \bigvee^k_{j = 1} (v_j[i] = 0)$.


We now consider the $\kOV$ and the $\kOV$ Hypothesis.
\begin{itembox}[l]{$\kOV_{N, d}$ Problem}
\noindent{}\textbf{Input:}
$k$ sets $A_1, \ldots, A_k \subseteq \set{0, 1}^d$
where each set $A_i$ has size $|A_i| = N$.

\noindent{}\textbf{Task:}
Deciding if the following holds:
\[
\exists v_1 \in A_1.\,\exists \:v_2 \in A_2.\:\ldots\:\exists v_k \in A_k.\ \kip{k}{v_1, v_2, \ldots, v_k} = 0.
\]
\end{itembox}

\begin{hypothesis*}[\textbf{$k$-OV} Hypothesis~{\cite[Hypothesis 2]{Duraj:2019},\cite[Hypothesis 4]{Williams:2019}}]
Let $k \geq 2$ be an integer.
For any $\epsilon > 0$ and
every dimension function $d(N) = \omega(\log N)$
(e.g., $d(N) = \lceil\log^2 N\rceil$),
$\kOV_{\!N,\,d(N)}$ cannot be solved in $O(N^{k - \epsilon})$ time.
\end{hypothesis*}

\noindent{}\textbf{Remark:} It is known that $\kOV$ Hypothesis is weaker than \seth. Actually, \seth{} implies $\kOV$ Hypothesis~\cite[Theorem~3.1]{Williams:2019}~\cite{Duraj:2019,Williams:2005}


\newcommand{\strfun}{\mathcal{S}}

We provide an efficient reduction from $2k\text{-}\OV$ problem to $k$-\rewb{} matching problem.
\begin{restatable}{lemma}{kOVRewbReduction}
\label{lemma:reduction:kOV to REWB}
Let $k \geq 1$ be an integer.
There exists a (fixed) $k$-\rewb{} $\Psi_k$ that solves $2k\text{-}\OV$ in the following sense:
Let $N \geq 1$ be an integer and $A_1, \ldots, A_{2k} \subseteq \set{0, 1}^d$ be an instance of $2k$-\OV{} with $|A_1| = \cdots = |A_{2k}| = N$.
There is a string $\strfun(A_1, \ldots, A_{2k})$ such that
\[
\strfun(A_1, \ldots, A_{2k}) \in \lang{\Psi_k} \iff \exists v_1 \in A_1. \exists v_2 \in A_2.\ \cdots. \exists v_{2k} \in A_{2k}.\ \kip{2k}{v_1, v_2, \ldots, v_{2k}} = 0.
\]
In particular, we can construct $\strfun(A_1, \ldots, A_{2k})$ in $O(k \cdot N \cdot d^2)$ time.
\end{restatable}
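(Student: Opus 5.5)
The plan is a reduction in which each of the $k$ variables captures a \emph{pair} of vectors. Capturing a string that encodes two vectors at once is what turns $N^2$ choices per variable into $N^{2k}$ choices overall. The $\omega$-use nature of $\Psi_k$ is essential. Each captured pair has to be checked coordinatewise against all the others, and that is done by re-reading $\readv{x_j}$ inside a Kleene star ranging over the $d$ coordinates.

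\textbf{Encoding.} Encode a vector $v\in\set{0,1}^d$ as a string $\mathrm{enc}(v)$ of length $O(d)$, with coordinates separated by delimiters. For each $j\in[k]$, form a block
\[
W_j := \prod_{(a,b)\in A_{2j-1}\times A_{2j}} \mathrm{enc}(a)\,\mathrm{enc}(b)\,\$ .
\]
This has $N^2$ entries, which is too long, so this naive form must be avoided. Since $|w|$ must be $O(kNd^2)$, the selection instead works as follows. The capture group $(\cdot)_{x_j}$ grabs a contiguous substring spanning from the start of $\mathrm{enc}(a)$ in a list of $A_{2j-1}$ to the end of $\mathrm{enc}(b)$ in a list of $A_{2j}$ placed right after it. We therefore lay out $L_{2j-1}\,\natural\,L_{2j}$, where $L_i$ lists the encodings of $A_i$ separated by $\#$. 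The capture $(\Sigma_{\mathrm{vec}}\,\#\,\Upsilon^*\natural\,\Upsilon^*)_{x_j}$ is then forced to begin at a vector boundary of $L_{2j-1}$ and end at a vector boundary of $L_{2j}$. The content of $x_j$ thus has the form $\mathrm{enc}(a)\#\cdots\natural\cdots\#\mathrm{enc}(b)$, where the first vector is $a$ and the last is $b$. There are $N^2$ such choices, and the block length is $O(Nd)$.

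\textbf{Checking orthogonality.} To check $\bigwedge_{i}\bigvee_{j}(v_j[i]=0)$ with regex gadgets alone, the string ends with a checking region. For each coordinate $i$ it contains a segment that lets $\readv{x_j}$ be matched while exposing coordinate $i$ of the first and last vectors of $x_j$. Concretely, encode $\mathrm{enc}(v)$ so that its $i$-th coordinate sits at a fixed position. Then provide, for each $i$, $O(d)$ ``templates'' that the regex part can align with. The check for coordinate $i$ becomes: some $j$ and some side (first or last vector) has a $0$ at position $i$. The coordinate loop is a Kleene star, which is why $\Psi_k$ is $\omega$-use. Within each iteration, every $\readv{x_j}$ must be consumed, and the regex must verify a witness position. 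I would make the witness check through a marking trick: the checking string contains $d$ copies of a padded region, with a marker at coordinate $i$ in copy $i$. Every copy is long enough to absorb each captured $x_j$ via a ``free'' $\Sigma^*$ part. The $d^2$ factor in the construction time comes from these $d$ copies of $O(d)$-length templates per list.

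\textbf{Main obstacle.} The delicate point is forcing $\readv{x_j}$, whose content is variable-length (it includes the vectors between $a$ and $b$), to align so that only the first and last vectors are inspected. I would handle this by reading $x_j$ against a copy of the whole block $L_{2j-1}\natural L_{2j}$ with coordinate-$i$ bits masked. Since $x_j$ is a substring of that block, $\readv{x_j}$ matches within the copy exactly at the original offset, because delimiters force a unique alignment. The regex parts immediately before and after the match then test the bit at position $i$ of the adjacent vectors.

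Once correctness is established, the size bound $O(kNd^2)$ and the equivalence with $2k$-OV follow directly. Theorem~\ref{thm:REWB hardness by kOV} then follows by a standard argument: $|w|=\tilde O(N)$ would give $O(N^{2k-\epsilon})$ for $2k$-OV.
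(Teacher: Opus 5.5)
Your selection step is right and matches the paper. One variable per pair is captured as a window straddling the separator between the two lists. The window then contains the unique separator, so $\readv{x_j}$ can only re-align at its original offset in any copy of the arena.

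The gap is the coordinate check, which fails as you describe it, for three reasons.
\begin{itemize}
\item A fixed \regex{} cannot locate coordinate $i$ inside a length-$d$ encoding, because $d$ grows with $N$. Some marker is needed.
\item Any marker or mask placed in copy $i$ also lands in the vectors lying strictly between $a$ and $b$. Those vectors are part of the content of $x_j$, so $\readv{x_j}$ no longer matches in that copy. Your claim that ``$x_j$ is a substring of that block'' is false for the masked block. Which vectors are intermediate depends on the unknown choice of $(a,b)$, so no vector encoding may differ between copies at all. The padded regions with a marker at coordinate $i$ fail for the same reason: a region that does not literally contain $x_j$ cannot match $\readv{x_j}$, and absorbing that stretch with $\Sigma^*$ performs no check.
\item In your encoding, $x_j$ contains $\mathrm{enc}(a)$ and $\mathrm{enc}(b)$ themselves. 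The letters adjacent to a match of $\readv{x_j}$ therefore belong to delimiters and to the neighbours of $a$ and $b$, not to $a$ and $b$.
\end{itemize}
More fundamentally, the star body is the same \regex{} in every round. The only information passed from one round to the next is the content of the variables. With a single capture followed only by reads, nothing can move the inspection point from one coordinate to the next.

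The missing idea, used in the paper, is to re-bind each variable through its own backreference.
\begin{itemize}
\item Let $x$ initially capture exactly the gap strictly between $a$ and $b$, from the $\#$ after $\rev{a}$ to the $\#$ before $b$.
\item Store the vectors of the left list reversed.
\item Use as the star body $\Gamma^*\#\mathcal{B}^*(0\,\readv{x}\,\mathcal{B})_x\mathcal{B}^*\#\Gamma^*\natural + \Gamma^*\#\mathcal{B}^*(\mathcal{B}\,\readv{x}\,0)_x\mathcal{B}^*\#\Gamma^*\natural$.
\end{itemize}
Each round matches $\readv{x}$ against a fresh, identical, unmarked copy of the arena. It then re-binds $x$ to the window widened by one letter on each side. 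In round $\ell$ those two letters are exactly $a[\ell]$ and $b[\ell]$, and the required $0$ enforces $a[\ell]=0\lor b[\ell]=0$.

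Taking $d+1$ copies of the $O(kNd)$-length arena forces exactly $d$ rounds, which is where the $O(kNd^2)$ bound comes from. For $k$ variables the arena is $\mathcal{I}_1\cdots\mathcal{I}_k$. In each round every $x_s$ is widened by $(\mathcal{B}\,\readv{x_s}\,\mathcal{B})_{x_s}$, except one $x_r$, chosen by a union over $r$, which receives the $0$-test instead.
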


Before providing the proof of this lemma (shown in Section~\ref{section:kOV-to-REWB reduction}), we here prove Theorem~\ref{thm:REWB hardness by kOV} using this lemma.
While the proof follows a standard reduction argument, we provide the details below for completeness.
\begin{theorem*}[Reformulation of Theorem~\ref{thm:REWB hardness by kOV}]
The
$q$-\rewb{} matching
problem cannot be solved in $O(|w|^{2q - \epsilon})$ time for any $\epsilon > 0$
under $\kOV$ Hypothesis.
\end{theorem*}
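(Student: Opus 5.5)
We derive the theorem from Lemma~\ref{lemma:reduction:kOV to REWB} by a standard contradiction argument. Fix $q \geq 1$ and suppose that for some $\epsilon > 0$ the matching problem for the fixed $q$-\rewb{} $\Psi_q$ of Lemma~\ref{lemma:reduction:kOV to REWB} can be solved in $O(|w|^{2q-\epsilon})$ time by an algorithm $\mathcal{A}$. We turn $\mathcal{A}$ into an algorithm for $2q$-\OV{} that beats the $2q$-\OV{} Hypothesis; the hypothesis applies because $2q \geq 2$.

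Take an instance $A_1,\ldots,A_{2q} \subseteq \set{0,1}^{d}$ with $|A_i| = N$ and $d = d(N) = \lceil \log^2 N \rceil$, which is $\omega(\log N)$. Apply the lemma with $k := q$ to construct $w := \strfun(A_1,\ldots,A_{2q})$. This takes $O(q N d^2)$ time, so $|w| = O(q N d^2) = O(N \log^4 N)$, since $q$ is a constant. Then run $\mathcal{A}$ on $w$ and output its answer. The lemma guarantees that this answer is correct for $2q$-\OV.

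It remains to bound the running time. The total time is
\[
O(N\log^4 N) + O\bigl((N\log^4 N)^{2q-\epsilon}\bigr) = O\bigl(N^{2q-\epsilon}\log^{4(2q-\epsilon)} N\bigr) = O\bigl(N^{2q-\epsilon/2}\bigr),
\]
where the last step uses that polylogarithmic factors are $O(N^{\epsilon/2})$. This contradicts the $2q$-\OV{} Hypothesis with $\epsilon/2$ and $d(N)=\lceil\log^2 N\rceil$. Hence no $O(|w|^{2q-\epsilon})$-time algorithm exists, even for the single fixed expression $\Psi_q$. This proves Theorem~\ref{thm:REWB hardness by kOV}, with $\Psi_q$ as the witness.

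The only point needing care is that the string length must be $N^{1+o(1)}$, i.e., near-linear in $N$ with only a polylogarithmic blow-up from $d$. The lemma's $O(kNd^2)$ bound provides exactly this once $d$ is chosen polylogarithmic. Accordingly, the real difficulty lies in proving Lemma~\ref{lemma:reduction:kOV to REWB} itself, which the paper defers to Section~\ref{section:kOV-to-REWB reduction}. There, each of the $q$ variables must encode a \emph{pair} of vectors, one from $A_{2j-1}$ and one from $A_{2j}$, using an $\omega$-use expression so that every coordinate can be checked repeatedly.
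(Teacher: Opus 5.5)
Your proposal is correct and follows the paper's proof essentially step for step: you assume an $O(|w|^{2q-\epsilon})$ algorithm, fix $d(N)=\lceil \log^2 N\rceil=\omega(\log N)$, build the string of Lemma~\ref{lemma:reduction:kOV to REWB} of length $O(Nd^2)=N^{1+o(1)}$, and absorb the polylogarithmic blow-up into $N^{\epsilon/2}$ to get an $O(N^{2q-\epsilon/2})$ algorithm for $2q$-OV, which contradicts the hypothesis. The one tacit point, which the paper shares, is that you may assume without loss of generality that $\epsilon<1$, so that the $(N\log^4 N)^{2q-\epsilon}$ term dominates and the final absorption step is valid.
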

\begin{proof}
%
We proceed by contradiction. 
Suppose there exists an algorithm $\mathcal{A}$ solving the problem in time $O(|w|^{2q-\epsilon})$ for some constant $\epsilon > 0$.
We construct an efficient algorithm to solve $2q\text{-}\OV_{\!N,\,d}$ as follows:

\begin{enumerate}
    \item Given an instance of $2q\text{-}\OV_{\!N,\,d}$ consisting of $2q$ sets $\set{A_1, A_2, \ldots, A_{2q}}$ of $N$ vectors in $\set{0,1}^d$, we first transform it into a string $\mathcal{S} := \mathcal{S}(A_1, \ldots, A_{2q})$ by our reduction.
    By the assumption, this reduction takes $T(N) = O(N d^2)$ time, and $n_{\mathcal{S}} := |\mathcal{S}| \leq c \cdot N d^2$ for some constant $c$ depending on $q$.
    \item Next, we run the algorithm $\mathcal{A}$ on $\mathcal{S}$. The running time is
    \[
    T_{\mathcal{A}}(n_{\mathcal{S}}) = O(n_{\mathcal{S}}^{2q-\epsilon}) = O\bigl( (N d^2)^{2q-\epsilon} \bigr).
    \]
\end{enumerate}

Fix the dimension function $d(N) := \lceil\,(\log N)^2\,\rceil$.
Since $d(N) = \omega(\log N)$ holds, the $\kOV$ Hypothesis (with $k = 2q$) applies also to $2q\text{-}\OV_{\!N,\,d(N)}$.

By Step 2 we obtain a $2q\text{-}\OV$ algorithm with running time :
\[
T_{\text{total}}(N)
= O(N d^2) + O\big((N d^2)^{2q-\epsilon}\big)
= O(N d^2) + O\big(N^{2q-\epsilon}\cdot d^{4q-2\epsilon}\big)
\quad \text{where}\ d = d(N).
\]
Since $d = \lceil\,(\log N)^2\,\rceil$,
we have
$d^{4q-2\epsilon} \leq
\bigl((\log{N})^2 + 1\bigr)^{4q - 2\epsilon} =
(\log{N})^{O(1)} = N^{o(1)}$.
Thus for any $\delta > 0$ and sufficiently large $N$,
\[
d^{4q-2\varepsilon} \leq N^\delta .
\]
Choose $\delta = \epsilon/2$.
Then
\[
T_{\text{total}}(N)
= O(N^{1+o(1)}) + O(N^{(2q-\epsilon)+(\epsilon/2)})
= O(N^{2q-\epsilon/2}),
\]
contradicting the $\kOV$ Hypothesis with $k = 2q$.

\end{proof}

\subsubsection{Proof of Lemma~{\ref{lemma:reduction:kOV to REWB}}}

\label{section:kOV-to-REWB reduction}

To explain our idea, we first give a reduction from $2\text{-}\OV$ to $1$-\rewb{} matching problem.
The reduction can be easily extended to a reduction from $2k\text{-}\OV$ to $k$-\rewb{}.
\begin{lemma}[Reduction for $k=1$]\label{lemma:reduction from 2OV to 1REWB}
There exists a (fixed) $1$-\rewb{} $\Psi_1$ that solves $2\text{-}\OV$ in the following sense:
Let $N \geq 1$ be an integer and $A_1, A_2 \subseteq \set{0, 1}^d$ be an instance of $2$-\OV{} with $|A_1| = |A_2| = N$.
There is a string $\strfun(A_1, A_2)$ such that
\[
\strfun(A_1, A_2) \in \lang{\Psi_1} \iff \exists v_1 \in A_1. \exists v_2 \in A_2.\ \ip{v_1, v_2} = 0.
\]
In particular, we can construct $\strfun(A_1, A_2)$ in $O(N \cdot d^2)$ time.
\end{lemma}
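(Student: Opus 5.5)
The plan is to build a fixed $1$-\rewb{} $\Psi_1$ of the shape
\[
\Psi_1 := \Sigma^*\,(\phi_{\text{vec}})_x\,\Sigma^*\,@\,\Sigma^*\,\langle\,\phi_{\text{check}}\,\rangle\,\Sigma^*,
\]
and a string $\strfun(A_1,A_2) := w_{A_1}\,@\,w_{A_2}$ with a unique separator $@$. The first half nondeterministically captures the encoding of one vector $v_1 \in A_1$ into $x$. The second half nondeterministically enters the encoding of one vector $v_2 \in A_2$, delimited by $\langle\,\cdot\,\rangle$, and verifies $\bigwedge_{i=1}^d (v_1[i]=0 \lor v_2[i]=0)$ using backreferences to $x$ inside a Kleene star. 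The statement allows $\Psi_1$ to be $\omega$-use, so the backreference may occur under the star. Since $\Psi_1$ is fixed and cannot depend on $d$, coordinate indices have to be carried by the string, not by the expression.

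The key step, and the one I expect to be the main obstacle, is the coordinate gadget. A backreference $\readv{x}$ must match a literal substring of the text; it has no wildcards. Hence the text, not the expression, must supply for each coordinate $i$ with $v_2[i]=1$ a literal that equals the stored $v_1$-encoding exactly when $v_1[i]=0$, regardless of the other coordinates of $v_1$. This is impossible with a plain $d$-bit capture, since the $2^{d-1}$ admissible vectors cannot all be written down.

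The remedy I will try first is to make $x$ capture a \emph{local} piece of $v_1$, not all of it, and to use the $d^2$ budget for alignment:
\begin{itemize}
\item Each $v_1$ is written in $w_{A_1}$ as $d$ blocks $\beta_1(v_1)\cdots\beta_d(v_1)$, each of length $O(d)$. Block $\beta_i$ carries a unary index marker $1^i$ followed by the bit $v_1[i]$.
\item Each $v_2$ is written in $w_{A_2}$ as $d$ blocks $\gamma_i(v_2)$ of length $O(d)$, each listing the admissible literals $1^i\,b$ for $b \in \set{0}$ if $v_2[i]=1$ and $b \in \set{0,1}$ if $v_2[i]=0$, separated by $\#$.
\item The check is then a star over coordinates in which a backreference is matched against an admissible literal in $\gamma_i$, with a skip $(\ldots\#)^*$ selecting which literal.
\end{itemize}

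The weak point of this gadget is that a single $x$ holds only one local piece. So I would instead let $x$ capture the whole concatenation $\beta_1\cdots\beta_d$ and match it once against a v2-side string of length $O(d^2)$. That string interleaves the $\gamma_i$ so that a fixed \regex{} between consecutive separators picks one admissible literal per block, and the backreference compares the whole selection. Making this sound is the remaining design issue, because \regex{} choices cannot be inserted inside $\readv{x}$. If the alignment cannot be made to work, I would fall back on the reduction of Nogami and Terauchi~\cite{Nogami:2026:arxiv} for the $2$-\OV{} lower bound and adapt it to a fixed expression.

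Once the gadget is fixed, correctness goes as in Propositions~\ref{prop:pickbc-triangle} and~\ref{prop:findbc-triangle}. The unique $@$ and the $\langle\,\rangle$ delimiters force any match to select exactly one $v_1$ and one $v_2$. The per-coordinate lemma then gives success if and only if $\ip{v_1,v_2}=0$. The length bound $|\strfun| = O(N d^2)$, and linear-time construction, follow because each vector contributes $d$ blocks of length $O(d)$. The extension to $2k$-\OV{} with $k$ variables would pair coordinates of $v_{2j-1}$ and $v_{2j}$ through variable $x_j$ in the same way.
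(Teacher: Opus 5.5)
Your proposal has a genuine gap. The coordinate gadget, which you rightly single out as the crux, is never actually built, and neither of the two versions you sketch can work.

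\begin{itemize}
\item In the first version, $x$ is bound once, to the left of $@$, to a single block $\beta_i(v_1)$. Testing the next coordinate would require rebinding $x$ to $\beta_{i+1}(v_1)$, but that part of the text has already been consumed. So only one coordinate is ever checked.
\item In the second version, one occurrence of $\readv{x}$ succeeds only against a contiguous literal copy of the full encoding of $v_1$. The $v_2$-side text would therefore have to contain the exact encoding of every admissible $v_1$. Listing all vectors compatible with $v_2$ is exponential in $d$. Listing only the compatible ones that occur in $A_1$ already amounts to solving the instance. No interleaving of the $\gamma_i$ can repair this, for exactly the reason you state: no nondeterministic choice can be made inside $\readv{x}$.
\end{itemize}

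So the ``remaining design issue'' is the entire content of the lemma. The fallback does not close it either. The paper notes that the Nogami--Terauchi bound is stated with a $\mathrm{poly}(|E|)$ factor, whereas this lemma is about a \emph{fixed} expression. Adapting their reduction to a fixed $\Psi_1$ is precisely what must be proved.

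The missing idea is to let $x$ hold a \emph{pointer to the chosen pair} rather than the bits of $v_1$, and to repeat the whole instance so that the pointer can be reused once per coordinate. Write $A_1=\set{v_1,\dots,v_N}$ and $A_2=\set{u_1,\dots,u_N}$. The paper uses the arena $\mathcal{I}=\#\,\overleftarrow{v_1}\,\#\cdots\#\,\overleftarrow{v_N}\,\#\,@\,\#\,u_1\,\#\cdots\#\,u_N\,\#\,\natural$, the string $\mathcal{I}^{d+1}$, and the expression $\Psi_1=\psi\,(\varphi_v+\varphi_u)^*$.

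\begin{itemize}
\item On the first copy, $\psi=\Gamma^*\,(\#\,\Gamma^*\,@\,\Gamma^*\,\#)_x\,\Gamma^*\,\natural$, with $\Gamma=0+1+\#$. It binds $x$ to the factor lying strictly between some $\overleftarrow{v_i}$ and some $u_j$.
\item On each later copy, $\varphi_v=\Gamma^*\,\#\,\mathcal{B}^*\,(0\,\readv{x}\,\mathcal{B})_x\,\mathcal{B}^*\,\#\,\Gamma^*\,\natural$ with $\mathcal{B}=0+1$, and $\varphi_u$ is the same with $(\mathcal{B}\,\readv{x}\,0)_x$.
\item The content of $x$ contains the unique $@$ of the copy, and all vectors have length $d$. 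Hence the occurrence matched by $\readv{x}$ is forced to the same relative position in each copy.
\item The self-referential rebinding then extends $x$ by the next coordinate of $v_i$ on the left (hence the reversal) and the next coordinate of $u_j$ on the right, and demands that one of them be $0$.
\end{itemize}

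Every summand ends at the copy's $\natural$, so exactly $d$ coordinates are tested while $x$ keeps the identity of $(v_i,u_j)$. The $d^2$ in the length therefore comes from $d+1$ copies of an $O(Nd)$ arena, not from $d$ blocks of length $O(d)$ per vector.
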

\begin{proof}
\newcommand{\va}{v}
\newcommand{\vb}{u}

We note that the proof presented here can be viewed as the \rewb-version of a similar result and construction for 1-turn bounded-height ($\leq d$) pushdown automata~\cite[Theorem~4]{Hansen:2021}.

Let $A_1 = \set{ \va_1, \va_2, \ldots, \va_ N}$ and
$A_2 = \set{ \vb_1, \vb_2, \ldots, \vb_N }$. 
We set $\Sigma := \set{ 0, 1, \#, @, \natural }$ and
define the following string $\mathcal{I}$, which serves as the arena for our construction:
\[
\mathcal{I} := \#\,\rev{\va_1}\,\#\,\rev{\va_2}\,\#\ \cdots \#\,\rev{\va_N}\,\# @ \#\,\vb_1\,\#\,\vb_2\,\# \cdots \#\,\vb_N\,\#\,\natural
\]
where $\rev{\va}$ means the reversed version of $\va$: i.e., $\rev{11001} = 10011$.
It should be noted that $|\mathcal{I}| = O(d \cdot N)$.

A key of our reduction is representing a pair of vectors $(\va_i, \vb_j)$ with $1 \leq i, j \leq N$ by a \emph{single} substring of $\mathcal{I}$.
A substring $\pi$ of $\mathcal{I}$ \emph{represents} $(\va_i, \vb_j)$ if the following holds:
\[
\mathcal{I} = \cdots \#\, \rev{\va_i}\ \overbracket{\# \cdots \#\,@\,\# \cdots \#}^{\text{\large $\pi$}}\ \vb_j \, \# \cdots \natural.
\]
More formally,
let $\pi = \pi(i, j)$ be the factor (substring) of $\mathcal{I}$ that starts right after the occurrence of $\rev{\va_i}$ and ends right before the occurrence of $\vb_j$.
Explicitly given by:
\[
\pi(i, j) := (\# \overleftarrow{\va_{i+1}} \# \cdots \# \overleftarrow{\va_N}) \, \# @ \# \, (\vb_1 \# \cdots \# \vb_{j-1} \#)
\]
If $i=N$, the left part is empty; if $j=1$, the right part is empty.

For example,
if $A_1 = \set{ \va_1 = 011, \va_2 = 001}$ and $A_2 = \set{ \vb_1 = 101, \vb_2 = 100 }$, then our string $\mathcal{I}$ takes the following form:
\[
\mathcal{I} = \#\,110\,\#\,100\,\#\,@\,\#\,101\,\#\,100\,\#\natural.
\]\todo{below we write... なのでこの段階では $\mathcal{I}$ にします}
In this setting, the substring $\pi = \#\,100\,\#\,@\,\#\,101\,\#$ represents $(\va_1, \vb_2)$ because $\mathcal{I} = \va_1\,\pi\,w_2$.
Below we write $\tilde{\mathcal{I}}$ to denote $\mathcal{I}$ for this example.

We define the following building block expression $\psi$ and $\phi_1, \phi_2$ where
\begin{itemize}
\item we use $\psi$ by applying it to a copy of $\mathcal{I}$ to nondeterministically select and represent vectors $\va_*$ from $A_1$ and $\vb_*$ from $A_2$; and
\item we use $\phi_1$ (resp. $\phi_2$) by applying it to a copy of $\mathcal{I}$ to check $\va_*[\ell] = 0$ (resp. $\vb_*[\ell] = 0$) for each index $\ell \in [1 \btw d]$.
\end{itemize}

We first define our expression $\psi$ as follows:
\[
\psi := \Gamma^*\,(\#\,\Gamma^*\,@\,\Gamma^* \#)_x\,\Gamma^*\,\natural
\]
where $\Gamma := (0 + 1 + \#)$ is the \regex. 
We note that $\Gamma$ cannot consume the symbols $@$ and $\natural$.

We apply $\psi$ to $\mathcal{I}$ to nondeterministically select vectors $v_*$ from $A_1$ and $\vb_*$ from $A_2$.
For our example $A_1 = \set{ \va_1 = 011, \va_2 = 001}$ and $A_2 = \set{ \vb_1 = 101, \vb_2 = 100 }$,
applying $\psi$ to $\tilde{\mathcal{I}}$ can generate the following match, in which $\va_* = \va_1$ and $\vb_* = \vb_2$ are selected:
\[
\overbracket{\#110}^{\Gamma^*}\ 
\overbracket{\#100\,\#@\#\,101\#}^{(\# \Gamma^* @ \Gamma^* \#)_x}\ 
\overbracket{100\#\natural}^{\Gamma^* \natural}
\quad
\text{where $x$ has $\#\,100\,\#@\#\,101\,\#$.}
\]
Another run can select $\va_* = \va_2$ and $\vb_* = \vb_1$ as follows:
\[
\overbracket{\#\,110\,\#\,100}^{\Gamma^*}\ 
\overbracket{\#@\#}^{(\# \Gamma^* @ \Gamma^* \#)_x}\ 
\overbracket{101\,\#\,100\#\natural}^{\Gamma^* \natural}.
\quad
\text{where $x$ has $\#\,@\,\#$}.
\]

By the above observation, the following is clear.
\begin{proposition}
Applying $\psi$ to $\mathcal{I}$,
it completely consumes $\mathcal{I}$ and saves a substring into the variable $x$, which represents $(\va_i, \vb_j)$ for some $1 \leq i, j \leq N$.
Moreover, any pair $(\va_i, \vb_j)$ can be captured and represented via $x$ by applying $\psi$ to $\mathcal{I}$.
\end{proposition}

Next we define our expressions $\varphi_{\va}$ and $\varphi_{\vb}$ as follows:
\[
\varphi_{\va} := \Gamma^* \# \mathcal{B}^*\, (0\,\readv{x}\,\mathcal{B})_{x}\,\mathcal{B}^* \# \Gamma^*\,\natural, \qquad
\varphi_{\vb} := \Gamma^* \# \mathcal{B}^*\, (\mathcal{B}\,\readv{x}\,0)_{x}\,\mathcal{B}^* \# \Gamma^*\,\natural .
\]
where $\mathcal{B} := (0 + 1)$.

The former $\varphi_{\va}$ requires $\va_*[\ell] = 0$ and the latter $\varphi_{\vb}$ requires $\vb_*[\ell] = 0$ for some index $\ell \in [1 \btw d]$.
For example,
applying $\varphi_{\va}$ to another copy of $\tilde{\mathcal{I}}$ under $x = \#100@\#101\#$,
it works as follows:
\[
\overbracket{\phantom{01}}^{\Gamma^*}\ 
\overbracket{\#}^{\#}\ 
\overbracket{11}^{\mathcal{B}^*}\ 
\overbracket{0\,\#100\,\#@\#\,101\#\,1}^{(0\,\readv{x} \,\mathcal{B})_x}\ 
\overbracket{00}^{\mathcal{B}^*}\ 
\overbracket{\#}^{\#}\ 
\overbracket{\,\natural\,}^{\Gamma^* \natural}.
\]
This application also updates $x$ as follows:
\[
x = \#100\,\#@\#\,101\# \leadsto 0\,\#100\,\#@\#\,101\#\,1~.
\]

Since one application of $\varphi_{\va}$ and $\varphi_{\vb}$ extends $x$ with a single-letter,
we can check $\displaystyle\bigwedge^d_{\ell = 1} (\va_*[\ell] = 0 \lor \vb_*[\ell] = 0)$ by repeatedly applying the expression $(\varphi_{\va} + \varphi_{\vb})$ .
For example, applying $\varphi_{\vb}$ to another copy of $\tilde{\mathcal{I}}$ under the updated $x$,
it checks if $w_*[2] = 0$ and updates $x$ as follows:
\[
\overbracket{\phantom{01}}^{\Gamma^*}\ 
\overbracket{\#}^{\#}\ 
\overbracket{1}^{\mathcal{B}^*}\ 
\overbracket{1\,0\#100\,\#@\#\,101\#1\,0}^{(\mathcal{B}\, \readv{x}\,0)_x}\ 
\overbracket{0}^{\mathcal{B}^*}\ 
\overbracket{\#}^{\#}\ 
\overbracket{\,\natural\,}^{\Gamma^* \natural},
\quad
x = 0\#100\,\#@\#\,101\#1 \leadsto 1\,0\#100\,\#@\#\,101\#1\,0~.
\]

By the above observation, the following is clear.
\begin{proposition}
Assume the content of $x$ represents $\va_i[\ell]$ and $\vb_j[\ell]$ for some $1 \leq \ell \leq d$.
Then, applying $(\varphi_{\va} + \varphi_{\vb})$ to $\mathcal{I}$, the following holds:
if the application succeeds, then $\va_i[\ell] \cdot \vb_j[\ell] = 0$ and $x$ is updated to represent $\va_i[\ell+1]$ and $\vb_j[\ell+1]$ (when $\ell \leq d$).
\end{proposition}

In summary, we employ the following gadgets:
\[
\strfun(A_1, A_2) := \overbrace{\mathcal{I}\,\mathcal{I}\,\cdots\,\mathcal{I}}^{\text{$(1+d)${-}copies}},
\qquad
\Psi_1 := \psi\ (\varphi_{\va} + \varphi_{\vb})^*
\]
It should be noted that the expression $\Psi_1$ does not depend on the instance $A_1$ and $A_2$; this $\Psi_1$ can be used for any $2\text{-}\OV$ instances.\todo{微調整}

By the above argument,
it is clear that $\strfun(A_1, A_2) \in \lang{\Psi_1}$ iff there are orthogonal vectors in $A_1$ and $A_2$.
We note that $|\strfun(A_1, A_2)| = O(N \cdot d^2)$ because $|\mathcal{I}| = O(N \cdot d)$, and $\strfun(A_1, A_2)$ can be constructed in $O(N \cdot d^2)$ time.
\end{proof}

We can generalize the above reduction to one from $2k\text{-}\OV$ to $k$-\rewb.

\kOVRewbReduction*

\begin{proof}
\newcommand{\va}{v}
\newcommand{\vb}{u}

We extend the construction for Lemma~\ref{lemma:reduction from 2OV to 1REWB} to the case $k \geq 2$.
We represent $2 \cdot k$ vectors using $k$ variables $x_1, x_2, \ldots, x_k$.
We assume the following representation:
\[
\begin{array}{l}
A_1 = \set{ \va^1_1, \ldots, \va^1_N },\ A_2 = \set{ \vb^1_1, \ldots, \vb^1_N }, \\[5pt]
A_3 = \set{ \va^2_1, \ldots, \va^2_N },\
A_4 = \set{ \vb^2_1, \ldots, \vb^2_N }, \\
~\vdots \\
A_{2k-1} = \set{ \va^{k}_1, \ldots, \va^{k}_N },\ A_{2k} = \set{ \vb^{k}_1, \ldots, \vb^{k}_N }.
\end{array}
\]
We change the definition of $\mathcal{I}$ as follows:
\[
\mathcal{I} := \mathcal{I}_1\ \mathcal{I}_2\ \cdots \mathcal{I}_k\ ,
\]
where
\[
\mathcal{I}_r := \#\,\rev{\va^r_1}\,\# \cdots \,\#\rev{\va^r_N} \,\# @ \#\,\vb^r_1\,\# \cdots \# \vb^r_{N} \#\,\natural\ .
\]
For example, if $k = 2$,
\[
\mathcal{I} = \# \rev{\va^1_1} \#  \cdots \# \rev{\va^1_N} \,\# @ \#\, \vb^1_1 \# \cdots \# \vb^1_n \# \natural\ \ \# 
\rev{\va^2_1} \# \cdots \# \rev{\va^2_N} @ \# \vb^2_1 \# \cdots \# \vb^2_N \# \natural\ .
\]

We use the variable $x_r$ ($1 \leq r \leq k$) to represent $(\va^r_{i}, \vb^r_{j})$ with $1 \leq i, j \leq N$.
For example, if $k = 2$,
we use $x_1$ and $x_2$ as follows:
\[
\# \rev{\va^1_1} \# \rev{\va^1_2} \overbracket{\# \cdots \# \rev{\va^1_N} \,\#@\#}^{x_1} \vb^1_1 \# \vb^1_2 \# \cdots \# \vb^1_N \# \natural\ \ 
\# \rev{\va^2_1} \# \rev{\va^2_2} \overbracket{\# \cdots \# \rev{\va^2_N}\,\#@ \# \vb^2_1 \#}^{x_2} \vb^2_2 \# \cdots \# \vb^2_N \# \natural
\]

To accommodate multiple variables,
we below modify $\psi$, $\varphi_{\va}$, and $\varphi_{\vb}$ in Lemma~\ref{lemma:reduction from 2OV to 1REWB} (the case $k=1$).
\todo{なんかまだ間違ってる気がするので直しました。これで大丈夫か確認お願いします}

We first define $\psi$ to nondeterministically select vectors from $A_1, A_2, \ldots, A_{2k}$ and bind them into $x_1,\ldots x_k$ as follows:
\[
\psi := \psi_1 \psi_2 \cdots \psi_k,
\qquad
\text{ where }\ 
\psi_{r} := \Gamma^*\,(\# \Gamma^* @ \Gamma^* \#)_{x_r}\,\Gamma^* \natural, \quad \Gamma := (0 + 1 + \#).
\]
As with $\psi$ in Lemma~\ref{lemma:reduction from 2OV to 1REWB}, the following holds for $\psi$.
\begin{proposition}
Applying $\psi$ to $\mathcal{I}$,
it completely consumes $\mathcal{I}$ and saves substrings to the variables $x_1, x_2, \ldots, x_{k}$
where each $x_r$ represents $(\va^r_i, \vb^r_j)$ for some $1 \leq i, j \leq N$.
Moreover, for each $1 \leq r \leq k$,
any pairs $(\va^r_i, \vb^r_j)$ can be captured and represented via $x_r$ by applying $\psi$ to $\mathcal{I}$.
\end{proposition}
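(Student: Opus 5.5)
The proposition has two halves: every completed run of $\psi=\psi_1\cdots\psi_k$ on $\mathcal{I}=\mathcal{I}_1\cdots\mathcal{I}_k$ binds each $x_r$ to a representative of some pair $(\va^r_i,\vb^r_j)$, and every choice of pairs is realized by some run. The plan is to reduce both halves to the single-block analysis of Lemma~\ref{lemma:reduction from 2OV to 1REWB}, using the $\natural$ delimiters to separate the blocks.

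\emph{Alignment.} First I will show that in any successful match, $\psi_r$ consumes exactly $\mathcal{I}_r$. The only letters outside $\Gamma=(0+1+\#)$ are $@$ and $\natural$. Each $\psi_r$ consumes exactly one $\natural$, namely its final letter, and $\mathcal{I}$ contains exactly $k$ occurrences of $\natural$, one at the end of each block. Induction on $r$ then shows that $\psi_1\cdots\psi_r$ consumes exactly $\mathcal{I}_1\cdots\mathcal{I}_r$. Indeed, $\psi_r$ must start at the beginning of $\mathcal{I}_r$. It cannot stop before the $\natural$ of $\mathcal{I}_r$, since its last letter is a $\natural$. It also cannot pass that $\natural$, since $\Gamma^*$ never reads $\natural$. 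Also, each $\mathcal{I}_r$ contains exactly one $@$ and $\psi_r$ reads exactly one $@$, so the $@$ read inside the capture group of $\psi_r$ is the $@$ of $\mathcal{I}_r$.

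\emph{Shape of each capture.} Fix $r$. Within $\mathcal{I}_r$, the substring bound to $x_r$ has the form $\#\,\Gamma^*\,@\,\Gamma^*\,\#$ and contains the unique $@$. Its first letter is a $\#$ to the left of $@$. Every $\#$ left of $@$ directly follows some $\rev{\va^r_i}$; the exception is the very first $\#$ of $\mathcal{I}_r$, which can be handled by setting $i=0$, or excluded as the convention requires. Symmetrically, its last letter is a $\#$ right of $@$, and every such $\#$ directly precedes some $\vb^r_j$, except the final $\#$ before $\natural$. So the capture equals $\pi(i,j)$ for the corresponding indices, exactly as in the $k=1$ case.

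\emph{Boundary cases and the converse.} The step I expect to need the most care is these boundary delimiters, since the definition of ``represents'' fixes $1\le i,j\le N$. I would check them against the definition of $\pi(i,j)$ and, if needed, note that degenerate captures make later $\varphi$-applications fail, which is harmless for the overall reduction. For the converse, given any pairs $(i_r,j_r)$, I build the run blockwise. Let $\Gamma^*$ consume $\#\rev{\va^r_1}\#\cdots\#\rev{\va^r_{i_r}}$. Capture $\pi(i_r,j_r)$. Let the final $\Gamma^*\natural$ consume $\vb^r_{j_r}\#\cdots\#\natural$. Concatenating these blockwise runs gives a full match of $\mathcal{I}$, because the semantics of concatenation composes valuations and distinct $\psi_r$ write distinct variables.
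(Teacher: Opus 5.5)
Your route matches the paper's. The paper justifies this proposition only by pointing to the $k=1$ case, which it in turn calls clear from the examples. Your $\natural$/$@$ counting argument for the blockwise alignment and your explicit run for the converse make that analogy rigorous, and both parts are correct.

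The gap is the boundary case, which you leave hedged. It is not a matter of convention, because those runs really occur.
\begin{itemize}
\item If the leading and trailing $\Gamma^*$ of $\psi_r$ both read $\epsilon$, then $x_r$ is bound to all of $\mathcal{I}_r$ except its final $\natural$.
\item More generally, $x_r$ may start at the first $\#$ of $\mathcal{I}_r$ (your $i=0$) or end at the $\#$ immediately before $\natural$ ($j=N+1$).
\end{itemize}
By the positional definition of ``represents'', such a capture represents no pair with $1\le i,j\le N$: no $\overleftarrow{v^r_i}$ precedes it, respectively no $u^r_j$ follows it. So the first assertion of the proposition is false as literally stated. Neither your argument nor the paper's one-line appeal to $k=1$ can prove it. What your argument does prove is the version with $0\le i\le N$ and $1\le j\le N+1$, where $\pi(i,j)$ is extended by the same formula. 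You should state that version explicitly.

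The degenerate runs must then be discarded in the proof of the reduction lemma, not here. The argument goes as follows.
\begin{itemize}
\item By the same $\natural$ count, each factor $\varphi^s_r$ is matched against exactly the copy of $\mathcal{I}_s$ in the current copy of $\mathcal{I}$.
\item Each of $\varphi_{s,v}$, $\varphi_{s,u}$, $\varphi_{s,\top}$ reads a bit immediately before and immediately after $\readv{x_s}$.
\item The occurrence of the content of $x_s$ inside that block is pinned by its unique $@$.
\item A degenerate capture sits flush against the start of the block or against its $\natural$, so every $\varphi$ fails on it.
\end{itemize}
Since $\psi$ consumes exactly the first copy of $\mathcal{I}$ and each $\varphi$ consumes exactly one further copy, $\varphi$ is applied exactly $d$ times. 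Hence for $d\ge 1$ such runs never accept. For $d=0$ every instance is a yes-instance anyway.
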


We next define $\varphi_r$ to check $\va_i^r[\ell] = 0$ or $\vb_i^r[\ell] = 0$ ($1 \leq \ell \leq d$) as follows:
\[
\varphi_r := \varphi_{r}^{1}\ \varphi_{r}^{2}\ \cdots \ \varphi_{r}^{k} \qquad (1 \leq r \leq k),
\]
where
\[
\varphi_{r}^{s} := \begin{cases}
        \varphi_{s,\top} & (r\neq s),\\
        \varphi_{s, \va} + \varphi_{s, \vb} & (r=s),
    \end{cases}
\qquad
\begin{array}{lcl}
\varphi_{s, \va} & := & \Gamma^* \# \mathcal{B}^*\ (0\,\readv{x_s}\, \mathcal{B})_{x_s}\ \mathcal{B}^* \# \Gamma^* \, \natural\ ,\\
\varphi_{s, \vb} & := & \Gamma^* \# \mathcal{B}^*\ (\mathcal{B}\,\readv{x_s}\, 0)_{x_s}\ \mathcal{B}^* \# \Gamma^*\, \natural\ , \\
\varphi_{s, \top} & := & \Gamma^* \# \mathcal{B}^*\ (\mathcal{B} \readv{x_s}\, \mathcal{B})_{x_s}\ \mathcal{B}^* \# \Gamma^*\, \natural,\,
\end{array}
\quad \mathcal{B} = (0 + 1).
\]

\begin{proposition}
Assume that, for an index $\ell \in [1..d]$, every $x_r$ represents $\va^r_i[\ell]$ and $\vb^r_j[\ell]$ for $1 \leq r \leq k$.
If the application of $\varphi_r$ to $\mathcal{I}$ succeeds and consumes $\mathcal{I}$,
then either $\va^{r}_i[\ell] = 0$ or $\vb^{r}_j[\ell] = 0$ holds.
Moreover, every $x_r$ is updated to represent $\va^r_i[\ell+1]$ and $\vb^r_j[\ell+1]$ (when $\ell \leq d$).

Conversely, if $\va^{r}_i[\ell] = 0$ or $\vb^{r}_j[\ell] = 0$ holds,
then applying $\varphi_i$ to $\mathcal{I}$ must succeed.
\end{proposition}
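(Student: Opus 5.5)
The plan is to fix an invariant describing exactly what each $x_s$ stores, and then to analyse $\varphi_r$ block by block, one block $\mathcal{I}_s$ at a time. For $1 \le \ell \le d+1$ we say that $x_s$ \emph{represents $\va^s_i[\ell]$ and $\vb^s_j[\ell]$} if its content is the factor of $\mathcal{I}_s$ consisting of three consecutive parts: the last $\ell-1$ symbols of $\rev{\va^s_i}$, then $\pi(i,j)$, then the first $\ell-1$ symbols of $\vb^s_j$. By the preceding proposition on $\psi$, this invariant holds for $\ell=1$ right after $\psi$. The last $\ell-1$ symbols of $\rev{\va^s_i}$ are $\va^s_i[\ell-1]\cdots\va^s_i[1]$. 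Hence the symbol immediately to the left of this occurrence in $\mathcal{I}_s$ is $\va^s_i[\ell]$, and the symbol immediately to the right is $\vb^s_j[\ell]$. Both are bits in $\set{0,1}$ whenever $\ell \le d$.

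First I would show that $\varphi_r = \varphi^1_r \cdots \varphi^k_r$ splits the input so that $\varphi^s_r$ consumes exactly $\mathcal{I}_s$. Every $\varphi^s_r$ ends with $\natural$, and $\Gamma$ and $\mathcal{B}$ cannot read $\natural$. The body $c\,\readv{x_s}\,c'$ cannot read $\natural$ either, since $x_s$ contains none. So each $\varphi^s_r$ consumes precisely one block $\cdots\natural$, and these blocks are $\mathcal{I}_1,\ldots,\mathcal{I}_k$ in order.

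The main step is a \emph{rigidity} claim: inside $\mathcal{I}_s$, the factor matched by $\readv{x_s}$ is the particular occurrence of the content of $x_s$ described by the invariant. The content of $x_s$ contains exactly one $@$, and $\mathcal{I}_s$ also contains exactly one $@$. The two must therefore be aligned, which fixes the starting position of the match of $\readv{x_s}$ uniquely. Consequently the letter $c$ matched just before $\readv{x_s}$ is $\va^s_i[\ell]$, and the letter $c'$ matched just after is $\vb^s_j[\ell]$. I expect this alignment to be the point needing care, together with checking that the surrounding pattern $\Gamma^*\#\mathcal{B}^*\,\cdot\,\mathcal{B}^*\#\Gamma^*\natural$ can actually be matched. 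For the latter: when $\ell\le d$, the block $\rev{\va^s_i}$ extends to the left of the captured factor as bits up to a $\#$, and $\vb^s_j$ extends to the right as bits up to a $\#$. So $\mathcal{B}^*$ absorbs the remaining bits on each side, and $\#\Gamma^*$ absorbs everything else up to $\natural$.

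With rigidity in hand, both directions are short. For the forward direction, $\varphi^r_r=\varphi_{r,\va}+\varphi_{r,\vb}$ forces $c=0$ or $c'=0$, that is, $\va^r_i[\ell]=0$ or $\vb^r_j[\ell]=0$. For $s\ne r$, the factor $\varphi_{s,\top}$ only requires that $c,c'$ be bits, which always holds. In either case the new content of $x_s$, namely $c$ followed by the old content followed by $c'$, is exactly the factor required by the invariant at $\ell+1$. For the converse, suppose $\va^r_i[\ell]=0$ (respectively $\vb^r_j[\ell]=0$). We build the match explicitly: choose $\varphi_{r,\va}$ (respectively $\varphi_{r,\vb}$) in block $r$ and $\varphi_{s,\top}$ in every other block, and in each block place $\readv{x_s}$ at the forced occurrence, with the surrounding parts matched as just described. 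Iterating this for $\ell=1,\ldots,d$ over the $d$ further copies of $\mathcal{I}$ then yields the correctness of $\Psi_k$.
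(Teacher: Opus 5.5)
Your proof is correct and follows the reasoning the paper intends. The paper states this proposition without a written proof and relies on the worked $k=1$ examples. Your splitting of $\varphi_r$ into the $\natural$-delimited blocks $\mathcal{I}_1,\ldots,\mathcal{I}_k$, together with the alignment of the unique $@$ (which fixes where $\readv{x_s}$ is matched and hence which bits $c,c'$ are read), supplies exactly the details that make the paper's ``clear from the observation'' step rigorous.
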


Using $\varphi_i$, we further define $\varphi$ to check the $2k$-orthogonality condition:
\[
\varphi := \varphi_1 + \varphi_2 + \cdots + \varphi_k.
\]
We observe that applying $\varphi$ to $\mathcal{I}$ succeeds iff $\bigl(\prod^k_{r=1} (\va^r_i[\ell] \cdot \vb^r_j[\ell])\bigr) = 0$ when each $x_r$ represents $\va_i^r[\ell]$ and $\vb_j^r[\ell]$.
Then, we finally define $\mathcal{S}(A_1, A_2, \ldots, A_{2k-1}, A_{2k})$ and $\Psi_k$ as follows:
\[
\mathcal{S}(A_1, A_2, \ldots, A_{2k-1}, A_{2k}) :=
\overbrace{\,\mathcal{I}\ \mathcal{I}\ \ldots\ \mathcal{I}\,}^{(1 + d)\text{-copies}},
\qquad
\Psi_k := \psi\ \varphi^*~.
\]

From the above properties, we observe the following, which completes our proof:
\[
\mathcal{S}(A_1, A_2, \ldots, A_{2k-1}, A_{2k}) \in \lang{\Psi_k}
\iff
\text{the $2k$-\OV{} instance contains $2k$-orthogonal vectors}.
\]
\end{proof}

%% file: onerewb/sanitize.tex
\section{Input Normalization}
\label{section:sanitize}

Here, to simplify subsequent arguments, we convert the input to an equivalent, more structured form.
Let $w$ be the input string over an alphabet $\Sigma$ and $A$, $B$, $C$, and $D$ be \regex{es}. 

Let $\ell = \lceil \log |\Sigma| \rceil$.
For each $\sigma \in \Sigma$, we assign a unique binary string $\mathrm{bin}(\sigma)$ of length $\ell$.
Then, we define the string $\mathrm{san}(w)$ by replacing each letter $\sigma$ of $w$ with the string $\mathrm{bin}(\sigma)\ 0$ and appending a letter $1$ at the end $(\ell+1)$ times; that is,
\[
    \mathrm{san}(w) = 
    \underbrace{\mathrm{bin}(w[1])\ 0}_{w[1]}\, 
    \underbrace{\mathrm{bin}(w[2])\ 0}_{w[2]}\,
    \cdots \,
    \underbrace{\mathrm{bin}(w[|w|])\ 0}_{w[|w|]}\,
    1^{\ell+1}.
\]
For example, if $\Sigma = \{a, b, c\}$, $w = abacaba$, $\mathrm{bin}(a) = 00$,
$\mathrm{bin}(b) = 01$, and $\mathrm{bin}(c) = 10$,
\[
    \mathrm{san}(w) = \underbrace{00\, 0}_{a}\ \underbrace{01\, 0}_{b}\ \underbrace{00\, 0}_{a}\ \underbrace{10\, 0}_{c}\ \underbrace{00\, 0}_{a}\ \underbrace{01\, 0}_{b}\ \underbrace{00\, 0}_{a}\ 111.
\]
Clearly, $w$ consists only of binary letters $0$ and $1$.
Moreover, after the transformation, we have the following, which is the purpose of considering this transformation.
\begin{lemma}\label{lemma:sanitize-noprefix}
Let $1\leq p < q\leq |\mathrm{san}(w)|-\ell$.
Then, $\mathrm{san}(w)[q \btw |\mathrm{san}(w)|]$ is not a prefix of $\mathrm{san}(w)[p \btw |\mathrm{san}(w)|]$.
\end{lemma}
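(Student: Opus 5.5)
The plan is to show that the block $1^{\ell+1}$ occurs in $\mathrm{san}(w)$ only at its very end, and then to observe that a suffix starting at $q \le |\mathrm{san}(w)|-\ell$ is long enough to contain this whole block. Write $s := \mathrm{san}(w)$ and $N := |s| = |w|(\ell+1) + \ell + 1$. By construction, $s[N-\ell \btw N] = 1^{\ell+1}$, and for every $1 \le m \le |w|$ the position $m(\ell+1)$ carries the separator letter $0$.

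\emph{Step 1 (the block $1^{\ell+1}$ is unique).} Take any window $[e-\ell \btw e]$ of length $\ell+1$ with $e < N$.
\begin{itemize}
\item Its start satisfies $e-\ell \le N-\ell-1 = |w|(\ell+1)$.
\item Let $m(\ell+1)$ be the smallest multiple of $\ell+1$ that is at least $e-\ell$. Then $1 \le m(\ell+1) \le |w|(\ell+1)$.
\item Since the window consists of $\ell+1$ consecutive positions, it also satisfies $m(\ell+1) \le e$.
\end{itemize}
So the window contains a separator $0$. Hence $1^{\ell+1}$ occurs in $s$ only as $s[N-\ell \btw N]$. (If $w=\epsilon$, the range $1 \le p < q \le N-\ell = 1$ is empty, so the statement is vacuous.)

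\emph{Step 2 (contradiction).} Suppose $s[q \btw N]$ is a prefix of $s[p \btw N]$, so that $s[q \btw N] = s[p \btw p+N-q]$. Because $q \le N-\ell$, the suffix $s[q \btw N]$ has length at least $\ell+1$. It therefore ends with $1^{\ell+1}$. Consequently $1^{\ell+1}$ also occurs in $s$ ending at position $e := p+N-q$. Since $p<q$, we have $e<N$, which contradicts Step 1.

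The only point needing care is the index bookkeeping in Step 1: one must check that the window always contains one of the separator positions $m(\ell+1)$ with $m \le |w|$. This is exactly where the hypothesis $q \le |\mathrm{san}(w)|-\ell$ and the choice of $\ell+1$ trailing ones are used. The rest is immediate.
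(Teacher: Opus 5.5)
Your proof is correct and follows the paper's argument: the suffix starting at $q$ ends with the unique occurrence of $1^{\ell+1}$, so being a prefix of the suffix starting at $p$ would force a second occurrence ending at position $p+N-q<N$. Your Step~1 makes explicit the separator-counting behind the paper's one-line ``by definition'' claim. One small correction to your closing remark: the hypothesis $q\le|\mathrm{san}(w)|-\ell$ is used only in Step~2, not in Step~1.
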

\begin{proof}
Since $q\leq |\mathrm{san}(w)|-\ell$, $\mathrm{san}(w)[q \btw |\mathrm{san}(w)|]$ ends with $\mathrm{san}(w)[|\mathrm{san}(w)|-\ell \btw |\mathrm{san}(w)|]=1^{\ell+1}$.
However, by definition, $\mathrm{san}(w)$ contains no substring of the form $1^{\ell+1}$ other than $\mathrm{san}(w)[|\mathrm{san}(w)|-\ell \btw |\mathrm{san}(w)|]$.
Therefore, $\mathrm{san}(w)[p \btw |\mathrm{san}(w)|]$ cannot contain $1^{\ell+1}$ other than at its end, and thus, cannot have $\mathrm{san}(w)[q \btw |\mathrm{san}(w)|]$ as a prefix. 
\end{proof}

Moreover, for a \regex{} $E$, we define the \regex{} $\mathrm{bin}(E)$ as the \regex{} obtained by replacing every letter symbol $\sigma$ occurring in $E$ by the string $\mathrm{bin}(\sigma)\ 0$.
For example, if $\Sigma = \{a, b, c\}$, $E=(a+b)^*(b+c)^*$, $\mathrm{bin}(a) = 00$,
$\mathrm{bin}(b) = 01$, and $\mathrm{bin}(c) = 10$,
\[
    \mathrm{bin}(E) = (\underbrace{00\, 0}_{a}+ \underbrace{01\, 0}_{b})^*\ (\underbrace{01\, 0}_{b} + \underbrace{10\, 0}_{c})^*.
\]
Furthermore, we define
\[
    \mathrm{san}(A):= \mathrm{bin}(A),\ 
    \mathrm{san}(B):= \mathrm{bin}(B),\ 
    \mathrm{san}(C):= \mathrm{bin}(C),\ \text{and}\ 
    \mathrm{san}(D):= \mathrm{bin}(D)\ 1^{\ell+1}.
\]

The following lemma is clear from the definition.
\begin{lemma}
For a \regex{} $E$ and a string $w$, 
$\lang{\mathrm{bin}(E)}=\{\mathrm{san}(w): w\in \lang{E}\}$.
\end{lemma}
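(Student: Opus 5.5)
The plan is a double inclusion, proved by structural induction on $E$. The key tool is the map $\mathrm{san}$ restricted to strings without the end marker, that is, the homomorphism $h:\Sigma^*\to\{0,1\}^*$ with $h(\sigma) := \mathrm{bin}(\sigma)\,0$. Note that $\mathrm{san}(w)$ as defined carries the suffix $1^{\ell+1}$, whereas $\mathrm{bin}(E)$ does not. So the statement should be read with $\mathrm{san}(w)$ meaning the letterwise encoding $h(w)$, and the marker $1^{\ell+1}$ is handled separately by $\mathrm{san}(D)=\mathrm{bin}(D)\,1^{\ell+1}$. I would state explicitly that I prove $\lang{\mathrm{bin}(E)} = h(\lang{E})$, and remark that this is the intended meaning.

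The induction runs over the grammar of \regex{}. For $\emptyset$ and $\epsilon$ the claim is immediate, since $h(\epsilon)=\epsilon$. For a letter $\sigma$ we have $\mathrm{bin}(\sigma)=h(\sigma)$ by definition. For union, $\mathrm{bin}(E_1+E_2)=\mathrm{bin}(E_1)+\mathrm{bin}(E_2)$, and images commute with unions. For concatenation and star, I use that $h$ is a monoid homomorphism, so $h(L_1L_2)=h(L_1)h(L_2)$ and $h(L^*)=h(L)^*$. Together with the inductive hypothesis this gives $\lang{\mathrm{bin}(E_1E_2)}=\lang{\mathrm{bin}(E_1)}\lang{\mathrm{bin}(E_2)}=h(\lang{E_1})h(\lang{E_2})=h(\lang{E_1E_2})$, and similarly for $E^*$. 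This already yields the stated equality of languages; no injectivity is needed, because we are only taking images.

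The only subtle point, which is really about how the lemma gets used later rather than about its proof, is the converse direction: a binary string in $\lang{\mathrm{bin}(E)}$ factors uniquely into blocks. This holds because $h$ is a uniform code (every block $\mathrm{bin}(\sigma)\,0$ has length $\ell+1$ and the $\mathrm{bin}(\sigma)$ are distinct), so $h$ is injective. Hence $h(u)\in\lang{\mathrm{bin}(E)}$ if and only if $u\in\lang{E}$. I would add this remark so that the later equivalence of the sanitized ABCBD instance with the original one follows. Its one nontrivial step is that the split points of $\mathrm{san}(w)$ must respect block boundaries, and that comes from length and lemma~\ref{lemma:sanitize-noprefix}-type arguments, not from this lemma itself.
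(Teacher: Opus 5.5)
Your proof is correct and is simply the explicit structural induction behind the paper's remark that the lemma is ``clear from the definition''. The only facts it uses are that $\sigma\mapsto\mathrm{bin}(\sigma)\,0$ extends to a monoid homomorphism and that images under a homomorphism commute with union, product, and star.

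You are also right that the statement must be read with the letterwise encoding rather than the literal $\mathrm{san}(w)$. Already for $E=\epsilon$ the left side is $\{\epsilon\}$ while the right side is $\{1^{\ell+1}\}$, and the end marker is exactly what the definition $\mathrm{san}(D)=\mathrm{bin}(D)\,1^{\ell+1}$ supplies.
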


The following lemma ensures that this transformation indeed yields an equivalent problem, which is also clear from the definition.
\begin{lemma}\label{lemma:sanitize-equiv}
The following two statements are equivalent.
\begin{itemize}
    \item $w$ can be decomposed as $w = w_A\, w_B\, w_C\, w_B\, w_D$ with $w_A \in \lang{A}$, $w_B \in \lang{B}$, $w_C \in \lang{C}$, and $w_D \in \lang{D}$.
    \item $\mathrm{san}(w)$ can be decomposed as $\mathrm{san}(w) = w'_A\, w'_B\, w'_C\, w'_B\, w'_D$ with $w'_A \in \lang{\mathrm{san}(A)}$, $w'_B \in \lang{\mathrm{san}(B)}$, $w'_C \in \lang{\mathrm{san}(C)}$, and $w'_D \in \lang{\mathrm{san}(D)}$.
\end{itemize}
\end{lemma}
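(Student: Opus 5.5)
The plan is to work with the letter-to-block encoding $h:\Sigma^*\to\{0,1\}^*$ given by $h(\sigma):=\mathrm{bin}(\sigma)\,0$, extended homomorphically. Then $\mathrm{san}(w)=h(w)\,1^{\ell+1}$. The preceding lemma on $\mathrm{bin}(E)$ should be read as $\lang{\mathrm{bin}(E)}=\set{h(u): u\in\lang{E}}$; I will use it in that form. Consequently $\lang{\mathrm{san}(D)}=\set{h(u)\,1^{\ell+1}: u\in\lang{D}}$.

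I first record one basic fact about $h$: it is injective. Every block $h(\sigma)$ has the same length $\ell+1$, and $\mathrm{bin}$ is injective. So a string in the image of $h$ splits uniquely into blocks of length $\ell+1$, and each block determines its letter.

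For the forward direction, take a decomposition $w=w_A w_B w_C w_B w_D$ and set $w'_\xi:=h(w_\xi)$ for $\xi\in\set{A,B,C}$ and $w'_D:=h(w_D)\,1^{\ell+1}$. Because $h$ is a homomorphism, the concatenation $w'_A w'_B w'_C w'_B w'_D$ equals $h(w)\,1^{\ell+1}=\mathrm{san}(w)$. The membership of each piece in the required language follows from the lemma on $\mathrm{bin}(E)$. Both occurrences of $w'_B$ are the same string $h(w_B)$, so the backreference shape is preserved.

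For the backward direction, take a decomposition $\mathrm{san}(w)=w'_A w'_B w'_C w'_B w'_D$ satisfying the language constraints.
\begin{enumerate}
\item By the lemma on $\mathrm{bin}(E)$, write $w'_A=h(u_A)$, $w'_B=h(u_B)$, $w'_C=h(u_C)$ and $w'_D=h(u_D)\,1^{\ell+1}$, where $u_\xi\in\lang{\xi}$.
\item Both occurrences of $w'_B$ are literally the same string, so a single $u_B$ serves for both.
\item This gives $h(w)\,1^{\ell+1}=h(u_A u_B u_C u_B u_D)\,1^{\ell+1}$.
\item Cancelling the common suffix $1^{\ell+1}$ leaves $h(w)=h(u_A u_B u_C u_B u_D)$.
\item Injectivity of $h$ then yields $w=u_A u_B u_C u_B u_D$, which is the required decomposition of $w$.
\end{enumerate}

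The only step needing care is the last one, injectivity and block alignment. This is immediate from the uniform block length, so I do not expect any real obstacle; the lemma is essentially definitional. In particular, Lemma~\ref{lemma:sanitize-noprefix} is not needed here.
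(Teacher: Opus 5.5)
Your proof is correct and follows the same route as the paper, which states the lemma without proof as ``clear from the definition''. Your argument spells that out: each piece is pulled back through the homomorphism $h$, the trailing $1^{\ell+1}$ is cancelled, and injectivity of the uniform-length block code $h$ recovers the decomposition of $w$. You are also right to read the preceding lemma as $\lang{\mathrm{bin}(E)}=\set{h(u): u\in\lang{E}}$; the paper's version, written with $\mathrm{san}(w)$, would wrongly include the end marker $1^{\ell+1}$.
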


The following lemma is clear from the definition of $\mathrm{san}(D)$, but critical for our further use.
\begin{lemma}\label{lemma:sanitize-longD}
$\lang{\mathrm{san}(D)}$ consists of only strings with length at least $\ell+1$.
\end{lemma}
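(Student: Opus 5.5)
The plan is to unfold the definition of $\mathrm{san}(D) = \mathrm{bin}(D)\,1^{\ell+1}$ and use the semantics of concatenation from Section~\ref{section:tools}. We have $\lang{\mathrm{san}(D)} = \lang{\mathrm{bin}(D)} \cdot \lang{1^{\ell+1}}$. Here $1^{\ell+1}$ denotes the \regex{} obtained by concatenating $\ell+1$ copies of the letter $1$, so its language is the singleton $\set{1^{\ell+1}}$.

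Accordingly, every $u \in \lang{\mathrm{san}(D)}$ can be written as $u = u'\,1^{\ell+1}$ with $u' \in \lang{\mathrm{bin}(D)}$. Since the length function is a monoid homomorphism, $|u| = |u'| + (\ell+1) \geq \ell+1$. This already proves the claim.

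We never need to describe $\lang{\mathrm{bin}(D)}$ beyond the fact that it consists of strings, possibly including $\epsilon$. In particular, nothing about the preceding lemma relating $\mathrm{bin}(E)$ and $\mathrm{san}(w)$ is required. If $\lang{D} = \emptyset$, then $\lang{\mathrm{san}(D)} = \emptyset$ and the statement holds vacuously.

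There is no real obstacle. The only point worth recording is the intended consequence: this lemma guarantees $\epsilon \notin \lang{\mathrm{san}(D)}$, and moreover any $w'_D$ has length at least $\ell+1$. Together with Lemma~\ref{lemma:sanitize-noprefix}, this is what later lets us assume $w_D \neq \epsilon$ and that the two suffixes $w_B w_C w_B w_D$ and $w_B w_D$ do not stand in a prefix relation.
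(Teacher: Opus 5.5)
Your proposal is correct and matches the paper, which gives no separate proof and calls the lemma clear from the definition $\mathrm{san}(D) = \mathrm{bin}(D)\,1^{\ell+1}$. You make that explicit: every word of $\lang{\mathrm{san}(D)}$ has the form $u'\,1^{\ell+1}$, so its length is at least $\ell+1$.
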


Hereafter, throughout this paper, we refer to the string $\mathrm{san}(w)$ and the \regex{}es $\mathrm{san}(A)$, $\mathrm{san}(B)$, $\mathrm{san}(C)$, and $\mathrm{san}(D)$ simply as $w$, $A$, $B$, $C$, and $D$, respectively.
In particular, we assume that $\Sigma$ is binary and there exists an integer $\ell$ such that
\begin{itemize}
    \item for each $1 \leq p < q \leq |w| - \ell$, $w[q, |w|]$ is not a prefix of $w[p \btw |w|]$ (Lemma~\ref{lemma:sanitize-noprefix}), and
    \item $\lang{D}$ consists only of strings of length at least $\ell + 1$ (Lemma~\ref{lemma:sanitize-longD}).
\end{itemize}
By Lemma~\ref{lemma:sanitize-equiv}, this rewriting is an equivalent transformation.

%% file: onerewb/normalization.tex

\section{Decomposing ABCBD to Two Subproblems}
\label{section:two_subproblems}


The goal of this section is to decompose the ABCBD problem into the following two subproblems, namely, the \emph{$XYYZ$ problem} and the \emph{branching $ABCBD$ problem}.


\begin{itembox}[l]{$XYYZ$ Problem}
\noindent{}\textbf{Fixed Objects}: \regex{es} $X$, $Y$, and $Z$. \\
\noindent{}\textbf{Input}: A binary string $w$. \\
\noindent{}\textbf{Task}: Deciding if we can decompose $w$ into $w = w_X\, w_Y\, w_Y\, w_Z$ so that $w_X\in \lang{X}$, $w_Y\in \lang{Y}$, and $w_Z\in \lang{Z}$.
\end{itembox}
\begin{itembox}[l]{Branching $ABCBD$ Problem}
\noindent{}\textbf{Fixed Objects}: \regex{es} $A$, $B$, $C$, and $D$. \\
\noindent{}\textbf{Condition on $D$}: Following the argument of Section~\ref{section:sanitize}, we assume $\epsilon \notin \lang{D}$. \\
\noindent{}\textbf{Condition on $B$}: Following the argument of Section~\ref{section:removing-emptystring-from-B}, we assume $\epsilon \notin \lang{B}$. \\
\noindent{}\textbf{Input}: A binary string $w$. \\
\noindent{}\textbf{Task}: Deciding if we can decompose $w$ into $w = w_A\, w_B\, w_C\, w_B\, w_D$ so that $w_A\in \lang{A}$, $w_B\in \lang{B}$, $w_C\in \lang{C}$, $w_D\in \lang{D}$, and 
\begin{itemize}
    \item $\sigma\neq \sigma'$, where $\sigma$ and $\sigma'$ are the first letters that appear immediately after the first and second occurrence of $w_B$ in this decomposition, respectively.
\end{itemize}
\end{itembox}
Let us illustrate the difference between the Branching $ABCBD$ problem and the (vanilla) $ABCBD$ problem with examples. 
Consider $w = 0110110$, $A = C = \Sigma^*$, and $B=D=\Sigma^+:=\Sigma \Sigma^*$.

\noindent{}\emph{Example 1}.
The decomposition $w = \underbrace{0}_{w_A}\underbrace{11}_{w_B}\underbrace{\underline{0}}_{w_C}\underbrace{11}_{w_B}\underbrace{\underline{0}}_{w_D}$ is a feasible decomposition for the vanilla $ABCBD$ problem, but not for the branching version, because the characters appearing immediately after both occurrences of $w_B$ are the same letter $\underline{0}$.

\noindent{}\emph{Example 2}.
The decomposition $w = \underbrace{01}_{w_A}\underbrace{1}_{w_B}\underbrace{\underline{0}}_{w_C}\underbrace{1}_{w_B}\underbrace{\underline{1}0}_{w_D}$ is feasible for both problems, since the character appearing immediately after the first $w_B$ is $\underline{0}$ and the character after the second $w_B$ is $\underline{1}$.

\noindent{}\emph{Example 3}.
The decomposition $w = \underbrace{0}_{w_A}\underbrace{1}_{w_B}\underbrace{\epsilon}_{w_C}\underbrace{\underline{1}}_{w_B}\underbrace{\underline{0}110}_{w_D}$ is also feasible for both problems, since the character appearing immediately after the first $w_B$ is $\underline{1}$ and the character after the second $w_B$ is $\underline{0}$.


\subsection{Removing the empty string $\epsilon$ from $B$}
\label{section:removing-emptystring-from-B}

Let $A$, $B$, $C$, and $D$ be classical \regex{es} (without backreferences) over a finite alphabet $\Sigma$.
Recall that
\[
\lang{ A\;(B)_x\;C\;\readv{x}\;D } = \set{ w_A w_B w_C w_B w_D : w_A \in A, w_B \in B, w_C \in C, w_D \in D }.
\]

For the \regex{} $B$,
by Proposition~\ref{prop:epsilon-nfa:epsilon-remove}, there exists $B_{-\epsilon}$ whose language is $\lang{B} \setminus \set{ \epsilon }$.
The following trivial case analysis is the first step in our decomposition.
\begin{proposition}
If $\epsilon \in \lang{B}$, we have
\[
    \lang{A \, (B)_x \, C \, \readv{x} \, D} = \lang{A \, C \, D} \cup \lang{A \, (B_{-\epsilon})_x \, C \, \readv{x} \, D}.
\]
Otherwise, we have
\[
    \lang{A \, (B)_x \, C \, \readv{x} \, D} = \lang{A \, (B_{-\epsilon})_x \, C \, \readv{x} \, D}.
\]
\end{proposition}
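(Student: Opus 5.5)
The plan is to prove both identities by double inclusion, working directly from the characterization
\[
\lang{A\,(B)_x\,C\,\readv{x}\,D} = \set{ w_A w_B w_C w_B w_D : w_A \in \lang{A},\ w_B \in \lang{B},\ w_C \in \lang{C},\ w_D \in \lang{D} }.
\]
This identity follows by unfolding the semantics $\sem{\cdot}$ of Section~\ref{section:tools}. Since $A$, $B$, $C$, $D$ contain no variables, the only backreference executed is the single $\readv{x}$. It reads the value bound by $(B)_x$, which is exactly the string $w_B$ matched by $B$. Along the way we use the fact that concatenation of \regex{} languages corresponds to string concatenation. We also record the defining property from Proposition~\ref{prop:epsilon-nfa:epsilon-remove}, namely $\lang{B_{-\epsilon}} = \lang{B}\setminus\set{\epsilon}$. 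As a consequence, $\lang{B} = \lang{B_{-\epsilon}}$ when $\epsilon\notin\lang{B}$, and $\lang{B} = \lang{B_{-\epsilon}}\cup\set{\epsilon}$ otherwise.

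For the case $\epsilon\notin\lang{B}$, the two expressions have identical component languages. Substituting $\lang{B}=\lang{B_{-\epsilon}}$ into the characterization therefore gives equality immediately.

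For the case $\epsilon\in\lang{B}$, take any $w = w_Aw_Bw_Cw_Bw_D$ in the left-hand side and split on whether $w_B=\epsilon$.
\begin{itemize}
\item If $w_B=\epsilon$, then $w=w_Aw_Cw_D\in\lang{A\,C\,D}$.
\item If $w_B\neq\epsilon$, then $w_B\in\lang{B_{-\epsilon}}$, so $w$ lies in the second set of the union.
\end{itemize}
Conversely, an element $w_Aw_Cw_D$ of $\lang{A\,C\,D}$ is obtained from the left-hand side by choosing $w_B=\epsilon\in\lang{B}$. An element of the second set lies in the left-hand side because $\lang{B_{-\epsilon}}\subseteq\lang{B}$.

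There is no real obstacle here. The only point needing care is justifying the characterization of the \rewb{} language from the formal semantics: one must check that the valuation after $(B)_x$ maps $x$ to $w_B$, and that $C$ and $D$ do not alter it. This holds because \regex{} subexpressions leave the valuation unchanged, which is a routine induction on the semantic clauses.
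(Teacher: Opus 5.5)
Your proof is correct and follows the paper's reasoning. The paper recalls the same characterization of $\lang{A\,(B)_x\,C\,\readv{x}\,D}$ as the set of strings $w_Aw_Bw_Cw_Bw_D$ with each component in its language, and treats the proposition as a trivial case analysis on whether $w_B=\epsilon$, using $\lang{B_{-\epsilon}}=\lang{B}\setminus\set{\epsilon}$ from Proposition~\ref{prop:epsilon-nfa:epsilon-remove}; your justification of that characterization from the formal semantics fills in a detail the paper leaves implicit.
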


Since $A C D$ is just a \regex{} rather than \rewb{},
we can solve the string-matching problem for the expression in linear time.
We hereafter focus on the expression $A \, (B_{-\epsilon})_x \, C \, \readv{x} \, D$.
From now on, for simplicity, we write $B$ for $B_{-\epsilon}$: i.e., $\epsilon \not\in \lang{B}$.

\subsection{The Decomposition}
\label{subsection:ABCBD to XYYZ and Branching ABCBD}



For an $\epsilon$-NFA $N = (Q, \Sigma, \Delta, \mathcal{E}, q^0, F)$,
we define $N^{q' \to F'}$ as the NFA $(Q, \Sigma, \Delta, \mathcal{E}, q', F')$ obtained
by replacing the initial state of $N$ with $q'$ and the set of accepting states with $F'$.

Let $N_C = (Q_C, \Sigma, \Delta_C, \mathcal{E}_C, q^0_C, F_C)$ be a NFA that accepts $\lang{C}$.
We similarly define $N_D = (Q_D, \Sigma, \Delta_D, \mathcal{E}_D, q^0_D, F_D)$.
For the main lemma of this section, we define our building block.
For states $q_C\in Q_C$ and $q_D\in Q_D$ of $N_C$ and $N_D$,
by Proposition~\ref{prop:epsilon-nfa:product} , there exists a \regex{} $B_{q_C,\:q_D}$ that satisfies the following:
\[
\lang{B_{q_C,\:q_D}} = \lang{B} \cdot \left(\lang{C^{q^0_C \to \set{ q_C }}} \cap \lang{D^{q^0_D \to \set{ q_D }}}\right).
\]

We now present the main decomposition lemma of this section.
\begin{lemma}\label{lem:two_subcases}
Let $w$ be an input string. The following two conditions (1) and (2) are equivalent:
\begin{enumerate}
    \item[(1)] We have $w  \: \in A \: (B)_x \: C \: \readv{x} \: D$.
    \item[(2)] We have either of the following.
    \begin{enumerate}[align=left]
        \item[\textbf{$XYYZ$ Case:}] For some $f_C\in F_C$ and $q_D\in Q_D$, $w$ is matched with the following \rewb:
        \[
            A \: (B_{f_C,\,q_D})_x \ \readv{x}\ D^{q_D \to F_D}.
        \] 
        \item[\textbf{Branching Case:}] For some $q_C\in Q_C$ and $q_D\in Q_D$, $w$ is matched with the following \rewb:
        \[
            A \: (B_{q_C,\,q_D})_x\ C^{q_C \to F_C}\ \readv{x}\ D^{q_D \to F_D},
        \]
        along with the next letter of the substring matched with $(B_{q_C,\,q_D})_x$, $\sigma$, and the next letter of the substring matched with $\readv{x}$, $\sigma'$, are different ($\sigma \neq \sigma'$).
    \end{enumerate}
\end{enumerate}
\end{lemma}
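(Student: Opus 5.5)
The idea is to take a decomposition $w = w_A w_B w_C w_B w_D$ and cut off the longest common prefix of the two suffixes that follow the two copies of $w_B$. Write $s := w_C w_B w_D$ (the text after the first $w_B$) and $t := w_D$ (the text after the second $w_B$), and let $u$ be the longest common prefix of $s$ and $t$.

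\emph{Direction $(1)\Rightarrow(2)$.} We distinguish where $u$ ends.

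\textbf{Case $|u| \geq |w_C|$.} Then $u$ contains the whole of $w_C$, and $w_C$ is a prefix of $w_D$; write $w_D = w_C w_D'$.
\begin{itemize}
\item Fix an accepting run of $N_C$ on $w_C$ ending in some $f_C \in F_C$.
\item Fix an accepting run of $N_D$ on $w_D$, and let $q_D$ be its state after reading the prefix $w_C$.
\item Then $w_B w_C \in \lang{B_{f_C,q_D}}$ and $w_D' \in \lang{D^{q_D \to F_D}}$.
\item Moreover $w = w_A (w_B w_C)(w_C w_B w_D')$. Since $w_C w_B w_D' = s$ and $w_B w_C$ is a prefix of $s$, we can rewrite $w = w_A\,(w_B w_C)\,(w_B w_C)\,w_D''$ for the appropriate remainder $w_D''$.
\end{itemize}
Here care is needed. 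The first copy of $w_B w_C$ is followed by $w_B w_D = w_B w_C w_D'$, so it does begin with $w_B w_C$, and the second copy is then followed by $w_D'$. Thus $w \in \lang{A\,(B_{f_C,q_D})_x\,\readv{x}\,D^{q_D\to F_D}}$. This is the $XYYZ$ case.

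\textbf{Case $|u| < |w_C|$.} Then $u$ is a proper prefix of $w_C$, so $s$ continues past $u$ with a letter $\sigma$ of $w_C$. Since $u$ is the longest common prefix, $t$ either continues with a letter $\sigma' \neq \sigma$ or ends. The second option is ruled out, because the sanitized $w$ has no suffix that is a prefix of another suffix (Lemma~\ref{lemma:sanitize-noprefix}); this is also consistent with $\epsilon \notin \lang{D}$, since $t = w_D$. Now:
\begin{itemize}
\item Split $w_C = u\,w_C'$ and $w_D = u\,w_D'$.
\item Let $q_C$ be the state of an accepting $N_C$-run on $w_C$ after reading $u$, and $q_D$ the analogous state of an accepting $N_D$-run on $w_D$.
\item Then $w_B u \in \lang{B_{q_C,q_D}}$, $w_C' \in \lang{C^{q_C\to F_C}}$, and $w_D' \in \lang{D^{q_D\to F_D}}$.
\item Consequently $w = w_A (w_B u) w_C' (w_B u) w_D'$.
\end{itemize}
The letters following the two copies of $w_B u$ are $\sigma$ (the first letter of $w_C'$) and $\sigma'$ (the first letter of $w_D'$), which differ. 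This is the branching case.

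\emph{Direction $(2)\Rightarrow(1)$.} This is routine. Given a match with $x = w_B v$ and $v \in \lang{C^{q_0\to\{q\}}} \cap \lang{D^{q_0\to\{q_D\}}}$, concatenate the runs:
\begin{itemize}
\item \textbf{Branching case:} take $w_C := v\,w_C'$ and $w_D := v\,w_D'$.
\item \textbf{$XYYZ$ case:} take $w_C := v$ and $w_D := v\,w_D'$.
\end{itemize}
In the $XYYZ$ case we have $f_C \in F_C$, so $v \in \lang{C}$. Then check that $w = w_A w_B w_C w_B w_D$ holds literally in each case; it does, since $w_B v\,w_C'\,w_B v\,w_D' = w_B (v w_C') w_B (v w_D')$.

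\emph{Main obstacle.} The delicate point is the case analysis on where $u$ ends. In particular, the boundary case $|u| = |w_C|$ has to be placed in the $XYYZ$ case, and the case where $t$ is exhausted has to be excluded using the sanitization property of Lemma~\ref{lemma:sanitize-noprefix}.
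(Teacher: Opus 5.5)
Your proof is correct and follows the paper's route. Your split on $|u|\ge|w_C|$ versus $|u|<|w_C|$ is exactly the paper's split on whether the longest common prefix $v$ of $w_C$ and $w_D$ equals $w_C$ (in the second case your $u$ is that $v$), and both directions then proceed identically.

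Two small repairs are needed:
\begin{enumerate}
\item In the $XYYZ$ case, the displayed identities $w = w_A(w_Bw_C)(w_Cw_Bw_D')$ and $w_Cw_Bw_D' = s$ are false. The correct regrouping is the one you give right after, $w = w_A\,(w_Bw_C)\,(w_Bw_Cw_D')$.
\item Lemma~\ref{lemma:sanitize-noprefix} only forbids suffixes of length at least $\ell+1$ from being prefixes of longer suffixes. So you must invoke $|w_D|\ge\ell+1$ from Lemma~\ref{lemma:sanitize-longD}, not merely $\epsilon\notin\lang{D}$.
\end{enumerate}
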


\begin{proof}
%
\textbf{(1) $\Rightarrow$ (2):}
We decompose $w$ as $w = w_A w_B w_C w_B w_D$ where $w_A \in \lang{A}$, $w_B \in \lang{B}$, $w_C \in \lang{C}$, and $w_D \in \lang{D}$.
Using the longest common prefix $v$ of $w_C$ and $w_D$, we can decompose $w_C$ and $w_D$ as follows:
\[
w_C = v\, w'_C, \quad w_D = v\, w'_D.
\]
We note that $w'_D\neq \epsilon$ holds because of the following reason:
From Lemma~\ref{lemma:sanitize-longD}, we have $|w_D|\geq \ell+1$. 
Therefore, from Lemma~\ref{lemma:sanitize-noprefix} and the fact $w_B\neq \epsilon$, $w_D$ is not a prefix of the suffix $w_C\, w_B\, w_D$.
Thus, we have $v\neq w_D$, and thus, $w'_D\neq \epsilon$.


\paragraph*{Case1: $w'_C = \epsilon$.}
It means that $w_C = v$ and $w = w_A \, w_B \, w_C \, w_B \, \underbrace{w_C \, w'_D}_{w_D}$.

\begin{itemize}
\item On the NFA $N_C$, let $N_C(w_C)$ be the set of possible states after reading $w_C$. $N_C(w_C)$ contains an accepting state $f_C \in F_C$ because $w_C \in \lang{C}$.
\item On the NFA $N_D$, let $N_D(v)$ be the set of possible states after reading $v = w_C$.
Since $w_D = v\, w'_D \in \lang{D}$, there is a state $q_D\in N_D(v)$ such that $w'_D \in \lang{D^{q_D \to F_D}}$.
\end{itemize}

Since $w_B w_C \in \lang{B_{f_C,\,q_D}}$, we have
\[
w = w_A \ \, w_B w_C \ \, w_B w_C \ \, w'_D \in 
\lang{A \ (B_{f_C,\,q_D})_x\ \readv{x}\ D^{q_D \to F_D}},
\]
where $w_A \in \lang{A}$, $w_B w_C \in \lang{B_{f_C,\,q_D}}$, and $w'_D \in \lang{D^{q_D \to F_D}}$.
Thus, this case matches the $XYYZ$ case.

\paragraph*{Case2: $w'_C \neq \epsilon$.}
It means that $w = w_A\ w_B\ \underbrace{v\ w'_C}_{w_C}\ w_B\ \underbrace{v\ w'_D}_{w_D}$. Moreover, from the fact that $v$ is the longest common prefix of $w_C$ and $w_D$, the first letters of $w'_C$ and $w'_D$ are different.

\begin{itemize}
\item On the NFA $N_C$, let $N_C(v)$ be the set of possible states after reading $v$. 
Since $w_C = v\, w'_C \in \lang{C}$, there is a state $q_C\in N_C(v)$ such that $w'_C \in \lang{C^{q_C \to F_C}}$.
\item On the NFA $N_D$, let $N_D(v)$ be the set of possible states after reading $v$.
Since $w_D = v\, w'_D \in \lang{D}$, there is a state $q_D\in N_D(v)$ such that $w'_D \in \lang{D^{q_D \to F_D}}$.
\end{itemize}
Then, the string $w = w_A (w_B v) w'_C (w_B v) w'_D$ matches the following expression:
\[
A\ (B_{q_C,\ q_D})_x\ ( C^{q_C \to F_C}) \readv{x}\  (D^{q_D \to F_D})
\]
with $w_B v \in \lang{B_{q_C, q_D}}$, $w'_C \in \lang{C^{q_C \to F_C}}$, and $w'_D \in \lang{D^{q_D \to F_D}}$.
Moreover, in such a matching, the letters that appear immediately after the two substrings $w_B v$ are different.
Thus, this case matches the branching case.

\textbf{(2) $\Rightarrow$ (1):}
\paragraph{$XYYZ$ case:}
Assume
\[
w \in \lang{A\ (B_{f_C, q_D})_x\ \readv{x}\ D^{q_D \to F_D}},
\]
where $f_C \in F_C$ and $q_D \in Q_D$.
We decompose $w $ as
\[
w  = w_A w_B w_B w_D,
\] 
where $w_A \in \lang{A}$, $w_B \in \lang{B_{f_C, q_D}}$, and $w_D \in \lang{D^{q_D \to F_D}}$.

By the definition of $B_{f_C, q_D}$,
we can further decompose $w_B$ as $w_B = w_1 w_2$, where $w_1 \in \lang{B}$ and
$w_2 \in \lang{C^{q^0_C \to \set{f_C}}} \cap \lang{D^{q^0_D \to \set{q_D}}}$.
Particularly, $w_2 \in \lang{C^{q^0_C \to \set{ f_C}}} \subseteq \lang{C}$ and $w_2 \, w_D \in \lang{D^{q^0_D \to \set{q_D}}}\cdot \lang{D^{\{q_D\} \to F_D}}\subseteq \lang{D}$.
Therefore, we have
\[
w = w_A\ \underbrace{w_1\ w_2}_{w_B}\ \underbrace{w_1\ w_2}_{w_B} w_D \in \lang{A (B)_x\ C\ \readv{x}\ D}
\]
with $w_A \in \lang{A}$, $w_1 \in \lang{B}$, $w_2 \in \lang{C}$, and $w_2 w_D \in \lang{D}$.

\paragraph{Branching Case:}

Assume
\[
w \in \lang{A \: (B_{q_C,\,q_D})_x\ C^{q_C \to F_C} \ \readv{x}\ D^{q_D \to F_D}},
\]
where $q_C \in Q_C$ and $q_D \in Q_D$ is a state of $C$ and $D$, respectively.
We decompose $w$ as
\[
w = w_A\,w_B\,w_C\,w_B\,w_D,
\]
where $w_A \in \lang{A}$, $w_B \in \lang{B_{q_C, q_D}}$, $w_C \in \lang{C^{q_C \to F_C}}$, and $w_D \in \lang{D^{q_D \to F_D}}$.

By the definition of $B_{q_C, q_D}$, we can further decompose $w_B$ as $w_B = w_1 w_2$, where $w_1 \in \lang{B}$ and
$w_2 \in \lang{C^{q^0_C \to \set{q_C}}} \cap \lang{D^{q^0_D \to \set{q_D}}}$.
Particularly, $w_2 \, w_C \in \lang{C^{q^0_C \to \set{q_C}}}\cdot \lang{C^{\{q_C\} \to F_C}}\subseteq \lang{C}$ and $w_2 \, w_D \in \lang{D^{q^0_D \to \set{q_D}}}\cdot \lang{D^{\{q_D\} \to F_D}}\subseteq \lang{D}$.
Therefore, we have
\[
w= w_A\ \underbrace{w_1\ w_2}_{w_B} w_C\ \underbrace{w_1\ w_2}_{w_B}  w_D \in \lang{A (B)_x\ C\ \readv{x}\ D}
\]
with $w_A \in \lang{A}$, $w_1 \in \lang{B}$, $w_2 w_C \in \lang{C}$, and $w_2 w_D \in \lang{D}$.
\end{proof}

Now, we reduce the $ABCBD$ problem into the $XYYZ$ problem and the branching $ABCBD$ problem.
Let $w$ be the given string.
If $\epsilon\in B$, we check whether $w$ matches the \regex{} $A\,C\,D$ and replace $B$ by $B_{-\epsilon}$.
If it matches, we return $\mathsf{yes}$.
Otherwise, Lemma~\ref{lem:two_subcases} ensures that $w  \: \in A \: (B)_x \: C \: \readv{x} \: D$ if and only if 
\begin{itemize}
    \item $w$ is a $\mathsf{yes}$-instance of the $XYYZ$ problem for $X:=A$, $Y:=B_{f_C,q_D}$, and $Z:=D^{q_D\to F_D}$ for some $f_C\in F_C$ and $q_D\in Q_D$, or
    \item $w$ is a $\mathsf{yes}$-instance of the branching $ABCBD$ problem for $A:=A$, $B:=B_{q_C,q_D}$, $C:=C^{q_C\to F_C}$, and $D:=D^{q_D\to F_D}$ for some $q_C\in Q_C$ and $q_D\in Q_D$.
\end{itemize}
This algorithm solves the $ABCBD$ problem by calling oracles for the $XYYZ$ problem and the branching $ABCBD$ problem.
Moreover, it internally solves the $XYYZ$ problem $|F_C||Q_D|$ times and the branching $ABCBD$ problem $|Q_C||Q_D|$ times, both of which are constant.
Therefore, we have the following.
\begin{lemma}
Let $f$ be a function and assume both of the $XYYZ$ problem and the branching $ABCBD$ problem admit $O(f(|w|))$-time algorithms. Then, the $ABCBD$ problem admits an $O(f(|w|))$-time algorithm.
\end{lemma}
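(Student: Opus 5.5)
The plan is to give the natural reduction algorithm and then account for its running time. The correctness is already supplied by the preceding material: the proposition on removing $\epsilon$ from $B$, and Lemma~\ref{lem:two_subcases}.

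\textbf{Preprocessing and expressions.} Given $w$ and the (sanitized) \regex{es} $A,B,C,D$, the algorithm first decides whether $\epsilon\in\lang{B}$. This needs only an $\epsilon$-closure computation on the Thompson automaton of Proposition~\ref{prop:Thompson}. If $\epsilon\in\lang{B}$, it tests $w\in\lang{A\,C\,D}$ by standard NFA simulation in $O(|A|+|C|+|D|)\cdot|w| = O(|w|)$ time and answers $\mathsf{yes}$ if the test succeeds. In either case it then replaces $B$ by $B_{-\epsilon}$ using Proposition~\ref{prop:epsilon-nfa:epsilon-remove}. Next it builds $N_C$ and $N_D$. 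For every pair $(q_C,q_D)\in Q_C\times Q_D$ it constructs the expression (equivalently, the $\epsilon$-NFA) $B_{q_C,q_D}$ by a product construction (Proposition~\ref{prop:epsilon-nfa:product}) followed by concatenation with $B$. It also constructs the re-rooted automata $C^{q_C\to F_C}$ and $D^{q_D\to F_D}$. All of these objects depend only on the fixed expressions, so their sizes and construction times are constants independent of $|w|$.

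\textbf{Oracle calls.} The algorithm calls the $XYYZ$ oracle on $w$ once for each $(f_C,q_D)\in F_C\times Q_D$, with $X:=A$, $Y:=B_{f_C,q_D}$, $Z:=D^{q_D\to F_D}$. It calls the branching $ABCBD$ oracle once for each $(q_C,q_D)\in Q_C\times Q_D$, with the instance prescribed in Lemma~\ref{lem:two_subcases}. It answers $\mathsf{yes}$ iff some call does. Correctness then follows directly:
\begin{itemize}
\item the $\epsilon$-removal proposition splits $\lang{A\,(B)_x\,C\,\readv{x}\,D}$ into the regular part $\lang{A\,C\,D}$ and the $B_{-\epsilon}$ part;
\item Lemma~\ref{lem:two_subcases} shows that, for the $B_{-\epsilon}$ part, membership is equivalent to a positive answer in one of the two case families.
\end{itemize}

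\textbf{Side conditions for the branching oracle.} The one point requiring care is that each branching instance must meet that problem's stated side conditions, $\epsilon\notin\lang{B}$ and $\epsilon\notin\lang{D}$.
\begin{itemize}
\item For $B$: every word of $\lang{B_{q_C,q_D}}$ has a nonempty prefix in $\lang{B}$, because $B$ now denotes $B_{-\epsilon}$. Hence $\epsilon\notin\lang{B_{q_C,q_D}}$.
\item For $D$: $D^{q_D\to F_D}$ may accept $\epsilon$. The cleanest fix is to intersect it with $\Sigma^+$, again via Proposition~\ref{prop:epsilon-nfa:epsilon-remove}. This loses nothing, since the proof of Lemma~\ref{lem:two_subcases} shows $w'_D\neq\epsilon$ in both cases. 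Alternatively one can note that the branching condition itself forces a letter to follow the second $w_B$.
\end{itemize}
I expect verifying this compatibility to be the main obstacle, minor though it is. Everything else is bookkeeping.

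\textbf{Running time.} There are $|F_C||Q_D|+|Q_C||Q_D|=O(1)$ oracle calls, each costing $O(f(|w|))$. Together with the $O(|w|)$ preprocessing this gives a total of $O(|w|+f(|w|))$. This is $O(f(|w|))$ under the implicit assumption $f(n)=\Omega(n)$, which holds since any algorithm must read its input.
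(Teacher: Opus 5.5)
Your proposal is correct and follows essentially the same route as the paper: handle $\epsilon\in\lang{B}$ with the regular test $w\in\lang{A\,C\,D}$ and replace $B$ by $B_{-\epsilon}$, then call the $XYYZ$ oracle $|F_C||Q_D|$ times and the branching oracle $|Q_C||Q_D|$ times as dictated by Lemma~\ref{lem:two_subcases}. Your extra bookkeeping is sound and makes explicit points the paper leaves implicit: checking that each branching instance satisfies $\epsilon\notin\lang{B}$ and $\epsilon\notin\lang{D}$ (removing $\epsilon$ from $D^{q_D\to F_D}$ is harmless since $w'_D\neq\epsilon$ in that case), and absorbing the $O(|w|)$ preprocessing into $O(f(|w|))$.
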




%% file: onerewb/xyyz.tex

\section{$(X\,YY\,Z)$-Problem}
\label{section:XYYZ problem}

We first restate our subproblem, the $XYYZ$ problem.
\begin{itembox}[l]{$XYYZ$ Problem}
\noindent{}\textbf{Fixed Objects}: \regex{es} $X$, $Y$, and $Z$. \\
\noindent{}\textbf{Input}: A binary string $w$. \\
\noindent{}\textbf{Task}: Deciding if we can decompose $w$ into $w = w_X\, w_Y\, w_Y\, w_Z$ so that $w_X\in \lang{X}$, $w_Y\in \lang{Y}$, and $w_Z\in \lang{Z}$.
\end{itembox}

\subsection{Crochemore's Maximal Local Power Algorithm}
\label{section:MR-decomposition}

%
A string $x$ is \emph{repetitive} or \emph{power} if it is represented as $x=y^e$ for some string $y$ and an integer $e\geq 2$.
If $x$ is not repetitive, $x$ is \emph{primitive}.

For example,
(1) a string $abababab$ is repetitive because $ab ab ab ab = (ab)^4$;
(2) a string $abababa = (ab)^3 a$ is \emph{primitive}.

As in $(ab)^4 = (abab)^2$, there may be multiple representations for one repetitive string.
For the unique representation, we consider the decomposition by primitive words.
More precisely, for a repetitive string $x$, we denote it as $(p, k)$ where $x = p^k$ and $p$ is \emph{primitive}.
\begin{proposition}[{\cite[Chapter~9]{Crochemore:2007}}]
If a string $x$ is repetitive, there exists just one representation $(p, k)$.
\end{proposition}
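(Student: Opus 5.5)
We prove existence and uniqueness of the representation $(p,k)$ for a repetitive string $x$, writing $x = p^k$ with $p$ primitive and $k \geq 2$.

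\emph{Existence.} Since $x$ is repetitive, the set of strings $y$ with $x = y^e$ for some integer $e \geq 1$ is nonempty; $x$ itself lies in it with $e=1$. Choose such a $y$ of minimal length, and let $e$ be its exponent. If $y$ were repetitive, say $y = z^f$ with $f \geq 2$, then $x = z^{ef}$ with $|z| < |y|$, contradicting minimality; hence $y$ is primitive. Moreover, $x = y_0^{e_0}$ with $e_0 \geq 2$ for the witness $y_0$ of repetitiveness, so the minimal length satisfies $|y| \leq |y_0| < |x|$. Therefore $e \geq 2$, and $(y,e)$ is a valid representation.

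\emph{Uniqueness.} Suppose $x = p^k = q^m$ with $p, q$ primitive and $k, m \geq 2$. Then $x$ has periods $|p|$ and $|q|$. Assume without loss of generality $|p| \leq |q|$. Since $k, m \geq 2$,
\[
|x| \geq 2|q| \geq |p| + |q| \geq |p| + |q| - \gcd(|p|,|q|).
\]
Lemma~\ref{lemma:FineWilf} then gives that $x$ has period $g := \gcd(|p|,|q|)$.

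Since $g$ divides $|p|$ and $p$ is a prefix of $x$, the period $g$ restricts to $p$, so $p = (x[1\btw g])^{|p|/g}$. Primitivity of $p$ forces $|p|/g = 1$, that is, $|p| = g$. The same argument applied to $q$ gives $|q| = g$. Hence $p$ and $q$ are both the prefix of $x$ of length $g$, so $p = q$, and then $k = m = |x|/g$.

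The only step that needs care is checking the length hypothesis of Lemma~\ref{lemma:FineWilf}, and this is exactly where $k, m \geq 2$ is used. The rest is routine.
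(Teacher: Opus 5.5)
Your proof is correct and follows essentially the same route as the paper's: existence by passing to a primitive root (your minimal-length choice of root makes the paper's ``repeatedly decompose $v$'' precise), and uniqueness by applying Lemma~\ref{lemma:FineWilf} to the periods $|p|$ and $|q|$ and then invoking primitivity. Your version is more careful in two places the paper leaves implicit: you verify the length hypothesis $|x| \geq |p|+|q|-\gcd(|p|,|q|)$, and you derive $p=q$ directly rather than by contradiction. Like the paper, you tacitly assume $x \neq \epsilon$.
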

\begin{proof}
Since $x$ is repetitive, for some string $v$, $x = v^i$ with $i \geq 2$. If $v$ is not primitive,
we (repeatedly) decompose $v$ and obtain a primitive factor $p$ such as $x = p^k$ .
If $x = p^k = q^r$ held for different primitives $p$ and $q$, then $x = y^e$ for $y = x[1 .. \mathit{gcd}(|p|, |q|)]$ by Fine and Wilf's theorem (Lemma~\ref{lemma:FineWilf}).
Since $|p| \bmod |y| = |q| \bmod |y| = 0$, it contradicts the assumption that $p$ and $q$ are primitives.
\end{proof}

\def\btw{{\,..\,}}

Let $w$ be a string.
A substring $x$ of $w$ is \emph{local power} at an index $p$
if $x$ is repetitive and $x$ starts from the index $p$,  $x = w[p \btw p + |x|)$.
For a primitive $\theta$, $1 \leq p \leq |w|$, and $e \geq 2$, we write $(p, \theta, e)$ to denote the local power $\theta^e$ at the index $p$.
A local power $(p, \theta, e)$ is (two-sided) \emph{maximal} if
(1) it is not extensible to the left: i.e., $w[p - |\theta| \,..\, p) \neq \theta$ or $p-|\theta| < 1$; and
(2) it is not extensible to the right: i.e., $\theta \neq w[p + e|\theta| \, .. \, p + (e+1)|\theta|)$ or $p+(e+1)|\theta| > n+1$.

\paragraph{Example.}

For a string $x = aabaabaabba$,
there are two kinds of maximal powers:
\begin{itemize}
\item $|\theta|=1$: $(1, a, 2)$, $(4, a, 2)$, $(7, a, 2)$, and $(9, b, 2)$.
\item $|\theta|=3$: $(1, aab, 3)$, $(2, aba, 2)$, and $(3, baa, 2)$.
\end{itemize}
It is worth noting that $(4, aab, 2)$ is a local power but not maximal because it is covered by $(1, aab, 3)$.
The following result is shown by Crochemore~\cite{Crochemore:1981} and plays a critical role in our algorithm.
\begin{theorem}[{\cite{Crochemore:1981}},{\cite[Chapter~9]{Crochemore:2007}}]
Let $w$ be a string and $n := |w|$.
We can compute the set of all maximal local powers of $w$ in $O(n \log n)$-time.
The size of this set is also $O(n \log n)$.
\end{theorem}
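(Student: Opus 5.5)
The plan is to follow Crochemore's partitioning (refinement) approach. The $O(n\log n)$ size bound and the $O(n\log n)$ time bound both come from a ``smaller half'' charging argument.

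For each length $\ell = 1, 2, \ldots, n$, define the equivalence relation $E_\ell$ on positions by $p \sim_\ell q$ iff $w[p \btw p+\ell) = w[q \btw q+\ell)$. Positions with $p+\ell > n+1$ form singleton classes. A repetition of period $\ell$ starting at $p$ exists iff $p \sim_\ell p+\ell$. Consequently, within each class of $E_\ell$ sorted by position, a maximal local power $(p,\theta,e)$ with $|\theta|=\ell$ corresponds exactly to a maximal run $p, p+\ell, \ldots, p+(e-1)\ell$ of class members spaced exactly $\ell$ apart, of length at least $2$ (the last copy being $w[p+(e-1)\ell \btw p+e\ell)$). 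Here $\theta$ is primitive: otherwise a smaller period would make $\theta$ a power, which is excluded by considering only primitive roots. I would check primitivity via the standard fact that if $\theta=u^j$, then $p \sim_{|u|} p+|u|$ holds along the whole stretch. To output only primitive ones, I would report at level $\ell$ exactly those runs, and then argue using Fine and Wilf (Lemma~\ref{lemma:FineWilf}) that a run of $\ell$-spaced equal $\ell$-blocks of non-primitive $\theta$ is detected instead at the smaller level and can be filtered out. Concretely, this is done by checking that $\theta$ is not equal to its rotation by $|\theta|/r$ for a prime $r$, which is also handled with $E$-classes.

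The key step is computing $E_{\ell+1}$ from $E_\ell$ efficiently. Note that $p \sim_{\ell+1} q$ iff $p \sim_\ell q$ and $p+1 \sim_\ell q+1$ (equivalently, $p+\ell$ and $q+\ell$ agree, or alternatively using $E_1$ on the last letter). Crochemore's trick is to refine classes using, for each class $X$ of $E_\ell$ being split, only the ``small'' subclasses: for each split, all subclasses except the largest one are designated small, and positions $p$ with $p+1 \in$ small class are moved. A position belongs to a small class at most $O(\log n)$ times, since each time its class size at least halves. Hence the total work over all levels is $O(n\log n)$.

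The main obstacle is maintaining the sorted-run structure (the gaps between consecutive members of each class) under these moves. This is needed so that reporting runs costs time proportional to the output plus the number of moved elements. I would maintain each class as a doubly linked list sorted by position, and keep, for each $\ell$, buckets of adjacent pairs with gap exactly $\ell$. I would then argue that every maximal power of period $\ell$ either involves a moved position or persists from a neighbor configuration, charging each reported power to a moved element. This also yields the $O(n\log n)$ bound on the number of maximal local powers, since each is charged to a distinct (position, small-class event) pair. The delicate part is this charging, specifically that a maximal run is reported only when some member was moved at that level; I would handle the case where it was not moved by reporting only the runs in small classes, and appealing to the symmetric argument for the complementary big class.
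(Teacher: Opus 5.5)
Your route is the same as the cited source's (Crochemore's partitioning), and the refinement part is fine. Computing $E_{\ell+1}$ from $E_\ell$ using as splitters only the non-largest children created at the previous step is correct, because for a class $X$ of $E_\ell$ all positions $p+1$ with $p\in X$ lie in a single class of $E_{\ell-1}$. This gives $O(n\log n)$ moves in total. Sorted class lists together with buckets $B[d]$ of adjacent pairs at gap $d$ can be maintained in $O(1)$ per move.

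Your primitivity filter is superfluous. If $i$ and $i+\ell$ are adjacent in a class of $E_\ell$, then $w[i \btw i+\ell)$ is automatically primitive: if $w[i \btw i+\ell)=u^k$ with $k\ge 2$, then $i+|u|$ would lie in the same class strictly between them.

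\textbf{The gap.} The genuine gap is the step you yourself call delicate, and both the size bound and the time bound rest on it. The claim that a maximal run is reported only when some member was moved at that level is false. For $w=a^n$, the power $(1,a,n)$ must be found at level $1$ inside the largest class, and none of its members is in a small class. More generally, Lemma~\ref{lemma:FineWilf} shows that if $\theta\theta$ with $\theta$ primitive occurs at $i$, then $i$ and $i+|\theta|$ are already adjacent in their class of $E_{|\theta|-1}$. So a gap-$\ell$ adjacency is typically created before level $\ell$, and it may sit in a class that is untouched at level $\ell$. Reporting only runs in small classes therefore misses powers. There is also no ``symmetric argument'' for the big class: the smaller-half trick is efficient precisely because the big class is never scanned. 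Consequently, neither the $O(n\log n)$ bound on the number of maximal powers nor the $O(n\log n)$ reporting time follows from what you wrote. The asserted injective charge to (position, small-class) pairs is never established.

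\textbf{What is missing is a counting lemma.} Two routes work.
\begin{itemize}
\item At level $\ell$, simply read $B[\ell]$. The total reporting cost is then the number of occurrences of primitively rooted squares in $w$. Bound this by $O(n\log n)$ using the three-prefix-squares lemma: if $u^2$, $v^2$, $z^2$ are prefixes of a string with $u$ primitive and $|u|<|v|<|z|$, then $|u|+|v|\le |z|$. This yields $O(\log n)$ primitively rooted squares starting at any position. Since a maximal power $(p,\theta,e)$ is determined by the square at $p$, the same bound covers the number of maximal powers.
\item Alternatively, charging to small classes does work, but at a later level, not at level $\ell$. Use an end marker, let $q=p+(e-1)\ell$ be the start of the last copy, and let $\ell+r$ be the length of the longest prefix of $w[q \btw n]$ with period $\ell$; right-maximality gives $0\le r<\ell$. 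Then $q-\ell$ and $q$ are separated exactly at the step $E_{\ell+r}\to E_{\ell+r+1}$, so one of them falls into a non-largest child there. Lemma~\ref{lemma:FineWilf}, together with primitivity, shows that each (position, level) pair receives at most two such charges. This proves the $O(n\log n)$ count. To get the time bound on this route, you must still avoid walking entire chains, for example by bucketing only chain starts and computing $e$ with a longest-common-extension query.
\end{itemize}
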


We show that we can limit the search space for solutions of the $XYYZ$ problem as follows.
\begin{lemma}\label{lem:max_local_power_reduce}
The following conditions are equivalent:
\begin{enumerate}
\item[(1)] We have a desired decomposition $w = x \: y\,y \: z$ where $x \in \lang{X}$, $y \in \lang{Y}$, and $z \in \lang{Z}$.
\item[(2)] There is a maximal local power $(p, \theta, e)$ such that
\[
w = w_1\ \underbrace{\theta^i\ \theta^j\,\theta^j\,\theta^k}_{(p, \theta, e)}\ w_2,
\qquad
\text{where} \quad (w_1 \theta^i) \in \lang{X}, \theta^j \in \lang{Y}, (\theta^k w_2) \in \lang{Z}.
\]
\end{enumerate}
\end{lemma}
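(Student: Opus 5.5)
The direction (2) $\Rightarrow$ (1) is immediate. Setting $x := w_1\theta^i$, $y := \theta^j$ and $z := \theta^k w_2$ gives $w = x\,y\,y\,z$ with $x\in\lang{X}$, $y\in\lang{Y}$ and $z\in\lang{Z}$. All the work is in (1) $\Rightarrow$ (2), where we must show that the square $yy$ lies inside some maximal local power and is aligned to its primitive root.

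So fix a decomposition $w = x\,y\,y\,z$ and let $q := |x|+1$ be the starting index of $yy$. We may assume $y \neq \epsilon$. When $y=\epsilon$ and $\epsilon \in \lang{Y}$, the statement needs a degenerate reading: either allow $j=0$ inside any maximal local power, or treat this case separately. I would note this and dispose of it directly.

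Step 1. Write $y = \theta^m$ with $\theta$ primitive and $m\ge 1$; the representation is unique by the proposition on primitive roots. Then $yy = \theta^{2m}$ is a local power $(q,\theta,2m)$ at index $q$.

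Step 2. Extend this local power maximally. Let $a\ge 0$ be the largest integer such that $w[q-a|\theta| \btw q) = \theta^a$, and let $b\ge 0$ be the largest integer such that $w[q+2m|\theta| \btw q+(2m+b)|\theta|) = \theta^b$. Put $p := q - a|\theta|$ and $e := a+2m+b \ge 2$. By construction $(p,\theta,e)$ cannot be extended by a full copy of $\theta$ on either side, so it is a maximal local power in the sense of the definition.

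Step 3. Read off the decomposition. Take $i := a$, $j := m$, $k := b$, let $w_1 := w[1\btw p)$, and let $w_2$ be the rest after $p+e|\theta|$. Then $w_1\theta^i = x$, $\theta^j = y$ and $\theta^k w_2 = z$, which gives (2).

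The main obstacle is making sure this extended triple is genuinely one of the maximal local powers in Crochemore's set. That set is indexed by primitive $\theta$, so the object found in Step 2 must be in that form, and it is, since $\theta$ was chosen primitive. The definition of maximality allows integer-step extension only, so the maximal run may continue past $p+e|\theta|$ by a partial copy of $\theta$, or begin with a partial copy before $p$. Such partial copies do not matter: the triple still satisfies both maximality conditions as defined. It is also the unique $\theta$-aligned maximal local power containing the occurrence at $q$, because extending by whole copies of $\theta$ is deterministic. I would state this explicitly, so that later use of the $O(n\log n)$ enumeration, indexed by these aligned triples, covers every solution.
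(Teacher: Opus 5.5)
Your proof is correct and follows the same route as the paper, which likewise takes the maximal local power $(p,\theta,e)$ covering the square $yy$ with $y=\theta^j$ aligned to it. Your explicit construction (take the primitive root of $y$, then extend by whole copies of $\theta$ on both sides) spells out why such a triple exists and why it meets the paper's definition of maximality. Your $y=\epsilon$ caveat is genuine, and the paper passes over it, but only your second option works. Allowing $j=0$ does not rescue the statement: for example, $w=01$ with $X=0$, $Y=\epsilon$, $Z=1$ satisfies (1) although $w$ contains no square. So the lemma must be read under $\epsilon\notin\lang{Y}$. This does hold where the lemma is applied, since $\lang{Y}=\lang{B_{f_C,q_D}}\subseteq\lang{B}\cdot\Sigma^*$ and $\epsilon\notin\lang{B}$.
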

\begin{proof}
\textbf{Trivial Case $(2) \Rightarrow (1)$:}
If $(2)$ holds, we just set $x := w_1 \theta^i$, $y := \theta^j$, and $z := \theta^k w_2$.

\noindent{}\textbf{Case $(1) \Rightarrow (2)$:} Consider a decomposition $w = x \, \underline{y \, y} \, z$.
Since the string $y\,y$ is a power, there is a maximal local power $(p,\theta, e)$ that covers the substring $\underline{y\,y}$; that is, $y=\theta^j$ for some $j$ and the substring $\underline{y\,y}$ begins at the index $p+i|\theta|$ for some $0\leq i \leq e-2j$.
By taking this maximal local power $(p, \theta, e)$ and integers $i,j$, together with setting $k=e-i-2j$, we have $x=w_1\, \theta^i$, $y = \theta^{j}$, and $z=\theta^k w_2$, where $w_1=w[1\,..\,p)$ and $w_2=w[p+e|\theta|\,..\,|w|]$, respectively.
\end{proof}


\subsection{Our Tool on Ultimately Periodic Set}

The following lemma is fundamental.
\begin{lemma}
For a finite monoid $\mathcal{M}$ and its elements $\delta$ and $\delta'$, $\{i\geq 0\colon \delta^i=\delta'\}$ is an ultimately periodic set.  
\end{lemma}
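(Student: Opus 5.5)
The plan is to use the pigeonhole principle on the sequence of powers $\delta^0 = 1_M, \delta^1, \delta^2, \ldots$ in the finite monoid $\mathcal{M}$. This sequence must eventually repeat, and after the first repetition it is purely periodic. The set $\{i \geq 0 : \delta^i = \delta'\}$ is then read off directly from this ``rho-shaped'' orbit.

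\textbf{Step 1 (locating the cycle).} Since $\mathcal{M}$ is finite, the $|M|+1$ elements $\delta^0, \ldots, \delta^{|M|}$ cannot all be distinct. Let $\mu$ be the least index such that $\delta^\mu = \delta^{\mu+\lambda}$ for some $\lambda \geq 1$, and take $\lambda$ to be the least such positive integer. Both exist, with $\mu + \lambda \leq |M|$.

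\textbf{Step 2 (periodicity beyond $\mu$).} Show by induction on $n \geq \mu$ that $\delta^{n+\lambda} = \delta^n$. The base case is the definition of $\mu$ and $\lambda$. For the inductive step, associativity gives
\[
\delta^{n+1+\lambda} = \delta^{n+\lambda} \odot \delta = \delta^{n} \odot \delta = \delta^{n+1}.
\]
It follows that for every $n \geq \mu$, the power $\delta^n$ depends only on $n \bmod \lambda$. Concretely, $\delta^n = \delta^{m}$, where $m$ is the unique integer in $[\mu, \mu+\lambda)$ with $m \equiv n \pmod{\lambda}$.

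\textbf{Step 3 (the witnessing tuple).} Set
\[
M' := \{ i < \mu : \delta^i = \delta' \}, \qquad T := \{ t \in \{0, \ldots, \lambda-1\} : \delta^{m_t} = \delta' \},
\]
where $m_t$ denotes the unique element of $[\mu, \mu+\lambda)$ congruent to $t$ modulo $\lambda$. By Step 2, the tuple $(\mu, M', \lambda, T)$ satisfies the defining condition of an ultimately periodic set. This covers $i < \mu$ by the definition of $M'$, and $i \geq \mu$ by the definition of $T$.

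There is no real obstacle here. The only care needed is the offset in Step 3: the residue class $t$ must be represented by an exponent that is at least $\mu$, rather than by $t$ itself. The sets $M'$ and $T$, and hence the whole tuple, can be computed by iterating powers at most $|M|$ times, which matters for the paper's later algorithmic use.
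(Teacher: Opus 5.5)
Your proposal is correct and follows essentially the same route as the paper: pigeonhole on the powers $\delta^0,\delta^1,\ldots$ yields a repetition $\delta^{k_1}=\delta^{k_2}$, and multiplying by further powers of $\delta$ propagates it, giving periodicity with period $k_2-k_1$ from $k_1$ onward. You go further than the paper by writing down the witnessing tuple $(\mu,M',\lambda,T)$ explicitly. The minimality of $\mu$ and $\lambda$ and the bound $\mu+\lambda\le|M|$ are not needed for the lemma itself, and that bound is true but requires a short argument: minimality forces $\delta^0,\ldots,\delta^{\mu+\lambda-1}$ to be pairwise distinct.
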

\begin{proof}
Consider the sequence $(\delta^0,\delta^1,\dots )$. Since $\mathcal{M}$ is finite, there is an element $\delta''$ that appears at least twice in this sequence. 
Let $\delta''=\delta^{k_1}=\delta^{k_2}$ for $k_1 < k_2$. Then, for $k\geq k_2$, we have $\delta^{k}=\delta^{k-k_2}\delta^{k_2}=\delta^{k-k_2}\delta^{k_1}=\delta^{k-(k_2-k_1)}$. Thus, the set $\{i\geq 0\colon \delta^i=\delta'\}$ is periodic for $i\geq k_1$.
\end{proof}
To give an algorithm for the $X YY Z$ problem, we repeatedly use the following lemma.
\begin{lemma}\label{lem:wtw_is_ult_periodic}
Let $E$ be a \regex{}, $\delta\in \transm{E}$, and $w_1, \theta, w_2$ be strings.
Then, $\{i\geq 0\colon \transm{E}(w_1\, \theta^i\,w_2)=\delta\}$ is ultimately periodic.
Moreover, the parameters $\mu$ and $\lambda$ of that set are bounded by a constant that depends only on $E$.
\end{lemma}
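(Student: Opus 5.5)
We reduce the statement to the preceding lemma (the set $\{i \geq 0 : \delta^i = \delta'\}$ is ultimately periodic in a finite monoid), using that $\Delta_E$ is a monoid homomorphism into the finite monoid $\transm{E}$. Here we read $\transm{E}(\cdot)$ in the statement as the homomorphism $\Delta_E$. Write $a := \Delta_E(w_1)$, $t := \Delta_E(\theta)$, and $b := \Delta_E(w_2)$. By the homomorphism property,
\[
\Delta_E(w_1\,\theta^i\,w_2) = a \comp t^i \comp b .
\]
Hence the set in question equals
\[
\bigcup_{\delta' \in \transm{E} \,:\, a \comp \delta' \comp b = \delta} \{ i \geq 0 : t^i = \delta' \}.
\]
This is a finite union, over at most $|\transm{E}|$ elements $\delta'$, of sets that are ultimately periodic by the preceding lemma. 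By the union closure in Lemma~\ref{lem:ult_operations}, the union is again ultimately periodic.

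For the bound on parameters, we argue directly rather than through the closure lemma, since the latter does not track constants. Consider the sequence $t^0, t^1, \dots$ in $\transm{E}$. By pigeonhole, there are $k_1 < k_2 \leq |\transm{E}|$ with $t^{k_1} = t^{k_2}$. The argument in the previous lemma shows that $t^k = t^{k-(k_2-k_1)}$ for all $k \geq k_2$. So the whole sequence $i \mapsto t^i$ is ultimately periodic with preperiod $\mu := k_2$ and period $\lambda := k_2 - k_1$, and both are at most $|\transm{E}|$. Membership of $i$ in our set depends only on $t^i$, so the set is ultimately periodic with this same pair $(\mu, \lambda)$: take $M = \{ i < \mu : a \comp t^i \comp b = \delta\}$ and $T = \{ r < \lambda : a \comp t^{\mu + ((r - \mu) \bmod \lambda)} \comp b = \delta\}$, adjusting the residue convention so that it matches the definition's test $(n \bmod \lambda) \in T$.

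Since $|\transm{E}| = 2^{O(|E|^2)}$ by Lemma~\ref{lemma:NFA-transition-monoid}, both $\mu$ and $\lambda$ are bounded by a constant depending only on $E$, independent of $w_1$, $\theta$, and $w_2$.

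The only mild obstacle is this residue bookkeeping in defining $T$. The definition tests $n \bmod \lambda$ rather than $(n - \mu) \bmod \lambda$, so one should choose $T$ as the set of residues $r$ such that the unique $n \in [\mu, \mu + \lambda)$ with $n \equiv r \pmod \lambda$ lies in the set. Everything else is immediate from the homomorphism property and finiteness.
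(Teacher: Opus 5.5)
Your proof is correct and follows the paper's route. Like the paper, you rewrite the set via the homomorphism as $\bigcup_{\delta'} \{i \geq 0 : t^i = \delta'\}$ and then apply the preceding lemma together with union closure. The only difference is the parameter bound: you extract the explicit bound $\mu,\lambda \le |\transm{E}|$ from the common preperiod and period of $i \mapsto t^i$, whereas the paper only observes that the set depends on finitely many possible triples $(\Delta_E(w_1),\Delta_E(\theta),\Delta_E(w_2))$. Your explicit bound is in fact the one the paper later invokes in its hidden-constant analysis.
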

\begin{proof}
Let $\delta_1=\transm{E}(w_1)$, $\delta_{\theta}=\transm{E}(\theta)$, and $\delta_2=\transm{E}(w_2)$.
Let $\mathcal{D}=\{\delta'\in \transm{E}\colon \delta_1\comp \delta'\comp \delta_2=\delta\}$.
Then, the desired set is $\bigcup_{\delta'\in \mathcal{D}}\{i\geq 0\colon \delta_{\theta}^i=\delta'\}$, which is ultimately periodic because it is the union of a finite number of ultimately periodic sets.
Moreover, since that set depends only on transitions $\delta_1$, $\delta_{\theta}$, and $\delta_2$, the parameters $\mu$ and $\lambda$ of that set are bounded by a constant that depends only on $E$.
\end{proof}

\subsection{Checking each maximal local power in $O(1)$-time}
\label{section:solving semi-linear constraints}

Using Lemma~\ref{lem:max_local_power_reduce},
for a given maximal local power $(p, \theta, e)$,
we hereafter focus on the problem of whether we can decompose $w$ as
\[
w = \underbracket{\,w_1 \: \theta^i}_{\in X}\  \underbracket{\,\theta^j\,}_{\in Y} \underbracket{\,\theta^j\,}_{\in Y} \underbracket{\,\theta^k \, w_2\:}_{\in Z}
\qquad i + 2j + k = e, \tag{$\star$}
\]
where the prefix $w_1=[1\,..\,p)$ and the suffix $w_2=[p+e|\theta|\,..\,|w|]$ are uniquely determined from $(p,\theta, e)$.

To solve the above problem ($\star$) in $O(1)$-time, we introduce some sets of exponents that are candidates of integers $i$, $j$, and $k$.
Let
\[
\mathcal{S}_X := \set{ i \geq 0 : X(w_1\,\theta^i)\text{ is accepting} } = \bigcup_{\substack{\delta\in \mathcal{M}(X)\\ \delta \text{ is accepting}}}\set{i\geq 0 : \transm{X}(w_1\,\theta^i)=\delta},
\]
\[
\mathcal{S}_Y := \set{ j \geq 0 : Y(\theta^j)\text{ is accepting} } = \bigcup_{\substack{\delta\in \mathcal{M}(Y)\\ \delta \text{ is accepting}}}\set{j\geq 0 : \transm{Y}(\theta^j)=\delta},
\]
\[
\mathcal{S}_Z := \set{ k \geq 0 : Z(\theta^k\,w_2)\text{ is accepting} } = \bigcup_{\substack{\delta\in \mathcal{M}(Z)\\ \delta \text{ is accepting}}}\set{k\geq 0 : \transm{Z}(\theta^k\,w_2)=\delta},
\]
From Lemmas~\ref{lem:ult_operations}~and~\ref{lem:wtw_is_ult_periodic}, all of $\mathcal{S}_X$, $\mathcal{S}_Y$, and $\mathcal{S}_Z$ are ultimately periodic whose parameters $\mu$ and $\lambda$ are bounded by a constant that depends only on $X$, $Y$, and $Z$, respectively. 

Now, we define
\[
\mathcal{S} := \mathcal{S}_X + (2\cdot \mathcal{S}_Y) + \mathcal{S}_Z.
\]
From Lemma~\ref{lem:ult_operations}, $\mathcal{S}$ is also ultimately periodic whose parameters $\mu$ and $\lambda$ are bounded by a constant that depends only on $X$, $Y$, and $Z$.
We have the following.
\begin{lemma}
There is a way to decompose $w$ as in ($\star$) if and only if $e\in \mathcal{S}$. 
\end{lemma}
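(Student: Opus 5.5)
The plan is to unfold the definitions on both sides and show that each direction is a direct translation between exponent triples $(i,j,k)$ and decompositions of the form ($\star$). The only subtlety is checking that the monoid-based membership conditions for $\mathcal{S}_X,\mathcal{S}_Y,\mathcal{S}_Z$ coincide exactly with membership of the corresponding strings in $\lang{X},\lang{Y},\lang{Z}$.

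\emph{Only if.} Suppose $w$ decomposes as in ($\star$) with exponents $i,j,k\geq 0$ and $i+2j+k=e$. Then $w_1\theta^i\in\lang{X}$. Since $w\in\lang{E}\iff \Delta_E(w)\in S_E$, the element $\transm{X}(w_1\theta^i)$ is accepting, so $i\in\mathcal{S}_X$. In the same way $\theta^j\in\lang{Y}$ gives $j\in\mathcal{S}_Y$, and $\theta^k w_2\in\lang{Z}$ gives $k\in\mathcal{S}_Z$. Therefore $e=i+2j+k$ lies in $\mathcal{S}_X+(2\cdot\mathcal{S}_Y)+\mathcal{S}_Z=\mathcal{S}$, directly from the definitions of the Minkowski sum and of $2\cdot P$ in Lemma~\ref{lem:ult_operations}.

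\emph{If.} Suppose $e\in\mathcal{S}$. By the definition of the Minkowski sum there exist $i\in\mathcal{S}_X$, $j\in\mathcal{S}_Y$, $k\in\mathcal{S}_Z$ with $e=i+2j+k$. All three are non-negative, so $\theta^e=\theta^i\theta^j\theta^j\theta^k$. Because $(p,\theta,e)$ is a local power, $w[p\btw p+e|\theta|)=\theta^e$, which gives
\[
w=w_1\,\theta^i\,\theta^j\,\theta^j\,\theta^k\,w_2 .
\]
Reversing the monoid argument, the memberships $i\in\mathcal{S}_X$, $j\in\mathcal{S}_Y$, $k\in\mathcal{S}_Z$ yield $w_1\theta^i\in\lang{X}$, $\theta^j\in\lang{Y}$ and $\theta^k w_2\in\lang{Z}$. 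This is exactly a decomposition as in ($\star$).

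There is no real obstacle here. The one point to state carefully is the meaning of ``accepting'': a monoid element $\delta$ is accepting if $\delta\in S_E$. With that convention, the equality $\mathcal{S}_X=\set{i\geq 0 : w_1\theta^i\in\lang{X}}$ (and its analogues for $Y$ and $Z$) follows from the recognition property of the triple $(\transm{E},S_E,\Delta_E)$.
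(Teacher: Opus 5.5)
Your proposal is correct and follows the paper's own proof: both directions translate directly between exponent triples $(i,j,k)$ with $i\in\mathcal{S}_X$, $j\in\mathcal{S}_Y$, $k\in\mathcal{S}_Z$ and decompositions in ($\star$), using the definitions of the Minkowski sum and of $2\cdot\mathcal{S}_Y$. You also state explicitly that $w[p\btw p+e|\theta|)=\theta^e$ and what ``accepting'' means; the paper leaves both implicit.
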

\begin{proof}
Assume there is a way to decompose $w$ as in ($\star$).
Let $i$, $j$, and $k$ be the exponents in that decomposition.
Since $w_1\,\theta^i$, $\theta^j$, and $\theta^k\, w_2$ are accepted by $X$, $Y$, and $Z$, respectively, we have $i\in \mathcal{S}_X$, $j\in \mathcal{S}_Y$, and $k\in \mathcal{S}_Z$.
Thus, from the definition of the Minkowski sum and the $2$-multiplication, we have $e=i+2j+k\in \mathcal{S}$.

Conversely, assume $e\in \mathcal{S}$.
Then, there is a tuple of integers $i,j,k$ with $i\in \mathcal{S}_X$, $j\in \mathcal{S}_Y$, $k\in \mathcal{S}_Z$ such that $i+2j+k=e$.
From the definition of $\mathcal{S}_X$, $\mathcal{S}_Y$, and $\mathcal{S}_Z$, we have $w_1\,\theta^i\in X$, $\theta^j\in Y$, and $\theta^k\, w_2\in Z$. Thus, the lemma is proved.
\end{proof}

Clearly, for $\mathcal{S}=(\mu,M,\lambda,T)$ and $e\leq |w|$, whether $e\in \mathcal{S}$ or not can be determined in $O(\mu + \lambda)=O(1)$-time.
Moreover, $\mathcal{S}$ itself can be computed in constant time from the transitions $\transm{X}(w_1)$, $\transm{X}(\theta)$, $\transm{Y}(\theta)$, $\transm{Z}(\theta)$, and $\transm{Z}(w_2)$.
Furthermore, those transitions can also be computed in constant time using the factorization forest data structure $\fforest_{X}$, $\fforest_{Y}$, and $\fforest_{Z}$, which can be constructed in linear time.
Thus, with linear time pre-computation, we can answer the problem defined in ($\star$) in constant time, which yields the following.
\begin{lemma}
The XYYZ problem can be solved in $O(|w|\log |w|)$ time.
\end{lemma}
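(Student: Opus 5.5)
The plan is to combine Lemma~\ref{lem:max_local_power_reduce} with the constant-time test of the previous subsection. First, in $O(|w|)$ time, build the factorization forests $\fforest_{X}$, $\fforest_{Y}$, $\fforest_{Z}$ of $w$ for the homomorphisms $\Delta_X$, $\Delta_Y$, $\Delta_Z$ into the transition monoids of Lemma~\ref{lemma:NFA-transition-monoid}. Their heights and the cost of a monoid operation depend only on the fixed expressions. By Lemma~\ref{lemma:constant-eval-by-ff}, each of them then returns the monoid value of any factor $w[a\btw b]$ in $O(1)$ time. Next, run Crochemore's algorithm to list all maximal local powers $(p,\theta,e)$ in $O(|w|\log|w|)$ time. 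By the same theorem, this list has $O(|w|\log|w|)$ entries. Each entry is encoded by $p$, $|\theta|$ and $e$, so $\theta=w[p\btw p+|\theta|)$, $w_1=w[1\btw p)$ and $w_2=w[p+e|\theta|\btw |w|]$ are all factors (possibly empty) of $w$.

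For each maximal local power, we query the forests for the five values $\Delta_X(w_1)$, $\Delta_X(\theta)$, $\Delta_Y(\theta)$, $\Delta_Z(\theta)$, $\Delta_Z(w_2)$, treating empty factors as the identity. From these we construct $\mathcal{S}_X$, $\mathcal{S}_Y$, $\mathcal{S}_Z$ and $\mathcal{S}=\mathcal{S}_X+2\cdot\mathcal{S}_Y+\mathcal{S}_Z$. The approach is to precompute, once and independently of $w$, a lookup table indexed by these five monoid elements that stores $(\mu,M,\lambda,T)$ of $\mathcal{S}$. This is possible because Lemmas~\ref{lem:wtw_is_ult_periodic} and~\ref{lem:ult_operations} show that $\mathcal{S}$ depends only on these elements and has parameters bounded by a constant. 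Testing $e\in\mathcal{S}$ then costs $O(1)$. We accept iff some maximal local power passes, and correctness follows from Lemma~\ref{lem:max_local_power_reduce} together with the lemma characterizing ($\star$) by $e\in\mathcal{S}$.

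The total running time is $O(|w|)$ for preprocessing, plus $O(|w|\log|w|)$ for enumeration, plus $O(1)$ per entry, giving $O(|w|\log|w|)$. The one step needing care is the claim that $\mathcal{S}$ is computable in constant time. This requires bounding $\mu,\lambda$ of the Minkowski sum explicitly in terms of the monoid sizes, which follows from the closure construction in Appendix~\ref{appendix:closure-of-UP}. With the table precomputed, the per-entry cost is independent of $|w|$, which is all the lemma needs.
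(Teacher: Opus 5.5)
Your proposal is correct and is essentially the paper's proof. It uses Crochemore's enumeration of the $O(|w|\log|w|)$ maximal local powers, then an $O(1)$ test of $e\in\mathcal{S}_X+2\cdot\mathcal{S}_Y+\mathcal{S}_Z$ per power from five monoid values read off linear-time factorization forests; your lookup table just makes explicit the paper's remark that $\mathcal{S}$ is computable in constant time from those transitions.

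One edge case that both you and the paper gloss over is $w_Y=\epsilon$ (possible when $\epsilon\in\lang{Y}$), which need not be witnessed by any maximal local power (e.g.\ $w=01$ has none), so Lemma~\ref{lem:max_local_power_reduce} should be supplemented by a linear-time check of $w\in\lang{X\,Z}$. This is vacuous in the paper's application, since $\epsilon$ was removed from $B$ and hence $\epsilon\notin\lang{Y}$.
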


The above discussion still works when we replace the specific constant $2$ with any constant integer $\alpha\geq 2$.
Thus, we have Theorem~\ref{thm:XYYYYYZ}.
\XYYZ*









%% file: onerewb/uezato-overview.tex

\section{Algorithm Overview for the Branching $ABCBD$ Problem}

\label{section:branchingABCBD-to-twosubproblems}

\newcommand{\entire}{w}

\subsection{Notation}
\label{section:hevy-and-light-indices}

Recall that $\stree_w$ is the suffix tree of $w$.
We simply write $\stree$ for $\stree_w$.

Let $v$ be a node of $\stree$.
We write $\mathscr{I}_v$ for the set of indices of suffix nodes of the subtree rooted at $v$ and it is formally defined as follows:
\[
\mathscr{I}_v := \set{ \Occ{u} : u \in \stree[v], \isSuffix{u} = 1}.
\]

We also write $\heavy_v$ and $\light_v$ for the set of indices of suffix nodes in heavy and light subtrees of $v$, respectively.
More formally, we define $\heavy_v$ and $\light_v$ as follows.
Let $u_H$ (resp. $u_L$) be the child of $v$ where $v \to u_H$ is a heavy edge (resp. $v \to u_L$ is a light edge). If there is no such child, we set $u_H=\bot$ (resp. $u_L=\bot$).
Then, we define
\[
\begin{array}{lcl}
  \heavy_v & := & \set{ \Occ{u} : u \in \stree[u_H],\ \isSuffix{u} = 1 }, \\
  \light_v & := & \set{ \Occ{u} : u \in \stree[u_L],\  \isSuffix{u} = 1 } \cup
  \begin{cases}
  \set{ v } & \text{if } \isSuffix{v} = 1 \\
  \emptyset & \text{otherwise}
  \end{cases}
\end{array}  
\]
where we set $\stree[\bot] = \emptyset$.

Note that $v$ itself is in $\light_v$ when $v$ is a suffix node.
Moreover, for each non-leaf node $v$, 
$\mathscr{I}_v = \heavy_{v} \cup \light_v$ and
$\heavy_{v} = \heavy_{u_H}\cup \light_{u_H}$ clearly hold.

\subsection{From Branching ABCBD to Two Enumeration Problems}
\label{sec:two-enumeration-problems}


In this section, we introduce two subproblems \textsc{Left-$B$ Transition Enumeration} and \textsc{Right-$B$ Transition Enumeration} for decomposing the branching ABCBD problem.
We define both of those problems on a heavy-path $\hpath$ of the suffix tree $\stree_w$,
and utilize them to determine whether there is a branching decomposition of $w=w_A\ w_B\ w_C\ w_B\ w_D$ such that $w_B=\nword{v}$ for some $v \in \hpath$.

We consider the following branching decomposition of $w$:
\[
w = w_A\ w_B\ w_C\ w_B\ w_D,
\]
where $w_B = \nword{v}$ for some $v \in \pi$.
We denote $m_v := |w_B|$.

Let $p$ and $q$ be the indices of the left ends of the first and the second occurrences of $w_B$:
i.e., $w[1 \btw p) = w_A$, $w[p \btw p+m_v) = w_B$, $w[p+m_v \btw q) = w_C$, $w[q \btw q+m_v) = w_B$, and $w[q+m_v \btw |w|] = w_D$.
It is clear that $p, q \in \mathscr{I}_v = \heavy_v \cup \light_v$.
We illustrate this situation as follows:
\[
w = \underbrace{w[1\btw p)}_{w_A} \ \underbrace{w[p\btw p+m_v)}_{w_B} \ \underbrace{w[p+m_v\btw q)}_{w_C} \ \underbrace{w[q\btw q+m_v)}_{w_B} \ \underbrace{w[q+m_v\btw |w|]}_{w_D}.
\]

\newcommand{\cand}{\mathcal{K}}
\newcommand{\candp}{\mathit{Cand}}

Deciding if there is a branching decomposition for a node $v$ of $\stree$, we first check $\nword{v} = w_B \in \lang{B}$.
This check is done in $O(1)$-time by using the factorization forest $\fforest$ and Lemma~\ref{lemma:constant-eval-by-ff}.
Then, we consider the following set of candidate transitions of $C$:
\[
\cand =
\set{ \delta \in \transm{C} : \exists p, q\in \mathscr{I}_v.\ \cand(p, q, \delta, m_v) \land q+m_v \leq |w| \land w[p+m_v] \neq w[q+m_v]
},
\]
where $\cand(p, q, \delta, m)$ is a predicate defined as follows:
\[
\cand(p, q, \delta, m) :=
  w[1 \btw p) \in \lang{A} \land
  p + m \leq q \land
  \delta = \evalc{ [ p+m \btw q) } \land
  w[q+m \btw |w|] \in \lang{D}.
\]

Now, we rephrase our branching ABCBD problem using the above set $\cand$.

First, for every node $v$,
the following two conditions are equivalent:
\begin{itemize}
\item There is a branching decomposition $w = w_A\ w_B\ w_C\ w_B\ w_D$ with $w_B = \nword{v}$.
\item There is a transition $\trans$ in the set $\cand$ where $\trans$ is accepting on $C$.
\end{itemize}
This correspondence is clear from our condition on the branching ABCBD problem; that is, we are seeking the decomposition that satisfies the following condition:
\begin{itemize}
    \item the letters immediately after two occurrences of $w_B$'s are different.
\end{itemize}
It should be noted that, from the argument of Section~\ref{section:sanitize}, we assume $\epsilon \notin \lang{D}$.

Next, on a node $v$, assume $\trans \in \cand$ for some accepting transition $\delta$.
It means that there exist $p$ and $q$ such that $\cand(p, q, \delta, m_v)$ and $q + m_v \leq |w| \land w[p+m_v] \neq w[q+m_v]$.
Then, it holds that one of $\set{p, q}$ must belong to the heavy side $\heavy_v$ and the other must belong to the light side $\light_v$. It is confirmed as follows:
\begin{itemize}
\item Recall $p, q \in \mathscr{I}_v = \heavy_v \cup \light_v$.
\item The condition $q + m_v \leq |w| \land w[p+m_v] \neq w[q+m_v]$ equals the ``branching by next letter'' condition of the suffix tree.
\item Therefore, one of $\set{p, q}$ must belong to the heavy-side subtree $\stree[u_H]$ of $v$ and the other must belong to the light-side subtree $\stree[u_L]$ of $v$.
\end{itemize}

Finally, we can compute $\cand$ by solving the following two enumeration problems:
\begin{itemize}
\item Enumerating transitions $\delta$ where $p$ is in the heavy side $\heavy_v$ and $q$ is in the light side $\light_v$.
\item Enumerating transitions $\delta$ where $p$ is in the light side $\light_v$ and $q$ is in the heavy side $\heavy_v$.
\end{itemize}
More formally, the above two tasks correspond to computing the following two sets:
\[
\begin{array}{rll}
\cand_L & := & \set{ \delta \in \transm{C} :  \exists p \in \heavy_v.\, \exists q \in \light_v.\ \cand(p, q, \delta, m_v) }, \\
\cand_R & := & \set{ \delta \in \transm{C} :  \exists p \in \light_v.\, \exists q \in \heavy_v.\ \cand(p, q, \delta, m_v) }.
\end{array}
\]
For every node $v \in \stree$, it is clear that $\cand = \cand_L \cup \cand_R$.

\subsection{Left-B and Right-B} \label{sec:leftb-and-rightb-tasks}

According to the discussion in the previous section, we define the following problems:
\begin{itembox}[l]{\textsc{Left-$B$ Transition Enumeration}}
\noindent{}\textbf{Fixed Objects}: \regex{es} $A$, $B$, $C$, and $D$ and a string $w$. \\
\textbf{Input}: A heavy-path $\hpath$ of $\stree_w$. \\
\textbf{Task:} Compute the union of $\cand_L(v)$ for all $v \in \hpath$. More formally, enumerate all $\trans \in \transm{C}$ such that there is a node $v\in \pi$, $i\in \light_v$, and $j\in \heavy_v$ with
    \begin{itemize}
        \item $j + m_v \leq i$,
        \item $w[1 \btw j) \in \lang{A}$, $\nword{v} \in \lang{B}$, and $w[i + m_v \btw |w|] \in \lang{D}$, and
        \item $\delta = \transm{C}(\,w[ j + m_v \btw i)\,)$.
    \end{itemize}
Recall that $\nword{v} = w[j\btw j+m_v) =w[i\btw i+m_v)$ holds by definition of $m_v$.
\end{itembox}
\begin{itembox}[l]{\textsc{Right-$B$ Transition Enumeration}}
\noindent{}\textbf{Fixed Objects}: \regex{es} $A$, $B$, $C$, and $D$ and a string $w$. \\
\textbf{Input}: A heavy-path $\hpath$ of $\stree_w$. \\
\textbf{Task:} Compute the union of $\cand_R(v)$ for all $v \in \hpath$. More formally, enumerate all $\trans \in \transm{C}$ such that
there is a node $v \in \hpath$, $i \in \light_v$, and $j \in \heavy_v$ with
    \begin{itemize}
        \item $i + m_v \leq j$
        \item $w[1 \btw i) \in \lang{A}$, $\nword{v} \in \lang{B}$, and $w[j + m_v \btw |w|] \in \lang{D}$, and
        \item $\delta = \transm{C}(\,w[ i + m_v \btw j)\,)$.
    \end{itemize}
Recall that $\nword{v} = w[i\btw i+m_v) =w[j\btw j+m_v)$ holds by definition of $m_v$.
\end{itembox}

These problems are identical except for the positional relation; that is, which interval $[i, i+m_v)$ or $[j, j+m_v)$ corresponding to the string $w_B$ is located to the left of the other.

\begin{algorithm}[t!]
\caption{Algorithm Overview.}\label{new:alg:overview_all}
\Procedure{\emph{\Call{branching-ABCBD}{$w$, $A$, $B$, $C$, $D$}}}{
    Compute the suffix tree  $\stree_w$ and its heavy-light decomposition\;
    Compute the factorization forests $\fforest_A$, $\fforest_B$, $\fforest_C$, and $\fforest_D$ for $A$, $B$, $C$, and $D$, respectively\;
    \tcp{in $O(|w|)$-time}
    \For{Heavy-path $\hpath$ of $\stree_w$}{
        $\cand_L \gets $\textsc{Left-$B$ Transition Enumeration}($\hpath, w, A, B, C, D$)\;
        $\cand_R \gets $\textsc{Right-$B$ Transition Enumeration}($\hpath, w, A, B, C, D$)\;
        \If{There is an \emph{accepting} transition in $\cand_L \cup \cand_R$}{
            \textbf{return} $\mathsf{yes}$\;
        }
    }
    \textbf{return} $\mathsf{no}$\;
}
\end{algorithm}

The remaining part of this paper concentrates on solving those problems in $O(|\stree[\pi]|\log |w|)$ time, where we define $\stree[\pi] := \stree[u]$ using the topmost vertex $u$ of $\pi$.
Using oracles for those problems as sub-routines, we solve the branching $ABCBD$ problem as in Algorithm~\ref{new:alg:overview_all}.
Our algorithm first computes the suffix tree $\stree_w$ and its heavy-light decomposition, and factorization forests $\fforest_A$, $\fforest_B$, $\fforest_C$, and $\fforest_D$ for further use.
Then, for each heavy-path $\pi$, we compute the union of all $\cand_{L}$ and $\cand_{R}$ using our oracles.
Then, the answer is $\mathsf{yes}$ if there is an accepting transition in at least one of such $\cand_{L}$'s or $\cand_{R}$'s.
The following lemma proves that, using our oracles, Algorithm~\ref{new:alg:overview_all} correctly solves the branching $ABCBD$ problem efficiently.
\begin{lemma}
Algorithm~\ref{new:alg:overview_all} solves the branching ABCBD problem for $w$ in $O(|w| \log^2 |w|)$-time.
\end{lemma}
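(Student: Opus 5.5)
The argument has two halves: correctness and running time. Both lean on the results already established: the equivalence between branching decompositions and accepting transitions in $\cand=\cand_L\cup\cand_R$ (Section~\ref{sec:two-enumeration-problems}), and the assumed $O(|\stree[\pi]|\log|w|)$-time oracles for \textsc{Left-$B$} and \textsc{Right-$B$ Transition Enumeration}.

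\emph{Correctness.} Suppose a branching decomposition $w=w_A\,w_B\,w_C\,w_B\,w_D$ exists, with the two occurrences of $w_B$ starting at positions $p<q$. Then $w_B=\nword{v}$ for the node $v$ whose path label is the longest common prefix of the suffixes starting at $p$ and $q$. Because the next letters after the two occurrences differ, the paths to these suffixes branch at $v$. Hence $v$ is a genuine (branching) node, and one of $p,q$ lies in $\heavy_v$ while the other lies in $\light_v$.

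Every node lies on exactly one heavy path $\pi$, so the iteration over $\pi$ reaches $v$. By the definitions of the two enumeration tasks, $\delta=\transm{C}(w_C)$ is output by the Left-$B$ oracle when $p\in\heavy_v$, and by the Right-$B$ oracle when $p\in\light_v$. Since $w_C\in\lang{C}$, $\delta$ is accepting, so the algorithm answers $\mathsf{yes}$.

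Conversely, suppose an accepting $\delta$ is returned for some $v\in\pi$. The witnesses $i,j$ give a decomposition with $w_B=\nword{v}\in\lang{B}$, $w_A\in\lang{A}$, $w_D\in\lang{D}$ and $w_C\in\lang{C}$. One witness lies in $\heavy_v$ and the other in $\light_v$, so the two suffixes continue into different subtrees of $v$. Distinct children of a suffix-tree node have edge labels with distinct first letters, so the letters following the two copies of $w_B$ differ. There is one subtlety: the suffix node $v$ itself may belong to $\light_v$, in which case there is no next letter. This is excluded because the occurrence is followed by the nonempty $w_D$, using $\epsilon\notin\lang{D}$, or by the second $w_B$, using $\epsilon\notin\lang{B}$. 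So the decomposition is branching.

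\emph{Running time.} The preprocessing costs are as follows:
\begin{itemize}
\item The suffix tree and its heavy-light decomposition take $O(|w|)$ time.
\item The four factorization forests take $O(|w|)$ time for fixed expressions, by the factorization forest lemma.
\item The accepting check on $\cand_L\cup\cand_R$ takes $O(|\transm{C}|)=O(1)$ time per heavy path.
\end{itemize}
The main cost is $\sum_{\pi}|\stree[\pi]|\log|w|$. The key step is the bound $\sum_\pi|\stree[\pi]|=O(|w|\log|w|)$. Each node $u$ is counted once for every heavy path whose top vertex is an ancestor of $u$, including $u$'s own path. Every heavy-path top other than the root is entered through a light edge. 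By the proposition on heavy-light decomposition, the root-to-$u$ path contains $O(\log|w|)$ light edges, so $u$ is counted $O(\log|w|)$ times. The suffix tree has $O(|w|)$ nodes, which gives the bound and a total running time of $O(|w|\log^2|w|)$.

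The only delicate point is the branching/heavy–light case analysis in the correctness argument, in particular the handling of internal suffix nodes. The complexity argument is the standard small-to-large count.
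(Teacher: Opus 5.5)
Your proof is correct and takes the same route as the paper. Correctness follows from the heavy/light characterization of $\mathcal{K}=\mathcal{K}_L\cup\mathcal{K}_R$, and the running time follows from $O(|w|)$ preprocessing plus $\sum_\pi|\stree[\pi]|=O(|w|\log|w|)$ via the light-edge count. Your correctness argument is more explicit than the paper's, which simply appeals to the preceding discussion; in particular, you spell out why the suffix node $v$ itself cannot be a witness in $\mathscr{L}_v$, using $\epsilon\notin\lang{B}$ and $\epsilon\notin\lang{D}$.
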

\begin{proof}
From above discussion, it is clear that Algorithm~\ref{new:alg:overview_all} correctly solves the branching ABCBD problem.
The remaining task is to bound the time complexity.

Computation of the suffix tree, heavy-light decomposition, and factorization forests takes $O(|w|)$-time. 
As we will show, \textsc{Left-$B$ Transition Enumeration} and \textsc{Right-$B$ Transition Enumeration} can be solved in $O(|\stree[\pi]|\log |w|)$-time.
Moreover, the sum of $|\stree[\pi]|$ over all heavy-paths $\pi$ is $O(|w| \log |w|)$, since the number of vertices in $\stree$ is $O(|w|)$ and each vertex of $\stree$ belongs to at most $O(\log |w|)$ different $\pi$'s.
Thus, the time complexity of Algorithm~\ref{new:alg:overview_all} is $O(|w|\log^2 |w|)$.
\end{proof}

For both two enumeration problems, we employ a common strategy to solve them: we process $v \in \hpath$ in bottom-up order, and for each $i \in \light_v$, we efficiently enumerate transitions corresponding to $j \in \heavy_v$ that satisfy the conditions using queries to certain data structures.
However, the data structures and algorithms for those two problems are different.
From the next section, we provide efficient algorithms for those two enumeration problems.

%% file: onerewb/new-right.tex

\section{Right-$B$ Transition Enumeration}

In this section, we solve \textsc{Right-$B$ Transition Enumeration}, while the description of the data structure used internally is postponed to Section~\ref{sec:data_structures_right}.

\label{section:solve:rightbenum}

\newcommand{\AddR}{\textsc{Add}_{\textrm{R}}}
\newcommand{\InitR}{\textsc{Initialize}_{\textrm{R}}}
\newcommand{\GetR}{\textsc{Get}_{\textrm{R}}}
\newcommand{\dfbox}[1]{%
  \tikz[baseline=(X.base)]\node[draw,dashed,inner sep=2pt](X){\strut #1};}
\newcommand{\bag}[1]{\lbag #1 \rbag}
\newcommand{\disjs}{\mathcal{B}}
\newcommand{\dsu}{\text{DSU}}
\newcommand{\members}{\text{members}}
\newcommand{\accum}{\textsc{Accum}}

\newcommand{\origboxw}[4]{%
  \node[draw,minimum width=#2,minimum height=0.8cm,inner sep=2pt,#3] (#1) {#4};}

\newcommand{\dashedboxw}[4]{%
  \node[draw,dashed,minimum width=#2,minimum height=0.8cm,inner sep=2pt,#3] (#1) {#4};}


\subsection{Overall Strategy}

For $v\in \pi$ and $i\in \light_v$, we define the set $\mathcal{R}(v,i)$ by
\[
\mathcal{R}(v, i) :=
    \set{ \evalc{ [i + m_v \btw j) } : j \in \heavy_v,\ i+m_v\leq j,\ w[j+m_v \btw |w|]\in \lang{D}}.
\]
Then, by definition, the desired output for \textsc{Right-$B$ Transition Enumeration} is
\[
    \bigcup_{\substack{v\in \pi\\ \nword{v}\in \lang{B}}}\ 
    \bigcup_{\substack{i\in \light_v\\ w[1 \btw i)\in \lang{A}}}\mathcal{R}(v,i).
\]
Our algorithm iterates over all $v\in \pi$ in bottom-up order, 
and then for each $v$ with $\nword{v}\in \lang{B}$, iterates over all $i\in \light_v$ with $w[1 \btw i)\in \lang{A}$ to compute $\mathcal{R}(v,i)$, and outputs their union.

Now, we focus on computing $\mathcal{R}(v,i)$.
The main obstacle here is the constraint $w[j + m_v \btw |w|] \in \lang{D}$.
To tackle this, we partition $\heavy_v$ into the following disjoint sets, which we call \emph{bags}, indexed by a transition $\trans$ of $D$:
\[
\bag{ \trans }_v := \set{ j\in \heavy_v : \evald{ w[j + m_v \btw |w|] } = \trans }.
\]
Then, $\heavy_v$ can be represented as
\[
\heavy_v = \bigcup_{\delta\in \transm{D}}\bag{ \trans }_v.
\]
Then, we can compute $\mathcal{R}(v, i)$ by computing
\[
    \set{ \evalc{ [i + m_v \btw j) } : j \in \bag{ \trans }_v,\ i+m_v\leq j},
\]
for each accepting transition $\trans$ of $D$ and taking their union.

\subsection{Disjoint-Set Union}

Since the partition $\heavy_v=\bigcup_{\delta \in \transm{D}}\bag{ \delta }_v$ depends on $v$, we should update it when we traverse upward along $\pi$.
Let $u$ be the parent of $v$.
The next task is to compute the partition $\heavy_u=\bigcup_{\delta \in \transm{D}}\bag{ \delta }_u$ from the partition $\heavy_v=\bigcup_{\delta \in \transm{D}}\bag{ \delta }_v$.

Let 
\[
    \tau := \evald{ \nword{v}(m_u \btw m_v] }~~(= \evald{ \nword{v}[m_u+1\,\btw\,m_v] }.
\]
Then, $\tau = \evald{ w[j + m_u \btw j + m_v) }$ holds for all $j\in \heavy_v\cup \light_v=\heavy_u$
because $\nword{v}[1\,\btw\,m_v]=w[j \btw j+m_v)$.
(Recall that we employ the $1$-origin indexing, $w = w[1] w[2] \ldots w[|w|]$.)

Thus, for each $\delta\in \transm{D}$, we have
\begin{align*}
    \bag{ \trans }_u 
    &= \set{ j\in \heavy_u : \evald{ w[j + m_u \btw |w|] } = \trans }\\
    &= \set{ j\in \heavy_v\cup \light_v : \evald{ w[j + m_u \btw j + m_v) }\comp \evald{ w[j + m_v \btw |w|] } = \trans }\\
    &= \set{ j\in \heavy_v\cup \light_v : \tau \comp \evald{ w[j + m_v \btw |w|] } = \trans }\\
    &= \bigcup_{
      \delta'\colon \tau \comp \delta'=\delta
      } \set{ j\in \heavy_v\cup \light_v : \evald{ w[j + m_v \btw |w|] } = \trans' }. \qquad (\natural)
\end{align*}
Now, the partition $\heavy_u=\bigcup_{\delta \in \transm{D}}\bag{ \delta }_u$ can be computed from $\heavy_v=\bigcup_{\delta \in \transm{D}}\bag{ \delta }_v$ as follows:
First, for each $i\in \light_v$, we add $i$ to the bag $\bag{\evald{w[i+m_v\btw |w|)]}}_v$.
Then, $\bag{\delta}_v$ is updated to the set
\[
    \set{ j\in \heavy_v\cup \light_v : \evald{ w[j + m_v \btw |w|] } = \trans }.
\]
Thus, the set $\bag{ \delta }_u$ is obtained by 
\[
    \bigcup_{
      \delta'\colon \tau \comp \delta'=\delta
      } \bag{\delta'}_v.
\]

Now, to compute $\bag{ \trans }_u$, it suffices to merge the bags $\bag{ \trans' }_v$.
We manage the partition by the following data structure defined as follows, which is similar to the disjoint-union structure in~\cite[Sec.~21]{Cormen:2022}.
\begin{itembox}[l]{Disjoint Set Union (DSU)}
\noindent{}\textbf{Data to Manage}: Disjoint sets $\{\disjs_{\lambda}\}_{x\in \Lambda}$ indexed by a label set $\Lambda$.

\noindent{}\textbf{Queries}: 
\begin{itemize}[align=left]
    \item[\Call{MakeBag}{$\lambda$}:] Register a new empty bag $\disjs_{\lambda}$ and add $\lambda$ to $\Lambda$.
    \item[\Call{Insert}{$\lambda, j$}:] Add an element $j$ to the bag $\disjs_{\lambda}$.
    \item[\Call{Merge}{$X$}:] Merge all bags $\disjs_{\lambda}$ with $\lambda \in X$ into a single bag $\disjs_{\lambda^*}$ with $\lambda^*\in X$ and return a pair $(\lambda^*, \mathit{moves})$, where $\mathit{moves}=\bigcup_{\lambda\in X\setminus \{\lambda^*\}}\disjs_{\lambda}$. All labels $\lambda$ in $X\setminus \{\lambda^*\}$ is removed from $\Lambda$ together with the corresponding $\disjs_{\lambda}$. 
\end{itemize}
\end{itembox}

If we adopt the small-to-large strategy in \Call{Merge}{$\cdot$}, it is easy to see that 
\begin{itemize}
    \item \Call{MakeBag}{$\cdot$} and \Call{Insert}{$\cdot$} queries run in $O(1)$ time, and
    \item \Call{Merge}{$\cdot$} query runs in amortized $O(\log s)$ time, where $s:=|\Lambda|+|\bigcup_{\lambda\in \Lambda}\disjs_{\lambda}|$.
\end{itemize}

\subsection{Necessary Queries}

The remaining task is to give an algorithm to compute 
\[
    \set{ \evalc{ [i + m_v \btw j) } : j \in \bag{ \trans }_v,\ i+m_v\leq j},
\]
for each accepting transition $\trans$.
To do this, we define the following data structure, $\rightenumds$.
\begin{itembox}[l]{\rightenumds}
\begin{description}
    \item[Depending Global Objects] \regex{} $C$, the entire string $w$, and factorization forest of $C$.
    \item[Shared Data] \mbox{}
    \begin{itemize}
      \item A set of indices $J$. 
      \item An index $\widehat{j}$.
    \end{itemize}
\end{description}
\begin{description}
    \item[method $\InitR(\widehat{j})$]: Set $\widehat{w}\leftarrow [\widehat{j},|w|]$ and $J\leftarrow \emptyset$.
    \item[method $\AddR(j)$]: $J\leftarrow J\cup \{j\}$.
    \item[method $\GetR(i)$]: Return the set $\set{ \evalc{ [i \btw j) } \: : \: j \in J,\ i \leq j }$.
\end{description}
\end{itembox}

We defer to Section~\ref{sec:data_structures_right} the construction of the data structure, which processes $\InitR(\widehat{j})$, $\AddR(j)$, and $\GetR(i)$ in $O(1)$ time.

\newcommand{\mapping}{\mathcal{M}}

\begin{algorithm}[p]
\caption{Algorithm Overview for \textsc{Right-$B$ Transition Enumeration}.}\label{alg:overview_right}
\Procedure{\emph{\Call{RightBTransitionEnumeration}{$\hpath$}}}{
    $m \gets |\widehat{w}|$\;
    $\Lambda \gets \emptyset$; \tcp{Set of labels for bags}
    $\textsc{Accum} \gets \emptyset$\;
    $\text{Initialize DSU}$\;
    \ForEach{$v \in \pi$ in bottom-up (leaf-to-root) order}{
        $\delta_{\text{shrink}} \gets \evald{ \widehat{w}(m_v \btw m] }$\;
        $m \gets m_v$\;
        {\SetAlgoNoLine\SetAlgoNoEnd
        \ForEach{$\lambda \in \Lambda$}{\label{line:right_updatedelta}
            $\mapping[\lambda] \gets \delta_{\text{shrink}} \comp \mapping[\lambda]$\;
        }}

        \ForEach{$\delta \in \transm{D}$}{\label{line:right_merge}
            $\Lambda_\trans \gets \set{ \lambda \in \Lambda : \mapping[\lambda] = \trans }$\;
            \If{$\Lambda_\trans \neq \emptyset$}{
                $\lambda^*, \mathit{moved} \gets \text{DSU.merge}(\Lambda_\trans)$\; \label{line:merge-bags}
               {\SetAlgoNoLine\SetAlgoNoEnd
                \ForEach{$j \in \mathit{moved}$}{
                    $\mathcal{S}[\lambda^*].\AddR(j)$\;
                }
                \ForEach{$\lambda \in \Lambda_\trans \setminus \set{ \lambda^* }$}{
                    Remove $\lambda$ from $\Lambda$\;
                }}
            }
        }

        \If{$\nword{v} \in \lang{B}$}{\label{line:right_get}
            \ForEach{$i \in \light_v$ with $w[1 \btw i) \in \lang{A}$}{
                \ForEach{$\lambda \in \Lambda$}{
                   {\SetAlgoNoLine\SetAlgoNoEnd
                    \If{$\mapping[\lambda]$ is an accepting transition of $D$}{\label{line:right_accepting_lambda}
                      $\textsc{Accum} \gets \textsc{Accum}\ \cup\ \mathcal{S}[\lambda].\GetR(i + m)$\;\label{line:right_get_add_ans}
                    }}
                }
            }
        }

        \ForEach{$i \in \light_v$}{\label{line:right_add}
            $\trans \gets \evald{ w[i+m \btw |w|) }$\;
            $\Lambda_\trans \gets \set{ \lambda \in \Lambda : \mapping[\lambda] = \trans }$\;
            \If{$\Lambda_\trans = \emptyset$}{\label{line:right_create_criterion}
                $\lambda \gets \text{choose} (\set{ 1, 2, \ldots, |\transm{D}|} \setminus \Lambda)$; \tcp{because $|\Lambda| < |\transm{D}|$ must hold}
                $\text{DSU}.\text{makeBag}(\lambda)$\;
                $\text{DSU}.\text{insert}(\lambda, i)$\;
                Add $\lambda$ to $\Lambda$\;  \label{line:right_add_to_X}
                $\mapping[\lambda] \gets \trans$\;
                $\mathcal{S}[\lambda].\InitR(i)$; \tcp{(Re)initialize our data structure in $\mathcal{S}[\lambda]$} \label{line:right_create_sk}
            }
            \Else{
              $\lambda \gets \text{take the unique element of $\Lambda_\trans$}$\;
              $\text{DSU}.\text{insert}(\lambda, i)$\;
            }
            $\mathcal{S}[\lambda].\AddR(i)$\;\label{line:right_add_to_sk}
        }
    }
    \textbf{return} $\accum$\;
}
\end{algorithm}


\subsection{Algorithm Description}

Our implementation of the algorithm for the \textsc{Right-$B$ Transition Enumeration} is given in Algorithm~\ref{alg:overview_right}.
Algorithm iterates over $v\in \pi$ in bottom-up order while maintaining 
\begin{itemize}
    \item an integer $m$, which is the length of current string $\Gamma(v)$,
    \item a DSU data structure and set $\Lambda$ of labels, which is initially empty,
    \item a mapping $\mapping$ from bag labels $\lambda$ to transitions $\trans$, where $\mapping[\lambda]=\delta$ represents that the bag $\disjs_{\lambda}$ in DSU data structure manages the bag $\bag{\delta}_v$, and
    \item \rightenumds{} data structure $\mathcal{S}[\lambda]$ for each $\lambda\in \Lambda$.
\end{itemize}
Our algorithm manages the partition using two data structures, DSU and \rightenumds{}.
More precisely, we manage
\begin{itemize}
    \item the partition $\bigcup_{\lambda\in \Lambda}\disjs_{\lambda}=\bigcup_{\lambda\in \Lambda}\bag{\mapping[\lambda]}_v$ itself using DSU, and
    \item for each $\lambda$, the bag $\bag{\mapping[\lambda]}_v$ using the \rightenumds{} data structure $\mathcal{S}[\lambda]$ to enables us to use $\AddR(j)$ and $\GetR(j)$ queries on $\bag{\mapping[\lambda]}_v$.
\end{itemize}


To simulate our equation $(\natural)$ for updating, we first update transitions related to bags in the first loop (line~\ref{line:right_updatedelta}).
Then, we merge the bags designated by $\text{DSU.merge}(\Lambda_\delta)$ in line~\ref{line:merge-bags}.
We also update $\mathcal{S}[\lambda^*]$ and $\Lambda$ to reflect moves by the merging.

In the block from line~\ref{line:right_get}, we compute the set of adequate transitions corresponding to the case $v \in \pi$ using $\GetR$.

Finally, in the loop defined in line~\ref{line:right_add}, we add each $i \in \light_v$ to an appropriate bag $\bag{ \trans }$ of a label $\lambda$ with $\trans = \evald{ w[i+m_v \btw |w|) }$.
If such a bag does not exist, the algorithm adds a new bag consisting only of $i$.

The following lemma explicitly shows our loop invariants, which are somewhat straightforward from above discussion.
We give a formal proof for the sake of completeness.



\begin{lemma}\label{lem:right_invariants}
After merging required by moving to the current $v$, at the beginning of line~\ref{line:right_get},
the following holds:
\begin{itemize}
    \item[\rm{(i)}] $\set{ \disjs_\lambda }_{\lambda \in \Lambda}$ is a partition of $\heavy_v$.
    %
    %
    \item[\rm{(ii)}] For each $\lambda \in \Lambda$ and $j \in \disjs_\lambda$, $\evald{ w[j+m_v \btw |w|) } = \mapping[\lambda]$.
    \item[\rm{(iii)}] For each $\delta\in \transm{D}$, there is at most one $\lambda\in \Lambda$ with $\mapping[\lambda]=\delta$.
\end{itemize}
\end{lemma}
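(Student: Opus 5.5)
We prove the three invariants jointly by induction over the nodes $v \in \hpath$ in the bottom-up order in which Algorithm~\ref{alg:overview_right} visits them. To set this up, we also carry a slightly stronger invariant at the \emph{end} of each iteration (after the loop at line~\ref{line:right_add}):
\begin{itemize}
\item $\set{\disjs_\lambda}_{\lambda\in\Lambda}$ partitions $\heavy_v \cup \light_v$;
\item every $j \in \disjs_\lambda$ satisfies $\evald{w[j+m_v \btw |w|)} = \mapping[\lambda]$;
\item the map $\lambda \mapsto \mapping[\lambda]$ is injective on $\Lambda$.
\end{itemize}
Note that $\heavy_v\cup\light_v = \mathscr{I}_v = \heavy_u$ for the parent $u$ of $v$ on $\hpath$. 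The base case is the bottommost node $v_0$ of $\hpath$, which is a leaf, so $\heavy_{v_0}=\emptyset$. At line~\ref{line:right_get} the family $\Lambda$ is still empty, so (i)--(iii) hold trivially.

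The inductive step passes from $v$ to its parent $u$ and has three parts.

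\textbf{The end-of-iteration invariant at $v$.} Assume (i)--(iii) hold for $v$ at line~\ref{line:right_get}. The query block does not modify the DSU, $\Lambda$, or $\mapping$. The loop at line~\ref{line:right_add} inserts each $i\in\light_v$ into the unique bag whose label $\lambda$ has $\mapping[\lambda]=\evald{w[i+m_v\btw|w|)}$, creating such a bag if none exists.
\begin{itemize}
\item Uniqueness of that bag follows from (iii).
\item Creating a new bag happens only when no existing label carries this transition, so injectivity is preserved.
\item Creating a bag is always possible: by injectivity, $|\Lambda|<|\transm{D}|$ whenever some transition is missing, so a free label exists.
\item Since $\light_v\cap\heavy_v=\emptyset$, the family remains a partition, now of $\heavy_v\cup\light_v=\heavy_u$.
\end{itemize}
This establishes the stronger invariant at the end of the iteration for $v$.

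\textbf{The remapping at $u$.} At the start of the iteration for $u$, the algorithm computes $\delta_{\text{shrink}}=\evald{\widehat{w}(m_u\btw m_v]}$, where $m=m_v$ is the previous value. We must check that this equals the $\tau$ of the derivation $(\natural)$, i.e.\ that $\widehat{w}(m_u \btw m_v] = \nword{v}(m_u\btw m_v]$. This holds because $\widehat{w}$ is a suffix of $w$ whose prefix of length $m_v$ spells $\nword{v}$; it is taken at a leaf of $\hpath$ lying below every node of the path. Then every $j\in\heavy_u$ satisfies $w[j+m_u\btw j+m_v)=\nword{v}(m_u\btw m_v]$, and by the homomorphism property of $\Delta_D$,
\[
\evald{w[j+m_u\btw|w|)} = \tau\comp\evald{w[j+m_v\btw|w|)}.
\]
So after line~\ref{line:right_updatedelta}, invariant (ii) holds with $m_u$ in place of $m_v$, and (i) holds with $\heavy_u$. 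Injectivity, however, may now fail: distinct $\delta'$ with the same $\tau\comp\delta'$ collide.

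\textbf{The merge loop.} The loop at line~\ref{line:right_merge} restores (iii). For each $\delta$, it merges all labels in $\Lambda_\delta$ into a single label $\lambda^*$ and deletes the others from $\Lambda$; this is exactly the union $\bigcup_{\tau\comp\delta'=\delta}\bag{\delta'}_v$ of $(\natural)$.
\begin{itemize}
\item Merging disjoint bags keeps the partition, so (i) holds.
\item All members of a merged bag share the value $\delta=\mapping[\lambda^*]$, so (ii) holds.
\item Afterwards each $\delta$ has at most one label, so (iii) holds.
\end{itemize}
We expect the main obstacle to be the bookkeeping in the remapping step, where the transitions stored for existing bags change. Two points need care there:
\begin{itemize}
\item identifying $\delta_{\text{shrink}}$ with the correct factor of $\nword{v}$, including the choice and definition of $\widehat{w}$;
\item checking that the remapping commutes with the merge, so that the bags after line~\ref{line:right_merge} are precisely the sets $\bag{\delta}_u$ rather than merely some coarsening of them.
\end{itemize}
The remaining steps are routine.
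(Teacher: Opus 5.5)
Your proof is correct and follows essentially the same route as the paper's: induction along $\hpath$, showing that the insertion loop over $\light_v$ turns the bags into a partition of $\heavy_u=\heavy_v\cup\light_v$, that the $\delta_{\text{shrink}}$ update re-establishes (ii) for $m_u$ via the homomorphism property, and that the merge loop restores (iii). Your extra justifications (uniqueness and availability of labels during insertion, and identifying $\widehat{w}$ as the suffix at the leaf of $\hpath$) make explicit what the paper leaves implicit. Your concern about obtaining exactly the sets $\set{j\in\heavy_u : \evald{w[j+m_u\btw|w|)}=\delta}$ rather than a coarsening needs no separate argument, since it follows from (i)--(iii) together.
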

\begin{proof}
We prove by induction.

\noindent{}\textbf{Base Case:}
When $v$ is the bottom-most vertex of $\pi$, the conditions are satisfied because $\heavy_v = \emptyset$ and $\Lambda = \emptyset$.

\noindent{}\textbf{Induction Step:}
As induction step, consider the step that we move-up to $u$ from $v$ and check the conditions on $u$.
By induction hypothesis, at the beginning of line~\ref{line:right_get}, those three conditions hold for $v$.
The loop from line~\ref{line:right_add} for $v$ adds $\set{ i \in \light_v : \evald{ w[i+m_v \btw |w|] }= \mapping[\lambda] }$ to $\disjs_{\lambda}$ for each $\lambda$.
Therefore, after the entire iteration for $v$ (or equivalently, the beginning of the iteration for $u$),
the following holds for every $\lambda \in \Lambda$:
\[
\begin{array}{lcl}
    \disjs_\lambda
    & = & \set{ j \in \heavy_v : \evald{ w[j+m_v \btw |w|] } = \mapping[\lambda] } \cup \set{ i \in \light_v : \evald{ w[i+m_v \btw |w|] } = \mapping[\lambda] } \\
    & =& \set{ j \in \heavy_u : \evald{ w[j+m_v \btw |w| ] } = \mapping[\lambda] },
\end{array}
\]
since $u{-}v$ is the heavy edge of $u$ and thus $\heavy_u = \heavy_v \cup \light_v$.
It makes the condition~(i) satisfied.

In the iteration for $u$, after processing the loop from line~\ref{line:right_updatedelta},
the following holds for every $x \in X$ and $j \in \disjs_\lambda$:
\[
\begin{array}{lcl}
  \mapping[\lambda]
    & = & \evald{ \widehat{w}(m_u \btw m_v] } \comp \evald{ w[j+m_v \btw |w|] } \\
    & = & \evald{ w[j+m_u \btw j+m_v) } \comp \evald{ w[j+m_v \btw |w|] } \\
    & = & \evald{ w[j+m_u \btw |w|] },
\end{array}
\]
where the second equality holds from $\widehat{w}[1, m_v] = w[j, j+m_v)$ for every $j \in \heavy_u$.
It makes the condition~(ii) satisfied while keeping the condition~(i).

Furthermore, for each $\trans \in \transm{C}$, the loop from line~\ref{line:right_merge} merges all the bags labeled by $\Lambda_\trans$.
Since this makes condition~(iii) satisfied while keeping the conditions~(i)~and~(ii),
all the conditions hold for $u$ at the beginning of the loop defined in line~\ref{line:right_get}.
\end{proof}

The following lemma proves the correctness of the algorithm.
\begin{lemma}
Algorithm~\ref{alg:overview_right} correctly solves \textsc{Right-$B$ Transition Enumeration}.
More precisely, it returns the following set of transitions $\trans$ of $\transm{C}$:
\[
\left\{ \trans = \evalc{ [i+m_v \btw j) } :
\begin{array}{l}
v \in \hpath,\ i \in \light_v,\ j \in \heavy_v,\ i+m_v \leq j, \\
w[0 \btw i) \in \lang{A}, \\
w[i \btw i+m_v) = w[j \btw j+m_v) = \nword{v} \in \lang{B}, \\
w[j+m_v \btw |w|] \in \lang{D}
\end{array}
\right\}.
\]
\end{lemma}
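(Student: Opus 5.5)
We prove both inclusions between the returned set $\accum$ and the target set, using the invariants of Lemma~\ref{lem:right_invariants} together with the specification of \rightenumds{}. The key is that at the moment line~\ref{line:right_get} is reached for a node $v$, the bags $\set{\disjs_\lambda}_{\lambda\in\Lambda}$ partition $\heavy_v$, each bag is labelled by its exact $D$-transition $\mapping[\lambda]$, and $\mathcal{S}[\lambda]$ stores exactly $\disjs_\lambda$. We need this third fact as an auxiliary invariant alongside (i)--(iii) of Lemma~\ref{lem:right_invariants}.

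First we verify this auxiliary invariant: the index set $J$ of $\mathcal{S}[\lambda]$ equals $\disjs_\lambda$ at line~\ref{line:right_get}. We argue by the same induction as in Lemma~\ref{lem:right_invariants}. A bag is created in line~\ref{line:right_create_criterion} together with $\InitR(i)$ and $\AddR(i)$. Every later \Call{Insert}{} into $\disjs_\lambda$ is paired with $\AddR$ in line~\ref{line:right_add_to_sk}. A merge into $\lambda^*$ feeds all moved elements to $\mathcal{S}[\lambda^*].\AddR$. Hence $J=\disjs_\lambda$ is preserved. One subtlety needs care. $\InitR(\widehat{j})$ fixes the domain $[\widehat j,|w|]$, and for the surviving label $\lambda^*$ of a merge the moved elements may have been inserted earlier in other bags. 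We must check that the data structure's domain still covers every stored $j$ and every query point $i+m_v$. Because $\GetR$ only ranges over $j\ge i$, we expect restricting to a suffix domain to be harmless. This is the bookkeeping step I expect to be the main obstacle, since it depends on how Section~\ref{sec:data_structures_right} interprets $\widehat{j}$. I would first try to show that the domain can be taken as the whole string, or that every element and query lies inside it.

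Soundness ($\subseteq$). Every element added to $\accum$ arises in line~\ref{line:right_get_add_ans} for some $v\in\pi$ with $\nword{v}\in\lang{B}$, some $i\in\light_v$ with $w[1\btw i)\in\lang{A}$, and some $\lambda$ with $\mapping[\lambda]$ accepting for $D$. By the specification of $\GetR$, the element equals $\evalc{[i+m_v\btw j)}$ for some $j\in\disjs_\lambda$ with $i+m_v\le j$. By invariant (i), $j\in\heavy_v$. By invariant (ii), $\evald{w[j+m_v\btw|w|]}=\mapping[\lambda]$ is accepting, so $w[j+m_v\btw|w|]\in\lang{D}$. Finally, $w[i\btw i+m_v)=w[j\btw j+m_v)=\nword{v}$ holds because both $i$ and $j$ lie in $\mathscr{I}_v$. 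So the element belongs to the target set.

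Completeness ($\supseteq$). Take a witness $(v,i,j)$ of the target set. Since $v\in\pi$ and $\nword{v}\in\lang{B}$, the algorithm enters the block at line~\ref{line:right_get} for $v$. It then iterates over this $i$, because $i\in\light_v$ and $w[1\btw i)\in\lang{A}$. By invariant (i), $j$ belongs to a unique bag $\disjs_\lambda$ with $\lambda\in\Lambda$. By invariant (ii), $\mapping[\lambda]=\evald{w[j+m_v\btw|w|]}$, which is accepting since the witness satisfies $w[j+m_v\btw|w|]\in\lang{D}$. The auxiliary invariant gives $j\in J$ for $\mathcal{S}[\lambda]$. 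Since $i+m_v\le j$, the call $\GetR(i+m_v)$ returns $\evalc{[i+m_v\btw j)}$, which is therefore added to $\accum$. Invariant (iii) is not needed for correctness; it only bounds $|\Lambda|\le|\transm{D}|$ for the running time.
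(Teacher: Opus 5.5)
Your proof is correct and follows the paper's argument: soundness comes from the $\GetR$ specification together with invariants (i) and (ii) of Lemma~\ref{lem:right_invariants}, and completeness comes from placing $j$ in its unique bag via invariant (i). Your auxiliary invariant that $\mathcal{S}[\lambda]$ stores exactly $\disjs_\lambda$ is used silently by the paper and is worth stating, and the domain concern you flag is vacuous. The specification of $\GetR$ never refers to $\widehat{w}$, and Algorithm~\ref{alg:ds_right} runs over the factorization forest of all of $w$, so simply take the whole string as the domain. Your alternative, that all stored indices and queries lie in $[\widehat{j},|w|]$, need not hold, since insertions and merges can add indices smaller than $\widehat{j}$ and queries $i+m_v$ can lie to its left.
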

\begin{proof}
From the definition of \Call{Get$_{\mathrm{R}}$}{$\cdot$}, line~\ref{line:right_get_add_ans}
adds transitions $\trans \in \transm{C}$ to $\accum$ if and only if there is a $j \in \disjs_{\lambda}$ with $i+m_v\leq j$ and $\evalc{ [i+m_v \btw j) } = \trans$.
This index $j$ satisfies $w[j \btw j+m_v)=w[i \btw i+m_v)=\nword{v}$ because $j \in \heavy_v$.
It also satisfies $w[j+m_v \btw |w|) \in \lang{M_D}$ because of the condition (ii) of Lemma~\ref{lem:right_invariants}, $j \in \disjs_{\lambda}$, and the fact that $\mapping[\lambda]$ is an accepting transition of $D$ (line~\ref{line:right_accepting_lambda}).
Thus, all transitions the algorithm adds to $\accum$ satisfy the condition of this lemma.

Conversely, if a vertex $v$ and indices $i$ and $j$ satisfy the condition of the lemma,
the algorithm adds $\trans=\evalc{ [i+m_v \btw j) }$ to $\accum$ at line~\ref{line:right_get_add_ans} at the loop corresponding to $v$, $i$, and $\lambda$,
where $\lambda \in \Lambda$ is the unique bag label such that $j \in \disjs_{\lambda}$, which exists because of condition~(i) of Lemma~\ref{lem:right_invariants}.
\end{proof}

\paragraph{Time Complexity Analysis.}

Now we analyze the time complexity.
First, we claim the following, which in particular implies that operations iterating over all $\lambda \in \Lambda$ cost only a constant factor to the time complexity.
\begin{lemma}
At any moment of the Algorithm~\ref{alg:overview_right}, we have $|\Lambda| \leq |\transm{D}|$.
\end{lemma}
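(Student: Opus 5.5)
We will prove that $|\Lambda| \leq |\transm{D}|$ is an invariant, by induction over the execution of Algorithm~\ref{alg:overview_right}. The key claim to carry along is a slight strengthening: at every moment, the map $\lambda \mapsto \mapping[\lambda]$ restricted to $\Lambda$ is ``almost'' injective. We check it at a few checkpoints of each iteration and confirm that no step in between adds labels beyond the bound. Initially $\Lambda = \emptyset$, so the claim holds trivially.

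We locate the only places where $\Lambda$ can grow. Line~\ref{line:right_updatedelta} only rewrites values $\mapping[\lambda]$ and leaves $\Lambda$ unchanged. The merge loop starting at line~\ref{line:right_merge} only removes labels. The query block starting at line~\ref{line:right_get} does not touch $\Lambda$ at all. Hence the only insertion is at line~\ref{line:right_add_to_X}, inside the loop starting at line~\ref{line:right_add}.

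The invariant we maintain is: from the end of the merge loop until the end of the iteration, distinct labels in $\Lambda$ carry distinct values $\mapping[\lambda]$. Right after the merge loop this is exactly condition~(iii) of Lemma~\ref{lem:right_invariants}. In the loop starting at line~\ref{line:right_add}, a new label $\lambda$ is created only when $\Lambda_\trans = \emptyset$ (line~\ref{line:right_create_criterion}), that is, only when no existing label has value $\trans$. The new label is then assigned $\mapping[\lambda] \gets \trans$, so injectivity is preserved. Otherwise $i$ is inserted into the unique existing bag, and again injectivity is preserved.

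Consequently, throughout this phase $\lambda\mapsto\mapping[\lambda]$ is an injection from $\Lambda$ into $\transm{D}$, which gives $|\Lambda|\le|\transm{D}|$. This also justifies the comment at the choice step: before an insertion the value $\trans$ is missing from the image, so $|\Lambda| < |\transm{D}|$. Therefore the set $\set{1,\dots,|\transm{D}|}\setminus\Lambda$ is nonempty, provided labels are always drawn from $\set{1,\dots,|\transm{D}|}$, which holds by construction.

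At the start of the next iteration, line~\ref{line:right_updatedelta} may make the map non-injective, but it does not change $|\Lambda|$. The subsequent merges only shrink $\Lambda$ and restore injectivity. So the bound $|\Lambda|\le|\transm{D}|$ holds at every moment.

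The only delicate point is exactly this loss of injectivity between line~\ref{line:right_updatedelta} and the merges. We handle it by stating the invariant as a bound on the cardinality $|\Lambda|$ rather than as injectivity at every moment.
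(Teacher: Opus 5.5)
Your proposal is correct and follows the paper's proof. In both, $\Lambda$ grows only at line~\ref{line:right_add_to_X}, and condition~(iii) of Lemma~\ref{lem:right_invariants} together with the creation criterion at line~\ref{line:right_create_criterion} makes $\lambda\mapsto\mapping[\lambda]$ injective on $\Lambda$. Your remark that the transient non-injectivity after line~\ref{line:right_updatedelta} leaves $|\Lambda|$ unchanged just makes explicit what the paper leaves implicit.
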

\begin{proof}
$|\Lambda|$ only increases at line~\ref{line:right_add_to_X}.
From condition~(iii) of Lemma~\ref{lem:right_invariants} and the criterion of line~\ref{line:right_create_criterion},
after adding a bag label to $\Lambda$, the value (transition) $\mapping[\lambda]$ differs for all different labels $\lambda$.
Thus, we have $|\Lambda| \leq |\transm{D}|$.
\end{proof}

On the assumption that the \textsc{Right-Enumeration} data structure answers each query in constant time, the following holds.

\begin{lemma}
The time complexity of Algorithm~\ref{alg:overview_right} is bounded by $O(N \log N)$
where $N = |\stree[\hpath]|$. (Recall that $\stree[\hpath]:=\stree[u]$ for the topmost vertex of $\pi$.)
\end{lemma}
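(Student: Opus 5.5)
We bound the cost of Algorithm~\ref{alg:overview_right} line by line, charging each piece of work either to a node $v\in\pi$ or to an index $i\in\light_v$, and using two facts: $|\Lambda|\le|\transm{D}|=O(1)$ by the previous lemma, and each \rightenumds{} operation takes $O(1)$ time (deferred to Section~\ref{sec:data_structures_right}). First we record $\sum_{v\in\pi}|\light_v|\le N$. The light subtrees $\stree[u_L]$ hanging off distinct nodes of $\pi$ are disjoint. Each $\light_v$ consists of suffix indices in such a light subtree, possibly together with $v$ itself. All of these lie in $\stree[\pi]$. Also $|\pi|\le N$.

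\emph{Per-node work.} For each $v$ we compute $\delta_{\text{shrink}}=\evald{\widehat{w}(m_v\btw m]}$ by one range query on $\fforest_D$ (Lemma~\ref{lemma:constant-eval-by-ff}), which is $O(1)$ for fixed $D$. We update $\mapping[\lambda]$ for all $\lambda\in\Lambda$, which is $O(|\Lambda|)$. We loop over $\delta\in\transm{D}$ and form $\Lambda_\delta$, which is $O(|\transm{D}|\cdot|\Lambda|)$. We test $\nword{v}\in\lang{B}$ with $\fforest_B$. All of this is $O(1)$ per node, so $O(N)$ over the path, excluding the DSU merges and the $\AddR$ calls on moved elements.

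\emph{Per-light-index work.} For each $i\in\light_v$ we test $w[1\btw i)\in\lang{A}$, and for $O(|\Lambda|)$ labels we issue a $\GetR$ and union a subset of $\transm{C}$ into $\accum$, which is $O(|\transm{C}|)=O(1)$. In the insertion loop we compute $\evald{w[i+m\btw|w|]}$, scan $\Lambda$, and perform a constant number of MakeBag, Insert, $\InitR$ and $\AddR$ calls. Each is $O(1)$, so the total is $O(\sum_v|\light_v|)=O(N)$.

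\emph{Merges (the main obstacle).} The cost of \Call{Merge}{} and of re-adding each element of $\mathit{moved}$ via $\AddR$ must be amortized, because a single merge can move many elements. We use the small-to-large rule: $\lambda^*$ is the bag of largest size, so $\mathit{moved}$ consists only of elements of the smaller bags. Whenever an element moves, the size of its bag at least doubles. Every element ever inserted is an index of $\heavy_{v}\cup\light_v\subseteq\mathscr{I}_{\text{top}(\pi)}$, so there are at most $N$ of them and each bag has size at most $N$. Hence each element moves at most $\log_2 N$ times.

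This bounds the total size of all $\mathit{moved}$ sets by $O(N\log N)$. Each moved element costs $O(1)$ in the DSU and $O(1)$ for its $\AddR$. The number of Merge calls is $O(|\pi|\cdot|\transm{D}|)=O(N)$, and each call has $O(1)$ overhead beyond the moved elements, since it merges only $|\Lambda|=O(1)$ bags. One point needs checking: when a label is later reused and $\InitR$ is called, the old structure must be discarded in $O(1)$, with no residual cost. This follows because a label is freed only after its contents have been moved into $\lambda^*$.

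Summing the three parts gives $O(N)+O(N)+O(N\log N)=O(N\log N)$.
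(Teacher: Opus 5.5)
Your proposal is correct and follows essentially the paper's proof. Both bound the $\AddR$ calls on moved elements by $O(N\log N)$ via small-to-large merging, the $\GetR$ calls by $|\transm{D}|\cdot\sum_{v\in\pi}|\light_v|=O(N)$, and the insertion-loop $\AddR$ calls by $\sum_{v\in\pi}|\light_v|\le N$; your extra accounting of per-node constant work, merge overhead, and label reuse is a harmless refinement of that argument.
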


\begin{proof}

We bound the total number of queries on \rightenumds{} data structure the algorithm calls, which are the bottleneck of our algorithm.

\noindent{}\textbf{Loop of line~\ref{line:right_merge}.}
Since we call $\AddR(j)$ for $j \in \mathit{moved}$ obtained by $\dsu.\text{merge}$,
the total number of $\AddR$ called by this loop is bound by the total number of moving throughout merging bags.
Such number is clearly bound by $O(N \log N)$
due to the two facts that
1) the total number of elements managed by $\dsu$ is bounded by $N$;
2) each element is moved at most $\log N$ times due to the small-to-large merging.

\noindent{}\textbf{Loop of line~\ref{line:right_get}.}
We call $\GetR$ queries at most $|\Lambda|\leq |\transm{D}|$ times for each $i \in \light_v$.
Thus, the number of $\GetR$ queries in the algorithm is $|\transm{D}|$ times the sum of $|\light_v|$ for all $v \in \hpath$, which is $|\transm{D}| \cdot N \leq O(N)$.

\noindent{}\textbf{Loop of line~\ref{line:right_add}.}
The total number of $\AddR$ called by this loop is bounded by $\sum_{v\in \pi} |\light_v|=N$.

These arguments show that the time complexity of Algorithm~\ref{alg:overview_right} is bounded by $O(N \log N)$.
\end{proof}

%% file: onerewb/new-right-ds.tex

\newcommand{\maxidx}{\mathrm{maxidx}}

\section{Data Structure for Right-$B$ Transition Enumeration}
\label{sec:data_structures_right}

In this section, we give an implementation of the \rightenumds{} data structure that processes $\InitR$, $\AddR$, and $\GetR$ in constant time.
For the sake of readability, we here restate the function of \rightenumds{}.

\begin{itembox}[l]{\rightenumds}
\begin{description}
    \item[Depending Global Objects] \regex{} $C$, the entire string $w$, and factorization forest of $C$.
    \item[Shared Data] \mbox{}
    \begin{itemize}
      \item A set of indices $J$. 
      \item An index $\widehat{j}$.
    \end{itemize}
\end{description}
\begin{description}
    \item[method $\InitR(\widehat{j})$]: Set $\widehat{w}\leftarrow [\widehat{j},|w|]$ and $J\leftarrow \emptyset$.
    \item[method $\AddR(j)$]: $J\leftarrow J\cup \{j\}$.
    \item[method $\GetR(i)$]: Return the set $\set{ \evalc{ [i \btw j) } \: : \: j \in J,\ i \leq j }$.
\end{description}
\end{itembox}



For each node $\fna$ of the factorization forest, we denote the corresponding interval by $[\ell_v,r_v)$.
Moreover, if $f$ is an idempotent node, we denote the corresponding transition by $e_f$.

\subsection{Our Ideas for $\rightenumds$} \label{section:idea-of-rightenumerator}


\paragraph{Our Basic Idea.}

Our data structure utilizes the factorization forest $\fforest_{C, w}$ (or simply $\fforest$)
on the transition monoid $\transm{C}$ and the string $w$, and maintains a set $J$ of indices of $w$.
Our main strategy towards processing queries in constant time is to maintain the following set of transitions for each node $\fna \in \fforest$:
\[
\mathcal{D}_\fna := \set{ \evalc{ [\ell_\fna\btw j) } : j \in J,\ j \in [\ell_\fna, r_\fna) }.
\]
Consider answering $\GetR(i)$ query using $\mathcal{D}_\fna$.
Let $X_i$ be the set of nodes such that the (disjoint) union of intervals $[\ell_\fna, r_\fna)$ over all $\fna \in X_i$ is $[i, |w|]$: i.e., $[i, |w|] = \bigcup_{\fna \in X_i} [\ell_\fna, r_\fna)$.

Then, the set of transitions that $\GetR(i)$ should return can be written as follows:
\[
\everymath={\displaystyle}
\begin{array}{lcl}
    \set{ \evalc{ [i \btw j) } : j \in J,\ i \leq j }
     & = & \bigcup_{\fna \in X_i} \set{ \evalc{ [i \btw j) } : j \in J, j \in [\ell_\fna, r_\fna) } \\
     & = & \bigcup_{\fna \in X_i} \set{ \evalc{ [i \btw \ell_\fna) } \comp \evalc{ [\ell_\fna \btw j) } : j \in J, j \in [\ell_\fna, r_\fna) } \\
     & = & \bigcup_{\fna \in X_i} \set{ \evalc{ [i \btw \ell_\fna) } \comp \trans : \trans \in \mathcal{D}_\fna } = \bigcup_{\fna \in X_i} \evalc{ [i \btw \ell_\fna) } \comp \mathcal{D}_{\fna}
\end{array}
\]
which can be computed just by iterating each $\fna \in X_i$.

By the identity above, the target set can be obtained simply by iterating over the elements of $X_i$.
Hence, we need to have a small family $X_i$ that exactly covers the suffix interval $[i \btw |w|]$.
Using the famous \emph{segment tree} instead of the factorization forest yields such family $X_i$ with $|X_i|=O(\log |w|)$, which is from the fact that the segment tree on a $|w|$-length array has a height $O(\log |w|)$.
This yields the desired \rightenumds{} data structure that processes each query in $O(\log |w|)$ time, which is suitable for our purpose if we allow an additional logarithmic factor on the time complexity of the whole algorithm.

%
For further acceleration, we employ the factorization forest, which has a constant height.
The issue here is that, since some of the idempotent nodes may have an unbounded number of children, the size of $X_i$ is unbounded.
For example, consider the following factorization forest:
\begin{center}
\input{onerewb/ffzu1.tex}
\end{center}
In this case, $X_i$ should have size $|w| - i + 1=O(|w|)$ because we cannot use any of the non-leaf nodes to exactly cover $[i \btw |w|]$.
Thus, we cannot directly iterate over all $\fna \in X_i$ for the query $\GetR(i)$.

\paragraph{Using Idempotency.}
The idea here is to process simultaneously almost all $\fna \in X_i$ that are children of the same idempotent node. 
Let $\fnb$ be an idempotent node with $ i\in [\ell_\fnb, r_\fnb)$,
and let $\fna_1, \ldots, \fna_p, \dots, \fna_q$
be the children of $\fnb$
in that order, where $X_i$ contains $\fna_p, \dots, \fna_q$~\footnote{Since $X_i$ covers $[i \btw |w|]$, if $\fna_p \in X_i$, then $\fna_{p'} \in X_i$ for every $p' > p$.}.
Let $e_g$ be the idempotent transition corresponding to the node $g$; i.e., $\evalc{[\ell_{\fna_t},r_{\fna_t})}=e_g$ for all $t\in \{1,\dots, q\}$.
We simultaneously process $f_{p+1},\dots, f_{q}\in X_i$ using the following observation.
\begin{lemma}\label{lem:right_ds_simul}
We have
\[
    \bigcup_{t=p+1}^{q} \evalc{ [i \btw \ell_{\fna_t} ) } \comp\mathcal{D}_{\fna_t} = \evalc{ [i \btw \ell_{\fna_{p}}) } \comp \bigcup_{t=p+1}^{q} \left(e_\fnb \comp \mathcal{D}_{\fna_t}\right).
\]
\end{lemma}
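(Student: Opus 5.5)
The plan is to prove the identity term by term, using only that $\Delta_C$ is a monoid homomorphism and that $e_\fnb$ is idempotent, and then take the union over $t$.

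\textbf{Step 1: decompose the prefix transition.} Fix $t$ with $p+1 \le t \le q$. Since the children $\fna_p, \fna_{p+1}, \ldots, \fna_q$ of $\fnb$ cover consecutive intervals, we have $r_{\fna_s} = \ell_{\fna_{s+1}}$. Hence the interval $[i, \ell_{\fna_t})$ splits as
\[
[i,\ell_{\fna_t}) = [i,\ell_{\fna_p}) \cup [\ell_{\fna_p}, r_{\fna_p}) \cup \cdots \cup [\ell_{\fna_{t-1}}, r_{\fna_{t-1}}).
\]
Here $i \le \ell_{\fna_p}$ because $\fna_p \in X_i$. Because $\Delta_C$ is a homomorphism,
\[
\evalc{[i \btw \ell_{\fna_t})} = \evalc{[i \btw \ell_{\fna_p})} \comp e_\fnb^{\,t-p}.
\]
This uses the fact that every child of the idempotent node $\fnb$ evaluates to $e_\fnb$.

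\textbf{Step 2: collapse the power.} Since $t-p \ge 1$ and $e_\fnb \comp e_\fnb = e_\fnb$, we have $e_\fnb^{\,t-p} = e_\fnb$. Therefore
\[
\evalc{[i \btw \ell_{\fna_t})} \comp \mathcal{D}_{\fna_t} = \evalc{[i \btw \ell_{\fna_p})} \comp e_\fnb \comp \mathcal{D}_{\fna_t}.
\]
This step uses associativity of $\comp$ extended to sets, $m \comp (m' \comp X) = (m \comp m') \comp X$, which is immediate from the elementwise definition.

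\textbf{Step 3: take the union.} Left multiplication by a fixed element distributes over unions of sets: $m \comp \bigcup_t Y_t = \bigcup_t (m \comp Y_t)$. Taking the union of the Step 2 identity over $t = p+1, \ldots, q$ therefore gives the claim.

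\textbf{Main obstacle.} The only delicate point is the index bookkeeping in Step 1. Two facts must be justified from the definition of a factorization forest. First, the children of $\fnb$ are ordered and cover contiguous, adjacent intervals. Second, $i$ lies at or before $\ell_{\fna_p}$, which holds because $\fna_p \in X_i$ and the intervals in $X_i$ exactly cover $[i, |w|]$.

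This bookkeeping also explains why the identity starts from $t = p+1$ rather than $t = p$. For $t = p$ the power would be $e_\fnb^0 = 1$, not $e_\fnb$, so idempotency cannot be invoked and that term must be handled separately.
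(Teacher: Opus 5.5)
Your proposal is correct and follows the paper's proof essentially step for step. You split $[i,\ell_{\fna_t})$ into $[i,\ell_{\fna_p})$ followed by the children $\fna_p,\dots,\fna_{t-1}$ (using $r_{\fna_{s}}=\ell_{\fna_{s+1}}$), use the homomorphism property so each child contributes $e_\fnb$, collapse $e_\fnb^{\,t-p}$ to $e_\fnb$ by idempotency since $t-p\ge 1$, and factor $\evalc{[i \btw \ell_{\fna_p})}$ out of the union; your explicit justification of $i\le \ell_{\fna_p}$ from $\fna_p\in X_i$ is a small point the paper leaves implicit.
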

\begin{proof}
We have the following equation:
\begin{align*}
    & \bigcup_{t=p+1}^{q} \evalc{ [i \btw \ell_{\fna_t} ) } \comp\mathcal{D}_{\fna_t} \\
    = & \bigcup_{t=p+1}^{q} \evalc{ [i \btw r_{\fna_{t-1}}) } \comp \mathcal{D}_{\fna_t} \qquad (\because r_{\fna_{t-1}} = \ell_{\fna_t}) \\
    = & \bigcup_{t=p+1}^{q} \bigl(\evalc{ [i \btw \ell_{\fna_{p}}) } \comp \evalc{ [\ell_{\fna_{p}} \btw r_{\fna_{p}}) } \comp \cdots \comp \evalc{ [\ell_{\fna_{t-1}} \btw r_{\fna_{t-1}}) }\ \bigr) \comp \mathcal{D}_{\fna_t} \\
    & \qquad (\because w[i \btw \ell_{\fna_p}) w[\ell_{\fna_p} \btw r_{\fna_p}) \cdots w[\ell_{\fna_{t-1}} \btw r_{\fna_{t-1}}) = w[i \btw r_{\fna_{t-1}})) \\[0pt]
    = & \bigcup_{t=p+1}^{q} \evalc{ [i \btw \ell_{\fna_{p}}) } \comp \bigl(\ \underbrace{e_\fnb \comp \cdots \comp e_\fnb}_{\text{$t-p$}}\ \bigr) \comp \mathcal{D}_{\fna_t} \\
    = & \left\{ \evalc{ [i \btw \ell_{\fna_{p}}) } \comp (e_\fnb \comp \Delta) \colon \Delta\in \bigcup_{t=p+1}^{q}\mathcal{D}_{\fna_t}\right\} \\
    = & \evalc{ [i \btw \ell_{\fna_{p}}) } \comp \bigcup_{t=p+1}^{q} \left(e_\fnb \comp \mathcal{D}_{\fna_t}\right).\qedhere
\end{align*}
\end{proof}
The remaining task is to compute the set $\displaystyle\bigcup_{t=p+1}^{q} \left(e_\fnb \comp \mathcal{D}_{\fna_t}\right)$ efficiently.
To do so, we define the following $\maxidx$:
\[
\begin{array}{lcl}
\maxidx_{\fnb}[\delta]
 & := & \max \set{ t : 1 \leq t \leq q, \trans \in e_\fnb \comp \mathcal{D}_{\fna_t} }.
 \end{array}
\]
where we define the max of the empty set is $-\infty$: i.e., $\max\emptyset = -\infty$.
Then, we have the following.
\begin{lemma}\label{lemma:node-transition-exchange}
We have
\[
\bigcup_{t=p+1}^{q} \left(e_{\fnb} \comp \mathcal{D}_{\fna_t}\right) =
  \set{\trans : \trans \in \transm{C},\ \maxidx_\fnb[\delta] > p}.
\]
\end{lemma}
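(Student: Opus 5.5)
The plan is to prove the two inclusions directly from the definition of $\maxidx_\fnb$, using only that the index range in that definition matches the range of the union. Throughout, $\fna_1,\dots,\fna_q$ are the children of the idempotent node $\fnb$ as in Lemma~\ref{lem:right_ds_simul}. The sets $e_\fnb\comp\mathcal{D}_{\fna_t}$ are subsets of $\transm{C}$, so both sides of the claimed identity are subsets of $\transm{C}$. For each fixed $\trans\in\transm{C}$ I will therefore show that $\trans$ lies in the left-hand side if and only if $\maxidx_\fnb[\trans]>p$.

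($\subseteq$) Suppose $\trans \in e_\fnb\comp\mathcal{D}_{\fna_t}$ for some $t$ with $p+1\le t\le q$. Then $t$ belongs to the set $\set{ t' : 1\le t'\le q,\ \trans\in e_\fnb\comp\mathcal{D}_{\fna_{t'}} }$, so this set is nonempty and its maximum is at least $t\ge p+1>p$. Hence $\maxidx_\fnb[\trans]>p$.

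($\supseteq$) Suppose $\maxidx_\fnb[\trans]>p$. Since $p\ge 1$, the value is not $-\infty$, so the defining set is nonempty. Let $t^*:=\maxidx_\fnb[\trans]$, which is attained. Then $1\le t^*\le q$ and $\trans\in e_\fnb\comp\mathcal{D}_{\fna_{t^*}}$. Combined with $t^*>p$, this gives $p+1\le t^*\le q$, so $\trans$ is in the union $\bigcup_{t=p+1}^{q} e_\fnb\comp\mathcal{D}_{\fna_t}$.

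The only point needing care, and the one I expect to be the main (though mild) obstacle, is the upper end of the range. The definition of $\maxidx_\fnb$ ranges over $t\le q$, the same $q$ as in the union. This is consistent because $X_i$ covers the whole suffix $[i\btw|w|]$, so its children of $\fnb$ run through the last child $\fna_q$; in other words, $q$ is the number of children of $\fnb$, so $\maxidx_\fnb$ is a quantity attached to $\fnb$ alone. A maximizer beyond $q$ therefore cannot occur, and the convention $\max\emptyset=-\infty$ handles the case where $\trans$ lies in no $e_\fnb\comp\mathcal{D}_{\fna_t}$. With these two observations, the two inclusions give the lemma.
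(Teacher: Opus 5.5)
Your proof is correct and follows essentially the same route as the paper: both inclusions come directly from the definition of $\maxidx_\fnb$. In one direction a witness $t\ge p+1$ forces $\maxidx_\fnb[\trans]>p$; in the other the attained maximizer $t^*=\maxidx_\fnb[\trans]$ places $\trans$ in $e_\fnb\comp\mathcal{D}_{\fna_{t^*}}$ with $p+1\le t^*\le q$. Your remarks on the range $t\le q$ and the convention $\max\emptyset=-\infty$ are accurate clarifications of details the paper leaves implicit.
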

\begin{proof}
($\subseteq$):
Assume $\delta\in \displaystyle\bigcup_{t=p+1}^{q} \left(e_{\fnb} \comp \mathcal{D}_{\fna_t}\right)$ and let $t$ be an index with $\delta\in e_{\fnb} \comp \mathcal{D}_{\fna_t}$.
Then, we have $\maxidx_\fnb[\delta]\geq t>p$.

($\supseteq$):
Assume $t:=\maxidx_\fnb[\delta]>p$.
Then, $\delta\in e_{\fnb} \comp \mathcal{D}_{\fna_t}\subseteq \displaystyle\bigcup_{t=p+1}^{q} \left(e_{\fnb} \comp \mathcal{D}_{\fna_t}\right)$.
\end{proof}

Combining Lemmas~\ref{lem:right_ds_simul}~and~\ref{lemma:node-transition-exchange} immediately yields the following.
\begin{lemma}\label{lemma:maxidx-is-crucial}
We have
\[
\bigcup_{t=p+1}^{q} \evalc{ [i \btw \ell_{\fna_t} ) }\comp \mathcal{D}_{\fna_t}
 =
\set{ \evalc{ [i \btw \ell_{\fna_p}) } \comp \trans : \trans \in \transm{C}, \maxidx_{\fnb}[\delta] > p }.
\]
\end{lemma}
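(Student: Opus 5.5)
The plan is to derive this by direct substitution, chaining Lemma~\ref{lem:right_ds_simul} and Lemma~\ref{lemma:node-transition-exchange}. No new combinatorial argument should be needed; the content is in correctly pushing the left factor $\evalc{[i \btw \ell_{\fna_p})}$ through the set notation.

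First, apply Lemma~\ref{lem:right_ds_simul}. It rewrites the left-hand side as
\[
\evalc{[i \btw \ell_{\fna_p})} \comp \bigcup_{t=p+1}^{q}\bigl(e_\fnb \comp \mathcal{D}_{\fna_t}\bigr).
\]
Second, apply Lemma~\ref{lemma:node-transition-exchange}. It replaces the inner union by the set $\set{\trans \in \transm{C} : \maxidx_\fnb[\trans] > p}$.

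Third, unfold the extended monoid operation $m \comp X := \set{m \comp x : x \in X}$ defined in the preliminaries. With $m = \evalc{[i \btw \ell_{\fna_p})}$, this gives exactly
\[
\set{\evalc{[i \btw \ell_{\fna_p})} \comp \trans : \trans \in \transm{C},\ \maxidx_\fnb[\trans] > p},
\]
which is the claimed right-hand side.

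The only point to check, and hence the main potential obstacle, is a consistency issue in the definition of $\maxidx_\fnb$. It is defined as a maximum over $t \le q$, where $q$ is the last child of $\fnb$ in $X_i$. Lemma~\ref{lemma:node-transition-exchange} was proved under exactly this convention, so the substitution goes through unchanged. I would also remark that, since $X_i$ covers a suffix interval, $\fna_q$ is the last child of $\fnb$, so $\maxidx_\fnb$ does not actually depend on $i$.
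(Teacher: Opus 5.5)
Your proposal is correct and is exactly the paper's argument: the paper obtains this lemma by directly combining Lemma~\ref{lem:right_ds_simul} with Lemma~\ref{lemma:node-transition-exchange} and unfolding the elementwise product $m \comp X$. Your side remark that $\fna_q$ is the last child of $\fnb$ also agrees with the paper's convention that $q$ is the number of children of $\fnb$, as stated in Lemma~\ref{lem:ds_right_idem}.
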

Using $\maxidx$, the right hand side of the formula on Lemma~\ref{lemma:maxidx-is-crucial} can be computed just by iterating over all $\delta \in \transm{C}$. Thus, we can compute the desired left hand side in constant time.

\subsection{Implementation of \rightenumds}

\noindent{}\textbf{Remark.}
To simplify our pseudocodes, we use \emph{class} or \emph{module} features supported by many programming languages.
We employ the object-oriented (Java) style; see Sections~1.1~and~1.2 of~\cite{Sedgewick:2011}.
Particularly, in this section, we write
\begin{itemize}
    \item $S.\mathrm{add}(x)$ for the operation of adding the element $x$ to the set $S$,
    \item $f.\mathrm{parent}$ to represent the parent of the node $f$,
    \item $g.\mathrm{numOfChildren}$ to represent the number of children of the node $g$,
    \item $g.\mathrm{childOf}(p)$ to represent the $p$-th child (from the left) of the node $g$ ($1$-origin), and
    \item $g.\mathrm{childIdx}(f)$ to represent, for two nodes $f$ and $g$ with $g=f.\mathrm{parent}$, an integer $p$ such that $f$ is the $p$-th child of $g$ (from the left).
\end{itemize}
%


\paragraph{The Implementation.}
Our data structure is given in Algorithm~\ref{alg:ds_right},
which manages the sets $\mathcal{D}_\fna$ for each node $\fna$ and the value $\maxidx_\fnb[\trans]$ for each idempotent node $\fnb$ and $\trans \in \transm{C}$.

The constructor $\InitR$ initializes $\mathcal{D}_\fna$ and $\maxidx_\fnb[\trans]$, which seems to consume $O(|w|)$ time.
This issue can be bypassed as follows. Instead of initializing all nodes at once at the beginning, we create and initialize each node only when the data structure first accesses it. 
Then, \Call{Initialize$_\mathrm{R}$}{$\cdot$} can be implemented to work in constant time without affecting the time complexity of other procedures.

\begin{algorithm}[t!]
\caption{Implementation of the \textsc{Right-Enumeration} Data Structure.}\label{alg:ds_right}
\Method{\emph{\Call{Initialize$_\mathrm{R}$}{$E_C, \mathcal{F}$}}}{
    \KwIn{A \regex{} $C$ and its factorization forest $\fforest$, both are given by references or pointers}
    $\mathcal{D}_\fna \gets \emptyset$ for each node $\fna$ of $\fforest$\;
    $\maxidx_{\fnb}[\trans] \gets -\infty$ for each idempotent node $\fnb$ of $\fforest$ and $\trans \in \transm{C}$\;
}
\Method{\emph{\Call{Add$_\mathrm{R}$}{$j$}}}{
    \KwIn{$j\in [|w|]$}
    $\fna \gets \text{the leaf node of $\fforest$ corresponding to the index } j$\;
    \While{$\fna \text{ is not root}$}{\label{line:ds_right_while}
        $\fnb \gets \fna\text{.parent}$\;    
        $\trans \gets \evalc{ [\ell_\fna\btw j ) }$\;
        $\mathcal{D}_\fna.\text{add}(\trans)$\; \label{line:ds_right_add_d}
        \If{$\fnb$ is an idempotent node}{
            $p \gets \fnb.\text{childIdx}(\fna)$ \tcp*{ $\fna$ is $p$-th child of $\fnb$ }
            $\maxidx_{\fnb}[e_\fnb \comp \trans] \gets \max(p, \maxidx_{\fnb}[e_\fnb \comp \trans])$\; \label{line:ds_right_updatemax}
        }
        $\fna \gets \fnb$ \tcp*{ move up the factorization forest }
    }
}
\Method{\emph{\Call{Get$_\mathrm{R}$}{$i$}}}{
    \KwIn{$i\in [|w|]$}
    $\fna \gets \text{the leaf node corresponding to the index } i$\;
    $\accum \gets \mathcal{D}_{\fna}$\; \label{line:ds_right_lengthzero}
    \While{$\fna \text{ is not root}$}{\label{line:ds_right_get_while}
        $\fnb \gets \fna\text{.parent}$\;
        $p \gets \fnb.\text{childIdx}(\fna)$ \tcp*{$\fna$ is the $p$-th child of $\fnb$}
        \If(\tcp*[f]{$\fna$ is not the rightmost child of $\fnb$}){$p < \fnb.\mathrm{numOfChildren}$}{ 
            $\fna^+ \gets \fnb.\text{childOf}(p+1)$ \tcp*{$\fna^+$ is the immediate right sibling of $\fna$}
            \ForEach{$\delta \in \mathcal{D}_{\fna^+}$}{
                $\accum.\text{add}( \evalc{ [i \btw \ell_{\fna^+}) } \comp \trans )$\;\label{line:ds_right_addnext}
            }
            \If{$\fnb$ is an idempotent node}{
                \ForEach{$\trans \in \transm{C}$}{
                    \If{$p+1 < \maxidx_{\fnb}[\trans]$}{
                        $\accum.\text{add}( \evalc{ [i \btw \ell_{\fna^+}) } \comp \trans)$\;\label{line:ds_right_addidem}
                    }
                }
            }
        }
        $\fna \gets \fnb$ \tcp*{ move up the factorization forest }
    }
    \textbf{return} $\accum$\;
}
\end{algorithm}

Now we prove the correctness of our implementation.
The following lemma ensures that the procedure $\AddR$ correctly updates $\mathcal{D}_{\fna}$.
\begin{lemma}\label{lem:ds_right_d_lem}
Let $j \in [|w|]$.
Then, the query $\AddR(j)$ adds $\evalc{ [\ell_\fna \btw j)}$ to $\mathcal{D}_{\fna}$ for all nodes $\fna$ with $j \in [\ell_\fna, r_\fna)$ (and only for those nodes).
\end{lemma}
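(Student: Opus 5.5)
We will prove Lemma~\ref{lem:ds_right_d_lem} by tracing the while loop of $\AddR(j)$ starting at line~\ref{line:ds_right_while}. The key structural fact about the factorization forest is that the nodes $\fna$ with $j \in [\ell_\fna, r_\fna)$ are exactly the ancestors of the leaf $\ell_j$ corresponding to index $j$, including $\ell_j$ itself. This holds because each node covers a contiguous interval, the children of a node partition its interval, and the leaves are ordered so that reading them left to right yields $w$. Consequently, a node's interval contains $j$ if and only if the node lies on the path from $\ell_j$ to the root. We would state and check this first: the intervals of the children of $\fnb$ partition $[\ell_\fnb, r_\fnb)$. Induction from the root downward then shows that the unique chain of nodes containing $j$ is precisely the leaf-to-root path.

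Next we show that the loop visits exactly this path. The variable $\fna$ starts at the leaf $\ell_j$ and is replaced by its parent in each iteration. Hence the set of nodes assigned to $\fna$ is exactly the leaf-to-root path. In each iteration, line~\ref{line:ds_right_add_d} inserts $\evalc{[\ell_\fna \btw j)}$ into $\mathcal{D}_\fna$, and no other line modifies any $\mathcal{D}$. In particular, the $\maxidx$ update does not touch the $\mathcal{D}$ sets. Since $\fna$ only ever takes values on the path, no node outside the path receives an insertion, which gives the ``only for those nodes'' part of the claim.

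The one subtle point, which we expect to be the only real obstacle, is the root. As written, the while condition stops before executing the body at the root, so the root's $\mathcal{D}$ would not receive its element. We would handle this by one of two routes. The first is to check that $\mathcal{D}_{\mathrm{root}}$ is never read by $\GetR$, which only accesses $\mathcal{D}_{\fna^+}$ for siblings and $\mathcal{D}_\fna$ for the initial leaf. In that case the lemma is read over non-root nodes, or the root is treated as a sentinel. The second is to note that the loop is intended to process the root as well, so we adjust the condition to ``while $\fna \neq \bot$'' and guard the parent-dependent lines accordingly.

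Finally, we would remark that the value added is well defined, since $\ell_\fna \le j$, and that it can be computed via Lemma~\ref{lemma:constant-eval-by-ff}. With that, the lemma follows.
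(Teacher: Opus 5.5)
Your proposal is correct and takes the same route as the paper's proof, which simply notes that the loop at line~\ref{line:ds_right_while} walks the leaf-to-root path (exactly the nodes whose interval contains $j$) and that line~\ref{line:ds_right_add_d} inserts $\evalc{[\ell_\fna \btw j)}$ into $\mathcal{D}_\fna$ at each visited node. Your remark about the root is a fair catch that the paper glosses over (the guard ``$\fna$ is not root'' skips the root's body), and of your two fixes, changing the loop condition is the cleaner one, since the claim that $\GetR$ never reads $\mathcal{D}_{\mathrm{root}}$ fails in the degenerate case where the forest is a single leaf.
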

\begin{proof}
The loop defined in line~\ref{line:ds_right_while} iterates over all nodes $\fna$ with $j \in [\ell_\fna, r_\fna)$.
For each of such $\fna$, line~\ref{line:ds_right_add_d} adds the transition $\evalc{ [\ell_\fna \btw j) }$ to $\mathcal{D}_\fna$.
\end{proof}

Thus, we have the following, which is immediate from Lemma~\ref{lem:ds_right_d_lem}.
\begin{lemma}\label{lem:ds_right_d}
Let $J$ be a set of indices.
Assume we have invoked $\AddR(j)$ once for each $j \in J$, in any order.
Then, for every node $\fna$,
$\mathcal{D}_{\fna} = \set{ \evalc{ [\ell_\fna \btw j) } : j \in J, j\in [\ell_\fna, r_\fna) }$.
\end{lemma}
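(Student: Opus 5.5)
The statement follows from Lemma~\ref{lem:ds_right_d_lem} by a union argument over the calls, together with a check that $\mathcal{D}_\fna$ is only ever modified by $\AddR$. The plan is as follows.

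First, note where $\mathcal{D}_\fna$ changes in Algorithm~\ref{alg:ds_right}. It is set to $\emptyset$ by $\InitR$, including the lazily created nodes, which start out empty. After that it is modified only at line~\ref{line:ds_right_add_d} inside $\AddR$, and every modification is an insertion. $\GetR$ only reads $\mathcal{D}_{\fna^+}$ and writes into the local set $\accum$, so it changes no $\mathcal{D}_\fna$.

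One point needs care: line~\ref{line:ds_right_lengthzero} in $\GetR$ should be read as copying $\mathcal{D}_\fna$, not aliasing it. This has to be checked explicitly, since otherwise later $\accum.\mathrm{add}$ calls would corrupt the leaf set. With the copy reading, for every node $\fna$ the final value of $\mathcal{D}_\fna$ is the union of the elements inserted into it by the individual calls $\AddR(j)$, $j\in J$.

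Second, apply Lemma~\ref{lem:ds_right_d_lem} to each call $\AddR(j)$. That call inserts exactly $\evalc{[\ell_\fna\btw j)}$ into $\mathcal{D}_\fna$ for every node with $j\in[\ell_\fna,r_\fna)$, and inserts nothing into any other node. Taking the union over $j\in J$ gives
\[
\mathcal{D}_\fna=\set{\evalc{[\ell_\fna\btw j)} : j\in J,\ j\in[\ell_\fna,r_\fna)}.
\]

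Third, handle the ordering. Set insertion is commutative and idempotent, so the union does not depend on the order of the calls. The inserted value $\evalc{[\ell_\fna\btw j)}$ depends only on $\fna$ and $j$, not on the current state of the structure, so order cannot affect the values either. One more subtlety: the loop in $\AddR$ stops before processing the root. If the root must also be covered, this is harmless because $\GetR$ never reads $\mathcal{D}$ of the root, so the claim is needed only for non-root nodes.
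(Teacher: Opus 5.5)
Your proposal is correct and follows the paper's route: the paper calls this lemma immediate from Lemma~\ref{lem:ds_right_d_lem}, which is exactly your union-over-calls step, and it leaves implicit the initialization by $\textsc{Initialize}_{\mathrm{R}}$, the copy reading of $\accum$ in $\GetR$, and order-independence. Your root remark is also right: the loop in $\AddR$ never writes the root's set, so the identity (and Lemma~\ref{lem:ds_right_d_lem} itself) holds only for non-root nodes, which is all Lemmas~\ref{lem:ds_right_idem} and~\ref{lem:ds_right_get} use, except in the degenerate single-leaf forest ($|w|=1$), where $\GetR$ does read the root's set at line~\ref{line:ds_right_lengthzero}.
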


We now ensure that the method $\AddR$ correctly updates $\maxidx_{\fnb}[\trans]$ by the following lemma.
%

\begin{lemma}\label{lem:ds_right_idempotent_lem_2}
Let $j\in [|w|]$.
Then, the query $\AddR(j)$ updates $\maxidx_{\fnb}[\trans]$ by $\max(p,\maxidx_{\fnb}[\trans])$ for all idempotent node $\fnb$ with $j\in [\ell_{\fnb},r_{\fnb})$, where 
\begin{itemize}
    \item $\trans=e_{\fnb}\comp \evalc{ [\ell_\fna\btw j) }$, where $\fna$ is the unique child of $\fnb$ with $j\in [\ell_{\fna},r_{\fna})$, and
    \item $p$ is the integer such that $\fna$ is the $p$-th child of $\fnb$,
\end{itemize}
and does not update other idempotent nodes.
%
\end{lemma}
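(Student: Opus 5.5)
The plan is to trace the execution of $\AddR(j)$ in Algorithm~\ref{alg:ds_right} directly, mirroring the argument for Lemma~\ref{lem:ds_right_d_lem}. The one structural fact needed is the following. Starting from the leaf of $j$ and repeatedly moving to the parent, the while loop at line~\ref{line:ds_right_while} visits exactly the non-root nodes $\fna$ with $j \in [\ell_\fna, r_\fna)$. Hence the values taken by $\fnb = \fna.\mathrm{parent}$ are exactly the non-leaf nodes whose interval contains $j$, each visited exactly once. This holds because in a factorization forest the intervals of the children of a node partition the interval of that node. So the ancestors of the leaf of $j$ are precisely the nodes whose interval contains $j$, and they form a single root-to-leaf chain.

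Next, fix an idempotent node $\fnb$ with $j\in[\ell_\fnb,r_\fnb)$. Since the children of $\fnb$ partition $[\ell_\fnb,r_\fnb)$, there is a unique child $\fna$ with $j\in[\ell_\fna,r_\fna)$, and $\fna$ is exactly the node held in the variable $\fna$ at the iteration where $\fnb$ is the parent. In that iteration:
\begin{itemize}
\item the algorithm computes $\trans=\evalc{[\ell_\fna\btw j)}$;
\item it sets $p=\fnb.\mathrm{childIdx}(\fna)$, which is the index with $\fna$ the $p$-th child of $\fnb$;
\item at line~\ref{line:ds_right_updatemax} it executes $\maxidx_\fnb[e_\fnb\comp\trans]\gets\max(p,\maxidx_\fnb[e_\fnb\comp\trans])$.
\end{itemize}
This is precisely the claimed update.

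It remains to check that no other update happens. The array $\maxidx$ is modified only at line~\ref{line:ds_right_updatemax}, and only for the current parent $\fnb$ when it is idempotent. Each ancestor is visited once, so each relevant $\maxidx_\fnb$ receives exactly one update. Idempotent nodes whose interval does not contain $j$ never appear as a parent, so they are untouched.

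There is no real obstacle here. The only care needed is to state explicitly the partition property of children's intervals. That property gives both the uniqueness of the child $\fna$ and the identification of the visited nodes with the nodes whose interval contains $j$.
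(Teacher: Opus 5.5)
Your proposal is correct and is essentially the paper's argument. You trace the while loop of $\AddR(j)$, note that the visited parents $\fnb$ are exactly the nodes whose interval contains $j$ (each visited once), and read off that $\maxidx_\fnb[e_\fnb\comp\evalc{[\ell_\fna\btw j)}]$ is set to the stated maximum only when $\fnb$ is idempotent. The only difference is that you state explicitly the fact that children's intervals partition the parent's interval, which the paper's one-line proof uses implicitly.
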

\begin{proof}
The loop of line~\ref{line:ds_right_while} iterates over all $(g,f)$ satisfying the condition of the lemma, and line~\ref{line:ds_right_updatemax} is processed when and only when $g$ is idempotent, $f$ is the $p$-th child of $g$, and $e_g\comp \delta=e_g\comp \evalc{ [\ell_\fna\btw j) }$.
\end{proof}

Now, we have the following.
\begin{lemma}\label{lem:ds_right_idem}
Let $J$ be a set of indices.
Assume we have invoked $\AddR(j)$ once for each $j \in J$, in any order.
Then, for each idempotent node $\fnb$ and $\trans \in \transm{C}$, we have
\[
\maxidx_{\fnb}[\trans] = \max\set{ t : 1 \leq t \leq q, \trans \in e_\fnb \comp \mathcal{D}_{\fna_t} }.
\]
where $q$ is the number of children of $\fnb$ and $\fna_t$ is the $t$-th child of $\fnb$.
\end{lemma}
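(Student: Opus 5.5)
The plan is to combine Lemma~\ref{lem:ds_right_idempotent_lem_2} with Lemma~\ref{lem:ds_right_d}, arguing by induction over the sequence of $\AddR$ calls. Initially $\maxidx_{\fnb}[\trans] = -\infty$ and every $\mathcal{D}_{\fna}$ is empty, so both sides equal $\max\emptyset = -\infty$. The induction hypothesis is that after processing a subset $J' \subseteq J$, we have $\maxidx_{\fnb}[\trans] = \max\set{ t : \trans \in e_\fnb \comp \mathcal{D}^{J'}_{\fna_t} }$, where $\mathcal{D}^{J'}_{\fna}$ denotes $\set{ \evalc{[\ell_\fna \btw j)} : j \in J', j \in [\ell_\fna, r_\fna) }$. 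By Lemma~\ref{lem:ds_right_d}, this is exactly the value the data structure stores for $\mathcal{D}_{\fna}$ after those insertions.

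For the step, we add a new index $j$ and fix an idempotent node $\fnb$. We split into two cases.

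\emph{Case $j \notin [\ell_\fnb, r_\fnb)$.} By Lemma~\ref{lem:ds_right_idempotent_lem_2}, no entry of $\maxidx_\fnb$ changes. By Lemma~\ref{lem:ds_right_d_lem}, no child set $\mathcal{D}_{\fna_t}$ changes either, since the children's intervals are contained in $[\ell_\fnb, r_\fnb)$. So both sides are unchanged.

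\emph{Case $j \in [\ell_\fnb, r_\fnb)$.} Let $\fna_p$ be the unique child of $\fnb$ whose interval contains $j$. Then $\mathcal{D}_{\fna_p}$ gains exactly $\trans_0 := \evalc{[\ell_{\fna_p} \btw j)}$, and every other $\mathcal{D}_{\fna_t}$ with $t \neq p$ is unchanged, because the children's intervals are disjoint. Hence the right-hand side changes only at the single entry $\trans = e_\fnb \comp \trans_0$. It may also already contain $p$ at that entry, which is harmless. For that entry the new value of the maximum is $\max(p, \text{old value})$, and this is exactly the update Lemma~\ref{lem:ds_right_idempotent_lem_2} asserts. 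All other entries are left untouched by the algorithm, matching the unchanged right-hand side.

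Since the update is a $\max$, the final value does not depend on the order of insertions, consistent with ``in any order''. The only point needing care is the second case. It relies on two facts: the children of $\fnb$ partition $[\ell_\fnb, r_\fnb)$, and the transition recorded by the algorithm is computed relative to $\ell_\fna$ with $\fna = \fna_p$. Both follow directly from the definition of the factorization forest and from the algorithm, so I expect no real obstacle.
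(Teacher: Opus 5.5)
Your proof is correct and takes essentially the same approach as the paper's: both combine Lemma~\ref{lem:ds_right_idempotent_lem_2} (each $\AddR$ call performs a single $\max$-update at the index of the child containing $j$) with Lemma~\ref{lem:ds_right_d} to identify the recorded transitions with $e_\fnb\comp\mathcal{D}_{\fna_t}$. The paper says directly that the final value is the maximum over all recorded pairs $(t,j)$, while you phrase the same observation as an induction over the insertions.
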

\begin{proof}
From Lemma~\ref{lem:ds_right_idempotent_lem_2},
$\maxidx_{\fnb}[\trans]$ is equal to the maximum among integers $t$ such that there is an index $j\in J$ with
\begin{itemize}
    \item $j\in [\ell_{\fna_t}, r_{\fna_t})\subseteq [\ell_{\fnb}, r_{\fnb})$, where $\fna_t$ is the $t$-th child of $g$, and
    \item $\delta = e_g\comp \evalc{[\ell_{\fna_t}\btw j)}$.
\end{itemize}
%
From Lemma~\ref{lem:ds_right_d}, we have
\begin{align*}
    \set{e_{\fnb}\comp  \evalc{ [\ell_{\fna_t} \btw j) } : j \in J\cap [\ell_{\fna_t}, r_{\fna_t}) }
    =e_{\fnb}\comp \mathcal{D}_{\fna_t}.
\end{align*}
%
Thus, we have
\begin{align*}
    \maxidx_{\fnb}[\trans] 
    &= \max\set{ t : 1 \leq t \leq q,\ \exists j\in J\cap [\ell_{\fna_t},r_{\fna_t}),\ \trans = e_\fnb \comp \evalc{ [\ell_{\fna_t} \btw j) } }\\
    &= \max\set{ t : 1 \leq t \leq q, \trans \in e_\fnb \comp \mathcal{D}_{\fna_t} }.\qedhere
\end{align*}
\end{proof}

The following lemma proves the correctness of the implementation of the method $\GetR$ in Algorithm~\ref{alg:overview_right}.
\begin{lemma}\label{lem:ds_right_get}
Let $i \in [|w|]$. Then, $\GetR(i)$ returns the set $\set{ \evalc{ [i \btw j) } : j \in J,\ i \leq j }$.
\end{lemma}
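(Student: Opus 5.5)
We prove the claim by describing the nodes visited by the loop of $\GetR(i)$ and comparing them with a canonical cover $X_i$ of $[i\btw|w|]$. Let $\fna_0,\fna_1,\dots,\fna_h$ be the path from the leaf $\fna_0$ of index $i$ up to the root. For each step $\fna_{s-1}\to\fna_s=\fnb$ with $\fna_{s-1}$ the $p$-th child of $\fnb$, let $\fna^{(s)}_{p+1},\dots,\fna^{(s)}_{q_s}$ be the children of $\fnb$ strictly to the right of $\fna_{s-1}$. Then $[i\btw|w|]$ is the disjoint union of the singleton $[i,i+1)$ (the leaf $\fna_0$) and the intervals of all these right siblings, over all $s$. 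This holds because each right sibling lies to the right of $i$ but inside $[\ell_{\fnb},r_{\fnb})$, and the siblings at successive levels tile $[i+1, r_{\fna_h})=[i+1,|w|+1)$.

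So we take $X_i:=\{\fna_0\}\cup\{\fna^{(s)}_t\}$. By the identity derived in Section~\ref{section:idea-of-rightenumerator},
\[
\set{\evalc{[i\btw j)} : j\in J,\ i\le j}=\bigcup_{\fna\in X_i}\evalc{[i\btw\ell_{\fna})}\comp\mathcal{D}_{\fna}.
\]
Here each $\mathcal{D}_{\fna}$ has the intended value by Lemma~\ref{lem:ds_right_d}. For the leaf term, $\ell_{\fna_0}=i$ and $\evalc{[i\btw i)}$ is the identity, so the leaf contributes exactly $\mathcal{D}_{\fna_0}$. This is line~\ref{line:ds_right_lengthzero}.

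Next we check, level by level, that the loop body adds exactly the level-$s$ contribution $\bigcup_{t=p+1}^{q_s}\evalc{[i\btw\ell_{\fna_t})}\comp\mathcal{D}_{\fna_t}$.

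\emph{Binary or rightmost case.} If $\fnb$ is a binary node, or $\fna_{s-1}$ is its rightmost child, there is at most one right sibling $\fna^+=\fna_{p+1}$. Line~\ref{line:ds_right_addnext} then adds exactly $\evalc{[i\btw\ell_{\fna^+})}\comp\mathcal{D}_{\fna^+}$, and nothing is added when there is no sibling.

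\emph{Idempotent case.} If $\fnb$ is idempotent, line~\ref{line:ds_right_addnext} handles $t=p+1$. For $t=p+2,\dots,q_s$ we apply Lemma~\ref{lemma:maxidx-is-crucial} with the index shifted by one, i.e.\ with $p+1$ in place of $p$:
\[
\bigcup_{t=p+2}^{q_s}\evalc{[i\btw\ell_{\fna_t})}\comp\mathcal{D}_{\fna_t}=\set{\evalc{[i\btw\ell_{\fna_{p+1}})}\comp\trans : \maxidx_{\fnb}[\trans]>p+1}.
\]
Since $\ell_{\fna_{p+1}}=\ell_{\fna^+}$, this is exactly what line~\ref{line:ds_right_addidem} adds. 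Lemma~\ref{lem:ds_right_idem} guarantees that $\maxidx_{\fnb}$ equals the quantity required by Lemma~\ref{lemma:node-transition-exchange}.

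One point needs care here. Lemma~\ref{lem:right_ds_simul} assumes $i\in[\ell_{\fnb},r_{\fnb})$ and uses $\fna_p$ as the anchor. Our shifted use treats $\fna_{p+1}$ as the anchor, so the sibling containing $i$ is handled separately rather than absorbed into the idempotent product. The shifted argument is still valid, since the derivation only uses that the children $\fna_{p+1},\dots,\fna_{t-1}$ each evaluate to $e_{\fnb}$.

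The strict inequality $p+1<\maxidx$ must also match the lemma's $\maxidx>p'$ with $p'=p+1$, and it does. Finally, $\maxidx_{\fnb}[\trans]$ records the maximum child index $t$ over all $t\le q_s$, including $t\le p+1$. This is harmless: Lemma~\ref{lemma:node-transition-exchange} only needs that some $t>p+1$ exists whenever the maximum exceeds $p+1$, and that holds.

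The main obstacle is this index bookkeeping in the idempotent case: verifying the shifted form of Lemmas~\ref{lem:right_ds_simul}--\ref{lemma:maxidx-is-crucial} and that the leaf, the immediate sibling and the idempotent remainder together cover $X_i$ without gaps or overlaps. Once that is settled, summing over $s$ shows that $\accum$ equals the union over $X_i$, which proves the lemma.
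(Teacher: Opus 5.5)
Your proof is correct and follows essentially the paper's route. You split the output of $\GetR(i)$ the same way: the leaf term, the immediate-right-sibling terms, and the idempotent remainder obtained from Lemma~\ref{lemma:maxidx-is-crucial} with the index shifted to $p+1$, justified by Lemmas~\ref{lem:ds_right_d} and~\ref{lem:ds_right_idem}. The only difference is cosmetic: you show coverage of $[i\btw|w|]$ by telescoping the sibling intervals $[r_{\fna_{s-1}}, r_{\fna_s})$ along the leaf-to-root path, whereas the paper fixes $j$ and argues via the lowest node containing both $i$ and $j$.
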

\begin{proof}
\Call{Get$_\mathrm{R}$}{$i$} returns the union of
\begin{itemize}
    \item[(i)] $\mathcal{D}_\fna$ for the leaf node $\fna$ corresponding to the index $i$ (line~\ref{line:ds_right_lengthzero}),
    \item[(ii)] $\evalc{ [i \btw \ell_{\fna^+}) } \comp \mathcal{D}_{\fna^+}$ for each node $\fna$ with $i \in [\ell_\fna, r_\fna)$
    that is not the rightmost child of the parent of $\fna$, where $\fna^+$ is the immediate right sibling of $\fna$ (line~\ref{line:ds_right_addnext}), and
    \item[(iii)] $\set{ \evalc{ [i \btw \ell_{\fna^+}) } \comp \trans : \trans \in \transm{C},\ \maxidx_{\fnb}[\trans] > p+1}$
    for each node $\fna$ with $i \in [\ell_\fna, r_\fna)$ that is
    the non-rightmost $p$-th child of an idempotent node $\fnb$,
    where $\fna^+$ is the immediate right sibling of $\fna$, $(p+1)$-th child of $\fnb$ (line~\ref{line:ds_right_addidem}).
\end{itemize}
From now on, we rewrite these sets in a more manageable form.

\paragraph{(i).}
We have
\[
\mathcal{D}_\fna
= \set{ \evalc{ [\ell_\fna \btw j) } : j \in J \cap [\ell_\fna, r_\fna) }
= \set{ \evalc{ [i \btw j) } : j \in J,\ \underline{ j = i } }
\]
where the first equality is from Lemma~\ref{lem:ds_right_d} and the second equality follows from $[\ell_f,r_f)=\{i\}$.

\paragraph{(ii).}
For each node $\fna$ satisfying the condition of (ii), we have
\begin{align*}
  \evalc{ [i \btw \ell_{\fna^+}) }  \comp \mathcal{D}_{\fna^+}
   & =  \set{ \evalc{ [i \btw \ell_{\fna^+}) } \comp \evalc{ [\ell_{\fna^+} \btw j) } : j \in J \cap [\ell_{\fna^+}, r_{\fna^+}) } \\
   & =  \set{ \evalc{ [i \btw j) } : j \in J \cap \underline{ [\ell_{\fna^+}, r_{\fna^+}) } },
\end{align*}   
where the first equality is from Lemma~\ref{lem:ds_right_d}.

\paragraph{(iii).}
Let $\fnb$ be an idempotent node and $\fna_1,\ldots, \fna_{p}=\fna, \fna_{p+1}={\fna^+}, \ldots, \fna_q$ be its children such that $i \in [\ell_\fna, r_\fna)$.
Since Lemma~\ref{lem:ds_right_idem} ensures that $\maxidx_{\fnb}$ is correctly computed, we can apply Lemma~\ref{lemma:maxidx-is-crucial} to obtain
\begin{align*}
&\set{ \evalc{ [i \btw \ell_{\fna^+}) } \comp \trans : \trans \in \transm{C},\ \maxidx_{\fnb}[\trans] > p+1 }\\
&= \bigcup_{t=p+2}^{q} \evalc{ [i \btw \ell_{\fna_t}) }\comp \mathcal{D}_{\fna_t}\\
&= \bigcup_{t=p+2}^{q} \evalc{ [i \btw \ell_{\fna_t}) }\comp \set{ \evalc{ [\ell_{\fna_t} \btw j) } : j \in J \cap [\ell_{\fna_t}, r_{\fna_t}) }\\
&= \bigcup_{t=p+2}^{q} \set{ \evalc{ [i \btw j) } : j \in J \cap [\ell_{\fna_t}, r_{\fna_t})}\\
&= \set{ \evalc{ [i \btw j) } : j \in J \cap [\ell_{\fna_{p+2}}, r_{\fna_{q}})}\\
&= \set{ \evalc{ [i \btw j) } : j \in J \cap \underline{[r_{\fna^+}, r_{g})}}.
\end{align*}

Now, it suffices to prove that the union of intervals
\begin{itemize}
    \item[(I)] consisting only of an index $i$, 
    \item[(II)] $[\ell_{\fna^+}, r_{\fna^+})$ for each pair of nodes $(\fna, \fna^+)$ defined in (ii), and
    \item[(III)] $[r_{\fna^+}, r_{\fnb})$ for each tuple of nodes $(\fnb, \fna, \fna^+)$ defined in (iii)
\end{itemize}
coincides with the interval $[i \btw |w|]$.

\noindent{}($\subseteq [i \btw |w|]$):
Clearly, each interval is included in $[i \btw |w|]$.

\noindent{}($\supseteq [i \btw |w|]$):
We prove that for each $j \in [i \btw |w|]$, $j$ is in one of such intervals.
Let $\fnb$ be the lowest node of the factorization forest such that $i, j \in [\ell_\fnb, r_\fnb)$.

If $\fnb$ is a leaf node, we have $i = j$ and the interval (I) contains $j$.
Otherwise, let $\fna$ and $\fna'$ be the children of $\fnb$ such that $i \in [\ell_{\fna}, r_{\fna})$ and $ j\in [\ell_{\fna'}, r_{\fna'})$.
Since $\fnb$ is the lowest node with $i, j \in [\ell_\fnb, r_\fnb)$, we have $\fna \neq \fna'$.
Let $\fna^+$ be the node immediately at the right of $\fna$.

If $\fna' = \fna^+$, the interval $[\ell_{\fna^+}, r_{\fna^+})$ in (II) for the pair $(\fna, \fna^+)$ contains $j$.
Otherwise, $\fnb$ has at least three children, and thus, is an idempotent node.
Thus, the interval $[r_{\fna^+}, r_\fnb)$ in (III) for the tuple $(\fnb, \fna, \fna^+)$ contains $j$.
Therefore, the union of the intervals defined in~(I),~(II),~and~(III) coincides with $[i \btw |w|]$ and the lemma is proved.
\end{proof}

Clearly, each query in Algorithm~\ref{alg:ds_right} runs in time proportional to the height of the factorization forest and $|\transm{C}|$, both of which are constants.
Therefore, each query runs in constant time, and thus, Algorithm~\ref{alg:ds_right} is an implementation of \textsc{Right-Enumeration} data structure with the desired property.

%% file: onerewb/ffzu1.tex
\begin{tikzpicture}[x=0.95cm,y=1.05cm,
  leaf/.style={draw,rectangle,minimum width=0.95cm,minimum height=0.65cm,
               inner sep=1pt,align=center},
  box/.style={draw,rounded corners=4pt,fill=gray!12},
  guide/.style={thin},
]

\makearray{12} 

\putcell{1}{\small $1$}
\putcell{4}{\small $i-3$}
\putcell{5}{\small $i-2$}
\putcell{6}{\small $i-1$}
\putcell{7}{\small $i$}
\putcell{8}{\small $\cdots$}
\putcell{9}{\small $\cdots$}
\putcell{10}{\small $\cdots$}
\putcell{11}{\small $|w|{-}1$}
\putcell{12}{\small $|w|$}

\pgfmathtruncatemacro{\idx}{7}
\pgfmathtruncatemacro{\L}{max(1,\idx-2)}
\pgfmathtruncatemacro{\R}{\N-1}

\cover{\L}{\R}{.8}{nC}{$[i{-}2\,\btw\,|w|{-}1]$}
\cover{1}{4}{1.5}{nD}{$[1\,\btw\,i{-}3]$}

\northhit[-]{a5}{nC}
\northhit[-]{a6}{nC}
\northhit[-]{a7}{nC}
\northhit[-]{a11}{nC}

\northhitCover[-]{a1}{nD}
\northhitCover[-]{a4}{nD}

\cover{1}{12}{2.5}{root}{$[1\,\btw\,|w|]$}

\northhitCover[-]{nC}{root}
\northhitCover[-]{nD}{root}
\northhitCover[-]{a12}{root}

\end{tikzpicture}

%% file: onerewb/leftenumeration.tex
\newcommand{\dleft}{\mathcal{D}_{\text{left}}}
\newcommand{\GetNear}{\textsc{Get}_{\text{near}}}

\section{Algorithm Overview for Left-$B$ Transition Enumeration}
\label{new:section:left-B-overview}

\subsection{Overall Strategy}
Here, we solve \textsc{Left-$B$ Transition Enumeration} on a heavy-path $\hpath$ of the suffix tree $\stree_w$ of $w$ in $O(|\stree[\hpath]| \log |w|)$-time.
Our algorithm iterates over $v \in \pi$ in bottom-up order, and for each $v$ with $\nword{v}\in \lang{B}$, computes the following set of transitions for each $i \in \light_v$ such that $w[i + m_v \btw |w|] \in \lang{D}$:
\[
\dleft := \set{ \evalc{ [j + m_v \btw i) } : j \in \heavy_v,\ w[1 \btw j) \in \lang{A}, j + m_v \leq i }
\]
where $m_v = |\nword{v}|$ is the length of the associated string $\nword{v}$ of the vertex $v$.

To compute $\dleft$ in $O(\log |w|)$-time, we use a data structure, \leftenumds, defined on the ex{} $C$ and the suffix $\widehat{w}$ corresponding to the heavy-path $\pi$.
It provides queries \Call{Initialize$_\mathrm{L}$}{$\cdot$}, \Call{Add$_\mathrm{L}$}{$\cdot$}, \Call{Shrink$_\mathrm{L}$}{$\cdot$}, and \Call{Get$_\mathrm{L}$}{$\cdot$} defined as follows.

\begin{itembox}[l]{$\leftenumds$}
\begin{description}
    \item[Depending Global Data] \regex{} $C$, the entire string $w$, and factorization forest of $C$.
    \item[Shared Data] \mbox{}
    \begin{itemize}
      \item a set of indices $J$
      \item a suffix $\widehat{w}$
      \item a non-negative integer $m$
    \end{itemize}
\end{description}
Providing the following queries:
\begin{description}
    \item[\Call{Initialize$_\mathrm{L}$}{$\widehat{j}$}]: Set $\widehat{w}\gets w[\widehat{j} \btw |w|]$ and $m\leftarrow n-\widehat{j}$.
    \item[\Call{Add$_\mathrm{L}$}{$j$}]: $J\leftarrow J\cup \{j\}$. It is guaranteed that $w[j\btw j+m)$ is a prefix of $\widehat{w}$.
    \item[\Call{Shrink$_\mathrm{L}$}{$d$}]: Set $m\leftarrow m-d$. It is guaranteed that $d\geq 0$.
    \item[\Call{Get$_\mathrm{L}$}{$i$}]: Return the set $\set{ \evalc{ [j+m \btw i) } : j \in J,\ j+m \leq i }$.
\end{description}
\end{itembox}









\begin{algorithm}[t!]
\caption{Algorithm Overview for \textsc{Left-$B$} Transition Enumeration}\label{alg:overview_left}
    \Procedure{LeftBTransitionEnumeration($\pi$)}{
        $\mathit{enumerator} \gets \textsc{LeftEnumerator}$\;
        $\mathit{enumerator}$.\Call{Initialize$_\mathrm{L}$}{$\widehat{j}$}, where $\widehat{j}$ is the index corresponding to the leaf of $\pi$\;
        $\accum \gets \emptyset$\;
        \ForEach{$v \in \pi$ in bottom-up order}{\label{line:for-invariant}
            $m_v \gets | \nword{v} |$\;
            $\mathit{enumerator}$.\Call{Shrink$_\mathrm{L}$}{$m-m_v$}\;\label{line:left_shrink}
            \If{$\nword{v} \in \lang{B}$}{ \label{line:left-filterB}
                \tcp{this check is in $O(1)$ time using the factorization forest.}
                \ForEach{$i \in \light_v$}{
                  \If{$w[i+m_v \btw |w|] \in \lang{D}$}{ \label{line:left-filterD}
                    $\mathcal{D} \gets \mathit{enumerator}$.\Call{Get$_\mathrm{L}$}{$i$}\;\label{line:left_get_transitions}
                    $\accum \gets \accum \cup \mathcal{D}$\;\label{line:left_add_to_answer}
                  }
                }
            }
            \ForEach{$i \in \light_v$ with $w[1 \btw i) \in \lang{A}$}{\label{new:line:for-inv1}
                $\mathit{enumerator}$.\Call{Add$_\mathrm{L}$}{$i$}\;\label{line:left_add_indices}
            }
        }
        \textbf{return} $\accum$\;
    }
\end{algorithm}

We defer the implementation details of data structure $\leftenumds$, which processes each query in amortized $O(\log |w|)$ time, to the subsequent sections.
We now present our algorithm for \textsc{Left-$B$ Transition Enumeration}, as shown in Algorithm~\ref{alg:overview_left}.
As noted earlier~\ref{sec:leftb-and-rightb-tasks},
we begin by $J = \emptyset$ and iterate over $v \in \hpath$ in bottom-up order. For each $v\in \hpath$, we apply the following operations.
\begin{itemize}
    \item First, we set $m\leftarrow m_v$ using a \Call{Shrink$_\mathrm{L}$}{$\cdot$} query (line~\ref{line:left_shrink}).
    \item Then, assuming $w_v\in \lang{B}$, for each $i \in \light_v$ with $w[i+m \btw |w|]\in \lang{D}$, we compute the set $\set{ \evalc{ [j+m \btw i) } : j \in J,\ j+m \leq i }$ by calling $\GetL(i)$  (line~\ref{line:left_get_transitions}).
    \item Finally, we properly update $J$ by adding each $i \in \light_v$ with $w[1 \btw i) \in \lang{A}$ by calling $\AddL(i)$ (line~\ref{line:left_add_indices}).
\end{itemize}

The following lemma ensures that the algorithm correctly updates the set $J$.

\begin{lemma}[Loop Invariant]\label{new:lem:left_correct_J}
Every visiting the line~\ref{line:for-invariant} of Algorithm~\ref{alg:overview_left}, the following holds:
\[
\mathit{enumerator}.J = \set{ j \in \heavy_v : w[1 \btw j) \in \lang{A} }.
\]
\end{lemma}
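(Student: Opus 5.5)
We prove the invariant by induction on the bottom-up order in which the loop at line~\ref{line:for-invariant} visits the nodes of $\hpath$. We use two facts. First, the only place where $J$ changes is the call $\AddL$ at line~\ref{line:left_add_indices}; the calls $\ShrinkL$ and $\GetL$ leave $J$ untouched. Second, for consecutive nodes on a heavy path, with $u$ the parent of $v$, the edge $u \to v$ is heavy, and so $\heavy_u = \heavy_v \cup \light_v$ (this is noted in Section~\ref{section:hevy-and-light-indices}). The argument is the same as the base-case/induction-step structure used in the proof of Lemma~\ref{lem:right_invariants}.

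\textbf{Base case.} The first node visited is the bottommost node $v_0$ of $\hpath$. This is a leaf of $\stree$, since every heavy path ends at a leaf. A leaf has no children, so $u_H = \bot$ and hence $\heavy_{v_0} = \stree[\bot] = \emptyset$. Right after $\textsc{Initialize}_{\mathrm{L}}$ no index has been added, so $J = \emptyset$. Both sides of the claimed identity are therefore empty.

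\textbf{Induction step.} Suppose the invariant holds when the loop reaches $v$, that is, $J = \set{ j \in \heavy_v : w[1 \btw j) \in \lang{A} }$. Let $u$ be the next node visited, which is the parent of $v$ on $\hpath$. During the iteration for $v$, the loop at line~\ref{new:line:for-inv1} inserts exactly the indices $i \in \light_v$ with $w[1 \btw i) \in \lang{A}$, and nothing else modifies $J$. So when line~\ref{line:for-invariant} is reached for $u$,
\[
J = \set{ j \in \heavy_v \cup \light_v : w[1 \btw j) \in \lang{A} } = \set{ j \in \heavy_u : w[1 \btw j) \in \lang{A} }.
\]
The second equality holds because $v$ is the heavy child of $u$.

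\textbf{Side checks.} Since $\heavy_v$ and $\light_v$ are disjoint, each index is added at most once. We also verify the precondition of $\AddL(i)$ at the moment of the call, when $m = m_v$. For $i \in \light_v \subseteq \mathscr{I}_v$, the string $w[i \btw i+m_v)$ equals $\nword{v}$. Moreover, $\nword{v}$ is a prefix of $\widehat{w}$, because $\widehat{w}$ is the suffix at the leaf of $\hpath$ and $v$ is an ancestor of that leaf.

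The main subtlety is bookkeeping: identifying the first visited node as the leaf with empty heavy side, and making sure the identity $\heavy_u = \heavy_v \cup \light_v$ is applied only along heavy edges. This is guaranteed because consecutive nodes of $\hpath$ are always joined by heavy edges.
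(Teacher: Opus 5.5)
Your proof is correct and follows essentially the paper's argument: induction along $\hpath$ starting from its leaf, where $J = \heavy_v = \emptyset$, and a step that uses $\heavy_u = \heavy_v \cup \light_v$ for the heavy child $v$ of $u$ together with the fact that only the $\AddL$ calls in the preceding iteration change $J$. Your extra check of the $\AddL$ precondition is not needed for the invariant itself, but it does no harm.
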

\begin{proof}
We prove by induction.
If $v$ is the leaf of $\pi$, the statement is immediate from $\mathit{enumerator}.J = \heavy_v = \emptyset$.
Otherwise, let $u$ be the child of $v$ on $\pi$.
Since $\heavy_v = \heavy_u \cup \light_u$, we have
\[
\set{ j \in \heavy_v : w[1 \btw j) \in \lang{A} } = \set{ j \in \heavy_u : w[1 \btw j) \in \lang{A} } \cup \set{ j \in \light_v : w[1 \btw j) \in \lang{A} }.
\]
By induction hypothesis, the first term of the right-hand side was equal to $\mathit{enumerator}.J$ at the beginning of the loop for $u$.
The second term of the right-hand side is exactly the set that was added to $\mathit{enumerator}.J$ while the loop for $u$ in line~\ref{line:left_add_indices}. Thus, the lemma is proved.
\end{proof}

The following lemma ensures the correctness of Algorithm~\ref{alg:overview_left}.
\begin{lemma} 
Algorithm~\ref{alg:overview_left} correctly solves \textsc{Left-$B$ Transition Enumeration}.
More precisely, it returns the following set of transitions $\trans$ of $\transm{C}$:
\[
\left\{ \trans = \evalc{ [j+m_v \btw i) } :
\begin{array}{l}
v \in \hpath,\ i \in \light_v,\ j \in \heavy_v,\ j+m_v \leq i, \\
w[0 \btw j) \in \lang{A}, \\
w[j \btw j+m_v) = w[i \btw i+m_v) = \nword{v} \in \lang{B}, \\
w[i+m_v \btw |w|] \in \lang{D}
\end{array}
\right\}.
\]
\end{lemma}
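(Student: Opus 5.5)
We prove the two inclusions separately. The argument mirrors the correctness proof for Algorithm~\ref{alg:overview_right}. It relies on the loop invariant of Lemma~\ref{new:lem:left_correct_J} and on the specification of \leftenumds{}, whose implementation is assumed correct as stated.

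The first step is to check that, when $\GetL(i)$ is called at line~\ref{line:left_get_transitions} in the iteration for $v$, the shared integer $m$ of the enumerator equals $m_v$. Then $\GetL(i)$ returns exactly $\set{ \evalc{ [j+m_v \btw i) } : j \in J,\ j+m_v \leq i }$.
\begin{itemize}
\item Initially, $m$ equals the length of the suffix $\widehat{w}$ associated with the leaf $\widehat{j}$ of $\hpath$. Let $m_{\text{prev}}$ denote the value of $m$ at the start of an iteration. The call $\ShrinkL(m_{\text{prev}}-m_v)$ at line~\ref{line:left_shrink} sets $m \gets m_v$.
\item Its argument is nonnegative because path-label lengths strictly decrease as we move up $\hpath$.
\item The precondition of $\AddL(j)$, namely that $w[j \btw j+m)$ is a prefix of $\widehat{w}$, holds at each call. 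The call adds $j\in\light_v$ after $m$ has been set to $m_v$. Since $j\in \mathscr{I}_v$, we have $w[j\btw j+m_v)=\nword{v}$. Moreover $\nword{v}$ is a prefix of $\widehat{w}$, because $v$ lies on the root-to-leaf path of $\widehat{w}$.
\item Later shrinks only shorten the prefix, so the guarantee persists.
\end{itemize}

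Next, by Lemma~\ref{new:lem:left_correct_J}, at that moment $J=\set{ j \in \heavy_v : w[1 \btw j) \in \lang{A} }$. This is because $\GetL$ is called before the $\AddL$ loop of the current iteration.

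\textbf{Soundness.} Every $\trans$ added to $\accum$ at line~\ref{line:left_add_to_answer} comes from some $v\in\hpath$ that passes the checks at lines~\ref{line:left-filterB} and~\ref{line:left-filterD}. It therefore has the form $\evalc{[j+m_v\btw i)}$ with the following properties:
\begin{itemize}
\item $i\in\light_v$ and $w[i+m_v\btw|w|]\in\lang{D}$;
\item $\nword{v}\in\lang{B}$;
\item $j\in\heavy_v$, $w[1\btw j)\in\lang{A}$, and $j+m_v\le i$.
\end{itemize}
The equalities $w[j\btw j+m_v)=w[i\btw i+m_v)=\nword{v}$ follow from $i,j\in\mathscr{I}_v$. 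Hence $\trans$ belongs to the target set.

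\textbf{Completeness.} Conversely, take any tuple $(v,i,j)$ witnessing membership of $\trans$ in the target set. In the iteration for $v$, both filters pass for $i$. By the invariant, $j\in J$, and $j+m_v\le i$ holds. So $\trans$ is returned by $\GetL(i)$ and added to $\accum$.

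The only delicate point is the bookkeeping of $m$ and the prefix precondition of $\AddL$. The rest follows directly from the invariant and the specification.
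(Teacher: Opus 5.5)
Your proposal is correct and follows the same route as the paper: both inclusions come from the specification of $\GetL$ together with the loop invariant of Lemma~\ref{new:lem:left_correct_J}, and the suffix-tree fact that $i,j\in\mathscr{I}_v$ gives the two $\nword{v}$-equalities. The one addition is your explicit check that the enumerator's $m$ equals $m_v$ at the $\GetL$ call and that the $\AddL$ precondition holds; the paper leaves this implicit, so it is a welcome but inessential refinement.
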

\begin{proof}
From the definition of \Call{Get$_{\mathrm{L}}$}{$\cdot$}, line~\ref{line:left_get_transitions} adds transitions $\trans \in \transm{C}$ to $\accum$ if and only if there is a $j \in J$ with $j+m_v\leq i$ and $\evalc{ [j+m_v \btw i) } = \trans$.
By Lemma~\ref{new:lem:left_correct_J}, $j$ satisfies $w[1\btw j)\in \lang{A}$ and $j\in \heavy_v$.
Moreover, we have $w[j \btw j+m_v)=w[i \btw i+m_v)=\nword{v}$ because $j \in \heavy_v$.
Line~\ref{line:left-filterD} ensures $w[i+m_v\btw |w|]\in \lang{D}$.
Thus, all transitions the algorithm adds to $\accum$ satisfy the condition of this lemma.

Conversely, if a vertex $v$ and indices $i$ and $j$ satisfy the condition of the lemma, the algorithm adds $\trans=\evalc{ [j+m_v \btw i) }$ to $\accum$ at line~\ref{line:left_get_transitions} at the loop corresponding to $v$ and $i$.
\end{proof}

\paragraph{Time Complexity of Algorithm~\ref{alg:overview_left}.}

Now we analyze the time complexity.
It is clear that Algorithm~\ref{alg:overview_left} calls $O(|\stree[\pi]|)$ number of queries.
Thus, using an implementation of the \textsc{Left-Enumeration} data structure that processes each of four types of queries in (amortized) $O(\log |w|)$ time, we can bound the time complexity of Algorithm~\ref{alg:overview_left} by $O(|\stree[\pi]|\log  |w|)$.

\subsection{Data Structure Overview for $\leftenumds$}
\label{section:implementing-left-B}
To solve \textsc{Left-$B$ Transition Enumeration}, it suffices to implement $\leftenumds$.
Unfortunately, it does not seem easy to realize this data structure just by using factorization forests as we did in Section~\ref{sec:data_structures_right}, due to the presence of the \Call{Shrink$_\mathrm{L}$}{$\cdot$} queries.

To this end, we consider an auxiliary data structure, $\auxds$, that efficiently processes queries that are somewhat weaker than our purpose.
Then, we construct \textsc{LeftEnumerator} using $\auxds$.
The precise definition of $\auxds$ is as follows.

\begin{itembox}[l]{$\auxds$}
\begin{description}
    \item[Depending Global Data] \regex{} $C$, the entire string $w$, and factorization forest of $C$.
    \item[Shared Data] \mbox{}
    \begin{itemize}
    \item a set of \emph{intervals} $\mathcal{J} = \{ [j_1, k_1), [j_2, k_2), \ldots \}$
    \item a suffix $\widehat{w}$
    \item a non-negative integer $m$
    \end{itemize}
\end{description}
Providing the following queries or methods:
\begin{description}
    \item[\Call{Initialize$_\mathrm{A}$}{$\widehat{j}$}]: Set $\widehat{w}\leftarrow [\widehat{j},n)$ and $m\leftarrow n-\widehat{j}$.
    \item[\Call{Add$_\mathrm{A}$}{$j, k$}]: $\mathcal{J}\leftarrow \mathcal{J}\cup \{[j, k)\}$. It is guaranteed that $j+m \leq k$ and $w[j \btw k)$ is a prefix of $\widehat{w}$.
    \item[\Call{Shrink$_\mathrm{A}$}{$d$}]: Set $m\leftarrow m-d$. It is guaranteed that $d\geq 0$.
    \item[\Call{Get$_\mathrm{A}$}{$i$}]: Return the set
      $\set{ \transm{C}(\,w[j+m \btw i)\,) : [j, k) \in \mathcal{J},\ k \leq i }$. 
\end{description}
\end{itembox}

Here, we explain how $\auxds$ differs from $\leftenumds$.

\paragraph{First Difference: Interval.}
The first difference is that, whereas $\leftenumds$ maintains a set of indices $J$, the auxiliary data structure $\auxds$ maintains a set of intervals $\mathcal{J}$.
Due to this difference, the $\AddA(j, k)$ query takes an additional argument $k \geq j+m$ compared to the $\AddL(j)$ query, where $k$ is the index of the right endpoint of the interval.
Note that each interval $[j, k) \in \mathcal{J}$ always satisfies $j + m \leq k$, since this condition holds when that interval is added to $\mathcal{J}$ and $m$ never increases thereafter.

\paragraph{Second Difference: Outputs of \textsc{Get} queries.}
The second difference is what \Call{Get}{$\cdot$} queries return. 
Recall that \Call{Get$_\mathrm{L}$}{$i$} returns:
\[
\set{ \evalc{ [j + m \btw i) } : j \in J,\ j + m \leq i }.
\]
On the other hand, \Call{Get$_\mathrm{A}$}{$i$} returns:
\[
\set{ \evalc{ [j + m \btw i) } : [j, k) \in \mathcal{J},\ k \leq i }.
\]
Using the notation
\[
J_{\mathrm{L}} := \set{ j \in J : j+m \leq i },
\quad
J_{\mathrm{A}} := \set{ j : [j, k) \in \mathcal{J},\ k \leq i },
\]
we can write
\[
\GetL(i) = \set{  \evalc{ [j + m \btw i) } : j \in J_L },
\quad
\GetA(i) = \set{  \evalc{ [j + m \btw i) } : j \in J_A }.
\]
The definition of $J_{\mathrm{L}}$ depends on $m$, so whether $j\in J_{\mathrm{L}}$ may change as a result of $\ShrinkL$ queries. 
On the other hand, the definition of $J_{\mathrm{A}}$ does not depend on $m$, and thus whether $[j,k)\in J_{\mathrm{A}}$ remains invariant by $\ShrinkL$ queries.
This is the main reason we work with $\GetA(\cdot)$ instead of $\GetL(\cdot)$; $\GetA(\cdot)$ is easier to compute.

\paragraph{Simulating $\GetL(i)$ using $\GetA(i)$.}

The remaining part of this section focuses on how to simulate $\leftenumds$ using $\auxds$.
To do so, we need to simulate $\GetL(\cdot)$ using $\GetA(\cdot)$.
In our construction, we manage the following condition:
\[
J = \set{ j : [j, k) \in \mathcal{J} }.
\]
Particularly, since $j+m\leq k$ holds for all $[j,k)\in \mathcal{J}$, we have $J_{\mathrm{A}}\subseteq J_{\mathrm{L}}$.
Therefore, the remaining task is to compute
\newcommand{\Jnear}{J_{\mathrm{near}}}
\[
\GetNear(i) := \set{ \evalc{ [j + m \btw i) } : j \in J_{\mathrm{near}} }
\]
for some $J_{\mathrm{near}}$ with $J_{\mathrm{L}}\setminus J_{\mathrm{A}}\subseteq J_{\mathrm{near}} \subseteq J_{\mathrm{L}}$.
Then, we can compute $\GetL(i)$ by
\[
    \GetL(i) = \GetA(i) \cup \GetNear(i).
\]
We note that the definition of $J_{\mathrm{near}}$ has some flexibility; particularly, it may have non-empty intersection with $J_{\mathrm{A}}$.
Here, we adopt the definition
\[
    J_{\mathrm{near}}:= \{j\in J\colon j+m\leq i < j+2m\}.
\]
Clearly, we have $J_{\mathrm{near}}\subseteq J_{\mathrm{L}}$.

Let 
\[
    k^*_j:=\min_{k\colon [j,k)\in \mathcal{J}}k.
\]
Then, if we have $k^*_j\leq j+2m$ for all $j\in J$, we have
\[
    J_{\mathrm{L}}\setminus J_{\mathrm{A}}
    = \{j\colon j\in J\colon j+m\leq i < k^*_j\}
    \subseteq \{j\colon j\in J\colon j+m\leq i < j+2m\}
    = J_{\mathrm{near}}.
\]
Thus, the remaining tasks are:
\begin{itemize}
    \item[(i)] implementing $\auxds$,
    \item[(ii)] implementing $\GetNear(\cdot)$, and
    \item[(iii)] ensuring that $k^*_j\leq j+2m$ holds for every moment of the algorithm.
\end{itemize}
Here, we focus on (iii) and defer (i) and (ii) to Section~\ref{sec:auxdatastructure} and Section~\ref{section:near-transitions2}, respectively.

\paragraph{Ensuring (iii).}

The issue in ensuring (iii) is that $\ShrinkA(\cdot)$ may cause (iii) to fail by decreasing $m$.  
We overcome this issue by the following idea: if $\ShrinkA(\cdot)$ breaks (iii) for some $j\in J$, then we update $k^*_j$ to $j+m$ by adding to $\mathcal{J}$ the interval $[j,j+m)$ using new $m$.


Note that this re-registration scheme clearly preserves the condition 
\[
    J=\{j\colon [j,k)\in \mathcal{J}\}
\]    
because we add $[j,k)$ to $\mathcal{J}$ only when $j\in J$.
Moreover, once a $\ShrinkA(\cdot)$ query updates $k^*_j:=j+m$ by calling $\AddA(j,j+m)$, next update will happen only after $m$ is halved; it requires the condition $j+2m < k^*_j$.
Thus, we call at most $O(\log |w|)$ \Call{Add$_\mathrm{A}$}{$\cdot$} queries in total for each call of $\AddL(\cdot)$ query, and particularly, the re-registration scheme costs $O(\log |w|)$ to the total time complexity.




%


\begin{algorithm}
\caption{Implementation of $\leftenumds$ Using the Auxiliary Data Structure $\auxds$}
\label{alg:auxiliarytotarget_left}

\DefClass{LeftEnumerator}{
    \GlobalData{}{
        $w$: entire input string\;    
    }

    \Data{}{
        $m$: non-negative integer\;
        $\widehat{j}$: the starting index of the suffix\;
        $\que$: a queue\;
        $\mathit{aux}$: $\auxds$\;
    }
    \BlankLine

    \Method{\emph{\Call{Initialize$_\mathrm{L}$}{$\widehat{j}$}}}{
        \KwIn{$\widehat{j}\in [|w|]$}
        Initialize $\widehat{j}$\;
        $\mathit{aux} \gets$ \Call{Initialize$_{\mathrm{L}}$}{$\widehat{j}$}\;
        $\que \gets \text{empty queue}$\;
    }

    \Method{\emph{\Call{Add$_\mathrm{L}$}{$j$}}}{
        \KwIn{$j\in [|w|]$ such that $w[j \btw j+m)$ is a prefix of $\widehat{w} = w[\widehat{j} \btw |w|]$}
        $\mathit{aux}.\AddA(j, j + m)$\;
        \If{$m > 0$}{
            $\que.\mathrm{pushBack}((j, m))$; \tcp{add $(j, m)$ to the end of $\que$}
        }
    }

    \Method{\emph{\Call{Shrink$_\mathrm{L}$}{$d$}}}{
        \KwIn{$d\in [m]$}
        $m \gets m-d$\; \label{alg:shrink-modify-m}
        \While{$\que$ is not empty}{\label{line:left_ds:shrink:while-loop}
            $(j, m') \gets \que.\mathrm{front}()$; \tcp{load the first element of $\que$}
            \If{$2m < m'$}{ \label{alg:shrink-loop-cond}
                $\que.\mathrm{popFront}()$; \tcp{remove the first element $(= (j, m'))$ from $\que$} \label{alg:shrink:remove-from-Q}
                $\mathit{aux.}\AddA(j, j+m)$\; \label{alg:shrink-modify-J}
                \If{$m > 0$}{
                   $\que.\mathrm{pushBack}( (j, m) )$\;
                }
            }
            \Else{
                \textbf{break}\;
            }
        }
    }

    \Method{\emph{\Call{Get$_\mathrm{L}$}{$i$}}}{
        \KwIn{$i\in [|w|]$}
        $\mathcal{D}_1 \gets \mathit{aux.}\GetA(i)$\;
        $\mathcal{D}_2 \gets \Call{GetNearTransitions}{i, m}$\;
        \textbf{return} $\mathcal{D}_1 \cup \mathcal{D}_2$\;
    }
}
\end{algorithm}

\subsection{Implementing $\leftenumds$ using $\auxds$}
\label{section:implementing-leftenumds}

\paragraph{Implementation.}

Here, we formally implement the idea in Section~\ref{section:implementing-left-B}.
Our implementation is given in Algorithm~\ref{alg:auxiliarytotarget_left}.
It internally uses $\auxds$ and a procedure \Call{GetNearTransitions}{$i,m$}, which computes the set $\GetNear(i)$ for current $m$.
%
To manage the ``re-registered'' $\AddA(\cdot)$ queries, we use a queue $Q$.
As in Section~\ref{sec:data_structures_right}, we use class or module features in our pseudocode.
Specifically, we write
\begin{itemize}
    \item $Q.\mathrm{pushBack}(x)$ for the operation of adding $x$ to the last of the queue $Q$,
    \item $Q.\mathrm{popFront}()$ for the operation of removing the first element of the queue $Q$, and
    \item $Q.\mathrm{front}()$ for the first element of the queue $Q$.
\end{itemize}

\paragraph{Correctness proof.}

The following lemma is straightforward and specifies our invariants.
\begin{lemma}\label{lem:left_ds_invariants}
Let $J$ be a set of indices.
Assume we have invoked 
\begin{itemize}
    \item \Call{Initialize$_\mathrm{L}$}{$\cdot$} once at the beginning,
    \item $\AddL(j)$ once for each $j\in J$ in any order, and
    \item $\ShrinkL(\cdot)$ and $\GetL(\cdot)$ for any number of times.
\end{itemize}
Let $\mathcal{J}$ be the set of intervals $[j,k)$ that $\AddA(j,k)$ is called.
Then, at any moment, we have the following.
\begin{itemize}
    \item[\rm{(i)}] $J = \set{ j : [j, k) \in \mathcal{J} }$.
    \item[\rm{(ii)}] $j+m \leq k^*_j \leq j+2m$ holds for each $j\in J$, where $k^*_j=\min_{k\colon [j,k)\in \mathcal{J}}k$.
    \item[\rm{(iii)}]
    If $m=0$, we have $\que = \emptyset$. Otherwise, $J = \set{ j : (j, m') \in \que }$.
    \item[\rm{(iv)}] Every pair $(j, m') \in \que$ satisfies $m \leq m' < 2m$.
    \item[\rm{(v)}] The second elements of $\que$ is non-increasing: i.e., $m_1 \geq m_2 \geq \cdots \geq m_{|\que|}$ for $\que = [(j_1, m_1), (j_2, m_2), \ldots, (j_{|\que|}, m_{|\que|})]$
\end{itemize}
\end{lemma}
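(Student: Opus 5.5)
The plan is an induction over the sequence of method calls: \Call{Initialize$_\mathrm{L}$}{$\cdot$}, $\AddL(\cdot)$, $\ShrinkL(\cdot)$ and $\GetL(\cdot)$. I will show that (i)--(v) hold after each call, given that they held before it. After \Call{Initialize$_\mathrm{L}$}{$\cdot$} we have $J=\mathcal{J}=\que=\emptyset$, so every item holds vacuously. $\GetL(\cdot)$ changes none of $J$, $\mathcal{J}$, $m$ or $\que$, so it preserves all invariants. Two update methods remain.

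\textbf{$\AddL(j)$.} The call inserts $j$ into $J$ and $[j,j+m)$ into $\mathcal{J}$, which gives (i). The new interval yields $k^*_j=j+m$, since $j$ is new and no smaller right end exists; this satisfies $j+m\le k^*_j\le j+2m$, and the other indices keep their $k^*$, so (ii) holds. If $m>0$, the pair $(j,m)$ is appended to $\que$; it satisfies $m\le m<2m$, so (iii) and (iv) hold. For (v): by the induction hypothesis (iv), every earlier second component is at least $m$, so appending $m$ keeps the sequence non-increasing. If $m=0$, then $\que=\emptyset$ by the hypothesis and stays empty, as (iii) requires.

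\textbf{$\ShrinkL(d)$.} Write $m_{\mathrm{old}}$ and $m_{\mathrm{new}}=m_{\mathrm{old}}-d$. Before the while loop, each pair $(j,m')\in\que$ satisfies $m_{\mathrm{old}}\le m'<2m_{\mathrm{old}}$, and the second components are non-increasing. The loop pops front elements with $2m_{\mathrm{new}}<m'$. For each popped pair it calls $\AddA(j,j+m_{\mathrm{new}})$, which sets $k^*_j=j+m_{\mathrm{new}}$, and, when $m_{\mathrm{new}}>0$, it pushes $(j,m_{\mathrm{new}})$ to the back. By (v), the front carries the largest remaining $m'$, so the loop processes exactly those pairs with $m'>2m_{\mathrm{new}}$. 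No pair is treated twice, because a freshly pushed pair has $m'=m_{\mathrm{new}}\le 2m_{\mathrm{new}}$ and stops the loop. (i) is preserved because intervals are only added for indices already in $J$. For (iii), the multiset of first components of $\que$ is unchanged when $m_{\mathrm{new}}>0$, and $\que$ empties when $m_{\mathrm{new}}=0$, since then every $m'\ge m_{\mathrm{old}}>0=2m_{\mathrm{new}}$. For (v), the surviving pairs keep their order and are followed by pairs with the minimal value $m_{\mathrm{new}}$.

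For (ii), I need the link $k^*_j=j+m'$ for the current pair $(j,m')$ of each $j$. I will add this to the induction as an auxiliary invariant: each time a pair is pushed, it is pushed together with the interval $[j,j+m')$, and that interval is the minimal one for $j$. Then $j+m_{\mathrm{new}}\le k^*_j$ follows from $m'\ge m_{\mathrm{old}}\ge m_{\mathrm{new}}$. The bound $k^*_j\le j+2m_{\mathrm{new}}$ holds for survivors because they satisfy $m'\le 2m_{\mathrm{new}}$, and for re-registered indices because their $k^*_j$ equals $j+m_{\mathrm{new}}$.

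The main obstacle is the strict upper bound in (iv), $m'<2m_{\mathrm{new}}$, for the pairs that survive the loop. The exit condition of the loop only gives $m'\le 2m_{\mathrm{new}}$. The lower bound $m_{\mathrm{new}}\le m'$ is immediate from $m'\ge m_{\mathrm{old}}$. My first step will be to isolate the boundary case $m'=2m_{\mathrm{new}}$ and check whether it can actually arise under the pre- and post-conditions of the calls. I will then confirm that the argument for (ii) above, and its later use in bounding $J_{\mathrm{near}}$, needs only $k^*_j\le j+2m$. If the boundary case cannot be ruled out, I would read the guard at line~\ref{alg:shrink-loop-cond} as $2m\le m'$, which pops pairs with $m'=2m$ as well and forces the strict inequality. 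Under that reading the amortized re-registration count changes only in constants, since each re-registration still at least halves $m$.
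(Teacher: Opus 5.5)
Your induction is the paper's own: you induct over the calls, split $\ShrinkL$ according to whether $m$ becomes $0$, and use (v) to see that the popped pairs form a prefix of $\que$. Your auxiliary invariant, $k^*_j=j+m'$ for the current pair $(j,m')$ of $j$, is a real addition. The paper checks (ii) only for the re-registered indices $j_1,\dots,j_x$. For the surviving indices, the bound $k^*_j\le j+2m_{\mathrm{new}}$ does not follow from the old (ii) alone. Your invariant, combined with the loop's exit condition and (v), supplies exactly that step.

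The obstacle you flagged is genuine, and the check you planned comes out negative: the boundary case does occur. Call $\AddL(j)$ while $m=4$, which pushes $(j,4)$, and then call $\ShrinkL(2)$. Now $m=2$, the guard $2m<m'$ reads $4<4$ and fails, and $(j,4)$ stays in $\que$ with $m'=2m$. So the strict bound in (iv) is false for the pseudocode as written. The paper's remark that (iv) ``is clear from the definition of $x$'' overlooks this. No proof can establish (iv) verbatim, and apart from this point your argument is complete.

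The least invasive repair is to leave the algorithm unchanged and weaken (iv) to $m\le m'\le 2m$. This suffices for three reasons:
\begin{itemize}
\item Inside the induction, only the lower bound of (iv) is ever used: for (v) in both update methods, for the lower half of (ii), and for the case $m_{\mathrm{new}}=0$.
\item The non-strict upper bound is exactly what the loop exit gives for survivors, and it holds trivially for freshly pushed pairs.
\item The later inclusion $J_{\mathrm{L}}\setminus J_{\mathrm{A}}\subseteq J_{\mathrm{near}}$ uses only (ii) with its non-strict upper bound, via $i<k^*_j\le j+2m$.
\end{itemize}
Your alternative guard $2m\le m'$ also works and makes the strict form true. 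A freshly pushed $(j,m_{\mathrm{new}})$ with $m_{\mathrm{new}}>0$ still stops the loop. Each re-registration still at least halves $m'$, so the potential $\sum_{(j,m')\in\que}(1+\log_2 m')$ still drops by at least $1$ per loop iteration.
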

\begin{proof}
We prove by induction.
Since \Call{Initialize$_\mathrm{L}$}{$\cdot$} sets $J = \mathcal{J} = \que = \emptyset$, at the beginning of the algorithm, the conditions hold.
For the remaining three methods, we assume that the conditions are satisfied before they are invoked, and we show that they remain satisfied after the execution.

\paragraph{$\GetL(\cdot)$.}
All conditions are clearly preserved because \Call{Get$_{\mathrm{L}}$}{$\cdot$} updates no data.

\paragraph{$\AddL(j)$.}
The conditions~(i)~and~(ii) are preserved because $\AddL(j)$ just adds $j$ to $J$ and $[j, j+m)$ to $\mathcal{J}$ without updating $m$.
The conditions~(iii)~and~(iv) are preserved because $Q$ is not updated when $m=0$ and $(j, m)$ is added to $\que$ when $m>0$.
The condition~(v) is preserved because of the facts that $m_{|Q|}\geq m$ held before processing the query, which is from the condition~(iv) of the induction hypothesis, and $(j,m)$ is added to the last of $Q$.

\paragraph{$\ShrinkL(d)$.}
The condition~(i) is preserved because $\ShrinkL(d)$ adds an interval $[j,j+m)$ to $\mathcal{J}$ only when $(j,m')$ was already in $Q$ for some $m'$, which ensures $j\in J$ because of the condition~(iii) of the induction hypothesis.
For the remaining conditions, we branch into two cases: whether $m$ becomes $0$ at line~\ref{alg:shrink-modify-m} or not.

\subparagraph{$m$ becomes $0$.}
From the condition~(iii) of the induction hypothesis, we have $J=\{j\colon (j,m')\in Q\}$ before processing the query.
Moreover, from the condition~(iv) of the induction hypothesis, the condition $0 = 2m < m'$ at line~\ref{alg:shrink-loop-cond} holds for all $(j,m')\in Q$.
Thus, the method adds $[j,j)$ to $\mathcal{J}$ and sets $k^*_j=j$ for all $j\in J$, which makes the condition~(ii) satisfied.
Moreover, the method removes all elements of $Q$ and adds no element to $Q$, which makes the conditions~(iii),~(iv),~and~(v) satisfied.

\subparagraph{$m$ remains positive.}
From the condition~(v) of the induction hypothesis, the condition $2m < m'$ at line~\ref{alg:shrink-loop-cond} holds for an initial segment of $Q$ (in order from the front), and is not satisfied for the remaining elements. Thus, the method removes all elements $(j,m')$ with $2m < m'$ from $Q$ and instead adds $(j,m)$ at the last of $Q$.
That is, using the largest index $x$ of $Q$ with $2m < m_x$, $Q$ becomes 
\[
\que = [ (j_{x+1}, m_{x+1}), \ldots (j_{|\que|}, m_{|\que|}),  (j_1, m), (j_2, m), \ldots, (j_x, m) ].
\]
This sets $k^*_j:=m$ for each $j\in \{j_1,\dots, j_x\}$, which makes the condition~(ii) satisfied.
The condition~(iii) is preserved because the set $\{j\colon (j,m')\in Q\}$ does not change.
The condition~(iv) is clear from the definition of $x$.
The condition~(v) is from $m_{x+1}\geq \dots\geq m_{|\que|}\geq m$, which is from the conditions~(iv)~and~(v) of the induction hypothesis.
\end{proof}

Now, we prove the correctness of Algorithm~\ref{alg:auxiliarytotarget_left}.
\begin{lemma}
\Call{Get$_\mathrm{L}$}{$i$} returns the following set of transitions:
\[
\set{ \evalc{ [j+m \btw i) } : j \in J,\ j+m \leq i }.
\]
\end{lemma}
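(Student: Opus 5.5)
The plan is to reduce the claim to the decomposition $\GetL(i) = \GetA(i) \cup \GetNear(i)$ explained in Section~\ref{section:implementing-left-B}. That section showed this identity holds as soon as two facts are in place. First, $J = \set{ j : [j,k) \in \mathcal{J} }$. Second, $j+m \le k^*_j \le j+2m$ for every $j \in J$. Both facts are exactly conditions (i) and (ii) of Lemma~\ref{lem:left_ds_invariants}, which hold at any moment, in particular when $\GetL(i)$ is invoked. I will also assume the specifications of the subroutines. That is, $\mathit{aux}.\GetA(i)$ returns $\set{ \evalc{ [j+m \btw i) } : [j,k)\in\mathcal{J},\ k\le i }$ for the current $m$. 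Similarly, \Call{GetNearTransitions}{$i,m$} returns $\GetNear(i)$ with $J_{\mathrm{near}} = \set{ j\in J : j+m \le i < j+2m }$. These are deferred to Sections~\ref{sec:auxdatastructure} and~\ref{section:near-transitions2}.

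Next comes a consistency check: the $m$ seen by $\mathit{aux}$ must equal the $m$ of $\leftenumds$. Here I expect the main subtlety. Algorithm~\ref{alg:auxiliarytotarget_left} updates its own $m$ in $\ShrinkL$, so I must argue that the shrink is forwarded to $\mathit{aux}$ via $\ShrinkA(d)$, or read the pseudocode as sharing $m$. Under that reading, the intervals added by $\AddA(j,j+m)$ satisfy the precondition $j+m\le k$ trivially. The prefix requirement of $\AddA$ also holds, since $w[j\btw j+m)$ is a prefix of $\widehat{w}$ by the guarantee of $\AddL$ and because $m$ only decreases.

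With this in place I prove the set equality by double inclusion. For $\subseteq$, any element of $\GetA(i)$ comes from some $[j,k)\in\mathcal{J}$ with $k\le i$. By (i) we have $j\in J$, and by (ii) we have $j+m\le k^*_j\le k\le i$, so the element lies in the target set. Every element of $\GetNear(i)$ lies there as well, since $J_{\mathrm{near}}\subseteq J_{\mathrm{L}}$.

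For $\supseteq$, take $j\in J$ with $j+m\le i$. If $k^*_j\le i$, the minimizing interval $[j,k^*_j)$ witnesses $\evalc{[j+m\btw i)}\in\GetA(i)$. Otherwise $i<k^*_j\le j+2m$ by (ii), so $j\in J_{\mathrm{near}}$ and the transition appears in $\GetNear(i)$. This exhausts the cases and proves the lemma.
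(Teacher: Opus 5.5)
Your proposal is correct and follows the paper's proof. Both arguments use invariant (i) to rewrite $\GetA(i)$ as ranging over $j\in J$ with $k^*_j\le i$. Both then use invariant (ii) to sandwich $\{j+2m\le i\}\subseteq\{k^*_j\le i\}\subseteq\{j+m\le i\}$, so that the union with $\GetNear(i)$ is exactly the target set; your double inclusion is the same argument written out case by case.

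Your remark is also a fair point that the paper leaves implicit: the pseudocode of $\ShrinkL$ updates only its own copy of $m$, so the decrease must also be passed on to $\mathit{aux}$.
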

\begin{proof}
From the condition~(i) of Lemma~\ref{lem:left_ds_invariants}, we have
\begin{align*}
    \GetA(i) &= \left\{ \evalc{ [j+m \btw i) } : [j, k) \in \mathcal{J},\ k \leq i \right\}
    = \left\{ \evalc{ [j+m \btw i) } : j \in J,\ k^*_j \leq i \right\}.
\end{align*}
Moreover, from the condition~(ii) of Lemma~\ref{lem:left_ds_invariants}, we have
\begin{align*}
    \left\{ \evalc{ [j+m \btw i) } : j \in J,\ j+2m \leq i \right\}
    &\subseteq \left\{ \evalc{ [j+m \btw i) } : j \in J,\ k^*_j \leq i \right\}\\
    &\subseteq \left\{ \evalc{ [j+m \btw i) } : j \in J,\ j+m \leq i \right\}.
\end{align*}
Thus, we have
\begin{align*}
    \GetL(i) &= \GetA(i) \cup \Call{GetNearTransitions}{i,m}\\
    &= \left\{ \evalc{ [j+m \btw i) } : j \in J,\ k^*_j \leq i \right\}
    \cup \left\{ \evalc{ [j+m \btw i) } : j \in J,\ j+m \leq i < j+2m \right\}\\
    &= \left\{ \evalc{ [j+m \btw i) } : j \in J,\ j+m \leq i \right\}.\qedhere
\end{align*}
\end{proof}

\paragraph{On Time Complexity.}

Now we analyze the time complexity.
From the discussion in Sections~\ref{sec:auxdatastructure}~and~\ref{section:near-transitions2}, each query of $\auxds$ and the procedure \Call{GetNearTransitions}{$\cdot$} runs in constant time.
Thus, \Call{Initialize$_{\mathrm{L}}$}{$\cdot$} and \Call{Get$_{\mathrm{L}}$}{$\cdot$} runs in constant time.
To analyze the remaining queries, we define the potential $\Phi(Q)$ by $\sum_{(j,m')\in Q}(1+\log_2 m')$.
Then, each call of \Call{Add$_{\mathrm{L}}$}{$\cdot$} increases $\Phi(Q)$ by at most $1+\log_2 |w|\leq O(\log |w|)$, while each step of the loop in \Call{Shrink$_\mathrm{L}$}{$\cdot$} defined by line~\ref{line:left_ds:shrink:while-loop} decreases $\Phi(Q)$ by at least $1$.
Therefore, \Call{Add$_{\mathrm{L}}$}{$\cdot$} and \Call{Shrink$_{\mathrm{L}}$}{$\cdot$} runs in amortized $O(\log |w|)$-time and $O(1)$-time, respectively.

\section{Implementing $\auxds$}\label{sec:auxdatastructure}



In this section, we implement $\auxds$, the data structure introduced in Section~\ref{section:implementing-left-B}.
It supports queries $\ShrinkA$, $\AddA$, and $\GetA$ in $O(1)$ time.
Some of the ideas in this section are common to $\rightenumds$ in Section~\ref{sec:data_structures_right}, so we recommend reading that section first.

\begin{itembox}[l]{$\auxds$}
\begin{description}
    \item[Depending Global Data] \regex{} $C$, the entire string $w$, and factorization forest of $C$.
    \item[Shared Data] \mbox{}
    \begin{itemize}
    \item a set of \emph{intervals} $\mathcal{J} = \{ [j_1, k_1), [j_2, k_2), \ldots \}$
    \item a suffix $\widehat{w}$
    \item a non-negative integer $m$
    \end{itemize}
\end{description}
Providing the following queries or methods:
\begin{description}
    \item[\Call{Initialize$_\mathrm{A}$}{$\widehat{j}$}]: Set $\widehat{w}\leftarrow [\widehat{j},n)$ and $m\leftarrow n-\widehat{j}$.
    \item[\Call{Add$_\mathrm{A}$}{$j, k$}]: $\mathcal{J}\leftarrow \mathcal{J}\cup \{[j, k)\}$. It is guaranteed that $j+m \leq k$ and $w[j \btw k)$ is a prefix of $\widehat{w}$.
    \item[\Call{Shrink$_\mathrm{A}$}{$d$}]: Set $m\leftarrow m-d$. It is guaranteed that $d\geq 0$.
    \item[\Call{Get$_\mathrm{A}$}{$i$}]: Return the set
      $\set{ \transm{C}(\,w[j+m \btw i)\,) : [j, k) \in \mathcal{J},\ k \leq i }$. 
\end{description}
\end{itembox}

\subsection{Overview}


\paragraph{Differences from $\rightenumds$.}

Although $\auxds$ processes queries similar to $\rightenumds$, the following points are different:
\begin{itemize}
    \item The \Call{Get$_\mathrm{A}$}{$i$} query in $\auxds$ searches for indices $j$ to the left of a given index $i$: i.e., $[j, i)$, whereas \Call{Get$_\mathrm{R}$}{$i$} query in $\rightenumds$ searches for indices $j$ to the right of $i$: i.e., $[i, j)$.
    \item $\auxds$ maintains a set $\mathcal{J}$ of intervals $[j,k)$, not a set $J$ of indices $j$. Moreover, it is guaranteed that, for each $[j,k)\in \mathcal{J}$, 
    \begin{itemize}
        \item $j+m\leq k$ holds when it is added to $\mathcal{J}$, and
        \item $w[j\btw k)$ is a prefix of pre-determined string $\widehat{w}:=w[\widehat{j}\btw |w|]$.
    \end{itemize}
    \item $\auxds$ maintains an integer $m$, which decreases over time via \Call{Shrink$_\mathrm{A}$}{$\cdot$} method and influences the output of \Call{Get$_\mathrm{A}$}{$\cdot$} method.
\end{itemize}



        


    
    



\paragraph{Defining $\mathcal{D}_{\fna}$ and $\mini_{\fna}$.}

As in the case of $\rightenumds$, we utilize the factorization forest of $C$, where each node $f$ corresponds to the interval $[\ell_f,r_f)$.
We define the $\mathcal{D}_{\fna}$ and $\mini_f$, which correspond to $\mathcal{D}_{\fna}$ and $\mathrm{maxidx}_f$ in $\rightenumds$, respectively.
Precisely, for each node $\fna$, we define
\[
\mathcal{D}_{\fna} := \set{ \eval{[j + m \btw r_{\fna})} : [j, k) \in \mathcal{J},\ k \in [\ell_{\fna}, r_{\fna}) },
\]
and for each idempotent node $g$ and $\delta\in \transm{C}$, we define
\[
\mini_{\fnb}[\trans] := 
\min\set{ t : 1 \leq t \leq q,\ \delta \in \mathcal{D}_{f_t} \comp e_{\fnb}},
\]
where $q$ is the number of children of $g$ and $f_t$ is the $t$-th child of $g$.
Note that the definitions of these values depend on $m$; in particular, they change as a result of $\ShrinkA$ queries.

For convenience, we also introduce the following notation:
\[
\mathcal{J}_\fna := \set{\,[j, k) \in \mathcal{J} : k \in [\ell_{\fna}, r_{\fna}) }.
\]
In words, $\mathcal{J}_\fna$ is the set of intervals of $\mathcal{J}$ whose right end is covered by $\fna$.
Using this notation, we can rewrite $\mathcal{D}_{\fna}$ as follows:
\[
\mathcal{D}_{\fna} = \set{ \eval{[j + m \btw r_{\fna})} : [j, k) \in \mathcal{J}_\fna }.
\]

\paragraph{Lazy Update Strategy.}

If we manage $\mathcal{D}_f$ and $\mini_g$, then we can implement the $\GetA$ query in a way similar to the $\rightenumds$ (see Section~\ref{sec:aux_get} for a formal proof). 
However, it seems difficult to manage these values fully, because they depend on $m$, and every $\ShrinkA$ query requires updating all of them.

To resolve this issue, we adopt the so-called \emph{lazy update} strategy; we postpone updating the values of $\mathcal{D}_\fna$ and $\mini_{\fnb}[\cdot]$ until they are actually accessed.
Precisely, we introduce two new methods $\UpdateA(f)$ and $\UpdateIdemA(g)$, which updates the values of $\mathcal{D}_f$ and $\mini_g[\cdot]$, respectively, to their correct values. 
Whenever the algorithm refers to these values (in both $\AddA$ and $\GetA$ queries), the corresponding $\UpdateA(f)$ or $\UpdateIdemA(g)$ is called once immediately beforehand.

For each node $f$, let $m_f$ be the value of $m$ at the time when $\UpdateA(f)$ was last called (we explicitly manage this value $m_f$ together with the value described below). 
Since $m$ only decreases, we always have $m \leq m_f$. 
Instead of maintaining $\mathcal{D}_f$, we manage the following:
\[
\lazyD_{\fna} := \set{ \eval{ [j + \underline{\man_\fna} \btw r_\fna) } : [j, k) \in \mathcal{J}_\fna }.
\]
The only difference from $\mathcal{D}_f$ is the underlined part $m_{\fna}$, which was $m$ in $\mathcal{D}_f$.

Similarly, for each idempotent node $g$, let $m^{\mathrm{Idem}}_g$ be the value of $m$ at the time when $\UpdateIdemA(g)$ was last called (similarly, we explicitly manage this value).
Since $m$ only decreases, we always have $m\leq m^{\mathrm{Idem}}_g$.
Then, we manage the following:
\[
\lazymini_{\fnb}[\delta] :=
\min\left\{ t : 1 \leq t \leq q,\ \delta \in \underline{\{\evalc{[j+m^{\mathrm{Idem}}_g\btw r_{f_t})}\colon [j,k)\in \mathcal{J}_{f_t}\}} \comp e_{g} \right\},
\]
where $q$ is the number of children of $g$ and $f_t$ is the $t$-th child of $g$.
Note that the underlined part does not necessarily coincide with $\lazyD_{f_t}$ because it evaluates $w[j+m^{\mathrm{Idem}}_g\btw r_{f_t})$, not $w[j+m_{f_t}\btw r_{f_t})$.

Since the definition of $\lazyD_{\fna}$ and $\lazymini_{g}[\cdot]$ do not depend on $m$, these values do not change via $\ShrinkA$ queries.
Moreover, if $m_f=m$, $\lazyD_{f}$ coincides $\mathcal{D}_f$, and if $m^{\mathrm{Idem}}_g=m$, $\lazymini_{g}[\cdot]$ coincides $\mini_g[\cdot]$.


\newcommand{\inva}{\diamondsuit}
\newcommand{\baseJ}{\widehat{j}}
\newcommand{\localupd}[1]{\trans^{\mathrm{L}}_{#1}}

\todo{$m_f$ の定義がややこしいので update をふたつに分けた。要対応}

\paragraph{Implementing $\UpdateA$ and $\UpdateIdemA$.}
Now, we explain how to implement $\UpdateA$ and $\UpdateIdemA$.
We here use the invariant that, at any moment, for all nodes $f$ and $[j,k)\in \mathcal{J}_f$,
\[
    j + m_f\leq k, \quad w[\baseJ \btw \baseJ + m_{\fna}) = w[j \btw j + m_{\fna}),
\]
and for all idempotent node $g$ and $[j,k)\in \mathcal{J}_g$, 
\[
    j + m^{\mathrm{Idem}}_g\leq k, \quad w[\baseJ \btw \baseJ + m^{\mathrm{Idem}}_{g}) = w[j \btw j + m^{\mathrm{Idem}}_{g}).
\]
These invariants are from the precondition of $\AddA(j ,k)$, that is, $j+m\leq k$ and $w[j \btw k)$ is a prefix of $\widehat{w}=w[\baseJ \btw |w|]$.
Nevertheless, we defer the formal proofs of those invariants after we describe the precise implementation of $\auxds$.

The following lemma gives the update formula of $\lazyD_f$.
\begin{lemma}\label{lem:ds_aux_update_d}
Let $f$ be a node and assume 
\[
    j + m_f\leq k, \quad w[\baseJ \btw \baseJ + m_{\fna}) = w[j \btw j + m_{\fna})
\]
holds for all $[j,k)\in \mathcal{J}_f$.
Then, we have
\[
    \mathcal{D}_{\fna} = \evalc{ [ \baseJ + m \btw \baseJ + m_{\fna} ) } \comp \lazyD_{\fna}.
\]
\end{lemma}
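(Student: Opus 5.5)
The plan is to prove the identity element-wise, one interval $[j,k) \in \mathcal{J}_\fna$ at a time. Both sides are images of $\mathcal{J}_\fna$ under a map. By definition,
\[
\mathcal{D}_\fna = \set{ \evalc{[j+m \btw r_\fna)} : [j,k)\in\mathcal{J}_\fna }, \qquad \lazyD_\fna = \set{ \evalc{[j+m_\fna \btw r_\fna)} : [j,k)\in\mathcal{J}_\fna }.
\]
It therefore suffices to show that, for each fixed $[j,k)\in\mathcal{J}_\fna$,
\[
\evalc{[j+m \btw r_\fna)} = \evalc{[\baseJ+m \btw \baseJ+m_\fna)} \comp \evalc{[j+m_\fna \btw r_\fna)}.
\]
The set identity then follows directly from the definition of $\delta \comp X$ for a set $X$.

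To prove the pointwise identity, I first check that the positions are in order, $j+m \leq j+m_\fna \leq r_\fna$. The first inequality holds because $m$ only decreases, so $m \leq m_\fna$. For the second, the hypothesis $j+m_\fna \leq k$ together with $k \in [\ell_\fna, r_\fna)$, the defining condition of $\mathcal{J}_\fna$, gives $j+m_\fna \leq k < r_\fna$. Hence $w[j+m \btw r_\fna) = w[j+m \btw j+m_\fna)\, w[j+m_\fna \btw r_\fna)$, and since $\Delta_C$ is a monoid homomorphism,
\[
\evalc{[j+m \btw r_\fna)} = \evalc{[j+m \btw j+m_\fna)} \comp \evalc{[j+m_\fna \btw r_\fna)}.
\]

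It remains to replace the first factor by the one based at $\baseJ$. The hypothesis $w[\baseJ \btw \baseJ+m_\fna) = w[j \btw j+m_\fna)$ says that these two strings are equal. Taking their factors from offset $m$ to offset $m_\fna$ (which is valid since $0 \leq m \leq m_\fna$) gives $w[j+m \btw j+m_\fna) = w[\baseJ+m \btw \baseJ+m_\fna)$, so the two factors have the same image under $\Delta_C$. Substituting this completes the pointwise identity, and hence the lemma.

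The only real obstacle is the bookkeeping of the inequalities. Specifically, $m \leq m_\fna$ has to be justified from the monotonicity of $m$ noted where $m_\fna$ was defined. Also, $j+m_\fna \leq r_\fna$ must come from the hypothesis $j+m_\fna\leq k$ rather than from the weaker guarantee $j+m\le k$, so that the split point $j+m_\fna$ actually lies inside $[j+m, r_\fna]$.
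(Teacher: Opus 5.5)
Your proposal is correct and is essentially the paper's proof. The paper also splits $w[j+m \btw r_f)$ at $j+m_f$, justified by $j+m \le j+m_f \le k \le r_f$, applies the homomorphism property of $\Delta_C$, and uses $w[\widehat{j} \btw \widehat{j}+m_f) = w[j \btw j+m_f)$ to pull the common factor $\Delta_C(w[\widehat{j}+m \btw \widehat{j}+m_f))$ out of the set.
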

\begin{proof}
We have
\begin{align*}
    \mathcal{D}_{\fna}
    & =  \set{ \eval{ [j+m \btw r_\fna) } : [j, k) \in \mathcal{J}_\fna } \\
    & =  \set{ \eval{ [j+m \btw j+m_{\fna}) } \comp \eval{ [j+m_\fna \btw r_\fna) } : [j, k) \in \mathcal{J}_\fna }\\
    & =  \evalcc{ w[\baseJ + m \btw \baseJ + m_f) } \comp \set{ \evalc{ [j+m_f \btw r_f) } : [j, k) \in \mathcal{J}_f} \\
    & =  \evalcc{ w[\baseJ + m \btw \baseJ + m_f) } \comp \lazyD_{\fna},
\end{align*}
where the second equality follows from $j+m\leq j+m_f\leq k\leq r_f$, and the third equality follows from $w[\baseJ \btw \baseJ + m_{\fna}) = w[j \btw j + m_{\fna})$.
\end{proof}

The following lemma gives the update formula of $\lazymini_g$.
\begin{lemma}\label{lem:ds_aux_update_idem}
Let $g$ be an idempotent node and assume 
\[
    j + m^{\mathrm{Idem}}_g\leq k, \quad w[\baseJ \btw \baseJ + m^{\mathrm{Idem}}_{g}) = w[j \btw j + m^{\mathrm{Idem}}_{g})
\]
holds for all $[j,k)\in \mathcal{J}_g$.
Then, for each $\delta\in \transm{C}$, we have
\[
    \mini_{\fnb}[\trans] = 
    \min_{\delta' \in \transm{C}} \left\{ \lazymini_{\fnb}[\delta'] : \delta = \evalc{ [\baseJ + m \btw \baseJ + m^{\mathrm{Idem}}_{\fnb} ) } \comp \delta' \right\}.
\]
\end{lemma}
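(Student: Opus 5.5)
We prove the formula by mirroring the proof of Lemma~\ref{lem:ds_aux_update_d}, applied child by child inside the minimum. Write $\tau := \evalc{ [\baseJ + m \btw \baseJ + m^{\mathrm{Idem}}_{\fnb} ) }$. Let $f_1,\dots,f_q$ be the children of $\fnb$. For each $t$, define
\[
\mathcal{D}^{\mathrm{Idem}}_{f_t} := \set{\evalc{[j+m^{\mathrm{Idem}}_g\btw r_{f_t})} : [j,k)\in \mathcal{J}_{f_t}}.
\]
This is exactly the underlined set in the definition of $\lazymini_{\fnb}$.

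\textbf{Step 1: a per-child identity.} We first show that for every $t$,
\[
\mathcal{D}_{f_t} = \tau \comp \mathcal{D}^{\mathrm{Idem}}_{f_t}.
\]
Since $\mathcal{J}_{f_t}\subseteq \mathcal{J}_{\fnb}$, every $[j,k)\in\mathcal{J}_{f_t}$ satisfies the hypotheses of the lemma. Because $m\le m^{\mathrm{Idem}}_{\fnb}$, we have $j+m \le j+m^{\mathrm{Idem}}_{\fnb}\le k \le r_{f_t}$. Hence $w[j+m\btw r_{f_t})$ splits as $w[j+m\btw j+m^{\mathrm{Idem}}_{\fnb})\,w[j+m^{\mathrm{Idem}}_{\fnb}\btw r_{f_t})$. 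The homomorphism property of $\Delta_C$ then factors the transition accordingly. Finally, the prefix condition $w[\baseJ\btw \baseJ+m^{\mathrm{Idem}}_{\fnb}) = w[j\btw j+m^{\mathrm{Idem}}_{\fnb})$ lets us replace $w[j+m\btw j+m^{\mathrm{Idem}}_{\fnb})$ by $w[\baseJ+m\btw \baseJ+m^{\mathrm{Idem}}_{\fnb})$. Its transition is $\tau$, independent of $j$, and it can be pulled out of the set.

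\textbf{Step 2: membership in the product.} Multiplying Step 1 on the right by $e_{\fnb}$ and using associativity gives
\[
\mathcal{D}_{f_t}\comp e_{\fnb} = \tau\comp(\mathcal{D}^{\mathrm{Idem}}_{f_t}\comp e_{\fnb}).
\]
Therefore $\delta\in \mathcal{D}_{f_t}\comp e_{\fnb}$ holds iff there is some $\delta'\in \mathcal{D}^{\mathrm{Idem}}_{f_t}\comp e_{\fnb}$ with $\delta=\tau\comp\delta'$.

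\textbf{Step 3: exchanging the minima.} Unfolding the definition of $\mini_{\fnb}[\delta]$ and applying Step 2, we get
\begin{align*}
\mini_{\fnb}[\delta] &= \min\set{t : \exists \delta'.\ \delta=\tau\comp\delta',\ \delta'\in \mathcal{D}^{\mathrm{Idem}}_{f_t}\comp e_{\fnb}} \\
&= \min_{\delta' : \delta=\tau\comp\delta'} \min\set{t : \delta'\in \mathcal{D}^{\mathrm{Idem}}_{f_t}\comp e_{\fnb}}.
\end{align*}
The inner minimum is exactly $\lazymini_{\fnb}[\delta']$, which proves the formula. Both sides use the convention $\min\emptyset=+\infty$, so the case where no suitable $t$ or $\delta'$ exists is consistent on both sides.

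The only delicate point is Step 1. The split of $w[j+m\btw r_{f_t})$ needs both $j+m^{\mathrm{Idem}}_{\fnb}\le k$ and $k\le r_{f_t}$. The first comes from the hypothesis; the second holds because $k\in[\ell_{f_t},r_{f_t})$ by the definition of $\mathcal{J}_{f_t}$. The rest of the argument is bookkeeping with minima over finite sets.
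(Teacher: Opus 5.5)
Your proof is correct and follows the paper's own argument. You first obtain the per-child identity $\mathcal{D}_{f_t} = \tau \comp \{\Delta_C(w[j+m^{\mathrm{Idem}}_g \,..\, r_{f_t})) : [j,k)\in\mathcal{J}_{f_t}\}$ as in Lemma~\ref{lem:ds_aux_update_d}, then right-multiply by $e_g$, then rewrite the minimum over $t$ as a minimum over $\delta'$ with $\delta=\tau\comp\delta'$; you also make explicit two points the paper leaves implicit, namely that $\mathcal{J}_{f_t}\subseteq\mathcal{J}_g$ so the hypothesis applies child by child, and that the convention $\min\emptyset=+\infty$ keeps the exchange of minima consistent.
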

\begin{proof}
We denote the number of children of $g$ by $q$ and the $t$-th child of $g$ by $f_t$.
Then, by the argument similar to the proof of Lemma~\ref{lem:ds_aux_update_d}, for each $t$, we have
\begin{align*}
    \mathcal{D}_{f_t}
    =\evalc{ [ \baseJ + m \btw \baseJ + m^{\mathrm{Idem}}_{g} ) } \comp \set{ \evalc{ [j+m^{\mathrm{Idem}}_g \btw r_{f_t}) } : [j, k) \in \mathcal{J}_{f_t}}.
\end{align*}

Thus, we have
\begin{align*}
    \mini_{\fnb}[\delta]
    &= \min\set{ t : 1 \leq t \leq q,\ \delta \in \mathcal{D}_{\fna_t} \comp e_{\fnb} }\\
    &= \min\Bigl\{ t : 1 \leq t \leq q,\\
    &\quad \quad \trans \in \{\evalc{[\widehat{j}+m\btw \widehat{j}+m^{\mathrm{Idem}}_g)}\comp \evalc{[j+m^{\mathrm{Idem}}_g\btw r_{f_t})}\comp e_g \colon [j,k)\in \mathcal{J}_{f_t}\}\Bigr\}\\
    &= \min\Bigl\{ t : 1 \leq t \leq q,\ \exists \delta'\in \{\evalc{[j+m^{\mathrm{Idem}}_g\btw r_{f_t})}\comp e_g \colon [j,k)\in \mathcal{J}_{f_t}\}\},\\
    &\quad \quad \trans = \evalc{[\widehat{j}+m\btw \widehat{j}+m^{\mathrm{Idem}}_g)}\comp \delta'\}\Bigr\}\\
    & = \min_{\delta' \in \transm{C}} \left\{ \lazymini_{\fnb}[\delta'] : \delta = \evalc{ [\baseJ + m \btw \baseJ + m^{\mathrm{Idem}}_{\fnb} ) } \comp \delta' \right\}.\qedhere
\end{align*}
\end{proof}
Clearly, these updates can be computed in constant time.

\subsection{Implementing $\auxds$}

Now, we give an implementation of $\auxds$.
Due to its length, we divide our implementation into three pseudocodes, Algorithms~\ref{alg:ds_aux:1},~\ref{alg:ds_aux:2},~and~\ref{alg:ds_aux:3}.

\paragraph{Implementing $\InitializeA$, $\ShrinkA$, $\UpdateA$, and $\UpdateIdemA$.}

Algorithm~\ref{alg:ds_aux:1} includes the methods $\InitializeA$, $\ShrinkA$, $\UpdateA$, and $\UpdateIdemA$.
\Call{Initialize$_\mathrm{A}$}{$\widehat{j}$} initializes $\widehat{j}$, $m$, $\lazyD$, and $\lazymini$, which seems to consume $O(|w|)$ time. As done in $\rightenumds$, we consider this runs in constant time by creating and initializing each node only when the data structure first accesses it.

\Call{Shrink$_\mathrm{A}$}{$d$} just reduces $m$ by $d$ and clearly runs in constant time.
\Call{Update$_\mathrm{A}$}{$f$} and \Call{UpdateIdem$_\mathrm{A}$}{$g$} updates $\lazyD_f$ and $\lazymini_f$, respectively, using Lemma~\ref{lem:ds_aux_update_d} and Lemma~\ref{lem:ds_aux_update_idem}, respectively, which also run in constant time.

\begin{algorithm}
\caption{Implementation of $\auxds$ (first part).}\label{alg:ds_aux:1}
\Method{\emph{\Call{Initialize$_\mathrm{A}$}{$\widehat{j}$}}}{
    \KwIn{$\widehat{j}\in [|w|]$}
    Initialize $\widehat{j}$\;
    $m \gets |w| - \widehat{j} + 1$\;
    For each node $f$, $m_f \gets m$ and $\lazyD_f \gets \emptyset$\;
    For each idempotent node $g$, $m^{\mathrm{Idem}}_g\gets m$ and for each transition $\delta\in \transm{C}$, $\lazymini_g[\delta] \gets \infty$\;
}
\Method{\emph{\Call{Shrink$_\mathrm{A}$}{$d$}}}{
    \KwIn{$0\leq d\leq m$}
    $m \gets m - d$\;
}
\Method{\emph{\Call{Update$_\mathrm{A}$}{$\fna$}}}{
    \KwIn{A node $\fna$}
    $\lazyD_f \gets \evalc{ [\baseJ + m \btw \baseJ + m_{\fna}) } \comp \lazyD_f$\; \label{line:ds_aux_update_d}
    $m_\fna \gets m$\;
}
\Method{\emph{\Call{UpdateIdem$_\mathrm{A}$}{$g$}}}{
    \KwIn{An idempotent node $g$}
    \ForEach{$\trans\in \transm{C}$}{
        $\lazymini_{\fnb}[\trans] \gets \displaystyle\min_{\delta' \in \transm{C}} \left\{ \lazymini_{\fnb}[\delta'] : \delta = \evalc{ [\baseJ + m \btw \baseJ + m^{\mathrm{Idem}}_{\fnb} ) } \comp \delta' \right\}$\;  \label{line:ds_aux_update_idem}
    }
    $m^{\mathrm{Idem}}_g \gets m$\;
}
\end{algorithm}

\begin{algorithm}
\caption{Implementation of $\auxds$ (second part).}\label{alg:ds_aux:2}
\Method{\emph{\Call{Add$_\mathrm{A}$}{$j,k$}}}{
    \KwIn{An interval $[j, k)$ such that $w[j \btw k)$ is a prefix of $\widehat{w} = w[\baseJ \btw |w|]$}
    $f\gets $ the leaf node corresponding to the index $k$\;
    \While{$\fna$ is not root}{\label{line:ds_aux_while}
        $\fnb \gets f.\mathrm{parent}$\;
        \Call{Update$_{\mathrm{A}}$}{$\fna$}\; \label{line:add-update-node}
        $\delta \gets \eval{ [j+m \btw r_\fna) }$\;
        $\lazyD_f.\mathrm{add}(\delta)$\;\label{line:ds_aux_add_d}

        \If{$\fnb$ is an idempotent node}{
            \Call{UpdateIdem$_{\mathrm{A}}$}{$\fnb$}\;\label{line:add-update_idempotent}
            $p \gets \fnb.\mathrm{childIdx}(\fna)$ \tcp*{ $\fna$ is the $p$-th child of $\fnb$}
            $\lazymini_g[\delta \comp e_\fnb] \gets \min(p, \lazymini_g[\delta \comp e_\fnb])$\; \label{line:ds_aux_updatemax}
        }
        $f \gets g$\;
    }
}
\end{algorithm}

\paragraph{Implementing $\AddA$.}

Algorithm~\ref{alg:ds_aux:2} implements the method \AddA{}.
As in previous algorithms, we adopt classes and modules; particularly, we write
\begin{itemize}
    \item $S.\mathrm{add}(x)$ for the operation of adding the element $x$ to the set $S$,
    \item $f.\mathrm{parent}$ to represent the parent of the node $f$,
    \item $g.\mathrm{numOfChildren}$ to represent the number of children of the node $g$,
    \item $g.\mathrm{childOf}(p)$ to represent the $p$-th child (from the left) of the node $g$ ($1$-origin), and
    \item $g.\mathrm{childIdx}(f)$ to represent, for two nodes $f$ and $g$ with $g=f.\mathrm{parent}$, an integer $p$ such that $f$ is the $p$-th child of $g$ (from the left).
\end{itemize}
Now we prove the correctness of Algorithm~\ref{alg:ds_aux:2}.
First, we prove the invariants used in Lemmas~\ref{lem:ds_aux_update_d}~and~\ref{lem:ds_aux_update_idem}.
It is clear that $\UpdateA$ and $\UpdateIdemA$ does not break those invariants because $m_f$ and $m^{\mathrm{Idem}}_g$ only decrease.
Moreover, we remark that the method $\GetA$, which we implement in Algorithm~\ref{alg:ds_aux:3}, does not change $\lazyD$ or $\lazymini$ other than by calling $\UpdateA$ or $\UpdateIdemA$, and thus, does not break the invariant.
Therefore, it suffices to prove the following two lemmas.
\begin{lemma}\label{lem:aux_add_invariant}
Assume that
\[
    j + m_f\leq k, \quad w[\baseJ \btw \baseJ + m_{\fna}) = w[j \btw j + m_{\fna})
\]
holds for all nodes $f$ and $[j,k)\in \mathcal{J}_{f}$.
Then, after calling $\AddA(\cdot)$, this invariant still holds.
\end{lemma}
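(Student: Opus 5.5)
The plan is to check separately the two kinds of intervals that can appear in $\mathcal{J}_f$ after the call $\AddA(j,k)$: the intervals already present beforehand, and the newly added interval $[j,k)$. The two quantities in the invariant, $m_f$ and $\mathcal{J}_f$, are the only things that can change, so these two checks cover everything.

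For intervals already present, I will first record that $\AddA$ modifies $m_f$ only through calls to $\UpdateA(f)$ (line~\ref{line:add-update-node}). Such a call sets $m_f \gets m$, and $m \le m_f$ always holds. Hence $m_f$ can only decrease. For any old $[j',k')\in\mathcal{J}_f$ we had $j'+m_f^{\mathrm{old}}\le k'$ and $w[\baseJ\btw\baseJ+m_f^{\mathrm{old}})=w[j'\btw j'+m_f^{\mathrm{old}})$. Shrinking $m_f$ preserves the inequality. It also preserves the equality, because equality of two strings implies equality of their prefixes of any shorter length. Nodes not visited by the while loop keep their $m_f$ unchanged.

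For the new interval, note that $[j,k)$ enters $\mathcal{J}_f$ exactly for the nodes $f$ with $k\in[\ell_f,r_f)$. These are precisely the nodes the while loop of line~\ref{line:ds_aux_while} visits, moving from the leaf of $k$ upward; the root needs a brief separate remark, either that it is never queried or that the same argument applies after an $\UpdateA$. For each visited non-root $f$, $\UpdateA(f)$ is called, so after the call $m_f=m$. The precondition of $\AddA$ then gives both parts of the invariant directly. First, $j+m\le k$, which is $j+m_f\le k$. Second, $w[j\btw k)$ is a prefix of $\widehat w=w[\baseJ\btw|w|]$. Since $m\le k-j$, the prefix of length $m$ agrees, giving $w[\baseJ\btw\baseJ+m_f)=w[j\btw j+m_f)$. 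Later steps of the same call that update other nodes do not touch this $m_f$.

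The main obstacle I expect is one subtlety in the Update step: the formula of Lemma~\ref{lem:ds_aux_update_d} is justified only if the invariant held before the update. Here it does hold, by the induction hypothesis, because the update at $f$ happens before $[j,k)$ is inserted into $\lazyD_f$. The ordering in Algorithm~\ref{alg:ds_aux:2}, where $\UpdateA$ is called first and the element is added afterward, is exactly what makes this work, and I will state this explicitly.
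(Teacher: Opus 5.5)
Your proposal is correct and follows the paper's argument. For old intervals, $m_f$ only decreases, and your prefix-truncation remark shows this preserves both conditions. For the new interval, the call $\UpdateA(f)$ in the loop sets $m_f=m$, so the precondition of $\AddA$ gives both conditions.

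On the root you are more careful than the paper, whose proof asserts that the loop visits every $f$ with $k\in[\ell_f,r_f)$. The loop stops before the root and never calls $\UpdateA$ there, so only your first option works: the invariant should be stated for non-root nodes. This suffices because $\GetA$ reads $\lazyD$ only at the leaf of $i$ and at left siblings, and hence never at the root, except in the degenerate one-leaf forest.
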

\begin{proof}
Since $m_f$ does not increase, it suffices to prove that the invariant holds for the interval $[j,k)$ that is newly added to $\mathcal{J}$.
Since the loop of line~\ref{line:ds_aux_while} iterates over all nodes $f$ with $k\in [\ell_f,r_f)$, $\UpdateA(f)$ is called for each $f$ with $[j,k)\in \mathcal{J}_f$, and particularly, $m_f$ is updated to $m$.
Thus, from the precondition of $\AddA(j,k)$, the invariant holds.
\end{proof}

\begin{lemma}\label{lem:aux_add_invariant_idem}
Assume that
\[
    j + m^{\mathrm{Idem}}_g\leq k, \quad w[\baseJ \btw \baseJ + m^{\mathrm{Idem}}_{g}) = w[j \btw j + m^{\mathrm{Idem}}_{g})
\]
holds for all idempotent node $g$ and $[j,k)\in \mathcal{J}_g$.
Then, after calling $\AddA(\cdot)$, this invariant still holds.
\end{lemma}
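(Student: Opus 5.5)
We mirror the proof of Lemma~\ref{lem:aux_add_invariant}, now for the idempotent bookkeeping. Two kinds of intervals must be checked after $\AddA(j,k)$: the intervals already in $\mathcal{J}_g$, and the newly inserted interval $[j,k)$.

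For an old interval $[j',k')\in\mathcal{J}_g$, the only thing $\AddA$ can do to $g$ is call $\UpdateIdemA(g)$. That call replaces $m^{\mathrm{Idem}}_g$ by the current $m$, which can only be smaller. Smaller values preserve both parts of the invariant:
\begin{itemize}
\item $j'+m\leq j'+m^{\mathrm{Idem}}_g\leq k'$;
\item equality of the longer prefixes $w[\baseJ\btw\baseJ+m^{\mathrm{Idem}}_g)=w[j'\btw j'+m^{\mathrm{Idem}}_g)$ implies equality of the shorter prefixes of length $m$.
\end{itemize}
Idempotent nodes not visited by the loop keep their old $m^{\mathrm{Idem}}_g$ and their old set $\mathcal{J}_g$, so nothing changes for them.

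For the new interval, first identify which idempotent nodes $g$ now have $[j,k)\in\mathcal{J}_g$. These are exactly the idempotent nodes with $k\in[\ell_g,r_g)$, i.e.\ the idempotent ancestors of the leaf for $k$. The loop at line~\ref{line:ds_aux_while} walks every edge $(f,g)$ on the leaf-to-root path, so each such $g$ occurs as a parent $f.\mathrm{parent}$. Whenever that parent is idempotent, line~\ref{line:add-update_idempotent} calls $\UpdateIdemA(g)$, which sets $m^{\mathrm{Idem}}_g=m$.

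The precondition of $\AddA(j,k)$ then gives the invariant for $[j,k)$ directly:
\begin{itemize}
\item $j+m\leq k$, which is the first part of the invariant since $m^{\mathrm{Idem}}_g=m$;
\item $w[j\btw k)$ is a prefix of $\widehat{w}=w[\baseJ\btw|w|]$, and since $m\leq k-j$ this yields $w[j\btw j+m)=w[\baseJ\btw\baseJ+m)$.
\end{itemize}

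The one point needing care is the root. The loop stops when $f$ is the root, so the root itself is never processed as some $f$. It is, however, processed as a parent: on the last iteration $g$ is the root, so if the root is idempotent it still receives $\UpdateIdemA$. This covers every idempotent node containing $k$ and completes the induction step.
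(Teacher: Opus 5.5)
Your proof is correct and follows the same route as the paper. Old intervals stay valid because $m^{\mathrm{Idem}}_g$ only decreases, and for the new interval $[j,k)$ the loop calls $\UpdateIdemA(g)$ on every idempotent ancestor of the leaf for $k$, which sets $m^{\mathrm{Idem}}_g=m$, so the precondition of $\AddA(j,k)$ gives the invariant; your explicit handling of the old intervals and of the root (reached only as a parent) just spells out details the paper leaves implicit.
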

\begin{proof}
Since $m^{\mathrm{Idem}}_f$ does not increase, it suffices to prove that the invariant holds for the interval $[j,k)$ that is newly added to $\mathcal{J}$.
Since the loop of line~\ref{line:ds_aux_while} iterates over all idempotent nodes $g$ with $k\in [\ell_g,r_g)$, $\UpdateIdemA(g)$ is called for each $g$ with $[j,k)\in \mathcal{J}_g$, and particularly, $m^{\mathrm{Idem}}_g$ is updated to $m$.
Thus, from the precondition of $\AddA(j,k)$, the invariant holds.
\end{proof}

Combining Lemmas~\ref{lem:ds_aux_update_d}~and~\ref{lem:aux_add_invariant} yields that, after calling $\UpdateA(f)$, $\lazyD_f$ coincides $\mathcal{D}_f$.
Moreover, combining Lemmas~\ref{lem:ds_aux_update_idem}~and~\ref{lem:aux_add_invariant_idem} yields that, after calling $\UpdateIdemA(g)$, $\lazymini_g$ coincides $\mini_g$.
These conditions continue to be satisfied until a $\ShrinkA$ query decreases $m$.
Therefore, hereafter, we identify
$\lazyD_f$ with $\mathcal{D}_f$ and
$\lazymini_g$ with $\mini_g$ on the nodes on which $\UpdateA$ and $\UpdateIdemA$, respectively, are called after the last $\ShrinkA$ query.

The statements and proofs of Lemmas~\ref{lem:ds_aux_d_lem}--\ref{lem:ds_aux_idem} are analogous to and almost same as those of Lemmas~\ref{lem:ds_right_d_lem}--\ref{lem:ds_right_idem}.

\begin{lemma}\label{lem:ds_aux_d_lem}
Let $j \in [|w|]$.
Then, the query $\AddA(j,k)$ adds $\evalc{ [j+m \btw r_f)}$ to $\mathcal{D}_{\fna}$ for all nodes $\fna$ with $k \in [\ell_\fna, r_\fna)$ (and only for those nodes).
\end{lemma}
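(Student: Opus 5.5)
The plan mirrors Lemma~\ref{lem:ds_right_d_lem}. I would trace the loop of Algorithm~\ref{alg:ds_aux:2} and describe exactly which nodes it visits and what each visit adds.

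First, the node set. The loop starts at the leaf for index $k$ and repeatedly moves to the parent. Since every node covers a contiguous interval and the children of a node partition its interval, the visited nodes are exactly the ancestors $\fna$ of that leaf, that is, the nodes with $k \in [\ell_\fna, r_\fna)$. Nodes not on this leaf-to-root path are never touched by line~\ref{line:ds_aux_add_d}. One point needs care: the loop condition ``$\fna$ is not root'' seems to skip the root. I would treat this the same way as in Lemma~\ref{lem:ds_right_d_lem}. Either the root's $\mathcal{D}$ is never consulted by $\GetA$, or the loop is read as also processing the root.

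Second, the content of each visit. For a visited $\fna$, line~\ref{line:add-update-node} calls $\UpdateA(\fna)$ before the insertion. By Lemma~\ref{lem:ds_aux_update_d}, which applies because its invariant holds by Lemma~\ref{lem:aux_add_invariant} for the intervals already in $\mathcal{J}_\fna$, this makes $m_\fna = m$ and $\lazyD_\fna = \mathcal{D}_\fna$ for the current $m$. The algorithm then adds $\trans = \evalc{[j+m \btw r_\fna)}$ to $\lazyD_\fna$. For the new interval $[j,k)$ we have $j+m \le k < r_\fna$, so this transition is well defined. It is exactly the term that $[j,k)$ contributes to $\mathcal{D}_\fna$ under the definition with the current $m$. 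Hence after the step $\lazyD_\fna = \mathcal{D}_\fna$ with $[j,k)$ included in $\mathcal{J}_\fna$.

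The only real obstacle is bookkeeping. ``Adds to $\mathcal{D}_\fna$'' must be read through the identification of $\lazyD_\fna$ with $\mathcal{D}_\fna$ that holds right after $\UpdateA(\fna)$, as the paper notes just before this lemma. I must also check that the new interval satisfies the invariant of Lemma~\ref{lem:ds_aux_update_d} at $m_\fna = m$. That follows from the precondition of $\AddA(j,k)$: $j + m \le k$ and $w[j \btw k)$ is a prefix of $\widehat{w}$. Once this is settled, the statement follows directly, just as in Lemma~\ref{lem:ds_right_d_lem}.
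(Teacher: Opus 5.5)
Your proposal is correct and follows the paper's proof, which consists of exactly your two observations: the loop of line~\ref{line:ds_aux_while} visits precisely the nodes $\fna$ with $k \in [\ell_\fna, r_\fna)$, and line~\ref{line:ds_aux_add_d} inserts $\evalc{[j+m \btw r_\fna)}$ into each of them. Your extra bookkeeping only makes explicit what the paper leaves implicit: you read the insertion through the identification of $\lazyD_\fna$ with $\mathcal{D}_\fna$ right after $\UpdateA(\fna)$, and you note that the loop skips the root but $\GetA$ never consults it.
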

\begin{proof}
The loop defined in line~\ref{line:ds_aux_while} iterates over all nodes $\fna$ with $k \in [\ell_\fna, r_\fna)$.
For each of such $\fna$, line~\ref{line:ds_aux_add_d} adds the transition $\evalc{ [j+m \btw r_f) }$ to $\mathcal{D}_\fna$.
\end{proof}

Thus, we have the following, which is immediate from Lemma~\ref{lem:ds_aux_d_lem}.
\begin{lemma}\label{lem:ds_aux_d}
Let $\mathcal{J}$ be a set of indices.
Assume we have invoked $\AddA(j,k)$ once for each $[j,k) \in \mathcal{J}$, in any order.
Then, for every node $\fna$ such that $\UpdateA(f)$ is called after the last $\ShrinkA$ query,
$\mathcal{D}_{\fna} = \set{ \evalc{ [j + m\btw r_\fna) } : [j,k) \in \mathcal{J}_f}$.
\end{lemma}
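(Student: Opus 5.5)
The statement follows from Lemma~\ref{lem:ds_aux_d_lem} by induction over the sequence of operations. Maintain the invariant that for every node $\fna$ on which $\UpdateA(\fna)$ has been called after the last $\ShrinkA$ query,
\[
\lazyD_\fna = \set{ \evalc{ [j + m\btw r_\fna) } : [j,k) \in \mathcal{J}_\fna }.
\]
Here $\mathcal{J}$ is the set of intervals added so far. Since $\mathcal{D}_\fna$ is identified with $\lazyD_\fna$ on such nodes, this is exactly the claim.

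The argument splits into two parts.

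\textbf{Update gives the correct value.} When $\UpdateA(\fna)$ is called, line~\ref{line:ds_aux_update_d} replaces $\lazyD_\fna$ by $\evalc{[\baseJ+m\btw \baseJ+m_\fna)}\comp\lazyD_\fna$ and sets $m_\fna\gets m$. The invariants $j+m_\fna\le k$ and $w[\baseJ\btw\baseJ+m_\fna)=w[j\btw j+m_\fna)$ hold for all $[j,k)\in\mathcal{J}_\fna$ by Lemma~\ref{lem:aux_add_invariant}. So by Lemma~\ref{lem:ds_aux_update_d}, the new value equals $\set{\evalc{[j+m\btw r_\fna)} : [j,k)\in\mathcal{J}_\fna}$ for the intervals added before this call.

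\textbf{Later additions keep it correct.} Take a subsequent $\AddA(j,k)$ call with no intervening $\ShrinkA$, so $m$ and hence $m_\fna=m$ stay fixed. By Lemma~\ref{lem:ds_aux_d_lem}, this call touches exactly the nodes $\fna$ with $k\in[\ell_\fna,r_\fna)$, which are the nodes whose $\mathcal{J}_\fna$ gains $[j,k)$. On each such node it first calls $\UpdateA(\fna)$. This is a no-op on the set, since the multiplier $\evalc{[\baseJ+m\btw\baseJ+m)}$ is the identity. It then inserts $\evalc{[j+m\btw r_\fna)}$. Nodes not touched have an unchanged $\mathcal{J}_\fna$. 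Hence the invariant persists. Intervals added before the last update on $\fna$ are covered by the first part, and those added after are covered here. Because each interval is added once, their union is precisely $\mathcal{J}_\fna$.

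\textbf{Main obstacle.} The delicate point is the interaction with $\ShrinkA$: the set $\lazyD_\fna$ is only correct after a fresh update. Interval membership in $\mathcal{J}_\fna$ depends only on $k$ and not on $m$, and $\UpdateA$ correctly transports the old contributions to the current $m$. Together these make the ordering of additions irrelevant. That justifies the claim "in any order" for the nodes updated after the last shrink.
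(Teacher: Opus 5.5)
Your argument is correct and is essentially the paper's: the paper obtains this lemma as immediate from Lemma~\ref{lem:ds_aux_d_lem}, together with the preceding remark that, by Lemmas~\ref{lem:ds_aux_update_d} and~\ref{lem:aux_add_invariant}, $\lazyD_\fna$ coincides with $\mathcal{D}_\fna$ right after $\UpdateA(\fna)$ and stays so until the next $\ShrinkA$; you have just written that induction out explicitly. One point is left implicit, in the paper as well: to invoke Lemma~\ref{lem:ds_aux_update_d} on a node last updated before an earlier shrink, you need the stored set to equal $\set{\evalc{[j+\man_\fna\btw r_\fna)} : [j,k)\in\mathcal{J}_\fna}$ for its current $\man_\fna$, and this holds because any $\AddA$ that enlarges $\mathcal{J}_\fna$ calls $\UpdateA(\fna)$ first.
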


We now ensure that the method $\AddA$ correctly updates $\mini_{\fnb}[\trans]$ by the following lemma.

\begin{lemma}\label{lem:ds_aux_idempotent_lem_2}
The query $\AddA(j,k)$ updates $\mini_{\fnb}[\trans]$ by $\min(p,\mini_{\fnb}[\trans])$ for all idempotent node $\fnb$ with $k\in [\ell_{\fnb},r_{\fnb})$, where 
\begin{itemize}
    \item $\trans=\evalc{ [j+m \btw r_f) }\comp e_g$, where $\fna$ is the unique child of $\fnb$ with $k\in [\ell_{\fna},r_{\fna})$, and
    \item $p$ is the integer such that $\fna$ is the $p$-th child of $\fnb$,
\end{itemize}
and does not update other idempotent nodes.
\end{lemma}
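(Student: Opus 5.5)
The plan is to mirror the proof of Lemma~\ref{lem:ds_right_idempotent_lem_2}: trace the while loop of line~\ref{line:ds_aux_while} in Algorithm~\ref{alg:ds_aux:2}, and identify the lazily stored $\lazymini_g$ with the true $\mini_g$ at the moment it is modified. The loop starts at the leaf of index $k$ and climbs to the root. It therefore visits exactly the pairs $(f,g)$ with $g=f.\mathrm{parent}$ and $k\in[\ell_f,r_f)\subseteq[\ell_g,r_g)$. Every node $g$ with $k\in[\ell_g,r_g)$ other than the leaf appears exactly once as such a parent, and its child $f$ on the path is the unique child containing $k$. Nodes off this root-to-leaf path are never touched. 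In particular, no idempotent node with $k\notin[\ell_g,r_g)$ is updated, which settles the ``does not update other idempotent nodes'' clause.

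Next I fix one iteration in which $g$ is idempotent and check three things.

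\textbf{First}, line~\ref{line:add-update-node} calls $\UpdateA(f)$, which sets $m_f=m$. The value $\delta$ computed next is then literally $\evalc{[j+m\btw r_f)}$, matching the statement.

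\textbf{Second}, line~\ref{line:add-update_idempotent} calls $\UpdateIdemA(g)$ before the modification at line~\ref{line:ds_aux_updatemax}. By Lemmas~\ref{lem:ds_aux_update_idem} and~\ref{lem:aux_add_invariant_idem}, after this call $m^{\mathrm{Idem}}_g=m$ and $\lazymini_g[\cdot]=\mini_g[\cdot]$ for the current $m$ and the current $\mathcal{J}$, that is, before the new interval has been accounted for at $g$. Here I must invoke the invariant in the form valid before $[j,k)$ is inserted. The update formula only needs the invariant for intervals already in $\mathcal{J}_g$, and it holds for them by the inductive hypothesis.

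\textbf{Third}, line~\ref{line:ds_aux_updatemax} then sets $\lazymini_g[\delta\comp e_g]\gets\min(p,\lazymini_g[\delta\comp e_g])$, where $p=g.\mathrm{childIdx}(f)$. Under the identification just established, this is exactly the claimed update $\mini_g[\delta\comp e_g]\gets\min(p,\mini_g[\delta\comp e_g])$.

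The delicate point is ordering, so I would spell it out rather than gloss over it. Within one iteration, $\UpdateA(f)$ precedes the read of $m$ used in $\delta$, and $\UpdateIdemA(g)$ precedes the write at line~\ref{line:ds_aux_updatemax}. Since $m$ does not change during $\AddA$, the later call $\UpdateA(g)$ in the next iteration does not disturb $m^{\mathrm{Idem}}_g$. That call also does not interact with $\lazymini_g$, because $\UpdateA$ and $\UpdateIdemA$ touch disjoint fields. With these observations the argument is a line-by-line correspondence, as in Lemma~\ref{lem:ds_right_idempotent_lem_2}.
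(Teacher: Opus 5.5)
Your proposal is correct and follows essentially the same route as the paper, whose proof simply observes that the loop of line~\ref{line:ds_aux_while} visits exactly the pairs $(g,f)$ on the leaf-to-root path of $k$, and that line~\ref{line:ds_aux_updatemax} performs the $\min(p,\cdot)$ update at index $\delta\comp e_g$ precisely when $g$ is idempotent. Your extra bookkeeping about calling $\UpdateIdemA(g)$ before the write, so that $\lazymini_g$ coincides with $\mini_g$ for the pre-insertion $\mathcal{J}$, spells out the identification the paper adopts globally just before Lemma~\ref{lem:ds_aux_d_lem}; it is a sound elaboration rather than a different argument.
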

\begin{proof}
The loop of line~\ref{line:ds_aux_while} iterates over all $(g,f)$ satisfying the condition of the lemma, and line~\ref{line:ds_aux_updatemax} is processed when and only when $g$ is idempotent, $f$ is the $p$-th child of $g$, and $\delta\comp e_g=\evalc{ [j+m\btw r_f) }\comp e_g$.
\end{proof}

Now, we have the following.
\begin{lemma}\label{lem:ds_aux_idem}
Let $\mathcal{J}$ be a set of intervals.
Assume we have invoked $\AddA(j,k)$ once for each $[j,k) \in \mathcal{J}$, in any order.
Then, for each idempotent node $\fnb$ and $\trans \in \transm{C}$, we have
\[
    \maxidx_{\fnb}[\trans] = \max\set{ t : 1 \leq t \leq q,\ \trans \in \mathcal{D}_{\fna_t}\comp e_g }.
\]
where $q$ is the number of children of $\fnb$ and $\fna_t$ is the $t$-th child of $\fnb$.
\end{lemma}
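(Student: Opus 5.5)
The plan is to mirror the proof of Lemma~\ref{lem:ds_right_idem} almost line by line. Read the stated identity for $\maxidx_{\fnb}[\trans]$ as the quantity that the update rule of Lemma~\ref{lem:ds_aux_idempotent_lem_2} maintains. Each call of $\AddA(j,k)$ touches the idempotent ancestor $\fnb$ of the leaf of $k$ in one way only: it combines the current stored value for $\trans=\evalc{[j+m\btw r_{\fna})}\comp e_{\fnb}$ with the child index $p$ of the unique child $\fna$ of $\fnb$ containing $k$. I will state the argument for the max-combination, as in the statement; the same argument applies verbatim to the min-combination actually used in line~\ref{line:ds_aux_updatemax}. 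As in the previous lemmas, I work at a moment when $\UpdateIdemA(\fnb)$ and $\UpdateA(\fna_t)$ have been called after the last $\ShrinkA$, so that the lazy values may be identified with the current ones.

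First, from Lemma~\ref{lem:ds_aux_idempotent_lem_2}, I will show by induction over the sequence of $\AddA$ calls the following description. The stored value $\maxidx_{\fnb}[\trans]$ equals the maximum integer $t$ such that some $[j,k)\in\mathcal{J}$ satisfies both $k\in[\ell_{\fna_t},r_{\fna_t})\subseteq[\ell_{\fnb},r_{\fnb})$ and $\trans=\evalc{[j+m\btw r_{\fna_t})}\comp e_{\fnb}$. The order of the calls is irrelevant because $\max$ is commutative and idempotent. One point needs care: $m$ may have decreased between calls. This is exactly what $\UpdateIdemA$ handles via Lemma~\ref{lem:ds_aux_update_idem}, which re-expresses all previously recorded transitions with the current $m$ by left-multiplying with $\evalc{[\baseJ+m\btw\baseJ+m^{\mathrm{Idem}}_{\fnb})}$. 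For this step I invoke the invariant of Lemma~\ref{lem:aux_add_invariant_idem}, which guarantees the hypotheses of Lemma~\ref{lem:ds_aux_update_idem}.

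Second, Lemma~\ref{lem:ds_aux_d} gives
\[
\set{\evalc{[j+m\btw r_{\fna_t})}\comp e_{\fnb} : [j,k)\in\mathcal{J}_{\fna_t}} = \mathcal{D}_{\fna_t}\comp e_{\fnb}.
\]
Substituting this into the description from the first step turns the existential condition over intervals into $\trans\in\mathcal{D}_{\fna_t}\comp e_{\fnb}$. This yields the claimed formula $\max\set{t : 1\le t\le q,\ \trans\in\mathcal{D}_{\fna_t}\comp e_{\fnb}}$.

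The main obstacle is the first step, namely verifying that the lazily re-based values remain consistent across interleaved $\ShrinkA$ calls. The key fact is that left-multiplication by a fixed transition carries each recorded index $t$ to the correctly re-based transition. Hence taking the extremum over preimages, as in line~\ref{line:ds_aux_update_idem}, reproduces the extremum computed directly with the current $m$. Everything else is a direct transcription of the right-enumerator argument.
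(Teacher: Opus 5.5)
Your proposal is correct and follows the paper's proof. The paper likewise reads off from Lemma~\ref{lem:ds_aux_idempotent_lem_2} that the stored value is the extremum of those $t$ for which some $[j,k)\in\mathcal{J}_{\fna_t}$ has $\trans=\evalc{[j+m\btw r_{\fna_t})}\comp e_{\fnb}$, and then rewrites that set of transitions as $\mathcal{D}_{\fna_t}\comp e_{\fnb}$; your explicit induction through $\UpdateIdemA$ (via Lemmas~\ref{lem:ds_aux_update_idem} and~\ref{lem:aux_add_invariant_idem}) only spells out the lazy-value identification the paper makes just before Lemma~\ref{lem:ds_aux_d_lem}.

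Two small remarks. First, the rewriting into $\mathcal{D}_{\fna_t}\comp e_{\fnb}$ holds by the definition of $\mathcal{D}_{\fna_t}$, so you need not assume $\UpdateA(\fna_t)$ was called. This matters because $\GetA$ does not call it on all children. Second, you are right that the quantity actually maintained is $\mini_{\fnb}$ under $\min$-updates; the $\max$ in the statement is a leftover from Lemma~\ref{lem:ds_right_idem}.
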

\begin{proof}
From Lemma~\ref{lem:ds_aux_idempotent_lem_2},
$\maxidx_{\fnb}[\trans]$ is equal to the maximum among integers $t$ such that there is an interval $[j,k)\in \mathcal{J}$ with
\begin{itemize}
    \item $k\in [\ell_{\fna_t}, r_{\fna_t})\subseteq [\ell_{\fnb}, r_{\fnb})$, where $\fna_t$ is the $t$-th child of $g$, and
    \item $\delta = \evalc{[j+m, r_{\fna_t})}\comp e_g$.
\end{itemize}
By definition of $\mathcal{D}_{f_t}$, we have
\begin{align*}
    \set{\evalc{ [j+m\btw r_{\fna_t}) }\comp e_g : [j,k) \in \mathcal{J}_{f_t} }
    =\mathcal{D}_{\fna_t}\comp e_g.
\end{align*}
Thus, we have
\begin{align*}
    \maxidx_{\fnb}[\trans] 
    &= \max\set{ t : 1 \leq t \leq q,\ \exists [j,k)\in \mathcal{J}_{f_t},\ \trans = \evalc{ [j+m, r_{\fna_t}) }\comp e_g }\\
    &= \max\set{ t : 1 \leq t \leq q, \trans \in \mathcal{D}_{\fna_t}\comp e_g }.\qedhere
\end{align*}
\end{proof}

\paragraph{Implementing $\GetA$.}\label{sec:aux_get}

\begin{algorithm}[t!]
\caption{Implementation of $\auxds$ (third part).}\label{alg:ds_aux:3}
\Method{\emph{\Call{Get$_\mathrm{A}$}{$i$}}}{
    \KwIn{$i\in [|w|]$}
    Let $\fna$ be the leaf node corresponding to the index $i$\;
    $\UpdateA(f)$\;
    $\accum \gets \lazyD_{\fna}$\;\label{line:ds_aux_lengthzero}
    \While{$f$ is not root}{\label{line:ds_aux_get_while}
        $g\gets f.\mathrm{parent}$\;
        $p\gets g.\mathrm{childIdx}(f)$\tcp*{$\fna$ is the $p$-th child of $\fnb$}
        \If(\tcp*[f]{IF: $\fna$ is not the leftmost child of $\fnb$}){$p>1$}{
            $\fna^- \gets \fnb.\text{childOf}(p-1)$ \tcp*{$\fna^-$ is the immediate left sibling of $\fna$}
            $p \gets g.\text{childIdx}(f^{-})$\;
            $\UpdateA(\fna^-)$\;
            \ForEach{$\trans \in \lazyD_{\fna^-}$}{
                $\accum.\text{add}\left(\trans \comp \evalc{ [r_{\fna^-} \btw i)}\right)$\;\label{line:ds_aux_addnext}
            }
            \If{$\fnb$ is an idempotent node}{
                $\UpdateIdemA(g)$\;
                \ForEach{$\trans \in \transm{C}$}{
                    \If{$p - 1 > \lazymini_{\fnb}[\delta]$}{
                        $\accum.\text{add}\left(\trans \comp \eval{ [r_{\fna^-} \btw i)}\right)$\;\label{line:ds_aux_addidem}
                    }
                }
            }
        }
        $f \gets g$\;
    }
    \textbf{return} $\accum$\;
}
\end{algorithm}

Now, we give an implementation of the $\GetA$ method in Algorithm~\ref{alg:ds_aux:3}.
Since the algorithm calls $\UpdateA(f^-)$ immediately before using the value of $\lazyD_{f^-}$ and $\UpdateIdemA(g)$ immediately before using the value of $\lazymini_g$, as done in the analysis of $\AddA$, we identify $\lazyD_{f^-}$ into $\mathcal{D}_f$ and $\lazymini_g$ into $\mini_g$.
The following lemma proves the correctness of the implementation, which is analogous to Lemma~\ref{lem:ds_right_get}.
\begin{lemma}\label{lemma:ds_aux_get}
Let $i\in [|w|]$. Then, \Call{Get$_\mathrm{A}$}{$i$} returns the set 
$\set{ \evalc{ [j+m \btw i) } : [j, k)\in \mathcal{J},\ k \leq i }$.
\end{lemma}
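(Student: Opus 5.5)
The plan mirrors the proof of Lemma~\ref{lem:ds_right_get}, with left and right exchanged. We read off what $\GetA(i)$ accumulates, rewrite each contribution as a set of the form $\set{\evalc{[j+m\btw i)} : [j,k)\in\mathcal{J},\ k\in I}$ for a suitable interval $I$, and then show that the intervals $I$ tile $[1,i)$. Throughout, every $\lazyD$ and $\lazymini$ value is read immediately after the corresponding $\UpdateA$ or $\UpdateIdemA$ call, with no intervening $\ShrinkA$. So by Lemmas~\ref{lem:ds_aux_update_d}, \ref{lem:ds_aux_update_idem}, \ref{lem:aux_add_invariant}, \ref{lem:aux_add_invariant_idem} and \ref{lem:ds_aux_d} we may identify them with the true $\mathcal{D}_f=\set{\evalc{[j+m\btw r_f)} : [j,k)\in\mathcal{J}_f}$ and $\mini_g$.

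There are three kinds of contribution, and we rewrite each in turn.
\begin{itemize}
\item[(i)] The leaf term $\mathcal{D}_f$ at line~\ref{line:ds_aux_lengthzero} has $r_f=i+1$. We intend it to account for intervals with $k<i$ lying in the leaf. This needs care: the specification requires the transitions $\evalc{[j+m\btw i)}$, while $\mathcal{D}_f$ for the leaf of $i$ evaluates up to $r_f=i+1$. We must check the paper's convention: either the leaf of $i$ contributes nothing, since $k=i$ gives $k\le i$ but the transition is then taken to $r_f$, or the intended term is the $k=i$ case. We expect this to work out with the convention that the relevant leaf covers the index immediately left of $i$, and we would adjust the bookkeeping so that the tiling covers exactly $\set{k : k\le i}$.
\item[(ii)] For a left sibling $f^-$ of a node $f\ni i$, we have $\mathcal{D}_{f^-}\comp\evalc{[r_{f^-}\btw i)} = \set{\evalc{[j+m\btw i)} : [j,k)\in\mathcal{J},\ k\in[\ell_{f^-},r_{f^-})}$. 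This uses $j+m\le k<r_{f^-}\le i$ and the homomorphism property.
\item[(iii)] For an idempotent parent $g$ with children $f_1,\dots,f_q$, the mirror of Lemmas~\ref{lem:right_ds_simul} and~\ref{lemma:node-transition-exchange} with $\min$ in place of $\max$ gives the following. Here $f_{p-1}=f^-$, and the mirror identity is $\bigcup_{t<p-1}\mathcal{D}_{f_t}\comp\evalc{[r_{f_t}\btw i)}=\set{\delta\comp\evalc{[r_{f^-}\btw i)} : \mini_g[\delta]<p-1}$. It holds because each block $[r_{f_t},r_{f^-})$ is a product of $e_g$'s, which collapses to a single $e_g$ by idempotency. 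Hence term (iii) equals the set over $k\in[\ell_g,\ell_{f^-})$.
\end{itemize}

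Finally, we show that these intervals partition $[1,i)$ (together with the leaf), using the lowest-common-ancestor argument of Lemma~\ref{lem:ds_right_get}. For $k<i$, take the lowest $g$ containing both $k$ and $i$. If the child containing $k$ is the immediate left sibling of the child containing $i$, case (ii) applies. Otherwise $g$ has at least three children, so it is idempotent and case (iii) applies.

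The main obstacle is the correct handling of the min-index condition under the lazy $m$-dependence, together with the off-by-one issues at the leaf and sibling boundaries. The lazy updates themselves are already settled by the cited lemmas, so the real risk lies in these index conventions.
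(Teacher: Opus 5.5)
Your handling of the sibling terms (ii), the idempotent terms (iii), and the lowest-common-ancestor tiling is the paper's argument and is correct. Together these produce exactly $\set{\evalc{[j+m \btw i)} : [j,k)\in\mathcal{J},\ k<i}$.

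The gap is the leaf term (i), which you leave open, and the direction you sketch does not close it. The leaf of $i$ covers only $[i,i+1)$, so it holds no intervals with $k<i$; your tiling paragraph already accounts for all of those. Its only possible role is the intervals with $k=i$, and for these it contributes $\evalc{[j+m \btw r_f)}=\evalc{[j+m \btw i+1)}$, which is off by the letter $w[i]$. For instance, let $\mathcal{J}$ consist of the single interval $[j,j+m)$ (as created by $\AddA(j,j+m)$) and let $i=j+m$. The statement then demands the output $\set{\evalc{[i \btw i)}}$, i.e.\ the identity. The pseudocode instead returns $\set{\Delta_C(w[i])}$, which is in general different.

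So no adjustment of bookkeeping inside the proof can work: for the data structure as specified, the identity you want fails at $k=i$, and the structure itself must change. Starting $\GetA(i)$ at the leaf of $i-1$ while keeping $\AddA$ unchanged does not help either. That traversal yields exactly the intervals with $k\le i-1$, and those with $k=i$ are lost.

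A consistent repair is to change both ends:
\begin{itemize}
\item in $\AddA$, register $[j,k)$ at the leaf of its last position $k-1$;
\item start $\GetA(i)$ at the leaf of $i-1$.
\end{itemize}
Since $j+m\le k\le r_f$ still holds for every node $f$ on the update path, Lemmas~\ref{lem:ds_aux_update_d} and~\ref{lem:aux_add_invariant} are unaffected. The leaf term becomes exactly $\set{\evalc{[j+m \btw i)} : [j,k)\in\mathcal{J},\ k=i}$. Your computations for (ii) and (iii), together with the tiling argument, then go through verbatim with $k-1$ and $i-1$ in place of $k$ and $i$, covering all $k\le i$.

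The paper's own proof passes over exactly this point. It writes the leaf term as $\set{\evalc{[j+m \btw r_f)} : [j,k)\in\mathcal{J},\ k=i}$ and counts it as the $k=i$ contribution without noting that $r_f=i+1$. So your suspicion is justified, but as written your proposal does not establish the $k=i$ part of the statement.

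Separately, like the paper's proof, your identity in (iii) takes $p$ to be the index of $f$. The pseudocode's reassignment $p\gets g.\mathrm{childIdx}(f^-)$ before the test would turn the condition into $\mini_g[\delta]<p-2$ and miss the child $f_{p-2}$, so that line should be dropped.
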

\begin{proof}
\Call{Get$_\mathrm{A}$}{$i$} returns the union of
\begin{itemize}
    \item[(i)] $\mathcal{D}_{\fna}$ for the leaf node $\fna$ corresponding to the index $i$ (line~\ref{line:ds_aux_lengthzero}),
    \item[(ii)] $\mathcal{D}_{\fna^-} \comp \eval{ [r_{\fna^-} \btw i)}$ for each node $\fna$ with $i \in [\ell_\fna, r_\fna)$ that is not the leftmost child of the parent of $\fna$, where $\fna^-$ is the immediate left sibling of $\fna$ (line~\ref{line:ds_aux_addnext}), and
    \item[(iii)] $\{ \trans \comp \evalc{ [r_{\fna^-} \btw i) } \colon \trans \in \transm{C},\ p-1> \mini_\fnb[\trans] \}$ for each node $\fna$ with $i \in [\ell_\fna, r_\fna)$ that is the non-leftmost $p$-th child of an idempotent node $\fnb$, where $\fna^-$ is the immediate left sibling of $\fna$, $(p-1)$-th child of $g$ (line~\ref{line:ds_aux_addidem}).
\end{itemize}
From now on, we rewrite these sets in a more manageable form.

\paragraph{(i).}
We have
\begin{align*}
    \mathcal{D}_f
    &= \set{ \evalc{ [j+m\btw r_f) } : [j,k) \in \mathcal{J}_f }
    = \set{ \evalc{ [j+m\btw r_f) } : [j,k) \in \mathcal{J},\ k\in [\ell_f,r_f)}\\
    &= \set{ \evalc{ [j+m \btw r_f) } : [j,k) \in \mathcal{J},\ \underline{ k = i } }
\end{align*}
where the first equality is from the definition of $\mathcal{D}_f$ and the third equality is from the fact that $f$ is the leaf.

\paragraph{(ii).}
For each node $\fna$ satisfying the condition of (ii), we have
\begin{align*}
  \mathcal{D}_{\fna^-}\comp \evalc{ [r_{\fna^-}\btw i) }
   & =  \set{\evalc{ [j+m\btw r_{\fna^-}) }\comp \evalc{ [r_{\fna^-}\btw i)} : [j,k) \in \mathcal{J}_{f^-}} \\
   & =  \set{ \evalc{ [j+m \btw i) } : [j,k)\in \mathcal{J},\ \underline{k\in [\ell_{\fna^-}, r_{\fna^-}) } },
\end{align*}   
where the first equality is from the definition of $\mathcal{D}_{f^-}$.

\paragraph{(iii).}
Let $\fnb$ be an idempotent node and $\fna_1,\ldots, \fna_{p-1}=\fna^{-}, \fna_{p}=\fna, \ldots, \fna_q$ be its children such that $i \in [\ell_\fna, r_\fna)$.
Then, we obtain
\begin{align*}
    &\set{\trans \comp  \evalc{[r_{\fna^-}\btw i) } : \trans \in \transm{C},\ \mini_{\fnb}[\trans] < p-1 }\\
    &= \set{\trans \comp  \evalc{[r_{\fna^-}\btw i) } : \trans \in \transm{C},\ \min\{t : 1\leq t\leq q,\ \delta \in \mathcal{D}_{f_t}\comp e_g\} < p-1 }\\
    &= \bigcup_{t=1}^{p-2} \mathcal{D}_{\fna_t}\comp e_g\comp \evalc{ [r_{\fna^-}\btw i) }\\
    &= \bigcup_{t=1}^{p-2} \mathcal{D}_{\fna_t}\comp \underbrace{e_g\comp\dots \comp e_g}_{p-1-t}\comp \evalc{ [r_{\fna^-}\btw i) }\\
    &= \bigcup_{t=1}^{p-2} \left\{\evalc{[j+m\btw r_{f_t})}\comp \evalc{[\ell_{f_{t+1}}\btw r_{f_{t+1}})}\comp \dots \comp \evalc{[\ell_{f_{p-1}}\btw r_{f_{p-1}})}\comp \evalc{ [\ell_{f_p}\btw i) }: [j,k)\in \mathcal{J}_{f_t}\right\}\\
    &= \bigcup_{t=1}^{p-2} \left\{\evalc{[j+m\btw i): [j,k)\in \mathcal{J}_{f_t} }\right\}\\
    &= \set{ \evalc{ [j+m \btw i) } : [j,k) \in \mathcal{J},\ \underline{k\in [\ell_{g}, \ell_{f^-})}}.
\end{align*}

Now, it suffices to prove that the union of intervals
\begin{itemize}
    \item[(I)] consisting only of an index $i$, 
    \item[(II)] $[\ell_{\fna^-}, r_{\fna^-})$ for each pair of nodes $(\fna, \fna^-)$ defined in (ii), and
    \item[(III)] $[\ell_{\fnb}, \ell_{\fna^-})$ for each tuple of nodes $(\fnb, \fna, \fna^-)$ defined in (iii)
\end{itemize}
coincides with the interval $[1, i]$.
We omit the proof, since it is identical to the corresponding part of the proof of Lemma~\ref{lem:ds_right_get}, except that the left and right are interchanged.
\end{proof}

\paragraph{On Time Complexity.}

Clearly, each query in Algorithms~\ref{alg:ds_aux:1},~\ref{alg:ds_aux:2},~and~\ref{alg:ds_aux:3} runs in time proportional to the depth of the factorization forest and $|\transm{C}|$, both of which are constants.
Therefore, as desired, each query runs in constant time.

%% file: onerewb/new-near-transitions.tex

\newcommand{\target}{\heavy^{i, m}_v}
\newcommand{\jmin}{j_0}
\newcommand{\jmax}{j_{\text{last}}}
\newcommand{\lightsym}{\sigma_{\!\light}}
\newcommand{\wB}{w_B}

\section{Enumerating Near Transitions}
\label{section:near-transitions2}

In this section, we construct \Call{GetNearTransitions}{$i,m$}, which we used in Section~\ref{new:section:left-B-overview} but whose construction was deferred.

\paragraph{Recalling the setting.}

We first recall our setting.
Let $\stree$ be the suffix tree of the string $w$, $v$ be a vertex of $\stree$, $\heavy_v$ and $\light_v$ be the set of indices in heavy and light subtree of $v$, respectively.
We assume that the substring $w_B:=\nword{v}$ corresponding to the root-$v$ path is in $\lang{B}$.
We denote $m:=|w_B|$.
Let $i \in \light_v$ be an index with $w[i + m \btw |w|]\in \lang{D}$.

The goal here is to give an $O(\log |w|)$-time algorithm to compute the set
\[
    \mathcal{D}:=\set{ \evalc{ [j+m \btw i) } : j \in J,\ j+m \leq i < j+2m },
\]
where 
\[
    J = \set{ j \in \heavy_v : w[1\btw j) \in \lang{A} }.
\]
Let
\[
    \target
    := \set{ j \in \heavy_v : j+m \leq i < j+2m }.
\]
Then, we can rewrite $\mathcal{D}$ as:
\[
    \mathcal{D} =\set{ \evalc{ [j+m \btw i)} : j \in \target, w[1 \btw j) \in \lang{A} }.
\]







\paragraph{Operations on $\heavy_v$.}
We manage $\heavy_v$ by a balanced binary search tree (e.g., an AVL tree), which can perform the following $O(\log |\heavy_v|)$-time~\cite[Chap.~13]{Cormen:2022}:
\begin{itemize}
    \item Adding a value $x$ to $\heavy_v$.
    \item Checking if there is a value $x$ in $\heavy_v$.
    \item Given a value $x$, computing a successor/predecessor of $x$, i.e., $\min\set{ y \in \heavy_v : x < y }$ or $\max\set{ y \in \heavy_v : y < x }$.
\end{itemize}
We can modify Algorithm~\ref{alg:auxiliarytotarget_left} so that it additionally maintains this search tree as follows. In the \Call{Initialize$_\mathrm{L}$}{$\cdot$} method, we initialize the search tree. Each time the $\AddL(j)$ method is called, we add $j$ into the search tree. This modification contributes to the time complexity of \textsc{Left-$B$ Transition Enumeration} by only $O(|\stree[\pi]|\log |w|)$, and thus does not affect the overall time complexity.

As in previous sections, we adopt classes or modules; particularly, we write $\heavy_v.\mathrm{succ}(x)$ and $\heavy_v.\mathrm{pred}(x)$ to represent the successor and predecessor of $x$ in $\heavy_v$, respectively.


\paragraph{Computing $\mathcal{D}$.}
Now we consider computing $\mathcal{D}$.
Assuming $\target$ is non-empty, we define the following two values:
\[
    j_0 := \min \target, \quad j_{\text{last}} := \max \target.
\]
We note that $\jmin$ and $\jmax$ can be computed in $O(\log |w|)$-time using $\heavy_v.\text{succ}(i-2m)$ and $\heavy_v.\text{pred}(i-m+1)$, as well as determining non-emptiness of $\target$.

If $|\target|\leq 2$, we can compute $\mathcal{D}$ just by brute-forcing on $\target$. Note that whether $|\target|\leq 2$ can be determined by the constant number of calls of successor operation.
Now, we assume $|\target| \geq 3$.
Let $\bar{w}:=w[j_0\btw  j_{\text{last}}+m)$.
The following lemma exploits the structure of $\target$ and $\bar{w}$.
\begin{lemma}\label{lem:near_periodic}
$\bar{w}$ is a periodic string: i.e., $\bar{w} = \theta^r$ for some prefix $\theta$ of $\bar{w}$ and rational $r > 1$.
Moreover, $\target$ is an arithmetic progression with common difference $d$, where $d$ is a period of $\bar{w}$.
\end{lemma}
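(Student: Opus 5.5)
The plan is to use that every $j \in \target$ starts an occurrence of the same string $w_B = \nword{v}$ of length $m$, and that all these occurrences lie in a window shorter than $2m$. Every $j\in\target$ satisfies $i-2m < j \le i-m$, so any two elements $j<j'$ of $\target$ satisfy $0<j'-j<m$. Since $j,j'\in\heavy_v\subseteq\mathscr{I}_v$, we have $w[j\btw j+m)=w[j'\btw j'+m)=w_B$. Hence $w[j\btw j'+m)$ has period $j'-j$; equivalently, $w_B$ has period $j'-j$.

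Let $d$ be the minimum difference between consecutive elements of $\target$, and let $p$ be the smallest period of $w_B$. First I show that $w_B$ has period $d$, by the observation above applied to the pair attaining the minimum. Next I show that every difference $j'-j$ of two elements of $\target$ is a multiple of $p$. Both $p$ and $j'-j$ are periods of $w_B$, with $p\le j'-j<m$. Fine and Wilf (Lemma~\ref{lemma:FineWilf}) would give period $\gcd(p,j'-j)$, but only if $m\ge p+(j'-j)-\gcd(p,j'-j)$. That hypothesis can fail when $j'-j$ is close to $m$, so I cannot apply Fine and Wilf blindly.

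I expect this to be the main obstacle, and I would handle it with the suffix-tree structure, using the branching at $v$. Every $j\in\heavy_v$ has $w[j+m]=\heavy$-side letter $\heavysym$, a fixed letter (or the suffix ends, but Lemma~\ref{lemma:sanitize-noprefix} together with $m$ small excludes the relevant case). Suppose $j<j'$ in $\target$ with $j'-j=q$ not a multiple of $p$. The string $w[j\btw j'+m)$ has period $q$, so its prefix $w_B$ extended by one letter, namely $w[j\btw j+m]$, equals $w[j+q\btw j+q+m]$ provided $j+m<j'+m$, which holds. Thus the letter after $w_B$ is consistent along period $q$. I would combine this with the periodicity of $w_B$ with period $p$ to derive $\gcd$-periodicity on a longer window. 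This would force $p\mid q$ by primitivity of $w_B[1\btw p]$, using the standard fact that the primitive root cannot occur internally in $\theta\theta$.

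Concretely, I plan to use this fact: if $w_B$ has smallest period $p$ with $2p\le m$, then occurrences of $w_B$ overlapping by at least $p$ are offset by multiples of $p$. This is proved by Fine and Wilf applied to the overlap $w[j'\btw j+m)$, which has length at least $p$ plus something, together with primitivity of $\theta=w_B[1\btw p]$. The case $2p>m$ needs separate treatment: there, two occurrences of $w_B$ at distance less than $m$ would force a period $q<m$, so $q\ge p>m/2$, and with three elements of $\target$ the two consecutive gaps sum to less than $m$. Hence some gap is at most $m/2<p$, which is a contradiction. So when $|\target|\ge 3$ we have $2p\le m$ (the same gap argument gives $d\le m/2$), and then all gaps are multiples of $p$.

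Finally, I argue that $\target$ is exactly an arithmetic progression with difference $p$, so that $d=p$. Every position $j_0+tp\le j_{\text{last}}$ also starts an occurrence of $w_B$, since $\bar w$ has period $p$ by gluing the overlapping occurrences, which overlap by at least $p$. It then remains to show that such a position lies in $\heavy_v$, that is, that the letter following it is $\heavysym$. The letter after the occurrence at $j_0+tp$ lies inside $\bar w$, or is the letter after $j_{\text{last}}$'s occurrence, so it equals $\heavysym$ by periodicity. Hence $j_0+tp\in\mathscr{I}_v$ goes to the heavy child, and it lies in the window, so it is in $\target$. Then $\bar w=\theta^r$ with $\theta=\bar w[1\btw p]$ and $r=(j_{\text{last}}-j_0+m)/p>1$, which completes the proof.
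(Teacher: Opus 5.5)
Your route (take the smallest period $p$ of $w_B$, show every difference of two elements of $\target$ is a multiple of $p$, then recover every $j_0+tp$ from the periodicity of $\bar{w}$ and the common heavy-side letter) can be made to work. However, the step you flag as the main obstacle is never closed.

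The suffix-tree paragraph does not help. The equality $w[j+m]=w[j'+m]$ holds simply because $j,j'\in\heavy_v$, and it says nothing about $(j'-j)\bmod p$. For instance, $w_B=010010$ has smallest period $p=3$ with $2p\le m=6$, yet it also has period $5$. So two occurrences at distance $5$, both followed by the same letter, are possible, and neither $2p\le m$ nor the branching rules them out.

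Your concrete overlap fact needs the overlap $m-(j'-j)\ge p$. The only hypothesis you actually check is $2p\le m$, which does not imply it. You use the same unverified bound again when you glue consecutive occurrences to get period $p$ on $\bar{w}$.

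What closes the gap is the third element, not the branching. Since $|\target|\ge 3$, any consecutive pair $j<j'$ in $\target$ has at least one other consecutive gap. That gap is a nonzero difference of elements of $\target$, hence a period of $w_B$, hence at least $p$. Therefore $j'-j\le (j_{\text{last}}-j_0)-p<m-p$, so consecutive occurrences overlap by more than $p$.

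With this line added, your overlap fact shows that every consecutive gap, hence every difference, is a multiple of $p$. The gluing step is then justified, and the rest of your argument goes through. It even identifies the common difference as the smallest period of $w_B$.

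The paper uses the same device in a different form. It applies Fine and Wilf to the two periods $j-j_0$ and $j_{\text{last}}-j$ of $w_B$, whose sum $j_{\text{last}}-j_0$ is below $m$. It then takes $d$ as the $\gcd$ of the resulting $d_j$, which avoids minimal periods and primitivity altogether. Like you, it uses the heavy-side letter only for the inclusion $\{j_0+kd\}\subseteq\target$.
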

\begin{proof}

Our proof is two-fold.
First, we determine the common difference $d$.
Next, we show the set $\set{ \jmin, \jmin + d, \jmin + 2d, \ldots, \jmax }$ coincides $\target$.

\noindent{}\textbf{Determining $d$.}
Take arbitrary $j\in \target \setminus \set{ \jmin, \jmax }$.
The positional relation among indices $j_0$, $j$, and $j_{\mathrm{last}}$ are illustrated as follows.
{
\newlength{\BoxW}\setlength{\BoxW}{3.5cm}
\newlength{\BoxH}\setlength{\BoxH}{6mm}
\newcommand{\TopY}{-1}   
\newcommand{\Drop}{.4}

\newcommand{\ArrowAt}[2]{%
  \draw[->,>=Stealth] (#1,\TopY) -- (#1,\TopY-\Drop);
  \node[above=.5mm] at (#1,\TopY) {\scriptsize $#2$};
}
\tikzset{wb/.style={draw,minimum width=\BoxW,minimum height=\BoxH,
                    inner sep=0pt,align=left}}

\[
\begin{tikzpicture}[x=8mm,y=\BoxH] 
  \ArrowAt{-1}{i-2m+1}
  \ArrowAt{3.3}{i-m}
  \ArrowAt{7.4}{i}

  \node[wb] (B1) at (1.4,-2) {$w_B$};
  \node[wb] (B2) at (2.4,-3.5) {$w_B$};
  \node[wb] (B3) at (4.4,-5) {$w_B$};

  
  \node at ($(B1.west)+(-0.08, -.8)$) {$\jmin$};
  \node at ($(B2.west)+(-0.08, -.8)$) {$j$};
  \node at ($(B3.west)+(-0.08, -.8)$) {$\jmax$};  
  
\end{tikzpicture}
\]}%

Since $j_0,j \in \target \subseteq \heavy_v$, we have $w[j_0 \btw j_0+m) = w[j \btw j+m) = w_B$.
Thus, $w_B[t]=w_B[t+(j-j_0)]$ holds for all $t\in \{1,\dots, m-(j-j_0)\}$, and particularly, $w_B$ has a period $(j-j_0)$.
Applying the same argument for $j$ and $j_{\text{last}}$ yields that $w_B$ has a period $(j_{\text{last}}-j)$.

By the constraint $i - 2m + 1 \leq \jmin < \jmax \leq i-m$, the following holds:
\[
    |w_B| = m > (i - m) - (i - 2m + 1) \geq j_{\text{last}}-j_0=(j_{\text{last}}-j) + (j-j_0).
\]
Thus, we can apply Fine and Wilf's Theorem~(Lemma~\ref{lemma:FineWilf}) to obtain that $w_B$ has a period 
\[
    d_j := \gcd(j-j_0,j_{\text{last}}-j) < \frac{m}{2},
\]
which is a divisor of $j_{\text{last}}-j_0$. 
Let $e_j := (\jmax - \jmin) / d_j$, which is an integer.

Let $\theta_j:=w_B[1\btw d_j]$.
Then, $w_B = (\theta_j)^r$ for some rational $r > 1$.
We also have $w[j_0 \btw j_\text{last}) =\theta_j^{e_j}$ because it is a prefix of $w_B$.
Thus, we have
\[
    \bar{w} = w[j_0\btw j_\text{last})\, w[j_{\text{last}}\btw j_{\text{last}}+m)
    = \theta_j^{e_j}\, \underbrace{w_B}_{\text{has period $d_j$}}
\]
also has a period $d_j$.
Since this holds for all $j\in \heavy_{i,m}\setminus \{j_0,j_{\text{last}}\}$, again from Fine and Wilf's Theorem, $\bar{w}$ has a period
\[
  d := \gcd \set{ d_j : j \in  \target \setminus \set{ \jmin, \jmax } },
\]
where the reason why we can apply the Theorem is that, for any two $j, j' \in \target \setminus \set{ \jmin, \jmax }$, $d_j + d_{j'} < m \leq |\bar{w}|$ holds.

\noindent{}\textbf{Proving $\target$ forming an arithmetic progression.}
Let $e=(j_{\text{last}}-j_0)/d$. 
To complete the proof, we prove 
\[
    \target = \set{ j_0+kd : k \in \set{ 0, 1, \ldots, e } }.
\]

\noindent{}\textbf{$(\subseteq)$:}
Let $j \in \target$.
If $j=j_0$, we have $j=j_0 + 0\cdot d$. Moreover, if $j=j_{\text{last}}$, we have $j=j_0+ed$.
Otherwise, take $d_j$ defined above, which divides $d$ and is a divisor of $j-j_0$.
It leads that there is an integer $k_j$ with $j = \jmin + k_j d$, which satisfies $0 < k_j < e$ because of $j_0 < j < j_{\text{last}}$.

\noindent{}\textbf{$(\supseteq)$:}
Conversely, let $k \in \set{ 0, 1, \ldots, e }$.
If $k=0$, we have $j_0+kd=j_0 \in \target$.
Moreover, if $k=e$, we have $j_0+kd=j_{\text{last}}\in \target$.
Assume otherwise.
Since $\bar{w} = w[j_0\btw j_{\text{last}}+m)$ has a period $d$, 
we have $w[j_0 + kd\btw j_0+kd+m)=w[j_0\btw j_0+m)=w_B$.
Thus, $j_0+kd$ is either in $\heavy_v$ or $\light_v$.
Moreover, again from periodicity, we have $w[j_0+kd+m]=w[j_0+m]$; that is, the letter immediately after the substring $w[j_0 + kd\btw j_0+kd+m)$ is same as that of the substring $w[j_0\btw j_0+m)$.
Thus, combining with the fact that $j_0\in \heavy_v$, we have $j_0+kd\in \heavy_v$, and thus, $j_0+kd\in \target$.
\end{proof}

From Lemma~\ref{lem:near_periodic}, we can write $\target = \set{ j_0 + kd : k \in \set{0, 1, \ldots, e } }$ using a tuple $(j_0, d, e)$.
This tuple $(\jmin, d, e)$ can be computed in $O(\log |w|)$-time as follows.
First, we compute the smallest element $j_0$, the second smallest element $j_1$, and the largest element $j_{\text{last}}$ of $\target$ in $O(\log |w|)$-time using the successor/predecessor queries on $\heavy_v$.
Then, we have $d=j_1-j_0$ and $e=(j_{\text{last}}-j_0)/(j_1-j_0)$.

\def\suba{\theta_1}
\def\subb{\theta_2}

On $(\jmin, d, e)$, we define two substrings $\suba := w[j_0 \btw j_0+d)$, and $\subb := w[j_0+m \btw j_0+m+d)$.
Since $\bar{w}$ has a period $d$ that divides $j_{\mathrm{last}}-j_0$, we have
\[
    \bar{w} = \suba^{e}\ w_B = w_B\ \subb^e.
\]

Let $w_1 := w[1 \btw \jmin)$ and $w_2 := w[\jmax + m  \btw i)$.
We define
\[
\mathcal{S}_{A}  :=  \set{ k \geq 0 : w_1\, \suba^k \in \lang{A}},
\]
and for each $\delta\in \transm{C}$,
\[
    \mathcal{S}_{C, \delta} :=  \set{ k \geq 0 : \Delta_C(\subb^k \: w_2) = \delta }.
\]
From Lemma~\ref{lem:wtw_is_ult_periodic}, $\mathcal{S}_{A}$ and $\mathcal{S}_{C,\delta}$ are ultimately periodic.
Moreover, for every transition $\delta$ of $C$,
the following set
\[
    \mathcal{S}_{\delta} := \mathcal{S}_{A} \underbrace{+}_{\text{Minkowski Sum}} \mathcal{S}_{C,\delta}
\]
is also ultimately periodic because of Lemma~\ref{lem:ult_operations}.
Furthermore, if we write $\mathcal{S}_{\delta}=(\mu,M,\lambda,T)$, $\mu$ and $\lambda$ are bounded by a constant that depends only on $A$ and $C$.
Now, we have the following.
\begin{lemma}\label{lem:S_delta_represents_D}
A transition $\delta\in \transm{C}$ is in $\mathcal{D}$ if and only if $e\in \mathcal{S}_{\delta}$.
\end{lemma}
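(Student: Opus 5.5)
The plan is to parametrize $\target$ by the index $k$, writing $j_k := \jmin + kd$ for $k \in \set{0,\ldots,e}$, which is legitimate by Lemma~\ref{lem:near_periodic}. For each such $k$ we then rewrite the two strings that determine whether $j_k$ contributes to $\mathcal{D}$, namely $w[1 \btw j_k)$ and $w[j_k + m \btw i)$, in terms of $\suba$, $\subb$, $w_1$, and $w_2$. The goal is to reduce the condition ``$\delta \in \mathcal{D}$ witnessed by $j_k$'' to the condition ``$k \in \mathcal{S}_A$ and $e-k \in \mathcal{S}_{C,\delta}$''. Taking the union over $k$ then turns this into exactly $e \in \mathcal{S}_A + \mathcal{S}_{C,\delta}$.

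First, the prefix. Since $\bar{w} = \suba^{e}\, w_B$ starts at $\jmin$, we have $w[\jmin \btw j_k) = \bar{w}[1 \btw kd] = \suba^k$. This holds because $kd \le ed = \jmax - \jmin$, so the segment lies inside the $\suba^e$ part. Hence $w[1 \btw j_k) = w_1\,\suba^k$, and $w[1\btw j_k)\in\lang{A}$ holds iff $k\in\mathcal{S}_A$.

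Second, the middle segment. We split it as
\[
w[j_k+m \btw i) = w[j_k+m \btw \jmax+m)\; w[\jmax+m \btw i).
\]
This requires $\jmax + m \le i$, which holds because $\jmax \in \target$. The second factor is $w_2$ by definition. For the first factor, use the other factorization $\bar{w} = w_B\,\subb^{e}$, which starts at $\jmin$: the positions $[\jmin+m \btw \jmax+m)$ carry $\subb^e$. The subrange starting at offset $(k d)$ within it is therefore $\subb^{e-k}$, since $\subb^e$ is a power of a word of length $d$ and we are cutting at a multiple of $d$. Hence $\evalc{[j_k+m\btw i)} = \delta$ iff $\Delta_C(\subb^{e-k} w_2) = \delta$, that is, iff $e-k \in \mathcal{S}_{C,\delta}$.

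Finally, combine the two conditions. We have $\delta\in\mathcal{D}$ iff there exists $k\in\set{0,\ldots,e}$ with $k\in\mathcal{S}_A$ and $e-k\in\mathcal{S}_{C,\delta}$. For the converse direction, if $e = a + c$ with $a\in\mathcal{S}_A$ and $c\in\mathcal{S}_{C,\delta}$, then both $a, c \ge 0$, so $a \le e$ automatically, and we take $k=a$.

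The only delicate point is the index bookkeeping in the middle segment: checking that $j_k + m \le \jmax + m \le i$, and that both factorizations $\suba^e w_B$ and $w_B\subb^e$ of $\bar{w}$ are valid. The latter relies on $d \mid (\jmax-\jmin)$ together with the period $d$ of $\bar{w}$. Everything else is direct substitution.
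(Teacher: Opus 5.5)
Your proposal is correct and follows essentially the same argument as the paper: you write each element of $\heavy^{i,m}_v$ as $j_0+kd$, rewrite $w[1\btw j)$ as $w_1\theta_1^k$ and $w[j+m\btw i)$ as $\theta_2^{e-k}w_2$, and read off that $e$ lies in the Minkowski sum $\mathcal{S}_A+\mathcal{S}_{C,\delta}$. The only difference is cosmetic: you prove a pointwise equivalence for each $k$ and take the union, whereas the paper proves the two implications separately, and your explicit index checks ($kd\le ed$, $j_{\mathrm{last}}+m\le i$, cutting $\theta_2^e$ at a multiple of $d$) are exactly what the paper's one-line rewrites use implicitly.
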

\begin{proof}

\noindent{}\textbf{($\Rightarrow$):}
Assume $\delta\in \mathcal{D}$.
Then, there is an index $j \in \target$ such that $w[1 \btw j) \in \lang{A}$ and $\evalc{[j+m \btw i)} = \delta$.
From Lemma~\ref{lem:near_periodic}, $j = j_0+kd$ holds for some $k \in \set{0, \dots, e }$, and thus,
\[
    w[1\btw j) =  w_1\, w[j_0 \btw j) = w_1\, \suba^k,
\]
which implies $k\in \mathcal{S}_A$.
Moreover, we have 
\[
    w[j+m\btw i) = w[j+m \btw \jmax+m)\, w_2 = \subb^{e-k}\, w_2,
\]
and thus, $e-k\in \mathcal{S}_{C,\delta}$.
Therefore, from the definition of the Minkowski sum, we have $e=k+(e-k)\in \mathcal{S}_A+\mathcal{S}_{C,\delta}=\mathcal{S}_{\delta}$.

\noindent{}\textbf{($\Leftarrow$):}
Let $e\in \mathcal{S}_{\delta}$.
Then, there is an integer $k\in \{0,\dots, e\}$ with $k\in \mathcal{S}_A$ and $e-k\in \mathcal{S}_{C,\delta}$.
Let $j=j_0+kd$.
Then, we have
\[
    w[1\btw j) = w_1\, w[j_0\btw j) = w_1\, w[j_0 \btw j_0+kd) = w_1\, \theta^k \in \lang{A}.
\]
Moreover, we have
\[
    \evalc{[j+m\btw i)}=\Delta_C(w[j+m\btw j_{\text{last}}+m)\ w_2) = \Delta_C(\subb^{e-k}\, w_2) = \delta,
\]
where the last equality is from $e-k \in \mathcal{S}_{C,\delta}$.
Therefore, we have $\delta \in \mathcal{D}$ and the lemma is proved.
\end{proof}

\paragraph{On Time Complexity.}

Since $\mathcal{S}_{\delta}$ depends only on $A$ and $C$, we can compute it in constant time, as well as determining if $e\in \mathcal{S}_{\delta}$. Computing $e$ takes $O(\log |w|)$ time.
Thus, from Lemma~\ref{lem:S_delta_represents_D}, the desired set $\mathcal{D}$ can be computed in $O(\log |w|)$ time.

%% file: onerewb/ultimately-periodic.tex

\section{Proof of Lemma for Closure Properties of Ultimately Periodic Sets}\label{appendix:closure-of-UP}

We prove Lemma~\ref{lem:ult_operations} in Section~\ref{section:tools}.

\PropertiesOfUltimatelyPeriodic*

\begin{proof}
Let $P=(\mu, M, \lambda, T)$ and $Q=(\mu', M', \lambda', T')$.
Moreover, let $\bar{\lambda}$ be the least common multiple of $\lambda$ and $\lambda'$.

\textbf{Union:} It suffices to prove that for $i\geq \max(\mu,\mu')$, $i\in P\cup Q$ if and only if $i+\bar{\lambda}\in P\cup Q$.
Assume $i\in P\cup Q$. 
Then, we have either $i\in P$ and $i\in Q$. 
In the case where $i\in P$, we have $i\geq \max(\mu,\mu')\geq \mu$, and thus, $(i\bmod \lambda)\in T$. 
Thus, we have $((i+\bar{\lambda})\bmod \lambda)\in T$, which implies $i+\bar{\lambda}\in P\subseteq P\cup Q$ because $i+\bar{\lambda}\geq \mu$.
The same argument also works for the case of $i\in Q$.
Thus, we have $i+\bar{\lambda}\in P\cup Q$.
The inverse direction is proved in the same way.

\textbf{Minkowski Sum:} It suffices to prove that for $i\geq \mu+\mu'+\bar{\lambda}$, $i\in P\cup Q$ holds if and only if $i+\bar{\lambda}\in P + Q$.
Assume $i\in P + Q$. 
Then, there is a pair of integers $j\in P$ and $k\in Q$ with $j+k=i$.
Since $j+k=i\geq \mu + \mu'$, we have either $(j\bmod \lambda)\in T$ or $(k\bmod \lambda')\in T'$. 
If $(j\bmod \lambda)\in T$, we have $((j+\bar{\lambda})\bmod \lambda)\in T$ and thus, $j+\bar{\lambda}\in P$. 
Thus, $i+\bar{\lambda}=(j+\bar{\lambda})+k\in P+Q$.
The same argument also holds for the case $(k\bmod \lambda')\in T'$.
Therefore, we have $i+\bar{\lambda}\in P+Q$.

For the inverse direction, assume $i+\bar{\lambda}\in P+Q$.
Then, there is a pair of integers $j'\in P$ and $k'\in Q$ with $j'+k'=i+\bar{\lambda}$.
Since $j'+k'-2\bar{\lambda}=i-\bar{\lambda}\geq \mu + \mu'$, we have either $j'-\bar{\lambda}\geq \mu$ or $k'-\bar{\lambda}\geq \mu'$.
If the former holds, since $((j'-\bar{\lambda})\bmod \lambda)\in T$, we have $i=(j'-\bar{\lambda})+k'\in P+Q$.
The same argument also holds for the latter case.
Therefore, we have $i\in P+Q$.

\textbf{$e$-Multiplication:} The set $e\cdot P$ is equal to $(e\mu, e\cdot M, e\lambda, e\cdot T)$.
\end{proof}

%% file: onerewb/new-normalizing-1useREWB.tex
\section{Normalizing 1-use REWB: Proof of Theorem~{\ref{theorem:abcbd-singleuse}}}

\label{appendix:1use1rewb-to-abcbd}

\newcommand{\template}{\mathbb{T}}
\newcommand{\elim}{\ensuremath{\mathit{elim}}}
\newcommand{\zero}{\ensuremath{\mathit{zero}}}
\newcommand{\one}{\ensuremath{\mathit{one}}}
\newcommand{\defeq}{{:=}}
\newcommand{\exprN}{K}
\newcommand{\Vars}{\mathcal{V}}

\newcommand{\REWB}{\rewb}
\newcommand{\REGEX}{\regex}

In this section, we prove our Theorem~\ref{theorem:abcbd-singleuse} for 1-use \rewb.

\Singleuse*

\paragraph{Structure of This Appendix.}

\newcommand{\noref}{\mathit{noref}}
\newcommand{\last}{\mathit{last}}

We prove Theorem~\ref{theorem:abcbd-singleuse} by combining Lemmas~\ref{lemma:appendix:normalizing step1}~and~\ref{lemma:appendix:normalizing step2}.
After introducing the required definitions through Sections~\ref{appendix:normalizing:basic definitions}~and~\ref{appendix:normalizing:overview},
we prove Lemma~\ref{lemma:appendix:normalizing step1} in Section~\ref{appendix:lift references}.
\begin{restatable}[Normalizing First Step]{lemma}{NormalizingFirstStep}
\label{lemma:appendix:normalizing step1}
Let $E$ be a $1$-use \rewb.
Then, we can decompose $E$ into the following form:
\[
(\alpha_1\,\readv{x}\,D_1) + (\alpha_2\,\readv{x}\,D_2) + \cdots + (\alpha_{K'}\,\readv{x}\,D_{K'}) + \Phi
\]
where each $\alpha_i$ is a reference-free (i.e., not including $\readv{y}$ for any $y$) and $D_i$ and $\Phi$ are \regex{} rather than \rewb.
In particular, the following holds:
\[
K' \leq |E|,\ 
|\alpha_i| = O(|E|),\ 
|D_j| = O(|E|),\ 
|\Phi| = O(|E|).
\]
\end{restatable}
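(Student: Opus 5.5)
We would prove Lemma~\ref{lemma:appendix:normalizing step1} by structural induction on $E$, lifting the unique backreference execution to the top level. First we do a preprocessing step: since $E$ is $1$-use, every computation executes at most one $\readv{y}$. Up to renaming, and since only one reference is ever read, we may treat that reference as reading some variable; to keep the statement's single $\readv{x}$, we handle each variable name separately. Each summand $\alpha_i\,\readv{x}\,D_i$ corresponds to one syntactic occurrence of a reference in $E$, which gives $K'\le|E|$.

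For every subexpression $F$ of $E$ we define two objects simultaneously. The first is $\noref(F)$, a reference-free \rewb{} that captures exactly the computations of $F$ executing no backreference. It is obtained from $F$ by replacing every $\readv{y}$ with $\emptyset$, so $|\noref(F)|=O(|F|)$. The second is a family $\{(\alpha_o,D_o)\}$, indexed by the reference occurrences $o$ inside $F$. Here $\alpha_o$ is reference-free and describes what happens before $o$, including captures, and $D_o$ is a \regex{} describing what happens after $o$. Together they satisfy
\[
\sem{F}_\lambda=\sem{\noref(F)}_\lambda\cup\bigcup_o \sem{\alpha_o\,\readv{x}\,D_o}_\lambda .
\]
The inductive clauses are as follows.
\begin{itemize}
\item For $F=\readv{x}$, we set $\alpha=\epsilon$ and $D=\epsilon$.
\item For $F_1+F_2$, we take the union of the two families.
\item For $F_1F_2$, an occurrence $o$ in $F_1$ gives $(\alpha_o,\,D_o\cdot \noref(F_2)^{\flat})$, and an occurrence in $F_2$ gives $(\noref(F_1)\,\alpha_o,\,D_o)$. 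Here $(\cdot)^{\flat}$ denotes the result of deleting capture brackets. This is legitimate because after the single read nothing is ever read again, so captures occurring after it are semantically irrelevant to acceptance.
\item For $F^*$, a computation reading once has the form $F^{k}\,F\,F^{l}$ in which only the distinguished copy reads. So an occurrence $o$ gives $(\noref(F)^*\,\alpha_o,\ D_o\,(\noref(F)^{\flat})^*)$.
\item For $(F)_y$, an occurrence $o$ inside $F$ gives $(\alpha_o,\,D_o)$ with the capture of $y$ dropped. The capture would end after the read, so its value is never used.
\end{itemize}

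Two points need care. First, the identity of the read variable and the valuation at the moment of reading must be correct. This holds because $\alpha_o$ reproduces exactly the capture behaviour of the prefix computation, since its material is built from $\noref$ of preceding factors and preserves their capture groups. Second, a capture that is still open around $o$ (the $(F)_y$ case) may contain the read inside its body, as in $(0\readv{x}0)_x$ in the paper's examples. We expect this to be the main obstacle. The resolution is that because the expression is $1$-use, the updated value of $y$ is never read afterwards, so dropping the bracket preserves the language.

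At the top level we set $\Phi:=\noref(E)^{\flat}$, which is a \regex, since the final valuation is irrelevant to $\lang{E}$. We also apply $\flat$ to each $D_o$, and the lemma's form follows. For the size bounds, each constructor applied along the path from $o$ to the root adds $O(1)$ symbols plus $\noref$ of a disjoint sibling subexpression. Hence $|\alpha_o|,|D_o|=O(|E|)$; a cruder but sufficient count charges each node of $E$ at most a constant number of times per occurrence. Finally we verify the semantic identity by induction using the definition of $\sem{\cdot}_\lambda$, treating the star case via $\sem{E^*}=\bigcup_k\sem{E^k}$ and the $1$-use property to locate the unique reading copy.
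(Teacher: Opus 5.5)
Your construction is essentially the paper's, fused into a single recursion. The paper also splits $E$ into $\mathit{noref}(E)$ plus one summand per syntactic reference occurrence. It builds each summand by walking the root-to-occurrence path ($\mathit{appear}_\rho$), discards bindings that enclose the read ($\mathit{unnest}$) or follow it ($\mathit{elim}$), and keeps the prefix with its captures. Your choice of $\mathit{noref}(F_1)$ as prefix, and of dropping an enclosing capture only in the reading branch, is if anything slightly cleaner than the paper's global $\mathit{unnest}$.

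The one real gap is the size bound. Your star clause appends $\mathit{noref}(F)^*$ to $\alpha_o$ and $(\mathit{noref}(F)^{\flat})^*$ to $D_o$, where $F$ is the star body that \emph{contains} $o$, not a disjoint sibling. With several nested stars above $o$, the inner nodes are copied once per enclosing star, so the sizes become $\Theta(|E|^2)$. Your claim that every constructor contributes only $O(1)$ plus a disjoint sibling is therefore false for your own recursion. The missing observation, which the paper states explicitly, is that after eliminating $\emptyset$ a $1$-use expression has no reference occurrence under a Kleene star: iterating that star twice through the occurrence gives a computation with two reads. Hence your star clause never fires, only concatenation, union and capture nodes lie on the path, and the disjoint-sibling count gives $O(|E|)$.

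The semantic invariant also needs repair. The exact identity for $\sem{F}_\lambda$ fails as soon as you apply $(\cdot)^\flat$ or drop $(\cdot)_y$, because both change final valuations. What you can prove is the following two facts:
\begin{itemize}
\item $\sem{\mathit{noref}(F)}_\lambda$ is exactly the set of read-free computations of $F$, valuations included.
\item The strings matched by the computations of $F$ with exactly one read are exactly the strings matched by $\bigcup_o \sem{\alpha_o\,\readv{x}\,D_o}_\lambda$.
\end{itemize}
This suffices, because everything placed before a read is built from $\mathit{noref}$, and nothing after a read is ever read.

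If you state the invariant for computations with \emph{exactly one} read, the concatenation and star clauses hold for arbitrary subexpressions, and $1$-use is needed only at the root. In your formulation, omitting the two-read term of $F_1F_2$ requires every subexpression to be $1$-use. That holds only after the $\emptyset$-elimination you never mention: for instance, $(\readv{x})^*\emptyset$ is vacuously $1$-use but its subexpression $(\readv{x})^*$ is not.

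Finally, to obtain the single variable $x$ of the statement, rename per summand: swap $x_o$ and $x$ inside that summand, or first apply the paper's $\mathit{keep}_{x_o}$. A global renaming would merge captures of distinct variables.
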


After proving this lemma,
in Section~\ref{appendix:restricted grammar},
we transform $\alpha_i$ into a corresponding expression generated by the following restricted grammar:
\[
\mathcal{H} ::= \epsilon \mid \sigma \mid \mathcal{H}\,\mathcal{H} \mid \mathcal{H}\,+\,\mathcal{H} \mid \mathcal{H}^* \mid [ \mathbf{REGEX} ]_x.
\]

Then we prove the second phase lemma in Section~\ref{appendix:normalizing:lift binding}.
\begin{restatable}[Normalizing Second Step]{lemma}{NormalizingSecondStep}
\label{lemma:appendix:normalizing step2}

Let $\alpha$ be an expression of the restricted grammar $\mathcal{H}$.
Then, we can decompose $\alpha$ into the following form:
\[
(A_1\ [B_1]_x\ C_1) +
(A_2\ [B_2]_x\ C_2) + \cdots +
(A_{K''}\ [B_{K''}]_x\ C_{K''}) + \Psi,
\]
where $A_i, B_i, C_i$, and $\Psi$ are \regex{es}.

In particular, the following holds for each expression size:
\[
K'' \leq |E|,\ 
|A_i| = O(|E|^2),\ 
|B_i| \leq |E|,\ 
|C_i| = O(|E|^2),\ 
|\Psi| \leq |E|.
\]
\end{restatable}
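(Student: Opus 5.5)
We prove Lemma~\ref{lemma:appendix:normalizing step2} by structural induction on $\alpha \in \mathcal{H}$. For each subexpression $\beta$ we build two objects. The first is a \regex{} $\beta^{\circ}$ that describes the runs of $\beta$ in which no capture $[\cdot]_x$ is executed last. The second is a finite family $\mathcal{T}(\beta)$ of triples $(A,B,C)$ of \regex{es}. Together they should satisfy
\[
\lang{\beta} \text{ (paired with the final value of $x$)} = \lang{\beta^{\circ}} \cup \bigcup_{(A,B,C)\in\mathcal{T}(\beta)} \lang{A\,[B]_x\,C}.
\]
Here each triple records the \emph{last} executed capture of $x$; only this one matters for the subsequent $\readv{x}$.

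We first define $\beta^{\circ}$. It is $\beta$ with every $[R]_x$ replaced by $R$. This over-approximates the language, so we read the equation as "for runs whose last binding is the distinguished one." Since $\readv{x}$ reads the most recent binding, and we take the union over all choices of which capture is the most recent, the union is exact. Erasing capture markers never changes the matched string, so the reference-free semantics agrees with the plain \regex{} semantics.

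The families $\mathcal{T}$ are built case by case.
\begin{itemize}
\item Base cases. $\mathcal{T}(\epsilon)=\mathcal{T}(\sigma)=\emptyset$, and $\mathcal{T}([R]_x)=\{(\epsilon,R,\epsilon)\}$.
\item Union. $\mathcal{T}(\beta_1+\beta_2)=\mathcal{T}(\beta_1)\cup\mathcal{T}(\beta_2)$.
\item Concatenation. The last capture lies either in $\beta_2$, or in $\beta_1$ with no capture executed in $\beta_2$. So we take
\[
\mathcal{T}(\beta_1\beta_2)=\{(\beta_1^{\circ}A,B,C) : (A,B,C)\in\mathcal{T}(\beta_2)\}\cup\{(A,B,C\,\beta_2^{\neg}) : (A,B,C)\in\mathcal{T}(\beta_1)\}.
\]
Here $\beta^{\neg}$ is a \regex{} for the runs of $\beta$ that execute no capture at all. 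It is obtained by replacing each $[R]_x$ with $\emptyset$.
\item Star. The last capture lies in the last iteration containing one, which gives
\[
\mathcal{T}(\beta^*)=\{((\beta^{\circ})^*A,\,B,\,C\,(\beta^{\neg})^*) : (A,B,C)\in\mathcal{T}(\beta)\}.
\]
\end{itemize}
Finally, $\Psi:=\alpha^{\circ}$ handles the case that no capture is executed, in which $x=\epsilon$. To be safe, we restrict this to $\alpha^{\neg}$ and treat the resulting $\readv{x}$ as matching $\epsilon$.

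Correctness follows by induction on runs, tracking which occurrence of a capture group executes last. The main subtlety is the star case. There, the iterations after the distinguished one must execute no capture, which is why $\beta^{\neg}$ is used rather than $\beta^{\circ}$. The iterations before it may execute captures freely, since those are overwritten.

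For the size bounds, the triples of $\mathcal{T}(\alpha)$ are in bijection with the occurrences of $[R]_x$ in $\alpha$, so $K''\le|\alpha|\le|E|$. Each $B_i$ is some $R$ occurring in $\alpha$, so $|B_i|\le|E|$. Each $A_i$ is a concatenation, along the path from the root to that occurrence, of terms $\beta^{\circ}$ or $(\beta^{\circ})^*$ with $|\beta^{\circ}|\le|\alpha|$. There are at most depth$\,\le|\alpha|$ of them, giving $O(|E|^2)$, and $C_i$ is bounded in the same way. Also $|\Psi|\le|E|$.

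The main obstacle is keeping these sizes quadratic rather than exponential. The key point is to never duplicate the triple structure; only the linear-size expressions $\beta^{\circ}$ and $\beta^{\neg}$ are copied along a single root-to-leaf path. A second delicate point is faithfulness to the semantics of nested and repeated bindings. In $\mathcal{H}$ the capture bodies are \regex{es}, so there is no nesting. However, overwriting by later captures must be accounted for, and the $\beta^{\neg}$ filters handle exactly this.
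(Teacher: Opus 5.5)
Your construction is essentially the paper's own. Your $\beta^{\circ}$ is its $\mathit{elim}(\beta)$, and your $\beta^{\neg}$ is its $\mathit{nobind}(\beta)$. Your family $\mathcal{T}(\alpha)$ is the collection of expressions $\mathit{last}_\rho(E_i)$ obtained by tagging one binding occurrence at a time; you compute them simultaneously rather than per tag. The union, concatenation and star rules are identical, and so is the root-to-occurrence (spine) argument for the $O(|E|^2)$ bounds.

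One claim needs correcting. The invariant with $\lang{\beta^{\circ}}$ as the capture-free term is false, and taking $\Psi$ to be $\alpha^{\neg}$ rather than $\alpha^{\circ}$ is necessary, not merely ``safe.'' Take $\alpha=[a]_x$. Then $\alpha^{\circ}=a$ while $x$ keeps its initial value $\epsilon$, so the branch $\alpha^{\circ}\,\readv{x}\,D$ would accept $a\,d$ for $d\in\lang{D}$. But every string of $\lang{\alpha\,\readv{x}\,D}$ has the form $a\,a\,d$.

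The correct invariant, at the level of string--valuation pairs, is
\[
\sem{\beta}_{\lambda} \;=\; \sem{\beta^{\neg}}_{\lambda} \;\cup \bigcup_{(A,B,C)\in\mathcal{T}(\beta)} \sem{A\,[B]_x\,C}_{\lambda}.
\]
That is, $\mathcal{T}(\beta)$ accounts exactly for the runs that execute at least one capture. This is precisely the paper's split into $\mathit{nobind}$ and the $\mathit{last}_\rho$ parts.

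Your rules place $\beta^{\circ}$ only before the distinguished capture, where any earlier binding is overwritten, and $\beta^{\neg}$ only after it. So with this corrected invariant the induction goes through unchanged.
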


Combining Lemmas~\ref{lemma:appendix:normalizing step1}~and~\ref{lemma:appendix:normalizing step2} immediately leads to our Theorem~\ref{theorem:abcbd-singleuse}.

\subsection{Basic Definitions}

\label{appendix:normalizing:basic definitions}

We first recall and introduce required definitions.

\paragraph{REWB.}

Let $\Sigma$ be a finite alphabet and $\mathcal{V}$ be a finite set of variables.
The set of regular expressions with backreferences (\rewb) extends standard regular expressions (\regex) and is defined by the following grammar:
\[
\begin{array}{lcl}
E & ::= & \epsilon \mid \sigma \mid \emptyset \\
  & \mid & E \cdot E \mid E + E \mid (E)^* \\
  & \mid & [ E ]_x \mid \readv{x}
\end{array}
\]
where $\sigma \in \Sigma$ and $x \in \mathcal{V}$.

\noindent{}\textbf{Notation Remark:} In this appendix, for ease of readability
\begin{itemize}
\item We sometimes write $[\,E\,]_x$ instead of $(\,E\,)_x$ to distinguish them from parentheses.
\item We sometimes simply write $E_1 E_2$ instead of $E_1 \cdot E_2$ by omitting the symbol $\cdot$.
\end{itemize}

\noindent{}\textbf{Term: Binding and Reference.}
We call each term of the form $[\cdots]_x$ a \emph{binding} and one of the form $\readv{x}$ a \emph{reference}.

\paragraph{Size of Expression: $| E |$.}

\newcommand{\size}{\mathit{size}}

Here we define the size of expressions $E$ as the total number of nodes of the abstract syntax tree of $E$.
Formally, $|E|$ is defined by the size function $\size(E)$ as follows:
\[
\begin{array}{lcl}
\size(t) & :=&  1, \quad \text{ (for $t \in \set{ \epsilon, \sigma, \emptyset, \readv{x}}$)}, \\
\size(E_1 \diamond E_2) & := & \size(E_1) + \size(E_2) + 1, \quad \text{ (for $\diamond \in \set{\ \cdot \ , \ + \ }$)} \\
\size(\,(E)^*\,) & := & \size(E) + 1, \\
\size(\,[E]_x\,) & := & \size(E) + 1.
\end{array}
\]

\paragraph{REGEX.}
We define the fragment \regex{} by the following grammar:
\[
E_R ::= \epsilon \mid \sigma \mid \emptyset \mid E_R E_R \mid E_R + E_R \mid (E_R)^*.
\]

\paragraph{Semantics and Language.}

Next, we define the semantics of \rewb{} via the semantic function $\sem{E}_{\lambda}$.
For a \rewb{} $E$ and a valuation $\lambda$ from $\mathcal{V}$ to $\Sigma^*$,
$\sem{E}_\lambda$ returns a set of pairs $\tuple{w, \lambda'}$, 
each consisting of a matched string and an updated valuation:
\[
\begin{array}{lcl}
\sem{ \epsilon }_{\lambda} & := & \set{\,\tuple{\epsilon, \lambda}\,}, \\
\sem{ \sigma }_{\lambda} & := & \set{\,\tuple{\sigma, \lambda}\,}, \\[3pt]
\sem{ \emptyset }_{\lambda} & := & \emptyset, \\[3pt]
\sem{ E_1 E_2 }_{\lambda} & := & \set{ \tuple{w_1 w_2, \nu} : \tuple{w_1, \mu} \in \sem{E_1}_{\lambda}, \tuple{w_2, \nu} \in \sem{ E_2 }_{\mu} }, \\[3pt]
\sem{ E_1 + E_2 }_{\lambda} & := & \sem{ E_1 }_{\lambda} \cup \sem{ E_2 }_{\lambda}, \\[3pt]
\sem{ E^* }_{\lambda} & := & \bigcup_{k\ge 0} \sem{E^k}_\lambda, \qquad \text{where}\ E^0 := \epsilon,\; E^{k+1}:=E^k E, \\[3pt]
\sem{ [E]_x }_{\lambda} & := & \set{ \tuple{w, \mu[x := w]} : \tuple{w, \mu} \in \sem{E}_{\lambda} }, \\[3pt]
\sem{ \readv{x} }_{\lambda} & := & \set{ \tuple{\lambda(x), \lambda} }.
\end{array}
\]
This function is well-defined by induction on the height of Kleene stars.

Let $\iota$ be the initialized valuation that maps every variable to the empty string $\epsilon$: i.e., $\iota(v) = \epsilon$ for all $v \in \mathcal{V}$.
We simply write $\sem{E}$ to denote $\sem{E}_{\iota}$.
We define the language of a \rewb{} $E$, $\lang{E}$, as follows:
\[
\lang{E} := \set{ w : \tuple{w, \lambda} \in \sem{E} }.
\]

\paragraph{Pre-processing: $\emptyset$-free \rewb.}

As a preprocessing step,
we eliminate $\emptyset$ from a given \rewb{} by repeatedly applying  the following language-preserving rewrites:
\[
\begin{array}{l}
E + \emptyset \Mapsto E,\ \emptyset + E \Mapsto E, \qquad
E\,\emptyset \Mapsto \emptyset,\ \emptyset\,E \Mapsto \emptyset, \qquad
\emptyset^* \Mapsto \epsilon,\ 
[\emptyset]_x \Mapsto \emptyset.
\end{array}
\]
Observe that each rule strictly decreases the size (the number of symbols) of the expression.
Therefore, this rewriting process always terminates.

We say that a \rewb{} $E$ is $\emptyset$-free
if $E$ contains no occurrence of the constant $\emptyset$.
More formally, such expressions are generated by the following grammar:
\[
E_{-\emptyset} ::= \epsilon \mid \sigma \mid E_{-\emptyset}\ E_{-\emptyset} \mid E_{-\emptyset} + E_{-\emptyset} \mid \bigl(E_{-\emptyset}\bigr)^*  \mid \bigl[ E_{-\emptyset} \bigr]_x \mid \readv{x}
\]

\begin{proposition}
For every \rewb{} $E$, there exists a \rewb{} $G$
such that $\lang{E} = \lang{G}$
and either $G$ is $\emptyset$-free or $G = \emptyset$.
\end{proposition}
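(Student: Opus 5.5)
We prove the proposition by structural induction on $E$, showing that the rewriting system given just above terminates in a normal form that is either $\emptyset$ or $\emptyset$-free, and that each rewrite preserves the language.

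\emph{Soundness.} Each rule preserves the semantics, and in fact preserves $\sem{\cdot}_\lambda$ for every valuation $\lambda$, not only the language. This stronger form is needed because the rules are applied inside arbitrary contexts, including under bindings $[\cdot]_x$ and before references $\readv{x}$. The check uses the clauses of $\sem{\cdot}_\lambda$ directly:
\begin{itemize}
\item $\sem{E+\emptyset}_\lambda=\sem{E}_\lambda\cup\emptyset$.
\item $\sem{E\,\emptyset}_\lambda=\emptyset$, since no pair $\tuple{w_2,\nu}$ exists in $\sem{\emptyset}_\mu$; the symmetric argument handles $\emptyset\,E$.
\item $\sem{\emptyset^*}_\lambda=\sem{\emptyset^0}_\lambda=\set{\tuple{\epsilon,\lambda}}=\sem{\epsilon}_\lambda$, because every $\emptyset^k$ with $k\ge1$ contains the factor $\emptyset$ and therefore denotes $\emptyset$.
\item $\sem{[\emptyset]_x}_\lambda=\emptyset$.
\end{itemize}
The semantic function is compositional: $\sem{\cdot}_\lambda$ of a compound expression depends only on $\sem{\cdot}_\mu$ of its immediate subterms, taken over all valuations $\mu$. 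By induction on the context, replacing a subterm with a semantically equal one therefore preserves $\sem{\cdot}_\lambda$ for all $\lambda$. Taking $\lambda=\iota$ gives $\lang{E}=\lang{G}$.

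\emph{Termination.} Each rule strictly decreases $\size$, which is a positive integer, so any rewrite sequence is finite.

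\emph{Shape of the normal form.} This is the only step needing a real argument. We show by induction on $G$ that an expression to which no rule applies is either $\emptyset$ itself or contains no $\emptyset$. Suppose $G$ is irreducible, $G\neq\emptyset$, and $G$ contains an occurrence of $\emptyset$. Choose an occurrence of $\emptyset$ that is a proper subterm of $G$; one exists because $G\neq\emptyset$. Let $P$ be its parent node. Depending on the operator at $P$, $P$ has one of the forms $E+\emptyset$, $\emptyset+E$, $E\,\emptyset$, $\emptyset\,E$, $\emptyset^*$, or $[\emptyset]_x$, and in every case a rule applies at $P$. This contradicts irreducibility. The leaves $\epsilon$, $\sigma$, and $\readv{x}$ have no children, so they raise no issue.

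Hence the normal form $G$ exists, has the same language as $E$, and is either $\emptyset$ or $\emptyset$-free. Since $\size$ decreases at each step, at most $|E|$ rewrites occur, so $G$ can be computed in polynomial time.
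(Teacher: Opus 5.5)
Your proof is correct and takes the same route as the paper: apply the listed rewrite rules exhaustively, and get termination from the fact that each rule strictly decreases the size of the expression. You also supply two details the paper leaves implicit: each rule preserves $\sem{\cdot}_\lambda$ for every valuation $\lambda$, so rewriting inside bindings and before references is sound, and an irreducible expression other than $\emptyset$ cannot contain $\emptyset$, because some rule always fires at the parent of such an occurrence. One small point: your opening line announces a ``structural induction on $E$'', but the argument you actually give works by termination and normal forms.
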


\noindent{}\textbf{Remark:} Hereafter, we assume that the input \rewb{} is $\emptyset$-free and not equal to $\emptyset$.

\paragraph{1-use REWB.}

Let $E$ be a ($\emptyset$-free) \rewb{}.
We call $E$ \emph{1-use} if, in every possible computation of $E$,
at most one backreference operation $\readv{x}$ (for some variable $x$) is performed.
For example,
\begin{itemize}
\item $[(a + b)^*]_x \# \readv{x}$ is a 1-use \rewb.
\item $[\Sigma^*]_x \readv{x} \readv{x}$ is not a 1-use \rewb{}.
\item $[a^*]_x [b^*]_y \# (\readv{x} + \readv{y})$ is a 1-use \rewb.
\item $[\Sigma^*]_x [\Sigma^*_y] \readv{x} \readv{y}$ is not a 1-use \rewb{}.
\item $[\Sigma^*]_x (\readv{x} + a)(b + \readv{x})$ is not a 1-use \rewb{}.
\item $[\Sigma^*]_x (a \readv{x} b)^*$ is not a 1-use \rewb{}.
\end{itemize}

To formally define the notion of 1-use,
we introduce the notion of \emph{template strings}.

\paragraph{Template String.}

For a given \rewb{} expression $E$,
we can see $E$ as a language generator over the alphabet
$\Gamma := \Sigma \cup \set{\,[,\, ]_x, \readv{x} : x \in \mathcal{V} }$ by treating special symbols $[$, $]_x$, and $\readv{x}$ as usual letters.
We write $\template(E) \subseteq \Gamma^*$ to denote the set of template strings generated by $E$.
Every element of $\template(E)$ is called a \emph{template string}.

$\template(E)$ is formally defined as follows:
We define the set of template strings $T(E)\subseteq \Gamma^*$ by structural recursion:
\[
\begin{array}{lcl}
\template(\epsilon) & := & \{\epsilon\}, \\
\template(\sigma) & := & \{\sigma\}, \\
\template(E_1 E_2) & := & T(E_1) T(E_2), \\
\template(E_1 + E_2) & := & T(E_1) \cup T(E_2), \\
\template(E^*) & := & (\template(E))^*, \\
\template([E]_x) & := & \set{ [\,w \,]_x : w \in \template(E) }, \\
\template(\readv{x}) & := & \{ \readv{x} \}.
\end{array}
\]

For our example expression $E_4$, $\template(E_4)$ contains the following template strings for $\Sigma = \set{ a, b }$:
\[
\template(E_4) \ni
[ bab ]_x\,a\readv{x}b\,a\readv{x}b,~~
[ abba ]_x\,a\readv{x}b\,a\readv{x}b\,a\readv{x}b,~~
\ldots
\]

By the definition, the following properties trivially hold.
\begin{proposition}
Let $E$ be an $\emptyset$-free expression.
If $E$ contains $\readv{x}$, then there exists $s \in \template(E)$ such that $s$ contains $\readv{x}$.
Similarly, if $E$ has $[\cdots]_x$, then there exists $s \in \template(E)$ such that $s$ contains $[\cdots]_x$.
\end{proposition}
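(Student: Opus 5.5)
The statement is the last proposition before the cut: if an $\emptyset$-free expression $E$ contains $\readv{x}$ (resp. a binding $[\cdots]_x$), then some template string $s\in\template(E)$ contains $\readv{x}$ (resp. $[\cdots]_x$). I will prove both claims together by structural induction on $E$.

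The induction needs an auxiliary fact, which I prove first by the same induction: for every $\emptyset$-free $E$, $\template(E)\neq\emptyset$. The cases are routine. For $\epsilon$, $\sigma$ and $\readv{x}$, the template set is a singleton. For $E_1E_2$, the set $\template(E_1)\template(E_2)$ is nonempty because both factors are nonempty by hypothesis. For $E_1+E_2$, the set is a union of nonempty sets. For $E^*$, the set always contains $\epsilon$. For $[E]_x$, we wrap a member of the nonempty set $\template(E)$.

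This nonemptiness lemma is the one point where $\emptyset$-freeness is really used. It is also the only genuine obstacle: without it the concatenation case fails, as shown by $\readv{x}\,\emptyset$.

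Now the main induction, for the reference claim; the binding claim is identical with $[\cdots]_x$ in place of $\readv{x}$.

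\emph{Base cases.} The only atom containing $\readv{x}$ is $\readv{x}$ itself, and its template set is $\{\readv{x}\}$.

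\emph{Concatenation.} If $E=E_1E_2$ and, say, $E_1$ contains $\readv{x}$, the hypothesis gives $s_1\in\template(E_1)$ containing $\readv{x}$. The auxiliary lemma gives some $s_2\in\template(E_2)$. Then $s_1s_2\in\template(E)$ contains $\readv{x}$. The case where $E_2$ contains $\readv{x}$ is symmetric.

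\emph{Union.} If $E=E_1+E_2$, take the witness from whichever summand contains $\readv{x}$; it lies in $\template(E)$ as well.

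\emph{Star.} If $E=F^*$, a witness $s\in\template(F)$ is a one-fold concatenation, so $s\in(\template(F))^*=\template(E)$.

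\emph{Binding.} If $E=[F]_y$, the occurrence of $\readv{x}$ lies inside $F$, so $s\in\template(F)$ yields $[\,s\,]_y\in\template(E)$. For the binding claim, the additional base situation is $E=[F]_x$ itself: pick any $s\in\template(F)$ by the auxiliary lemma, and $[\,s\,]_x$ contains the binding.
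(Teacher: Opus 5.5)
Your proof is correct and follows the same route as the paper, which only says the proposition holds trivially by definition; your structural induction on $E$ is exactly that routine check spelled out. You also correctly identify that the nonemptiness of the template set for $\emptyset$-free expressions is what makes the concatenation case, and the base case of a binding $[F]_x$, go through.
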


\noindent{}\textbf{Remark.} Each template string can be evaluated by using $\sem{\bullet}$. For example,
\[
\sem{ a [bc]_x d } = \set{ \tuple{abcd, x \mapsto bc } }.
\]

From the definition and the remark, the following property of $\template(\bullet)$ holds.
\begin{proposition}
Let $E$ be a \rewb{}. Then, the following holds:
\[
\sem{E} =\sem{\template(E)}
\]
where $\sem{\template(E)} = \bigcup_{s \in \template(E)} \sem{s}$.
\end{proposition}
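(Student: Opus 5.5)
The plan is to prove a stronger statement by structural induction on $E$, quantifying over all starting valuations:
\[
\sem{E}_{\lambda} = \bigcup_{s \in \template(E)} \sem{s}_{\lambda} \quad \text{for every valuation } \lambda .
\]
The proposition is then the instance $\lambda = \iota$. We need the general $\lambda$ because concatenation threads the valuation produced by the left factor into the right factor. Throughout, a template string $s$ is read as the \rewb{} formed by concatenating its letters, with each balanced pair $[\,\cdots]_x$ read as a binding and each $\readv{x}$ as a reference. Templates generated by $\template(\cdot)$ are well-bracketed by construction, so this reading is unambiguous, and $\sem{s}_\lambda$ is then given by the semantic clauses.

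The first step is a compositionality lemma for templates: $\sem{s_1 s_2}_\lambda = \set{\tuple{u_1u_2,\nu} : \tuple{u_1,\mu}\in\sem{s_1}_\lambda,\ \tuple{u_2,\nu}\in\sem{s_2}_\mu}$ whenever $s_1,s_2$ are well-bracketed. This follows from the concatenation clause together with associativity of the relational composition of semantics. It is the only step that needs care: we must check that splitting a template at the boundary between $s_1$ and $s_2$ does not cut a binding. This holds because templates produced by $\template(E_1E_2)=\template(E_1)\template(E_2)$ are concatenations of individually well-bracketed pieces. I expect this bookkeeping to be the main obstacle, though it is routine.

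With the lemma in hand, the cases go as follows.
\begin{itemize}
\item The base cases $\epsilon$, $\sigma$ and $\readv{x}$ are immediate, since each has a single template equal to itself.
\item For $E_1+E_2$, both sides are unions, so the claim follows directly.
\item For $E_1E_2$, expand $\sem{E_1E_2}_\lambda$ and apply the induction hypothesis to $E_1$ at $\lambda$ and to $E_2$ at each intermediate $\mu$. Then use the lemma to reassemble $\sem{s_1s_2}_\lambda$ over $s_1\in\template(E_1)$ and $s_2\in\template(E_2)$.
\item For $[E]_x$, note that $\sem{[s]_x}_\lambda=\set{\tuple{u,\mu[x:=u]}:\tuple{u,\mu}\in\sem{s}_\lambda}$. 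The map $\tuple{u,\mu}\mapsto\tuple{u,\mu[x:=u]}$ commutes with unions, so the hypothesis for $E$ transfers.
\item For $E^*$, first prove by an inner induction on $k$ that $\sem{E^k}_\lambda=\bigcup_{s\in\template(E)^k}\sem{s}_\lambda$. The inner step is exactly the concatenation case, since $E^{k+1}=E^kE$ and $\template(E)^{k+1}=\template(E)^k\template(E)$. Taking the union over $k$ gives the claim, because $\template(E^*)=\bigcup_k\template(E)^k$.
\end{itemize}
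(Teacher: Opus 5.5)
Your proof is correct. It is the structural induction the paper leaves implicit when it says the property follows ``from the definition and the remark'': the two points needing care are strengthening the claim to an arbitrary starting valuation $\lambda$ (required for the concatenation and star cases) and checking that well-bracketed templates compose. The only omission, the constant $\emptyset$, is harmless, since the paper defines templates only for $\emptyset$-free expressions (or you can set $\mathbb{T}(\emptyset):=\emptyset$, which makes that base case trivial).
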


For a template string $s$, we write $\#_{\readv{\mathcal{V}}}(s)$ to denote the total number of occurrences of reference symbols $\readv{z}$ for $z \in \mathcal{V}$.
Let $\#_{\readv{\mathcal{V}}}(E) := \max\set{ \#_{\readv{\mathcal{V}}}(s) : s \in \template(E) }$
where $\max \emptyset = 0$.
Now we define 1-use \rewb{} as follows:
\begin{center}
\rewb{} $E$ is 1-use 
if and only if 
$\#_{\readv{\mathcal{V}}}(E) \leq 1$.
\end{center}

\subsection{Normalization Overview}

\label{appendix:normalizing:overview}

We aim to normalize a given 1-use \rewb{} $E$ into the following form:
\[
(A_1 \, [B_1]_{x} \, C_1 \, \readv{x} \, D_1) +
(A_2 \, [B_2]_{x} \, C_2 \, \readv{x} \, D_2) + \cdots +
(A_{\exprN} \, [B_{\exprN}]_{x}\,  C_{\exprN} \readv{x} \, D_{\exprN}) + E'
\]
where $A_i$, $B_i$, $C_i$, $D_i$, and $E'$ are \regex{es}.
Here $E'$ represents for the 0-use (pure \regex) part.

The crucial restriction is that, in each summand,
the only binding is $[B_i]_x$ and the only reference is $\readv{x}$, and both occur at the top level.
In particular, the subexpressions $A_i, B_i, C_i,$ and $D_i$ are pure \regex{es} (i.e., they contain neither bindings nor references).

To obtain our goal form, we need to solve the following tasks:
\begin{enumerate}
\item Lift nested reference subterms $\readv{x}$ to the top-level in Section~\ref{appendix:lift references}.
\item Then, lift nested binding subterms $[ \cdots ]_x$ to the top-level in Section~\ref{appendix:normalizing:lift binding}.
\end{enumerate}

\subsection{Lifting References $\readv{x}$ and Lemma~\ref{lemma:appendix:normalizing step1}}

\label{appendix:lift references}

On lifting $\readv{x}$ to the top-level,
we first consider two patterns $( \cdots \readv{x} \cdots)^*$ and $[ \cdots \readv{x} \cdots]_{y}$.

\subsubsection{Handling Simple Patterns}

\paragraph{Pattern ``references in Kleene stars'': $( \cdots \readv{x} \cdots)^*$.}

Since our input $E$ is $1$-use, any references $\readv{x}$ never appear inside Kleene stars.
For instance, an expression $[\Sigma^*]_x (a \readv{x} b)^*$ uses the variable $x$ finitely but unboundedly many times.

\paragraph{Pattern ``references in bindings'': $[ \cdots \readv{x} \cdots]_{y}$.}

We rewrite and simplify a given 1-use \rewb{} $E$ so that any references $\readv{x}$ never appear inside bindings $[\cdots]_y$.
As an example, consider an expression $[\Sigma^*]_x [a\,\readv{x}\,b]_y$.
Since we use the variable $x$ in the subterm $a\,\readv{x}\,b$,
any subterms executed after this part are not allowed to use a backreference $\readv{y}$ for any variable $y$.
In other words, we can remove all bindings executed after $a\,\readv{x}\,b$.
In fact, for this expression,
we can safely rewrite it as follows:
\[
[\Sigma^*]_x [a \,\readv{x}\, b]_y \leadsto [\Sigma^*]_x a\,\readv{x}\,b
\]
by removing $[\_]_y$.

\newcommand{\unnest}{\mathit{unnest}}

This rewriting is accomplished by the following translation function:
\[
\begin{array}{lcl}
\unnest(\,[E]_x\,) & := & \begin{cases}
\unnest(E) & \text{If some references appear in $E$}, \\
[E]_x & \text{Otherwise}, \\
\end{cases}
\\[5pt]
\unnest( E F ) & := & \unnest(E) \unnest(F), \\
\unnest( E + F ) & := & \unnest(E) + \unnest(F), \\
\unnest( E^*) & := & E^*, \\
\unnest( t ) & := & t \quad (t \in \set{ \epsilon, \sigma} \cup \set{ \readv{z} : z \in \mathcal{V}})
\end{array}
\]

The following properties clearly hold by the definition of $\unnest$.
\begin{proposition}
\[
\begin{array}{ll}
(1) & \lang{E} = \lang{\unnest(E)}, \\
(2) & |\unnest(E)| \leq |E|.
\end{array}
\]
\end{proposition}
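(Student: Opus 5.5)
The plan is to prove both items by structural induction on $E$, following the clauses of $\unnest$. Throughout, $E$ is assumed $\emptyset$-free and $1$-use, as fixed by the Remark of Section~\ref{appendix:normalizing:basic definitions}.

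\textbf{Item (2).} This is a routine induction. The leaf clauses and the clause $\unnest(E^*) = E^*$ return their input unchanged. The clauses for $E F$ and $E + F$ preserve the outer node and apply the induction hypothesis to each child. The binding clause either returns $[E]_x$ unchanged or returns $\unnest(E)$. In the latter case $|\unnest(E)| \leq |E| < |[E]_x|$ by induction. Hence $\size$ never increases.

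\textbf{Item (1).} I would work at the level of template strings, using $\sem{E} = \sem{\template(E)}$. The key step is the binding clause. Suppose $E$ contains a reference and $s \in \template([E]_x)$ passes through a reference $\readv{z}$ inside the brackets. Since $E$ is $1$-use, this $\readv{z}$ is the only reference executed in any completion of $s$ to a full template string of the whole expression. So the value written into $x$ at the closing $]_x$ is never read afterwards.

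Erasing the bracket pair $[\,\cdot\,]_x$ from $s$ therefore changes only the final valuation, and only at $x$. It does not change the matched string, so the $\lang{\cdot}$-projection is unchanged. Since $\lang{\cdot}$ only projects the matched string, this yields $\lang{[E]_x} = \lang{E}$ in this situation. Composing with the induction hypothesis for $E$ gives the claim.

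The concatenation and union clauses then follow from compositionality of $\sem{\cdot}$. This uses the observation that the erasure argument is local: it needs only the global $1$-use property of the ambient expression, not of the subterm. The star clause is trivial, because by the ``references in Kleene stars'' discussion no reference occurs under a star, so $E^*$ already contains no nested reference-in-binding.

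\textbf{Main obstacle.} I expect the main obstacle to be the binding case when only \emph{some} template strings of $E$ contain a reference. A typical example is $E = a + \readv{y}$ inside $[\Sigma^*]_y\,[a + \readv{y}]_x\,\readv{x}$, which is $1$-use. Along the reference-free branch of $E$, the binding of $x$ can genuinely be read later, so dropping it is unsound.

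To handle this, I would first normalize so that the side condition reads ``\emph{every} template string of $E$ contains a reference.'' Concretely, I would distribute unions out of bindings, $[E_1 + E_2]_x \Mapsto [E_1]_x + [E_2]_x$, so that each binding body is uniformly referencing or uniformly reference-free, and only then apply $\unnest$. Alternatively, I would restrict the statement to expressions already of this shape. Either way, the erasure argument above then goes through unchanged. The size bound (2) has to be re-checked against this distribution step, since duplicating the binding can increase size. That tension between soundness and the $|\unnest(E)| \leq |E|$ bound is where I would expect to spend the effort.
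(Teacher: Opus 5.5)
Your size bound (2) is fine, and erasing the brackets of a binding whose body contains a reference is the right idea for (1). The gap is in the binding case. You justify the erasure only along template strings that pass through the reference inside the body. You then claim that along reference-free branches of the body the captured value may be read later, so that dropping the binding is unsound.

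That never happens for $1$-use expressions, and your witness is not $1$-use. The expression $[\Sigma^*]_y\,[a+\readv{y}]_x\,\readv{x}$ has the template string $[w]_y\,[\readv{y}]_x\,\readv{x}$, which executes two references. Your proposed remedy (distributing $[E_1+E_2]_x$ into $[E_1]_x+[E_2]_x$ before applying $\unnest$, or restricting the inputs) therefore changes the function or the statement. It would not prove the proposition as stated, and it creates the conflict with (2) that you anticipate.

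The missing idea is a substitution argument. Let $W$ be the ambient $\emptyset$-free, $1$-use expression, and let $[E]_x$ be an occurrence whose body syntactically contains a reference. By $\emptyset$-freeness, some $s'\in\template(E)$ contains a reference.

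Template strings are generated compositionally. So whenever $u\,[s]_x\,v\in\template(W)$ arises from this occurrence, for \emph{any} $s\in\template(E)$, we also have $u\,[s']_x\,v\in\template(W)$. The $1$-use property then forces $u$ and $v$ to be reference-free. Hence on every run through this occurrence, whichever branch of $E$ is taken, no reference is executed after the binding. Erasing its brackets changes only the final valuation, never the matched string.

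Since $\unnest(W)$ is exactly $W$ with the brackets of all such bindings erased, (1) follows directly with no extra normalization. Erase them one at a time; each step preserves $\emptyset$-freeness and the $1$-use property.

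Run this argument globally on $W$, not as a structural induction with hypothesis $\lang{E}=\lang{\unnest(E)}$. That induction fails at the concatenation case, because $\lang{E_1E_2}$ is not determined by $\lang{E_1}$ and $\lang{E_2}$: valuations flow from $E_1$ into $E_2$. The paper itself gives no argument beyond asserting that both properties are clear from the definition.
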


After applying the function $\unnest$, it suffices to consider 1-use \rewb{s} generated by the following restricted grammar $\mathcal{F}$ of \rewb:
\[
\mathcal{F} ::= \epsilon \mid \sigma \mid \mathcal{F} \, \mathcal{F} \mid \mathcal{F} + \mathcal{F} \mid (F)^* \mid [ F ]_x \mid \readv{x}
\]
where $F$ is restricted to expressions that are \emph{reference}-free \rewb{s}.

\subsubsection{Tagged References and Lifting References Covered by Unions}

After applying $\unnest$, our lifting phase for references is not yet complete because, in general, references may not appear at the top-level but rather be nested within unions.
For instance, consider the following expression:
\[
E := a + \readv{x} + (b\,\readv{y} c) + ((d^* \readv{x}) + (\readv{x} e^*)) f^*.
\]
In this example, the third and fourth reference occurrences are not at the top-level.

To lift such references,
as a first step, we annotate a designated reference occurrence with a tag $\rho$ as $\readv{x}^{\rho}$.
This tag serves purely as a syntactic marker to distinguish this occurrence from other references.
Applying this to our example $E$,
we use $\rho$ to generate the following set of tagged expressions:
\[
\begin{array}{ll}
E^\rho_1 :& a + \readv{x}^{\rho} + (b\,\readv{y} c) + ((d^* \readv{x}) + (\readv{x} e^*)) f^*, \\
E^\rho_2 :& a + \readv{x} + (b\,\readv{y}^{\rho} c) + ((d^* \readv{x}) + (\readv{x} e^*)) f^*, \\
E^\rho_3 :& a + \readv{x} + (b\,\readv{y} c) + ((d^* \readv{x}^{\rho}) + (\readv{x} e^*)) f^*, \\
E^\rho_4 :& a + \readv{x} + (b\,\readv{y} c) + ((d^* \readv{x}) + (\readv{x}^{\rho} e^*)) f^*.
\end{array}
\]

\newcommand{\OccRef}{\mathit{Occ}_{\mathrm{ref}}}

We formally define the $\rho$-tagging,
denoted by $\Xi_\rho(E)$, as a function that maps an expression $E$ to a set of expressions where exactly one reference occurrence is tagged with $\rho$.
Let $E$ be an expression and let $\OccRef(E)$ be the list
of all \emph{occurrences} of reference symbols in $E$, ordered from left to right in the concrete syntax.
Write $\OccRef(E) = (o_1,o_2,\dots,o_N)$.
For each $i \in \set{1, 2, \ldots, N}$, define $E_i^{\rho}$ to be the expression obtained from $E$ by replacing
\emph{exactly} the occurrence $o_i$ of a reference symbol (say $\backslash x$) with its tagged form
$\backslash x^{\rho}$, leaving all other symbols unchanged.
Then, we define $\Xi_{\rho}(E)$ as follows:
\[
  \Xi_{\rho}(E) := \set{ E_1^{\rho}, E_2^{\rho}, \dots, E_N^{\rho} }
\]
Applying this function to our example expression $E$, we obtain the set $\Xi_{\rho}(E) = \set{ E^\rho_1, E^\rho_2, E^\rho_3, E^\rho_4 }$.

We naturally extend $\sem{\cdot}$ and $\template(\cdot)$ to accommodate the tag $\rho$ as follows:
\[
\begin{array}{lcl}
\sem{\, \readv{x}^{\rho}\,} & := & \sem{\,\readv{x}\,}, \\
\template(\,\readv{x}^{\rho}\,) & := & \set{\,\readv{x}^{\rho}\,}.
\end{array}
\]

\newcommand{\ZeroRef}{\mathcal{Z}_{\text{ref}}}
\newcommand{\RefRestr}{\restriction_{\text{ref}} \rho}

To decompose the set of tagged template strings, we define two notations:
\[
\begin{array}{lcl}
\ZeroRef(E) & := & \set{ w \in \template(E) : \text{$w$ does not contain any reference symbols }}, \\
\template(E) \RefRestr 
  & := & \set{ w \in \template(E) : \text{$w$ contains $\rho$} } = \template(E) \setminus \ZeroRef(E)
\end{array}
\]

Since $\ZeroRef(E)$ and $\template(E) \restriction_{\text{ref}} \rho$ are disjoint sets of $\template(E)$, the following basic property clearly holds.
\begin{proposition}
\label{appendix:tag:decompose}
Let $E$ be an expression and 
assume $\Xi_\rho(E) = E^{\rho}_1, E^{\rho}_2, \ldots, E^{\rho}_N$ where $N := \OccRef(E)$ is the total number of occurrences of references in $E$.
Then, the following holds:
\[
\sem{\,\template(E)\,} = 
\sem{\,\ZeroRef(E)\,} \cup
\bigcup^{N}_{i = 1} \bigl(\sem{\,\template(E_i) \RefRestr\, }\,\bigr)
\]
\end{proposition}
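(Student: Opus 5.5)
The statement to prove is Proposition~\ref{appendix:tag:decompose}. The plan is to derive it from the identity $\sem{E}=\sem{\template(E)}$ by partitioning $\template(E)$ according to which reference occurrence, if any, a template string uses. Since $\sem{\template(E)}=\bigcup_{s\in\template(E)}\sem{s}$ is a union over individual template strings, it suffices to show the set identity
\[
\template(E) = \ZeroRef(E) \cup \bigcup_{i=1}^{N} \mathrm{untag}\bigl(\template(E^\rho_i)\RefRestr\bigr),
\]
where $\mathrm{untag}$ erases the tag $\rho$. I then apply $\sem{\cdot}$ to both sides. The extended semantics satisfies $\sem{\readv{x}^\rho}=\sem{\readv{x}}$, so by a trivial induction $\sem{s}=\sem{\mathrm{untag}(s)}$ for every tagged template string $s$. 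Hence untagging does not change the semantics, and the right-hand side of the proposition follows.

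To prove the set identity, I would first do an induction on the structure of $E$, in parallel with the recursive definition of $\template(\cdot)$. The claim is that every $s\in\template(E)$ arises from a derivation, and each reference symbol in $s$ is produced by a specific syntactic occurrence $o_i\in\OccRef(E)$. Tagging $o_i$ in $E$ yields $E^\rho_i$. Replaying the same derivation in $E^\rho_i$ produces the string $s'$, which equals $s$ except that the symbol from $o_i$ carries $\rho$. Conversely, each $s'\in\template(E^\rho_i)$ untags to an element of $\template(E)$. This correspondence is routine: the cases $+$, concatenation, star and binding simply commute with tagging, because tagging only changes a leaf.

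The two inclusions then go as follows. For $\subseteq$, take $s\in\template(E)$. If $s$ has no reference, then $s\in\ZeroRef(E)$. Otherwise, since $E$ is 1-use, $s$ contains exactly one reference symbol, produced by some occurrence $o_i$. Then $s'\in\template(E^\rho_i)$ contains $\rho$, and $\mathrm{untag}(s')=s$. For $\supseteq$, $\ZeroRef(E)\subseteq\template(E)$ by definition, and untagging maps $\template(E^\rho_i)$ into $\template(E)$.

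The main obstacle is making the notion of ``the occurrence that produced a symbol'' precise, because Kleene star may duplicate an occurrence. The 1-use property is exactly what keeps this manageable: any string containing a reference contains exactly one reference symbol, hence exactly one producing occurrence. Moreover, by the earlier observation, references never lie under a star in a 1-use expression, so each tagged occurrence contributes at most one symbol. Disjointness of the union is not required for the equality; only coverage matters.
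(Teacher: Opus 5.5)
Your proof is correct and follows the paper's idea. The paper simply asserts the identity from the split of $\template(E)$ into the reference-free part $\ZeroRef(E)$ and the strings whose reference comes from a tagged occurrence; your untagging map, together with $\sem{\readv{x}^{\rho}}=\sem{\readv{x}}$, makes that correspondence explicit. As you note, disjointness is irrelevant. In fact the 1-use hypothesis is not needed either: for any $s$ containing a reference, you can tag any occurrence that produced one of its reference symbols.
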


\paragraph{Computing $\mathcal{Z}$.}

Let $E$ an expression of the grammar $\mathcal{F}$.
We define the function $\noref(E)$ that constructs an expression computing $\ZeroRef(E)$.

\[
\begin{array}{lcl}
\noref(\epsilon) & := & \epsilon, \\
\noref(\sigma) & := & \sigma, \\
\noref(E_1 E_2) & := & \noref(E_1) \noref(E_2), \\
\noref(E_1 + E_2) & := & \noref(E_1) + \noref(E_2), \\
\noref(E^*) & := & E^*, \\
\noref([E]_x) & := & \noref(E), \\
\noref(\readv{x}) & := & \emptyset,
\end{array}
\]

By the definition, the following basic properties clearly hold.
\begin{lemma}
\[
\begin{array}{ll}
(1) & \ZeroRef(E) = \template( \noref(E) ), \\
(2) & |\noref(E)| \leq |E|.
\end{array}
\]
\end{lemma}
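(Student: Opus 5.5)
Both claims follow by structural induction on $E$ in the grammar $\mathcal{F}$, following the clauses of $\noref$ one by one.

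For (1), $\ZeroRef(E)=\template(\noref(E))$, I go through the cases. The base cases $\epsilon$ and $\sigma$ are immediate, since their template sets contain no reference symbols. For $\readv{x}$, the single template string $\readv{x}$ contains a reference, so $\ZeroRef(\readv{x})=\emptyset=\template(\emptyset)$. Here I extend $\template$ by $\template(\emptyset):=\emptyset$, which is the natural convention because $\noref$ may output $\emptyset$. For concatenation, a string $st$ with $s\in\template(E_1)$ and $t\in\template(E_2)$ is reference-free iff both $s$ and $t$ are. Hence $\ZeroRef(E_1E_2)=\ZeroRef(E_1)\ZeroRef(E_2)$, and the induction hypothesis gives the claim. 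Union is analogous, using $\ZeroRef(E_1+E_2)=\ZeroRef(E_1)\cup\ZeroRef(E_2)$. For a binding $[E]_x$, the strings $[\,s\,]_x$ are reference-free iff $s$ is, so $\ZeroRef([E]_x)=\{[\,s\,]_x : s\in\ZeroRef(E)\}$. This should be compared with $\template(\noref(E))=\ZeroRef(E)$; the two sets differ by the bracket symbols.

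The step I expect to be the real obstacle is this binding case. Equality of template strings fails literally, since $\noref$ drops the brackets, and what actually holds is $\sem{\ZeroRef([E]_x)}\supseteq$-equivalence only at the level of $\lang{\cdot}$, because bindings change the valuation. My plan is to read (1) as intended up to erasing bracket symbols, which suffices for the later use in $\lang{\cdot}$. Concretely, I would show that the bracket-erasing projection of $\ZeroRef(E)$ equals $\template(\noref(E))$ and note that $\lang{\cdot}$ of reference-free strings ignores the brackets. Alternatively, I would define $\noref([E]_x):=[\noref(E)]_x$ if the equality must hold verbatim.

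For the Kleene star case, $E^*$ in $\mathcal{F}$ has a reference-free body $F$, because $E$ is 1-use and by the grammar. Therefore $\ZeroRef(F^*)=\template(F)^*=\template(F^*)$, which matches $\noref(F^*)=F^*$.

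For (2), $|\noref(E)|\le|E|$, I use the same induction. Each clause replaces a node by a node of size $1$ or by the same constructor over recursively no-larger children, and the binding clause strictly decreases the size by removing the binding node. The bound follows directly from the definition of $\size$.
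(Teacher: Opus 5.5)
Your clause-by-clause induction is exactly what the paper's one-line justification (``by the definition'') leaves implicit. Your diagnosis of the binding clause is correct, and the paper overlooks it: (1) is false verbatim. For example, $E=[a]_x$ gives $\ZeroRef(E)=\{[a]_x\}$ but $\template(\noref(E))=\template(a)=\{a\}$.

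State the discrepancy precisely rather than as ``$\supseteq$-equivalence''. The two sides match the same strings but with different valuations, since $x$ is updated on one side only. So they agree after projecting to matched strings, but not under $\sem{\cdot}$.

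Your second repair, $\noref([E]_x):=[\noref(E)]_x$, works. The binding step becomes $\ZeroRef([E]_x)=\{[s]_x : s\in\ZeroRef(E)\}=\template([\noref(E)]_x)$. Part (2) survives because the binding node is kept rather than deleted. The remaining cases are fine: the convention $\template(\emptyset):=\emptyset$, concatenation, union, the star with reference-free body, and the size count.

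Your first repair, however, is misstated. Since $\noref(F^*)=F^*$ keeps any bindings inside $F$, the set $\template(\noref(E))$ may contain bracket symbols. So ``the bracket-erased image of $\ZeroRef(E)$ equals $\template(\noref(E))$'' fails, e.g.\ for $E=([a]_y)^*$. Here $\noref(E)=E$, so $[a]_y\in\template(\noref(E))$, whereas the erased image of $\ZeroRef(E)$ is $\{a^n : n\geq 0\}$. Your own star case checks verbatim equality, which is inconsistent with one-sided erasure.

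Erase on both sides instead. Let $h$ be the monoid homomorphism deleting all bracket symbols, and prove $h(\ZeroRef(E))=h(\template(\noref(E)))$. The same induction goes through: the binding case is now just the induction hypothesis, and $h$ commutes with concatenation, union and star. The string matched by a reference-free template string $s$ is exactly $h(s)$, and $\noref(E)$ is reference-free. Hence this yields $\{w : \langle w,\lambda\rangle\in\sem{\ZeroRef(E)}\}=\lang{\noref(E)}$. That is all the later decomposition of $E$ as $\sum_i\alpha_i\,\readv{x_i}\,\beta_i+\noref(E)$ actually requires, since it is only a language equivalence.
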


\paragraph{Computing $\template(E) \RefRestr$.}

\newcommand{\appear}{\mathit{appear}}

Let $E$ be a \rewb{} with a tagged reference.
We define the function $\appear_\rho(E)$ that computes an expression computing $\template(E) \RefRestr$:
 \[
\begin{array}{lcl}
\appear_\rho(\readv{x}^\rho) & := & \readv{x}^{\rho}, \\
\appear_\rho(E_1 E_2) & := & 
\begin{cases}
    \appear_{\rho}(E_1)\ E_2 & \text{If $E_1$ contains $\rho$}, \\
    E_1\ \appear_{\rho}(E_2) & \text{Else if $E_2$ contains $\rho$}, \\
    \emptyset & \text{Otherwise},
\end{cases} \\
\appear_\rho(E_1 + E_2) & := &
\begin{cases}
    \appear_{\rho}(E_1) & \text{If $E_1$ contains $\rho$}, \\
    \appear_{\rho}(E_2) & \text{Else if $E_2$ contains $\rho$}, \\
    \emptyset & \text{Otherwise},
\end{cases} \\
\appear_\rho( t ) & := & \emptyset, \qquad
\bigl(t = \epsilon, \sigma, \readv{x}, E^*, [E]_x\bigr).
\end{array}
\]
\textbf{Remark:} In our setting,
since $E^*$ and $[E]_x$ never contain $\rho$, we set $\appear_\rho(t) = \emptyset$ for such arguments.
By the definition, the following basic property clearly holds.
\begin{lemma}
\[
\template(E) \RefRestr = \template(\appear_\rho(E)).
\]
\end{lemma}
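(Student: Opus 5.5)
We prove the identity $\template(E)\RefRestr = \template(\appear_\rho(E))$ by structural induction on $E$, where $E$ ranges over expressions of grammar $\mathcal{F}$ carrying exactly one tagged occurrence $\readv{x}^{\rho}$. Since $\template$ treats $\readv{x}^\rho$ as an ordinary letter, a template string ``contains $\rho$'' exactly when it was produced by passing through the tagged leaf. So the goal reduces to: the templates of $E$ passing through the tagged leaf are exactly the templates of $\appear_\rho(E)$. Before the induction we record one easy observation. If $E$ does not syntactically contain $\rho$, then no string in $\template(E)$ contains $\rho$, so $\template(E)\RefRestr=\emptyset=\template(\emptyset)$. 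This disposes of every ``Otherwise'' branch and of the leaf cases $\epsilon,\sigma$ and untagged $\readv{x}$.

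The base case $\readv{x}^\rho$ is immediate, since both sides equal $\set{\readv{x}^\rho}$.

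For concatenation $E_1E_2$ with $\rho$ in $E_1$, we use that $\rho$ occurs at most once overall. Hence no string of $\template(E_2)$ contains $\rho$. A string $s_1s_2$ with $s_i\in\template(E_i)$ therefore contains $\rho$ iff $s_1$ does. This gives $\template(E_1E_2)\RefRestr=(\template(E_1)\RefRestr)\,\template(E_2)$. Applying the induction hypothesis to the first factor, this equals $\template(\appear_\rho(E_1)\,E_2)$. The case where $\rho$ lies in $E_2$ is symmetric.

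For union $E_1+E_2$ with $\rho$ in $E_1$, we have $\template(E_1+E_2)\RefRestr=(\template(E_1)\RefRestr)\cup(\template(E_2)\RefRestr)$. The second part is empty by the preliminary observation, and the induction hypothesis finishes this case.

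The remaining constructors are $E^*$ and $[E]_x$. In grammar $\mathcal{F}$ their bodies are reference-free (stars never contain references because $E$ is 1-use, and after $\unnest$ bindings never do). So they cannot contain $\rho$, the preliminary observation applies, and both sides are empty, matching $\appear_\rho(t)=\emptyset$.

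The only delicate point is the concatenation case. There we must use the uniqueness of the tag, guaranteed by the construction of $\Xi_\rho$, so that $\rho$-containment of $s_1s_2$ is decided by a single factor. We must also check that the side not containing $\rho$ is kept whole rather than restricted, which is exactly how $\appear_\rho$ is defined. Everything else is routine unfolding of the definitions of $\template$ and $\appear_\rho$.
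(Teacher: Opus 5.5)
Your proof is correct and takes the same route as the paper, which justifies the identity only with ``by the definition.'' Your structural induction over $\mathcal{F}$ spells out what that one line leaves implicit: uniqueness of the tag handles concatenation, and the reference-free bodies of stars and bindings dispose of those constructors. You also tacitly set $\mathbb{T}(\emptyset)=\emptyset$, which the paper likewise needs because $\mathit{appear}_\rho$ introduces $\emptyset$.
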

Also, the following property for the form and size holds.
\begin{proposition}
\[
\appear_\rho(E) = \alpha\,\readv{x}^{\rho}\,\beta
\]
where $\alpha$ and $\beta$ are reference-free \rewb{s}.
Furthermore, $|\appear_{\rho}(E)| \leq |E|$ holds.
In particular, the following holds:
\[
|\alpha| \leq |E|, \qquad |\beta| \leq |E|.
\]
\end{proposition}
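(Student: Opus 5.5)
The claim is that $\appear_\rho(E)=\alpha\,\readv{x}^{\rho}\,\beta$ with $\alpha,\beta$ reference-free, and $|\appear_\rho(E)|\le|E|$. I would prove it by structural induction on $E$, where $E$ ranges over expressions of grammar $\mathcal{F}$ containing exactly one tagged reference $\readv{x}^{\rho}$. The induction follows the clauses defining $\appear_\rho$. Throughout, ``reference-free'' includes possible occurrences of $\emptyset$, since the definition can output $\emptyset$ in the ``Otherwise'' branches; those branches are never reached when $E$ contains $\rho$.

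\textbf{Base case.} Here $E=\readv{x}^{\rho}$, and we take $\alpha=\beta=\epsilon$. Strictly speaking $\appear_\rho(E)=\readv{x}^{\rho}$, and we identify it with $\epsilon\,\readv{x}^{\rho}\,\epsilon$ up to the unit law. Under this identification the size bound is read on the actual output $\readv{x}^{\rho}$, which gives $1\le 1$. The clause $\appear_\rho(t)=\emptyset$ for $t\in\{\epsilon,\sigma,\readv{z},F^*,[F]_z\}$ never applies to an expression containing $\rho$. For $F^*$ and $[F]_z$ this holds because in $\mathcal{F}$ their bodies are reference-free; I would note this invariant explicitly, since it is what rules out $\rho$ hiding under a star or a binding.

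\textbf{Concatenation.} Let $E=E_1E_2$ and suppose $\rho$ lies in $E_1$. By induction, $\appear_\rho(E_1)=\alpha_1\readv{x}^{\rho}\beta_1$, so $\appear_\rho(E)=\alpha_1\readv{x}^{\rho}(\beta_1E_2)$. The key point is that $E_2$ is reference-free. Otherwise a template string of $E$ could pick a template containing $\readv{x}^\rho$ from $E_1$ and one containing a reference from $E_2$, because $\emptyset$-freeness guarantees every subexpression has a template containing each of its references. That string would have two references, contradicting 1-use. Hence $\beta:=\beta_1E_2$ is reference-free. The symmetric case is identical: if $\rho$ lies in $E_2$, then $E_1$ is reference-free by the same 1-use argument.

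\textbf{Union.} For $E_1+E_2$ with $\rho$ in $E_1$, we have $\appear_\rho(E)=\appear_\rho(E_1)$ and apply induction directly; the other case is symmetric.

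\textbf{Size.} The bound follows because each clause either drops a subterm or keeps it verbatim: $|\appear_\rho(E_1)E_2|\le|E_1|+|E_2|+1=|E|$, and the union case is smaller. Since $|\alpha|,|\beta|$ are each at most $|\appear_\rho(E)|$, the bounds $|\alpha|\le|E|$ and $|\beta|\le|E|$ follow.

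The main obstacle is justifying reference-freeness of the sibling $E_2$ (or $E_1$) in the concatenation case, which uses 1-use together with $\emptyset$-freeness. Here the earlier proposition, that an $\emptyset$-free expression containing $\readv{z}$ has a template containing it, does the work. One subtlety is that the tagged expression is a subterm of $\unnest(E)$, and 1-use is preserved under $\unnest$ and under passing to subterms that occur in some template. I would check that $\unnest$ does not break $\emptyset$-freeness; if it does, I would re-run the $\emptyset$-elimination rewrites first.
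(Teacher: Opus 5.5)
Your proof is correct and is essentially the structural induction along the defining clauses of $\mathit{appear}_\rho$ that the paper leaves implicit; the paper states this proposition without proof, as immediate from the definition. Your argument that the sibling of the $\rho$-containing factor in a concatenation is reference-free, via 1-use together with $\emptyset$-freeness (both preserved by $\mathit{unnest}$ and inherited by subterms), supplies the one step the paper does not spell out.
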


\paragraph{Summary of This Section.}

From a given 1-use \rewb{} $E$, we can construct the following:
\[
(\alpha_1\,\readv{x_1}\,\beta_1) + (\alpha_2\,\readv{x_2}\,\beta_2) + \cdots + (\alpha_{K'}\,\readv{x_{K'}}\,\beta_{K'}) + \noref(E)
\]
where $\alpha_i$ and $\beta_i$ are reference-free \rewb{s}.
Also, the following holds:
\[
K' \leq |E|,\ 
|\alpha_i| = O(|E|),\ 
|\beta_j| = O(|E|),\ 
|\noref(E)| = O(|E|).
\]

This clearly corresponds our Lemma~\ref{lemma:appendix:normalizing step1}.
\NormalizingFirstStep*

\subsection{Removing Unnecessary Variables and only Keeping $x$}
\newcommand{\keep}{\mathit{keep}}    

\label{appendix:restricted grammar}

For each subexpression $\alpha_i\ \readv{x_i}\ \beta_i$, we perform the following two translations:
\begin{enumerate}
\item We replace all bindings of the form $[F]_y$ (for all variables $y$) with $F$ in $\beta_i$.
\begin{itemize}
    \item This replacing does not alter the expression's behaviour because, on 1-use \rewb{s}, any use references of the form $\readv{x}$ do not appear in $\beta_i$.
    \item To formally state this rewrite, we use the following function and rewrite $\beta_i \leadsto \elim(\beta_i)$.
\[
\begin{array}{lcl}
\elim(t) & \defeq & t \quad (t \in \set{ \epsilon, \sigma }), \\
\elim(\: E_1 E_2 \:) & \defeq & \elim(\, E_1 \,) \elim(\, E_2 \,), \\
\elim(\: E_1 + E_2 \:) & \defeq & \elim(\,E_1\,) + \elim(\,E_2\,), \\
\elim( (E)^* ) & \defeq & \elim(E)^*, \\
\elim(\:[E]_y\:) & \defeq &  \elim(\,E\,)
\end{array}
\]    
\end{itemize}
\item We replace all subterms of the form $[F]_y$ ($x_i \neq y$) with $F$ in $\alpha_i$.
\begin{itemize}
    \item This replacing does not alter the expression's behaviour because we only use the variable $x_i$ and thus modifications on $y$ are irrelevant.
    \item To formally state this rewrite, we use the following function and rewrite $\alpha_i \leadsto \keep_{x_i}(\alpha_i)$.
\[
\begin{array}{lcl}
\keep_{x}(t) & \defeq & t \quad (t \in \set{ \epsilon, \sigma }), \\[3pt]
\keep_{x}(\: E_1 E_2 \:) & \defeq & \keep_{x}(\,E_1\,) \keep_{x}(\, E_2 \,), \\[3pt]
\keep_{x}(\: E_1 + E_2 \:) & \defeq & \keep_{x}(\,E_1\,) + \keep_{x}(\,E_2\,), \\[3pt]
\keep_{x}( (E)^* ) & \defeq & \keep_{x}(E)^*, \\
\keep[(\:[E]_y\:) & \defeq &  \begin{cases}
    [\,\keep_x(E)\,]_x & \text{If $y = x$}, \\
    \keep_x(E) & \text{Otherwise}.
\end{cases}
\end{array}
\]        
\end{itemize}
\end{enumerate}

\paragraph{Grammar for this Phase.}

We first decompose $E$ by Proposition~\ref{appendix:tag:decompose} as follows:
\[
E \Mapsto \noref(E), \appear_{\rho}(E_1), \appear_{\rho}(E_2), \ldots, \appear_{\rho}(E_N),
\]
and for each $\appear_\rho(E_i) = \alpha_i\ \readv{x_i}\ \beta$, we rewrite it as follows:
\[
\alpha_i\ \readv{x_i}\ \beta \Mapsto \keep_{x_i}(\alpha_i)\ \readv{x_i}\ \elim(\beta).
\]

Each expression of the form $\keep_x(\alpha)$ is generated by the following restricted grammar $\mathcal{G}$:
\[
\mathcal{G} ::= \epsilon \mid \sigma \mid \mathcal{G} \mathcal{G} \mid \mathcal{G} + \mathcal{G} \mid \mathcal{G}^* \mid [ \,\mathcal{G}\, ]_x.
\]

\noindent\textbf{Remark:} If $[\cdots]_y$ appears in $\keep_x(\alpha)$, then $y = x$ must hold.

Hereafter, we translate an expression $\alpha$ of $\mathcal{G}$ into the following language-equivalent form:
\[
\alpha \Mapsto A\:[B]_x\:C
\]
where $A$, $B$, and $C$ are pure REGEX.

\subsection{Lift Bindings to the Top level and Lemma~\ref{lemma:appendix:normalizing step2}}

\label{appendix:normalizing:lift binding}


In moving all occurrences of binding to the top-level,
we handle two patterns: $[ \cdots [F]_x \cdots]_x$ and $( \cdots [F]_x \cdots)^*$.


\subsubsection{%
  \texorpdfstring{Pattern $[ \cdots [F]_x \cdots]_x$}%
                 {Pattern [... [F]_x ...]_x}
}

For instance, consider the expression $[ a \, [\Sigma^*]_x \, b]_x$.
Since only the last modification to the variable $x$ is effective, we can rewrite the expression to $[ a \, \Sigma^* \, b]_x$ without changing its behaviour.

\[
\begin{array}{lcl}
\unnest_2(\,[E]_x\,) & := & [\,\elim(E)\,]_x, \\
\unnest_2( E F ) & := & \unnest_2(E) \unnest_2(F), \\
\unnest_2( E + F ) & := & \unnest_2(E) + \unnest_2(F), \\
\unnest_2( E^*) & := & (\unnest_2(E))^*, \\
\unnest_2( t ) & := & t \quad (t \in \set{ \epsilon, \sigma})
\end{array}
\]

By the definition, the following basic properties trivially hold.
\begin{proposition}
The following properties hold for $\unnest_2$.
\[
\begin{array}{ll}
(1) & \lang{E} = \lang{\unnest_2(E)}, \\
(2) & |\unnest_2(E)| \leq |E|.
\end{array}
\]
\end{proposition}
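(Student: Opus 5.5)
Both claims are proved by structural induction on $E$, following the clauses of $\unnest_2$.

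For the size bound (2), I first note that $|\elim(F)|\le|F|$. This is itself an induction: $\elim$ preserves every constructor except bindings, and it deletes those. Then each clause of $\unnest_2$ either keeps its constructor and recurses, or, in the binding case, replaces $[E]_x$ by $[\elim(E)]_x$ with $|\elim(E)|\le|E|$. Summing the sizes over the syntax tree gives $|\unnest_2(E)|\le|E|$.

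For the language equality (1), I prove a stronger semantic statement for expressions of the grammar $\mathcal{G}$, where the only variable that ever occurs is $x$:
\[
\sem{\unnest_2(E)}_\lambda = \sem{E}_\lambda \quad\text{for every valuation }\lambda .
\]
Together with the template-string view, this gives (1) by taking $\lambda=\iota$.

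The key lemma is about $\elim$. For any $F$ in $\mathcal{G}$ and any valuation $\lambda$, I show that $\tuple{w,\mu}\in\sem{F}_\lambda$ for some $\mu$ if and only if $\tuple{w,\lambda}\in\sem{\elim(F)}_\lambda$. In other words, $F$ and $\elim(F)$ match exactly the same strings, and they differ only in the valuation they leave behind. The argument is that $F$ contains no references, so the matched string never depends on the current valuation. Removing bindings therefore only changes the output valuation, and the set of matched strings stays the same. This is again an induction on $F$, with the star case handled by induction on the number of iterations.

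With this lemma the binding case of $\unnest_2$ goes through directly. By definition, $\sem{[E]_x}_\lambda=\set{\tuple{w,\mu[x:=w]}:\tuple{w,\mu}\in\sem{E}_\lambda}$. Since $x$ is the only variable, the outer assignment overwrites whatever $E$ did to $x$, so $\mu[x:=w]=\lambda[x:=w]$. The lemma then says the strings $w$ are exactly those matched by $\elim(E)$ from $\lambda$. Hence $\sem{[E]_x}_\lambda=\sem{[\elim(E)]_x}_\lambda$.

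The remaining cases (concatenation, union, star) follow from the induction hypothesis, which holds for all valuations, by the compositional definitions of the semantics. For the star case I unfold $E^k$ and argue by induction on $k$.

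I expect the main obstacle to be the binding case, and specifically the reason valuations can be ignored. It relies on two facts: $x$ is the only variable in $\mathcal{G}$, and there are no references, so no intermediate valuation can affect which strings are matched.
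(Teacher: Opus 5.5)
Your proof is correct. It is essentially the structural induction along the defining clauses of $\mathit{unnest}_2$ (using $|\mathit{elim}(F)|\le|F|$ for the size bound) that the paper leaves implicit when it says both properties hold by definition. Proving the stronger identity $\sem{\mathit{unnest}_2(E)}_\lambda=\sem{E}_\lambda$ for all $\lambda$ is a sensible strengthening, because language equality alone is immediate for reference-free expressions, whereas the later use in $\mathit{keep}_x(\alpha)\,\readv{x}\,\mathit{elim}(\beta)$ also needs the final value of $x$ to be preserved.
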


After this translation, we can further restrict our grammar to the following form:
\[
\mathcal{H} ::= \epsilon \mid \sigma \mid \mathcal{H} \mathcal{H} \mid \mathcal{H} + \mathcal{H} \mid \mathcal{H}^* \mid [ \mathbf{REGEX} ]_x.
\]

\subsubsection{%
  \texorpdfstring{Pattern $(\cdots [F]_x \cdots)^*$}%
                 {Pattern (... [F]_x ...)*}
}

Handling this pattern is the most elaborate part in our normalization procedure.

\subsubsection*{Tagged Binding.}

We annotate a designated binding occurrence with a tag $\rho$ by writing it as $[E]_x^{\rho}$.
The tag serves only as a syntactic marker to distinguish this occurrence from other bindings.

For instance, let us consider the following expression:
\[
E := (a + [b^*]_x + (c\,[d^*]_x e)^*\,)^* .
\]
To handle each binding subterm separately, we use $\rho$ as follows and generate two tagged expressions from the above one:
\[
\begin{array}{ll}
E^{\rho}_1 : & (a + [b^*]^\rho_x + (c\,[d^*]_x e)^*\,)^*\,, \\
E^{\rho}_2 : & (a + [b^*]_x + (c\,[d^*]^{\rho}_x\,e)^*\,)^*\,.
\end{array}
\]

\newcommand{\OccBind}{\mathit{Occ}_{\text{bind}}}

Similar to tagging references, we write $\OccBind(E)$ to denote the total number of bind occurrences in $E$ and $\Xi'(E)$ to denote the set of expressions where exactly one binding occurrence is tagged with $\rho$.

We naturally extend $\sem{\cdot}$, $\template(\cdot)$, and $\elim$ to accommodate the tag $\rho$ as follows:
\[
\begin{array}{lcl}
\sem{\, [E]^{\rho}_x\,} & := & \sem{\,[E]_x\,}, \\
\template(\,[E]^{\rho}_x\,) & := & [ \, \template(E) \, ]^{\rho}_x = \set{\ [ \, w \, ]^{\rho}_x : w \in \template(E) }, \\
\elim(\,[E]^{\rho}_x\,) & := & \elim(\,E\,).
\end{array}
\]

\newcommand{\BindRestr}{\restriction_{\text{bind}} \rho}

For a tagged expression $E$, we write $\template(E) \restriction \rho$ to denote the following set:
\[
\template(E) \BindRestr := \set{ w \in \template(E) :
  \text{$w$ contains a substring of the form $[\cdots]^\rho_x$ at least once }
\ }.
\]

\subsubsection*{Decomposition by the last tag}

For a template string $s$,
we write $\last_\rho(s)$ if both the following conditions hold:
\begin{itemize}
    \item $s$ contain some binding occurrences.
    \item Furthermore, its last binding occurrence is of the form $[ \cdots ]^{\rho}_x$.
\end{itemize}

\newcommand{\ZeroBind}{\mathcal{Z}^{\rho}_{\text{bind}}}

Now we write $\ZeroBind(\alpha)$ to denote the set of template strings generated by $\alpha$ that contains no binding symbols:
\[
\ZeroBind(\alpha) := \template(E) \setminus (\template(E) \BindRestr) = \set{ w \in \template(\alpha) : \text{$w$ does not contain any binding occurrnces} }.
\]

Now we define the following notation to denote the set of template strings generated by $E$ whose last modifier is tagged by $\rho$:
\[
\template(E) \Uparrow \rho := 
\set{ s \in \template(E) \BindRestr  : \last_\rho(s) }.
\]

As with Proposition~\ref{appendix:tag:decompose}, the following decomposition clearly holds.
\begin{proposition}
Let $E$ be an expression of the grammar $\mathcal{H}$.
Assume $\Xi'(E) = E^{\rho}_1, E^{\rho}_2, \ldots, E^{\rho}_N$ where $N := \OccBind(E)$ is the total number of occurrences of binding in $E$.
Then, the following holds:
\[
\sem{\,\template(E)\,} = \sem{\,\ZeroBind(E)\,} \cup \bigcup^{N}_{i = 1} \bigl(\sem{\,\template(E_i) \Uparrow \rho\,}\bigr)
\]
\end{proposition}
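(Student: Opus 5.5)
The decomposition $\sem{\template(E)} = \sem{\ZeroBind(E)} \cup \bigcup_{i=1}^{N} \sem{\template(E_i)\Uparrow\rho}$ follows from a disjoint case split on template strings, followed by applying $\sem{\cdot}$ and using that it distributes over unions of template strings. Write $E_1^\rho,\dots,E_N^\rho$ for the elements of $\Xi'(E)$, indexed by the binding occurrences $o_1,\dots,o_N$ of $E$.

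\emph{Erasure map.} Let $\mathrm{er}$ be the map that deletes the tag $\rho$ from a tagged template string. I will show that $\mathrm{er}$ maps $\template(E_i^\rho)$ bijectively onto $\template(E)$. The proof is a structural induction on $E$, using the defining clauses of $\template(\cdot)$ together with $\template([F]^\rho_x) = [\,\template(F)\,]^\rho_x$. The induction gives more: in $s \in \template(E_i^\rho)$, a binding bracket carries $\rho$ exactly when it is generated by the occurrence $o_i$. This is the key point. A Kleene star may generate several copies of the binding $o_i$, and each copy then carries $\rho$. Since $\sem{[F]^\rho_x} = \sem{[F]_x}$, we also get $\sem{s} = \sem{\mathrm{er}(s)}$.

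\emph{Partition of $\template(E)$.} Fix $t \in \template(E)$. Either $t$ contains no binding bracket, so $t \in \ZeroBind(E)$, or it contains one. In the latter case, take its last (rightmost-opening) top-level binding bracket $[\cdots]_x$. In the grammar $\mathcal{H}$, bindings enclose only pure \regex{}, so there is no nesting and ``last'' is unambiguous. This bracket was generated by a unique occurrence $o_i$. Let $s$ be the corresponding string in $\template(E_i^\rho)$. Then $s$ has a $\rho$-bracket, and by the key point its last binding is the $\rho$-tagged one, so $s \in \template(E_i^\rho)\Uparrow\rho$. Conversely, every $s \in \template(E_i^\rho)\Uparrow\rho$ erases to some $\mathrm{er}(s) \in \template(E)$ that has a binding. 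So $\template(E)$, modulo $\mathrm{er}$, is the union of $\ZeroBind(E)$ and the sets $\template(E_i^\rho)\Uparrow\rho$. Applying $\sem{\cdot}$, and using $\sem{s} = \sem{\mathrm{er}(s)}$ and $\sem{\template(E)} = \bigcup_{t}\sem{t}$, gives the equation.

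\emph{Main obstacle.} The delicate step is the claim that a given template string arises from a unique occurrence and that tag placement is exactly ``brackets generated by $o_i$''. This can fail if the generation of $t$ from $E$ is ambiguous: for example, in $(a+[b]_x)+[b]_x$ the string $[b]_x$ is produced by two distinct occurrences. Here only the semantic identity $\sem{\cdot}$ is needed, not a bijection, so this is harmless. I would replace uniqueness by existence: for each $t$, fix some derivation, use it to pick $o_i$, and tag accordingly. The inclusion $\subseteq$ then holds. The reverse inclusion $\supseteq$ is trivial, since erasing tags from any string on the right-hand side gives an element of $\template(E)$ with the same semantics. I would therefore present the argument as two inclusions, each by induction on the derivation of template strings, rather than through a bijection.
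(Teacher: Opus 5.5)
Your proposal is correct. It is essentially a detailed version of what the paper only asserts (it says the decomposition ``clearly holds'' by analogy with the reference-tagging case): split $\template(E)$ into binding-free strings and strings whose last binding is produced by some occurrence $o_i$, tag that occurrence, and use that tags do not change the semantics. As you already observe, the erasure map is surjective but not bijective (e.g.\ for $[b]_x+[b]_x$), so drop the bijectivity claim and phrase the tagging step per derivation, which is all the equation needs.
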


Hereafter, for each $E_i$, we compute an expression of the form $A_i\ [B_i]^{\rho}_x\ C_i$ such that
\begin{itemize}
    \item $\sem{ \template(E_i) \Uparrow \rho } = \sem{ A_i\  [B_i]^{\rho}_x\ C_i }$; and
    \item $A_i, B_i, C_i$ are pure \regex{es}.
\end{itemize}

\paragraph{Computing $\mathcal{Z}$.}

\newcommand{\nobind}{\mathit{nobind}}

Let $E$ be an expression of $\mathcal{H}$.
We define the function $\nobind(E)$ that computes an expression computing $\mathcal{Z}(E)$.

\[
\begin{array}{lcl}
\nobind(\epsilon) & \defeq & \epsilon, \\
\nobind(\sigma) & \defeq & \sigma, \\
\nobind(\: E_1 E_2 \:) & \defeq & \nobind(\, E_1 \,) \nobind(\, E_2 \,), \\
\nobind(\: E_1 + E_2 \:) & \defeq & \nobind(\,E_1\,) + \nobind(\,E_2\,), \\
\nobind( (E)^* ) & \defeq & \nobind(E)^*, \\
\nobind(\:[E]_x\:) & \defeq &  \emptyset, \\
\nobind(\:[E]^{\rho}_x\:) & \defeq &  \emptyset,
\end{array} 
\]

By the definition, the following properties clearly hold for $\nobind$.
\begin{proposition}
\[
\begin{array}{ll}
(1) & \sem{\ZeroBind(E)} = \sem{\nobind(E)}, \\
(2) & |\nobind(E)| \leq |E| .
\end{array}
\]
\end{proposition}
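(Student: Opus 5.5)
The statement is the proposition for $\nobind$: (1) $\sem{\ZeroBind(E)} = \sem{\nobind(E)}$ and (2) $|\nobind(E)| \leq |E|$, for $E$ in the grammar $\mathcal{H}$ (possibly with one tagged binding). I would prove both by structural induction on $E$.

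For (1) I would prove the stronger template-level identity $\ZeroBind(E) = \template(\nobind(E))$ and then apply $\sem{\cdot}$ to both sides. The base cases $\epsilon$ and $\sigma$ are immediate, since their single template string contains no binding. For $[E]_x$ and $[E]^{\rho}_x$, every template string has the form $[\,w\,]_x$ (resp.\ $[\,w\,]^{\rho}_x$), so $\ZeroBind = \emptyset = \template(\emptyset)$.

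For a union, the template set is the union of the two template sets. Filtering out strings with a binding symbol commutes with union, so $\ZeroBind(E_1+E_2) = \ZeroBind(E_1)\cup\ZeroBind(E_2)$, and the induction hypothesis finishes this case.

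For concatenation and star, the key observation is that a concatenation $s_1 s_2$ (or $s_1\cdots s_k$) contains no binding symbol if and only if each factor contains none. This gives $\ZeroBind(E_1E_2) = \ZeroBind(E_1)\,\ZeroBind(E_2)$ and $\ZeroBind(E^*) = (\ZeroBind(E))^*$. The $k=0$ case of the star is just $\epsilon$, which is binding-free. Applying the induction hypothesis then yields $\template(\nobind(E_1)\nobind(E_2))$ and $\template(\nobind(E)^*)$, respectively.

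One point needs care. $\nobind$ may output $\emptyset$, which lies outside the $\emptyset$-free grammar. So I would first note that $\template(\emptyset) = \emptyset$ and $\sem{\emptyset} = \emptyset$ make the identities above hold verbatim. If needed, the $\emptyset$-elimination rewrites from the preprocessing step can be applied afterwards without changing the language or increasing the size.

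For (2), I would use induction with the size function $\size$. Each clause of $\nobind$ replaces a node by a node of equal or smaller size: $\epsilon$ and $\sigma$ map to themselves, binary operators and star are preserved with children that shrink by the induction hypothesis, and $[E]_x$ or $[E]^{\rho}_x$ (size $\geq 2$) maps to $\emptyset$ (size $1$). Hence $|\nobind(E)|\le|E|$.

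The only real obstacle is stating the binding-free-factor argument for concatenation and star precisely. Everything else follows directly from the definitions.
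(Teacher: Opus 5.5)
Your proposal is correct and is the routine structural induction the paper leaves implicit; the paper gives no proof and says only that both properties hold ``by the definition''. You prove the template-level identity $\ZeroBind(E) = \template(\nobind(E))$, which mirrors the paper's own statement $\ZeroRef(E) = \template(\noref(E))$ for $\noref$, and you handle the $\emptyset$ output and the size bound (2) clause by clause. Two small points. First, the passage from $\sem{\template(\nobind(E))}$ to $\sem{\nobind(E)}$ uses the paper's earlier fact $\sem{F} = \sem{\template(F)}$, which you should cite rather than just say ``apply $\sem{\cdot}$ to both sides''. Second, you rightly read $\ZeroBind(E)$ as the template strings with no binding occurrence at all, tagged or untagged. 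Under the paper's alternative formula $\template(E)\setminus(\template(E)\BindRestr)$, which removes only strings containing the tagged binder, part (1) would fail.
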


\subsubsection*{Strategy for Representing the Part $\sem{\template(E_i) \Uparrow \rho}$}

To explain our idea, let us consider the following expression:
\[
(ab + c [\Sigma^*]_x d + e [\Sigma^*]_x f)^*
\]
where $\Sigma = \set{a, b, c, d}$.
By tagging, we obtain the following two expressions:
\[
E^\rho_1:\ (ab + c [\Sigma^*]^{\rho}_x d + e [\Sigma^*]_x f)^*, \qquad
E^\rho_2:\ (ab + c [\Sigma^*]_x d + e [\Sigma^*]^{\rho}_x f)^*.
\]
We here consider $E^\rho_1$ and write $E$ to denote the subterm $(ab + c [\Sigma^*]^{\rho}_x d + e [\Sigma^*]_x f)$.

Our key idea is to unfold Kleene stars $E^*$ so that the following holds:
\[
\begin{array}{lcl}
\template(\,E^*\,) \Uparrow \rho & = &
\text{repeating $E$} \\
& \cdot & \text{running $E$ guaranteed to visit   the $[\cdots]^{\rho}_x$ part exactly once} \\
& \cdot & \text{repeating $E$ while never visiting any $[\cdots]_x$ subterms}~(\text{corresponds to}~\nobind(E)^*).
\end{array}
\]
For our example, we unfold $E^*$ as follows
\[
E^* \ \cdot \ \uline{c[\Sigma^*]^{\rho}_x d}\ \cdot \ \uuline{(ab)^*}\ .
\]
Using the single-underlined and double-underlined parts,
we enforce that $\rho$ appears as the last occurrence of binding across the entire computation of $E^*$.

Furthermore, in this form, we can remove $[\cdots]_x$ from the $E^*$ because the single-underlined part overwrites the variable $x$.
Namely, it suffices to consider the following expression:
\[
E' :=
\elim(E)^* \ \cdot \ {c[\Sigma^*]^{\rho}_x d}\ \cdot \ {(ab)^*}
~~=~~
\bigl(ab + c\,\Sigma^*\,d + e\,\Sigma^*\,f\bigr)^*
\:c\,[\Sigma^*]^{\rho}_x\,d\ (ab)^*~.
\]
The transformed expression fits into our goal form $A [B]_x C$.

To compare $\template(E^*)$ and $\template(E')$,
we consider the following template string generated by $E^*$:
\[
w~~:=~~ab\:c\,[ac]^{\rho}_x\,d\:e\,[aaaa]_x\,f\:c\,[a]^{\rho}_x\,d\:ababab \in \template(E^*).
\]
Since this string contains $\rho$ as the last binding occurrence, it is represented in our transformed version $E'$ by discarding binding symbols before the last $\rho$ as follows:
\[
w'~~:=~~ab\:c\,ac\,d\:e\,aaaa\,f\:c\,[a]^{\rho}_x\,d\:ababab \in \template(E').
\]
Furthermore, $\sem{w} = \sem{w'}$ clearly holds.

\subsubsection*{Formalizing Our Translation}

To compute $\sem{ \template(E) } \Uparrow \rho$ on the basis of the above our idea, we introduce the following function $\last_\rho$ so that $\sem{ \template(E) } \Uparrow \rho = \sem{ \last_\rho(E)}$:
\[
\begin{array}{lcl}
\last_\rho(\epsilon) & := & \emptyset, \\
\last_\rho(\sigma) & := & \emptyset, \\
\last_\rho(E_1 E_2) & := & 
\begin{cases}
    \last_{\rho}(E_1)\ \nobind(E_2) & \text{If $E_1$ contains $\rho$}, \\
    \elim(E_1)\ \last_{\rho}(E_2) & \text{Otherwise (since $\rho$ is unique and in $E$, it must be in $E_2$)},
\end{cases} \\[15pt]
\last_\rho(E_1 + E_2) & := &
\begin{cases}
    \last_{\rho}(E_1) & \text{If $E_1$ contains $\rho$}, \\
    \last_{\rho}(E_2) & \text{Otherwise: $E_2$ contains $\rho$},
\end{cases} \\[15pt]
\last_\rho(\,(E)^*\,) & := & \elim(E)^*\ \last_\rho(E)\ \nobind(E)^*, \\
\last_\rho([E]_x) & := & \emptyset, \\
\last_\rho([E]^{\rho}_x) & := & [E]^{\rho}_x.
\end{array}
\]

\noindent{}\textbf{Remark:}
The argument expressions $E$ of $\last_\rho$ satisfy the following condition:
\begin{enumerate}
\item Since $E \in \mathcal{H}$,
the expression is \emph{reference-free}, i.e., it contains no reference/usage symbols of the form $\readv{z}$;
\item and moreover, every binding occurrence binds the same variable $x$, i.e., every binder is of the form $[\cdot]_x$ or $[\cdot]^\rho_x$
\item Also, since $E \in \mathcal{H}$, every binding occurrence is of the form $[R]_x$ (and, for tagged expressions, $[R]^\rho_x$) with $R$ a pure \regex. In particular, binders are never nested, i.e., no binder occurs within the scope of another binder.
\item Since $E$ is an element of the set of tagged expressions $\Xi'$, the tag $\rho$ is syntactically unique, i.e., there is exactly one tagged binding occurrence of the form $[\cdot]_x^{\rho}$ in the expression.
\end{enumerate}

Then, the following property holds for $\last_\rho$.
\begin{lemma}
Let $E$ be an expression that satisfies the conditions in the above remark. Then, the following holds:
\[
\sem{\template(E) \Uparrow \rho} = \sem{\last_\rho(E)}.
\]
\end{lemma}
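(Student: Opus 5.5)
We prove $\sem{\template(E) \Uparrow \rho} = \sem{\last_\rho(E)}$ by structural induction on $E$, following the case split in the definition of $\last_\rho$. Since the claimed equality is at the level of semantics (pairs $\tuple{w,\lambda}$), we work with template strings. We aim to show that every $s \in \template(E)\Uparrow\rho$ corresponds to some $s' \in \template(\last_\rho(E))$ with $\sem{s}=\sem{s'}$, and conversely. The correspondence is: $s'$ is obtained from $s$ by erasing all binding brackets that occur \emph{before} the last (tagged) binding.

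Two auxiliary facts are used throughout, both proved by an easy induction. The first is $\template(\nobind(F)) = \ZeroBind(F)$, i.e.\ the template strings of $F$ with no bindings. The second is that $\template(\elim(F))$ is exactly the image of $\template(F)$ under erasing binding brackets. We also need a semantic lemma. Let $s = u\,[r]^{\rho}_x\,v$ where $v$ contains no bindings and $u$ is arbitrary. Since there are no references (condition 1) and only $x$ is bound (condition 2), the final valuation of $s$ depends only on $r$. The matched string depends only on the letters. Hence $\sem{s} = \sem{\mathrm{erase}(u)\,[r]^\rho_x\,v}$.

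The induction runs case by case.
\begin{itemize}
\item \emph{Base cases.} For $\epsilon$, $\sigma$ and untagged $[R]_x$, $\template(E)\Uparrow\rho$ is empty by uniqueness of $\rho$ (condition 4), matching $\emptyset$. For $[R]^\rho_x$, both sides are identical, and there is no nesting by condition 3.
\item \emph{Union.} This is immediate from the induction hypothesis on the summand containing $\rho$, because the other summand contributes nothing to $\template(\cdot)\Uparrow\rho$.
\item \emph{Concatenation $E_1E_2$.} If $\rho$ is in $E_1$, then the last binding is $\rho$ exactly when the $E_1$-part has last binding $\rho$ and the $E_2$-part is binding-free, which gives $\last_\rho(E_1)\,\nobind(E_2)$. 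Otherwise $\rho$ lies in $E_2$. Then the $E_2$-part must have last binding $\rho$ while the $E_1$-part is arbitrary, and its brackets may be erased by the semantic lemma, which gives $\elim(E_1)\,\last_\rho(E_2)$.
\end{itemize}

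The main obstacle is the Kleene star case, where $\rho$ may be visited in several iterations. Take $s = s_1\cdots s_n \in \template(F^*)$ with $\last_\rho(s)$. Let $k$ be the last index whose factor $s_k$ contains a binding. Then $s_k$ has $\rho$ as its last binding, so $s_k \in \template(F)\Uparrow\rho$. Every $s_t$ with $t>k$ lies in $\ZeroBind(F)$. The prefix $s_1\cdots s_{k-1}$ is arbitrary in $\template(F^*)$, and erasing it lands in $\template(\elim(F)^*)$. Applying the semantic lemma and the induction hypothesis to $s_k$ yields membership in $\sem{\elim(F)^*\,\last_\rho(F)\,\nobind(F)^*}$.

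For the converse, take an element of $\sem{\elim(F)^*\,\last_\rho(F)\,\nobind(F)^*}$. Lift each $\elim(F)$-factor back to some template string of $F$, by choosing the preimage under erasure. Lift the middle factor by the induction hypothesis. The resulting string lies in $\template(F^*)\Uparrow\rho$, and the semantic lemma again shows its semantics are unchanged.

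The delicate point is making this lifting precise. The induction hypothesis is stated on semantics rather than on template strings. I would therefore strengthen the induction statement to the following: every template string of $\last_\rho(E)$ is the erasure-before-last-$\rho$ of some string in $\template(E)\Uparrow\rho$, and every string in $\template(E)\Uparrow\rho$ erases into $\template(\last_\rho(E))$. The semantic statement then follows at the end from the semantic lemma.
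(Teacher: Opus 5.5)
Your proposal is correct and is essentially the paper's own argument. The paper states this lemma without a written proof and relies only on the star-unfolding strategy and the example $w \mapsto w'$, where every binding before the last $\rho$-tagged one is discarded because that final binding overwrites $x$; this is exactly your erasure correspondence, and your strengthened template-level induction ($\mathbb{T}(\mathit{last}_\rho(E))$ equals the set of erasures-before-the-last-binding of $\mathbb{T}(E)\Uparrow\rho$), followed by a single application of the semantic erasure lemma at the end, is a sound way to make that sketch rigorous.
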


In particular, the generated expression $\last_\rho(E)$ satisfies the following condition.
\begin{proposition}
Let $E$ be an expression that satisfies the conditions in the above remark.
Then, $\last_\rho(E)$ evaluates to an expression of the form $A\,[B]^{\rho}_x\,C$ where $A$, $B$, and $C$ are pure \regex.
Furthermore,
\[
|A| = O(|E|^2), \quad |B| \leq |E|, \quad |C| = O(|E|^2).
\]
\end{proposition}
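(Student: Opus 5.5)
The plan is structural induction on $E$, proving two claims at once: the structural claim that $\last_\rho(E)$ has the form $A\,[B]^{\rho}_x\,C$ with $A,C$ pure \regex{es}, and the size claim. The base cases are direct. For $[E]^{\rho}_x$ we return $A=C=\epsilon$ and $B=E$. By condition~3 of the remark, $B$ is a pure \regex{} of size at most $|E|$. The other base cases ($\epsilon$, $\sigma$, untagged $[E]_x$) never arise on a branch that contains $\rho$, because the definition always recurses into the subterm that contains the unique tag (condition~4).

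For the inductive cases, fix the subterm containing $\rho$ and write its image as $A'[B']^{\rho}_x C'$.
\begin{itemize}
\item Concatenation with $E_1$ tagged gives $A'\,[B']^{\rho}_x\,(C'\,\nobind(E_2))$.
\item Concatenation with $E_2$ tagged gives $(\elim(E_1)\,A')\,[B']^{\rho}_x\,C'$.
\item Union simply passes the form through.
\item Star gives $(\elim(E)^*A')\,[B']^{\rho}_x\,(C'\,\nobind(E)^*)$.
\end{itemize}
In every case the new outer pieces are pure \regex{es}, since $\elim$ and $\nobind$ remove every binding and $E$ is reference-free (condition~1). The one technicality is that $\nobind$ may output $\emptyset$. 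We either allow $\emptyset$ inside \regex{es} or apply the $\emptyset$-rewrites; if these collapse the whole summand to $\emptyset$, that summand is dropped, which is harmless for the decomposition.

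For sizes, the middle part never changes, so $|B|\le|E|$ is immediate. The outer parts need more care. Each recursion step adds at most $|\elim(\cdot)|+|\nobind(\cdot)|+O(1)\le 2|E'|+O(1)$ symbols, where $E'$ is the current subterm. The tag lies on a single root-to-node path of the syntax tree, and that path has length at most $|E|$. Summing over the path gives $|A|,|C|\le\sum_{\text{path}}(2|E|+O(1))=O(|E|^2)$.

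The main obstacle is keeping this accounting honest. Along the path the added pieces have sizes bounded by the sizes of the subterms at each node, and these are nested rather than disjoint. That nesting is exactly why the bound is quadratic and not linear. I would write the bound explicitly as $\sum_{k}|E^{(k)}|$ over the ancestors $E^{(k)}$ of the tagged binder, and note that each term is at most $|E|$ and there are at most $|E|$ terms.
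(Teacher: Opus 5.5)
Your proposal is correct and follows essentially the same route as the paper: you follow the unique spine from the root of $E$ down to the tagged binder $[B]^{\rho}_x$, and you observe that each step prepends or appends at most one pure piece of size $O(|E_i|)$ coming from $\mathit{elim}$ or $\mathit{nobind}$; summing these nested subterm sizes gives $O(|E|^2)$, which is exactly the paper's argument. Your side remark about $\mathit{nobind}$ producing $\emptyset$ is harmless, since the appendix's \regex{} grammar already includes $\emptyset$.
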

\begin{proof}
By definition, $\elim$ and $\nobind$ erase all binding constructs and thus always produce \regex{es}.
Moreover, $\last_\rho$ never introduces new binders: untagged binders are discarded
($\last_\rho([F]_x) = \emptyset$),
while the unique tagged binder is preserved
($\last_\rho([F]^\rho_x)=[F]^\rho_x$).
Since binders are not nested and $\rho$ is syntactically unique, the expression $\last_\rho(E)$ contains exactly one binder,
namely the tagged one, and hence can be written as $A[B]^\rho_xC$ with $A,B,C$ pure \regex{es}.

To show the size bound, we use the following trivial properties, which are immediately shown by the definition of $\elim$ and $\nobind$:
\[
\text{for every expression $E$}, \quad |\elim(E)| \leq |E|, \quad |\nobind(E)| \leq |E|.
\]

To perform our induction, we consider a spine of subexpressions starting from $E$ as follows.
Consider the syntax tree of the input expression $E$ and follow the unique path from the root to the unique tagged binder.
This yields a chain of subexpressions
\[
E = E_0 \succ E_1 \succ \cdots \succ E_h = [B]^\rho_x,
\]
where $X \succ Y$ means that $Y$ is a proper subexpression of $X$.
Thus, $E_{i+1}$ is the unique immediate subexpression of $E_i$ that still contains the tag $\rho$. Here, the existence and uniqueness follow from the syntactic uniqueness of $\rho$.

\medskip
\noindent\textbf{Induction on the length of the spine.}
We need to consider four cases:
$E_i = E_{i+1} + F$, $E_i = E_{i+1}F$, $E_i = F E_{i+1}$, and $E_i = (E_{i+1})^*$.
For these cases,
each step from $E_{i+1}$ to $E_i$ modifies the prefix/suffix
by concatenating at most one extra piece whose size is $O(|E_i|)$:
either $\elim(F) \bullet$, or $\bullet \nobind(F)$, or $\elim(E_{i+1})^*\ \bullet\ \nobind(E_{i+1})^*$.
Consequently,
\[
|A_i| \le |A_{i+1}| + O(|E_i|),\qquad
|C_i| \le |C_{i+1}| + O(|E_i|).
\]
Unrolling from $i = h-1, h-2, \ldots, 0$,
we have
\[
|A_0| = O\!\left(\sum_{i=0}^{h-1} |E_i|\right),\qquad
|C_0| = O\!\left(\sum_{i=0}^{h-1} |E_i|\right).
\]

Finally, along a strict subexpression chain, we have
$|E_0| > |E_1| > \cdots > |E_h|\ge 1$,
hence
\[
\sum_{i=0}^{h-1} |E_i|
\;\le\; |E| + (|E|-1) + \cdots + 1
\;=\; O(|E|^2).
\]
Therefore $|A|=|A_0|=O(|E|^2)$ and $|C|=|C_0|=O(|E|^2)$, and we already noted $|B|\le |E|$.
\end{proof}

Now, summarizing all the objectives up to this point, we immediately prove Lemma~\ref{lemma:appendix:normalizing step2}.

\NormalizingSecondStep*

%% file: onerewb/hidden-constant.tex
\section{Analyzing Hidden Constants}\label{sec:hidden_constant}

In this section, we analyze the hidden constants of our theorems.
We mainly consider the constant of Theorem~\ref{thm:ABCBD}.
We first obtain a bound of $2^{O((|A|+|B|+|C|+|D|)^4)}$ by using the fact that any \regex{} $E$ admits a transition monoid of size $2^{O(|E|^2)}$.
We then refine this bound to $2^{O((|A|+|B|+|C|+|D|)^2)}$ through a more fine-grained analysis combined with additional data structures.

\subsection{First Bound for Theorem~\ref{thm:ABCBD}: $2^{O((|A|+|B|+|C|+|D|)^4)}$}

We first observe that the time complexity of all parts of our algorithm depends polynomially on the size of the transition monoids we use.

\paragraph{Interval evaluation by factorization forest.}
In our algorithm, we repeatedly perform the operations of evaluating $\Delta_E(w[i \btw j))$ for $E\in \set{ A,B,C,D }$.
By Lemma~\ref{lemma:constant-eval-by-ff}, it takes $O(|\transm{E}|)$ time, which only increases the time complexity polynomially in $|\transm{E}|$.
Thus, from now on, we focus on the remaining part and ignore the complexity of evaluating $\Delta_E(w[i,j))$.

\paragraph{Section~\ref{section:XYYZ problem}.}
The time complexity of this part depends polynomially on the size of the representation of the ultimately periodic set $\mathcal{S}=\mathcal{S}_X + (2\cdot \mathcal{S}_Y) + \mathcal{S}_Z$.
By a similar argument to that in the proof of Lemma~\ref{lem:wtw_is_ult_periodic}, we can state that the parameters $\mu$ and $\lambda$ of $\mathcal{S}_X$, $\mathcal{S}_Y$, and $\mathcal{S}_Z$ are bounded by $|\transm{X}|$, $|\transm{Y}|$, and $|\transm{Z}|$, respectively.
Thus, the parameters of $\mathcal{S}$ are bounded by a polynomial in $|\transm{X}|+|\transm{Y}|+|\transm{Z}|$.

\paragraph{Section~\ref{section:branchingABCBD-to-twosubproblems}.}
It is easy to see that this part affects the hidden constant only by $|\mathcal{K}_L|+|\mathcal{K}_R|$ in Algorithm~\ref{new:alg:overview_all}, which are bounded by $O(|\transm{C}|)$.

\paragraph{Section~\ref{section:solve:rightbenum}.}
This part contributes a factor polynomial in $|\Lambda|$, $|\textsc{Accum}|$, and $|\transm{D}|$
to the hidden constant in Algorithm~\ref{alg:overview_right}. 
Since $|\Lambda|\leq |\transm{D}|$, this part affects the hidden constant by a polynomial in $|\transm{C}|+|\transm{D}|$.

\paragraph{Section~\ref{sec:data_structures_right}.}
It is not difficult to see that each method in this part runs in time proportional to the depth of the factorization forest $\fforest$ and the size of $\transm{C}$.
Since the depth of $\fforest$ is $O(|\transm{C}|)$, each method runs in $O(|\transm{C}|^2)$ time.

\paragraph{Section~\ref{new:section:left-B-overview}.}
It is easy to see that this part affects the hidden constant only by $|\mathcal{D}|$ in Algorithm~\ref{alg:overview_left} and $|\mathcal{D}_1|+|\mathcal{D}_2|$ in Algorithm~\ref{alg:auxiliarytotarget_left}, which are bounded by $O(|\transm{C}|)$.

\paragraph{Section~\ref{sec:auxdatastructure}.}
By the same argument as in Section~\ref{sec:data_structures_right},  each method runs in $O(|\transm{C}|^2)$ time.

\paragraph{Section~\ref{section:near-transitions2}.}
The time complexity of this part depends polynomially on $|\transm{C}|$ and the size of the representation of the ultimately periodic set $\mathcal{S}_{\delta}=\mathcal{S}_A + \mathcal{S}_{C,\delta}$.
By a similar argument to that in the proof of Lemma~\ref{lem:wtw_is_ult_periodic}, we can state that the parameters $\mu$ and $\lambda$ of $\mathcal{S}_A$ and $\mathcal{S}_{C,\delta}$ are bounded by $|\transm{A}|$ and $|\transm{C}|$, respectively.
Thus, the parameters of $\mathcal{S}$ are bounded by a polynomial in $|\transm{A}|+|\transm{C}|$.

\paragraph{Overhead of normalization.}
We have established that the hidden constant depends polynomially on $|\transm{A}|+|\transm{B}|+|\transm{C}|+|\transm{D}|$, where \regex{es} $A,B,C,D$ are those defined in Sections~\ref{section:sanitize}~and~\ref{section:two_subproblems}.
We complete the analysis by seeing that each of those \regex{es}, which we refer to as $A',B',C',D'$ to distinguish them from the original \regex{es}, has length linear in the size of the original \regex{es} $A,B,C,D$.

The operations in Section~\ref{section:sanitize} multiply the length of each \regex{} by $O(\log |\Sigma|)$, which we consider to be an absolute constant\footnote{In fact, we can avoid the dependence on $\log |\Sigma|$ here by not binarizing the \regex{es}. This requires additional care in later sections, as the suffix tree is no longer binary; nevertheless, it is still possible to obtain the same bound for general $|\Sigma|$. For the sake of simplicity of presentation, we refrained from implementing this approach.}. 
Moreover, in Section~\ref{section:two_subproblems}, we defined
\[
    \lang{B'} := (\lang{B}\setminus \{\epsilon\})\cdot \left(\lang{C^{q^0_C \to \set{ q_C }}} \cap \lang{D^{q^0_D \to \set{ q_D }}}\right)
\]
for some $q_C\in Q_C$ and $q_D\in Q_D$, and
\[
    \lang{A'} := \lang{A}, \qquad \lang{C'} := \lang{C^{\set{ q_C }\to F_C}}, \qquad \lang{D'} := \lang{D^{\set{ q_D }\to F_D}}, 
\]
for some $q_C\in Q_C$ and $q_D\in Q_D$.

Clearly, the NFAs for $A'$, $C'$, and $D'$ have the same number of vertices as the NFAs for $A$, $C$, and $D$, respectively.
Thus, we can construct transition monoids $\transm{A'}$, $\transm{C'}$, and $\transm{D}'$ of size $2^{O(|A|^2)}$, $2^{O(|C|^2)}$, and $2^{O(|D|^2)}$, respectively.
For $B'$, straightforward construction yields a transition monoid $\transm{B'}$ of size $2^{O((|B|+|C||D|)^2)}$ because, using the standard construction, the NFA for $B'$ has size $O(|B|+|C||D|)$.
Therefore, $|\transm{A'}|+|\transm{B'}|+|\transm{C'}|+|\transm{D'}|\leq 2^{O(|A|^2)}+2^{O((|B|+|C||D|)^2)}+2^{O(|C|^2)}+2^{O(|D|^2)}\leq 2^{O((|A|+|B|+|C|+|D|)^4)}$.
Thus, the overall hidden constant is $2^{O((|A|+|B|+|C|+|D|)^4)}$.

\subsection{Refined Bound for Theorem~\ref{thm:ABCBD}: $2^{O((|A|+|B|+|C|+|D|)^2)}$}

The main obstacle to achieving a running time of $2^{O((|A|+|B|+|C|+|D|)^2)}$ is that the size of the NFA for $B'$ is quadratic.
If this size could be reduced to linear in the size of original \regex{es},
the desired bound would follow immediately; however, this appears to be difficult.

Instead, we observe that the algorithm accesses the transition monoid $\transm{B'}$ only in very restricted situations.
More precisely, the only operations we perform with respect to $B'$ are:
\begin{itemize}
    \item the construction of an ultimately periodic set $\mathcal{S}$ in Section~\ref{section:XYYZ problem} (note that $Y=B'$), and
    \item deciding whether $w[i\btw j)\in \lang{B'}$ (Algorithms~\ref{alg:overview_right} and~\ref{alg:overview_left}).
\end{itemize}

\paragraph{Bounding the Size of $\mathcal{S}$ (Section~\ref{section:XYYZ problem}).}
In Section~\ref{section:XYYZ problem}, we need to decide whether
$e \in \mathcal{S} = \mathcal{S}_X + (2\cdot \mathcal{S}_Y) + \mathcal{S}_Z$.
Previously, we bounded the size of the representation of $\mathcal{S}_Y$ by
$|\transm{B'}| \le 2^{O((|B|+|C||D|)^2)}$, which resulted in a large hidden constant.
However, we can show that the size of
\[
\mathcal{S}_Y = \{ j \colon \theta^j \in \lang{Y} \}
\]
is in fact only $2^{O(|B|+|C||D|)}$.

To see this, let
$N_{B'} = (Q_B, \Sigma, \Delta_B, \mathcal{E}_B, q^0_B, F_B)$
be an $\epsilon$-NFA recognizing $\lang{B'}$.
For each $j \ge 0$, let $R^j \subseteq Q_B$ denote the set of states reachable from $q^0_B$ by reading $\theta^j$.
Then, by a similar argument to that in the proof of Lemma~\ref{lem:wtw_is_ult_periodic}, $R^{j+\lambda}=R^{j}$ holds for all $j\geq \mu$,
where $\mu,\lambda\leq |Q_B|$. Consequently, the ultimately periodic set
\[
\mathcal{S}_Y = \{ j \colon R^j \cap F_B \neq \emptyset \}
\]
admits a representation of size $2^{|Q_B|} \le 2^{O(|B|+|C||D|)}$, and thus, $\mathcal{S}$ admits a representation of size $2^{O((|A|+|B|+|C|+|D|)^2)}$.

\paragraph{Reducing the Hidden Constant of Interval Evaluation.}

By Lemma~\ref{lemma:constant-eval-by-ff}, interval evaluation over $\transm{B'}$ runs in
$O(|\transm{B'}|) \le 2^{O((|B|+|C||D|)^2)}$ time, which again yields a large hidden constant.
However, this part is not a bottleneck with respect to $|w|$, since the algorithm performs interval evaluation over $\transm{B'}$ only $O(|w|\log |w|)$ times in total; $O(|w|)$ times in the branching ABCBD problem and $O(|w|\log |w|)$ times in the XYYZ problem when computing the set $\mathcal{S}_Y$.\todo{complexity of construction}

We therefore replace this procedure with a data structure that answers each query in
$O(\log |w|)$ time while having a significantly smaller hidden constant.
We employ a segment tree, which can be viewed as a factorization forest without idempotent nodes, but with height $O(\log |w|)$ independent of the monoid size.

Let $N_{B'} = (Q_B, \Sigma, \Delta_B, \mathcal{E}_B, q^0_B, F_B)$
be an $\epsilon$-NFA recognizing $\lang{B'}$.
Each node of the segment tree corresponding to an interval $[\ell,r)$ stores a binary matrix
$M \in 2^{Q_B \times Q_B}$, where $M_{q_1,q_2}=1$ if and only if there exists a path from $q_1$ to $q_2$ labeled by $w[\ell\btw r)$.
Matrix multiplication is defined via $(\lor,\land)$-convolution.

For any interval query $[i,j)$, by the standard argument, we can compute a binary matrix
$M^{i,j} \in 2^{Q_B \times Q_B}$ satisfying
$M^{i,j}_{q_1,q_2}=1$ if and only if there exists a path from $q_1$ to $q_2$ labeled by $w[i\btw j)$,
in time $O(|Q_B|^3 \log |w|)$.
Finally, we can decide whether $w[i\btw j) \in \lang{B'}$ by checking whether
there exists a state $q_2 \in F_B$ such that $M^{i,j}_{q^0_B,q_2}=1$.
Thus, we can implement this part in a way that it contributes only $O((|B|+|C||D|)^3|w|\log |w|)$ to the total time complexity.


\subsection{Bound for Theorem~\ref{thm:oneuseonerewb}}

Observe that, in the construction in Appendix~\ref{appendix:1use1rewb-to-abcbd}, the lengths $|A|$, $|B|$, $|C|$, $|D|$ are at most $O(|E|^2)$, $O(|E|)$, $O(|E|^2)$, and $O(|E|)$, respectively, for the input \rewb{} $E$.
Combining with the discussion in the previous part, the hidden constant of Theorem~\ref{thm:oneuseonerewb} is $O(2^{|E|^4})$.


\subsection{Bound for Theorem~\ref{thm:XYYYYYZ}}

By a similar argument to the one above,
we obtain a hidden constant of $2^{O((|X|+|Y|+|Z|)^2)}+2^{O(|X|+|Y|+|Z|)} \cdot \alpha$ for our $O(|w|\log |w|)$-time algorithm in Theorem~\ref{thm:XYYYYYZ}. 
The term $2^{O((|X|+|Y|+|Z|)^2)}$ arises from the computation of the set $\mathcal{S}_X$, $\mathcal{S}_Y$, and $\mathcal{S}_Z$,
where each computation requires $O(|w|\log |w|)$ interval evaluations on the factorization forest.
$2^{O(|X|+|Y|+|Z|)} \cdot \alpha$ part is from the computation of the representation of the set $\mathcal{S}$.

We see that if we allow an additional $O(\log |w|)$ factor on the dependence on $|w|$, we can reduce the hidden constant to a polynomial.
As discussed above, for interval evaluation, we adopt a segment tree instead of a factorization forest; this works in $O(poly(|X|+|Y|+|Z|)\log |w|)$ time.

The remaining task is to determine whether $e\in \mathcal{S}=\mathcal{S}_X+(\alpha \cdot \mathcal{S}_Y)+\mathcal{S}_Z$.
Instead of computing the representation of $\mathcal{S}$, we decide whether $e\in \mathcal{S}$ directly.
For an $\epsilon$-NFA $N = (Q, \Sigma, \Delta, \mathcal{E}, q^0, F)$ and a string $\theta$, let $N(\theta):=(Q,\{\theta\},\Delta(\theta),\emptyset, q^0, F)$ be an NFA over the unary alphabet $\{\theta\}$, where $\Delta(\theta)$ is defined by 
\[
    \{(p,\theta, q)\colon \text{$q$ can be reached from $p$ by reading $\theta$ in $N$}\}.
\]
Let $N_X=(Q_X, \Sigma, \Delta_X, \mathcal{E}_X, q^0_X, F_X)$ be the $\epsilon$-NFA for $X$. Define $N_Y$ and $N_Z$ similarly.
We define 
\[
    \mathcal{N}'_X:=\{ N_{X}^{q\to F_X}(\theta)\colon \text{$q$ is reachable from $q^0_X$ by reading $w_1$} \}.
\]

Let $N'_Y$ be the NFA obtained from $N_{Y}^{q^0_Y\to F_Y}(\theta)$ by subdividing each edge into $\alpha$ segments, and
let $N'_Z:=N_{Z}^{q^0_Z\to F}(\theta)$, where $F$ is the set of states $q$ such that $F_Z$ is reachable from $q$ by reading $w_2$.

Then, for each $N'_X\in \mathcal{N}'_X$, we construct an $\epsilon$-NFA $N_{\mathrm{all}}$ by connecting all accepting states of $N'_X$ to the initial state of $N'_Y$, and all accepting states of $N'_Y$ to the initial state of $N'_Z$.
The initial state and the accepting states of $N_{\mathrm{all}}$ are the initial state of $N'_X$ and the accepting states of $N'_Z$, respectively.
Now, we can claim that $e$ is in $\mathcal{S}$ if and only if $N_{\mathrm{all}}$ accepts the length-$e$ string, or equivalently, there is a walk of length $e+2$ from the initial state to an accepting state of $N_{\mathrm{all}}$.
This can be determined in $O(|N_{\mathrm{all}}|^3\log e)$ time by standard matrix powering.
Thus, the whole algorithm runs in $O(poly(|X|+|Y|+|Z|)|w|\log^2 |w|)$ time.




%% file: onerewb/bib.bib
@inproceedings{Bringmann:2024,
  author = {Bringmann, Karl and Gr{\o}nlund, Allan and K\"{u}nnemann, Marvin and Larsen, Kasper Green},
  booktitle = {ITCS 2024},
  title = {{The NFA Acceptance Hypothesis: Non-Combinatorial and Dynamic Lower Bounds}},
  year = {2024},
  pages = {22:1--22:25},
  publisher = {Schloss Dagstuhl},
  series = {LIPIcs},
  volume = {287},
  timestamp = {Tue, 14 Oct 2025 01:00:00 +0200},
  biburl = {https://dblp.org/rec/conf/innovations/BringmannGKL24.bib},
  bibsource = {dblp computer science bibliography, https://dblp.org},
  doi = {10.4230/LIPIcs.ITCS.2024.22},
  _bib2doi_selected = {dblp:/rec/conf/innovations/BringmannGKL24.bib},
  _bib2doi_confirmed = {true},
}

@inproceedings{Backurs:2016,
  author = {Backurs, Arturs and Indyk, Piotr},
  booktitle = {FOCS},
  title = {{ Which Regular Expression Patterns Are Hard to Match? }},
  year = {2016},
  pages = {457-466},
  publisher = {IEEE Computer Society},
  timestamp = {Thu, 23 Mar 2023 00:00:00 +0100},
  biburl = {https://dblp.org/rec/conf/focs/BackursI16.bib},
  bibsource = {dblp computer science bibliography, https://dblp.org},
  doi = {10.1109/FOCS.2016.56},
  _bib2doi_selected = {dblp:/rec/conf/focs/BackursI16.bib},
  _bib2doi_confirmed = {true},
}

@article{Downey:1995:1,
  author = {Downey, Rod G. and Fellows, Michael R.},
  title = {Fixed-Parameter Tractability and Completeness I: Basic Results},
  journal = {SIAM Journal on Computing},
  volume = {24},
  number = {4},
  pages = {873--921},
  year = {1995},
  timestamp = {Wed, 14 Nov 2018 00:00:00 +0100},
  biburl = {https://dblp.org/rec/journals/siamcomp/DowneyF95.bib},
  bibsource = {dblp computer science bibliography, https://dblp.org},
  doi = {10.1137/S0097539792228228},
  _bib2doi_selected = {dblp:/rec/journals/siamcomp/DowneyF95.bib},
  _bib2doi_confirmed = {true},
}

@book{Gusfield:Book,
  author = {Dan Gusfield},
  title = {Algorithms on Strings, Trees, and Sequences - Computer Science and Computational Biology},
  year = {1997},
  isbn = {0-521-58519-8},
  publisher = {Cambridge University Press},
  timestamp = {Mon, 29 Jul 2019 01:00:00 +0200},
  biburl = {https://dblp.org/rec/books/cu/Gusfield1997.bib},
  bibsource = {dblp computer science bibliography, https://dblp.org},
  doi = {10.1017/cbo9780511574931},
  _bib2doi_selected = {dblp:/rec/books/cu/Gusfield1997.bib},
  _bib2doi_confirmed = {true},
  _bib2doi_finished = {true},
}

@book{Jewels:Book,
  author = {Maxime Crochemore and Wojciech Rytter},
  title = {Jewels of stringology},
  publisher = {World Scientific},
  year = {2002},
  isbn = {978-981-02-4782-9},
  timestamp = {Tue, 13 Jun 2017 01:00:00 +0200},
  biburl = {https://dblp.org/rec/books/daglib/0020111.bib},
  bibsource = {dblp computer science bibliography, https://dblp.org},
  doi = {10.1142/4838},
  _bib2doi_selected = {dblp:/rec/books/daglib/0020111.bib},
  _bib2doi_confirmed = {true},
}

@article{Crochemore:1981,
  title = {An optimal algorithm for computing the repetitions in a word},
  author = {Max Crochemore},
  journal = {Information Processing Letters},
  volume = {12},
  number = {5},
  pages = {244--250},
  year = {1981},
  timestamp = {Sun, 02 Jun 2019 01:00:00 +0200},
  biburl = {https://dblp.org/rec/journals/ipl/Crochemore81.bib},
  bibsource = {dblp computer science bibliography, https://dblp.org},
  doi = {10.1016/0020-0190(81)90024-7},
  _bib2doi_selected = {dblp:/rec/journals/ipl/Crochemore81.bib},
  _bib2doi_confirmed = {true},
}

@book{Crochemore:2007,
  title = {Algorithms on Strings},
  author = {Crochemore, Maxime and Hancart, Christophe and Lecroq, Thierry},
  place = {Cambridge},
  publisher = {Cambridge University Press},
  year = {2007},
  doi = {10.1017/CBO9780511546853},
  timestamp = {Wed, 23 Mar 2011 00:00:00 +0100},
  biburl = {https://dblp.org/rec/books/daglib/0020103.bib},
  bibsource = {dblp computer science bibliography, https://dblp.org},
  isbn = {978-0-521-84899-2},
  _bib2doi_selected = {dblp:/rec/books/daglib/0020103.bib},
  _bib2doi_confirmed = {true},
}

@book{Sakarovitch:2009,
  title = {Elements of Automata Theory},
  publisher = {Cambridge University Press},
  author = {Sakarovitch, Jacques},
  year = {2009},
  doi = {10.1017/CBO9781139195218},
  timestamp = {Wed, 09 Feb 2011 00:00:00 +0100},
  biburl = {https://dblp.org/rec/books/daglib/0023547.bib},
  bibsource = {dblp computer science bibliography, https://dblp.org},
  isbn = {978-0-521-84425-3},
  _bib2doi_selected = {dblp:/rec/books/daglib/0023547.bib},
  _bib2doi_confirmed = {true},
}

@article{Thompson:1968,
  author = {Ken Thompson},
  title = {Programming Techniques: Regular expression search algorithm},
  year = {1968},
  issue_date = {June 1968},
  publisher = {ACM},
  volume = {11},
  number = {6},
  journal = {Commun. {ACM}},
  month = {jun},
  pages = {419--422},
  timestamp = {Wed, 14 Nov 2018 00:00:00 +0100},
  biburl = {https://dblp.org/rec/journals/cacm/Thompson68.bib},
  bibsource = {dblp computer science bibliography, https://dblp.org},
  doi = {10.1145/363347.363387},
  _bib2doi_selected = {dblp:/rec/journals/cacm/Thompson68.bib},
  _bib2doi_confirmed = {true},
  _bib2doi_finished = {true},
}

@article{Bojanczyk:2012,
  author = {Boja\'{n}czyk, Miko\l{}aj},
  title = {Algorithms for regular languages that use algebra},
  year = {2012},
  issue_date = {June 2012},
  publisher = {ACM},
  volume = {41},
  number = {2},
  journal = {SIGMOD Rec.},
  month = {aug},
  pages = {5--14},
  timestamp = {Fri, 06 Mar 2020 00:00:00 +0100},
  biburl = {https://dblp.org/rec/journals/sigmod/Bojanczyk12.bib},
  bibsource = {dblp computer science bibliography, https://dblp.org},
  doi = {10.1145/2350036.2350038},
  _bib2doi_selected = {dblp:/rec/journals/sigmod/Bojanczyk12.bib},
  _bib2doi_confirmed = {true},
}

@incollection{Colcombet:2021,
  author = {Thomas Colcombet},
  editor = {Jean{-}{\'{E}}ric Pin},
  title = {The factorisation forest theorem},
  booktitle = {Handbook of Automata Theory},
  pages = {653--693},
  publisher = {European Mathematical Society Publishing House, Z{\"{u}}rich, Switzerland},
  year = {2021},
  timestamp = {Mon, 11 Apr 2022 01:00:00 +0200},
  biburl = {https://dblp.org/rec/books/ems/21/Colcombet21.bib},
  bibsource = {dblp computer science bibliography, https://dblp.org},
  doi = {10.4171/Automata-1/18},
  _bib2doi_selected = {dblp:/rec/books/ems/21/Colcombet21.bib},
  _bib2doi_confirmed = {true},
}

@article{Bojanczyk:2020,
  author = {Boja\'{n}czyk, Miko\l{}aj},
  journal = {CoRR},
  title = {Languages recognised by finite semigroups, and their generalisations to objects such as trees and graphs, with an emphasis on definability in monadic second-order logic},
  year = {2020},
  volume = {abs/2008.11635},
  archiveprefix = {arxiv},
  doi = {10.48550/arxiv.2008.11635},
  eprint = {2008.11635},
  timestamp = {Tue, 15 Sep 2020 01:00:00 +0200},
  biburl = {https://dblp.org/rec/journals/corr/abs-2008-11635.bib},
  bibsource = {dblp computer science bibliography, https://dblp.org},
  _bib2doi_selected = {dblp:/rec/journals/corr/abs-2008-11635.bib},
  _bib2doi_confirmed = {true},
}

@article{Simon:1990,
  title = {Factorization forests of finite height},
  author = {Imre Simon},
  journal = {Theoretical Computer Science},
  volume = {72},
  number = {1},
  pages = {65--94},
  year = {1990},
  timestamp = {Wed, 17 Feb 2021 00:00:00 +0100},
  biburl = {https://dblp.org/rec/journals/tcs/Simon90.bib},
  bibsource = {dblp computer science bibliography, https://dblp.org},
  doi = {10.1016/0304-3975(90)90047-L},
  _bib2doi_selected = {dblp:/rec/journals/tcs/Simon90.bib},
  _bib2doi_confirmed = {true},
}

@inproceedings{Crosby:2003:1,
  author = {Scott A. Crosby and Dan S. Wallach},
  title = {Denial of Service via Algorithmic Complexity Attacks},
  booktitle = {12th USENIX Security Symposium (USENIX Security 03)},
  year = {2003},
  publisher = {USENIX Association},
  timestamp = {Mon, 01 Feb 2021 00:00:00 +0100},
  biburl = {https://dblp.org/rec/conf/uss/CrosbyW03.bib},
  bibsource = {dblp computer science bibliography, https://dblp.org},
  url = {https://www.usenix.org/conference/12th-usenix-security-symposium/denial-service-algorithmic-complexity-attacks},
  _bib2doi_selected = {dblp:/rec/conf/uss/CrosbyW03.bib},
  _bib2doi_confirmed = {true},
}

@conference{Crosby:2003:2,
  author = {Scott A. Crosby},
  title = {Denial of Service through Regular Expressions},
  year = {2003},
  publisher = {USENIX Association},
  url = {https://www.usenix.org/conference/12th-usenix-security-symposium/denial-service-through-regular-expressions},
  _bib2doi_selected = {dblp:/rec/conf/uss/CrosbyW03.bib},
  _bib2doi_finished = {true},
}

@misc{Goyvaerts:2021,
  author = {Jan Goyvaerts},
  title = {{Runaway Regular Expressions: Catastrophic Backtracking}},
  howpublished = {\url{https://www.regular-expressions.info/catastrophic.html}},
  year = {2021},
  _bib2doi_finished = {true},
}

@book{Aho:1986,
  author = {Alfred V. Aho and Ravi Sethi and Jeffrey D. Ullman},
  title = {Compilers: Principles, Techniques, and Tools},
  publisher = {Addison-Wesley},
  year = {1986},
  timestamp = {Fri, 17 Jul 2020 01:00:00 +0200},
  biburl = {https://dblp.org/rec/books/aw/AhoSU86.bib},
  bibsource = {dblp computer science bibliography, https://dblp.org},
  isbn = {0-201-10088-6},
  url = {https://www.worldcat.org/oclc/12285707},
  _bib2doi_selected = {dblp:/rec/books/aw/AhoSU86.bib},
  _bib2doi_confirmed = {true},
  _bib2doi_finished = {true},
}

@misc{Cox:2007,
  author = {Russ Cox},
  title = {Regular Expression Matching Can Be Simple And Fast},
  howpublished = {\url{https://swtch.com/~rsc/regexp/regexp1.html}},
  year = {2007},
  month = {January},
  _bib2doi_finished = {true},
}

@misc{Cox:2010,
  author = {Cox, Russ},
  title = {Regular Expression Matching in the Wild},
  howpublished = {\url{https://swtch.com/\textasciitilde rsc/regexp/regexp3.html}},
  year = {2010},
  month = {mar},
  _bib2doi_finished = {true},
}

@inproceedings{Davis:2021,
  author = {Davis, James C. and Servant, Francisco and Lee, Dongyoon},
  booktitle = {SP'21},
  title = {Using Selective Memoization to Defeat Regular Expression Denial of Service (ReDoS)},
  year = {2021},
  pages = {1--17},
  timestamp = {Mon, 03 Mar 2025 00:00:00 +0100},
  biburl = {https://dblp.org/rec/conf/sp/DavisSL21.bib},
  bibsource = {dblp computer science bibliography, https://dblp.org},
  doi = {10.1109/SP40001.2021.00032},
  _bib2doi_selected = {dblp:/rec/conf/sp/DavisSL21.bib},
  _bib2doi_confirmed = {true},
}

@misc{CWE1333,
  title = {{CWE-1333: Inefficient Regular Expression Complexity}},
  author = {{MITRE}},
  howpublished = {\url{https://cwe.mitre.org/data/definitions/1333.html}},
  note = {CWE version 4.17. Page last updated: 2025-04-03.},
  year = {2021},
  _bib2doi_finished = {true},
}

@book{Jeffrey:2006,
  author = {Friedl, Jeffrey},
  title = {Mastering Regular Expressions},
  year = {2006},
  publisher = {O'Reilly},
  timestamp = {Wed, 25 Jan 2023 00:00:00 +0100},
  biburl = {https://dblp.org/rec/books/daglib/0016809.bib},
  bibsource = {dblp computer science bibliography, https://dblp.org},
  isbn = {978-0-596-52812-6},
  url = {https://www.oreilly.com/library/view/mastering-regular-expressions/0596528124/},
  _bib2doi_selected = {dblp:/rec/books/daglib/0016809.bib},
  _bib2doi_confirmed = {true},
  _bib2doi_finished = {true},
}

@inproceedings{Patrascu:2010,
  author = {P{\u{a}}tra{\c{s}}cu, Mihai and Williams, Ryan},
  title = {On the possibility of faster SAT algorithms},
  year = {2010},
  publisher = {SIAM},
  booktitle = {SODA '10},
  pages = {1065--1075},
  timestamp = {Tue, 02 Feb 2021 00:00:00 +0100},
  biburl = {https://dblp.org/rec/conf/soda/PatrascuW10.bib},
  bibsource = {dblp computer science bibliography, https://dblp.org},
  doi = {10.1137/1.9781611973075.86},
  _bib2doi_selected = {dblp:/rec/conf/soda/PatrascuW10.bib},
  _bib2doi_confirmed = {true},
}

@book{Cygan:2015,
  author = {Marek Cygan and Fedor V. Fomin and Lukasz Kowalik and Daniel Lokshtanov and D{\'{a}}niel Marx and Marcin Pilipczuk and Michal Pilipczuk and Saket Saurabh},
  title = {Parameterized Algorithms},
  publisher = {Springer},
  year = {2015},
  timestamp = {Sun, 25 Oct 2020 01:00:00 +0200},
  biburl = {https://dblp.org/rec/books/sp/CyganFKLMPPS15.bib},
  bibsource = {dblp computer science bibliography, https://dblp.org},
  doi = {10.1007/978-3-319-21275-3},
  _bib2doi_selected = {dblp:/rec/books/sp/CyganFKLMPPS15.bib},
  _bib2doi_confirmed = {true},
}

@misc{Abboud:2014,
  author = {Amir Abboud and Virginia Vassilevska Williams},
  title = {Popular conjectures imply strong lower bounds for dynamic problems},
  year = {2014},
  archiveprefix = {arXiv},
  doi = {10.48550/arXiv.1402.0054},
  eprint = {1402.0054},
  eprinttype = {arxiv},
  timestamp = {Mon, 13 Aug 2018 01:00:00 +0200},
  biburl = {https://dblp.org/rec/journals/corr/AbboudW14.bib},
  bibsource = {dblp computer science bibliography, https://dblp.org},
  _bib2doi_selected = {dblp:/rec/journals/corr/AbboudW14.bib},
  _bib2doi_confirmed = {true},
}

@inproceedings{Abboud:2014:focs,
  author = {Abboud, Amir and Williams, Virginia Vassilevska},
  title = {Popular Conjectures Imply Strong Lower Bounds for Dynamic Problems},
  year = {2014},
  publisher = {IEEE Computer Society},
  booktitle = {FOCS '14},
  pages = {434--443},
  timestamp = {Thu, 23 Mar 2023 00:00:00 +0100},
  biburl = {https://dblp.org/rec/conf/focs/AbboudW14.bib},
  bibsource = {dblp computer science bibliography, https://dblp.org},
  doi = {10.1109/FOCS.2014.53},
  _bib2doi_selected = {dblp:/rec/conf/focs/AbboudW14.bib},
  _bib2doi_confirmed = {true},
}

@incollection{Mateescu:1997,
  author = {Alexandru Mateescu and Arto Salomaa},
  editor = {Grzegorz Rozenberg and Arto Salomaa},
  title = {Aspects of Classical Language Theory},
  booktitle = {{Handbook of Formal Languages, Volume 1: Word, Language, Grammar}},
  pages = {175--251},
  publisher = {Springer},
  year = {1997},
  timestamp = {Tue, 06 Aug 2019 01:00:00 +0200},
  biburl = {https://dblp.org/rec/reference/hfl/MateescuS97a.bib},
  bibsource = {dblp computer science bibliography, https://dblp.org},
  doi = {10.1007/978-3-642-59136-5_4},
  _bib2doi_selected = {dblp:/rec/reference/hfl/MateescuS97a.bib},
  _bib2doi_confirmed = {true},
}

@inproceedings{Angluin:1979,
  author = {Angluin, Dana},
  title = {Finding patterns common to a set of strings (Extended Abstract)},
  year = {1979},
  publisher = {ACM},
  booktitle = {STOC '79},
  pages = {130--141},
  timestamp = {Tue, 06 Nov 2018 00:00:00 +0100},
  biburl = {https://dblp.org/rec/conf/stoc/Angluin79.bib},
  bibsource = {dblp computer science bibliography, https://dblp.org},
  doi = {10.1145/800135.804406},
  _bib2doi_selected = {dblp:/rec/conf/stoc/Angluin79.bib},
  _bib2doi_confirmed = {true},
}

@article{Angluin:1980,
  title = {Finding patterns common to a set of strings},
  journal = {JCSS},
  volume = {21},
  number = {1},
  pages = {46--62},
  year = {1980},
  author = {Dana Angluin},
  timestamp = {Tue, 16 Feb 2021 00:00:00 +0100},
  biburl = {https://dblp.org/rec/journals/jcss/Angluin80.bib},
  bibsource = {dblp computer science bibliography, https://dblp.org},
  doi = {10.1016/0022-0000(80)90041-0},
  _bib2doi_selected = {dblp:/rec/journals/jcss/Angluin80.bib},
  _bib2doi_confirmed = {true},
}

@article{Jiang:1994,
  author = {Tao Jiang and Efim Kinber and Arto Salomaa and Kai Salomaa and Sheng Yu},
  title = {Pattern languages with and without erasing},
  journal = {International Journal of Computer Mathematics},
  volume = {50},
  number = {3--4},
  pages = {147--163},
  year = {1994},
  publisher = {Taylor \& Francis},
  doi = {10.1080/00207169408804252},
  _bib2doi_finished = {true},
}

@article{Fernau:2020,
  author = {Fernau, Henning and Manea, Florin and Merca{\c{s}}, Robert and Schmid, Markus L.},
  title = {Pattern Matching with Variables: Efficient Algorithms and Complexity Results},
  year = {2020},
  issue_date = {March 2020},
  publisher = {ACM},
  volume = {12},
  number = {1},
  journal = {ACM Trans. Comput. Theory},
  month = {feb},
  articleno = {6},
  timestamp = {Sun, 19 Jan 2025 00:00:00 +0100},
  biburl = {https://dblp.org/rec/journals/toct/FernauMMS20.bib},
  bibsource = {dblp computer science bibliography, https://dblp.org},
  doi = {10.1145/3369935},
  _bib2doi_selected = {dblp:/rec/journals/toct/FernauMMS20.bib},
  _bib2doi_confirmed = {true},
}

@inproceedings{Fernau:2015,
  author = {Fernau, Henning and Manea, Florin and Merca{\c{s}}, Robert and Schmid, Markus L.},
  title = {{Pattern Matching with Variables: Fast Algorithms and New Hardness Results}},
  booktitle = {STACS 2015},
  pages = {302--315},
  series = {LIPIcs},
  year = {2015},
  volume = {30},
  publisher = {Schloss Dagstuhl},
  timestamp = {Tue, 21 Mar 2023 00:00:00 +0100},
  biburl = {https://dblp.org/rec/conf/stacs/FernauMMS15.bib},
  bibsource = {dblp computer science bibliography, https://dblp.org},
  doi = {10.4230/LIPIcs.STACS.2015.302},
  _bib2doi_selected = {dblp:/rec/conf/stacs/FernauMMS15.bib},
  _bib2doi_confirmed = {true},
}

@inproceedings{Stephan:2012,
  author = {Stephan, Frank and Yoshinaka, Ryo and Zeugmann, Thomas},
  title = {On the Parameterised Complexity of Learning Patterns},
  booktitle = {Computer and Information Sciences II},
  year = {2012},
  publisher = {Springer},
  pages = {277--281},
  timestamp = {Fri, 26 May 2017 01:00:00 +0200},
  biburl = {https://dblp.org/rec/conf/iscis/StephanYZ11.bib},
  bibsource = {dblp computer science bibliography, https://dblp.org},
  doi = {10.1007/978-1-4471-2155-8_35},
  _bib2doi_selected = {dblp:/rec/conf/iscis/StephanYZ11.bib},
  _bib2doi_confirmed = {true},
}

@article{Fernau:2016,
  title = {On the Parameterised Complexity of String Morphism Problems},
  author = {Henning Fernau and Markus L. Schmid and Yngve Villanger},
  journal = {Theory of Computing Systems},
  year = {2016},
  volume = {59},
  number = {1},
  pages = {24--51},
  timestamp = {Tue, 21 Mar 2023 00:00:00 +0100},
  biburl = {https://dblp.org/rec/journals/mst/FernauSV16.bib},
  bibsource = {dblp computer science bibliography, https://dblp.org},
  doi = {10.1007/s00224-015-9635-3},
  _bib2doi_selected = {dblp:/rec/journals/mst/FernauSV16.bib},
  _bib2doi_confirmed = {true},
}

@article{Ehrenfeucht:1979,
  title = {Finding a homomorphism between two words in NP-complete},
  journal = {Information Processing Letters},
  volume = {9},
  number = {2},
  pages = {86--88},
  year = {1979},
  author = {Andrzej Ehrenfeucht and Grzegorz Rozenberg},
  timestamp = {Fri, 26 May 2017 01:00:00 +0200},
  biburl = {https://dblp.org/rec/journals/ipl/EhrenfeuchtR79a.bib},
  bibsource = {dblp computer science bibliography, https://dblp.org},
  doi = {10.1016/0020-0190(79)90135-2},
  _bib2doi_selected = {dblp:/rec/journals/ipl/EhrenfeuchtR79a.bib},
  _bib2doi_confirmed = {true},
}

@book{Sedgewick:2011,
  author = {Robert Sedgewick and Kevin Wayne},
  title = {Algorithms, 4th Edition},
  publisher = {Addison-Wesley},
  year = {2011},
  timestamp = {Tue, 06 Nov 2012 00:00:00 +0100},
  biburl = {https://dblp.org/rec/books/daglib/0029345.bib},
  bibsource = {dblp computer science bibliography, https://dblp.org},
  isbn = {978-0-321-57351-3},
  url = {https://algs4.cs.princeton.edu/home/},
  _bib2doi_selected = {dblp:/rec/books/daglib/0029345.bib},
  _bib2doi_confirmed = {true},
}

@article{Ginsburg:1966,
  author = {Seymour Ginsburg and Edwin H. Spanier},
  title = {Semigroups, {Presburger} Formulas, and Languages},
  journal = {Pacific Journal of Mathematics},
  volume = {16},
  number = {2},
  pages = {285--296},
  year = {1966},
  doi = {10.2140/pjm.1966.16.285},
  _bib2doi_finished = {true},
}

@incollection{Aho:1991,
  author = {Alfred V. Aho},
  title = {Algorithms for finding patterns in strings},
  year = {1990},
  publisher = {MIT Press},
  booktitle = {Handbook of Theoretical Computer Science, Volume {A:} Algorithms and Complexity},
  pages = {255--300},
  timestamp = {Sat, 03 Aug 2019 01:00:00 +0200},
  biburl = {https://dblp.org/rec/books/el/leeuwen90/Aho90.bib},
  bibsource = {dblp computer science bibliography, https://dblp.org},
  editor = {Jan van Leeuwen},
  doi = {10.1016/B978-0-444-88071-0.50010-2},
  _bib2doi_selected = {dblp:/rec/books/el/leeuwen90/Aho90.bib},
  _bib2doi_confirmed = {true},
  _bib2doi_finished = {true},
}

@inproceedings{Nogami:2025,
  author = {Taisei Nogami and Tachio Terauchi},
  title = {Efficient Matching of Some Fundamental Regular Expressions with Backreferences},
  booktitle = {MFCS 2025},
  series = {LIPIcs},
  volume = {345},
  pages = {81:1--81:19},
  publisher = {Schloss Dagstuhl},
  year = {2025},
  timestamp = {Sat, 15 Nov 2025 00:00:00 +0100},
  biburl = {https://dblp.org/rec/conf/mfcs/NogamiT25.bib},
  bibsource = {dblp computer science bibliography, https://dblp.org},
  doi = {10.4230/LIPIcs.MFCS.2025.81},
  _bib2doi_selected = {dblp:/rec/conf/mfcs/NogamiT25.bib},
  _bib2doi_confirmed = {true},
}

@inproceedings{Bille:2024,
  author = {Philip Bille and Inge Li G{\o}rtz},
  title = {Sparse Regular Expression Matching},
  booktitle = {SODA 2024},
  pages = {3354--3375},
  publisher = {{SIAM}},
  year = {2024},
  timestamp = {Tue, 07 May 2024 01:00:00 +0200},
  biburl = {https://dblp.org/rec/conf/soda/BilleG24.bib},
  bibsource = {dblp computer science bibliography, https://dblp.org},
  doi = {10.1137/1.9781611977912.120},
  _bib2doi_selected = {dblp:/rec/conf/soda/BilleG24.bib},
  _bib2doi_confirmed = {true},
}

@inproceedings{Shinohara:1983,
  author = {Shinohara, Takeshi},
  title = {Polynomial time inference of extended regular pattern languages},
  booktitle = {RIMS Symposia on Software Science and Engineering},
  year = {1983},
  publisher = {Springer},
  pages = {115--127},
  timestamp = {Tue, 14 May 2019 10:00:53 +0200},
  biburl = {https://dblp.org/rec/conf/rims/Shinohara82.bib},
  bibsource = {dblp computer science bibliography, https://dblp.org},
  doi = {10.1007/3-540-11980-9_19},
  _bib2doi_selected = {dblp:/rec/conf/rims/Shinohara82.bib},
  _bib2doi_confirmed = {true},
}

@incollection{Choffrut:1997,
  author = {Christian Choffrut and Juhani Karhum{\"{a}}ki},
  editor = {Grzegorz Rozenberg and Arto Salomaa},
  title = {Combinatorics of Words},
  booktitle = {Handbook of Formal Languages, Volume 1: Word, Language, Grammar},
  pages = {329--438},
  publisher = {Springer},
  year = {1997},
  timestamp = {Tue, 06 Aug 2019 01:00:00 +0200},
  biburl = {https://dblp.org/rec/reference/hfl/ChoffrutK97.bib},
  bibsource = {dblp computer science bibliography, https://dblp.org},
  doi = {10.1007/978-3-642-59136-5_6},
  _bib2doi_selected = {dblp:/rec/reference/hfl/ChoffrutK97.bib},
  _bib2doi_confirmed = {true},
}

@article{FineWilf:1965,
  author = {N. J. Fine and H. S. Wilf},
  journal = {Proceedings of the American Mathematical Society},
  number = {1},
  pages = {109--114},
  publisher = {American Mathematical Society},
  title = {Uniqueness Theorems for Periodic Functions},
  volume = {16},
  year = {1965},
  doi = {10.2307/2034009},
  _bib2doi_finished = {true},
}

@book{Cormen:2022,
  title = {Introduction to Algorithms, fourth edition},
  author = {Thomas H. Cormen and Charles E. Leiserson and Ronald L. Rivest and Clifford Stein},
  year = {2022},
  publisher = {MIT Press},
  url = {https://mitpress.mit.edu/9780262046305/introduction-to-algorithms/},
  isbn = {978-0-262-04630-5},
  _bib2doi_selected = {dblp:/rec/books/daglib/0023376.bib},
  _bib2doi_confirmed = {true},
  _bib2doi_finished = {true},
}

@inproceedings{uezato:2024,
  author = {Uezato, Yuya},
  title = {{Regular Expressions with Backreferences and Lookaheads Capture NLOG}},
  booktitle = {ICALP 2024},
  pages = {155:1--155:20},
  series = {LIPIcs},
  year = {2024},
  volume = {297},
  publisher = {Schloss Dagstuhl},
  timestamp = {Fri, 04 Jul 2025 01:00:00 +0200},
  biburl = {https://dblp.org/rec/conf/icalp/Uezato24.bib},
  bibsource = {dblp computer science bibliography, https://dblp.org},
  doi = {10.4230/LIPIcs.ICALP.2024.155},
  _bib2doi_selected = {dblp:/rec/conf/icalp/Uezato24.bib},
  _bib2doi_confirmed = {true},
}

@article{FreydenbergerS19,
  author = {Dominik D. Freydenberger and Markus L. Schmid},
  title = {Deterministic regular expressions with back-references},
  journal = {J. Comput. Syst. Sci.},
  volume = {105},
  pages = {1--39},
  year = {2019},
  url = {https://doi.org/10.1016/j.jcss.2019.04.001},
  doi = {10.1016/j.jcss.2019.04.001},
  timestamp = {Sun, 19 Jan 2025 00:00:00 +0100},
  biburl = {https://dblp.org/rec/journals/jcss/FreydenbergerS19.bib},
  bibsource = {dblp computer science bibliography, https://dblp.org},
  _bib2doi_selected = {dblp:/rec/journals/jcss/FreydenbergerS19.bib},
  _bib2doi_confirmed = {true},
}

@article{Schmid24,
  author = {Markus L. Schmid},
  title = {Regular Expressions with Backreferences: Polynomial-Time Matching Techniques},
  journal = {J. Autom. Lang. Comb.},
  volume = {29},
  number = {2-4},
  pages = {321--357},
  year = {2024},
  url = {https://doi.org/10.25596/jalc-2024-321},
  doi = {10.25596/jalc-2024-321},
  timestamp = {Thu, 09 Oct 2025 01:00:00 +0200},
  biburl = {https://dblp.org/rec/journals/jalc/Schmid24.bib},
  bibsource = {dblp computer science bibliography, https://dblp.org},
  _bib2doi_selected = {dblp:/rec/journals/jalc/Schmid24.bib},
  _bib2doi_confirmed = {true},
}

@article{BerglundM23,
  author = {Martin Berglund and Brink van der Merwe},
  title = {Re-examining regular expressions with backreferences},
  journal = {Theor. Comput. Sci.},
  volume = {940},
  number = {Part},
  pages = {66--80},
  year = {2023},
  url = {https://doi.org/10.1016/j.tcs.2022.10.041},
  doi = {10.1016/j.tcs.2022.10.041},
  timestamp = {Wed, 07 Dec 2022 00:00:00 +0100},
  biburl = {https://dblp.org/rec/journals/tcs/BerglundM23.bib},
  bibsource = {dblp computer science bibliography, https://dblp.org},
  _bib2doi_selected = {dblp:/rec/journals/tcs/BerglundM23.bib},
  _bib2doi_confirmed = {true},
}

@article{Jez:2016:onevariable,
  title = {One-Variable Word Equations in Linear Time},
  author = {Artur Je\.{z}},
  journal = {Algorithmica},
  year = {2016},
  volume = {74},
  number = {1},
  pages = {1--48},
  isbn = {1432-0541},
  doi = {10.1007/s00453-014-9931-3},
  timestamp = {Sat, 19 Oct 2019 01:00:00 +0200},
  biburl = {https://dblp.org/rec/journals/algorithmica/Jez16.bib},
  bibsource = {dblp computer science bibliography, https://dblp.org},
  _bib2doi_selected = {dblp:/rec/journals/algorithmica/Jez16.bib},
  _bib2doi_confirmed = {true},
}

@article{Ibarra:1995,
  title = {A note on parsing pattern languages},
  journal = {Pattern Recognit. Lett.},
  volume = {16},
  number = {2},
  pages = {179--182},
  year = {1995},
  doi = {10.1016/0167-8655(94)00091-G},
  author = {Oscar H. Ibarra and Ting{-}Chuen Pong and Stephen M. Sohn},
  timestamp = {Sat, 22 Feb 2020 00:00:00 +0100},
  biburl = {https://dblp.org/rec/journals/prl/IbarraPS95.bib},
  bibsource = {dblp computer science bibliography, https://dblp.org},
  _bib2doi_old_doi = {https://doi.org/10.1016/0167-8655(94)00091-G},
  _bib2doi_selected = {dblp:/rec/journals/prl/IbarraPS95.bib},
  _bib2doi_confirmed = {true},
  _bib2doi_finished = {true},
}

@article{Plandowsk:2004,
  author = {Plandowski, Wojciech},
  title = {Satisfiability of word equations with constants is in PSPACE},
  year = {2004},
  issue_date = {May 2004},
  publisher = {ACM},
  volume = {51},
  number = {3},
  doi = {10.1145/990308.990312},
  journal = {J. ACM},
  month = {may},
  pages = {483--496},
  timestamp = {Tue, 06 Nov 2018 00:00:00 +0100},
  biburl = {https://dblp.org/rec/journals/jacm/Plandowski04.bib},
  bibsource = {dblp computer science bibliography, https://dblp.org},
  _bib2doi_selected = {dblp:/rec/journals/jacm/Plandowski04.bib},
  _bib2doi_confirmed = {true},
}

@article{Jez:2016,
  author = {Je\.{z}, Artur},
  title = {Recompression: A Simple and Powerful Technique for Word Equations},
  year = {2016},
  issue_date = {March 2016},
  publisher = {ACM},
  volume = {63},
  number = {1},
  issn = {0004-5411},
  doi = {10.1145/2743014},
  journal = {J. ACM},
  month = {feb},
  articleno = {4},
  numpages = {51},
  timestamp = {Sat, 19 Oct 2019 01:00:00 +0200},
  biburl = {https://dblp.org/rec/journals/jacm/Jez16.bib},
  bibsource = {dblp computer science bibliography, https://dblp.org},
  _bib2doi_selected = {dblp:/rec/journals/jacm/Jez16.bib},
  _bib2doi_confirmed = {true},
}

@inproceedings{Jez:2020,
  author = {Je\.{z}, Artur},
  title = {{Solving Word Equations (And Other Unification Problems) by Recompression}},
  booktitle = {28th {EACSL} Annual Conference on Computer Science Logic, {CSL} 2020, Barcelona, Spain, January 13-16, 2020},
  pages = {3:1--3:17},
  series = {LIPIcs},
  year = {2020},
  volume = {152},
  publisher = {Schloss Dagstuhl - Leibniz-Zentrum f{\"{u}}r Informatik},
  doi = {10.4230/LIPIcs.CSL.2020.3},
  timestamp = {Wed, 15 Jan 2020 00:00:00 +0100},
  biburl = {https://dblp.org/rec/conf/csl/Jez20.bib},
  bibsource = {dblp computer science bibliography, https://dblp.org},
  editor = {Maribel Fern{\'{a}}ndez and Anca Muscholl},
  _bib2doi_selected = {dblp:/rec/conf/csl/Jez20.bib},
  _bib2doi_confirmed = {true},
  _bib2doi_finished = {true},
}

@article{FernauS15,
  author = {Henning Fernau and Markus L. Schmid},
  title = {Pattern matching with variables: {A} multivariate complexity analysis},
  journal = {Inf. Comput.},
  volume = {242},
  pages = {287--305},
  year = {2015},
  url = {https://doi.org/10.1016/j.ic.2015.03.006},
  doi = {10.1016/j.ic.2015.03.006},
  timestamp = {Tue, 21 Mar 2023 00:00:00 +0100},
  biburl = {https://dblp.org/rec/journals/iandc/FernauS15.bib},
  bibsource = {dblp computer science bibliography, https://dblp.org},
  _bib2doi_selected = {dblp:/rec/journals/iandc/FernauS15.bib},
  _bib2doi_confirmed = {true},
}

@inproceedings{Dabrowski:2004,
  author = {D{\k{a}}browski, Robert and Plandowski, Wojtek},
  title = {Solving Two-Variable Word Equations (Extended Abstract)},
  booktitle = {Automata, Languages and Programming: 31st International Colloquium, {ICALP} 2004, Turku, Finland, July 12-16, 2004. Proceedings},
  year = {2004},
  publisher = {Springer},
  pages = {408--419},
  timestamp = {Tue, 23 May 2017 01:00:00 +0200},
  biburl = {https://dblp.org/rec/conf/icalp/DabrowskiP04.bib},
  bibsource = {dblp computer science bibliography, https://dblp.org},
  doi = {10.1007/978-3-540-27836-8_36},
  volume = {3142},
  editor = {Josep D{\'{\i}}az and Juhani Karhum{\"{a}}ki and Arto Lepist{\"{o}} and Donald Sannella},
  series = {Lecture Notes in Computer Science},
  _bib2doi_selected = {dblp:/rec/conf/icalp/DabrowskiP04.bib},
  _bib2doi_confirmed = {true},
  _bib2doi_finished = {true},
}

@article{Dabrowski:2011,
  author = {D{\k{a}}browski, Robert and Plandowski, Wojciech},
  title = {On Word Equations in One Variable},
  journal = {Algorithmica},
  year = {2011},
  month = {Aug},
  day = {01},
  volume = {60},
  number = {4},
  pages = {819--828},
  doi = {10.1007/s00453-009-9375-3},
  timestamp = {Wed, 17 May 2017 01:00:00 +0200},
  biburl = {https://dblp.org/rec/journals/algorithmica/DabrowskiP11.bib},
  bibsource = {dblp computer science bibliography, https://dblp.org},
  _bib2doi_selected = {dblp:/rec/journals/algorithmica/DabrowskiP11.bib},
  _bib2doi_confirmed = {true},
}

@inproceedings{Chattopadhyay:2025,
  author = {Chattopadhyay, Agnishom and Li, Angela W. and Mamouras, Konstantinos},
  title = {Verified and Efficient Matching of Regular Expressions with Lookaround},
  year = {2025},
  publisher = {ACM},
  doi = {10.1145/3703595.3705884},
  booktitle = {Proceedings of the 14th ACM SIGPLAN International Conference on Certified Programs and Proofs},
  pages = {198--213},
  series = {CPP '25},
  timestamp = {Sat, 25 Jan 2025 00:00:00 +0100},
  biburl = {https://dblp.org/rec/conf/cpp/ChattopadhyayLM25.bib},
  bibsource = {dblp computer science bibliography, https://dblp.org},
  _bib2doi_selected = {dblp:/rec/conf/cpp/ChattopadhyayLM25.bib},
  _bib2doi_confirmed = {true},
}

@article{Mamouras:2024,
  author = {Mamouras, Konstantinos and Chattopadhyay, Agnishom},
  title = {Efficient Matching of Regular Expressions with Lookaround Assertions},
  year = {2024},
  issue_date = {January 2024},
  publisher = {ACM},
  volume = {8},
  number = {POPL},
  doi = {10.1145/3632934},
  journal = {Proc. ACM Program. Lang.},
  month = {jan},
  articleno = {92},
  numpages = {31},
  timestamp = {Sat, 10 Feb 2024 00:00:00 +0100},
  biburl = {https://dblp.org/rec/journals/pacmpl/MamourasC24.bib},
  bibsource = {dblp computer science bibliography, https://dblp.org},
  _bib2doi_selected = {dblp:/rec/journals/pacmpl/MamourasC24.bib},
  _bib2doi_confirmed = {true},
}

@article{Barriere:2024,
  author = {Barri\`{e}re, Aur\`{e}le and Pit-Claudel, Cl\'{e}ment},
  title = {Linear Matching of JavaScript Regular Expressions},
  year = {2024},
  issue_date = {June 2024},
  publisher = {Association for Computing Machinery},
  address = {New York, NY, USA},
  volume = {8},
  number = {PLDI},
  doi = {10.1145/3656431},
  journal = {Proc. ACM Program. Lang.},
  month = {jun},
  articleno = {201},
  numpages = {25},
  keywords = {Regex, Automata, JavaScript},
  timestamp = {Sun, 19 Jan 2025 00:00:00 +0100},
  biburl = {https://dblp.org/rec/journals/pacmpl/BarriereP24.bib},
  bibsource = {dblp computer science bibliography, https://dblp.org},
  _bib2doi_selected = {dblp:/rec/journals/pacmpl/BarriereP24.bib},
  _bib2doi_confirmed = {true},
}

@inproceedings{Fujinami:2024,
  author = {Hiroya Fujinami and Ichiro Hasuo},
  editor = {Stephanie Weirich},
  title = {Efficient Matching with Memoization for Regexes with Look-around and Atomic Grouping},
  booktitle = {33rd European Symposium on Programming, {ESOP} 2024},
  series = {Lecture Notes in Computer Science},
  volume = {14577},
  pages = {90--118},
  publisher = {Springer},
  year = {2024},
  doi = {10.1007/978-3-031-57267-8\_4},
  timestamp = {Sat, 08 Jun 2024 01:00:00 +0200},
  biburl = {https://dblp.org/rec/conf/esop/FujinamiH24.bib},
  bibsource = {dblp computer science bibliography, https://dblp.org},
  _bib2doi_selected = {dblp:/rec/conf/esop/FujinamiH24.bib},
  _bib2doi_confirmed = {true},
}

@inproceedings{Schulz90,
  author = {Klaus U. Schulz},
  title = {Makanin's Algorithm for Word Equations - Two Improvements and a Generalization},
  booktitle = {Word Equations and Related Topics, First International Workshop, {IWWERT} '90, T{\"{u}}bingen, Germany, October 1-3, 1990, Proceedings},
  series = {Lecture Notes in Computer Science},
  volume = {572},
  pages = {85--150},
  publisher = {Springer},
  year = {1990},
  url = {https://doi.org/10.1007/3-540-55124-7\_4},
  doi = {10.1007/3-540-55124-7\_4},
  timestamp = {Wed, 17 May 2017 01:00:00 +0200},
  biburl = {https://dblp.org/rec/conf/iwwert/Schulz90.bib},
  bibsource = {dblp computer science bibliography, https://dblp.org},
  _bib2doi_selected = {dblp:/rec/conf/iwwert/Schulz90.bib},
  _bib2doi_confirmed = {true},
}

@article{DiekertGH05,
  author = {Volker Diekert and Claudio Gutierrez and Christian Hagenah},
  title = {The existential theory of equations with rational constraints in free groups is PSPACE-complete},
  journal = {Inf. Comput.},
  volume = {202},
  number = {2},
  pages = {105--140},
  year = {2005},
  url = {https://doi.org/10.1016/j.ic.2005.04.002},
  doi = {10.1016/j.ic.2005.04.002},
  timestamp = {Thu, 10 Nov 2022 00:00:00 +0100},
  biburl = {https://dblp.org/rec/journals/iandc/DiekertGH05.bib},
  bibsource = {dblp computer science bibliography, https://dblp.org},
  _bib2doi_selected = {dblp:/rec/journals/iandc/DiekertGH05.bib},
  _bib2doi_confirmed = {true},
}

@article{Ladner:1984,
  author = {Ladner, Richard E. and Lipton, Richard J. and Stockmeyer, Larry J.},
  title = {Alternating Pushdown and Stack Automata},
  journal = {SIAM Journal on Computing},
  volume = {13},
  number = {1},
  pages = {135--155},
  year = {1984},
  doi = {10.1137/0213010},
  timestamp = {Mon, 05 Feb 2024 00:00:00 +0100},
  biburl = {https://dblp.org/rec/journals/siamcomp/LadnerLS84.bib},
  bibsource = {dblp computer science bibliography, https://dblp.org},
  _bib2doi_selected = {dblp:/rec/journals/siamcomp/LadnerLS84.bib},
  _bib2doi_confirmed = {true},
}

@article{Chandra:1981,
  author = {Chandra, Ashok K. and Kozen, Dexter C. and Stockmeyer, Larry J.},
  title = {Alternation},
  year = {1981},
  issue_date = {Jan. 1981},
  publisher = {ACM},
  volume = {28},
  number = {1},
  doi = {10.1145/322234.322243},
  journal = {J. ACM},
  month = {jan},
  pages = {114--133},
  numpages = {20},
  timestamp = {Thu, 14 Oct 2021 01:00:00 +0200},
  biburl = {https://dblp.org/rec/journals/jacm/ChandraKS81.bib},
  bibsource = {dblp computer science bibliography, https://dblp.org},
  _bib2doi_selected = {dblp:/rec/journals/jacm/ChandraKS81.bib},
  _bib2doi_confirmed = {true},
}

@inproceedings{Fischer:1971,
  author = {Fischer, M. J. and Meyer, A. R.},
  booktitle = {Annual Symposium on Switching and Automata Theory},
  title = {Boolean matrix multiplication and transitive closure},
  year = {1971},
  pages = {129--131},
  doi = {10.1109/SWAT.1971.4},
  timestamp = {Thu, 23 Mar 2023 00:00:00 +0100},
  biburl = {https://dblp.org/rec/conf/focs/FischerM71.bib},
  bibsource = {dblp computer science bibliography, https://dblp.org},
  _bib2doi_selected = {dblp:/rec/conf/focs/FischerM71.bib},
  _bib2doi_confirmed = {true},
}

@article{Munro:1971,
  title = {Efficient determination of the transitive closure of a directed graph},
  journal = {Inf. Process. Lett.},
  volume = {1},
  number = {2},
  pages = {56--58},
  year = {1971},
  doi = {10.1016/0020-0190(71)90006-8},
  author = {J. Ian Munro},
  timestamp = {Wed, 14 Nov 2018 00:00:00 +0100},
  biburl = {https://dblp.org/rec/journals/ipl/Munro71.bib},
  bibsource = {dblp computer science bibliography, https://dblp.org},
  _bib2doi_old_doi = {https://doi.org/10.1016/0020-0190(71)90006-8},
  _bib2doi_selected = {dblp:/rec/journals/ipl/Munro71.bib},
  _bib2doi_confirmed = {true},
  _bib2doi_finished = {true},
}

@article{Furman:1970,
  title = {Application of a method of rapid multiplication of matrices to the problem of finding the transitive closure of a graph},
  author = {Furman, M.E.},
  journal = {Doklady Akademii Nauk},
  volume = {194},
  number = {3},
  pages = {524--524},
  year = {1970},
  url = {https://www.mathnet.ru/eng/dan35686},
  _bib2doi_finished = {true},
}

@techreport{Holzer:2023,
  title = {Comments on Monoids Induced by {NFAs}},
  author = {Holzer, Markus},
  institution = {Institut f{\"u}r Informatik, Universit{\"a}t Giessen},
  year = {2023},
  month = {March},
  number = {2301},
  type = {IFIG Research Report},
  url = {https://jlupub.ub.uni-giessen.de/items/c469513c-e738-48ae-900b-397efc96b042},
  doi = {10.22029/jlupub-14955},
  _bib2doi_finished = {true},
}

@incollection{Pin:1997,
  author = {Pin, Jean-{\'{E}}ric},
  editor = {Grzegorz Rozenberg and Arto Salomaa},
  title = {Syntactic Semigroups},
  booktitle = {Handbook of Formal Languages, Volume 1: Word, Language, Grammar},
  pages = {679--746},
  publisher = {Springer},
  year = {1997},
  doi = {10.1007/978-3-642-59136-5\_10},
  timestamp = {Tue, 06 Aug 2019 01:00:00 +0200},
  biburl = {https://dblp.org/rec/reference/hfl/Pin97.bib},
  bibsource = {dblp computer science bibliography, https://dblp.org},
  _bib2doi_selected = {dblp:/rec/reference/hfl/Pin97.bib},
  _bib2doi_confirmed = {true},
}

@article{Pin:1995,
  title = {Finite semigroups and recognizable languages: an introduction},
  author = {Pin, Jean-{\'{E}}ric},
  journal = {NATO Advanced Study Institute},
  pages = {1--32},
  year = {1995},
  publisher = {Kluwer academic publishers},
  booktitle = {Semigroups, Formal Languages and Groups},
  url = {https://www.irif.fr/~jep/Resumes/York1.html},
  _bib2doi_finished = {true},
}

@inproceedings{Kufleitner:2008,
  author = {Kufleitner, Manfred},
  editor = {Ochma{\'{n}}ski, Edward and Tyszkiewicz, Jerzy},
  title = {The Height of Factorization Forests},
  booktitle = {MFCS},
  year = {2008},
  publisher = {Springer},
  pages = {443--454},
  doi = {10.1007/978-3-540-85238-4\_36},
  timestamp = {Fri, 02 Nov 2018 00:00:00 +0100},
  biburl = {https://dblp.org/rec/conf/mfcs/Kufleitner08.bib},
  bibsource = {dblp computer science bibliography, https://dblp.org},
  _bib2doi_selected = {dblp:/rec/conf/mfcs/Kufleitner08.bib},
  _bib2doi_confirmed = {true},
}

@inproceedings{Bojanczyk:2010,
  author = {Boja{\'{n}}czyk, Miko{\l}aj and Parys, Pawe{\l}},
  editor = {Abramsky, Samson and Gavoille, Cyril and Kirchner, Claude and Meyer auf der Heide, Friedhelm and Spirakis, Paul G.},
  title = {Efficient Evaluation of Nondeterministic Automata Using Factorization Forests},
  booktitle = {ICALP},
  year = {2010},
  publisher = {Springer},
  pages = {515--526},
  doi = {10.1007/978-3-642-14165-2\_44},
  timestamp = {Tue, 23 May 2017 01:00:00 +0200},
  biburl = {https://dblp.org/rec/conf/icalp/BojanczykP10.bib},
  bibsource = {dblp computer science bibliography, https://dblp.org},
  _bib2doi_selected = {dblp:/rec/conf/icalp/BojanczykP10.bib},
  _bib2doi_confirmed = {true},
}

@misc{Kufleitner:2007,
  title = {A Proof of the Factorization Forest Theorem},
  author = {Manfred Kufleitner},
  year = {2007},
  eprint = {0710.5130},
  archiveprefix = {arXiv},
  primaryclass = {cs.LO},
  url = {https://arxiv.org/abs/0710.5130},
  doi = {10.48550/arXiv.0710.5130},
  timestamp = {Mon, 13 Aug 2018 01:00:00 +0200},
  biburl = {https://dblp.org/rec/journals/corr/abs-0710-5130.bib},
  bibsource = {dblp computer science bibliography, https://dblp.org},
  _bib2doi_selected = {dblp:/rec/journals/corr/abs-0710-5130.bib},
  _bib2doi_confirmed = {true},
}

@inproceedings{Williams:2019,
  title = {On some fine-grained questions in algorithms and complexity},
  doi = {10.1142/9789813272880_0188},
  booktitle = {Proceedings of the International Congress of Mathematicians (ICM 2018)},
  publisher = {WORLD SCIENTIFIC},
  author = {Williams, Virginia Vassilevska},
  year = {2019},
  _bib2doi_finished = {true},
}

@article{Duraj:2019,
  author = {Lech Duraj and Marvin K{\"{u}}nnemann and Adam Polak},
  title = {Tight Conditional Lower Bounds for Longest Common Increasing Subsequence},
  journal = {Algorithmica},
  volume = {81},
  number = {10},
  pages = {3968--3992},
  year = {2019},
  doi = {10.1007/S00453-018-0485-7},
  timestamp = {Thu, 31 Oct 2019 00:00:00 +0100},
  biburl = {https://dblp.org/rec/journals/algorithmica/DurajKP19.bib},
  bibsource = {dblp computer science bibliography, https://dblp.org},
  _bib2doi_selected = {dblp:/rec/journals/algorithmica/DurajKP19.bib},
  _bib2doi_confirmed = {true},
}

@article{Williams:2005,
  title = {A new algorithm for optimal 2-constraint satisfaction and its implications},
  journal = {Theor. Comput. Sci.},
  volume = {348},
  number = {2-3},
  pages = {357--365},
  year = {2005},
  doi = {10.1016/j.tcs.2005.09.023},
  author = {Ryan Williams},
  timestamp = {Wed, 17 Feb 2021 00:00:00 +0100},
  biburl = {https://dblp.org/rec/journals/tcs/Williams05.bib},
  bibsource = {dblp computer science bibliography, https://dblp.org},
  _bib2doi_old_doi = {https://doi.org/10.1016/j.tcs.2005.09.023},
  _bib2doi_selected = {dblp:/rec/journals/tcs/Williams05.bib},
  _bib2doi_confirmed = {true},
  _bib2doi_finished = {true},
}

@article{Hansen:2021,
  title = {Tight bounds for reachability problems on one-counter and pushdown systems},
  journal = {Inf. Process. Lett.},
  volume = {171},
  pages = {106135},
  year = {2021},
  doi = {10.1016/j.ipl.2021.106135},
  author = {Jakob Cetti Hansen and Adam Husted Kjelstr{\o}m and Andreas Pavlogiannis},
  timestamp = {Thu, 14 Oct 2021 01:00:00 +0200},
  biburl = {https://dblp.org/rec/journals/ipl/HansenKP21.bib},
  bibsource = {dblp computer science bibliography, https://dblp.org},
  _bib2doi_old_doi = {https://doi.org/10.1016/j.ipl.2021.106135},
  _bib2doi_selected = {dblp:/rec/journals/ipl/HansenKP21.bib},
  _bib2doi_confirmed = {true},
  _bib2doi_finished = {true},
}

@misc{Nogami:2026:arxiv,
  title = {Hardness of Regular Expression Matching with Extensions},
  author = {Taisei Nogami and Tachio Terauchi},
  year = {2026},
  eprint = {2601.03020},
  archiveprefix = {arXiv},
  primaryclass = {cs.DS},
  url = {https://arxiv.org/abs/2601.03020},
  doi = {10.48550/arXiv.2601.03020},
  _bib2doi_finished = {true},
}

@misc{simple-markdown-redos-patch,
  author       = {Buckles, Aria},
  title        = {{simple-markdown}: inlineCode: Fix {ReDoS} \& improve escape semantics (commit 89797fe)},
  howpublished = {GitHub commit},
  url          = {https://github.com/ariabuckles/simple-markdown/commit/89797fe},
  year         = {2019},
}

@misc{marked-redos-patch,
  author       = {{UziTech}},
  title        = {{marked}: fix inline code regex (commit 3271b8b)},
  howpublished = {GitHub commit},
  url          = {https://github.com/markedjs/marked/commit/3271b8b},
  year         = {2018},
}

@Misc{Cloudflare:2019,
  author       = {John Graham-Cumming},
  howpublished = {Cloudflare Blog. Archived at \url{https://web.archive.org/web/20190712160002/https://blog.cloudflare.com/details-of-the-cloudflare-outage-on-july-2-2019/}},
  month        = {July},
  title        = {{Details of the Cloudflare outage on July 2, 2019}},
  year         = {2019},
  url          = {https://blog.cloudflare.com/details-of-the-cloudflare-outage-on-july-2-2019/},
}

@Misc{stackstatus:2016,
  author       = {{Stack Exchange}},
  howpublished = {Stack Exchange Network Status Blog. Archived at \url{https://web.archive.org/web/20180801005940/http://stackstatus.net/post/147710624694/outage-postmortem-july-20-2016}},
  month        = {July},
  title        = {{Outage Postmortem - July 20, 2016}},
  year         = {2016},
  url          = {http://stackstatus.net/post/147710624694/outage-postmortem-july-20-2016},
}

@misc{playCloudflare:2019,
  author = {{Anonymous Author(s)}},
  title = {Checking Quadratic Behaviors of REGEX of Cloudflare Outage},
  howpublished = {\url{https://regex101.com/r/hidvB3/1/debugger}},
  year = {2025},
}

@misc{playstackstatus:2016,
  author = {{Anonymous Author(s)}},
  title = {Checking Quadratic Behaviors of REGEX of Stack Overflow Outage},
  howpublished = {\url{https://regex101.com/r/pR6DbC/1/debugger}},
  year = {2025},
}

@Misc{CVE-2021-21240,
  author       = {{National Vulnerability Database (NVD)}},
  howpublished = {\url{https://nvd.nist.gov/vuln/detail/CVE-2021-21240}},
  note         = {GitHub Advisory:},
  title        = {{CVE-2021-21240}},
  year         = {2021},
  url          = {https://github.com/advisories/GHSA-93xj-8mrv-444m},
}

@Misc{CVE-2023-6159,
  author       = {{National Vulnerability Database (NVD)}},
  howpublished = {\url{https://nvd.nist.gov/vuln/detail/cve-2023-6159}},
  note         = {GitLab Security Release:},
  title        = {{CVE-2023-6159}},
  year         = {2023},
  url          = {https://about.gitlab.com/releases/2024/01/25/critical-security-release-gitlab-16-8-1-released/},
}

@Misc{CVE-2023-39663,
  author       = {{National Vulnerability Database (NVD)}},
  howpublished = {\url{https://nvd.nist.gov/vuln/detail/CVE-2023-39663}},
  note         = {GitHub Advisory:},
  title        = {{CVE-2023-39663}},
  year         = {2023},
  url          = {https://github.com/advisories/GHSA-v638-q856-grg8},
}

@Misc{CVE-2024-28865,
  author       = {{National Vulnerability Database (NVD)}},
  howpublished = {\url{https://nvd.nist.gov/vuln/detail/CVE-2024-28865}},
  note         = {GitHub Advisory:},
  title        = {{CVE-2024-28865}},
  year         = {2024},
  url          = {https://github.com/advisories/GHSA-wj85-w4f4-xh8h},
}

@Misc{CVE-2024-26142,
  author       = {{National Vulnerability Database (NVD)}},
  howpublished = {\url{https://nvd.nist.gov/vuln/detail/CVE-2024-26142}},
  note         = {GitHub Advisory:},
  title        = {{CVE-2024-26142}},
  year         = {2024},
  url          = {https://github.com/advisories/GHSA-jjhx-jhvp-74wq},
}

@Misc{CVE-2025-25200,
  author       = {{National Vulnerability Database (NVD)}},
  howpublished = {\url{https://nvd.nist.gov/vuln/detail/CVE-2025-25200}},
  note         = {GitHub Advisory:},
  title        = {{CVE-2025-25200}},
  year         = {2025},
  url          = {https://github.com/advisories/GHSA-593f-38f6-jp5m},
}

@Misc{CVE-2025-29907,
  author       = {{National Vulnerability Database (NVD)}},
  howpublished = {\url{https://nvd.nist.gov/vuln/detail/CVE-2025-29907}},
  note         = {GitHub Advisory:},
  title        = {{CVE-2025-29907}},
  year         = {2025},
  url          = {https://github.com/advisories/GHSA-w532-jxjh-hjhj},
}

@Misc{CVE-2025-5197,
  author       = {{National Vulnerability Database (NVD)}},
  howpublished = {\url{https://nvd.nist.gov/vuln/detail/CVE-2025-5197}},
  note         = {GitHub Advisory:},
  title        = {{CVE-2025-5197}},
  year         = {2025},
  url          = {https://github.com/advisories/GHSA-9356-575x-2w9m},
}

@misc{Goyvaerts:lookaround,
  author       = {Goyvaerts, Jan},
  title        = {{Lookahead and Lookbehind Zero-Length Assertions}},
  howpublished = {\url{https://www.regular-expressions.info/lookaround.html}},
  year         = {2025},
}

@Misc{CVE-2017-16114,
  author       = {{National Vulnerability Database (NVD)}},
  howpublished = {\url{https://nvd.nist.gov/vuln/detail/cve-2023-6159}},
  note         = {GitLab Security Release:},
  title        = {{CVE-2017-16114}},
  year         = {2017},
  url          = {https://nvd.nist.gov/vuln/detail/CVE-2017-16114},
}

@Misc{CVE-2019-25103,
  author       = {{National Vulnerability Database (NVD)}},
  howpublished = {\url{https://nvd.nist.gov/vuln/detail/cve-2023-6159}},
  note         = {GitLab Security Release:},
  title        = {{CVE-2019-25103}},
  year         = {2019},
  url          = {https://nvd.nist.gov/vuln/detail/CVE-2019-25103},
}

@misc{Goyvaerts:atomic,
  author       = {Goyvaerts, Jan},
  title        = {Atomic Grouping},
  howpublished = {\url{https://www.regular-expressions.info/atomic.html}},
  year         = {2025},
}

@misc{Goyvaerts:kleenestar,
  author       = {Goyvaerts, Jan},
  title        = {Repetition with Star and Plus},
  howpublished = {\url{https://www.regular-expressions.info/repeat.html}},
  year         = {2025},
}

@article{BilleGVW12,
  author  = {Bille, Philip and G{\o}rtz, Inge Li and Vildh{\o}j, Hjalte Wedel and Wind, David Kofoed},
  title   = {String Matching with Variable Length Gaps},
  journal = {Theoretical Computer Science},
  volume  = {443},
  pages   = {25--34},
  year    = {2012},
  doi     = {10.1016/j.tcs.2012.03.029}
}

@article{NavarroRaffinot03,
  author  = {Navarro, Gonzalo and Raffinot, Mathieu},
  title   = {Fast and Simple Character Classes and Bounded Gaps Pattern Matching, with Applications to Protein Searching},
  journal = {Journal of Computational Biology},
  volume  = {10},
  number  = {6},
  pages   = {903--923},
  year    = {2003},
  doi     = {10.1089/106652703322756140}
}

@article{FredrikssonGrabowski08,
  author  = {Fredriksson, Kimmo and Grabowski, Szymon},
  title   = {Efficient Algorithms for Pattern Matching with General Gaps, Character Classes, and Transposition Invariance},
  journal = {Information Retrieval},
  volume  = {11},
  pages   = {335--357},
  year    = {2008},
  doi     = {10.1007/s10791-008-9054-z}
}

@inproceedings{HaapasaloSSSS11,
  author    = {Haapasalo, Tuukka and Silvasti, Panu and Sippu, Seppo and Soisalon-Soininen, Eljas},
  title     = {Online Dictionary Matching with Variable-Length Gaps},
  booktitle = {Experimental Algorithms},
  series    = {Lecture Notes in Computer Science},
  volume    = {6630},
  pages     = {76--87},
  publisher = {Springer},
  year      = {2011},
  doi       = {10.1007/978-3-642-20662-7_7}
}

@article{HonLSTTY18,
  author  = {Hon, Wing-Kai and Lam, Tak-Wah and Shah, Rahul and Thankachan, Sharma V. and Ting, Hing-Fung and Yang, Yilin},
  title   = {Dictionary Matching with a Bounded Gap in Pattern or in Text},
  journal = {Algorithmica},
  volume  = {80},
  pages   = {698--713},
  year    = {2018},
  doi     = {10.1007/s00453-017-0288-2}
}

@article{LevyS22,
  author  = {Levy, Avivit and Shalom, B. Riva},
  title   = {A Comparative Study of Dictionary Matching with Gaps: Limitations, Techniques and Challenges},
  journal = {Algorithmica},
  volume  = {84},
  pages   = {590--638},
  year    = {2022},
  doi     = {10.1007/s00453-021-00851-6}
}

@article{AmirKLPPS19,
  author  = {Amir, Amihood and Kopelowitz, Tsvi and Levy, Avivit and Pettie, Seth and Porat, Ely and Shalom, B. Riva},
  title   = {Mind the Gap! Online Dictionary Matching with One Gap},
  journal = {Algorithmica},
  volume  = {81},
  pages   = {2123--2157},
  year    = {2019},
  doi     = {10.1007/s00453-018-0526-2}
}

@article{GajentaanO95,
  author  = {Gajentaan, Anka and Overmars, Mark H.},
  title   = {On a Class of ${O}(n^2)$ Problems in Computational Geometry},
  journal = {Computational Geometry},
  volume  = {5},
  number  = {3},
  pages   = {165--185},
  year    = {1995},
  doi     = {10.1016/0925-7721(95)00022-2}
}

@inproceedings{Patrascu10,
  author    = {P{\u{a}}tra{\c{s}}cu, Mihai},
  title     = {Towards Polynomial Lower Bounds for Dynamic Problems},
  booktitle = {STOC 2010},
  pages     = {603--610},
  year      = {2010},
  publisher = {ACM},
  doi       = {10.1145/1806689.1806772}
}

@inproceedings{KopelowitzPP16,
  author    = {Kopelowitz, Tsvi and Pettie, Seth and Porat, Ely},
  title     = {Higher Lower Bounds from the {3SUM} Conjecture},
  booktitle = {SODA 2016},
  pages     = {1272--1287},
  year      = {2016},
  publisher = {SIAM},
  doi       = {10.1137/1.9781611974331.ch89}
}
